\newcolumntype{P}[1]{>{\centering\arraybackslash}p{#1}}
\newtheorem{defn}{Definition}
\newtheorem{thm}{Theorem}[section]
\newtheorem{cor}[thm]{Corollary}
\newtheorem{prop}{Proposition}
\newtheorem{lem}[thm]{Lemma}
\newtheorem{conj}[thm]{Conjecture}
\newtheorem{const}[thm]{Construction}
\newtheorem{note}{Remark}
\newtheorem{eg}{Example}
\newcommand{\bit}{\begin{itemize}}
	\newcommand{\eit}{\end{itemize}}
\newcommand{\bcor}{\begin{cor}}
	\newcommand{\ecor}{\end{cor}}
\newcommand{\beq}{\begin{equation}}
\newcommand{\eeq}{\end{equation}}
\newcommand{\beqn}{\begin{equation*}}
\newcommand{\eeqn}{\end{equation*}}
\newcommand{\bea}{\begin{eqnarray}}
\newcommand{\eea}{\end{eqnarray}}
\newcommand{\bean}{\begin{eqnarray*}}
	\newcommand{\eean}{\end{eqnarray*}}
\newcommand{\ben}{\begin{enumerate}}
	\newcommand{\een}{\end{enumerate}}
\newcommand{\bc}{\begin{center}}
	\newcommand{\ec}{\end{center}}
\newcommand{\bdefn}{\begin{defn}}
	\newcommand{\edefn}{\end{defn}}
\newcommand{\bnote}{\begin{note}}
	\newcommand{\enote}{\end{note}}
\newcommand{\bprop}{\begin{prop}}
	\newcommand{\eprop}{\end{prop}}
\newcommand{\blem}{\begin{lem}}
	\newcommand{\elem}{\end{lem}}
\newcommand{\bthm}{\begin{thm}}
	\newcommand{\ethm}{\end{thm}}
\newcommand{\bconj}{\begin{conj}}
	\newcommand{\econj}{\end{conj}}
\newcommand{\bconstr}{\begin{constr}}
	\newcommand{\econstr}{\end{constr}}
\newcommand{\bpf}{\begin{proof}}
	\newcommand{\epf}{\end{proof}}
\newcommand{\gmds}{\mbox{$G_{\text{\tiny MDS}}$}}
\newcommand{\fq}{\ensuremath{\mathbb{F}_q}}
\newcommand{\gpyr}{\mbox{$G_{\text{\tiny PYR}}$}}
\newcommand{\uc}[1]{\ensuremath{\underline{c}_{#1}}}
\newcommand{\ug}{\mbox{$\underline{g}$}}
\newcommand{\fqstar}{\mbox{$\mathbb{F}_q^{*}$}}
\newcommand{\vzero}{\mbox{$V_0$}}
\newcommand{\vone}{\mbox{$V_1$}}
\newcommand{\vtwo}{\mbox{$V_2$}}
\newcommand{\vi}{\mbox{$V_i$}}
\newcommand{\vs}{\mbox{$V_{s}$}}
\newcommand{\vsmone}{\mbox{$V_{s-1}$}}
\newcommand{\vinfty}{\mbox{$V_{\infty}$}}
\newcommand{\eone}{\mbox{$E_1$}}
\newcommand{\etwo}{\mbox{$E_2$}}
\newcommand{\ej}{\mbox{$E_j$}}
\newcommand{\es}{\mbox{$E_s$}}
\newcommand{\espone}{\mbox{$E_{s+1}$}}
\newcommand{\esmone}{\mbox{$E_{s-1}$}}
\newcommand{\gzero}{\mbox{${\cal G}_0$}}
\newcommand{\gbase}{\mbox{${\cal G}_{\text{base}}$}}
\newcommand{\gexp}{\mbox{${\cal G}_{\text{exp}}$}}
\newcommand{\gone}{\mbox{${\cal G}_1$}}
\newcommand{\gtwo}{\mbox{${\cal G}_2$}}
\newcommand{\gi}{\mbox{${\cal G}_i$}}
\newcommand{\gs}{\mbox{${\cal G}_s$}}
\newcommand{\nbase}{\mbox{$N_{\text{base}}$}}
\newcommand{\naux}{\mbox{$N_{\text{aux}}$}}
\newcommand{\gsmone}{\mbox{${\cal G}_{s-1}$}}
\newcommand{\gsub}{\mbox{${\cal G}_{\text{sub}}$}}
\newcommand{\aux}{\mbox{${\cal A}$}}
\newcommand{\auxi}{\mbox{${\cal A}_i$}}
\newcommand{\heven}{\mbox{$H_{2s+2}$}}
\newcommand{\hevena}{\mbox{$H^{(a)}_{2s+2}$}}
\newcommand{\hodd}{\mbox{$H_{2s+1}$}}
\newcommand{\hodda}{\mbox{$H^{(a)}_{2s+1}$}}
\newcommand{\fqalpha}{\mbox{$\mathbb{F}_q^{\alpha}$}}
\newcommand{\fqbeta}{\mbox{$\mathbb{F}_q^{\beta}$}}
\newcommand{\fqB}{\mbox{$\mathbb{F}_q^B$}}
\newcommand{\uu}{\mbox{$\underline{u}$}}
\newcommand\MyLBrace[2]{%
	\left.\rule{0pt}{#1}\right\}\text{#2}}
\newcommand{\ux}[1]{\ensuremath{\underline{x}_{{#1}}}}
\newcommand{\uxx}{\ensuremath{\underline{x} }}
\newcommand{\hmds}{\ensuremath{H_{\text{MDS}} \ }}
\newcommand{\code}{\ensuremath{{\cal C} \ }}
\newcommand{\calc}{\ensuremath{{\cal C}\ }}
\newcommand{\dmin}{\ensuremath{d_{\min}\ }}
\newcommand\scalemath[2]{\scalebox{#1}{\mbox{\ensuremath{\displaystyle #2}}}}
\begin{document}

\begin{frontmatter}
%%%%%%%%%%%%%%%%%%%%%%%%%%%%%%%%%%%%%%%%%%%%%%%%%%%%%%%%%%%%%%%%%%%%%%%%%%%
%
% Everything is optional in the front matter.
%
%%%%%%%%%%%%%%%%%%%%%%%%%%%%%%%%%%%%%%%%%%%%%%%%%%%%%%%%%%%%%%%%%%%%%%
%                         THE TITLEPAGE                              %
%%%%%%%%%%%%%%%%%%%%%%%%%%%%%%%%%%%%%%%%%%%%%%%%%%%%%%%%%%%%%%%%%%%%%%

% \title{HOW TO TYPESET THESES\\ 
% 	Using iiscthesis style for % \LaTeX not available for this size
% 	{\LARGE {\rm L\kern-.36em\raise.3ex\hbox{\large A}\kern-.15em
% 	    T\kern-.1667em\lower.7ex\hbox{E}\kern-.125emX}}\\
% 	}
\title{Erasure Codes for Distributed Storage: Tight Bounds and Matching Constructions}
\author{Balaji S.B.}
\advisers{P. Vijay Kumar}
% For all the parameters below, take default values
\submitdate{June 2018}
\dept{Electrical Communication Engineering}
\enggfaculty
%\degreein{Computer Science and Engineering}
%\mscengg
%\me
\iisclogotrue % Default is false
% \figurespagefalse %default is true
\tablespagetrue %default is false
 \maketitle

%%%%%%%%%%%%%%%%%%%%%%%%%%%%%%%%%%%%%%%%%%%%%%%%%%%%%%%%%%%%%%%%%%%%%%
%                              COPYRIGHT                             %
% Copyright is automatically included by the style file              %
%%%%%%%%%%%%%%%%%%%%%%%%%%%%%%%%%%%%%%%%%%%%%%%%%%%%%%%%%%%%%%%%%%%%%%
%%%%%%%%%%%%%%%%%%%%%%%%%%%%%%%%%%%%%%%%%%%%%%%%%%%%%%%%%%%%%%%%%%%%%%
%                              DEDICATION                            %
%%%%%%%%%%%%%%%%%%%%%%%%%%%%%%%%%%%%%%%%%%%%%%%%%%%%%%%%%%%%%%%%%%%%%%
\begin{dedication}
% You can design this page as you like
\begin{center}
{\em Dedicated to} \\[2em]
{\em My Mother, Father, Brother,}\\
	 {\em My Brother's wife and Little Samyukhtha.}
\end{center}
\end{dedication}

% \newpage
% \begin{center}
%  
% \end{center}
% \flushright{--- Paul Erd\H{o}s}
\clearpage
\vspace*{\fill}
\begin{center}
\begin{minipage}{.6\textwidth}
\medskip

\emph{If numbers aren't beautiful, I don't know what is.}

% \vspace{2cm}
\flushright{--- Paul Erd\H{o}s}
\end{minipage}
\end{center}
\vfill % equivalent to \vspace{\fill}
\clearpage

%%%%%%%%%%%%%%%%%%%%%%%%%%%%%%%%%%%%%%%%%%%%%%%%%%%%%%%%%%%%%%%%%%%%%%
%                         ACKNOWLEDGEMENTS                           %
%%%%%%%%%%%%%%%%%%%%%%%%%%%%%%%%%%%%%%%%%%%%%%%%%%%%%%%%%%%%%%%%%%%%%%
\acknowledgements
I would like to thank my mother, father, brother for being kind to me during tough times. I would like to thank my advisor Prof. P. Vijay Kumar for helping me to continue my PhD during a rough patch of time.
I would like to thank all my well wishers who helped me technically or non-technically duing my tenure as a PhD student at IISc.
I would like to thank my collabotators Prashanth, Ganesh and Myna with whom i had the pleasure of working with. I enjoyed the technical discussions with them. I would like to thank all my course instructors (in alphabetical order): Prof. Chiranjib Bhattacharyya, Prof. Navin Kashyap, Prof. Pooja Singla, Prof. Pranesachar, Prof. Shivani Agarwal, Prof. Sundar Rajan, Prof. Sunil Chandran, Prof. Thanngavelu, Prof. Venkatachala. I would like to thank all the professors and students and all people who helped me directly or indirectly.
The thought process i am going through now while doing research is a combination of my efforts and the thought process of my advisor. I picked up some of the thought process from my advisor P. Vijay Kumar like trying to break down any idea into its simplest form possible for which i am grateful. I also would like to thank my advisor in heping me writing this thesis partly. My writing skills and presentation skills greatly improved (although still not great) because of the teachings of my advisor. 
%Although i had quarrel with my advisor several times, i am happy to have chosen him as my advisor and i would do it again if i went back in time.
I have no friends in IISc. So my tenure as a PhD student was an extremely tough one. I would like to thank my family for taking the trouble to shift to Bangalore and stay by my side.
Whatever little intelligence i have is attributed to all the above. I also would like to thank my labmates Bhagyashree and Vinayak. I shared many heated conversations with Vinayak which in hindsight was an enjoyable one. I also would like to thank Shashank, Manuj, Mahesh, Anoop Thomas, Nikhil, Gautham Shenoy,  Birenjith, Myna for sharing a trip to ISIT with me. I would like to thank Anoop Thomas for always talking in an encouraging tone. I would like to thank Shashank, Vinayak and Avinash for sharing little tehnical conversations with me after attending talks at IISc. I would like to thank Lakshmi Narasimhan with whom i shared many conversations during my PhD. I also would like to thank Samrat for sharing some tehnical conversations with me after the math classes. I also would like to thank students who attended graph theory course with me as it was a sparse class and i shared many technical conversations with them. I wish to thank all the people at IISc who treated me kindly. Finally i would like to thank Anantha, Mahesh,  Aswin who are going to share a trip to ISIT with me this month. If someone's name is left out in the above, it is not intentional.

%%%%%%%%%%%%%%%%%%%%%%%%%%%%%%%%%%%%%%%%%%%%%%%%%%%%%%%%%%%%%%%%%%%%%%
%                              VITA                                  %
%%%%%%%%%%%%%%%%%%%%%%%%%%%%%%%%%%%%%%%%%%%%%%%%%%%%%%%%%%%%%%%%%%%%%%
% \vita
% IISc was born in 1909 and will celebrate its centenary with great fanfare
% in the year 2008.
%%%%%%%%%%%%%%%%%%%%%%%%%%%%%%%%%%%%%%%%%%%%%%%%%%%%%%%%%%%%%%%%%%%%%%
%               PUBLICATIONS BASED ON THIS THESIS                    %
%%%%%%%%%%%%%%%%%%%%%%%%%%%%%%%%%%%%%%%%%%%%%%%%%%%%%%%%%%%%%%%%%%%%%%
%\statementoforiginality
%
%I hereby declare that this thesis is my own work and has not been submitted in any form for another degree or diploma at any university or other institute of higher education.
%
%I certify that to the best of my knowledge, the intellectual content of this thesis is the product of my own work and
%that all the assistance received in preparing this thesis and sources have been
%acknowledged. 
%
%\preface
%
%Describe who did what, collaborations, etc.
%
%Also mention what content was published in the papers mentioned previously.
%%%%%%%%%%%%%%%%%%%%%%%%%%%%%%%%%%%%%%%%%%%%%%%%%%%%%%%%%%%%%%%%%%%%%%
%                              ABSTRACT                              %
%%%%%%%%%%%%%%%%%%%%%%%%%%%%%%%%%%%%%%%%%%%%%%%%%%%%%%%%%%%%%%%%%%%%%%
\begin{abstract}

\paragraph{The Node Repair Problem}

In the distributed-storage setting, data pertaining to a file is stored across spatially-distributed nodes (or storage units) that are assumed to fail independently. The explosion in amount of data generated and stored has caused renewed interest in erasure codes as these offer the same level of reliability (recovery from data loss) as replication, with significantly smaller storage overhead. For example, a provision for the use of erasure codes now exists in the latest version of the popular Hadoop Distributed File System, Hadoop 3.0. It is typically the case that code symbols of a codeword in an erasure code is distributed across nodes. This `Big-Data' setting also places a new and additional requirement on the erasure code, namely that the code must enable the efficient recovery of a single erased code symbol. An erased code symbol here corresponds to a failed node and recovery from single-symbol erasure is termed as node repair. Node failure is a common occurrence in a large data center and the ability of an erasure code to efficiently handle node repair is a third important consideration in the selection of an erasure code. Node failure is a generic term used to describe not just the physical failure of a node, but also its non-availability for reasons such as being down for maintenance or simply being busy serving other, simultaneous demands on its contents.  Parameters relevant to node repair are the amount of data that needs to be downloaded from other surviving (helper) nodes to the replacement of the failed node, termed the {\em repair bandwidth} and the number of helper nodes contacted, termed the {\em repair degree}. Node repair is said to be efficient if either repair bandwidth or repair degree is less.

\paragraph{Different Approaches to Node Repair} \ The conventional node repair of the ubiquitous Reed-Solomon (RS) code is inefficient in that in an $[n,k]$ RS code having block length $n$ and dimension $k$, the repair bandwidth equals $k$ times the amount of data stored in the replacement node and the repair degree equals $k$, both of which are excessively large.  In response, coding theorists have come up with different approaches to handle the problem of node repair.  Two new classes of codes have sprung up, termed as {\em regenerating (RG) codes} and {\em locally recoverable (LR) codes} respectively that provide erasure codes which minimize respectively the repair bandwidth and repair degree.  A third class termed as {\em locally regenerating (LRG) codes}, combines the desirable features of both RG and LR codes and offers both small repair bandwidth as well as a small value of repair degree. 

In a different direction, coding theorists have taken a second, closer look at the RS code and have devised efficient approaches to node repair in RS code.  Yet another direction is that adopted by {\em liquid storage} codes which employ a {\em lazy} strategy  approach to node repair to achieve a fundamental bound on information capacity.

\begin{figure}[h!]
	\centering
	\includegraphics[width=7.25in]{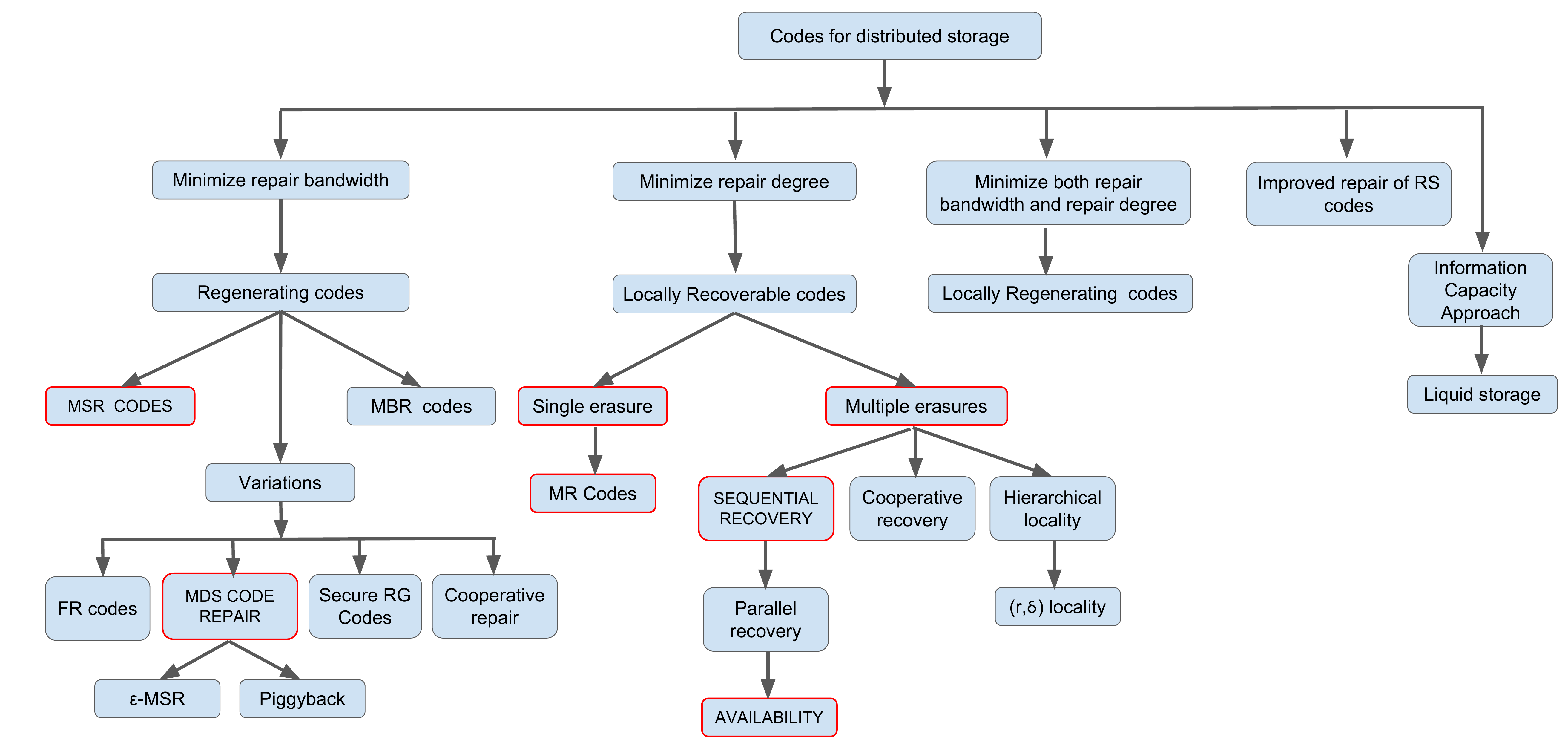}
	\caption{Flowchart depicts various techniques used to handle node repair. The boxes outlined in red indicate topics to which this thesis has contributed. The primary contributions of this thesis correspond to boxes outlined in red that is fully written in upper-case lettering. In the flowchart above, FR codes stands for fractional repetition codes and MBR codes stands for Minimum Bandwidth RG codes. Rest of the abbreviations are either common or defined in the text. }\label{fig:overview}
\end{figure}

Fig.~\ref{fig:overview} provides a classification of the various classes of codes that have been developed by coding theorists to address the 
problem of node repair.  The boxes outlined in red in Fig.~\ref{fig:overview}, correspond to the classes of codes towards which this thesis has made significant contributions.  The primary contributions are identified by a box that is outlined in red which is fully written in upper-case lettering. 

\paragraph{Contributions of the Thesis}

\begin{figure}[ht!]
	\centering
	\includegraphics[width=7in]{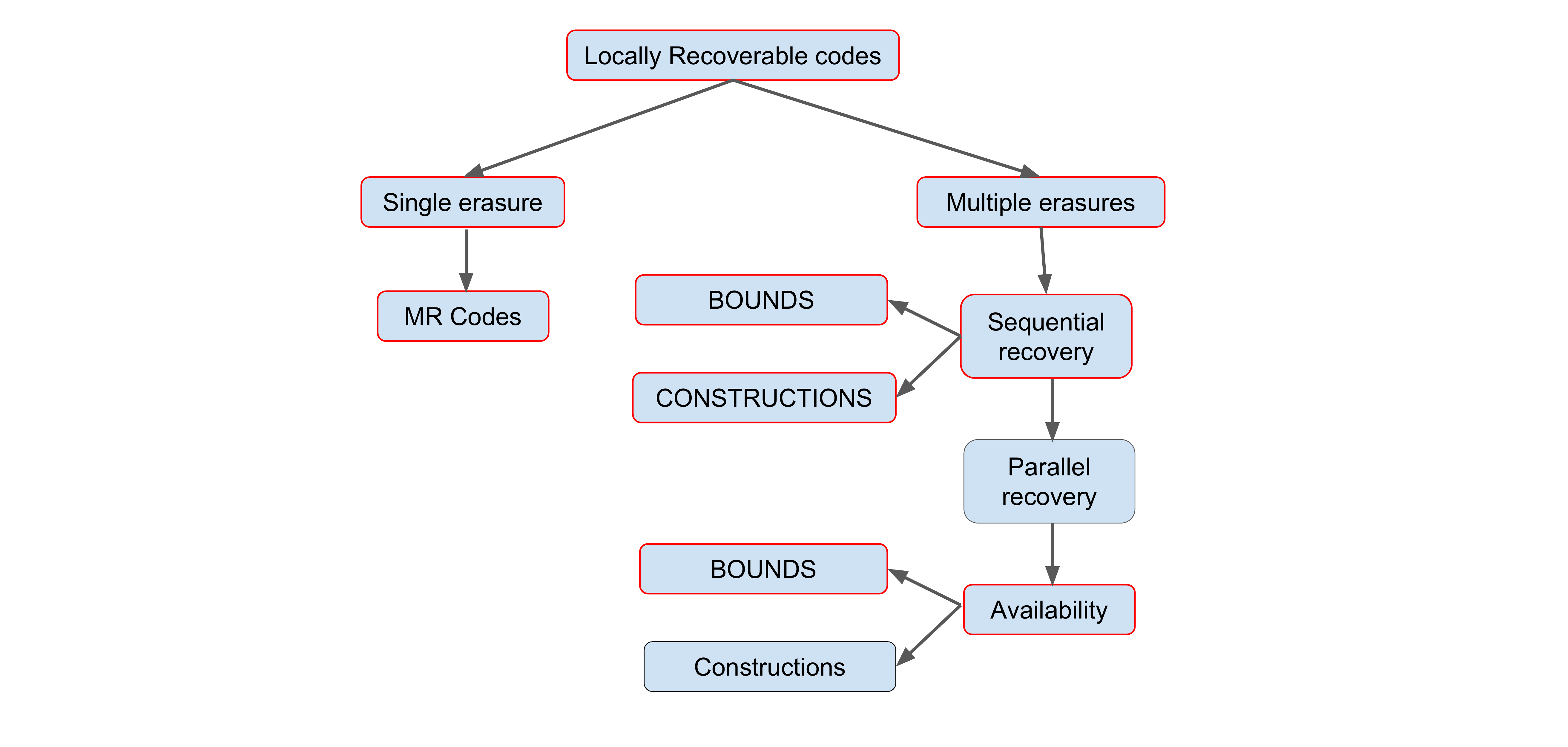}
	\caption{The flowchart shown here is an extensions of the portion of the flowchart appearing in Fig.~\ref{fig:overview} correspoding to LR codes. The boxes outlined in red indicate topics to which this thesis has contributed. The primary contributions of this thesis correspond to boxes outlined in red that is fully written in upper-case lettering.}\label{fig:flowchart_lr}
\end{figure}

As noted above, this thesis makes contributions that advance the theory of both RG and LR codes.  We begin with LR codes, since the bulk of the contributions of the thesis relate to this class of codes (see Fig.~\ref{fig:flowchart_lr}). In the following, a local code refers to a punctured code of an $[n,k]$ code of block length $\leq r+1$ and dimension strictly less than block length where $r < n$.

\paragraph{Contributions to LR Codes} \  LR codes are designed with the objective of reducing the repair degree and accomplish this by making sure that the overall erasure code has several local codes %with dimension atmost one less than the block length
in such a way that any single erased code symbol can be repaired by containing at most $r$ other code symbols. The parameter $r$ is termed the {\em locality parameter} of the LR code.  Our contributions in the direction of an LR code are aimed at LR codes that are capable of handling multiple erasures efficiently.  Improved bounds on both the rate $R=k/n$ of an LR code as well as its minimum Hamming distance $d_{\min}$ are provided. We provide improved bounds under a constraint on the size $q$ of the code-symbol alphabet and also provide improved bounds without any constraint on the size $q$ of the code-symol alphabet. 

\vspace*{0.1in}

\paragraph{LR Codes for Multiple Erasures} \ The initial focus in the theory of LR codes was the design of codes that can recover from the erasure of a single erased symbol efficiently.  
Given that constructions that match the bounds on performance metrics on LR codes are now available in the literature, the attention of the academic community has since shifted in the direction of the design of LR codes to handle {\em multiple erasures}. An LR code is said to recover multiple erasures if it can recover from multiple simulataneus erasures by accessing a small number $<k$ of unerased code symbols. A strong motivation for developing the theory of LR codes which can handle multiple-erasure comes from the notion of {\em availability} because a sub class of LR codes called $t$-availability codes has the ability to recover from $t$ simultaneous erasures and also has the interesting property called availability which is explained in the following. In a data center, there could be storage units that hold popular data for which there could be several simultaneous competing demands.  In such situations, termed in the industry as a {\em degraded read}, the single-node repair capability of an erasure code is called upon to recreate data that is unavailable on account of multiple, competing demands for its data.  This calls for an ability to recreate multiple copies, say $t$, of the data belonging to the unavailable node.  To reduce latency, these multiple recreations must be drawn from disjoint sets of code symbols. This property called availability is achieved by $t$-availability codes. An LR code constructed in such a way that for each code symbol $c_i$ there is a set of $t$ local codes such that any two out of the $t$ local codes have only this code symbol $c_i$ in common, is termed as a {\em $t$-availability code}. 

\vspace*{0.1in}

\paragraph{Contributions to Availability Codes} \ The contributions of the thesis in the direction of $t$-availability codes include improved upper bounds on the minimum distance $d_{\min}$ of this class of codes, both with and without a constraint on the size $q$ of the code-symbol alphabet. An improved upper bound on code rate $R$ is also provided for a subclass of $t$-availability codes, termed as codes with {\em strict availability}. Among the class of $t$-availability codes, codes with strict availability typically have high rate. A complete characterization of optimal tradeoff between rate and fractional minimum distance for a special class of $t$-availability codes is also provided.

\vspace*{0.1in}

\paragraph{Contributions to LR Codes with Sequential Recovery} \ Since a $t$-availability code also has the ability to recover from $t$ simultaneous erasures. This leads naturally to the study of other LR codes that can recover from multiple, simultaneous erasures. There are several approaches to handling multiple erasures. We restrict ourselves to a subclass of LR codes which can recover from multiple erasures where we use atmost $r$ symbols for recovering an erased symbol. Naturally the most general approach in this subclass of LR codes is one in which the LR code recovers from a set of $t$ erasures by repairing them one by one in a sequential fashion, drawing at each stage from at most $r$ other code symbols. Such codes are termed as LR codes with {\em sequential recovery} and quite naturally, have the largest possible rate of any LR code that can recover from multiple erasures in the subclass of LR codes we are considering. A major contribution of the thesis is the derivation of a new tight upper bound on the rate of an LR code with sequential recovery.   While the upper bound on rate for the cases of $t=2,3$ was previously known, the upper bound on rate for $t \geq 4$ is a contribution of this thesis. This upper bound on rate proves a conjecture on the maximum possible rate of LR codes with sequential recovery that had previously appeared in the literature, and is shown to be tight by providing construction of codes with rate equal to the upper bound for every $t$ and every $r \geq 3$.    

Other contributions in the direction of codes with sequential recovery, include identifying instances of codes arising from a special sub-class of $(r+1)$-regular graphs known as Moore graphs, that are optimal not only in terms of code rate, but also in terms of having the smallest block length possible. Unfortunately, Moore graph exists only for a restricted set of values of $r$ and girth (length of cycle with least number of edges in the graph). This thesis also provides a characterization of codes with sequential recovery with rate equal to our upper bound for the case $t=2$ as well as an improved lower bound on block length for the case $t=3$.  

\paragraph{Contributions to RG Codes} 

\begin{figure}[ht!]
	\centering
	\includegraphics[width=6in]{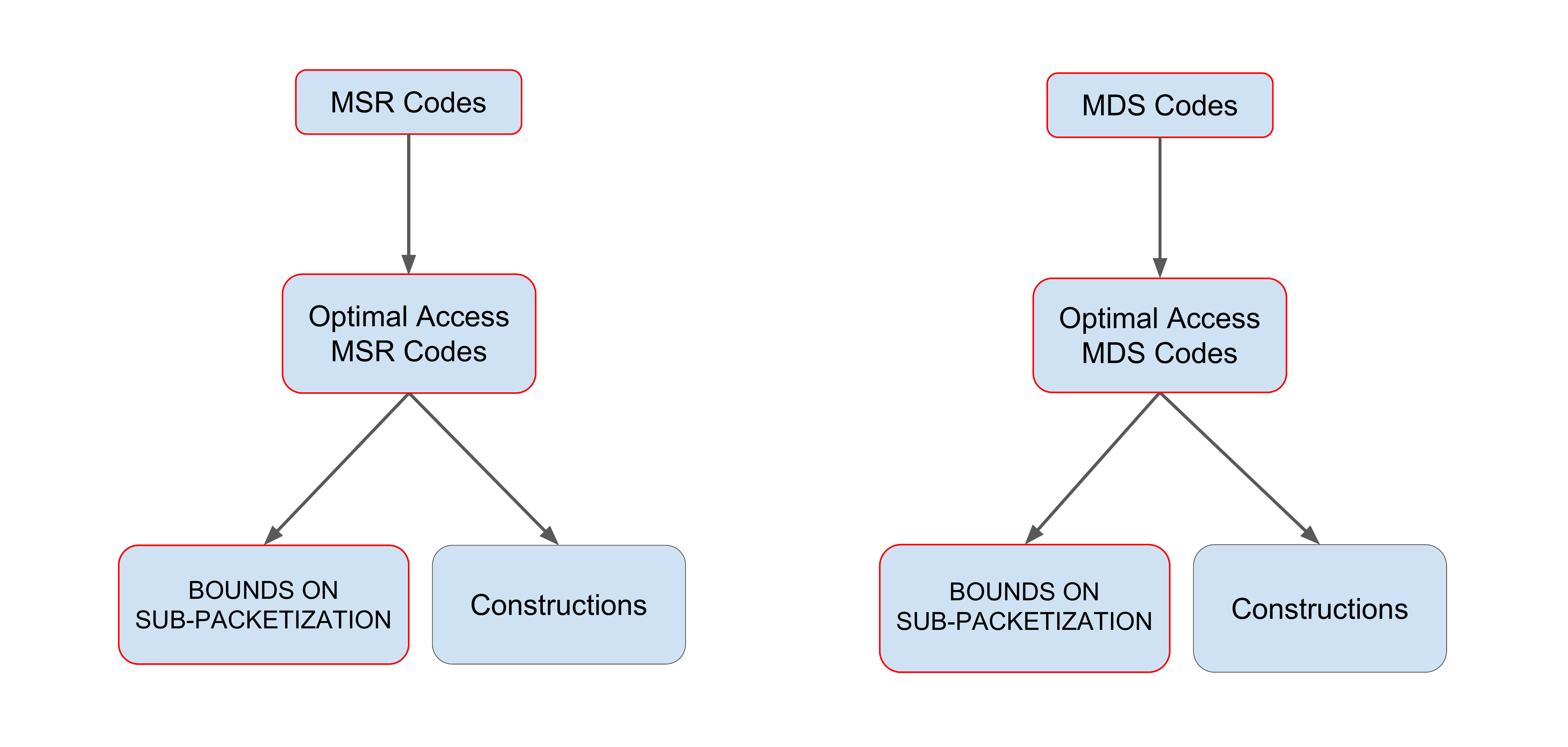}
	\caption{The two flowcharts shown here are extensions of the flowchart appearing in Fig.~\ref{fig:overview} corresponding to MSR and MDS codes. The boxes outlined in red indicate topics to which this thesis has contributed. The primary contributions of this thesis correspond to boxes outlined in red that is fully written in upper-case lettering.}\label{fig:detail}
\end{figure}

An RG code derives its ability to minimize the repair bandwidth while handling erasures from the fact that these codes are built over a vector symbol alphabet for example over $\mathbb{F}^{\alpha}_q$ for a finite field $\mathbb{F}_q$ and some $\alpha \geq 1$. The necessary value of the size $\alpha$ of this vector alphabet, also termed as the sub-packetization level of an RG code, tends to grow very large as the rate of the RG code approaches $1$, corresponding to a storage overhead which also approaches $1$.  In practice, there is greatest interest in high-rate RG codes and hence there is interest in knowing the minimum possible value of $\alpha$. An {\em optimal-access} RG code is an RG code in which during node repair, the number of scalar code symbols accessed at a helper node (i.e., for example the number of symbols over $\mathbb{F}_q$ accessed in a vector code symbol from $\mathbb{F}^{\alpha}_q$) equals the number of symbols passed on by the helper node to the replacement node. A node repair satisfying this property is called repair by help-by-transfer. This has the practical importance that no computation is needed at a helper node. The number of helper nodes contacted during a node repair is usually denoted by $d$ in the context of RG codes. A sub-class of optimal access RG codes called optimal-access Minimum Storage RG (MSR) codes refers to optimal access RG codes which are also vector MDS codes. In this thesis, we provide a tight lower bound on the sub-packetization level of optimal-access MSR codes. We do the same for Maximum Distance Separable (MDS) codes over a vector alphabet, which are designed to do repair by help-by-transfer for repairing any node belonging to a restricted subset of nodes, with minimum possible repair bandwidth. We refer to these codes as optimal access MDS codes. In both cases, we point to the literature on sub-packetization level of existing RG codes to establish that the bounds on sub-packetization level derived in this thesis are tight. See Fig. \ref{fig:detail} for a summary of contributions of the thesis to RG codes. The equations used in the derivation of our lower bound on sub-packetization level $\alpha$ in the case of optimal access MDS codes, also provides information on the structure of such codes with sub-packetization level equal to our lower bound.   The suggested structure is present in a known construction of a high-rate, optimal-access MSR code having least possible sub-packetization level. 

\paragraph{Contributions to Maximal Recoverable (MR) Codes} Returning to the topic of LR codes, we note that an LR code is constrained by a set of parity checks that give the code the ability to recover from the erasure of any given code symbol by connecting to at most $r$ other code symbols. We call this set of parity checks as local parity checks. The code with only local parity checks imposed on it typically result in a code of dimension $k_0$ that is larger than the dimension of the desired code. Thus one has the option of adding additional parity checks to bring the dimension down to $k$. Naturally, these additional parity checks are added so as to give the code the ability to recover from additional erasure patterns as well, for example recover from any pattern of $s \geq 2$ erasures, without any constraint on the number of helper nodes contacted during the recovery from this larger number of erasures. MR codes are the subclass of LR codes which given local parity checks and the desired overall dimension $k$, have the ability to recover from all possible erasure patterns which are not precluded by the local parity checks. It is an interesting and challenging open problem to construct MR codes having code symbol alphabet of small size. Our contributions in this area are constructions of MR codes over finite fields of small size.

There are several other contributions in the thesis, that are not described here for lack of space. 
	
\end{abstract}

\publications
%
%{\bf Journal}
%\begin{enumerate}
%\item[J1] Journal 1
%\end{enumerate}
{\bf Conference}
\begin{enumerate}
	\item S. B. Balaji and P. V. Kumar, "A tight lower bound on the sub-packetization level of optimal-access MSR and MDS codes,'' {\em CoRR, (Accepted at ISIT 2018), vol. abs/1710.05876}, 2017.
	\item M. Vajha, S. B. Balaji, and P. V. Kumar, "Explicit MSR Codes with Optimal Access, Optimal Sub-Packetization and Small Field Size
	for $d = k + 1; k + 2; k + 3$,'' { \em CoRR (Accepted at ISIT 2018), vol. abs/1804.00598 }, 2018.
	\item S. B. Balaji, G. R. Kini, and P. V. Kumar, "A Rate-Optimal Construction of Codes with Sequential Recovery with Low Block Length,'' {\em CoRR, (Accepted at NCC 2018), vol. abs/1801.06794}, 2018.
	\item S. B. Balaji and P. V. Kumar, "Bounds on the rate and minimum distance of codes with availability,'' { \em in IEEE International
		Symposium on Information Theory (ISIT), June 2017}, pp. 3155-3159.
	\item S. B. Balaji, G. R. Kini, and P. V. Kumar, "A tight rate bound and a matching construction for locally recoverable codes with sequential recovery from any number of multiple erasures,'' {\em in IEEE International Symposium on Information Theory (ISIT), June 2017}, pp.
	1778-1782.
	\item S. B. Balaji, K. P. Prasanth, and P. V. Kumar, "Binary codes with locality for multiple erasures having short block length,'' {\em in IEEE International Symposium on Information Theory (ISIT), July 2016}, pp. 655-659.
	\item S. B. Balaji and P. V. Kumar, "On partial maximally-recoverable and maximally-recoverable codes,'' {\em in IEEE International Symposium on Information Theory (ISIT), Hong Kong, 2015}, pp. 1881-1885.
\end{enumerate}

%\listoftables

%\listoffigures

%%%%%%%%%%%%%%%%%%%%%%%%%%%%%%%%%%%%%%%%%%%%%%%%%%%%%%%%%%%%%%%%%%%%%%
%                              KEYWORDS                              %
%%%%%%%%%%%%%%%%%%%%%%%%%%%%%%%%%%%%%%%%%%%%%%%%%%%%%%%%%%%%%%%%%%%%%%
%\keywords
%{\large\bf{
%		Information theory, nested lattice codes, physical-layer security, etc...}
%}

% \vspace{10MM}
% 
% \noindent
% Note:~Kindly provide a standard classification for keywords, such as,
% ACM Computing Classification, JEL Classification, AMS Classification etc.
%%%%%%%%%%%%%%%%%%%%%%%%%%%%%%%%%%%%%%%%%%%%%%%%%%%%%%%%%%%%%%%%%%%%%%
%                     NOTATION AND ABBREVIATIONS                     %
%%%%%%%%%%%%%%%%%%%%%%%%%%%%%%%%%%%%%%%%%%%%%%%%%%%%%%%%%%%%%%%%%%%%%%
\notations
 \begin{tabular}{|>{\centering}p{3cm}|p{11cm}|}
 	\multicolumn{2}{c}{}\\ \hline
 	$[\ell]$ & The set $\{1,2,...,\ell\}$\\
 	$[\ell]^t$ & The cartesian product of $[\ell]$, $t$ times i.e., $[\ell] \times ....\times [\ell]$\\
 	$\Z$ & The set of integers\\
 	$\mathbb{N}$ & The set of natural numbers $\{1,2,...\}$\\
 	$S^c$ & Complement of the set $S$ \\
 	$\bF_q$ & The Finite field with $q$ elements\\
 	$\mathcal{C}$ & A linear block code \\
 	$\dim(\mathcal{C})$ & The dimension of the code $\mathcal{C}$\\
 	$n$ & Block length of a code \\
 	$k$ & Dimension of a code \\
 	$d _{\min}$ & Minimum distance of a code \\
 	$d$ & Minimum distance of a code or parameter of a regenerating code \\
 	$r$ & Locality parameter of a Locally Recoverable code \\
 	$[n,k]$ & Parameters of a linear block code \\
 	$[n,k,d]$ & Parameters of a linear block code \\
 	$(n,k,d)$ & Parameters of a nonlinear code \\
 	$(n,k,r,t)$ & Parameters of a $[n,k]$ Sequential-recovery LR code or an availability code with locality parameter $r$ for $t$ erasure recovery\\
 	$(n,k,r,d)$ & Parameters of an $[n,k]$ LR code with locality parameter $r$ and minimum distance $d$.\\
 	\hline
 \end{tabular}
 
 \begin{tabular}{|>{\centering}p{3cm}|p{11cm}|}
 	\multicolumn{2}{c}{}\\ \hline
 	$supp(\underline{c})$ & Support of the vector $\underline{c}$ \\
 	$supp(\mathcal{D})$ & Support of the subcode $\mathcal{D}$ i.e., the set $\cup_{\{\underline{c} \in \mathcal{D}\}} supp(\underline{c})$ \\
 	$\mathcal{C}^{\perp}$ & The dual code of the code $\mathcal{C}$ \\
 	$\mathcal{C}|_{S}$ & A punctured code obtained by puncturing the code $\mathcal{C}$ on $S^c$ \\
 	$H$ & A parity check matrix of a code \\
 	$\underline{c}$ & A code word of a code\\
 	$c_i$ & $i$th code symbol of a code word\\
 	$d_i$ & $i$th Generalized Hamming Weight of a code\\
 	$V(G)$ & Vertex set of a graph $G$ \\
 	$\{ (n,k,d),$ & Parameters of a regenerating code \\
 	$(\alpha,\beta),\ B) \}$ & \\
 	$(r, \delta, s)$ & Parameters of a Maximal Recoverable code \\
 	$<A>$ & Row space of the matrix $A$\\
 	$VA$ & The set $\{\underline{v}A: \underline{v} \in V\}$ for a vector space $V$ and matrix $A$ \\
 	$A^T$ & The Transpose of the matrix $A$ \\
 	\hline
 \end{tabular}

\abbreviations
       \begin{tabular}{|>{\centering}p{3cm}|p{11cm}|}
       	\multicolumn{2}{c}{}\\ \hline
       	LR & Locally Recoverable \\
       	IS & Information Symbol \\
       	AS & All Symbol \\
       	S-LR & Sequential-recovery LR \\
       	SA & Strict Availability \\
       	MDS & Maximum Distance Separable \\
       	RG & Regenerating \\
       	MSR & Minimum Storage Regenerating \\
       	PMR & Partial Maximal Recoverable \\
       	MR & Maximal Recoverable \\
       	$gcd(a,b)$ & Greatest Common Divisor of $a$ and $b$\\  
       	\hline
       \end{tabular}

\newpage
\pdfbookmark[section]{\contentsname}{toc}
\makecontents

\end{frontmatter}

\chapter{Introduction}

\section{The Distributed Storage Setting}

In a distributed storage system, data pertaining to a single file is spatially distributed across nodes or storage units (see Fig.~\ref{fig:dss}). Each node stores a large amounts of data running into the terabytes or more.   A node could be in need of repair for several reasons including (i) failure of the node, (ii) the node is undergoing maintenance or (iii) the node is busy serving other demands on its data.  For simplicity, we will refer to any one of these events causing non-availability of a node, as {\em node failure}.  It is assumed throughout, that node failures take place independently.

\begin{figure}[ht!]
	\centering
	\begin{minipage}[c]{0.3\textwidth}
		\centering
		\includegraphics[width=2.in]{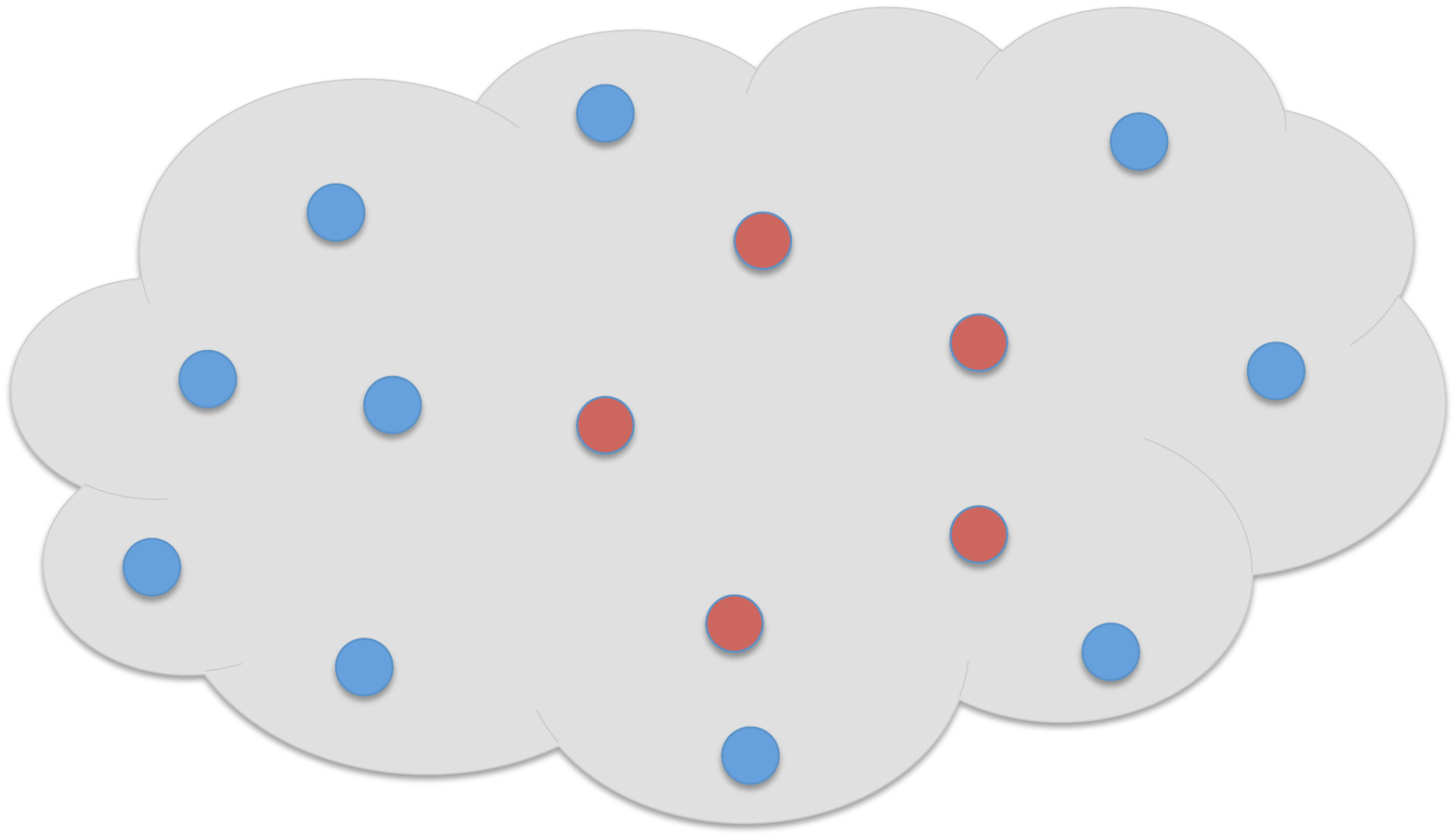}
	\end{minipage}
	\hspace{0.15\textwidth}
	\begin{minipage}[c]{0.3\textwidth}
		\centering
		\includegraphics[width=2.in]{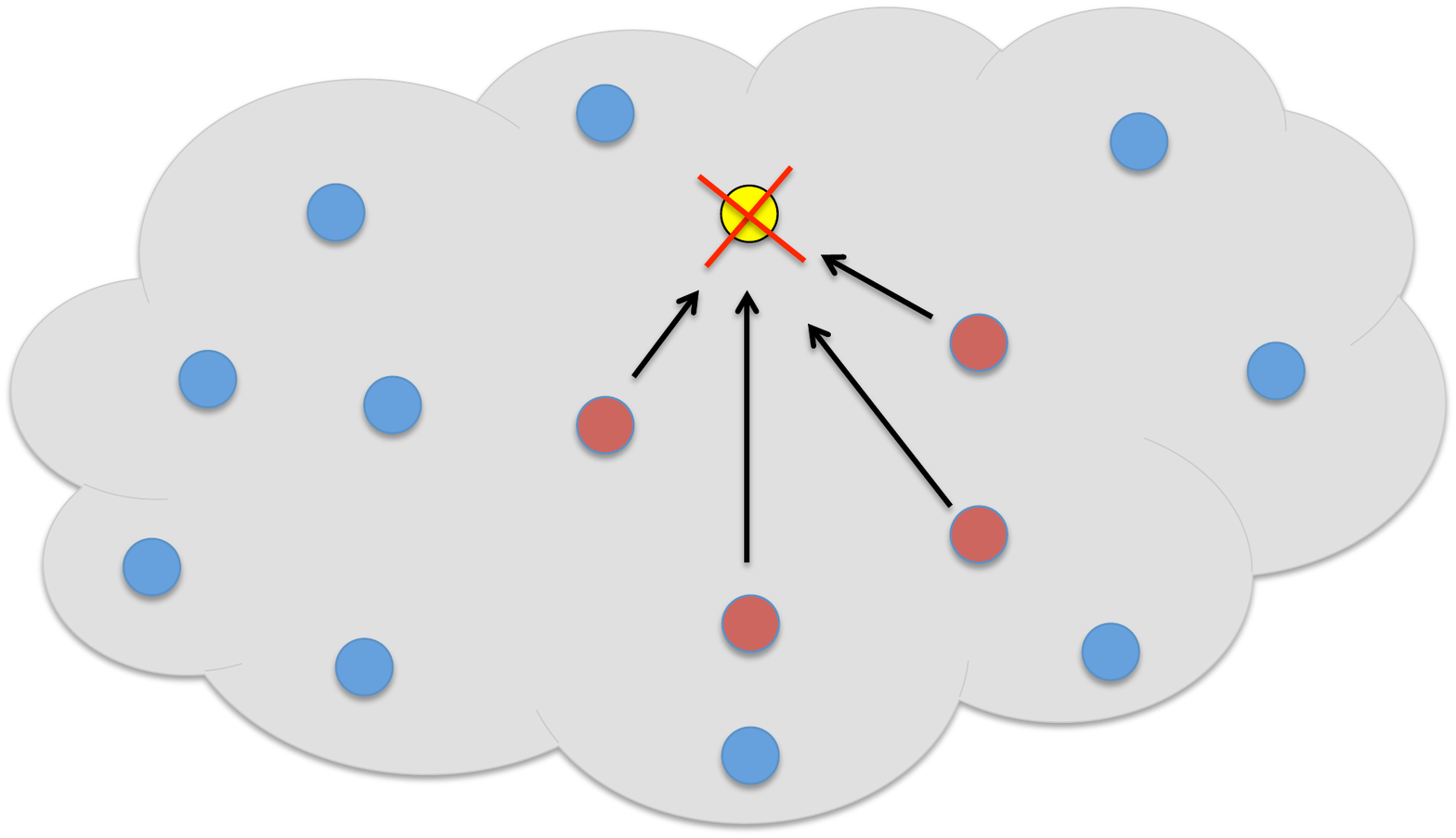}
	\end{minipage}
	\caption{In the figure on the left, the $5$ nodes in red are the nodes in a distributed storage system, that store data pertaining to a given data file.  The figure on the right shows an instance of node failure (the node in yellow), with repair being accomplished by having the remaining $4$ red (helper) nodes pass on data to the replacement node to enable node repair.} \label{fig:dss}
\end{figure}

In \cite{RashmiShahGuKuang} and \cite{SathiaAstPap_Xorbas}, the authors study the Facebook warehouse cluster and analyze the frequency of node failures as well as the resultant network traffic relating to node repair. It was observed in \cite{RashmiShahGuKuang} that a median of $50$ nodes are unavailable per day and that a median of $180$TB of cross-rack traffic is generated as a result of node unavailability (see Fig.~\ref{fig:fbdata}). 

\begin{figure}[h!]
	\centering
	\begin{minipage}[c]{0.4\textwidth}
		\centering
		\includegraphics[width=2.5in]{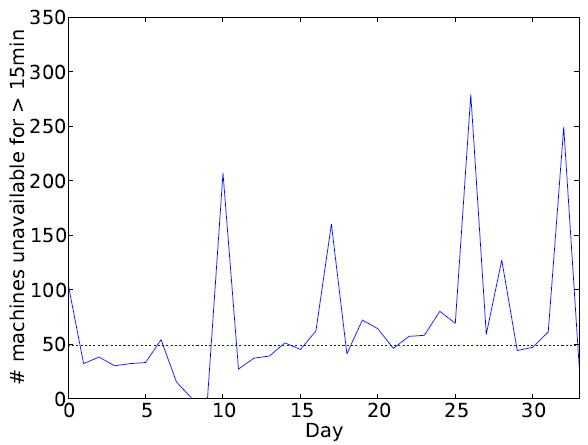}
	\end{minipage}
	\hspace{0.02\textwidth}
	\begin{minipage}[c]{0.5\textwidth}
		\centering
		\includegraphics[width=3in]{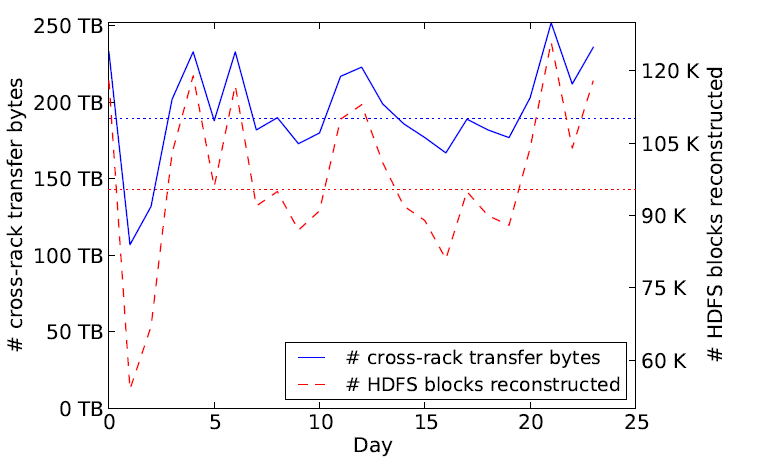}
	\end{minipage}
	\caption{The plot on the left shows the number of machines unavailable for more than
		$15$ minutes in a day, over a period of $34$ days. Thus, a median of in excess of $50$ machines become unavailable
		per day~\cite{RashmiShahGuKuang}.   The plot on the right is of the cross rack traffic generated as well as the number of Hadoop Distributed File System (HDFS) blocks reconstructed as a result of unavailable nodes.  The plot shows a median of $180$TB of cross-rack traffic generated as a result of node unavailability~\cite{RashmiShahGuKuang}. \label{fig:fbdata}}
\end{figure}

Thus there is significant practical interest in the design of erasure-coding techniques that offer both low overhead and which can also be repaired efficiently.  This is particularly the case, given the large amounts of data running into the tens or $100$s of petabytes, that are stored in modern-day data centers (see Fig.~\ref{fig:google_dc}). 

\begin{figure}[ht!]
	\centering
	\includegraphics[width=2in, angle=90]{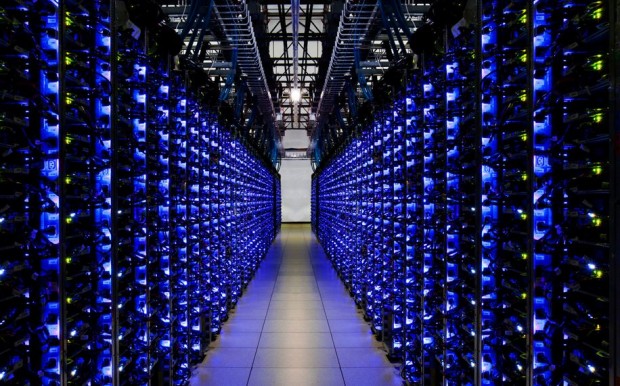}
	\caption{An image of a Google data center. }\label{fig:google_dc}
\end{figure}

\section{Different Approaches to Node Repair} 

The flowchart in Fig.~\ref{fig:flowchart1}, provides a detailed overview of the different approaches by coding theorists to efficiently handle the problem of node repair and the numerous subclasses of codes that they have given rise to.  In the description below, we explain the organization presented in the flowchart.  The boxes outline in red in the flowchart are the topics to which this thesis has made contributions.  These topics are revisited in detail in subsequent section of the chapter. 

\begin{figure}[h!]
	\centering
	\includegraphics[width=6in]{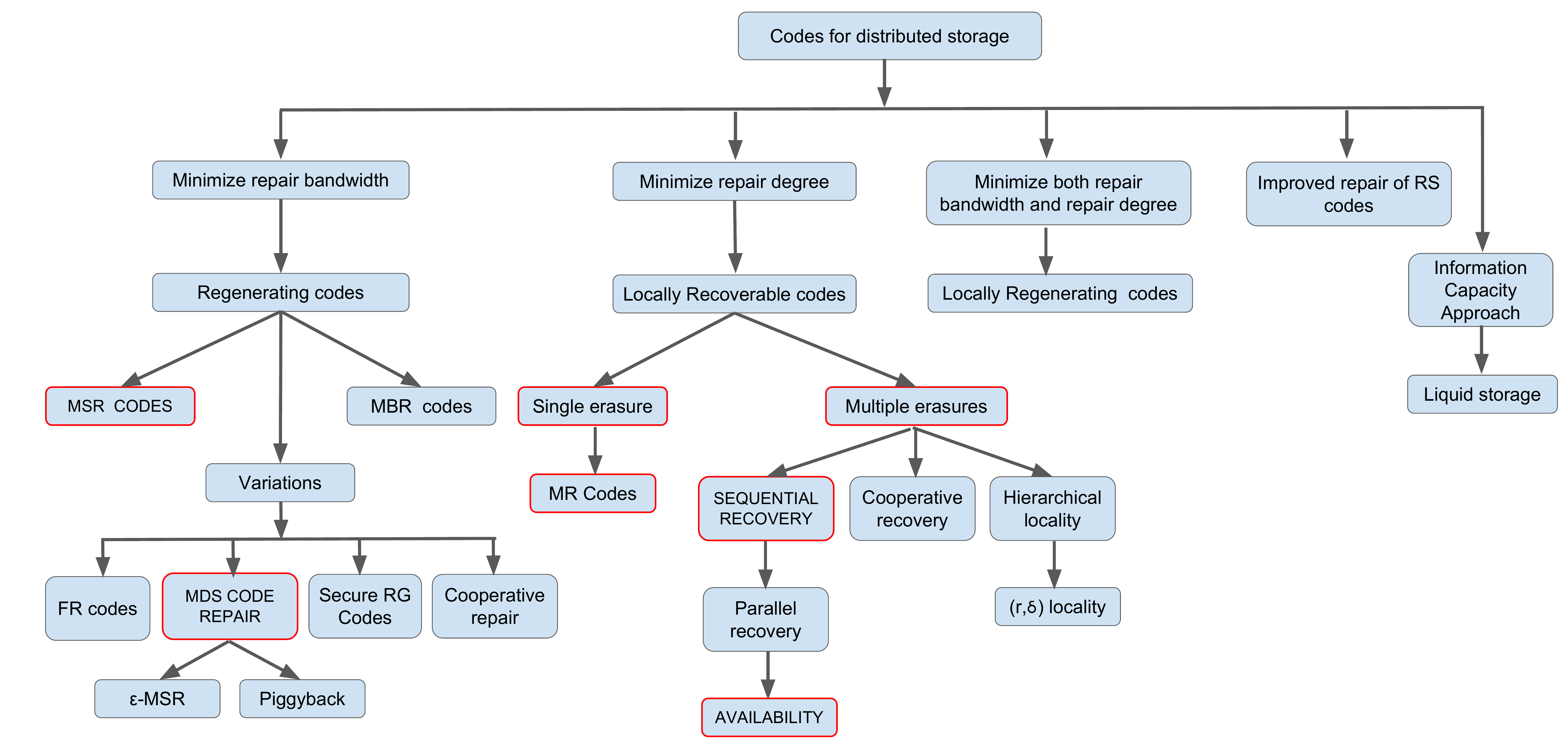}
	\caption{Flowchart depicts various techniques used to handle node repair. Current thesis has contributions in the topics corresponding to boxes highlighted in red. }\label{fig:flowchart1}
\end{figure}

\paragraph{Drawbacks of Conventional Repair} \ The conventional repair of an $[n,k]$ Reed-Solomon (RS) code where $n$ denotes the block length of the code and $k$ the dimension is inefficient in that the repair of a single node, calls for contacting $k$ other (helper) nodes and downloading $k$ times the amount of data stored in the failed node.  This is inefficient in $2$ respects.  Firstly the amount of data download needed to repair a failed node, termed the {\em repair bandwidth}, is $k$ times the amount stored in the replacement node.  Secondly, to repair a failed node, one needs to contact $k$ helper nodes.  The number of helper nodes contacted is termed the {\em repair degree}. Thus in the case of the $[14,10]$ RS code employed in Facebook, the repair degree is $10$ and the repair bandwidth is $10$ times the amount of data that is stored in the replacement node which is clearly inefficient. 

\paragraph{The Different Approaches to Efficient Node Repair} \ Coding theorists have responded to this need by coming up with two new classes of codes, namely {\em ReGenerating (RG) \cite{DimGodWuWaiRam,Wu}.and Locally Recoverable (LR) codes~\cite{GopHuaSimYek}}.  The focus in an RG code is on minimizing the repair bandwidth while LR codes seek to minimize the repair degree. In a different direction, coding theorists have also re-examined the problem of node repair in RS codes and have come up ~\cite{GuruWootRSRepair17} with new and more efficient repair techniques.  An alternative information-theoretic approach which permits {\em lazy repair}, i.e., which does not require a failed node to be immediately restored, can be found on \cite{LubyPadRic}. 

\paragraph{Different Classes of RG Codes} \ Regenerating codes are subject to a tradeoff termed as the {\em storage-repair bandwidth (S-RB) tradeoff}, between the storage overhead $\frac{n\alpha}{B}$ of the code and the normalized repair bandwidth (repair bandwidth normalized by the file size).  This tradeoff is derived by using principles of network coding.  Any code operating on the tradeoff is optimal with respect to file size.  At the two extreme ends of the tradeoff are codes termed as minimum storage regenerating codes (MSR) and minimum bandwidth regenerating (MBR) codes.  MSR codes are of particular interest as these codes are Maximum Distance Separable (MDS), meaning that they offer the least amount of storage overhead for a given level of reliability and also offer the potential of low storage overhead.    We will refer to codes corresponding to interior points of the S-RB tradeoff as interior-point RG codes.   It turns out the precise tradeoff in the interior is unknown, thus it is an open problem to determine the true tradeoff as well as provide constructions that are optimal with respect to this tradeoff.  Details pertaining to the S-RB tradeoff can be found in \cite{ShaRasKumRam_rbt,Tia,SasSenKum_isit,SasPraKriVajSenKum}.

\paragraph{Variations on the Theme of RG Codes} \ The theory of regenerating codes has been extended in several other directions.  {\em Secure RG codes} (see \cite{PawarRouayRam}) are RG codes which offer some degree of protection against a passive or active eavesdropper.  {\em Fractional Repair (FR) codes} (see \cite{RouRamFR10}) are codes which give up on some requirements of an RG code and in exchange provide the convenience of being able to repair a failed node simply by transferring data (without need for computation at either end) between helper and replacement node.  {\em Cooperative RG codes} (see \cite{HuXuWanZhaLi,KerScoStr,ShuHu}) are RG codes which consider the simultaneous repair of several failed nodes and show that there is an advantage to be gained by repairing the failed nodes collectively as opposed to in a one-by-one fashion. 

\paragraph{MDS codes with Efficient Repair} \ There has also been interest in designing other classes of Maximum Distance Separable (MDS) codes that can be repaired efficiently.  Under the {\em Piggyback Framework} (see \cite{RasShaRam_Piggyback}), it is shown how one can take a collection of MDS codewords and couple the contents of the different layers so as to reduce the repair bandwidth per codeword.  RG codes are codes over a vector alphabet $\mathbb{F}_q^{\alpha}$ and the parameter $\alpha$ is referred to as the {\em sub-packetization level} of the code.  It turns out that in an RG code, as the storage overhead gets closer to $1$, the sub-packetization level $\alpha$, rises very quickly.  $\epsilon$-MSR codes (see \cite{RawTamGur_epsilonMSR}) are codes which for a multiplicative factor ($1+ \epsilon$) increase in repair bandwidth over that required by an MSR code, are able to keep the sub-packetizatin to a very small level. 

\paragraph{Locally Recoverable Codes} \ Locally recoverable codes (see \cite{HanMon,HuaChenLi,OggDat,GopHuaSimYek,PapDim}) are codes that seek to lower the repair degree.  This is accomplished by constructing the erasure codes in such a manner that each code symbol is protected by a single-parity-check (spc) code of smaller blocklength, embedded within the code.  Each such spc code is termed as a {\em local code}. Node repair is accomplished by calling upon the short blocklength code, thereby reducing the repair degree.  The coding scheme used in the Windows Azure is an example of an LR code.   The early focus on the topic of LR codes was on the single-erasure case.  Within the class of single-erasure LR codes, is the subclass of {\em Maximum Recoverable (MR) codes}.    An MR code is capable of recovering from any erasure pattern that is not precluded by the locality constraints imposed on the code. 

\paragraph{LR Codes for Multiple-Erasures} \ More recent work in the literature has been directed towards the repair of multiple erasures. Several approaches have been put forward for multiple-erasure recovery.  The approach via {\em $(r,\delta)$ codes} (see \cite{PraKamLalKum,SonDauYueLi}), is simply to replace the spc local codes with codes that have larger minimum distance. {\em Hierarchical codes} are codes which offer different tiers of locality.   The local codes of smallest block length offer protection against single erasures.  Those with the next higher level of blocklength, offer protection against a larger number of erasures and so on. 

\paragraph{Codes with Sequential and Parallel Recovery} \ The class of codes for handling multiple erasures using local codes, that are most efficient in terms of storage overhead, are the class of codes with sequential recovery (for details on sequential recovery, please see \cite{PraLalKum,BalPraKum,SongCaiYue_L,BalKinKum_ISIT,BalKinKum_NCC}).  As the name suggests, in this class of codes, for any given pattern of $t$ erasures, there is an order under which recovery from these $t$ erasures is possible by contacting atmost $r$ code symbols for the recovery of each erasure. {\em Parallel Recovery} places a more stringent constraint, namely that one should be able to recover from any pattern of $t$ erasures in parallel.  

\paragraph{Availability Codes} \ Availability codes (see \cite{WanZha,WanZhaLiu,TamBarFro,BalKum}) require the presence of $t$ disjoint repair groups with each repair group contains atmost $r$ code symbols that are capable of repairing a single erased symbol.  The name availability stems from the fact that this property allows the recreation of a single erased symbol in $t$ different ways, each calling upon a disjoint set of helper nodes.  This allows the $t$ simultaneous demands for the content of a single node to be met, hence the name availability code. 
In the class of codes with cooperative recovery (see \cite{RawMazVis}), the focus is on the recovery of multiple erasures at the same time, while keeping the average number of helper nodes contacted per erased symbol, to a small value. 

\paragraph{Locally Regenerating (LRG) Codes} Locally regenerating codes (see \cite{KamPraLalKum}) are codes in which each local code is itself an RG code. Thus this class of codes incorporates into a single code, the desirable features of both RG and LR codes, namely {\em both} low repair bandwidth and low repair degree. 

\paragraph{Efficient Repair of RS Codes} \ In a different direction, researchers have come up with alternative means of repairing RS codes (\cite{ShanPapDimCaiMDSRepair14,GuruWootRSRepair17}).  These approaches view an RS code over an alphabet $\mathbb{F}_q$, $q=p^t$ as a vector code over the subfield $\mathbb{F}_p$ having sub-packetization level $t$ and use this perspective, to provide alternative, improved approaches to the repair of an RS code. 

\paragraph{Liquid Storage Codes} \ These codes are constructed in line with an information-theoretic approach which permits {\em lazy repair}, i.e., which does not require a failed node to be immediately restored can be found on \cite{LubyPadRic}. 

\section{Literature Survey}

\subsection{Locally Recoverable (LR) codes for Single Erasure}
In \cite{HanMon}, the authors consider designing codes such that the code designed and codes of short block length derived from the code designed through puncturing operations all have good minimum distance. The requirement of such codes comes from the problem of coding for memory where sometimes you want to read or write only parts of memory.  These punctured codes are what would today be regarded as local codes. The authors derive an upper bound on minimum distance of such codes under the constraint that the code symbols in a local code and code symbols in another local code form disjoint sets and provide a simple parity-splitting construction that achieves the upper bound. Note that this upper bound on minimum distance is without any constraint on field size and achieved for some restricted set of parameters by parity splitting construction which has field size of $O(n)$. In \cite{HuaChenLi}, the authors note that when a single code symbol is erased in an MDS code, $k$ code symbols need to be contacted to recover the erased code symbol where $k$ is the dimension of the MDS code. This led them to design codes called Pyramid Codes which are very simply derived from the systematic generator matrix of an MDS code and which reduce the number of code symbols that is needed to be contacted to recover an erased code symbol.  In \cite{OggDat}, the authors recognize the requirement of recovering a set of erased code symbols by contacting a small set of remaining code symbols and provide a code construction for the requirement based on the use of linearized polynomials.  

In \cite{GopHuaSimYek}, the authors introduce the class of LR codes in full generality, and present an upper bound on minimum distance $d_{\min}$ without any constraint on field size.   This paper along with the paper \cite{HuaSimXu_Azure} (sharing a common subset of authors) which presented the practical application of LR codes in Windows Azure storage, are to a large extent, responsible for drawing the attention of coding theorists to this class of codes.  

%The upper bound on minimum distance was later refined in papers \cite{GopHuaSimYek,PraKamLalKum}.
%Azure:  Huang C, Simitci H, Xu Y, et al., Erasure coding in windows azure storage, in: Proc. USENIX Annual Technical Conference, Boston,

The extension to the non-linear case 
%for all symbol and information symbol locality 
appears in \cite{PapDim},\cite{ForYek} respectively.  All of these papers were primarily concerned with local recoverability in the case of a  single erasure i.e., recovering an erased code symbol by contacting a small set of code symbols. More recent research has focused on the multiple-erasure case and multiple erasures are treated in subsequent chapters of this thesis.

For a detailed survery on alphabet size dependent bounds for LR codes and constructions of LR codes with small alphabet size, please refer to Chapter \ref{ch:LRSingleErasure}.
A tabular listing of some constructions of Maximal Recoverbale or partial-MDS codes appears in Table \ref{tab:pmds} in Chapter \ref{ch:MR}. 

\subsection{Codes with Sequential Recovery}
The sequential approach to recovery from erasures, introduced by Prakash et al. \cite{PraLalKum} is one of several approaches to local recovery from multiple erasures as discussed in Chapter \ref{ch:SeqBound}, Section \ref{sec:seq_classification}.   As indicated in Fig.~\ref{fig:me_classes}, Codes with Parallel Recovery and Availability Codes can be regarded as sub-classes of Codes with Sequential Recovery (S-LR codes).    Among the class of codes which contact at most $r$ other code symbols for recovery from each of the $t$ erasures, codes employing this approach (see \cite{PraLalKum,RawMazVis,SonYue,SonYue_Prdr,BalPraKum,BalPraKum4Era,SongCaiYue_L,BalKinKum_ISIT,BalKinKum_NCC}) have improved rate simply because sequential recovery imposes the least stringent constraint on the LR code.

\paragraph{Two Erasures} Codes with sequential recovery (S-LR code) from two erasures ($t=2$) are considered in \cite{PraLalKum} (see also \cite{SonYue}) where a tight upper bound on the rate and a matching construction achieving the upper bound on rate is provided.  A lower bound on block length and a construction achieving the lower bound on block length is provided in \cite{SonYue}. 

\paragraph{Three Erasures} Codes with sequential recovery from three erasures ($t=3$) can be found discussed in \cite{SonYue,SongCaiYue_L}.  A lower bound on block length as well as a construction achieving the lower bound on block length appears in \cite{SonYue}. 

\paragraph{More Than $3$ Erasures}  A general construction of S-LR codes for any $r,t$ appears in \cite{SonYue_Prdr,SongCaiYue_L}.  Based on the tight upper bound on code rate presented in Chapter \ref{ch:SeqBound}, it can be seen that the constructions provided in \cite{SonYue_Prdr,SongCaiYue_L} do not achieve the maximum possible rate of an S-LR code. In \cite{RawMazVis}, the authors provide a construction of S-LR codes for any $r,t$ with rate $\geq \frac{r-1}{r+1}$.  Again, the upper bound on rate presented in Chapter \ref{ch:SeqBound} shows that $\frac{r-1}{r+1}$ is not the maximum possible rate of an S-LR code.   In Chapter \ref{ch:SeqConstr}, we observe that the rate of the construction given in \cite{RawMazVis} is actually $\frac{r-1}{r+1}+\frac{1}{n}$ which equals the upper bound on rate derived here only for two cases: case (i) for $r=1$ and case (ii) for $r \geq 2$ and $t \in \{2,3,4,5,7,11\}$ exactly corresponding to those cases where a Moore graph of degree $r+1$ and girth $t+1$ exist.  In all other cases, the construction given in \cite{RawMazVis} does not achieve the maximum possible rate of an S-LR code.    

\subsection{Codes with Availability}
The problem of designing codes with availability in the context of LR codes was introduced in \cite{WanZha}. High rate constructions for availability codes appeared in \cite{WanZhaLiu},\cite{HuaYaaUchSie},\cite{WanZhaLin1},\cite{SilZeh}. Constructions of availability codes with large minimum distance appeared in \cite{TamBar_LRC},\cite{TamBarFro,BarTamVla,KruFro}, \cite{BhaTha}. For more details on constructions of availability codes please see Chapter \ref{ch:avail}. Upper bounds on minimum distance and rate of an availabiltiy code appeared in \cite{WanZha}, \cite{TamBarFro}, \cite{HuaYaaUchSie}, \cite{KruFro}, \cite{KadCal}. For exact expressions for upper bounds on minimum distance and rate which appeared in literature please refer to Chapter \ref{ch:avail}.

\subsection{Regenerating codes}
In the following, we focus only on sub-packetization level $\alpha$ of regenerating codes as this thesis is focussed only on this aspect.
An open problem in the literature on regenerating codes is that of determining the smallest value of sub-packetization level $\alpha$ of an optimal-access (equivalently, help-by-transfer) MSR code, given the parameters $\{(n,k,d=(n-1)\}$. 
This question is addressed in \cite{TamWanBru_access_tit}, where a lower bound on $\alpha$ is given for the case of a regenerating code that is MDS and where only the systematic nodes are repaired in a help-by-transfer fashion with minimum repair bandwidth.   In the literature these codes are often referred to as optimal access MSR codes with systematic node repair. The authors of  \cite{TamWanBru_access_tit} establish that:
\bean
\alpha \geq r^{\frac{k-1}{r}},
\eean
in the case of an optimal access MSR code with systematic node repair. 

In a slightly different direction, lower bounds are established in \cite{GopTamCal} on the value of $\alpha$ in a general MSR code that does not necessarily possess the help-by-transfer repair property. In \cite{GopTamCal} it is established that:
\bean
k \leq 2 \log_2(\alpha) (\lfloor \log_{\frac{r}{r-1}} (\alpha) \rfloor + 1),
\eean
while more recently, in \cite{HuaparXia} the authors prove that:
\bean
k \leq 2 \log_r(\alpha) (\lfloor \log_{\frac{r}{r-1}} (\alpha) \rfloor + 1). 
\eean
A brief survey of regenerating codes and in particular MSR codes appear in Chapter \ref{ch:Subpkt}.

\section{Codes in Practice}

The explosion in amount of storage required and the high cost of building and maintaining a data center, has led the storage industry to replace the widely-prevalent replication of data with erasure codes, primarily the RS code (see Fig.~\ref{fig:RS_applications}).   For example, the new release Hadoop 3.0 of the Hadoop Distributed File System (HDFS), incorporates HDFS-EC (for HDFS- Erasure Coding) makes provision for employing RS codes in an HDFS system.  

\begin{figure}[ht!]
	\centering
	\includegraphics[width=4.75in]{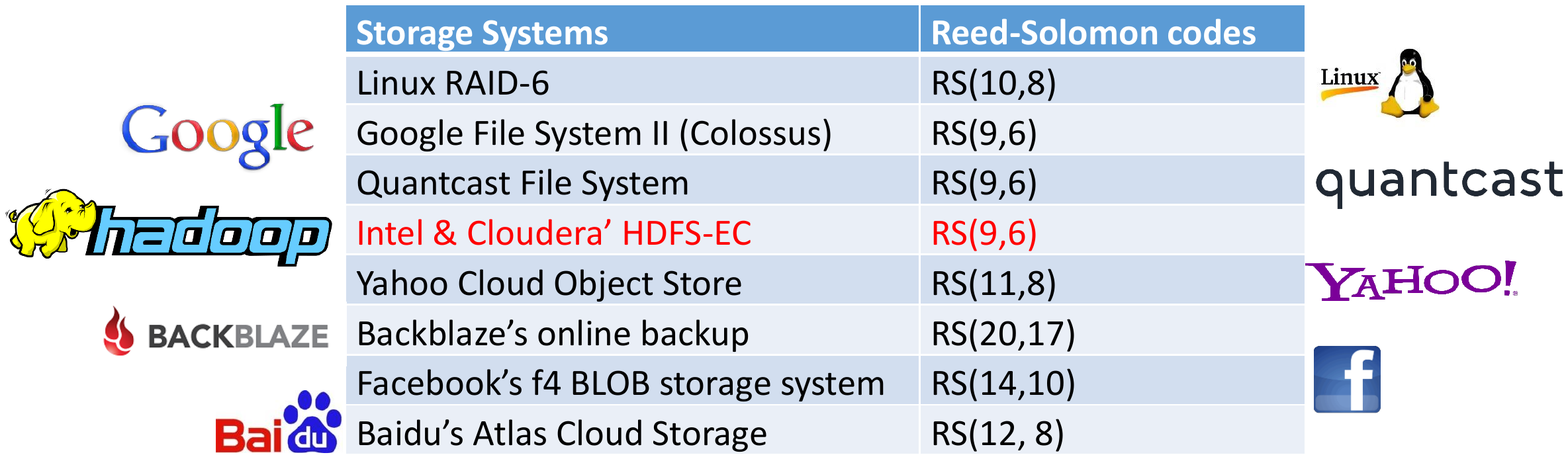}            
	\caption{Some examples of the RS code employed in industry. {\scriptsize (taken from Hoang Dau, Iwan Duursma, Mao Kiah and Olgica Milenkovic, ``Optimal repair schemes for Reed-Solomon codes with single and multiple erasures,'' {\em 2017 Information Theory and Applications Workshop}, San Diego, Feb 12-17.) }}
	\label{fig:RS_applications}
\end{figure}

However, the use of traditional erasure codes results in a repair overhead, measured in terms of additional repair traffic resulting in larger repair times and the tying up of nodes in non productive, node-repair-related activities.  This motivated the academic and industrial-research community to explore approaches to erasure code construction which were more efficient in terms of node repair and many of these approaches were discussed in the preceding section.  

An excellent example of research in this direction is the development of the theory of LR codes and their immediate deployment in data storage in the form of the Windows Azure system. 

\paragraph{LR Codes in Windows Azure:}  In \cite{HuaSimXu_Azure}, the authors compare performance-evaluation results of an $(n=16,k=12,r=6)$ LR code with that of $[n=16,k=12]$  RS code in Azure production cluster and demonstrates the repair savings of LR code. Subsequently the authors implemented an $(n=18,k=14,r=7)$ LR code in Windows Azure Storage and showed that this code has repair degree comparable to that of an $[9,6]$ RS code, but has storage overhead $1.29$ versus $1.5$ in the case of the RS code (see Fig.~\ref{fig:RS}, and Fig.~\ref{fig:Azure}). This $(n=18,k=14,r=7)$ LR code is currently is use now and has reportedly resulted in the savings of millions of dollars for Microsoft \cite{microsoft}. 

\begin{figure}[ht!]
	\centering
	\includegraphics[width=4in]{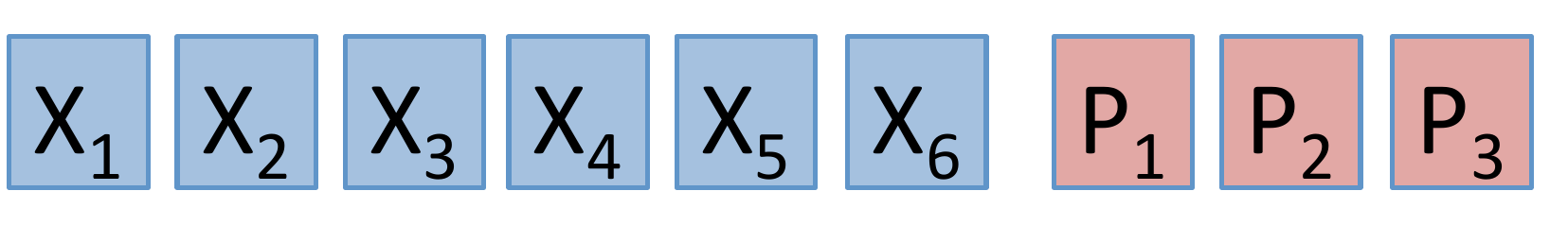}
	\caption{A $[9,6]$ RS code having a repair degree of $6$ and a storage overhead of $1.5$. }\label{fig:RS}
\end{figure}
\begin{figure}[ht!]
	\centering
	\includegraphics[width=6in]{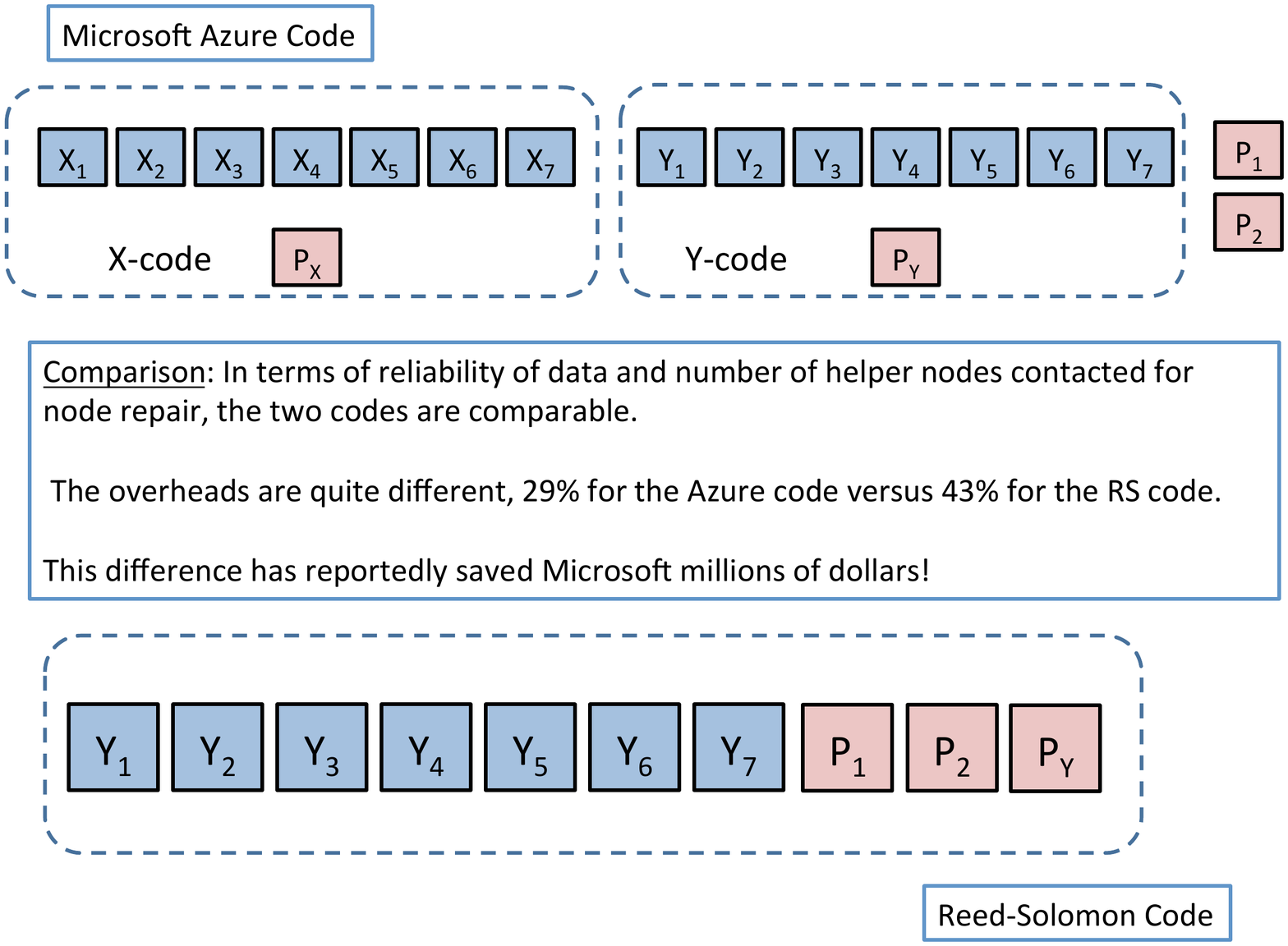}
	\caption{The LR code employed in Windows Azure.  This code has repair degree $7$, which is only slightly larger than the repair degree $6$ of the $[9,6]$ RS code in Fig.~\ref{fig:RS}.  However, the storage overhead of this code at $1.29$, is much smaller than the comparable value in the case of the $(1.5$) of the RS code. }\label{fig:Azure}
\end{figure}

A second poular distributed storage system is Ceph  and Ceph currently has an LR code plug-in \cite{lrc_plugin}.

Some other examples of work directed towards practical applications are described below.  Most of this work is work carried out by an academic group and presented at a major storage industry conference and involves performance evaluation through emulation of the codes in a real-world setting. 

\ben
\item In \cite{SathiaAstPap_Xorbas}, the authors implement HDFS-Xorbas.  This system employs LR codes in place of RS codes in HDFS-RAID.  The experimental evaluation of Xorbas was carried out in Amazon EC2 and a cluster in Facebook and the repair performance of $(n=16,k=10,r=5)$ LR code  was compared against a $[14,10]$ RS code. 

\item A method, termed as piggybacking, of layering several RS codewords and then coupling code symbols across layers is shown in \cite{RasShaRam_Piggyback}, to yield a code over a vector alphabet, that has reduced repair bandwidth, without giving up on the MDS property of an RS code.  A practical implementation of this is implemented in the Hitchhiker erasure-coded system ~\cite{RasShaGu_Hitchhiker}.   Hitchhiker was implemented in HDFS and its performance was evaluated on a data-warehouse cluster at Facebook. 

\item The HDFS implementation of a class of codes known as HashTag codes is discussed in \cite{KraGliJen_HashTag} (see also \cite{KraGliOVe}).   These are codes designed to efficiently repair systematic nodes and have a lower sub-packetization level in comparison to an RG code at the expense of a larger repair bandwidth. 

\item The NCCloud \cite{HuChenLee_NCcloud} is an early work that dealt with the practical performance evaluation of regenerating codes and employs a class of MSR code known as functional-MSR code having $2$ parities. 

\item In \cite{KriPraLal_Pentagon}, the performance of an MBR code known as the pentagon code as well as  an LRG code known as the heptagon local code are studied and their performance compared against double and triple replication.  These code possess inherent double replication of symbols as part of the construction. 

\item The product matrix (PM) code construction technique yields a general construction of MSR and MBR codes.  The PM  MSR codes have storage overhead that is approximately lower bounded by a factor of $2$.   The performance evaluation of an optimal-access version of a rate $\frac{1}{2}$ PM code, built on top of Amazon EC2 instances, is presented in \cite{RasNakWan_PMRBT}. 

%\item A c
%
%In \cite{LiLi_Beehive}, the authors introduced erasure codes termed Beehive that are built on top of MSR codes. These codes repair multiple failures simultaneously and are implemented using the Product-Matrix MSR in C++ using the Intel storage acceleration library (ISAL). 

\item A high-rate MSR code known as the Butterfly code is implemented and evaluated in both Ceph and HDFS in \cite{JuaBlaMat_Butterfly}.  This code is a simplified version of the MSR codes with two parities introduced in \cite{GadMatBla}. 

\item In \cite{VajRamPur_Clay}, the authors evaluate the performance in a Ceph environment, of an MSR code known as the Clay code, and which corresponds to the Ye-Barg code in \cite{YeBar_2}, (and independently rediscovered after in \cite{SasVajKum_arxiv}).  The code is implemented in \cite{VajRamPur_Clay}, from the coupled-layer perspective present in \cite{SasVajKum_arxiv}.   This code is simultaneously optimal in terms of storage overhead and repair bandwidth (as it is an MSR code), and also has the optimal-access (OA) property and the smallest possible sub-packetization level of an OA MSR code. The experimental performance of the Clay code is shown to be match its theoretical performance. 

\een

%\begin{figure}[ht!]
%	\centering
%	\includegraphics[width=6.5in]{figures/Piggyback_4_2}
%	\caption{Here two codewords of a $[4,2]$ MDS code are piggybacked. The first systematic node can be repaired by reading $b_2$, $b_1 + b_2$ and
%		$b_1 + 2b_2 + a_1$, whereas the second systematic node repair requires $b_1$, $b_1 + b_2$ and $2a_2 - 2b_2 - b_1$} \label{fig:Piggy}
%\end{figure}

\section{Contributions and Organization of the Thesis}

The highlighted boxes appearing in the flow chart in Fig.~\ref{fig:flowchart2} represent topics with respect to which this thesis has made a contribution.   
\begin{figure}[ht!]
	\centering
	\includegraphics[width=6in]{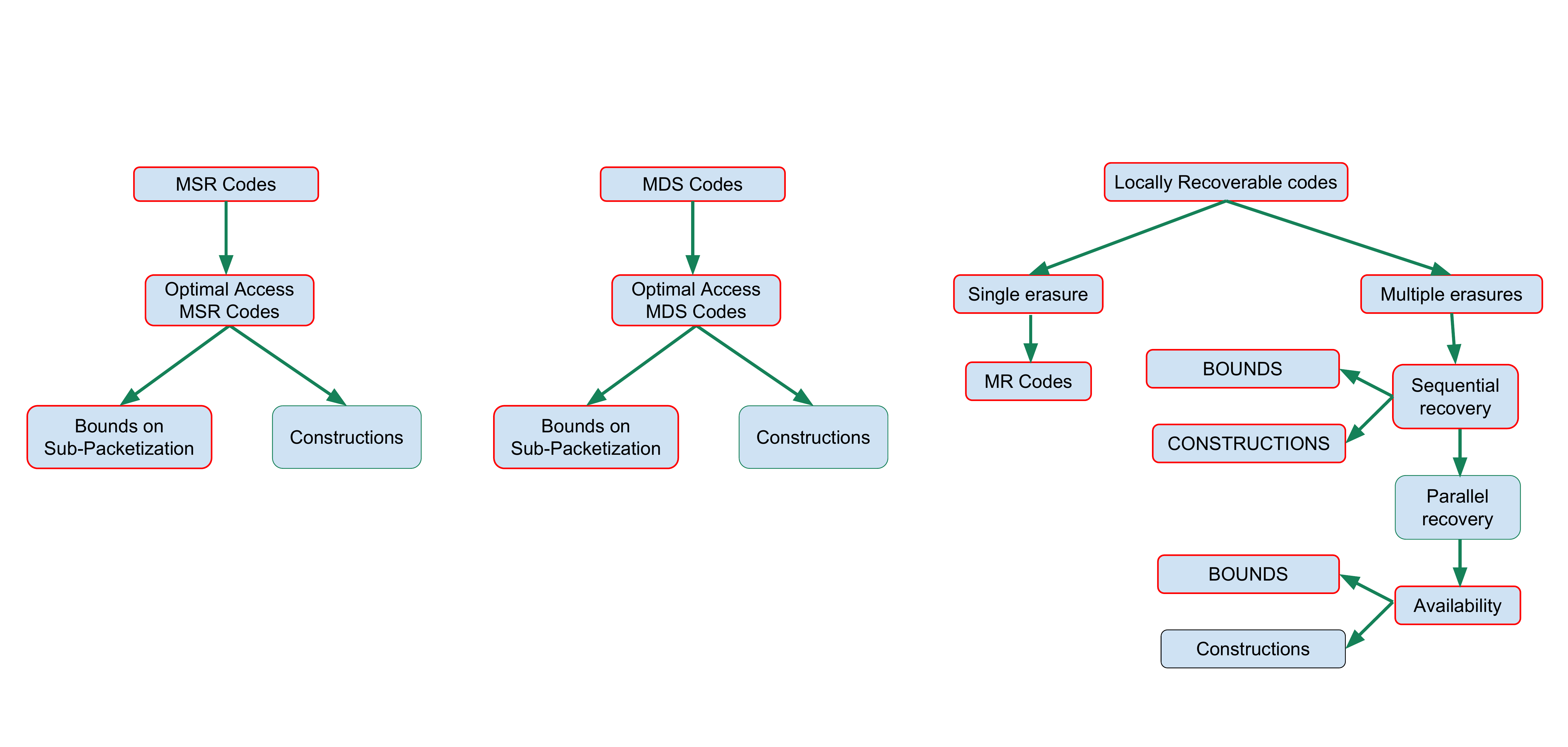}
	\caption{The three flowcharts shown here are extracted from the flowchart in Fig.~\ref{fig:flowchart1}. The highlighted boxes indicate topics to which this thesis has contributed. }\label{fig:flowchart2}
\end{figure}
We now proceed to describe chapter wise, our contributions corresponding to topics in the highlighted boxes.   An overview of the contributions appears in Fig.~\ref{fig:flowchart21}. 

\begin{figure}[ht!]
	\centering
	\includegraphics[width=5in]{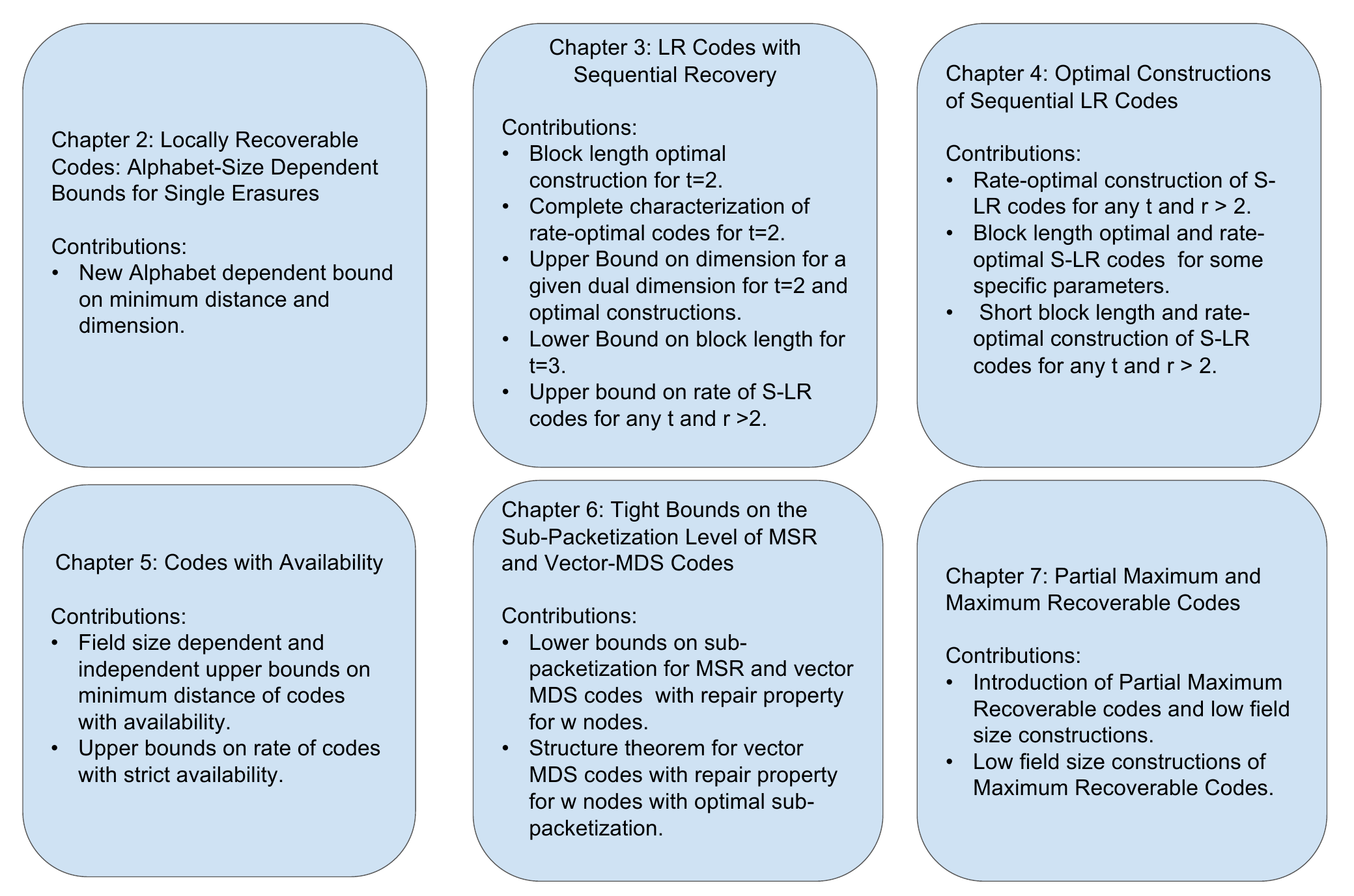}
	\caption{A chapter wise overview of the contributions of the present thesis. The highlighted chapters indicate the chapters containing the principal results of the chapter. The specific principal results appear in boldface. }\label{fig:flowchart21}
\end{figure}

\paragraph{Chapter 2: Locally Recoverable Codes: Alphabet-Size Dependent Bounds for Single Erasures
}
This chapter begins with an overview of LR codes.  Following this, new alphabet-size dependent bounds on both minimum distance and dimension of an LR code that are tighter than existing bounds in the literature, are presented.   

\paragraph{Chapter 3: Tight Bounds on the Rate of LR Codes with Sequential Recovery}
This chapter deals with codes for sequential recovery and contains the principal result of the thesis, namely, a tight upper bound on the rate of a code with sequential recovery for all possible values of the number $t$ of erasures guaranteed to be recovered with locality parameter $r \geq 3$.  Matching constructions are provided in the chapter following, Chapter~\ref{ch:SeqConstr}.    A characterization of codes achieving the upper bound on code rate for the case of $t=2$ erasures is also provided here. The bound on maximum possible code rate assumes that there is no constraint (i.e., upper bound) on the block length of the code or equivalently, on the code dimension.  A lower bound on the block length of codes with sequential recovery from three erasures is also given here. Also given are constructions of codes with sequential recovery for $t=2$ having least possible block length for a given dimension $k$ and locality parameter $r$. An upper bound on dimension for the case of $t=2$ for a given dual dimension and locality parameter $r$ and constructions achieving it are also provided. %This lower bound is applicable in regions where the upper bound on code rate is not achievable. 
\paragraph{Chapter 4: Matching (Optimal) Constructions of Sequential LR Codes}
In this chapter, we construct codes which achieve the upper bound on rate of codes with sequential recovery derived in Chapter ~\ref{ch:SeqBound} for all possible values of the number $t$ of erasures guaranteed to be recovered with locality parameter $r \geq 3$.  We deduce the general structure of parity check matrix of a code achieving our upper bound on rate. Based on this, we show achievability of the upper bound on code rate via an explicit construction. We then present codes which achieve the upper bound on rate having least possible block length for some specific set of parameters.

\paragraph{Chapter 5: Bounds on the Parameters of Codes with Availability}
This chapter deals with codes with availability. Upper bounds are presented on the minimum distance of a code with availability, both for the case when the alphabet size is constrained and when there is no constraint.   These bounds are tighter than the existing bounds in literature. We next introduce a class of codes, termed {\em codes with strict availability} which are subclass of the codes with availability.   The best-known availability codes in terms of rate belong to this category.  We present upper bounds on the rate of codes with strict availability that are tighter than existing upper bounds on the rate of codes with availability. We present exact expression for maximum possible fractional minimum distance for a given rate for a special class of availability codes as $\ell \rightarrow \infty$ where each code in this special class is a subcode or subspace of direct product of $\ell$ copies of an availability code with parameters $r,t$ for some $\ell$.  We also present a lower bound on block length codes with strict avalability and characterize the codes with strict availability achieving the lower bound on block length. 

\paragraph{Chapter 6: Tight Bounds on the Sub-Packetization Level of MSR and Vector-MDS Codes}
This chapter contains our results on the topic of RG codes. Here, we derive lower bounds on the sub-packetization level an of a subclass of MSR codes known as optimal-access MSR codes.  We also bound the sub-packetization level of optimal-access MDS codes with optimal repair for (say) a fixed number $w$ of nodes. The bounds derived here are tight as there are constructions in the literature that achieve the bounds derived here.  The bounds derived here conversely show that the constructions that have previously appeared in the literature are optimal with respect to sub-packetization level.   We also show that the bound derived here sheds light on the structure of an optimal-access MSR or MDS code.  

\paragraph{Chapter 7: Partial Maximal and Maximal Recoverable Codes}

The final chapter deals with the subclass of LR codes known as Maximal Recoverable (MR) codes. In this chapter we provide constructions of MR codes having smaller field size than the constructions existing in the literature. In particular we modify an existing construction which will result in an MR code with field size of $O(n)$ for some specific set of parameters. We also modify (puncture) an existing construction for $r=2$ to form an MR code which results in reduced field size in comparison with the field size of constructions appearing in the literature.  We also introduce in the chapter, a class of codes termed as Partial Maximal Recoverable (PMR) codes. We provide constructions of PMR codes having small field size. Since a PMR code is in particular an LR code, this also yields a low-field-size construction of LR codes.

\chapter{Locally Recoverable Codes: Alphabet-Size Dependent Bounds for Single Erasures} \label{ch:LRSingleErasure}

This chapter deals with locally recoverable (LR) codes, also known in the literature as codes with locality. Contributions of the thesis in this area include new best-known alphabet-size-dependent bounds on both minimum distance $d_{\min}$ and dimension $k$ for LR codes for $q>2$. For $q=2$, our bound on dimension is the tightest known bound for $d_{\min} \geq 9$.  We begin with some background including a fundamental bound on $d_{\min}$ (Section \ref{LR_dmin_bound}) and a description of two of the better-known and general constructions for this class of codes (Section \ref{sec:LRCodeConst}).   

More recent research has focused on deriving bounds on code dimension and minimum distance, that take into account the size $q$ of the underlying finite field $\mathbb{F}_q$ over which the codes are constructed. We next provide a summary of existing field-size-dependent bounds on dimension and minimum distance (Section \ref{Alphabet_dep_LR_codes}).   Our alphabet-size-dependent bound (Section \ref{LR_Codes_New_Bound}) on minimum distance and dimension makes use of an upper bound on the Generalized Hamming Weights (GHW) (equivalently, Minimum Support Weights (MSW)) derived in \cite{PraLalKum}.  This bound is in terms of a recursively-defined sequence of integers which we refer to here as the Minimum Support Weight Sequence (MSWS).  Our bound also makes use of the notion of {\em shortening of a code}.  Following a presentation of our results, we then summarize existing alphabet-size-dependent constructions (Section \ref{Small_Alph_LR_Codes}). The chapter ends with a summary of the contributions of the thesis on the topic of LR codes for single erasures.  Contributions to the case of LR codes for multiple erasures are contained in subsequent chapters. In this chapter, we will restrict ourselves to only linear codes for most of the discussion.  

\section{Locally Recoverable Codes for Single Erasures} \label{LR_Codes}

In \cite{HanMon}, the authors consider designing codes such that the code designed and codes of short block length derived from the code designed through puncturing operations all have good minimum distance. The requirement of such codes comes from the problem of coding for memory where sometimes you want to read or write only parts of memory.  These punctured codes are what would today be regarded as local codes. The authors derive an upper bound on minimum distance of such codes under the constraint that the code symbols in a local code and code symbols in another local code form disjoint sets and provide a simple parity-splitting construction that achieves the upper bound. Note that this upper bound on minimum distance is without any constraint on field size and achieved for some restricted set of parameters by parity splitting construction which has field size of $O(n)$. In \cite{HuaChenLi}, the authors note that when a single code symbol is erased in an MDS code, $k$ code symbols need to be contacted to recover the erased code symbol where $k$ is the dimension of the MDS code. This led them to design codes called Pyramid Codes which are very simply derived from the systematic generator matrix of an MDS code and which reduce the number of code symbols that is needed to be contacted to recover an erased code symbol.  In \cite{OggDat}, the authors recognize the requirement of recovering a set of erased code symbols by contacting a small set of remaining code symbols and provide a code construction for the requirement based on the use of linearized polynomials.  

In \cite{GopHuaSimYek}, the authors introduce the class of LR codes in full generality, and present an upper bound on minimum distance $d_{\min}$ without any constraint on field size.   This paper along with the paper \cite{HuaSimXu_Azure} (sharing a common subset of authors) which presented the practical application of LR codes in Windows Azure storage, are to a large extent, responsible for drawing the attention of coding theorists to this class of codes.  

%The upper bound on minimum distance was later refined in papers \cite{GopHuaSimYek,PraKamLalKum}.
%Azure:  Huang C, Simitci H, Xu Y, et al., Erasure coding in windows azure storage, in: Proc. USENIX Annual Technical Conference, Boston,

The extension to the non-linear case 
%for all symbol and information symbol locality 
appears in \cite{PapDim},\cite{ForYek} respectively.  All of these papers were primarily concerned with local recoverability in the case of a  single erasure i.e., recovering an erased code symbol by contacting a small set of code symbols. More recent research has focused on the multiple-erasure case and multiple erasures are treated in subsequent chapters of this thesis.

Throughout this chapter:
\ben
\item a codeword in an $[n,k]$ linear code will be represented by $\underline{c}=[c_1,c_2,\cdots, c_{n-1},c_n ]$ where $c_i$ denotes the $i$th code symbol.
\item all codes discussed are linear codes and we will use the term nonlinear explicitly when referring to a nonlinear code. 
\item we say a code achieves a bound (an inequality), iff it has parameters such that the bound is satisfied with equality. 
\item The notation $d$ or $d_{\min}$, refers to the minimum distance of a code under discussion.
\een
Let \calc be an $[n,k]$ code over a finite field $\fq$.
Let $G$ be a $(k \times n)$ generator matrix for \calc having columns $\{\ug_i\}_{i=1}^n$, i.e.,
$G = [\ug_1, \ug_2, \cdots, \ug_n]$. 
An information set $E=\{e_1, e_2, \cdots, e_k\}$ is any subset of $[n]$ of size $k$ satisfying:
$\text{rk}(G|_E) = \text{rk}[\ug_{e_1}, \cdots, \ug_{e_k}] = k$. 
\begin{defn}
	An $[n,k]$ code \calc  over a finite field \fq\ is said to be an LR code with information-symbol (IS) locality over \fq\ if there is an information set $E = \{e_1, e_2, \cdots, e_k\}$ such that for every $e_i \in E$, there exists a subset $S_i \subseteq [n]$, with $e_i \in S_i$, such that $|S_i| \leq r+1$ and there is a codeword in the dual code $\calc^{\perp}$ with support exactly equal to $S_i$.
	$\calc$ is said to be an LR code with all-symbol (AS) locality over \fq\ if for every $i \in [n]$, there exists a subset $S_i \subseteq [n]$ with $i \in S_i$, such that $|S_i| \leq r+1$ and there is a codeword in the dual code $\calc^{\perp}$ with support exactly equal to $S_i$. Clearly, an LR code with AS locality is also an LR code with IS locality.  The parameter $r$ appearing above is termed the locality parameter.
\end{defn}        
Throughout this thesis, when we say LR code, it refers to an LR code with all-symbol (AS) locality. When we discuss LR code with information-symbol (IS) locality, we will state it explicitly.
Note that the presence of a codeword in the dual code with support set $S_i$ implies that if the code symbol $c_{e_i}$ is erased then it can be recovered from the code symbols in the set $\{c_j \mid j \in S_i\setminus \{e_i\}\}$.   Recovery from erasures is termed as {\em repair}. The repair is {\em local}, since $|S_i| \leq r+1$ and typically, $r$ is significantly smaller than the block length $n$ of the code.   It is easy to see that every linear code can trivially be regarded as an LR code with locality parameter $k$. The term a local code of an LR code $\mathcal{C}$ refers to the code $\mathcal{C}|_{S_i} = \{ \underline{c}|_{S_i}: \underline{c} \in \mathcal{C} \}$ for some $i \in [n]$ where $\underline{c}|_{A} = [c_{j_1},...,c_{j_{|A|}}]$ for a set $A=\{j_1,...,j_{|A|}\}$ with $j_{\ell} < j_{\ell+1}$,$\forall \ell \in [|A|-1]$..

\subsection{The $d_{\min}$ Bound} \label{LR_dmin_bound}
A major result in the theory of LR codes is the minimum distance bound given in \eqref{Singleton} which was derived for linear codes in \cite{GopHuaSimYek}. An analogous bound for nonlinear codes can be found in \cite{PapDim},\cite{ForYek}.  
\begin{thm}    \cite{GopHuaSimYek}  
	Let \calc be an $[n,k]$ LR code with IS locality over \fq\ with locality parameter $r$ and minimum distance $d_{\min}(n,k,r)$. Then
	\bea
	d_{\min}(n,k,r) & \le &  (n-k+1) - (\left\lceil \frac{k}{r} \right\rceil - 1). \label{Singleton}
	\eea
\end{thm}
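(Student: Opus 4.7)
My plan is to recast the inequality in a convenient form and then carry out a greedy construction in the style of Gopalan--Huang--Simitci--Yekhanin. Writing $d^{*} := (n-k+1) - (\lceil k/r \rceil - 1)$, the target bound $d_{\min} \le d^{*}$ is equivalent to exhibiting a coordinate set $T \subseteq [n]$ of size at least $k + \lceil k/r \rceil - 2$ on which the code loses a dimension, that is, $\dim(\mathcal{C}|_T) \le k-1$. Indeed, any such $T$ makes the subcode $\{\underline{c} \in \mathcal{C} : c_j = 0 \text{ for all } j \in T\}$ nonzero, thereby producing a codeword of Hamming weight at most $n - |T| \le d^{*}$. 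So the whole task is to exhibit this $T$.

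I will build $T$ by iteratively enlarging it with local repair groups. Initialize $T_0 = \emptyset$. At iteration $i$, provided $\mathrm{rk}(G|_{T_i}) < k$, pick an information coordinate $e_i \in E$ whose column $\underline{g}_{e_i}$ is not in the span of $\{\underline{g}_j : j \in T_i\}$; such an $e_i$ must exist because the columns indexed by $E$ already span $\mathbb{F}_q^k$. Let $S_{e_i}$ be the IS-locality repair group of $e_i$ and set $T_{i+1} = T_i \cup S_{e_i}$. The crucial observation is that the dual codeword supported on $S_{e_i}$ imposes a nontrivial linear relation among $\{\underline{g}_j : j \in S_{e_i}\}$ in which $\underline{g}_{e_i}$ has a nonzero coefficient, and since $e_i \in S_{e_i} \setminus T_i$, at least one newly-added column is genuinely constrained by that relation. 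Hence per step the rank grows by at most $|S_{e_i} \setminus T_i| - 1$, producing the invariant $|T_i| - \mathrm{rk}(G|_{T_i}) \ge i$. Since each step raises the rank by at most $r$, the first index $\ell$ at which $\mathrm{rk}(G|_{T_\ell}) = k$ satisfies $\ell \ge \lceil k/r \rceil$.

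To finish, I will roll back to $T_{\ell-1}$, whose rank $k_{\ell-1}$ is strictly smaller than $k$, and then individually append $k-1-k_{\ell-1}$ further information coordinates (each chosen outside the current column span), which raises both size and rank by one per addition. Call the resulting set $T$; by construction $\mathrm{rk}(G|_T) = k-1$, and the invariant yields
\begin{equation*}
|T| \;\geq\; (k-1) + (\ell - 1) \;\geq\; k + \left\lceil \tfrac{k}{r} \right\rceil - 2,
\end{equation*}
which gives the required codeword of weight at most $d^{*}$.

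The delicate step is the per-iteration rank-increment estimate: one must verify that the parity supported on $S_{e_i}$ really cuts down the contribution of the \emph{new} columns, rather than being a relation entirely among old columns. This is resolved by the deliberate choice of $e_i$ as a new column together with the fact that the dual codeword has full support on $S_{e_i}$, so $\underline{g}_{e_i}$ appears with a nonzero coefficient. Once this single point is in place, everything else is bookkeeping, and the two scenarios --- stopping naturally at rank $k-1$ versus needing to pad with singleton additions --- are handled uniformly by the single invariant $|T| \ge \mathrm{rk}(G|_T) + (\text{number of local-group augmentations so far})$.
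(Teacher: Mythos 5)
Your argument is correct and is essentially the Gopalan--Huang--Simitci--Yekhanin proof that the paper cites: grow a set $T$ by absorbing local repair groups, use the full-support dual parity on $S_{e_i}$ to show the rank gain is one less than the number of fresh coordinates, and track the deficiency invariant $|T_i| - \mathrm{rk}(G|_{T_i}) \ge i$ until reaching rank $k-1$. Your ``roll back to $T_{\ell-1}$ and pad with singleton information coordinates'' is a cosmetic variant of the original termination step (GHSY instead add a partial final local group), and both correctly land on $|T| \ge k + \lceil k/r\rceil - 2$ with $\mathrm{rk}(G|_T)=k-1$, which yields a nonzero codeword of weight at most $n-|T|$ and hence the bound.
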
      
On specializing to the case when $r=k$, \eqref{Singleton} yields the Singleton bound and for this reason, the bound in \eqref{Singleton} is referred to as the Singleton bound for an LR code. Note that since \eqref{Singleton} is derived for an LR code with IS locality, it also applicable to an LR code with AS locality.

\subsection{Constructions of LR Codes} \label{sec:LRCodeConst}

In the following, we will describe two constructions of LR codes having field size of $O(n)$ and achieving the bound \eqref{Singleton}. The first construction called the Pyramid Code construction \cite{HuaChenLi}, allows us to construct for any given parameter set $\{n, k, r\}$, an LR code with IS locality achieving the bound in \eqref{Singleton}. The second construction which appeared in \cite{TamBar_LRC}, gives constructions for LR codes with AS locality, achieving the bound \eqref{Singleton} for any $n,k,r$ under the constraint that $(r+1) \ | \ n$.

\subsubsection{Pyramid Code Construction} \label{sec:pyramid} 

The pyramid code construction technique \cite{HuaChenLi}, allows us to construct for any given parameter set $\{n, k, r\}$ an LR code with IS locality achieving the bound in \eqref{Singleton}. We sketch the construction for the case $k=2r$. The general case $k=ar$, $a>2$ or when $r \nmid k $, follows along similar lines. The construction begins with the systematic generator matrix \gmds\ of an $[n_1,k]$ scalar MDS code $\mathcal{C}_{\text{\tiny MDS}}$ having block length $n_1 = n-1$.  It then reorganizes the sub-matrices of \gmds\ to create the generator matrix \gpyr\ of the pyramid code as shown in the following: 
\bean
\gmds\  = \left[ \begin{array}{cccc}
	I_r & & P_1 & Q_1 \\ 
	& I_r & \underbrace{P_2}_{ (r \times 1)} & \underbrace{Q_2}_{(r \times s)}
\end{array} \right]
& \Rightarrow & 
G_{\text{\tiny PYR}} = \left[\begin{array}{ccccc}
	I_r & P_1 & & & Q_1\\ & & I_r & P_2 & Q_2
\end{array}\right].
\eean
where $s = n_1 - 2r - 1$. 
It is not hard to show that the $[n,k]$ code $\mathcal{C}_{\text{\tiny PYR}}$ generated by $G_{\text{\tiny PYR}}$ is an LR code with IS locality and that $d_{\min}(\mathcal{C}_{\text{\tiny PYR}}) \ \ge \ d_{\min}(\mathcal{C}_{\text{\tiny  MDS}})$. 
It follows that
\bean
d_{\min}(\mathcal{C}_{\text{\tiny PYR}}) & \ge & d_{\min}(\mathcal{C}_{\text{\tiny  MDS}}) = n_1 - k+1 \ = \ (n-k+1)-1, 
\eean
and the code $\mathcal{C}_{\text{\tiny PYR}}$ thus achieves the Singleton bound in \eqref{Singleton}.

\subsubsection{The Tamo-Barg Construction } \label{sec:TB}

\begin{figure}[h!]
	\centering
	\includegraphics[width=2.in, height=1.5in]{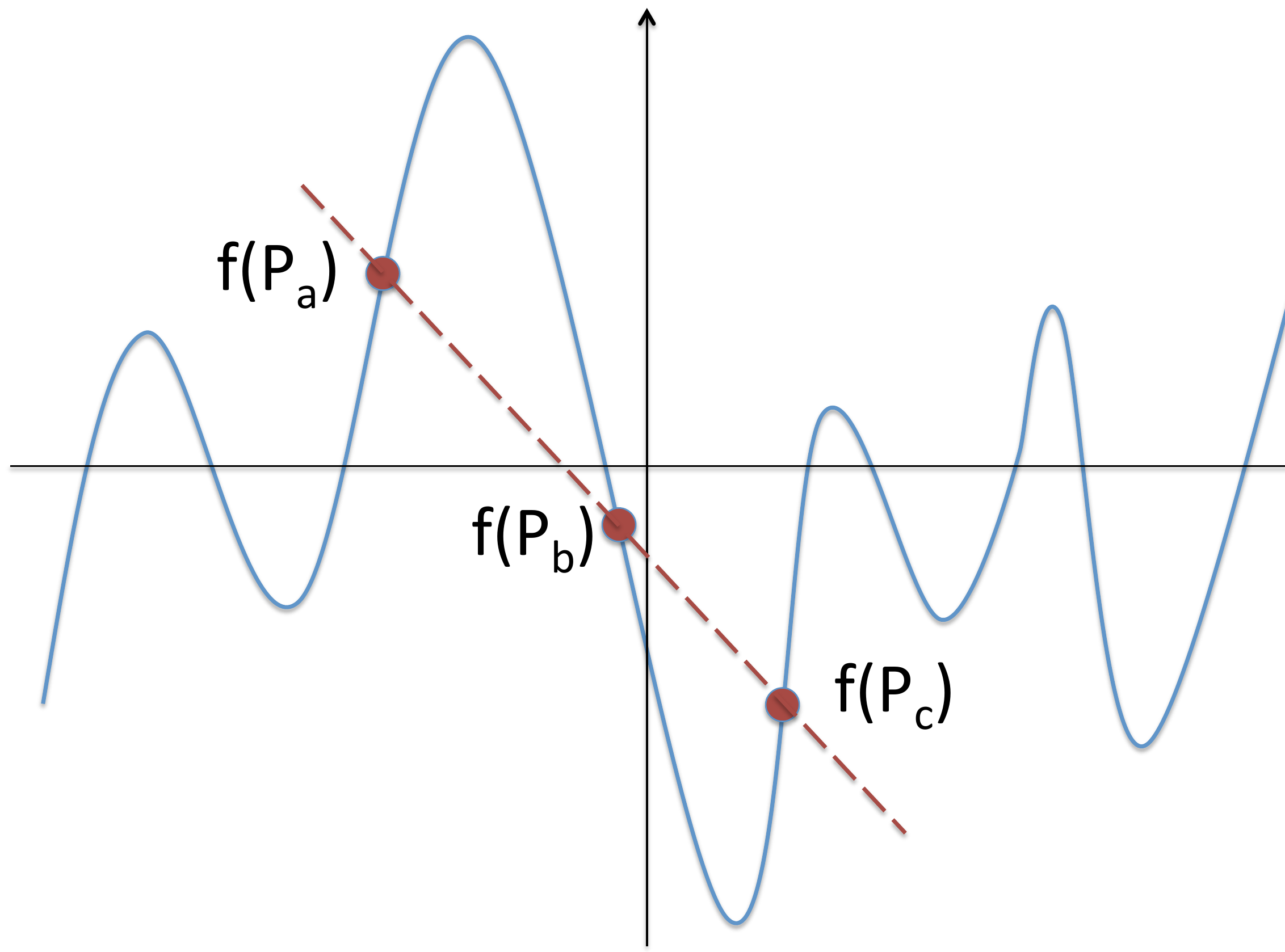}
	\caption{In the T-B construction, code symbols in a local code of length $(r+1)$ correspond to evaluations of a polynomial of degree $\leq (r-1)$. Here, $r=2$ implies that a local code corresponds to evaluation at $3$ points of a linear polynomial.\label{localcode_ch2}}  
\end{figure}

The construction below by Tamo and Barg~\cite{TamBar_LRC}, provides a construction for LR codes with AS locality achieving the bound \eqref{Singleton} for any $n,k,r$ with $(r+1)\ | \ n$. We will refer to this construction as the Tamo-Barg (T-B) construction. 
Let \fq\ be a finite field of size $q$, let $r \geq 2$, $n =m(r+1) \leq q$, with $ m \geq 2$ and $2 \leq k \leq (n-1)$.  Set $k=ar+b, 0 \leq b \leq (r-1)$.  Let $A=\{\theta_1,\theta_2,\cdots,\theta_n\} \subseteq \mathbb{F}_q$  and let $A_i \subset A, 1 \leq i \leq m$, $|A_i|=(r+1), A_i \cap A_j = \phi, i \neq j$, such that $A = \cup_{i=1}^m A_i$ represent a partitioning of $A$.
Let $g(x)$ be a `good' polynomial, by which is meant, a polynomial over \fq\ that is constant on each $A_i$ i.e., $g(x)= c_i, \forall x \in A_i$ for some $c_i \in \fq$ and degree of $g(x)$ is $(r+1)$.  
Let 
\bean
f(x) & = & \sum_{j=0}^{a-1} \sum_{i=0}^{r-1} a_{ij}[g(x)]^jx^i \ + \   \sum_{j=a} \sum_{i=0}^{b-1} a_{ij}[g(x)]^jx^i, 
\eean
where the $a_{ij} \in \fq$ are the message symbols and where the second term is vacuous for $b=0$, i.e., when $r \mid k$.  
Consider the code ${\cal C}$ of block length $n$ and dimension $k$ where the codeword \underline{c} of length $n$ corresponding to a given $k$ message symbols $\{a_{ij}\}$ is obtained by evaluating $f(x)$ at each of the $n$ elements in $A$ after substituting the given values of $k$ message symbols in the expression for $f(x)$. It can be shown that \calc is an LR code with AS locality with locality parameter $r$ and achieves the $d_{\min}$ bound in \eqref{Singleton}. The $i$-th local code corresponds to evaluations of $f(x)$ at elements of $A_i$ (also see Fig \ref{localcode_ch2}). 

An example of how good polynomials may be constructed is given below, corresponding 
to the annihilator polynomial of a multiplicative subgroup $G$ of $\mathbb{F}_q^*$.    
\begin{example}
	Let $H < G \le \fqstar$ be a chain of cyclic subgroups, where $|H|=(r+1), |G|=n$ so that $(r+1)$  $|$  $n$  $|$ $(q-1)$.  Let $n=(r+1)t$.  Let $\{A_i = \gamma_i H \mid i=1,2,\cdots, t\}$ be the $t$ multiplicative cosets of $H$ in $G$, with $\gamma_1$ being the multiplicative identity so that $A_1=H$.  It follows that 
	\bean
	\prod_{\beta \in A_i} (x-\beta) \ = \ x^{r+1}-\gamma_i^{r+1},
	\eean
	so that $x^{r+1}$ is constant on all the cosets of $H$ in $G$ and may be selected as the good polynomial $g(x)$ i.e., $g(x)=x^{r+1}$ is one possible choice of good polynomial based on multiplicative group $H$. 
\end{example}
%\begin{example}
%Let $H < G \le \fq$ be a chain of additive subgroups of \fq, $q=2^m$, where $|H|=(r+1), |G|=n$ and $(r+1)$ $|$ $n$ $|$ $q$.  Let $n=(r+1)t$.  Let $\{A_i = H + \gamma_i \mid i=1,2,\cdots, t\}$ be the $t$ additive cosets of $H$ in $G$, with $\gamma_1=0$ so that $A_1=H$.  Set $p(x) = \prod_{\beta \in A_1} (x-\beta)$.
%Then $p(x)$ is a linearized polynomial over $\mathbb{F}_2$.  It follows that 
%\bean
%\prod_{\beta \in A_i} (x-\beta) \ = \ p(x) + p(\gamma_i).
%\eean
%Hence $p(\nu)=p(\gamma_i)$ for $\nu \in A_i$, so that $p(x)$ is constant on all the cosets of $H$ in $G$ and may be selected as the good polynomial $g(x)$ in this case. 
%\end{example}
Further examples may be found in \cite{TamBar_LRC,Sihem_Mesnager,KolBarTamYad}. For constructions meeting the Singleton bound with field size of $O(n)$ and greater flexibility in selecting the value of $r$, please see \cite{LinLimCha}.

\section{Alphabet-Size Dependent Bounds} \label{Alphabet_dep_LR_codes}

This section contains the contributions of the thesis on the topic of LR codes for the case of single erasures.  These include the best-known alphabet-size-dependent bounds on both minimum distance $d_{\min}$ and dimension $k$ for LR codes for $q>2$. For $q=2$, our bound on dimension is the tightest known bound for $d_{\min} \geq 9$.

%In the following, we first give two bounds on dimension of a code with AS locality with locality parameter $r$ for given $(n,r,d)$ which takes into account the size $q$ of the underlying code symbol alphabet where $d$ is the minimum distance of the code. The first bound on dimension is for any $q$ while the second bound on dimension is for $q=2$. We then present a bound on minimum distance of a code with IS locality with locality parameter $r$ for a given $n,k,r$ which also takes into account the size $q$ of the underlying code symbol alphabet. 
The bound in equation \eqref{Singleton} as well as the bounds for non-linear and vector codes derived in \cite{PapDim,SilRawKoyVis} hold regardless of the size $q$ of the underlying finite field.  The theorem below which appeared in \cite{CadMaz} takes the size $q$ of the code symbol alphabet into account and provides a tighter upper bound on the dimension of an LR code for a given $(n,r,d)$ that is valid even for nonlinear codes where $d$ is the minimum distance of the code.  The `dimension' of a nonlinear code \calc\ over an alphabet $\mathbb{Q}$ of size $q=|\mathbb{Q}|$ is defined to be the quantity $k=\log_q(| \calc |)$. 

\begin{thm} \cite{CadMaz} \label{thm:CadMaz} 
	Let \calc\  be an $(n,k,d)$ LR code with AS locality and locality parameter $r$ over an alphabet $\mathbb{Q}$ of size $q=|\mathbb{Q}|$.  Then the dimension $k$ of the code must satisfy: 
	\bea
	k \  \leq \  \min_{t \in \mathbb{Z}_{+}} [ tr+ k_{\text{opt}}^{(q)}\left(n-t(r+1),d\right) ],  \label{CadMaz}
	\eea
	where $k_{\text{opt}}^{(q)}\left(n-t(r+1),d\right)$ denotes the largest possible dimension of a code (no locality necessary) over $\mathbb{Q}$ having block length $(n-t(r+1))$ and minimum distance $d$.
\end{thm}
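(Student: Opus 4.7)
The plan is to adapt a greedy-absorption argument, splitting $k = \log_q|\mathcal{C}|$ into a locality-savings piece on a carefully chosen subset $T \subseteq [n]$ and a fibre piece on its complement. Fix $t \in \mathbb{Z}_+$ and build $T$ iteratively: initialize $T_0 = \emptyset$ and, for $i = 1, \ldots, t$, pick any $x_i \in [n] \setminus T_{i-1}$, invoke all-symbol locality to obtain a repair group $S_i \ni x_i$ of size at most $r+1$ such that $c_{x_i}$ is a function of the other symbols in $S_i$, and set $T_i := T_{i-1} \cup S_i$. By construction $|T_t| \leq t(r+1)$.

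The heart of the argument is a two-part counting bound. First, I would show by induction on $i$ that the projection $\pi_{T_i}(\mathcal{C})$ onto the coordinates of $T_i$ has size at most $q^{|T_i| - i}$: passing from $i-1$ to $i$, the newly added coordinates $S_i \setminus T_{i-1}$ contribute at most $|S_i \setminus T_{i-1}|$ new degrees of freedom, but since $c_{x_i}$ is determined by the other entries of $S_i$ (both the old ones in $T_{i-1}$ and the new ones in $S_i \setminus (T_{i-1} \cup \{x_i\})$), exactly one of those degrees is forced, so the inductive bound propagates. Second, for any realisable $\underline{a} \in \pi_{T_t}(\mathcal{C})$, the fibre $F_{\underline{a}} := \{\underline{c}|_{T_t^c} : \underline{c} \in \mathcal{C},\ \underline{c}|_{T_t} = \underline{a}\}$ is a block code over $\mathbb{Q}$ of length $n - |T_t|$ with minimum distance at least $d$: two distinct fibre elements come from codewords of $\mathcal{C}$ agreeing on $T_t$, so their disagreement of at least $d$ symbols must sit inside $T_t^c$. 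Therefore $|F_{\underline{a}}| \leq q^{k_{\text{opt}}^{(q)}(n - |T_t|,\,d)}$. Multiplying the projection count and the worst-case fibre count yields
\[
k \;\leq\; (|T_t| - t) \;+\; k_{\text{opt}}^{(q)}(n - |T_t|,\,d).
\]

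The last step is to convert this $|T_t|$-dependent bound into the clean form in the theorem. Writing $|T_t| = t(r+1) - s$ with $s \geq 0$, I would invoke the elementary monotonicity $k_{\text{opt}}^{(q)}(n' + s,\,d) \leq k_{\text{opt}}^{(q)}(n',\,d) + s$, obtained by shortening any optimal length-$(n'+s)$ code at $s$ coordinates to produce a length-$n'$ code of dimension at least $\dim - s$ and the same minimum distance. Substituting absorbs the slack and gives $k \leq tr + k_{\text{opt}}^{(q)}(n - t(r+1),\,d)$; taking the minimum over $t$ finishes the proof. The main delicate point, rather than a true obstacle, is the inductive projection count in the nonlinear setting, where one must verify that each $S_i$ contributes exactly one independent constraint on $\pi_{T_i}(\mathcal{C})$ regardless of how much $S_i$ overlaps with $T_{i-1}$; in the linear case this reduces to the observation that the dual codewords supported on $S_1, \ldots, S_t$ are linearly independent because each $S_i$ contains a coordinate $x_i$ absent from all earlier $S_j$. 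Edge cases such as $t(r+1) > n$ or the greedy process exhausting $[n]$ before step $t$ are dealt with by the standard convention $k_{\text{opt}}^{(q)}(n',\,d) = 0$ for $n' < d$, under which the bound degenerates gracefully.
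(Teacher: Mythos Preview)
Your proposal is correct and follows essentially the same approach as the paper's sketch: greedily build an index set from $t$ repair groups, bound its rank (equivalently, the size of the projection) using the locality constraints, and note that the shortened/fibre code on the complement retains minimum distance $d$. The only cosmetic differences are that your projection--fibre counting handles the nonlinear case directly (which the paper merely flags as a ``nontrivial extension''), and that you absorb the slack $|T_t|\le t(r+1)$ at the end via the monotonicity $k_{\text{opt}}^{(q)}(n'+s,d)\le k_{\text{opt}}^{(q)}(n',d)+s$, whereas the paper pads the index set to size exactly $\min(t(r+1),n)$ upfront.
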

\bpf (Sketch of proof) 
The bound holds for linear as well as nonlinear codes. In the linear case, with $\mathbb{Q}=\mathbb{F}_q$, the derivation proceeds as follows.  Let $G$ be a $(k \times n)$ generator matrix of the locally recoverable code ${\cal C}$.  Then it can be shown that for any integer $t > 0$,  there exists an index set ${\cal I}$ such that $|{\cal I}|=\min(t(r+1),n)$ and $\text{rank} \left( G \mid_{\cal I} \right) \ = \ s \leq tr$ where $ G \mid_{\cal I}$ refers to the set of columns of $G$ indexed by ${\cal I}$. This implies that \calc\ has a generator matrix of the form (after permutation of columns): 
\bean
G & = & 
\left[ \begin{array}{cc} 
	\underbrace{A}_{(s \times |{\cal I}|)} & B \\ 
	\left[ 0 \right]  & D
\end{array} \right] . 
\eean
In turn, this implies that the rowspace of $D$ defines an $[n-t(r+1),k-s \geq k-tr,\geq d]$ code over $\mathbb{F}_q$, if $k-tr > 0$.  
It follows that $k \leq tr + k_{\text{opt}}^{(q)}(n-t(r+1),d)$ and the result follows.  Note that the row space of $D$ corresponds to a shortening $\calc^{S}$ of \calc\ with respect to the coordinates ${\cal I} \subseteq [n]$.  The proof in the general case is a (nontrivial) extension to the nonlinear setting.  \epf 

We next look at a bound on dimension of binary LR codes for a given $(n,r,d)$ that appeared in \cite{WanZhaLin}.  We remark that there is an additional bound on dimension given in \cite{WanZhaLin} for the case when the local codes are disjoint i.e., the case when the support sets $\{S_i\}$, are pairwise disjoint.  However, here we only provide the bound on dimension given in \cite{WanZhaLin}, which applies in full generality, and without the assumption of disjoint local codes. 
\begin{thm} \cite{WanZhaLin}
	For any $[n,k,d]$ linear code  \calc\ that is an LR code with AS locality with locality parameter $r$ over $\mathbb{F}_2$ with $d \geq 5$ and $2 \leq r \leq \frac{n}{2}-2$, we must have:
	\bea
	k \leq \frac{rn}{r+1} - \min\{\log_2(1+\frac{rn}{2}),\frac{rn}{(r+1)(r+2)}\}. \label{Hamming}
	\eea	
\end{thm}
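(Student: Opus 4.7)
My plan is to prove two separate upper bounds on $k$, either of which implies the stated inequality. Writing $A=\log_2(1+\tfrac{rn}{2})$ and $B=\tfrac{rn}{(r+1)(r+2)}$, I would establish $k\le \tfrac{rn}{r+1}-A$ (via a sphere-packing argument in a super-code) and $k\le \tfrac{rn}{r+1}-B$ (via a rank argument inside local cells); the stated bound, which subtracts $\min(A,B)$ rather than $\max(A,B)$, then follows from either.

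For the first sub-bound I would begin by extracting from $\calc^{\perp}$, via a greedy selection, a family of $m=\lceil n/(r+1)\rceil$ linearly independent dual codewords $\uh_1,\dots,\uh_m$ of support size at most $r+1$ whose supports $T_1,\dots,T_m$ partition $[n]$; the hypothesis $r\le n/2-2$ provides exactly the slack needed for this disjoint refinement of the given all-symbol locality cover. Setting $\calc^{L}:=\langle \uh_1,\dots,\uh_m\rangle\subseteq \calc^{\perp}$ and $\tilde{\calc}:=(\calc^{L})^{\perp}\supseteq\calc$, one has $\dim\tilde{\calc}\le n-m\le \tfrac{rn}{r+1}$. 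Let $\mathcal{E}$ denote the set of vectors in $\tilde{\calc}$ of Hamming weight at most $2$. No weight-$1$ vector $e_j$ lies in $\tilde{\calc}$ since the unique $\uh_i$ with $j\in T_i$ evaluates to $1$ on it, while $e_j+e_{j'}\in\tilde{\calc}$ iff $\{j,j'\}\subseteq T_i$ for some $i$; so $|\mathcal{E}|=1+\sum_i\binom{|T_i|}{2}\ge 1+\tfrac{rn}{2}$. The hypothesis $d\ge 5$ now enables a sphere-packing injection: the map $(\underline{c},\underline{e})\mapsto \underline{c}+\underline{e}$ from $\calc\times \mathcal{E}$ into $\tilde{\calc}$ is injective, since any coincidence $\underline{c}_1+\underline{e}_1=\underline{c}_2+\underline{e}_2$ would force $\underline{c}_1-\underline{c}_2$ to be a codeword of Hamming weight at most $4<d$. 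Hence $|\calc|\cdot|\mathcal{E}|\le |\tilde{\calc}|$, i.e.\ $2^k\bigl(1+\tfrac{rn}{2}\bigr)\le 2^{rn/(r+1)}$, which rearranges to the first sub-bound.

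For the second sub-bound I would split a parity check matrix of $\calc$ as $H=\begin{bmatrix}H_L\\ H_G\end{bmatrix}$, with $H_L$ collecting the $m$ local-parity rows from the previous step and $H_G$ the remaining $g=n-k-m$ ``global'' rows. Since $d\ge 5$, every four columns of $H$ are linearly independent. On four columns inside a single cell $T_i$, the row space of $H_L$ contributes rank only $1$ (the unique all-ones row supported on $T_i$), so $H_G|_{T_i}$ must have rank at least $3$. Interpreting $H_G|_{T_i}$ as the parity check of a binary length-$(r+1)$ code and combining this per-cell rank lower bound with a Singleton-type inequality inside each cell, one then double-counts incidences between the $g$ global rows and the $m$ cells to obtain $g\ge \tfrac{rn}{(r+1)(r+2)}$. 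Adding $m=\tfrac{n}{r+1}$ gives $n-k=m+g\ge \tfrac{n}{r+1}+\tfrac{rn}{(r+1)(r+2)}=\tfrac{2n}{r+2}$, which rearranges to $k\le \tfrac{rn}{r+2}$.

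The main obstacle lies in the second sub-bound: deducing the per-cell rank-$3$ condition on $H_G|_{T_i}$ is immediate from $d\ge 5$, but converting it into the precise numerical bound $g\ge rn/((r+1)(r+2))$ requires a careful double count, as the naive incidence inequality only yields $g\ge 3m/(r+1)$, which is weaker than what is needed. The first sub-bound is conceptually cleaner; its only subtlety is producing a genuine partition of $[n]$ into cells of size exactly $r+1$ (rather than an overlapping cover), and this is precisely where the numerical hypothesis $r\le n/2-2$ enters the argument.
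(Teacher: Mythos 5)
Your strategy of establishing two independent sub-bounds, either of which implies the stated result (since subtracting $\min(A,B)$ is weaker than subtracting either $A$ or $B$), is logically sound, but as written both halves are incomplete and the first contains a direction error in the key packing estimate. You assert $\sum_i \binom{|T_i|}{2}\ge\tfrac{rn}{2}$ for a disjoint family of cells of size at most $r+1$ partitioning $[n]$, but this inequality is reversed. Writing $t_i=|T_i|$ with $\sum_i t_i=n$ and $t_i\le r+1$, one has $\sum_i\binom{t_i}{2}=\tfrac12\bigl(\sum_i t_i^2-n\bigr)$ and, since each $t_i^2\le(r+1)t_i$, $\sum_i t_i^2\le(r+1)n$; hence $\sum_i\binom{t_i}{2}\le\tfrac{rn}{2}$, with equality only when every cell has size exactly $r+1$, i.e.\ when $(r+1)\mid n$. (For $r=2$, $n=10$, cells of sizes $3,3,3,1$ give $9<10=\tfrac{rn}{2}$.) Because the packing inequality $|\mathcal{C}|\cdot|\mathcal{E}|\le|\tilde{\mathcal{C}}|$ needs a \emph{lower} bound on $|\mathcal{E}|$ to yield an \emph{upper} bound on $k$, the chain $2^k\bigl(1+\tfrac{rn}{2}\bigr)\le 2^{rn/(r+1)}$ does not follow from what you have written.

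The step preceding this estimate is also not substantiated. The greedy extraction of linearly independent low-weight dual codewords (the same procedure this paper uses in its MSW-based proof of Theorem~\ref{thm:dmin1}) yields only a \emph{cover}: at each stage a new support is chosen to contain a fresh uncovered coordinate, but it may still overlap the earlier supports. Nothing in the LR hypothesis forces a pairwise disjoint family of dual-codeword supports covering $[n]$, and the numerical condition $r\le\tfrac{n}{2}-2$ is a mild upper bound on $r$ that does not visibly control the overlap structure of the local parities. Without disjointness the formula $|\mathcal{E}|=1+\sum_i\binom{|T_i|}{2}$ also fails, since pairs lying in an overlap of two cells are counted more than once. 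You already flag the second sub-bound as incomplete — the per-cell rank-$3$ observation from $d\ge5$ is correct, but the double count that would turn it into $g\ge rn/((r+1)(r+2))$ is not carried out — so both halves are unfinished and the proposal has a genuine gap. (For the record, the paper does not include a proof of this theorem at all: it is cited directly from the literature with only the remark that it follows from a Hamming-type argument, so there is no internal argument to compare your approach against.)
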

The above bound is obtained by applying a Hamming-bound-type argument to an LR code with AS locality.
In \cite{HuaYaaUchSie}, the authors provide  a bound on the minimum distance of $[n,k]$ LR codes with IS locality\footnote{The bound has an extension to codes with availability as well, see Chapter~\ref{ch:avail} for the definition of availability.},  (the bound thus applies to LR codes with AS  locality as well) that depends on the size $q$ of the underlying finite field $\mathbb{F}_q$:
\begin{thm} 
	For any $[n,k,d]$ linear code  \calc\ that is an LR code with IS locality with locality parameter $r$ over $\mathbb{F}_q$:
	\bea
	d \leq \min_{1 \leq x \leq \lceil \frac{k}{r} \rceil-1,} d^{(q)}(n-(r+1)x,k-rx), \label{dminAlp}
	\eea
	where $d^{(q)}(n,k)$ is the maximum possible minimum distance of a classical (i.e., no locality necessary) $[n,k]$ block code over $\mathbb{F}_q$.	
\end{thm}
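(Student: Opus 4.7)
The plan is to establish the bound by constructing, for each admissible $x$, a subcode of $\mathcal{C}$ of block length $n-(r+1)x$, dimension at least $k-rx$, and minimum distance at least $d$; the definition of $d^{(q)}$ together with its monotonicity (non-decreasing in the blocklength argument, non-increasing in the dimension argument) then forces $d \leq d^{(q)}(n-(r+1)x, k-rx)$, and taking the minimum over $x$ in the admissible range completes the proof. The natural way to obtain such a subcode is via shortening: for any $T \subseteq [n]$, the code obtained by restricting $\mathcal{C}$ to those codewords that are zero on $T$ and then deleting the $T$-coordinates has block length $n-|T|$, dimension $k - \text{rank}(G|_T)$, and minimum distance at least $d$. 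Hence it suffices to exhibit $T \subseteq [n]$ with $|T|=(r+1)x$ and $\text{rank}(G|_T) \leq rx$.

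To construct such a $T$, I would proceed greedily, mirroring the argument used in the proof of Theorem~\ref{thm:CadMaz} and of the classical minimum distance bound \eqref{Singleton}. Starting from $T_0=\emptyset$, at step $i \in \{1,\dots,x\}$ the invariant $\text{rank}(G|_{T_{i-1}}) \leq (i-1)r < k$ (which is valid because $i-1 \leq x-1 \leq \lceil k/r \rceil - 2$) guarantees that not every column of $G$ indexed by the information set $E$ lies in $\text{span}(G|_{T_{i-1}})$; so one can pick $e \in E$ with $\underline{g}_{e} \notin \text{span}(G|_{T_{i-1}})$. The IS locality then supplies a local support $S_e \ni e$ of size at most $r+1$ together with a dual codeword supported on $S_e$, which forces $\text{rank}(G|_{S_e}) \leq r$. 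Setting $T_i = T_{i-1} \cup S_e$ preserves the invariants $|T_i| \leq i(r+1)$ and $\text{rank}(G|_{T_i}) \leq ir$.

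The key technical obstacle is ensuring that $|T_x|=(r+1)x$ \emph{exactly}, since whenever some of the selected local supports overlap the greedy construction may terminate with a strictly smaller $T_x$; shortening on a smaller $T$ leaves a longer residual blocklength, which since $d^{(q)}$ is non-decreasing in the blocklength argument produces a bound in the wrong direction. I would handle this by exploiting the freedom left in the greedy selection: at step $i$, the number of information coordinates outside $\text{span}(G|_{T_{i-1}})$ is at least $k-(i-1)r$, which one can check stays at least $r+1$ throughout the admissible range, and so one can attempt to choose $e$ whose local support $S_e$ is disjoint from $T_{i-1}$. When unavoidable overlap does force $|T_x| < (r+1)x$, I would pad $T_x$ up to size $(r+1)x$ by coordinates that lie in the current column span of $G|_{T_x}$ (leaving the rank untouched) or by coordinates coming from an additional overlapping local support whose supplied dual parity relation compensates for each newly added column, so that $\text{rank}(G|_T) \leq rx$ continues to hold. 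Once such a $T$ is in place, the shortened code has parameters $[n-(r+1)x, \geq k-rx, \geq d]$, and comparison with $d^{(q)}$ finishes the argument for each fixed $x$; taking the minimum over $1 \leq x \leq \lceil k/r \rceil - 1$ yields the theorem.
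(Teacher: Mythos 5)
The paper states this bound as a result of~\cite{HuaYaaUchSie} without supplying a proof of its own, so there is no in-paper argument to compare against; your proposal follows the standard greedy-shortening route, which is the right approach. The difficulty you flag is genuine: the union $T_x$ of greedily chosen local supports may come out strictly smaller than $(r+1)x$, and since $d^{(q)}$ is non-decreasing in its blocklength argument, a smaller $T$ yields a weaker, not stronger, bound. Your two proposed fixes, however, do not close this gap. Having at least $r+1$ information coordinates whose columns lie outside $\mathrm{span}(G|_{T_{i-1}})$ gives you freedom to choose $e_i$ but does not guarantee that any of those coordinates has a local support disjoint from $T_{i-1}$: every candidate's local support might intersect the already-accumulated set. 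And padding with coordinates that ``lie in the current column span of $G|_{T_x}$'' presumes such coordinates exist outside $T_x$, which also need not be the case.

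The clean way to finish is to quantify the slack the greedy process already hands you. At step $i$, writing $s_i = |T_i|-|T_{i-1}|$ and $t_i=\mathrm{rank}(G|_{T_i})-\mathrm{rank}(G|_{T_{i-1}})$: the chosen $e_i$ has $\underline{g}_{e_i}\notin\mathrm{span}(G|_{T_{i-1}})$, hence $e_i\notin T_{i-1}$ and $t_i\geq 1$; the dual codeword supported exactly on $S_{e_i}$ is nonzero at $e_i$, so it produces a nontrivial linear combination of the freshly added columns that falls into $\mathrm{span}(G|_{T_{i-1}})$, giving $t_i \leq s_i-1\leq |S_{e_i}|-1\leq r$. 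Summing over $i=1,\dots,x$ yields not only $\mathrm{rank}(G|_{T_x})\leq rx$ but also $|T_x|-\mathrm{rank}(G|_{T_x})\geq x$. That surplus is exactly what makes padding by \emph{arbitrary} further indices safe: adding $\delta=(r+1)x-|T_x|$ more coordinates (available within $[n]$ for admissible $x$) raises the rank by at most $\delta$, and then $\mathrm{rank}(G|_T)\leq\mathrm{rank}(G|_{T_x})+\delta\leq rx$ by the slack inequality. With $|T|=(r+1)x$ and $\mathrm{rank}(G|_T)\leq rx$ in hand, shortening at $T$ gives a $[\,n-(r+1)x,\ \geq k-rx,\ \geq d\,]$ code and the comparison with $d^{(q)}$ concludes the argument exactly as you describe. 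So the structure of your proof is right; the missing ingredient is the slack count $|T_x|-\mathrm{rank}(G|_{T_x})\geq x$, which replaces the disjoint-support and in-span-padding heuristics.
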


We next introduce the notion of Generalized Hamming Weights (also known as Minimum Support Weights) which will be used to derive a new bound on the minimum distance and dimension of an LR code with AS locality, that takes into account the size $q$ of the underlying finite field $\fq$.  The bound makes use of the technique of code shortening and the GHWs of a code provide valuable information about shortened codes. 
%
%
%play a role in based on shortening of an LR code at a set of code symbols of size corresponding to MSWS.
%
%
%define a sequence of numbers called Minimum Support Weight Sequence (MSWS) which act as upper bounds on GHW of a code which has a set of linearly independent codewords having Hamming weight $\leq r+1$ whose support sets cover $[n]$
%
%and of In the following we will start giving definitions which we will use to give a new alphabet dependent bound on minimum distance and dimension. First we will define Generalized Hamming Weight (GHW) and define a sequence of numbers called Minimum Support Weight Sequence (MSWS) which act as upper bounds on GHW of a code which has a set of linearly independent codewords having Hamming weight $\leq r+1$ whose support sets cover $[n]$. We then use this sequence to give our 
\subsubsection{GHW and the Minimum Support Weight Sequence} 
We will first define the Generalized Hamming Weights of a code, introduced in \cite{Wei}, and also known as  Minimum Support Weights (MSW) (see \cite{HelKloLevYtr}) of a code. In this thesis we will use the term Minimum Support Weight (MSW).
\begin{defn}
	The $i$th Minimum Support Weight (MSW) $d_i$ (equivalently, the $i$th Generalized Hamming Weight) of an $[n,k]$ code $\calc$ is the cardinality of the minimum support of an $i$-dimensional subcode of C, i.e.,
	\bea
	d_i(\calc) = d_i = \min_{\{\mathcal{D} : \ \mathcal{D} < \calc, \ \dim(\mathcal{D})=i \} }|\text{supp}(\mathcal{D})|, \label{MSW}
	\eea
	where the notation $\mathcal{D} < \calc$ denotes a subcode $\mathcal{D}$ of $\calc$ and where $\text{supp}(\mathcal{D})=\cup_{\underline{c} \in \mathcal{D}} \text{supp}(\underline{c})$ (called the support of the code $\mathcal{D}$).
\end{defn}
Although the MSW definition applies to any code, the interest in this thesis, is on its application to a restricted class of codes that we introduce here. 
\begin{defn} [Canonical Dual Code] \label{def:Can_dualcode_ch2}
	By a canonical dual code, we will mean an $[n,m]$ linear code \calc satisfying the following: \calc contains a set $\{\underline{c}_i\}_{i=1}^b$ of $b \leq m$ linearly independent codewords of Hamming weight $\leq r+1$, such that the sets $S_i = supp(\underline{c}_i)$, $1 \leq i \leq b$ cover $[n]$, i.e., 
	\bean
	[n] & \subseteq & \cup_{i=1}^b S_i .
	\eean
\end{defn}
As it turns out, the dual code of an LR code with AS locality (and as we shall see in subsequent chapters, dual of codes with sequential recovery and dual of codes with availability) is an example of a canonical dual code and this is the reason for our interest in the MSWs of this class of codes. 

\begin{thm} \cite{PraLalKum} \label{thm:msw_sequence}
	Let \calc\ be a canonical dual code with parameters $(n,m,r,b)$ and support sets $\{S_i \mid 1 \leq i \leq b\}$ as defined in Definition \ref{def:Can_dualcode_ch2}.  Let $d_i$ denote the $i$th $1 \leq i \leq m$ MSW of \calc.  Let $f_i$ be the minimum possible value of cardinality of the union of any $i$ distinct support sets $\{S_{j} :j \in T \}$, $|T| = i$ , $1 \leq i \leq b$ i.e., 
	\bean
	f_i = \min_{\{T: \ T \subseteq [b], |T|=i\}} |\cup_{j \in T} S_j|.
	\eean
	Let $b_1 \leq b$. Let the integers $\{e_i\}_{i=1}^{b_1}$ be recursively defined as follows: 
	\bea
	e_{b_1} & = & n, \\
	e_{i-1} & = &  \min \{e_i, e_i-\left \lceil \frac{2e_i}{i} \right \rceil + r+1\}, \ \  2 \leq i \leq b_1.  \label{Sequence}
	\eea
	Then 
	\bean
	d_i \ \leq \ f_i & \leq & e_i, \text{ for all } \ \ 1 \leq i \leq b_1. 
	\eean
\end{thm}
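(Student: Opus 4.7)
The plan is to establish the two inequalities $d_i \leq f_i$ and $f_i \leq e_i$ separately. The first is essentially direct from the MSW definition: given any subset $T \subseteq [b]$ with $|T| = i$, the codewords $\{\underline{c}_j : j \in T\}$ are linearly independent (being a sub-family of the independent family $\{\underline{c}_j\}_{j=1}^b$ guaranteed by Definition \ref{def:Can_dualcode_ch2}), so they span an $i$-dimensional subcode $\mathcal{D} < \mathcal{C}$ whose support equals $\cup_{j \in T} S_j$. Hence $d_i \leq |\cup_{j \in T} S_j|$, and minimizing over all such $T$ yields $d_i \leq f_i$.

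For the harder inequality $f_i \leq e_i$, I would proceed by reverse induction on $i$ from $b_1$ down to $1$. The base case $f_{b_1} \leq n = e_{b_1}$ is immediate since every $S_j \subseteq [n]$. For the inductive step, pick a witness $T \subseteq [b]$ with $|T| = i$ such that $U := \cup_{j \in T} S_j$ has cardinality $f_i$. For each $u \in U$, let $n_u = |\{j \in T : u \in S_j\}|$ and set $N_1 = |\{u \in U : n_u = 1\}|$. Double counting yields $\sum_{u \in U} n_u = \sum_{j \in T} |S_j| \leq i(r+1)$, and since every $u \in U$ with $n_u \neq 1$ contributes at least $2$, we also have $\sum_u n_u \geq 2|U| - N_1 = 2f_i - N_1$. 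Combining these gives $N_1 \geq 2f_i - i(r+1)$. A second averaging via $\sum_{j \in T} |\{u \in S_j : n_u = 1\}| = N_1$ produces some $j^* \in T$ with $|\{u \in S_{j^*} : n_u = 1\}| \geq \lceil N_1/i \rceil$. Removing $S_{j^*}$ from the collection shrinks the union by exactly that many elements, so $f_{i-1} \leq f_i - \max\!\bigl(0,\ \lceil 2f_i/i \rceil - (r+1)\bigr) =: h(f_i)$, where the final equality uses the identity $\lceil (2f_i - i(r+1))/i \rceil = \lceil 2f_i/i \rceil - (r+1)$ valid since $r+1 \in \mathbb{Z}$.

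To close the induction, I would show that $h(x) := \min\bigl(x,\ x - \lceil 2x/i \rceil + (r+1)\bigr)$ is non-decreasing in $x$ for every fixed $i \geq 2$ --- a routine case check using $\lceil 2(x+1)/i \rceil - \lceil 2x/i \rceil \in \{0,1\}$ --- and then conclude $f_{i-1} \leq h(f_i) \leq h(e_i) = e_{i-1}$, which is exactly the recursive definition of $e_{i-1}$. The main obstacle I anticipate is the monotonicity check for $h$ across the crossover point where the minimum switches between its two branches (i.e., where $\lceil 2x/i \rceil$ first exceeds $r+1$), together with cleanly matching the two regimes of the averaging argument --- the trivial case $2f_i \leq i(r+1)$, in which the double-count produces no reduction and we fall back on $f_{i-1} \leq f_i \leq e_i$, versus the productive case $2f_i > i(r+1)$ --- to the two branches of the $\min$ defining $e_{i-1}$.
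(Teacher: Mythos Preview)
Your proposal is correct. The paper itself does not reproduce a proof of this theorem but cites \cite{PraLalKum}; your argument --- the trivial inequality $d_i \le f_i$ via linear independence of the $\underline{c}_j$, followed by the double-counting/averaging step to show $f_{i-1} \le h(f_i)$ and the monotonicity of $h$ to close the reverse induction --- is exactly the standard approach and matches what the cited reference does (indeed, the paper's remark that the proof for $f_i \le e_i$ in \cite{PraLalKum} goes through verbatim when $|S_j|\le r+1$ rather than $=r+1$ is precisely the observation that your inequality $\sum_j |S_j| \le i(r+1)$ only needs the upper bound).
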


Note that in \cite{PraLalKum}, the Theorem \ref{thm:msw_sequence} is proved for the case $b_1=b$, $|S_j|=r+1$, $\forall j \in [b]$ but we observe from the proof of Theorem \ref{thm:msw_sequence} (proof for $f_i \leq e_i$) given in \cite{PraLalKum} that the Theorem \ref{thm:msw_sequence} is also true for any $b_1 \leq b$, $|S_j| \leq r+1$, $\forall j \in [b]$.
We will refer to the sequence $\{e_i\}_{i=1}^{b_1}$ appearing in the theorem above as the {\em Minimum Support Weight (MSW) Sequence} associated to parameter set $(n,b_1,r)$.  In the subsection below, we derive new alphabet-size dependent bounds on minimum distance and dimension, that are expressed in terms of the MSW sequence.

\subsection{New Alphabet-Size Dependent Bound Based on MSW} \label{LR_Codes_New_Bound}

In this subsection, we present field-size dependent bounds on the minimum distance and dimension of an LR code \calc\ with AS locality with $r$ as locality parameter.  The bounds are derived in terms of the MSW sequence associated with the dual $\calc^{\perp}$ of \calc.  The basic idea is to shorten the LR code to a code with $e_i$ ($i$th term of MSW sequence) code symbols set to zero for some $i$.  Theorem~\ref{thm:msw_sequence} provides a lower bound on the dimension of this shortened code. Classical bounds on the parameters of this shortened code are shown to yield bounds on the parameters of the parent LR code.

\begin{thm} \label{thm:dmin1}
	Let $\mathcal{C}$ be an $[n,k]$ LR code with AS locality with locality parameter $r$ over a field $\mathbb{F}_q$ with minimum distance $d$. Let $d^q_{\text{min}}(n,k,r)$ be the maximum possible minimum distance of an $[n,k]$ LR code with AS locality with locality parameter $r$ over a field $\mathbb{F}_q$. Then:
	\bea
	d \leq  \min_{i \in S} \ d^q_{\text{min}}(n-e_i,k+i-e_i,r) \leq \min_{i \in S} d^{(q)}(n-e_i,k+i-e_i),\ \ \text{ and }  \label{Mindist_Bound1} \\
	k \  \leq \  \min_{\{i: e_i < n-d+1 \}} [ e_i-i + k_{\text{opt}}^{(q)}(n-e_i,d) ] , \label{GHW_dimension_Bound}
	\eea
	where 
	\ben
	\item[(i)] $S=\{ i: e_i-i < k, 1 \leq i \leq b_1 \}$, 
	\item[(ii)] $d^{\perp}_i = d_i(\calc^{\perp}) \leq e_i$, $\forall 1 \leq i \leq b_1$
	\item[(iii)]  $b_1=\lceil \frac{n}{r+1} \rceil$, 
	\item[(iv)]  $d^{(q)}(n,k)$ is the maximum possible minimum distance of a classical (i.e., no locality necessary) $[n,k]$ block code over $\mathbb{F}_q$ and \item[(v)] $k_{\text{opt}}^{(q)}(n,d)$ is the largest possible dimension of a code  (i.e., no locality necessary) over $\fq$ having block length $n$ and minimum distance $d$.
	\een
\end{thm}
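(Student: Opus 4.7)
The plan is to apply Theorem~\ref{thm:msw_sequence} to the dual code $\calc^{\perp}$ and then use code shortening to transfer the resulting MSW bound into classical coding quantities.

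First, I will verify that $\calc^{\perp}$ qualifies as a canonical dual code with parameter $b_1 = \lceil n/(r+1) \rceil$. Indeed, AS locality produces, for each coordinate $j$, a weight-$\leq(r+1)$ codeword of $\calc^{\perp}$ containing $j$ in its support, and any family of such codewords whose supports cover $[n]$ must have cardinality at least $\lceil n/(r+1) \rceil$. Theorem~\ref{thm:msw_sequence} then gives $d^{\perp}_i \leq e_i$ for each $1 \leq i \leq b_1$. For each $i \in S$, I will choose an $i$-dimensional subcode $\mathcal{D} \leq \calc^{\perp}$ with $|\text{supp}(\mathcal{D})| \leq e_i$ and pad its support out to a set $T \subseteq [n]$ of size exactly $e_i$.

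The shortened code $\calc_T = \{\underline{c} \in \calc : c_j = 0 \ \forall j \in T\}$, viewed as a linear code of block length $n - e_i$ after deleting the zero coordinates, then satisfies
\[
\dim(\calc_T) \;=\; k - |T| + \dim\{\underline{v} \in \calc^{\perp} : \text{supp}(\underline{v}) \subseteq T\} \;\geq\; k + i - e_i,
\]
where the equality is the standard shortening-dimension identity and the inequality uses $\mathcal{D}$ sitting inside the right-hand set. Shortening does not decrease minimum distance, so $d_{\min}(\calc_T) \geq d$. When $e_i < n - d + 1$, the shortened code is a classical block code of length $n - e_i$ and minimum distance at least $d$, so its dimension is bounded by $k^{(q)}_{\text{opt}}(n - e_i, d)$, yielding $k + i - e_i \leq k^{(q)}_{\text{opt}}(n - e_i, d)$ and hence \eqref{GHW_dimension_Bound}.

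For the minimum-distance bound \eqref{Mindist_Bound1}, the additional step is to check that $\calc_T$ inherits AS locality with parameter $r$. The key observation is that $(\calc_T)^{\perp}$ equals the coordinate projection of $\calc^{\perp}$ onto $[n] \setminus T$. Hence, for any $j \notin T$, projecting the weight-$\leq (r+1)$ local parity check of $\calc^{\perp}$ through $j$ produces a dual codeword of $\calc_T$ of weight at most $r+1$ still containing $j$ in its support. Combining this with monotonicity of $d^q_{\min}(\cdot,\cdot,r)$ in its dimension argument gives $d \leq d^q_{\min}(n - e_i, k + i - e_i, r)$, and the trailing inequality $d^q_{\min} \leq d^{(q)}$ is immediate since every LR code is, in particular, an ordinary linear code. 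The main obstacle I foresee is the careful bookkeeping around the dual-code projection and the shortening-dimension identity; once those are in hand, both bounds follow mechanically.
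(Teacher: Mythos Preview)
Your proposal is correct and follows essentially the same route as the paper: verify $\calc^{\perp}$ is a canonical dual code with $b \geq \lceil n/(r+1)\rceil$, invoke Theorem~\ref{thm:msw_sequence} to bound the GHWs, shorten $\calc$ on a set $T$ of size $e_i$ containing the support of an $i$-dimensional dual subcode, and read off the two bounds from the parameters of the shortened code. The only point where you are slightly less explicit than the paper is in confirming that one can choose the covering family of weight-$\leq(r+1)$ dual codewords to be \emph{linearly independent} (the paper does this via a greedy selection that at each step picks a codeword whose support hits a coordinate not yet covered); once that is noted, your argument matches the paper's, and in fact your justification for why the shortened code retains AS locality (via the shortening/puncturing duality $(\calc_T)^{\perp} = \calc^{\perp}|_{T^c}$) is more detailed than what the paper writes.
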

\begin{proof}
	Since $\calc$ is an LR code with AS locality with locality parameter $r$, we have that $\calc^{\perp}$ is an $[n,n-k]$ canonical dual code with locality parameter $r$ and $b \geq \lceil \frac{n}{r+1} \rceil$.   We explain the reason for the inequality in the symbol $b$.  
	
	Since $\calc$ is an LR code AS locality, for every code symbol $c_{i}$, there is a codeword in $\calc^{\perp}$ of weight $\leq r+1$ whose support contains $i$. So take a codeword in $\calc^{\perp}$ of weight $\leq r+1$ whose support set $S_1$ contains $1$. Next, choose $\ell_2 \in [n]-S_1$ and take a codeword in  $\calc^{\perp}$ of weight $\leq r+1$ whose support set $S_2$ contains $\ell_2$. Repeat this process. At the $i^{th}$ step, choose $\ell_i \in [n]-\cup_{j=1}^{i-1} S_j$ and take a codeword in  $\calc^{\perp}$ of weight $\leq r+1$ whose support set $S_{i}$ contains $\ell_{i}$. Note that the set of $i$ codewords corresponding to support sets $S_1,...,S_i$ form a set of $i$ linearly independent codewords in $\calc^{\perp}$ and this process process can be repeated until $\cup_{j=1}^{b} S_j = [n]$ for some $b$. Since $\cup_{j=1}^{b} S_j = [n]$, we have that  $\sum_{j=1}^{b} |S_j| \geq n$ which implies $b \geq \lceil \frac{n}{r+1} \rceil$ as $|S_j|\leq r+1$.
	
	We now set $b_1 = \lceil \frac{n}{r+1} \rceil$. Hence from Theorem \ref{thm:msw_sequence}, $d_j(\calc^{\perp}) \leq e_j$, $\forall 1 \leq j \leq b_1$. For simplicity, let us write $d_j^{\perp} = d_j(\calc^{\perp})$, $\forall 1 \leq j \leq b_1$. 
	Next, fix $i$ with $1 \leq i \leq b_1$. Let $S'=\{s_1,...,s_{d_i^{\perp}} \}$ be the support of an $i$ dimensional subspace or subcode of $\mathcal{C}^{\perp}$ with the support having cardinality exactly $d_i^{\perp}$ in $\mathcal{C}^{\perp}$. Add $e_i-d_i^{\perp}$ arbitrary extra indices to $S'$ and let the resulting set be $S$. Hence $S' \subseteq S \subseteq [n]$ and $|S|=e_i$. Now shorten the code $\mathcal{C}$ in the co-ordinates indexed by $S$ i.e., take $\mathcal{C}^S=\{\underline{c}|_{S^c} : \underline{c} \in \mathcal{C}, \underline{c}|_S = \underline{0}\}$ where $S^c$ is the compliment of $S$ and $\underline{c}|_{A} = [c_{j_1},...,c_{j_{|A|}}]$ for a set $A=\{j_1,...,j_{|A|}\}$ with $j_{\ell} < j_{\ell+1}$,$\forall \ell \in [|A|-1]$. The resulting code $\mathcal{C}^S$ has block length $n-e_i$, dimension $\geq n-e_i-(n-k-i)=k+i-e_i$ and minimum distance $\geq d$ (if $k+i-e_i > 0$) and the resulting code $\mathcal{C}^S$ is also an LR code with AS locality with locality parameter $r$. Hence:
	\bean
	d \leq  \min_{i \in S} \ d^q_{\text{min}}(n-e_i,k+i-e_i,r) \leq \min_{i \in S} d^{(q)}(n-e_i,k+i-e_i).
	\eean 
	The proof of \eqref{GHW_dimension_Bound} follows from the fact that $\text{dim}(\mathcal{C}^S) \geq k+i-e_i$ and Hence :
	\bea
	k+i-e_i \leq  \text{dim}(\mathcal{C}^S)  \leq k_{\text{opt}}^{(q)}(n-e_i,d)
	\eea
\end{proof}

An example comparison of the upper bounds on dimension of linear LR codes given in \eqref{CadMaz}, \eqref{Hamming} and \eqref{GHW_dimension_Bound} (our bound) is presented in Table~\ref{tab:bd_compare}. The word bound in the following refers to an upper bound.

\begin{table}
	\caption {A comparison of upper bounds on the dimension $k$ of a binary LR code, for given $(n,d,r,q)$.} \label{tab:bd_compare}
	\begin{center}
		\begin{tabular}{ || m{7em} | m{0.5cm}| m{0.5cm} |  m{0.5cm}| m{0.5cm} |  m{0.5cm}|| m{0.5cm} |  m{0.5cm}| m{0.5cm} | m{0.5cm}|| } 
			\hline \hline 
			\multicolumn{6}{|c|}{$n=31$, $q=2$, $d=5$}   \\
			\hline
			$r$ (locality) & 2 & 3 & 4 & 5 & 6 \\ 
			\hline \hline 
			Bound \eqref{CadMaz} & 17 & 19 & 20 & 20 & 20 \\ 
			\hline
			Bound in \eqref{Hamming} & 15 & 18 & 20 & 22 & 23 \\
			\hline
			Bound \eqref{GHW_dimension_Bound} & 16 & 18 & 19 & 20 & 20 \\
			\hline
		\end{tabular}
	\end{center}
\end{table}
Since $e_i \leq i(r+1)$, it can be seen that the bound \eqref{Mindist_Bound1} is tighter than the bound \eqref{dminAlp} when applied to an LR code with AS locality.  For the same reason, the bound \eqref{GHW_dimension_Bound} is tighter than bound \eqref{CadMaz}.  For $q=2$, it is mentioned in \cite{WanZhaLin}, that the bound \eqref{Hamming} is looser than the bound in \eqref{CadMaz} for $d \geq 9$.  Since our bound \eqref{GHW_dimension_Bound} presented here is tighter than the bound appearing in \eqref{CadMaz}, we conclude that for $d \geq 9$, $q=2$, our bound \eqref{GHW_dimension_Bound} is tighter than the bound in \eqref{Hamming}.

Hence our bounds \eqref{Mindist_Bound1},\eqref{GHW_dimension_Bound} are the tightest known bounds on minimum distance and dimension for $q > 2$. For $q=2$, our bound \eqref{GHW_dimension_Bound} is the tightest known bound on dimension for $d \geq 9$. 
We here note that the bounds \eqref{Mindist_Bound1}, \eqref{GHW_dimension_Bound} apply even if we replace $e_i$ with any other upper bound on $i^{th}$ MSW.   The bounds derived here are general in this sense.
For the sake of completeness, in the following we give a survey of existing small alphabet size constructions which are optimal w.r.t bounds appearing in the literature. 

\begin{note}
	Let $0 < R \leq 1$ and let $r$ be a positive integer. Then if $\delta_L(R)=\limsup_{n \rightarrow \infty} \frac{d^q_{\text{min}}(n,nR,r)}{n}$ and $\delta(R)=\limsup_{n \rightarrow \infty} \frac{d^{(q)}(n,nR)}{n}$, it is trivial to observe that:
	\bean
	\delta(R+\frac{1}{r+1}) \leq \delta_L(R) \leq \delta(R)
	\eean
	Trivially, if $r$ is a monotonic function of $n$, then assuming continuity of $\delta(R)$, we get $\delta_L(R) = \delta(R)$. This is because if $r$ is a monotonic function of $n$ then the number of  linearly independent codewords of weight $\leq r+1$ in the dual code needed to satisfy the conditions necessary for an LR code is a negligible fraction of $n$ as $n$ increases. Hence the region of interest in locality is when $r$ is a constant or when $n$ is a small number.
\end{note}

\subsection{Small-Alphabet Constructions} \label{Small_Alph_LR_Codes}

\subsubsection{Construction of Binary Codes} 

Constructions for binary codes that achieve the bound on dimension given in \eqref{CadMaz} for binary codes, appear in \cite{NamSo,SilZeh,HaoXiaChe}. While \cite{HaoXiaChe} and \cite{NamSo} provide constructions for $d_{min}=4$ and $d_{\min}=6$ respectively, the constructions in \cite{SilZeh} handle the case of larger minimum distance but have locality parameter restricted to $r \in  \{ 2,3 \}$.  In \cite{HuaYaaUchSie}, the authors give optimal binary constructions with information and all symbol locality with $d_{min} \in \{3,4\}$. The construction is optimal w.r.t the bound \eqref{dminAlp}.  Constructions achieving the bound on dimension appearing in \cite{WanZhaLin} and the further tightened bound for disjoint repair groups given in \cite{MaGe} for binary codes, appear respectively, in \cite{WanZhaLin,MaGe}. These constructions are for the case $d_{min}=6$. 
In \cite{HaoXiaChe}, the authors present a characterization of binary LR codes that achieve the Singleton bound \eqref{Singleton}. In \cite{ShaKhaArd}, the authors present constructions of binary codes meeting the Singleton bound. These codes are a subclass of the codes characterized in  \cite{HaoXiaChe} for the case $d_{min} \leq 4$.

\subsubsection{Constructions with Small, Non-Binary Alphabet }

In \cite{HaoXiaChe1}, the authors characterize ternary LR codes achieving the Singleton bound \eqref{Singleton}.  In \cite{HaoXiaChe,ShaKhaArd, HaoXia}, the authors provide constructions for codes over a field of size $O(r)$ that achieve the Singleton bound in \eqref{Singleton} for $d_{min} \leq 5$.   Some codes from algebraic geometry achieving the Singleton bound \eqref{Singleton} for restricted parameter sets are presented in \cite{LiMaXin1}.

\subsubsection{Construction of Cyclic LR Codes}

Cyclic LR codes can be constructed by carefully selecting the generator polynomial $g(x)$ of the cyclic code.  We illustrate a key idea behind the construction of a cyclic LR code by means of an example.  
%Let \calc\ be a binary code of block length $n$.   Let $g(x)$ be the generator polynomial of the code. 
\begin{figure}[h!]
	\centering 		\includegraphics[width=4in]{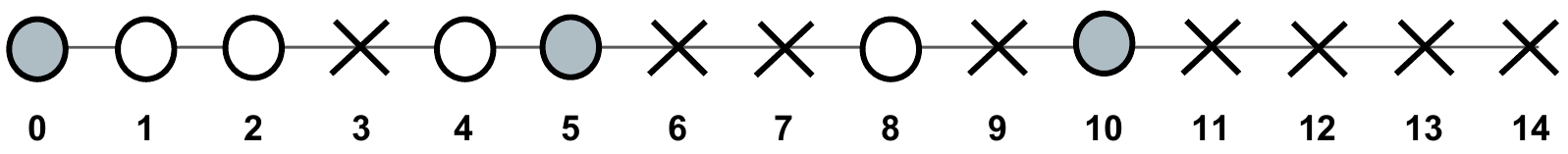}
	\caption{Zeros of the generator polynomial $g(x)=\frac{g_1(x)g_2(x)}{(x+1)}$ of the cyclic code in Example~\ref{eg:zero_train} are identified by circles. The unshaded circles along with the shaded circle corresponding to $\alpha^0=1$ indicate the zeros $\{1,\alpha,\alpha^2,\alpha^4,\alpha^8\}$ of $g_1(x)$ selected to impart the code with $d_{\min} \geq 4$.  The shaded circles indicate the periodic train of zeros $\{1,\alpha^5,\alpha^{10}\}$ introduced to cause the code to be locally recoverable with parameter $(r+1)=5$. The common element $1$ is helpful both to impart increased minimum distance as well as locality.}
	\label{fig:zero_train} 
\end{figure}
%	\begin{figure}
%		\centering
%		\includegraphics[scale=0.65]{punctured_code_zeros}
%		\caption{\ }
%		\label{fig:Cperp_after_loc}
%	\end{figure}
%	

%	\caption{(a) Circles indicate zeros of the code $\calc$ (computed with respect to $\alpha$). The colored circles denote a train of zeros. (b) Zeros of the cyclic code (computed with respect to $\beta\triangleq\alpha^3$) obtained by puncturing $\calc$ to one of the following sets of coordinates; $\{0,3,6,9,12\}$, $\{1,4,7,10,13\}$ or $\{2,5,8,11,14\}$.}
%\end{figure}
\begin{eg} \label{eg:zero_train} 
	Let $\alpha$ be a primitive element of $\mathbb{F}_{16}$ satisfying $x^4+x+1=0$.  Let $\calc_1$ be a cyclic $[n=15,k=10]$ code having generator polynomial $g_1(x)=(x+1)(x^4+x+1)$.    
	%Then each codeword $[c_1,c_2,\cdots,c_n]^T$ of \calc\ is associated to a code polynomial $c(x) = \sum_{i=0}^{n-1} c_i x^i $ that is a multiple $c(x)=u(x)g_1(x)$ of $g_1(x)$.   
	Since the consecutive powers $\{1,\alpha,\alpha^2\}$ of $\alpha$ are zeros of $g_1(x)$, it follows that $d_{\min}(\calc)\geq 3+1=4$ by the BCH bound.  Suppose we desire to ensure that a code \calc\ having generator polynomial $g(x)$ has $d_{\min} \geq 4$ and in addition, is locally recoverable with parameter $(r+1)=5$, then we do the following.  Set $s=\frac{n}{(r+1)}=3$.   Let $g_2(x) \ = \  \prod_{l=0}^{s-1=2} (x-\alpha^{5l})$ and $g(x) = \text{lcm} \{g_1(x), g_2(x)\} = g_1(x)g_2(x)/(x+1)$.
	It follows that $\sum_{t=0}^{14} c_t \alpha^{5lt} \ = \  0, \ \ l=0,1,2$. Summing over $l$ we obtain:
	\bean
	\sum_{l=0}^2 \sum_{t=0}^{14} c_t \alpha^{5lt} \ =  \ 0  & \Rightarrow & 
	\sum_{t: t= 0 \pmod{3}}c_t \ = \ 0.  
	\eean
	It follows that the symbols $\{c_t \mid t= 0 \pmod{3}\}$ of \calc\ form a local code as they satisfy the constraint of an overall parity-check.  Since the code \calc\ is cyclic the same holds for the code symbols $\{c_{t+\tau} \mid t= 0 \pmod{3}\}$, for $\tau = 0,1,2$.   Thus through this selection of generator polynomial $g(x)$, we have obtained a code that has both locality and $d_{\min} \geq 4$.   The zeros of $g(x)$ are illustrated in Fig.~\ref{fig:zero_train}. The code \calc\ has parameters $[n=15,k=8,d_{min} \geq 4]$ and $r=4$. Note that the price we pay for introduction of locality is a loss in code dimension, equal to the degree of the polynomial $\frac{g_2(x)}{ \text{gcd} \{g_1(x), g_2(x)\}}$.  Thus an efficient code will choose the zeros of $g_1(x),g_2(x)$ for maximum overlap. 
\end{eg}
The above idea of constructing cyclic LR code was introduced in \cite{GopCal} and extended in \cite{TambarGopCal,ZehYak,KimNo,LuoXinYua}. In \cite{KriPurKumTamBarISIT17}, the use of locality for reducing the complexity of decoding a cyclic code is explored. The same paper also makes a connection with earlier work \cite{VardyBeery94} that can be interpreted in terms of locality of a cyclic code.
%\noindent {\em Binary and Ternary Cyclic LR Code:} \ In \cite{GopCal} a construction of binary LR codes for $r=2$ an $d_{min} \in \{2,6,10\}$  achieving a bound derived within the same paper for binary codes is provided.   In \cite{KimNo}, the authors give constructions of optimal binary, ternary codes meeting the Singleton bound \eqref{Singleton} for $d_{min} = 4, r \in \{1,3\}$ and $d_{min}=6, r=2$ as well as a construction of a binary cyclic code meeting the bound given in \cite{WanZhaLin} for $d_{min}=6,r=2$.  
% 
%\noindent {\em Locality of Classical Binary, Cyclic Codes:}  \ A discussion on the locality of classical binary cyclic codes as well as of codes derived from them through simple operations such as shortening, can be found in \cite{HuaYakUchSie,HuaYaaUchSie}. The principal idea here is that any cyclic code has locality $d^{\perp}-1$ where $d^{\perp}$ is the minimum distance of the dual code $\calc^{\perp}$.
%  
%\noindent {\em Small Field Size Constructions of Cyclic LR Code:}
% In \cite{ZehYak}, the authors construct optimal cyclic codes under the constraint that the local code is either a Simplex code or else, a Reed-Muller code.   In \cite{TambarGopCal}, the authors provide a construction of cyclic codes with field size $O(n)$ achieving the Singleton bound \eqref{Singleton} and also study the locality of subfield subcodes as well as their duals, the trace codes . In \cite{LuoXinYua}, constructions of cyclic LR code with $d_{min} \in \{3,4\}$ for any $q$ and flexible $n$ are provided. 
In \cite{GopCal} a construction of binary cyclic LR codes for $r=2$ an $d_{min} \in \{2,6,10\}$  achieving a bound derived within the same paper for binary codes is provided.   In \cite{KimNo}, the authors give constructions of optimal binary, ternary codes meeting the Singleton bound \eqref{Singleton} for $d_{min} = 4, r \in \{1,3\}$ and $d_{min}=6, r=2$ as well as a construction of a binary code meeting the bound given in \cite{WanZhaLin} for $d_{min}=6,r=2$  based on concatenating cyclic codes.  A discussion on the locality of classical binary cyclic codes as well as of codes derived from them through simple operations such as shortening, can be found in \cite{HuaYakUchSie,HuaYaaUchSie}. The principal idea here is that any cyclic code has locality $d^{\perp}-1$ where $d^{\perp}$ is the minimum distance of the dual code $\calc^{\perp}$. In \cite{ZehYak}, the authors construct optimal cyclic codes under the constraint that the local code is either a Simplex code or else, a Reed-Muller code.   In \cite{TambarGopCal}, the authors provide a construction of cyclic codes with field size $O(n)$ achieving the Singleton bound \eqref{Singleton} and also study the locality of subfield subcodes as well as their duals, the trace codes. In \cite{LuoXinYua}, constructions of cyclic LR codes with $d_{min} \in \{3,4\}$ for any $q$ and flexible $n$ are provided. 

\section{Summary} \label{Summary}
This chapter dealt with LR codes for the single-erasure case and presented the requisite background as well as the contributions of the thesis in this direction.  The thesis contributions on LR codes for single erasure case correspond to new alphabet-size dependent upper bounds on the minimum distance $d_{\min}$ and dimension $k$ of a linear LR code.  Thus the upper bounds apply to the case of LR codes over a finite field $\mathbb{F}_q$ of fixed size $q$.   A key ingredient in the upper bounds derived here are the bounds on the Generalized Hamming Weights (GHW) derived in \cite{PraLalKum}.  Evidence was presented showing our upper bound on dimension to be tighter in comparison with existing upper bounds in the literature.  

	\chapter{LR Codes with Sequential Recovery} \label{ch:SeqBound}
	
	\section{Introduction} \label{Overview:Seq}
	
	The focus of the present chapter is on LR codes for multiple erasures. We begin by providing motivation for studying the multiple-erasure case (Section~\ref{sec:motivation}).  As there are several approaches towards handling multiple erasures in the literature, we next provide a broad classification of LR codes for multiple erasures (Section~\ref{sec:seq_classification}). The principal contributions of this thesis relate to a particular approach towards the recovery from multiple erasures, termed as sequential recovery.  Section~\ref{sec:seq_recovery} introduces LR codes with sequential recovery and surveys the known literature on the topic.    This is followed by an overview of the contributions of this thesis on the topic of LR codes with sequential recovery in Section~\ref{sec:seq_contributions}. Sections \ref{sec:2_erasures} and \ref{sec:3_erasures} respectively present in detail, the results obtained in this thesis, relating to the case of $2$ and $3$ erasures respectively.  Section~\ref{sec:t_erasures} presents a principal result of this thesis which involves establishing a tight upper bound on the rate of an LR code with sequential recovery for $t \geq 4$ erasures along with a matching construction.  The upper bound derived also proves a conjecture that had previously appeared in the literature.    The final section, Section~\ref{sec:seq_conclusions}, summarizes the contents of the chapter. Throughout this chapter, we use the term weight to denote the Hamming weight.
	
	\subsection{Motivation for Studying Multiple-Erasure LR Codes} \label{sec:motivation} 
	
	Given that the key problems on the topic of LR codes for the single erasure case have been settled, the academic community has turned its attention towards LR codes for multiple erasures.  
	
	\paragraph{Availability} A strong motivation for studying the multiple-erasure case, comes from the notion of {\em availability}.   A storage unit could end up storing data that is in extremely high demand at a certain time instant.  In such situations, regarded by the storage industry as {\em degraded reads}, the storage industry will look to create, on-the-fly replicas of the storage unit's data.  If a code symbol can be recreated in $t$ different ways by calling upon $t$ pairwise disjoint sets $S_i$ of helper nodes, then one could recreate $t$ copies of the data-in-demand in parallel.  But a code which can, for any code symbol recreate in this fashion $t$ simultaneous copies of a code symbol, also has the ability to correct $t$ erasures simultaneously.  This follows because any pattern of $(t-1)$ erasures can affect at most $(t-1)$ of the helper node sets $S_i$ and thus there is still a helper node set remaining that can repair the erased symbol. The problem of designing codes with availability in the context of locality was introduced in \cite{WanZha} and a high rate construction for availability codes appeared in \cite{WanZhaLiu}. For a survey on constructions of availability codes please see Chapter \ref{ch:avail}. Upper bounds on minimum distance and rate of an availabiltiy codes appeared in \cite{TamBarFro}, \cite{HuaYaaUchSie}, \cite{BalKum}, \cite{KruFro}.
	
	\paragraph{Other reasons} Other reasons for being interested in the multiple erasure setting include (a) the increasing trend towards replacing expensive servers with low-cost commodity servers that can result in simultaneous node failures and (b) the temporary unavailability of a helper node to assist in the repair of a failed node.   
	
	\section{Classification of LR Codes for Multiple Erasures} \label{sec:seq_classification}
	
	An overview of the different classes of LR codes that are capable of recovering from multiple erasures proposed in the literature is presented here.  All approaches to recovery from multiple erasures place a constraint $r$ on the number of unerased symbols that are used to recover from a particular erased symbol.  The value of $r$ is usually small in comparison with the block length $n$ of the code and for this reason, one speaks of the recovery as being local.  All the codes defined in this section are over a finite field \fq. A codeword in an $[n,k]$ code will be represented by $[c_1,c_2,...,c_i,...,c_{n-1},c_n ]$. In this chapter, we will restrict ourselves to only linear codes.
	
	\subsubsection{Sequential-Recovery LR Codes} An $(n,k,r,t)$ sequential-recovery LR code (abbreviated as S-LR code) is an $[n,k]$ linear code $\calc$ having the following property: Given a collection of $s \le t$ erased code symbols, there is an ordering $(c_{i_1}, c_{i_2}, \cdots, c_{i_s})$ of these $s$ erased symbols such that for each index $i_j$, there exists a subset $S_j \subseteq [n]$ satisfying (i) $|S_j| \le r$ , (ii) $\ S_j \cap \{i_j, i_{j+1}, \cdots, i_s \} = \phi$,  and (iii) 
	\begin{eqnarray} \label{eq:locality1}
		c_{i_j} & = & \sum \limits_{\ell \in S_j} u_{\ell} c_{\ell}, \ \text{$u_{\ell}$ depends on $(i_j,S_j)$ and $u_{\ell} \in \fq$ } .
	\end{eqnarray} 
	It follows from the definition that an $(n,k,r,t)$ S-LR code can recover from the erasure of $s$ code symbols $c_{i_1}, c_{i_2}, \cdots, c_{i_s}$, for $1 \leq s \leq t$ by using \eqref{eq:locality1} to recover the symbols $c_{i_j}, \ j=1,2,\cdots,s$, in succession. 
	\begin{figure}[ht]
		\centering
		\includegraphics[width=4in]{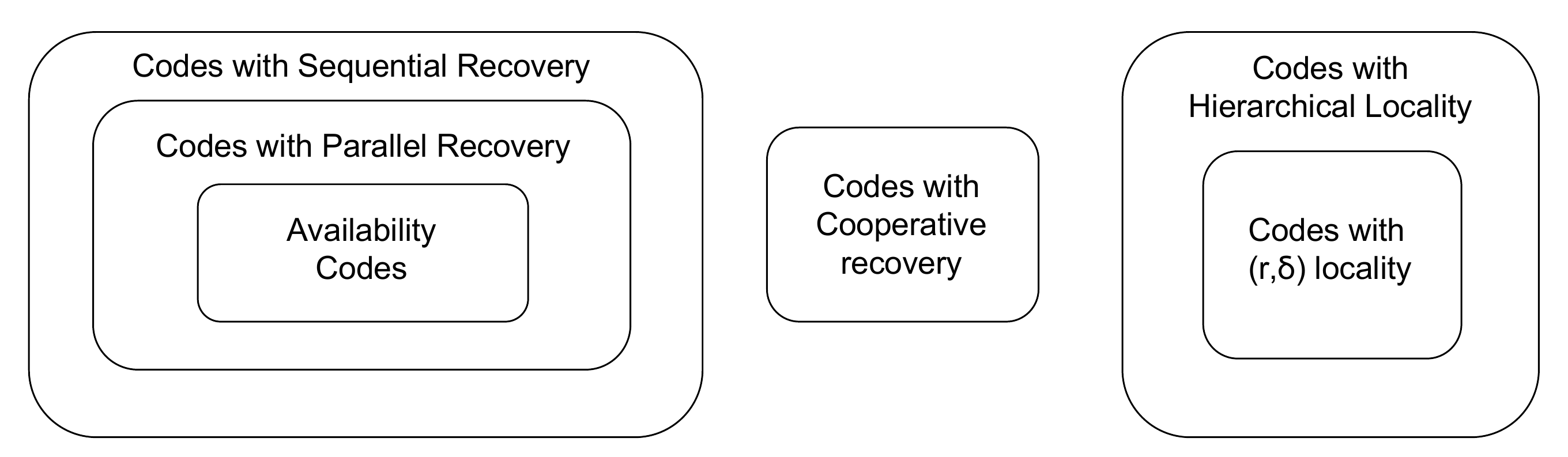}
		\caption{The various code classes of LR codes corresponding to different approaches to recovery from multiple erasures.}
		\label{fig:me_classes}
	\end{figure}
	
	\subsubsection{Parallel-Recovery LR Codes} If in the definition of the S-LR code, we replace the condition (ii) in \eqref{eq:locality1} by the more stringent requirement:
	\bea
	S_j \cap \{i_1, i_2, \cdots, i_s \} = \phi,
	\eea
	then the LR code will be referred to as a \noindent {\em parallel} recovery LR code, abbreviated as P-LR code.  Clearly the class of P-LR codes is a subclass of S-LR codes.  From a practical point of view, P-LR codes are preferred since as the name suggests, the erased symbols can be recovered in parallel. However, this will in general, come at the expense of storage overhead.  We note that under parallel recovery, depending upon the specific code, this may require the same helper (i.e., non-erased) code symbol to participate in the recovery of more than one erased symbol $c_{i_j}$.
	
	\subsubsection{Availability Codes} An $(n,k,r,t)$ \noindent {\em availability} LR code (see \cite{WanZha,WanZhaLiu,TamBarFro,BalKum}), is an $[n,k]$ linear code having the property that in the event of a single but arbitrary erased code symbol $c_i$, there exist $t$ recovery sets $\{R^i_j\}_{j=1}^t$ which are pairwise disjoint and of size $|R^i_j| \leq r$ with $i \notin R^i_j$, such that for each $j, 1 \leq j \leq t$, $c_i$ can be expressed in the form:
	\bean
	c_i = \sum\limits_{\ell \in R^i_j} a_{\ell} c_{\ell}, \  \text{$a_{\ell}$ depends on $(i,R^i_j)$ and $a_{\ell} \in \fq$ } .
	\eean
	An $(n,k,r,t)$ availability code is also an $(n,k,r,t)$ P-LR code.  This follows because the presence of at most $t$ erasures implies, that there will be at least one recovery set for each erased code symbol all of whose symbols remain unerased.
	
	\subsubsection{$(r,\delta)$ Codes} An $[n,k]$ linear code \calc\ is said to have AS $(r,\delta)$ locality (see \cite{PraKamLalKum,SonDauYueLi}), if for each co-ordinate  $i \in [n]$, there exists a subset $S_i \subseteq [n]$, with $i \in S_i$, with 
	\bea
	\label{eq:rdeltaloc}
	\dim(\calc|_{S_i} ) \le r, \ d_{\min}(\calc|_{S_i} ) \ge \delta. 
	\eea
	Recovery from $t$ erasures can also be accomplished by using the codes with $(r,\delta)$ locality, if one ensures that the code has $d_{\min} \geq t+1$.  However in this case, repair is local only in those cases where the erasure pattern is such that the number of erasures $e_i$ within the local code $\calc|_{S_i}$ satisfies $e_i \leq \delta-1$. 
	Thus one may regard $(r,\delta)$ codes as offering probabilistic guarantees of local recovery in the presence of $\leq t$ erasures in exchange for a potential increase in code rate.  Of course, one could always employ an $(r,\delta)$ with local MDS codes (i.e., the code $\calc|_{S_i}$ is MDS) and $\delta \geq t+1$, but this would result in a significant rate penalty. 
	
	\subsubsection{Cooperative Recovery} A \noindent {\em cooperative} recovery $(n,k,r,t)$ LR (C-LR) code (see \cite{RawMazVis}) is an $[n,k]$ linear code such that if a subset $E=\{c_{i_1}, c_{i_2}, \cdots, c_{i_s}\}$, $1 \leq s \le t$ of code symbols are erased then there exists a subset $S_E=\{c_{j_1}, c_{j_2}, \cdots, c_{j_r}\}$ of $r$ other code symbols
	(i.e., $i_a \ne j_b$ for any $a, b$) such that for 
	all $a \in [s]$:
	\bean
	c_{i_{a}} = \sum\limits_{b=1}^r \theta_{b} c_{j_b}, \ \text{$\theta_{b}$ depends on $(i_a,S_E)$ and $\theta_{b} \in \fq$ } .
	\eean
	Clearly an $(n,k,r,t)$ C-LR code is also an $(n,k,r,t)$ P-LR code, but the $r$ in the case of a cooperative LR code will tend to be significantly larger.  One may regard cooperative LR codes as codes that seek to minimize the number of unerased symbols contacted per erased symbol on average, rather than insist that each code symbol be repaired by contacting $r$ other code symbols.
	
	\section{Codes with Sequential Recovery} \label{sec:seq_recovery}
	
	\begin{defn}
		An $(n,k,r,t)$ sequential-recovery LR code (abbreviated as S-LR code) over a finite field \fq\ is an $[n,k]$ linear code $\calc$ over the finite field \fq\ having the following property: Given a collection of $s \le t$ erased code symbols, there is an ordering $(c_{i_1}, c_{i_2}, \cdots, c_{i_s})$ of these $s$ erased symbols such that for each index $i_j$, there exists a subset $S_j \subseteq [n]$ satisfying (i) $|S_j| \le r$ , (ii) $\ S_j \cap \{i_j, i_{j+1}, \cdots, i_s \} = \phi$,  and (iii) 
		\begin{eqnarray} \label{eq:locality}
			c_{i_j} & = & \sum \limits_{\ell \in S_j} u_{\ell} c_{\ell}, \ \text{$u_{\ell}$ depends on $(i_j,S_j)$ and $u_{\ell} \in \fq$ } .
		\end{eqnarray} 
		It follows from the definition that an $(n,k,r,t)$ S-LR code can recover from the erasure of $s$ code symbols $c_{i_1}, c_{i_2}, \cdots, c_{i_s}$, for $1 \leq s \leq t$ by using \eqref{eq:locality} to recover the symbols $c_{i_j}, \ j=1,2,\cdots,s$, in succession. 
	\end{defn}
	
	\subsection{An Overview of the Literature} 
	
	The sequential approach to recovery from erasures, introduced by Prakash et al. \cite{PraLalKum} is one of several approaches to local recovery from multiple erasures as discussed in Section \ref{sec:seq_classification}.   As indicated in Fig.~\ref{fig:me_classes}, Codes with Parallel Recovery and Availability Codes can be regarded as sub-classes of codes with Sequential Recovery.    Among the class of codes which contact at most $r$ other code symbols for recovery from each of the $t$ erasures, codes employing this approach (see \cite{PraLalKum,RawMazVis,SonYue,SonYue_Prdr,BalPraKum,BalPraKum4Era,SongCaiYue_L,BalKinKum_ISIT,BalKinKum_NCC}) have improved rate simply because sequential recovery imposes the least stringent constraint on the LR code.
	
	\paragraph{Two Erasures} Codes with sequential recovery (S-LR code) from two erasures ($t=2$) are considered in \cite{PraLalKum} (see also \cite{SonYue}) where a tight upper bound on the rate and a matching construction achieving the upper bound on rate is provided.  A lower bound on block length and a construction achieving the lower bound on block length is provided in \cite{SonYue}. 
	
	\paragraph{Three Erasures} Codes with sequential recovery from three erasures ($t=3$) can be found discussed in \cite{SonYue,SongCaiYue_L}.  A lower bound on block length as well as a construction achieving the lower bound on block length appears in \cite{SonYue}. 
	
	\paragraph{More Than $3$ Erasures}  A general construction of S-LR codes for any $r,t$ appears in \cite{SonYue_Prdr,SongCaiYue_L}.  Based on the tight upper bound on code rate presented in this chapter, it can be seen that the constructions provided in \cite{SonYue_Prdr,SongCaiYue_L} do not achieve the maximum possible rate of an S-LR code. In \cite{RawMazVis}, the authors provide a construction of S-LR codes for any $r,t$ with rate $\geq \frac{r-1}{r+1}$.  Again, the upper bound on rate presented here shows that $\frac{r-1}{r+1}$ is not the maximum possible rate of an S-LR code.   In the next chapter, we observe that the rate of the construction given in \cite{RawMazVis} is actually $\frac{r-1}{r+1}+\frac{1}{n}$ which equals the upper bound on rate derived here only for two cases: case (i) for $r=1$ and case (ii) for $r \geq 2$ and $t \in \{2,3,4,5,7,11\}$ exactly corresponding to those cases where a Moore graph of degree $r+1$ and girth $t+1$ exist.  In all other cases, the construction given in \cite{RawMazVis} does not achieve the maximum possible rate of an S-LR code.  
	
	In the subsections below, we examine in greater detail, the results in the literature pertaining to the cases $t=2,3$ and general $t$ respectively. 
	
	\subsection{Prior Work: $t=2$ Erasures}
	
	All the results presented below are for the case of $t=2$ erasures. 
	%  In the following, we summarize the results on S-LR codes for $t=2$ case from the literature. We present from existing literature an upper bound on rate and a lower bound on block length and a construction achieving the upper bound on rate which also achieves the upper bound on Minimum Support Weights (MSW) based on Minimum Support Weight Sequence (MSWS) discussed in Chapter \ref{Ch:LRSingleErasure}. All the results discussed below are based on \cite{PraLalKum,SonYue}.
	%
	%    The Theorems \ref{thm:rate_bound},\ref{thm:block_length_bound_t2} given below are based on the results given in \cite{PraLalKum}.
	The upper bound on code rate appearing below, can be found in \cite{PraLalKum}.	
	\begin{thm} \cite{PraLalKum} \label{thm:rate_bound} [Bound on Code Rate]
		Let $\mathcal{C}$ be an $(n,k,r,2)$ S-LR code over $\mathbb{F}_q$. Then 
		\begin{eqnarray} \label{eq:rate_bound}
			\frac{k}{n} & \leq & \frac{r}{r+2}.
		\end{eqnarray}
	\end{thm}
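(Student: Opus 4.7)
}

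The plan is to lower bound the dimension of the dual code. Since $k/n \le r/(r+2)$ is equivalent to $\dim(\calc^{\perp}) \ge 2n/(r+2)$, it suffices to exhibit that many independent low-weight parity checks. First, I will translate the sequential-recovery condition into the language of $\calc^{\perp}$. Applied to single erasures, the definition already provides, for each coordinate $i\in[n]$, a codeword $\underline{h}_i\in\calc^{\perp}$ of Hamming weight at most $r+1$ whose support $T_i$ contains $i$. The additional content of the $t=2$ condition is that the $\underline{h}_i$ may be chosen simultaneously so that no pair of coordinates is mutually dependent: for every $i\neq j$ we cannot have $j\in T_i$ and $i\in T_j$ at the same time. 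Indeed, if this held for some pair $\{i,j\}$ with no alternative choices available in $\calc^{\perp}$, then on erasing both $i$ and $j$, neither could be recovered first using at most $r$ unerased symbols, contradicting sequential decodability.

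Next, I will assemble these parity checks into an $n\times n$ matrix $H$ over $\fq$ whose $i$-th row is $\underline{h}_i$, and encode the combinatorics in an auxiliary digraph $D$ on vertex set $[n]$ with an arc $i\to j$ precisely when $j\neq i$ and $j\in T_i$. Two structural facts are immediate: every vertex has out-degree at most $r$ (each $T_i$ has at most $r$ entries besides $i$), and $D$ contains no $2$-cycles (the mutual-dependence prohibition above). These two facts together mean that the underlying simple graph $G$ of $D$ carries an orientation in which every vertex has out-degree at most $r$, and hence $|E(G)|\le rn/2$ is not yet the key estimate but sets the stage.

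The main step, which is also the principal obstacle, is to convert these structural properties into the rank bound $\mathrm{rank}(H)\ge 2n/(r+2)$. A naive double count over $H$ gives only that the total number of nonzero entries is at most $n(r+1)$, and since each column of $H$ already has a nonzero diagonal entry, this alone is insufficient. The extra factor needed to go from $r+1$ to $r+2$ in the denominator must come from the digon-free property of $D$. The strategy is to partition $[n]$ into disjoint vertex blocks according to the support structure of the $\underline{h}_i$'s, and to argue that each block of size $s$ contributes at least $2s/(r+2)$ independent rows to the row space of $H$: within a block, the digon-free orientation of $G$ ensures that rows cannot ``collapse'' too aggressively under linear dependence, because a dependence among rows $\{\underline{h}_i:i\in B\}$ would force reciprocal incidences forbidden by the no-digon condition. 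Summing the per-block contributions gives $\mathrm{rank}(H)\ge 2n/(r+2)$, hence $\dim(\calc^{\perp})\ge 2n/(r+2)$ and the claimed inequality. The delicate point of the argument is precisely this last block-wise rank lower bound: it is where the full strength of the $2$-erasure sequential-recovery definition, as opposed to the weaker single-erasure locality, is exploited.
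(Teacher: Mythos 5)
Your plan departs from the paper's approach, and unfortunately has two genuine gaps that together leave the core estimate unproved. The paper's argument (from Prakash et al., and reproduced in the general proof of Theorem~\ref{rate_both} when specialised to $t=2$) does not try to select one parity check per coordinate; instead it takes a \emph{basis} of the entire subspace $\mathcal{B}_0 = \mathrm{span}\{\underline{c} \in \mathcal{C}^{\perp}: w_H(\underline{c}) \le r+1\}$, of dimension $m$, assembles the corresponding $m\times n$ matrix $H_1$ (so $\mathrm{rank}(H_1)=m$ automatically, with no rank argument needed), and double-counts nonzero entries. Row weights give at most $m(r+1)$ entries in total; on the column side, the code with parity-check matrix $H_1$ is itself an $(n,n-m,r,2)$ S-LR code and hence has $d_{\min}\ge 3$, so its columns are pairwise linearly independent, forcing the $a_0$ columns of weight $1$ to sit in $a_0$ distinct rows (so $a_0\le m$) while the remaining $n-a_0$ columns have weight $\ge 2$. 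This yields $m(r+1)\ge 2n-a_0\ge 2n-m$, i.e.\ $m(r+2)\ge 2n$, and since $k\le n-m$ the bound follows. The whole argument hinges on working with a basis of $\mathcal{B}_0$, which sidesteps any rank or selection question.

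The first gap in your plan is the ``simultaneous no-digon selection.'' Your justification only establishes the pairwise statement: for each individual pair $\{i,j\}$, at least one side has an alternative weight-$\le r+1$ parity check avoiding the other coordinate, which is exactly what the $t=2$ condition gives. This does not let you fix a single $\underline{h}_i$ per coordinate and have all $\binom{n}{2}$ pairs digon-free simultaneously: for a fixed $i$ the demands from different $j$'s may pull $T_i$ in incompatible directions, and an exchange step that removes one digon can create others, so there is no termination argument. You would need a global selection lemma (of a Hall/compactness/potential-function type) and none is supplied. The second and deeper gap is the block-wise rank bound. You assert that blocks ``cannot collapse too aggressively under linear dependence'' because a dependence among the rows of a block would ``force reciprocal incidences,'' but no such implication is shown, and it is in fact false at the combinatorial level: for an $n\times n$ matrix with nonzero diagonal, row weight $\le r+1$, and a digon-free arc digraph, the rank depends on the actual field entries, not just the support pattern (a $3$-cycle block $\{1\to 2\to 3\to 1\}$ gives a $3\times 3$ matrix whose determinant can vanish or not). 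Note also that your target, $\mathrm{rank}(H)\ge 2n/(r+2)$ for an $n\times n$ matrix whose rows are a single chosen parity check per coordinate, is strictly stronger than the bound the paper proves on $\dim(\mathcal{B}_0)$: your rows span only a subspace of $\mathcal{B}_0$, and there is no reason this subspace should capture most of $\mathcal{B}_0$.
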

	S-LR codes with $t=2$ which achieve the bound \eqref{eq:rate_bound} are said to be {\em rate-optimal}.
	This upper bound on code rate can be rephrased as a lower bound on the block length of the code. 	
	\begin{thm} \cite{PraLalKum} \label{thm:block_length_bound_t2}
		The block length $n$ of an $(n,k,r,2)$ S-LR code $\cal{C}$ over $\mathbb{F}_q$ must satisfy:
		\begin{eqnarray}
			n  \geq  k + \left\lceil \frac{2k}{r}\right\rceil.  \label{eq:n_min_bound}
		\end{eqnarray}		
	\end{thm}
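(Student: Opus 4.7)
The plan is to deduce this block length lower bound as a direct integer refinement of the rate bound already established in Theorem~\ref{thm:rate_bound}, rather than setting up a new combinatorial argument from scratch.

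First I would invoke Theorem~\ref{thm:rate_bound}, which guarantees that any $(n,k,r,2)$ S-LR code over $\mathbb{F}_q$ satisfies
\[
\frac{k}{n} \;\leq\; \frac{r}{r+2}.
\]
Clearing denominators (all quantities are positive), this is equivalent to $k(r+2) \leq rn$, i.e.\ $kr + 2k \leq rn$, and dividing through by $r$ yields the real-valued inequality
\[
n \;\geq\; k + \frac{2k}{r}.
\]
Equivalently, $n-k \geq \tfrac{2k}{r}$.

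The second step is the integrality observation. Since $n$ and $k$ are both nonnegative integers, the difference $n-k$ is an integer, and the smallest integer that is at least $\tfrac{2k}{r}$ is precisely $\lceil \tfrac{2k}{r}\rceil$. Therefore $n-k \geq \lceil \tfrac{2k}{r}\rceil$, i.e.,
\[
n \;\geq\; k + \left\lceil \frac{2k}{r} \right\rceil,
\]
which is the claimed bound.

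There is no real obstacle here, since the heavy lifting was done in proving Theorem~\ref{thm:rate_bound}; the only issue to watch out for is the passage from the real inequality $n \geq k + 2k/r$ to the ceiling form, which must be justified using the fact that $n-k$, rather than $2k/r$, is what is forced to be an integer. Once that is noted, the statement follows immediately as a corollary of the rate bound.
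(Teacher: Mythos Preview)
Your proposal is correct and matches the paper's own treatment: the paper presents this block-length bound immediately after the rate bound and explicitly says the rate bound ``can be rephrased as a lower bound on the block length of the code,'' without giving a separate proof. Your derivation (rearranging $k/n \leq r/(r+2)$ to $n-k \geq 2k/r$ and then applying integrality of $n-k$ to obtain the ceiling) is exactly that rephrasing made explicit.
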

	An S-LR code with $t=2$ achieving the lower bound \eqref{eq:n_min_bound} on block length is said to be {\em block-length optimal}.  In \cite{SonYue}, the authors provide a construction of binary S-LR codes with $t=2$ achieving the lower bound on block length \eqref{eq:n_min_bound} for any $k,r$ such that $\lfloor \frac{k}{r} \rfloor \geq r$.  A rate-optimal construction based on Turan graphs of S-LR codes and appearing in \cite{PraLalKum} is presented below. 
	\begin{const}  \cite{PraLalKum} \label{sec:turan2} 
		Let $r$ be a positive integer. Let $n  =  \frac{(r+\beta)(r+2)}{2}$, with $1 \leq \beta \leq r$ and $\beta|r$. Consider a graph $\mathcal{G}_T$ with $b = \frac{2n}{r+2} = r + \beta$ vertices.  We partition the vertices into $x = \frac{r+\beta}{\beta}$
		partitions, each partition containing $\beta$ vertices.  Next, we place precisely one edge between any two vertices belonging to two distinct partitions. The resulting graph is known as a Tur\'an graph on $b$ vertices with $x$ vertex partitions. The number of edges in this graph is $\frac{x(x-1)\beta^2}{2} = n - b$ and each vertex is connected to exactly $(x-1)\beta = r$ other vertices.
		Let the vertices be labelled from $1$ to $b$ and the edges be labelled from $b + 1$ to $n$ in some random  order.  Now let each edge represent a distinct information bit and each node represent a parity bit storing the parity or binary sum of the information bits represented by edges incident on it. The code thus described corresponds to an $(n=\frac{(r+\beta)(r+2)}{2},k=n-b=\frac{r(r+\beta)}{2},r,2)$ S-LR code over $\mathbb{F}_2$ and hence a rate-optimal code for $t=2$. We will refer to these codes as Tur\'an-graph codes or as codes based on Tur\'an graphs.
	\end{const}
	\begin{example} [Tur\'an Graph] 
		An example Tur\'an graph with $x=3$, $\beta=3$, $b=9$ is shown in Figure \ref{fig:Turan_example}.
		
		\begin{figure}
			\begin{center}
				\includegraphics[width=3in]{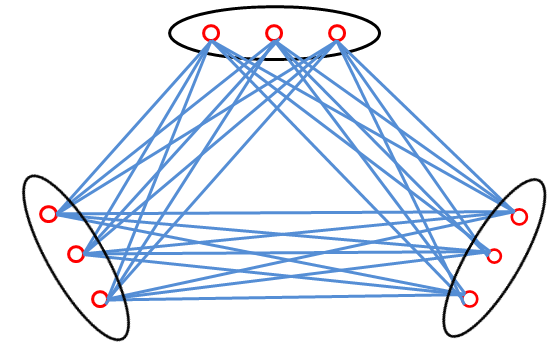} 
				\caption{A picture of the Tur\'an graph with $x=3$, $\beta=3$, $b=9$.}
				\label{fig:Turan_example}
			\end{center}
		\end{figure}
	\end{example}
	\begin{example}
		By definition of a Tur\'an Graph, it can be seen that any complete graph is also a Tur\'an Graph.  
	\end{example}
	\begin{note}
		With respect to the example above, with a complete graph as the example Tur\'an graph, it can be verified that the code obtained by applying Construction \ref{sec:turan2} to complete graph is also an instance of a code  with availability for $t=2$. This is because the sum of all the parity bits represented by nodes of the complete graph is 0. This parity check gives the second recovery set for all the parity bits represented by nodes of the complete graph making the code an availability code for $t=2$.  Since the rate of an availability code cannot exceed that of an S-LR code, it follows that the resultant code is rate optimal as an {\em availability code} for $t=2$.
	\end{note}
	\begin{note} [MSW Optimality of Tur\'an-graph codes]  
		It can be shown (see \cite{PraLalKum}) that the MSW sequence $\{e_i: 1 \leq i \leq b_1 \}$ given in Theorem~\ref{thm:msw_sequence} of Chapter \ref{ch:LRSingleErasure} with $b=b_1 = \lceil \frac{2n}{r+2} \rceil$ provides an upper bound on the MSWs (or equivalently the GHWs) of the dual of an $(n,k,r,2)$ S-LR code.   It was shown in \cite{PraLalKum} that Tur\'an-graph codes achieve this upper bound on MSW. Thus the dual code of an $(n,k,r,2)$ Tur\'an-graph code has the largest possible MSWs compared to the dual of any $(n,k,r,2)$ S-LR code.
	\end{note}
	
	\subsection{Prior Work: $t \geq 3$ Erasures}
	
	%  	In the following we give a survey of results for S-LR codes for $t=3$. We present the lower bound on block length for S-LR codes for $t=3$ given in \cite{SonYue} in the following. We also briefly mention about constructions given in \cite{SonYue} which achieves this lower bound on block length. The Theorem \ref{thm:block_length_bound_t3} given in the following appeared in \cite{SonYue}.
	\paragraph{Results for $t=3$}  The lower bound on block length for S-LR codes for the case of $t=3$ given in \cite{SonYue} is presented below. 
	\begin{thm} \cite{SonYue} \label{thm:block_length_bound_t3}
		Let $\cal{C}$ denote an $(n,k,r,3)$ S-LR code over a finite field $\mathbb{F}_q$. Then the block length $n$ must satisfy:
		\bea
		n \geq k+\left\lceil\frac{2k+\lceil\frac{k}{r}\rceil}{r}\right\rceil. \label{eq:min_blk_length_t3}
		\eea		
	\end{thm}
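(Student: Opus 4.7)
My plan is to combine the $t=2$ block-length bound of Theorem \ref{thm:block_length_bound_t2} with an additional parity-check count that is forced by the third erasure. Since every $(n,k,r,3)$ S-LR code is in particular an $(n,k,r,2)$ S-LR code, Theorem \ref{thm:block_length_bound_t2} immediately supplies $n - k \geq \lceil 2k/r \rceil$, which already accounts for the ``$2k$'' term in the desired bound. The remaining task is to show that the $t=3$ guarantee forces at least $\lceil k/r \rceil$ further linearly independent parity checks beyond those that the $t=2$ argument produces, and then to add the two contributions together inside the outer ceiling.

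To set up the counting, I would first use the single-erasure consequence of $t=3$: for every coordinate $i \in [n]$ there exists a dual codeword $\underline{h}_i \in \mathcal{C}^{\perp}$ of weight at most $r+1$ with $i \in \text{supp}(\underline{h}_i)$. Choose a linearly independent subfamily $\mathcal{F} = \{\underline{h}_1,\ldots,\underline{h}_b\}$ whose supports $S_1,\ldots,S_b$ cover $[n]$; then $b \leq n-k$, and within each $S_j$ we may designate one coordinate as a ``parity-like'' coordinate unique to that check, so that each $\underline{h}_j$ binds at most $r$ ``information-like'' coordinates. This is what ultimately produces the $r$ in the denominator. The $t=2$ analysis then amounts to showing that each information-like coordinate must be bound at least twice, giving $rb \geq 2k$.

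To extract the extra $\lceil k/r \rceil$ checks, I would examine configurations that are forbidden only by the third erasure: three erased coordinates $c_{i_1}, c_{i_2}, c_{i_3}$ such that every local support containing any $i_j$ also contains one of the other two. Sequential recovery of these three erasures would fail, so $\mathcal{F}$ must contain an additional codeword that ``separates'' this triple. Formalizing this, I would argue via a shortening route in parallel: pick a support $S_j \in \mathcal{F}$, shorten $\mathcal{C}$ at coordinates of $S_j$, observe that the resulting code is still an S-LR code with the same $r$ but smaller $k$ and $n$, and apply Theorem \ref{thm:block_length_bound_t2} to the shortened code. Iterating the shortening and tracking how the co-dimension drops per step is intended to yield the extra $\lceil k/r \rceil$ term.

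The main obstacle will be ensuring that the $\lceil k/r \rceil$ additional codewords produced by the $t=3$ analysis are genuinely linearly independent of those already counted by the $t=2$ argument, rather than being absorbed into them via overlapping supports. A careful choice of $\mathcal{F}$, together with an incidence argument on the bipartite ``coordinate-vs-check'' graph, should resolve this. Edge cases involving the two nested ceilings, in particular the behavior when $r \nmid k$ or when some $|S_j| < r+1$, will need separate verification in order to pin down the bound in its sharp ceiling-on-ceiling form rather than a weaker rounded version.
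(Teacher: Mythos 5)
The theorem you are asked to prove is cited in the paper from Song and Yuen (\cite{SonYue}) and is not proved there, so there is no internal proof to compare against; the evaluation has to be on the merits of your proposal alone.

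Your decomposition has a genuine gap. The bound \eqref{eq:min_blk_length_t3} is $n-k \geq \left\lceil (2k+\lceil k/r\rceil)/r\right\rceil$, and this is \emph{not} the sum $\lceil 2k/r\rceil + \lceil k/r\rceil$; the latter is roughly $3k/r$ while the former is roughly $2k/r + k/r^2$. Your plan to establish that the third erasure forces $\lceil k/r\rceil$ \emph{additional linearly independent} parity checks, on top of the $\lceil 2k/r\rceil$ from the $t=2$ argument, would yield $n-k \geq \lceil 2k/r\rceil + \lceil k/r\rceil$, which is strictly stronger than the target and, in fact, false. The product of two $[r+1,r]$ single-parity-check codes is a $(n=(r+1)^2, k=r^2, r, 3)$ S-LR code with $n-k = 2r+1 = \lceil (2r^2+r)/r\rceil$ exactly matching the claimed bound, yet $\lceil 2k/r\rceil + \lceil k/r\rceil = 3r > 2r+1$ for $r\geq 2$. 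So the intermediate claim your argument would need is refuted by the very code that certifies the bound's tightness. The obstacle you flag at the end (``ensuring that the $\lceil k/r\rceil$ additional codewords $\ldots$ are genuinely linearly independent of those already counted'') is not a technicality to be cleaned up; it is the point at which the approach fails, because those extra codewords simply do not exist.

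The correct route is not to separate the two contributions but to run a single weight-counting argument on the matrix $H_1$ whose rows span the low-weight ($\leq r+1$) dual codewords, with $m = \text{rank}(H_1) \leq n-k$. Because sequential recovery from $t=3$ erasures requires the code defined by $H_1$ to have minimum distance $\geq 4$, columns of $H_1$ must be distinct and no column may have weight $0$; from this one bounds the number of weight-$1$ columns (at most $m$, indeed at most those coming from a diagonal block) and the number of weight-$2$ columns, and compares the total count of ones row-wise ($\leq m(r+1)$) with the column-wise count weighted by these constraints. The nested ceiling $\lceil k/r\rceil$ emerges naturally as the minimum possible number of weight-$1$ columns rather than as an independent additive block of parity checks. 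The iterated shortening idea also will not deliver the correct accounting, because shortening at $S_j$ changes $k$, $m$, and the residual locality structure simultaneously, so the contributions do not telescope into the closed form you want. If you want a working template, the paper's Theorem \ref{thm:min_len3} and its proof in Appendix \ref{app:min_len3} carry out precisely this kind of column-weight counting for the binary $t=3$ case.
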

	The constructions of S-LR codes for $t=3$ given in \cite{SonYue} include the product code (product of two $[r+1,r]$ single parity check codes) which achieves the maximum possible rate based on the bound \eqref{eq:min_blk_length_t3}.  However, the product code cannot provide codes for all values of $k,r$. To address this, additional constructions were provided in \cite{SonYue} for S-LR codes with $t=3$ achieving the lower bound on block length \eqref{eq:min_blk_length_t3} for almost all $k,r$ such that $\lceil \frac{k}{r} \rceil \geq r$ (for the precise conditions on $k,r$ please see \cite{SonYue}).
	
	\paragraph{Results for General $t$}	
	The following conjecture on the maximum achievable rate of an $(n,k,r,t)$ S-LR code appeared in \cite{SongCaiYue_L}. 
	\begin{conj} \cite{SongCaiYue_L} [Conjecture] \label{Conjecture}
		Let $\cal{C}$ denote an $(n,k,r,t)$ S-LR code over a finite field $\mathbb{F}_q$. Let $ m=\lceil \log_r(k) \rceil$. Then:
		\bean
		\frac{k}{n} \leq \frac{1}{1+\sum_{i=1}^{m}\frac{a_i}{r^i}}, \text{ where } a_i \geq 0, \ \ a_i \in \mathbb{Z}, \ \ \sum_{i=1}^{m} a_i = t.
		\eean
	\end{conj}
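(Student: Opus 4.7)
The plan is to reformulate the rate inequality as a lower bound
\[
n - k \ \geq \ k \sum_{i=1}^{m} \frac{a_i}{r^i} \qquad \text{for some valid } (a_i) \text{ with } \textstyle\sum_i a_i = t,
\]
and then to extract that many linearly independent dual codewords of \calc via a hierarchical recursive argument that iterates the MSW-type bound of Theorem~\ref{thm:msw_sequence}. Since $n - k = \dim \calc^{\perp}$, producing this many independent rows in the parity-check matrix immediately yields the bound.

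First, I would exploit the fact that $t$-erasure sequential recovery implies single-erasure recovery: for every code position $j$, some codeword of $\calc^{\perp}$ of weight $\leq r+1$ contains $j$ in its support. Thus $\calc^{\perp}$ is a canonical dual code in the sense of Definition~\ref{def:Can_dualcode_ch2}, and Theorem~\ref{thm:msw_sequence} applied at the top level yields a first batch of roughly $k/r$ linearly independent depth-$1$ parity checks, each of weight $\leq r+1$. For $t \geq 2$, the idea is to iterate: having fixed such a weight-$(r+1)$ parity check with support $S_1$, shorten \calc on $S_1$ (as in the proof of Theorem~\ref{thm:dmin1}) and argue that the shortened code inherits an S-LR structure capable of sequential recovery from $t-1$ erasures with the same $r$. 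Applying the bound inductively to the shortened code yields a further batch of depth-$2$ parity checks numbering about $k/r^2$, and iterating at most $m = \lceil \log_r k \rceil$ times exhausts the recursion since the shortened code eventually has dimension below $1$.

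The counting then gives $\sum_i a_i k / r^i$ independent rows, where the integer $a_i$ records how many of the $t$ total recursion steps produced a batch of depth-$i$ checks; hence $\sum_i a_i = t$ and the depth is capped at $m$. The $1/r^i$ scaling comes from the fact that the shortened code at depth $i$ has dimension of order $k/r^{i-1}$ while the new local parity checks still have weight $\leq r+1$, so each contributes one fresh row per $\approx r$ positions in the shortened code.

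The principal technical obstacle, and in my view the reason the bound remained conjectural for $t \geq 4$, is verifying linear independence of the parity checks produced at different depths: overlapping recovery sets can cause a depth-$i$ check to lie in the span of shallower ones, breaking the naive count. I would address this by choosing the successive recovery sets so that their ``fresh'' supports are as disjoint as possible, mirroring the large-girth structure that underpins the Moore-graph-based rate-optimal constructions referenced in the chapter introduction, and by a careful induction on $t$ that tracks both the shortened code's dimension and the span of the already-extracted checks. I would expect the extremal sequence $(a_i)$ realizing the bound to be $a_1 = 2$ with $a_i = 1$ for $2 \leq i \leq t-1$ (absorbing any remainder into $a_m$), which specializes correctly to $r/(r+2)$ for $t = 2$ and $r^2/(r+1)^2$ for $t = 3$, and is consistent with the Moore bound for $(r+1)$-regular graphs of girth $t+1$.
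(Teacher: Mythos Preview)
Your proposed approach is quite different from the paper's, and it has two concrete gaps.

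First, the recursion mechanism does not work as stated. You claim that shortening $\calc$ on the support $S_1$ of a local parity check yields an S-LR code with sequential-recovery parameter $t-1$. In fact shortening \emph{preserves} $t$: if $\underline{c}$ lies in the shortened code (so $c_i=0$ for $i\in S_1$) and $t$ positions in $S_1^c$ are erased, the original code's sequential recovery applies verbatim, with any helper symbol from $S_1$ simply known to be zero. Since $t$ does not drop, your induction on $t$ never terminates, and you have no mechanism that forces the appearance of a genuinely new ``depth-$2$'' layer of parity checks linearly independent of the first batch. This is exactly the obstruction you flag yourself, and shortening alone does not resolve it.

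Second, your guessed extremal sequence $a_1=2,\ a_i=1$ for $i\geq 2$ is wrong for $t\geq 4$. The paper's tight bound (Theorem~\ref{rate_both}) corresponds to $a_i=2$ for $1\leq i\leq \lfloor t/2\rfloor$ (with a trailing $a_{(t+1)/2}=1$ when $t$ is odd). For $t=4$, $r=3$ your sequence gives the upper bound $27/49$, whereas the correct tight value is $9/17$; the discrepancy shows your heuristic is not capturing the actual structure. The paper's route is entirely different: it writes the span of all weight-$\leq r{+}1$ dual codewords as a matrix $H_1$, proves (via the minimum-distance constraint $d_{\min}\geq t+1$) that $H_1$ must assume a specific block-staircase form in which successive blocks $D_i,A_i$ have columns of weight one, and then extracts the rate bound from a chain of row/column weight inequalities on that staircase. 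No shortening or induction on $t$ is used; the layered structure you are trying to build by recursion is instead forced directly by a structural lemma on $H_1$.
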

	As will be seen, the tight upper bound on rate of an $(n,k,r,t)$ S-LR code derived in Section~\ref{sec:t_erasures} of the present chapter proves the above conjecture by identifying the precise value of the coefficients $a_i$ appearing in the conjecture. Chapter~\ref{ch:SeqConstr} provides constructions of S-LR codes achieving the upper bound on rate derived in Section~\ref{sec:t_erasures} for any $r,t$ with $r \geq 3$.
	
	\section{Contributions of the Thesis to S-LR codes} \label{sec:seq_contributions} 
	
	This chapter contains one of the principal contributions of this thesis, namely the derivation of a tight upper bound (Theorem~\ref{rate_both}) on the rate of an $(n,k,r,t)$ S-LR code.  Proof that this bound is tight (achievable) follows from the matching constructions of S-LR codes provided in the following chapter, Chapter~\ref{ch:SeqConstr}.  There are several other results as well and these are summarized in Fig.~\ref{fig:flowchart_ch3}.  
	
	These include:  
	\ben
	\item Case of S-LR codes with $t=2$:
	\ben
	\item A {\bf rate-optimal construction} of S-LR codes achieving the upper bound on rate given in Theorem \ref{thm:rate_bound}. 
	\item A {\bf block-length-optimal construction} of S-LR codes which achieves the lower bound on block length given in Theorem \ref{thm:block_length_bound_t2}.  This construction extends the range of parameters ($k,r$) of block length optimal constructions compared to block length optimal constructions given in \cite{PraLalKum,SonYue}.
	\item A {\bf characterization} of rate-optimal S-LR codes for $t=2$.
	\item  {\bf An upper bound on the dimension} of an S-LR code over $\mathbb{F}_2$ for the case when one is given the dimension $m$ of the subspace of the dual code spanned by codewords of weight $\leq r+1$.  It may not always be possible to achieve the lower bound on block length given in Theorem \ref{thm:block_length_bound_t2}.  In such situations, this upper bound on code dimension can prove useful.  We also provide here {\bf a construction achieving this upper bound on code dimension} for a family of values of $m$ and locality parameter $r$. 
	\een
	\item Case of S-LR codes with $t=3$:
	\ben
	\item A {\bf lower bound on block length} of an S-LR code over $\mathbb{F}_2$ for a given $k,r$ for $k \leq r^{1.8}-1$.This lower bound is shown to be tighter than the previously known lower bound on block length given in Theorem \ref{thm:block_length_bound_t3}. This is followed by pointing out that the {\bf construction of a short block length S-LR code } given in \cite{BalPraKum} that generalizes an instance of the Tur\'an graph based construction \ref{sec:turan2}  has block length { \bf very close to our lower bound}. We give two specific examples of S-LR codes which achieve our lower bound on block length but does not achieve the lower bound on block length given in Theorem \ref{thm:block_length_bound_t3}.
	\een
	\item Case of S-LR codes for general $t$:
	\ben
	\item An {\bf upper bound on rate} of an S-LR code for any $r,t$ such that $r \geq 3$. This upper bound on rate is achievable and { \bf constructions achieving} it appear in Chapter \ref{ch:SeqConstr}. This upper bound on rate also proves the conjecture \ref{Conjecture} given in \cite{SongCaiYue_L}.
	\een
	\een
	
	\begin{figure}[h!]
		\begin{center}
			\includegraphics[width=5.5in]{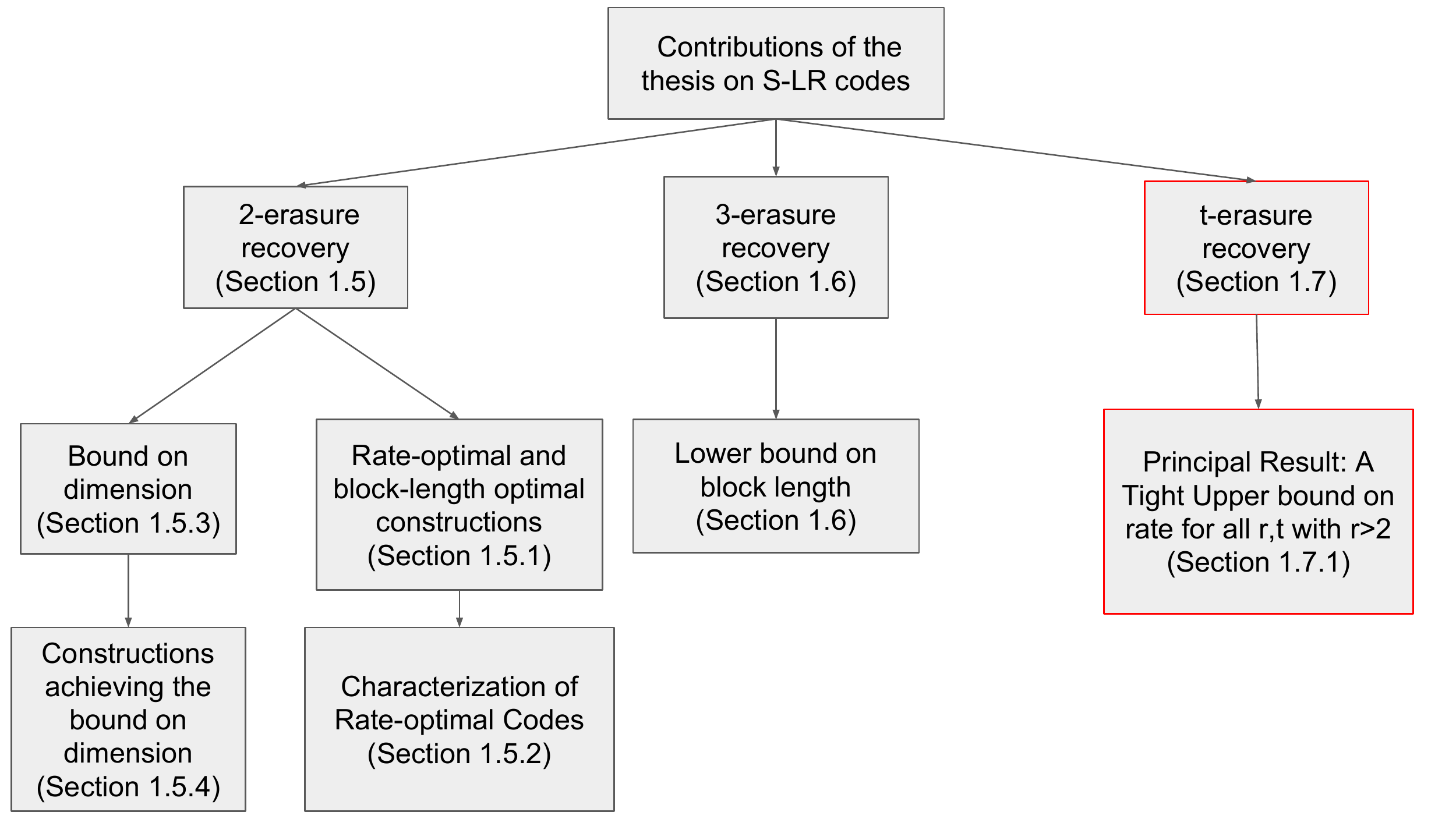} 
			\caption{The principal contributions of the thesis are on the topic of S-LR codes and the contributions here are summarized in the flowchart.   A major result is the tight upper bound on rate of an S-LR code for general $t$, that settles a conjecture.  Constructions achieving the upper bound on rate appear in the chapter following.  }
			\label{fig:flowchart_ch3}
		\end{center}
	\end{figure}
	
	\section{Contributions to S-LR codes with $t=2$} \label{sec:2_erasures}
	
	We have already described the contributions of this thesis to S-LR codes for $t=2$. We give the detailed results in the following.
	
	\subsection{Rate and Block-Length-Optimal Constructions}
	
	We begin by describing a generic, graph-based construction of an S-LR code for $t=2$. Special cases of this construction will yield rate and block-length-optimal constructions. 
	%	In the graph-based construction, both edges  (information symbols) and nodes (parity symbols) correspond to code symbols. The idea is that an edge (information symbol) is part of 2 distinct parity equations corresponding to the 2 nodes at its both ends and hence has 2 recovery equations with disjoint symbols participating in it. The recovery of any 2 code symbols among edges and nodes are handled using these 2 recovery equations or just 1 recovery equation corresponding to each node (parity symbol).
	
	\begin{const}   [Generic Graph-based Construction]  \label{const:graph_based}
		Let $\mathcal{G}=(V,E)$ be a graph having vertex set $V$ and edge set $E$. Let each edge in the graph represent an information symbol in the code and each node or vertex, represent a parity symbol which is equal to some linear combination of the information symbols represented by the edges incident on that node, where the linear combination is taken over a finite field $\mathbb{F}_q$ with each coefficient in the linear combination being non-zero.  Each codeword in the resultant systematic code \calc\ is comprised of the information symbols represented by the edges in $E$ and the parity symbols represented by the nodes in $V$.  The dimension of this code is clearly equal to $|E|$ and the block length $n=|V|+|E|$. Since the parity symbol represented by a node $v \in V$ is a non-zero linear combination of precisely $degree(v)$ information symbols, the corresponding parity check involves exactly $\text{degree}(v)+1$ code symbols. Thus the code has locality $\Delta = \max_{v \in V} \text{degree}(v)$. It is straightforward to see that the code can sequentially recover from $2$ erasures (S-LR code with $t=2$).  Thus the code given by this construction is an $(n=|E|+|V|,k=|E|,r=\Delta,t=2)$ S-LR code over $\mathbb{F}_q$.
	\end{const}
	
	We next describe a construction that is a specialization of Construction~\ref{const:graph_based}, in which all nodes with possibly a single exception, have the same degree. This construction yields block-length-optimal codes.
	
	\begin{const}   [Near-Regular Graph Construction]  \label{const:near_reg_graph}
		Let $\{r,k\}$ be positive integers. Let $2k=ar+b , 0\leq b \leq r-1$. Let $\mathcal{G}_1$ be a graph on a set of $m= \lceil \frac{2k}{r} \rceil$ nodes with `$a$' nodes having degree $r$ and for the case when $b>0$, the remaining node having degree $b$. The condition for existence of such a near-regular graph $\mathcal{G}_1$ is given in Remark~\ref{note:near_regular} below. Apply the generic graph-based-construction described in Construction \ref{const:graph_based} to the graph $\mathcal{G}_1$ by replacing $\mathcal{G}$ in Construction \ref{const:graph_based} with $\mathcal{G}_1$ and setting $q=2$.  It follows that the code thus obtained by applying the Construction \ref{const:graph_based} to the graph $\mathcal{G}_1$ with $q=2$ is an $(n=|E|+|V|=k+\lceil \frac{2k}{r} \rceil,|E|=k,r,2)$ S-LR code over $\mathbb{F}_2$. 
	\end{const}
	
	{\bf Block length optimality:}
	Construction~\ref{const:near_reg_graph} yields S-LR codes with $t=2$ which achieve the lower bound on block length given in \eqref{eq:n_min_bound}. Hence Construction~\ref{const:near_reg_graph} yields block-length-optimal codes.
	
	Construction \ref{const:near_reg_graph} yields rate-optimal codes for $t=2$ when $b=0$.  This is described below. 
	\begin{const}  [Regular-Graph Construction] \label{const:reg_graph} 
		In Construction \ref{const:near_reg_graph}, when $b=0$, i.e., when the graph  $\mathcal{G}_1$ is a regular graph, the resultant code is a binary S-LR code with $t=2$ that is rate-optimal. we will refer these codes as \emph{Regular-Graph Codes} or as the \emph{Codes based on Regular-Graphs}.
	\end{const} 
	
	{\bf Rate optimality:}
	Construction~\ref{const:reg_graph}  yields S-LR codes with $t=2$ which achieve the upper bound on rate given in \eqref{eq:rate_bound}. Hence  Construction~\ref{const:reg_graph} provides rate-optimal codes.
	\begin{proof}
		Follows since the code has rate $\frac{r}{r+2}$. 
	\end{proof}
	
	\begin{note} [Existence Conditions for Near-Regular Graphs] \label{note:near_regular} 
		The parameter sets $(k,r)$ for which near-regular graphs $\mathcal{G}_1$ of the form described in Constructions~\ref{const:near_reg_graph} and \ref{const:reg_graph} exist can be determined from the Erd\"os-Gallai theorem \cite{ErdosGallai}.  The conditions are 
		\bea
		\begin{array}{rl} \frac{2k}{r} = m\geq r+1, & b=0,  \\
			\lceil \frac{2k}{r} \rceil = m\geq r+2, & b>0.  \end{array} \label{OptimalConstRange}
		\eea
		As noted earlier, in \cite{SonYue}, the authors provide constructions of binary S-LR codes for $t=2$ achieving the lower bound on block length \eqref{eq:n_min_bound} for any $k,r$ such that $\lfloor \frac{k}{r} \rfloor \geq r$.
		Thus  the Constructions~\ref{const:near_reg_graph} extends the range of $(k,r)$ for which the lower bound on block length \eqref{eq:n_min_bound} is achieved since  Constructions~\ref{const:near_reg_graph} gives binary S-LR codes for $t=2$ achieving the lower bound on block length for any $(k,r)$ such that $ \lceil \frac{2k}{r} \rceil \geq r+2$. For the case $b=0$, Construction~\ref{const:reg_graph} extends the range of constructions of rate-optimal codes for $t=2$ compared to construction \ref{sec:turan2} as construction \ref{sec:turan2} requires that $\frac{2k}{r} = r+\beta$ with $\beta | r$ whereas construction \ref{const:reg_graph} needs only that $\frac{2k}{r} \geq r+1$.
	\end{note}
	
	\begin{example}
		An example $(18,12,4,2)$ S-LR code based on a regular graph and which is hence rate optimal, is shown in Fig.~\ref{fig:regular_example}.  In the example, a codeword takes on the form 
		\bean
		[I_1, \ \cdots, \ I_{12}, \ P_1=I_1+I_6+I_8+I_{12}, \ \cdots \ , \ P_6=I_5+I_6+I_7+I_{11} \ ]. 
		\eean
		
		\begin{figure}[h!]
			\begin{center}
				\includegraphics[width=3in]{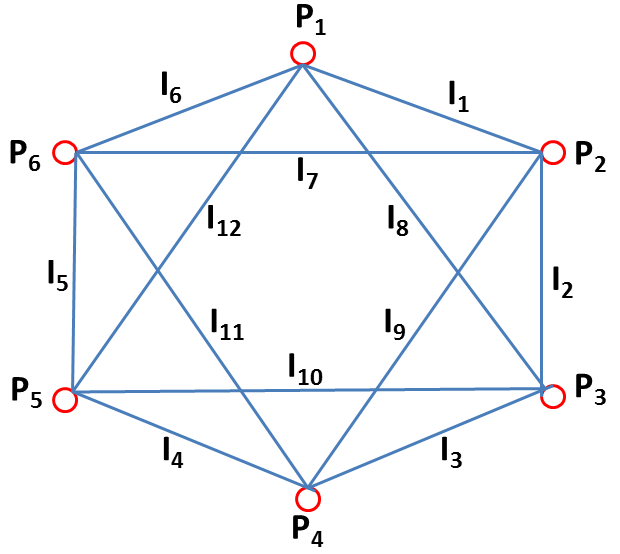} 
				\caption{Example of an $(n=18,k=12,r=4,t=2)$ S-LR code based on a regular. The edges $\{I_1, \cdots , I_{12}\}$ represent information symbols and the nodes $\{P_1, \cdots, P_6\}$ represent parity symbols.}
				\label{fig:regular_example}
			\end{center}
		\end{figure}
	\end{example}
	
	\subsection{Characterizing Rate-Optimal S-LR Codes for $t=2$}
	
	We characterize below rate-optimal S-LR codes for $t=2$.  We begin with Theorem \ref{thm:pc_mx_form} below, which describes a general form of the parity check matrix of a rate-optimal code for $t=2$.  This theorem will be used in our characterization.
	
	\begin{thm} \label{thm:pc_mx_form}  
		Let $\cal{C}$ denote an $(n,k,r,2)$ S-LR code over a finite field $\mathbb{F}_q$. Let $\cal{C}$ be rate-optimal for $t=2$. Let	\begin{eqnarray} \label{eq:C0_local}
			\mathcal{B}_{0} & = & \text{span}\left(\{{ \underline{c}} \in \mathcal{C}^{\perp} : |\text{supp}({\underline{c}})| \leq r+1 \} \right).
		\end{eqnarray} Then 
		\ben
		\item It must be that $r \mid (2k)$, $(n-k) = \frac{2k}{r}$, $\cal{C}^{\perp}=$$\mathcal{B}_{0}$ and 
		\item  Let ${\underline{c}_1,...,\underline{c}_{n-k}}$ be a basis of $\mathcal{B}_0$ such that $|\text{supp}({\underline{c}_i})| \leq r+1$, $\forall i \in [n-k]$. The parity-check matrix $H$ of $\cal{C}$ with $\underline{c}_1,...,\underline{c}_{n-k}$ as its rows, is upto a permutation of columns and scaling of rows, of the form:
		\bea
		[I_{(n-k)}|H'], \label{eq:H_form_rate_optimal}
		\eea
		where $I_{(n-k)}$ denotes the $((n-k) \times (n-k))$ identity matrix  and $H'$ is an $((n-k)\times k)$ matrix with each column having Hamming weight $2$ and each row having Hamming weight $r$. 
		\item There exists a parity-check matrix $H$ (upto a permutation of columns) of $\cal{C}$ of the form given in \eqref{eq:H_form_rate_optimal}.
		\een
	\end{thm}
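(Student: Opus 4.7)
The plan is to derive the three claims from rate-optimality by running a tight version of the argument for the rate bound in Theorem~\ref{thm:rate_bound}. The identity $n - k = 2k/r$ follows immediately from rearranging $k/n = r/(r+2)$, and integrality of $n-k$ then forces $r \mid 2k$. To prove $\mathcal{C}^{\perp} = \mathcal{B}_{0}$, I will consider the auxiliary code $\mathcal{C}' := \mathcal{B}_0^{\perp}$ and observe that $\mathcal{C} \subseteq \mathcal{C}'$ and that $(\mathcal{C}')^{\perp} = \mathcal{B}_0$ contains exactly the same collection of weight-$\leq r+1$ dual codewords as $\mathcal{C}^{\perp}$; hence $\mathcal{C}'$ is itself an $(n, \dim \mathcal{C}', r, 2)$ S-LR code. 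Applying Theorem~\ref{thm:rate_bound} to $\mathcal{C}'$ gives $\dim \mathcal{C}' \leq rn/(r+2) = k$, while $\dim \mathcal{C}' = n - \dim \mathcal{B}_0 \geq n - (n-k) = k$; forcing equality yields $\dim \mathcal{B}_0 = n-k$, i.e.\ $\mathcal{C}^{\perp} = \mathcal{B}_{0}$, establishing claim 1.

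For claim 2, I will fix an arbitrary basis $\underline{c}_1, \ldots, \underline{c}_{n-k}$ of $\mathcal{B}_0 = \mathcal{C}^{\perp}$ with each $|\text{supp}(\underline{c}_l)| \leq r+1$, write $S_l := \text{supp}(\underline{c}_l)$, set $d_j := |\{l : j \in S_l\}|$, and call a coordinate $j$ a \emph{pivot} when $d_j = 1$. The key lemma is that no $S_l$ contains two pivots: if $j, j' \in S_l$ were both pivots, then in any linear combination $\sum_{l'} \alpha_{l'} \underline{c}_{l'} \in \mathcal{B}_0$ the $j$- and $j'$-entries are each a fixed nonzero multiple of $\alpha_l$ alone, so they vanish simultaneously. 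Consequently no codeword in $\mathcal{B}_0$ satisfies $|\text{supp} \cap \{j,j'\}| = 1$, contradicting the $t=2$ sequential-recovery property applied to the erased pair $(c_j, c_{j'})$ (such a codeword is required, and since $\mathcal{B}_0 = \mathcal{C}^{\perp}$ already captures all weight-$\leq r+1$ dual codewords, the contradiction is genuine). Hence the total number of pivots $a$ satisfies $a \leq n-k$.

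Combining $\sum_l |S_l| = \sum_j d_j$, $\sum_j d_j \geq a + 2(n-a) = 2n - a$, and $\sum_l |S_l| \leq (n-k)(r+1) = n+k$ yields $a \geq n-k$. Thus $a = n-k$ and all inequalities are tight: every $|S_l| = r+1$, and each of the $k$ non-pivot coordinates satisfies $d_j = 2$. Since each $S_l$ then has exactly one pivot, the map $l \mapsto \pi(l) \in S_l$ is a bijection onto the set of pivots with $\pi(l) \notin S_{l'}$ for $l' \neq l$. Permuting the columns of $H$ to place $\pi(1), \ldots, \pi(n-k)$ in the first $n-k$ positions makes the corresponding submatrix nonsingular and diagonal; scaling each row by the inverse of its diagonal entry produces $H = [I_{n-k} \mid H']$. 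The weight counts then force each column of $H'$ (indexed by a non-pivot coordinate) to have Hamming weight $d_j = 2$, and each row of $H'$ to have weight $r+1 - 1 = r$. Claim 3 now follows immediately because $\mathcal{B}_0$ admits a basis of weight-$\leq r+1$ codewords by its very definition, and the above construction produces the required $H$.

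The main obstacle will be carefully proving the ``at-most-one-pivot-per-$S_l$'' lemma: this is the step where the $t=2$ sequential-recovery condition, a priori a combinatorial condition quantified over every weight-$\leq r+1$ codeword in $\mathcal{C}^{\perp}$, must be translated into a linear-algebraic statement about coefficients in an arbitrary basis expansion of $\mathcal{B}_0$. The identity $\mathcal{B}_0 = \mathcal{C}^{\perp}$ established in claim 1 is essential, since it guarantees every weight-$\leq r+1$ codeword is visible through the chosen basis; once the lemma is in hand, the remaining structural conclusions are forced purely by equality in the rate-bound counting.
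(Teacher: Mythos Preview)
Your proof is correct and follows essentially the same approach as the paper, which simply cites the equality-case analysis of the rate-bound proof in \cite{PraLalKum}; you have spelled that analysis out in full, with your ``no two pivots in one $S_l$'' lemma playing the role of the weight-one-column constraint there. One small point to make explicit: the inequality $\sum_j d_j \geq a + 2(n-a)$ tacitly assumes $d_j \geq 1$ for every coordinate $j$ (so that non-pivots have $d_j \geq 2$), which follows from all-symbol locality---every $j$ lies in the support of some weight-$\leq r+1$ dual codeword and hence in some $S_l$---but you should state this.
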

	The proof of Theorem~\ref{thm:pc_mx_form} above follows as a consequence of the proof of the Theorem \ref{thm:rate_bound} given in \cite{PraLalKum}.  This is because, as can be seen from the proof of the Theorem \ref{thm:rate_bound} given in \cite{PraLalKum}, that for achieving the upper bound on rate given in Theorem \ref{thm:rate_bound}, one needs a parity check matrix where each row has Hamming weight $=r+1$, where there are $n-k$ columns of Hamming weight one and the remaining $k$ columns having weight two.  In the following, we exploit this result to characterize rate-optimal codes for $t=2$.
	
	\begin{thm} \label{thm:structure_2erasure}
		Let $\cal{C}$ denote an $(n,k,r,2)$ S-LR code over a finite field $\mathbb{F}_q$. Let $\cal{C}$ be rate-optimal for $t=2$. Then $\cal{C}$ (after possible permutation of code symbols or co-ordinates) must have one of the forms below:
		
		\ben
		\item[(a)]
		\bean		
		{\cal C}  =  \ \ \underbrace{{\cal C}_1 \times \cdots \times {\cal C}_{\ell}}_{\text{\normalsize direct product of $\ell$ $[r+2,r,3]$ MDS Codes }} %\cite{PraKamLalKum}
		\eean
		\item[(b)]
		\bean
		{\cal C} & = & \ \ \underbrace{{\cal C}}_{\text{\normalsize a single regular-graph code over $\mathbb{F}_q$}} 
		\eean
		\item[(c)]
		\bean
		{\cal C} & = & \ \ \underbrace{{\cal C}_1 \times \cdots \times {\cal C}_{\ell}}_{\text{\normalsize direct product of $\ell$ $[r+2,r,3]$ MDS Codes}} \ \ \times \ \ \underbrace{{\cal C}_{\ell+1}}_{\text{\normalsize and a regular-graph code over $\mathbb{F}_q$}}. \\
		\eean
		\een
		
		where the phrase `regular-graph code over $\mathbb{F}_q$' refers to the code obtained by applying the generic graph based construction over $\mathbb{F}_q$ described in Construction~\ref{const:graph_based} to a regular graph $\mathcal{G}_R$ of degree $r$ by replacing $\mathcal{G}$ in Construction~\ref{const:graph_based} with $\mathcal{G}_R$.
	\end{thm}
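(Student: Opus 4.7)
My plan is to bootstrap from Theorem~\ref{thm:pc_mx_form}: after a permutation of coordinates and a rescaling of rows, the rate-optimal code $\mathcal{C}$ admits a parity-check matrix of the form $H=[I_{n-k}\mid H']$ in which every column of $H'$ has weight exactly $2$ and every row has weight exactly $r$. I would interpret $H'$ as an $r$-regular labelled multigraph $\mathcal{G}$ whose vertices correspond to rows and whose edges correspond to columns (the two weight-$2$ positions of a column give the endpoints, the two nonzero field entries serve as edge labels). Decomposing $\mathcal{G}$ into its connected components $\mathcal{G}_1,\ldots,\mathcal{G}_{\ell+1}$ puts $H$ into block-diagonal form, so $\mathcal{C}$ splits as a direct product $\mathcal{C}_1\times\cdots\times\mathcal{C}_{\ell+1}$. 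Each factor $\mathcal{C}_i$ is itself a graph-based code of locality $r$ that inherits sequential two-erasure recovery (since every two-erasure pattern sits inside a single component), and since the overall rate $r/(r+2)$ is a weighted average of factor rates each bounded above by $r/(r+2)$ via Theorem~\ref{thm:rate_bound}, every factor is forced to be individually rate-optimal. The theorem therefore reduces to classifying a single connected component.

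I would split the classification by $|V(\mathcal{G}_i)|$. When $|V(\mathcal{G}_i)|=2$, $r$-regularity forces exactly $r$ parallel edges between the two vertices, the resulting code has parameters $[r+2,r]$, and the sequential recoverability of every pair of parallel edges is equivalent to every $2\times 2$ minor of the associated $2\times r$ label matrix being nonzero, i.e.\ to the $d_{\min}=3$ MDS condition, placing the factor in case (a). When $|V(\mathcal{G}_i)|\geq 3$, I would rule out all multi-edges: suppose $u,v$ are joined by $m\geq 2$ parallel edges $f_1,\ldots,f_m$ with labels $(\alpha_j,\beta_j)$, and consider recovering $f_1$ under the erasure pattern $\{f_1,f_2\}$. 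Any dual codeword of weight at most $r+1$ supporting $f_1$ but not $f_2$ must arise as a combination $\sum_w \lambda_w P_w$ with $\lambda_u\alpha_2+\lambda_v\beta_2=0$. Recoverability applied to the other pairs $\{f_2,f_j\}$ forces every $2\times 2$ minor among the $(\alpha,\beta)$-labels to be nonzero, so the coefficient of each $f_3,\ldots,f_m$ in $\beta_2 P_u-\alpha_2 P_v$ stays nonzero; a direct count then gives support weight $2+(m-1)+2(r-m)=2r-m+1$. The constraint $2r-m+1\leq r+1$ forces $m\geq r$, combined with $m\leq r$ it gives $m=r$, and $\{u,v\}$ would then be an isolated component, contradicting $|V(\mathcal{G}_i)|\geq 3$. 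Hence $\mathcal{G}_i$ is a simple $r$-regular graph, and a short case analysis on the qualitative two-erasure patterns (two parity vertices, two adjacent edges, two non-adjacent edges, and one parity with one edge) confirms that $\mathcal{C}_i$ is a regular-graph code in the sense of Construction~\ref{const:reg_graph}. Aggregating across components according to whether only size-two, only size $\geq 3$, or both types occur then yields exactly the three alternatives (a), (b) and (c).

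The main obstacle is the multi-edge exclusion: I must verify that no subtler linear combination $\sum_w \lambda_w P_w$ with more than two nonzero $\lambda_w$ can produce a short dual codeword separating $f_1$ from $f_2$. The clean resolution is a cost-accounting observation that, for $r\geq 2$, turning on any $P_w$ with $w\notin\{u,v\}$ adds the parity coordinate of $w$ plus generically $r$ new edge coordinates at $w$, and can cancel at most a single existing edge coordinate in the current support, so the support weight strictly increases and only two-row combinations $\lambda P_u+\mu P_v$ remain viable. Once this accounting is made rigorous, both the rate-optimality preservation under direct product and the per-component classification reduce to bookkeeping arguments that are familiar from the Tur\'an-graph analysis behind Construction~\ref{sec:turan2}.
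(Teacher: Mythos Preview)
Your overall strategy coincides with the paper's: both start from the $[I_{n-k}\mid H']$ form of Theorem~\ref{thm:pc_mx_form}, read $H'$ as an $r$-regular labelled multigraph, and reduce the classification to showing that for any pair of vertices (equivalently, rows $R_1,R_2$) the edge multiplicity $m=|\text{supp}(R_1)\cap\text{supp}(R_2)|$ lies in $\{0,1,r\}$. The $\ell=2$ computation giving weight $2r-m+1$ is identical to the paper's, as is the reading of the $m=r$ case as an $[r+2,r,3]$ MDS factor and the $m\le 1$ case as a simple-graph factor.

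The gap is in your incremental cost-accounting for $\ell>2$. The claim that turning on a further $P_w$ ``adds the parity coordinate of $w$ plus generically $r$ new edge coordinates and cancels at most a single existing edge'' fails on both counts. Edges from $w$ to $u$ or $v$ are already in the support of $\lambda P_u+\mu P_v$, so the number of genuinely new edge coordinates is only $r-a-b$ where $a,b$ are the $w$--$u$ and $w$--$v$ multiplicities; and a single value of $\lambda_w$ can simultaneously kill one $w$--$u$ edge \emph{and} one $w$--$v$ edge whenever the two one-variable conditions happen to agree, which the S-LR hypothesis does not preclude. Concretely, with $r=4$, $m=2$ and a vertex $w$ joined by two edges each to $u$ and to $v$, the weight drops from $7$ to $6$ on adding $P_w$, so strict monotonicity fails even though the conclusion (weight $>r+1=5$) survives in that instance.

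The paper handles $\ell\ge 3$ with a global estimate rather than an incremental one: for an arbitrary $\ell$-row combination it lower-bounds the weight by $\ell+\sum_{(i,j)\in S}(s_{ij}-1)+f$, uses the identities $f+2\sum s_{ij}=\ell r$ and $|S|\le\binom{\ell}{2}$, and obtains $\ell-r-1\le \tfrac{\ell}{2}(\ell-1-r)$, which for $2<\ell\le r$ forces $\ell\le 2$, while $\ell\ge r+1$ is dispatched by the $\ell$ identity-column coordinates alone. Replacing your cost-accounting sketch with this counting argument closes the gap; the rest of your plan then matches the paper's proof.
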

	
	\begin{proof}  	
		For proof please refer to the Appendix \ref{app:structure_2erasure}.
	\end{proof}
	
	\subsection{Upper Bound on Dimension of an S-LR Code for $t=2$}
	
	Let $\mathcal{C}$ be an $(n,k,r,2)$ S-LR code. Let $\mathcal{B}_{0} = \text{span}\left(\{{ \underline{c}} \in \mathcal{C}^{\perp} : |\text{supp}({\underline{c}})| \leq r+1 \} \right)$. 
	It can be seen that the lower bound on block length \eqref{eq:n_min_bound} is achieved iff $n-k =\text{dim}(\mathcal{B}_{0}) =  \lceil \frac{2k}{r} \rceil$. This implies $\frac{\text{dim}(\mathcal{B}_{0}) r}{2} \geq k >  \frac{(\text{dim}(\mathcal{B}_{0})-1) r}{2}$. Hence if $\frac{(\text{dim}(\mathcal{B}_{0})-1) r}{2} \geq k$ the lower bound \eqref{eq:n_min_bound} is not achievable.
	
	We will now see that for some values of $\text{dim}(\mathcal{B}_{0})$, $\frac{(\text{dim}(\mathcal{B}_{0})-1) r}{2} \geq k$ for binary codes.
	We derive an upper bound on dimension  $k$ of $\mathcal{C}$ for a given $\text{dim}(\mathcal{B}_{0})$ that is tighter than \eqref{eq:n_min_bound} for small values of $\text{dim}(\mathcal{B}_{0})$.
	
	The idea is simply to place as many distinct vectors as possible in the columns of the parity check matrix with $\text{dim}(\mathcal{B}_{0})$ rows under the constraint that the total number of ones in all the columns put together is $\leq \text{dim}(\mathcal{B}_{0}) (r+1)$.

	\begin{thm} \label{thm:DimUpperBound_ch3}
		Let $\mathcal{C}$ be an $(n,k,r,2)$ S-LR code over $\mathbb{F}_2$. Let
		\begin{eqnarray} 
			\mathcal{B}_{0} & = & \text{span}\left(\{ \underline{c} \in \mathcal{C}^{\perp} : |\text{supp}(\underline{c})| \leq r+1 \} \right), \notag
		\end{eqnarray}
		and let $m=\text{dim}(\mathcal{B}_{0})$. Then
		\bea
		k \leq \min_{1 \leq L \leq m} \frac{m(r-L)+ \sum_{i=1}^{L} (L+1-i) {m \choose i}}{L+1}. \label{eq:blocklength_4}
		\eea
	\end{thm}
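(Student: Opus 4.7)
The sketch flagged just above the statement --- packing as many distinct non-zero columns as possible into an $m$-row matrix subject to a total weight of $m(r+1)$ --- is exactly the right strategy, and my plan is to realize it through a carefully chosen parity-check matrix. First I will pick a basis $\underline{c}_1,\ldots,\underline{c}_m$ of $\mathcal{B}_0$ with each $\underline{c}_j$ of Hamming weight at most $r+1$ (such a basis exists because $\mathcal{B}_0$ is by definition the $\mathbb{F}_2$-span of the weight-$\leq r+1$ codewords of $\mathcal{C}^\perp$). Since $\mathcal{B}_0\subseteq\mathcal{C}^\perp$ has dimension $m$, I can extend these rows to a full parity-check matrix of $\mathcal{C}$; let $H_0$ be the $m\times n$ submatrix formed by these first $m$ rows and $h_1,\ldots,h_n\in\mathbb{F}_2^m$ its columns. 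Two facts are immediate: the total number of ones in $H_0$ is at most $m(r+1)$, and $\dim(\mathcal{C}^\perp)\geq m$ gives the crude dimension bound $k\leq n-m$.

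\textbf{Distinct and non-zero columns.} The non-trivial step, and the only one using the $t=2$ recovery hypothesis, is to show that the columns $h_1,\ldots,h_n$ are pairwise distinct and non-zero. Every $\underline{v}\in\mathcal{B}_0$ is of the form $\underline{w}^{T}H_0$ for some $\underline{w}\in\mathbb{F}_2^m$, so $v_i=\underline{w}\cdot h_i$. If $h_i=h_j$ for distinct $i,j$, then $v_i=v_j$ for every $\underline{v}\in\mathcal{B}_0$, so no weight-$\leq r+1$ codeword of $\mathcal{C}^\perp$ separates $c_i$ and $c_j$; consequently the erasure pattern $\{c_i,c_j\}$ admits no first sequential recovery step, contradicting the S-LR assumption. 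Similarly $h_i=0$ forces every $\underline{v}\in\mathcal{B}_0$ to have $v_i=0$, leaving $c_i$ with no local parity check and contradicting even single-erasure recovery. Hence the columns are distinct non-zero vectors of $\mathbb{F}_2^m$.

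\textbf{Counting and conclusion.} Fix $L\in\{1,\ldots,m\}$ and let $a_i$ be the number of columns of $H_0$ of Hamming weight exactly $i$. Distinctness gives $a_i\leq\binom{m}{i}$, $\sum_i a_i=n$, and the row-weight bound gives $\sum_i i\,a_i\leq m(r+1)$. I then estimate
\begin{eqnarray*}
(L+1)n & = & \sum_{i=1}^{m}(L+1-i)a_i+\sum_{i=1}^{m}i\,a_i \\
& \leq & \sum_{i=1}^{L}(L+1-i)\binom{m}{i}+m(r+1),
\end{eqnarray*}
where in the second line I use $a_i\leq\binom{m}{i}$ only in the positive-coefficient range $i\leq L$ and discard the non-positive terms $(L+1-i)a_i$ for $i\geq L+1$. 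Combining with $k\leq n-m$ and rearranging gives $(L+1)k\leq m(r-L)+\sum_{i=1}^{L}(L+1-i)\binom{m}{i}$; minimizing over $L$ yields the theorem.

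\textbf{Main obstacle.} The only genuinely delicate step is the distinctness argument, where one has to translate a column-equality in $H_0$ into an un-recoverable erasure pattern via the sequential recovery definition. Everything after that is a linear-programming relaxation of the column-weight distribution, amounting to careful bookkeeping.
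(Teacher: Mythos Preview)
Your proof is correct and follows essentially the same approach as the paper's: form the $m\times n$ matrix $H_0$ from a low-weight basis of $\mathcal{B}_0$, use the $t=2$ sequential-recovery property to force the columns to be distinct (and non-zero), and then combine the row-weight budget $m(r+1)$ with $a_i\le\binom{m}{i}$ and $k\le n-m$ via the same linear-inequality manipulation. If anything, you are slightly more explicit than the paper in justifying the distinctness and non-vanishing of the columns.
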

	\begin{proof}
		For proof please refer to the Appendix \ref{app:upperbounddim}.
	\end{proof}
	
	{\em Tightness of the above bound :}
	Let $1\leq L^* \leq m$. Let $r=\sum_{i=2}^{L^*} {m-1 \choose i-1}+J$ for $0 \leq J \leq {m-1 \choose L^*}-1$: substituting the value of $r$ given above in the RHS (Right Hand Side) of the bound \eqref{eq:blocklength_4} after removing the minimization in the RHS of  \eqref{eq:blocklength_4} over $L$ and setting $L=L^*$:
	\bea
	k  & \leq & \frac{m(r-L^*)+ \sum_{i=1}^{L^*} (L^*+1-i) {m \choose i}}{L^*+1}, \notag \\
	k & \leq & \frac{m(\sum_{i=2}^{L^*} {m-1 \choose i-1}+J)}{L^*+1}-\frac{ mL^*}{L^*+1}+ \frac{\sum_{i=1}^{L^*} (L^*+1-i) {m \choose i}}{L^*+1}, \notag \\
	%k & \leq  & \frac{(\sum_{i=2}^{L^*} i\frac{m}{i}{m-1 \choose i-1})+mJ}{L^*+1}-\frac{ mL^*}{L^*+1}+ \frac{\sum_{i=1}^{L^*} (L^*+1-i) {m \choose i}}{L^*+1}, \notag \\
	k & \leq & \frac{(\sum_{i=2}^{L^*}i {m \choose i})+mJ}{L^*+1}-\frac{ mL^*}{L^*+1}+ \frac{\sum_{i=1}^{L^*} (L^*+1-i) {m \choose i}}{L^*+1}, \notag \\
	k & \leq & \sum_{i=2}^{L^*} {m \choose i} +\frac{mJ}{L^*+1}. \label{eq:block_length_5}
	\eea
	Using similar algebraic manipulations, it can be shown that in bound \eqref{eq:blocklength_4} the minimum of the RHS for $r=\sum_{i=2}^{L^*} {m-1 \choose i-1}+J$ for $0 \leq J \leq {m-1 \choose L^*}-1$ is attained when $L=L^*$.  We skip this proof.
	By simple algebraic mainpulations it can be shown that \eqref{eq:block_length_5}, implies $k\leq \frac{(m-1)r}{2}$, for $r=\sum_{i=2}^{L^*} {m-1 \choose i-1}+J$ for $0 \leq J \leq {m-1 \choose L^*}-1$, for large values of $L^*$, for a fixed $m \geq 5$ and hence our bound is tighter than the bound \eqref{eq:n_min_bound}. We now show that the upper bound \eqref{eq:block_length_5} is actually achievable.
	\subsection{Dimension-Optimal Constructions for Given $m,r$:}
	We saw from \eqref{eq:block_length_5} that when  $r=\sum_{i=2}^{L^*} {m-1 \choose i-1}+J$ for $0 \leq J \leq {m-1 \choose L^*}-1$:
	\bea
	k & \leq & \sum_{i=2}^{L^*} {m \choose i} +\frac{mJ}{L^*+1}. \label{eq:block_length_6}
	\eea
	In this section, we give a construction for achieving the bound \eqref{eq:block_length_6} on $k$
	when gcd($L^*+1$,$m$)=$1$ and $(L^*+1) | J$. The construction is carried out in a simple manner as described below: 
	\begin{const}     
		Let $1\leq L^* \leq m$. Let  $r=\sum_{i=2}^{L^*} {m-1 \choose i-1}+J$, for $0 \leq J \leq {m-1 \choose L^*}-1$. Let gcd($L^*+1$,$m$)=$1$ and $(L^*+1) | J$. Let $ \ell = \frac{{m \choose L^*+1}}{m}$. For $2 \leq i \leq L^*$,let $A_i$ be an $m \times {m \choose i}$ matrix such that the set of columns of $A_i$ is equal to the set of all possible distinct $(m \times 1)$ binary vectors of Hamming weight $i$.  Let $S$ denote the set of all possible distinct $(m \times 1)$ binary vectors of Hamming weight $L^*+1$.  Next, define a relation $\sim$ in set $S$ as follows. For $a,b \in S$, $a \sim b$, if $b$ can be obtained by cyclically shifting the co-ordinates of $a$. It can be seen that this relation is an equivalence relation. Let $E_1,...E_{\ell}$ be the equivalence classes, each containing exactly $m$ vectors as gcd($L^*+1$,$m$)=$1$.
		
		Then the desired code is the code having parity-check matrix:
		\bean
		H=\left[\begin{array}{c|c|c|c|c|c|c|c|c}
			I_m & A_2 & A_3 &...& A_{L^*} & E'_{i_1} & E'_{i_2} & ... & E'_{i_{\frac{J}{L^*+1}}}
		\end{array} \right].
		\eean
	\end{const}
	where $E'_i$ is the $m \times m$ matrix with the set of columns of $E'_i$ equal to the set of $(m \times 1)$ vectors in $E_i$ and $S=\{i_1,...,i_{\frac{J}{L^*+1}}\} \subseteq \left[ \ell \right]$ with $|S|=\frac{J}{L^*+1}$.
	Note that the weight of each row of $H$ is exactly $r+1$ and since all the columns of $H$ are distinct, the code is an S-LR code with $t=2$. Note that the code with parity-check matrix $H$ defined as above achieves the bound \eqref{eq:block_length_6}. Hence the code is a $(n=m+\sum_{i=2}^{L^*} {m \choose i} +\frac{mJ}{L^*+1},k=\sum_{i=2}^{L^*} {m \choose i} +\frac{mJ}{L^*+1},r=\sum_{i=2}^{L^*} {m-1 \choose i-1}+J,2)$ S-LR code over $\mathbb{F}_2$ and has maximum possible dimension for the given $m,r$.
	
	\begin{const}     
		Let $1\leq L^* \leq m$. Let  $r=\sum_{i=2}^{L^*} {m-1 \choose i-1}+J$, for $0 \leq J \leq {m-1 \choose L^*}-1$. Let gcd($L^*+1$,$m$)=$1$ and $(L^*+1) | J$. For $2 \leq i \leq L^*$, let $A_i$ be an $m \times {m \choose i}$ matrix such that the set of columns of $A_i$ is equal to the set of all possible distinct $(m \times 1)$ binary vectors of Hamming weight $i$. 
		Let $D$ be the node-edge incidence matrix of a $J$-regular $L^*+1$- uniform simple hypergraph with exactly $m$ nodes.
		Now our final desired code is defined as the code with parity-check matrix:
		\bean
		H=\left[\begin{array}{c|c|c|c|c|c}
			I_m & A_2 & A_3 &...& A_{L^*} & D
		\end{array} \right].
		\eean
	\end{const}
	Note that the weight of each row of $H$ is exactly $r+1$ and since all the columns of $H$ are distinct, the code is an S-LR code for $t=2$. Note that the code with parity-check matrix $H$ defined as above achieves the bound \eqref{eq:block_length_6}. Hence the code is a $(n=m+\sum_{i=2}^{L^*} {m \choose i} +\frac{mJ}{L^*+1},k=\sum_{i=2}^{L^*} {m \choose i} +\frac{mJ}{L^*+1},r=\sum_{i=2}^{L^*} {m-1 \choose i-1}+J,2)$ S-LR code over $\mathbb{F}_2$ and has maximum possible dimension for the given $m,r$.
	
	\section{Contributions to S-LR codes with $t=3$} \label{sec:3_erasures}
	
	In this section, we adopt a slightly different perspective and we consider lower bounding the block length of S-LR codes with given $k,r$ and $t=3$ for binary codes. We first present a lower bound on the block length of binary S-LR codes with given $k,r$ and $t=3$ for $k \leq r^{1.8}-1$. This lower bound is shown to be tighter than the previously known lower bound on block length given in Theorem \ref{thm:block_length_bound_t3}. This is followed by pointing out that the construction of a short block length S-LR code with $t=3$ given in \cite{BalPraKum} that generalizes an instance of the Tur\'an graph based construction \ref{sec:turan2} has block length very close to our lower bound.
	
	\subsubsection{Lower Bound on Block Length} \label{sec:t_eq_3_bound} 
	
	As noted earlier in Theorem \ref{thm:block_length_bound_t3} (due to \cite{SonYue}):
	\bea
	\scalemath{1}{	n\geq k+\left\lceil\frac{2k+\lceil\frac{k}{r}\rceil}{r}\right\rceil }. \label{eq:song_3}
	\eea
	Constructions of S-LR codes with $t=3$ achieving the above bound \eqref{eq:song_3} were provided for almost all $k,r$ such that $\lceil \frac{k}{r} \rceil \geq r$ (almost the regime ($k \geq r^2$). The maximum possible rate for $t=3$ is achieved by the product code (product of 2 $[r+1,r,2]$ single parity check codes). Here again, we present a new lower bound on block length of binary S-LR codes with $t=3$. Simulation shows that this new lower bound is tighter than equation \eqref{eq:song_3} for the regime $k \leq r^{1.8}-1$.  We also provide a few sporadic examples where our new lower bound turns out to be tight.   
	\begin{lem} \label{lem:min_len3}
		Let $\mathcal{C}$ denote a binary code whose minimum distance $d_\text{min} \geq 4$. Let us assume in addition, that $\mathcal{C}$ possesses an $(M \times N)$ parity-check matrix $H$ over $\mathbb{F}_2$ such that each column of $H$ has Hamming weight equal to $2$. The matrix $H$ need not be full rank. Then, $N$ satisfies: 
		\begin{equation}
			N \leq \frac{(M+3)(M+1)}{4}+1.
		\end{equation}
	\end{lem}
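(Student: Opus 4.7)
My plan is to recast the claim as a bound on the edge count of a simple, triangle-free graph. Each weight-$2$ column of $H$ has the form $e_i + e_j$ for a unique pair $1 \le i < j \le M$, so the $N$ columns correspond to $N$ edges of a graph $G$ on vertex set $[M]$. The hypothesis $d_{\text{min}}(\mathcal{C}) \ge 4$ forces two structural facts. First, no two columns coincide (else their sum is a weight-$2$ codeword), so $G$ is simple. Second, no three distinct columns sum to zero; for three weight-$2$ vectors in $\mathbb{F}_2^M$ to sum to zero each coordinate must be covered an even number of times, and since three edges contribute exactly six vertex-incidences, each involved vertex appears $0$ or $2$ times, which forces the three edges to form a triangle. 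Hence $G$ is a simple, triangle-free graph on $M$ vertices with $N$ edges.

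Next, I would bound $N$ for such a $G$ by either an edge-removal recursion or a maximum-degree argument. For the edge-removal route, pick any edge $\{a,b\}$ of $G$ (if no edge exists, $N=0$); no vertex outside $\{a,b\}$ can be adjacent to both $a$ and $b$ without closing a triangle, so the neighbors of $a$ and of $b$ in $[M]\setminus\{a,b\}$ are disjoint and together number at most $M-2$. Counting $\{a,b\}$ itself, at most $M-1$ edges touch $\{a,b\}$, and the rest of $G$ lives on the $M-2$ vertices of $[M]\setminus\{a,b\}$, where the same hypothesis holds recursively. Alternatively, picking a vertex $v$ of maximum degree $\Delta$ and noting that $\Gamma(v)$ is an independent set of size $\Delta$ lets one decompose the edges into the $\Delta$ incident to $v$, none within $\Gamma(v)$, at most $\Delta(M-1-\Delta)$ between $\Gamma(v)$ and its complement in $[M]\setminus\{v\}$, and at most $\binom{M-1-\Delta}{2}$ within that complement; summing and optimizing over $\Delta$ gives an upper bound on $N$.

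The main obstacle is producing precisely the constant $\tfrac{(M+3)(M+1)}{4}+1$ rather than the sharper Mantel bound $\lfloor M^2/4\rfloor$, or a slightly different expression such as $\tfrac{M^2-3M+4}{2}$ that one naturally gets from a one-shot max-degree count. The cleanest route is to invoke Mantel's theorem, which follows from the standard observation that for every edge $(u,v)$ one has $\Gamma(u)\cap\Gamma(v)=\emptyset$ and hence $d_u + d_v \le M$; summing this over edges gives $\sum_v d_v^2 \le MN$, and combining with Cauchy--Schwarz yields $N \le M^2/4 \le \tfrac{(M+3)(M+1)}{4}+1$. If a self-contained derivation matching the stated constant exactly is preferred, then I anticipate the slack being absorbed in a single pass of the edge-removal step combined with a coarser estimate for the residual triangle-free subgraph, producing a cleaner but weaker inequality of precisely the claimed form.
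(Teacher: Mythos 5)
Your proof is correct, and it follows a genuinely different route from the paper's. You observe that each weight-$2$ column of $H$ can be identified with an edge of a graph $G$ on vertex set $[M]$, and that $d_{\min}\geq 4$ translates into $G$ being simple (no repeated columns) and triangle-free (three distinct weight-$2$ columns in $\mathbb{F}_2^M$ summing to zero must cover exactly three coordinates twice each, hence form a triangle); Mantel's theorem then gives $N \leq \lfloor M^2/4\rfloor \leq \frac{(M+3)(M+1)}{4}+1$. The paper instead carries out a direct inductive elimination on $H$: it selects a column, permutes so that its two nonzero entries sit in the first two rows, argues from $d_{\min}\geq 4$ that the supports of those two rows overlap only in that column and that no vertex can be ``doubly counted'' (yielding $w_{2i-1}+w_{2i}-2\leq M-2i$ at step $i$), then strips off those two rows together with every column touching them and recurses, summing the resulting bounds over at most $\lfloor (M+1)/2\rfloor$ steps. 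Your approach is shorter, conceptually cleaner, and produces the strictly tighter Mantel bound $\lfloor M^2/4\rfloor$; the paper's argument is self-contained (it does not invoke extremal graph theory) but yields only the looser constant $\frac{(M+3)(M+1)}{4}+1$, which is all that is needed downstream. It is worth making the ``three edges forming a triangle'' step fully explicit — three weight-$2$ vectors summing to zero give six coordinate-incidences, each coordinate must appear an even number of times and at most three times, so exactly three coordinates appear twice each, forcing the three pairs to be the three edges of a triangle on those coordinates — since this is the crux of translating $d_{\min}\geq 4$ into triangle-freeness.
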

	\begin{proof}
		For proof please refer to the Appendix \ref{app:lem_min_len3}
	\end{proof}
	
	\begin{thm} \label{thm:min_len3}
		Let $\mathcal{C}$ denote an $(n,k,r,3)$ S-LR code over $\mathbb{F}_2$. The block length $n$ of the code $\mathcal{C}$ satisfies the lower bound:
		\bea
		n \geq k+\min_{s_1 \in \mathbb{N} \cup \{0\}} \max\{f_1(s_1),f_2(s_1),s_1\} ,\label{eq:min_len}
		\eea
		\bean 
		\text{where} \  f_1(s_1) & = &  \left\lceil \frac{-(2r-5)+\sqrt{(2r-5)^2 +4(6k+s_1^2-5s_1)}}{2} \right\rceil, \\
		f_2(s_1) & = & \left\lceil \frac{-(4r-4+2s_1)+\sqrt{(4r-4+2s_1)^2 +4(12k+3s_1^2-4s_1-7)}}{2} \right\rceil,
		\eean
		and $\mathbb{N}$ denotes the set of natural numbers.
	\end{thm}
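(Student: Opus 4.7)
The plan is to analyze the structure of the parity-check matrix of an $(n,k,r,3)$ S-LR code by classifying its columns according to their Hamming weights, and to combine a direct double-counting argument with an application of Lemma~\ref{lem:min_len3} to an appropriately shortened sub-code. Throughout, I exploit the fact that an $(n,k,r,3)$ S-LR code has $d_{\min} \geq 4$, since it corrects any three erasures.

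First, I would let $H$ be a parity-check matrix for $\mathcal{C}$ and single out the sub-matrix $H_L$ whose rows form a basis for the subspace $\mathcal{B}_0 \subseteq \mathcal{C}^\perp$ spanned by dual codewords of support size at most $r+1$. Let $M = \dim(\mathcal{B}_0) \leq n-k$; since $\mathcal{C}$ has all-symbol locality, every column of $H_L$ is non-zero. The role of $s_1$ in the theorem is that of a free non-negative integer parameter which I would introduce as the size of a distinguished class of columns of $H_L$ (natural candidates are columns of Hamming weight $1$, or a set of columns earmarked for shortening), so that the subsequent bounds split cleanly into pieces indexed by $s_1$.

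The first inequality, producing $f_1(s_1)$, would arise from a direct double-counting of ones in $H_L$: the total number of non-zero entries is at most $M(r+1)$, and after separating the $s_1$ distinguished columns, the remaining columns must contribute a large lower bound to the total weight, due to the restrictions imposed by $d_{\min} \geq 4$ (which forbids any three columns of $H$ from summing to zero, and hence rules out low-weight column configurations such as triangles among weight-$2$ columns and repetitions). Rearranging the resulting inequality into a quadratic in $M$ and solving yields the root $f_1(s_1)$. The second inequality, producing $f_2(s_1)$, would come from a shortening operation: after deleting the $s_1$ distinguished coordinates (and pruning the local parity checks they kill), the residual code still has minimum distance at least $4$ and inherits a parity-check sub-matrix in which the relevant columns have Hamming weight exactly two. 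Applying Lemma~\ref{lem:min_len3} to that sub-matrix bounds its column count by $(M'+3)(M'+1)/4 + 1$ for the appropriate row count $M'$, and translating this back into the parameters $n,k,r,s_1$ gives a second quadratic inequality whose positive root is $f_2(s_1)$. The trivial bound $n-k \geq s_1$, which follows from the fact that the $s_1$ distinguished parity checks (or columns) are linearly independent, supplies the third lower bound inside the maximum; taking the best of the three for each $s_1$ and minimizing over $s_1 \in \mathbb{N}\cup\{0\}$ gives the theorem.

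The main obstacle will be setting up the shortening cleanly so that the reduced code exposes a weight-$2$ column parity-check sub-matrix of the correct dimensions for invoking Lemma~\ref{lem:min_len3}, without losing the $d_{\min} \geq 4$ property. A secondary difficulty is careful bookkeeping: the particular constants $-5s_1$ in $f_1$ and $-4s_1-7$ in $f_2$ emerge from tight control over both the number of columns of each Hamming weight in $H_L$ and the precise effect of shortening on the row and column counts of the residual parity-check matrix, and getting these coefficients exactly right is where the technical work will concentrate.
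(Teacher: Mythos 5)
Your proposal has the right ingredients (the dimension $m$ of the span of low-weight dual codewords, the quantity $s_1$ as the number of weight-one columns, double-counting against $m(r+1)$, the lower bound $m\geq s_1$, and an appeal to Lemma~\ref{lem:min_len3}), and it is clearly in the same family as the paper's argument. But the way you have wired the pieces together differs from the paper in ways that would cause trouble if you tried to push through to the exact stated constants.

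The paper does not derive $f_1(s_1)$ and $f_2(s_1)$ from two separate mechanisms as you describe. Both quadratics descend from a \emph{single} double-counting inequality
\bean
3n - m(r+1) - 2s_1 \;\leq\; s_2,
\eean
where $s_2$ is the number of weight-two columns of the $m\times n$ matrix $H$ formed by a basis of $\mathcal{B}_0$, together with two \emph{different upper bounds on $s_2$}. After permuting $H$ into the block form with $D_{s_1}$ an $s_1\times s_1$ diagonal in the upper-left and observing that $d_{\min}\geq 4$ forbids any weight-two column with both nonzero entries among the first $s_1$ rows, the weight-two columns split into $f_1'$ (one entry among the first $s_1$ rows) and $f_2'$ (both entries among the last $m-s_1$ rows). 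The paper then bounds $f_1'\leq s_1(m-s_1)$ (distinct ``mixed'' pairs), and bounds $f_2'$ either by the trivial $\binom{m-s_1}{2}$ (yielding $f_1(s_1)$) or by Lemma~\ref{lem:min_len3} applied to the $(m-s_1)\times f_2'$ submatrix (yielding $f_2(s_1)$). Your description omits the $s_1(m-s_1)$ term entirely, yet this term appears in both quadratics; without it the coefficients $-5s_1$ and $-4s_1-7$ will not emerge.

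Your proposed ``shortening at the $s_1$ distinguished coordinates'' is also not what Lemma~\ref{lem:min_len3} is applied to, and it does not straightforwardly produce a matrix with $m-s_1$ rows. Shortening deletes the $s_1$ columns, but the first $s_1$ rows of $H$ need not vanish (they may have nonzero entries in the $A$ block), so the row count of the shortened parity-check matrix is not $m-s_1$; also the shortened code's weight-two columns are not the same set as the $f_2'$ columns the paper needs. The paper instead extracts directly the submatrix formed by the $f_2'$ weight-two columns whose supports sit entirely in the last $m-s_1$ rows; a dependence among these columns would lift to a dependence among the full columns of $H$ (since they are zero in the first $s_1$ rows), so this submatrix has $d_{\min}\geq 4$ and Lemma~\ref{lem:min_len3} applies with row count exactly $m-s_1$. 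This submatrix-extraction step, not a code shortening, is what produces the correct $M'=m-s_1$ and hence the stated $f_2(s_1)$.
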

	\begin{proof}
		For proof please refer to the Appendix \ref{app:min_len3}.
	\end{proof}
	
	Fig. \ref{fig:min_len_3_comparison} shows a comparison between the two lower bounds \eqref{eq:song_3} and \eqref{eq:min_len} on the block-length $n$ of a binary $(n,k,r,3)$ S-LR code for the case when $k=20$. Simulation results show that the new lower bound in \eqref{eq:min_len} is tighter than the previously known lower bound in \eqref{eq:song_3} for $r \leq k \leq r^{1.8}-1$, $2 \leq r \leq 200$.
	\begin{figure}[h!]
		\centering
		\includegraphics[width=4.5in]{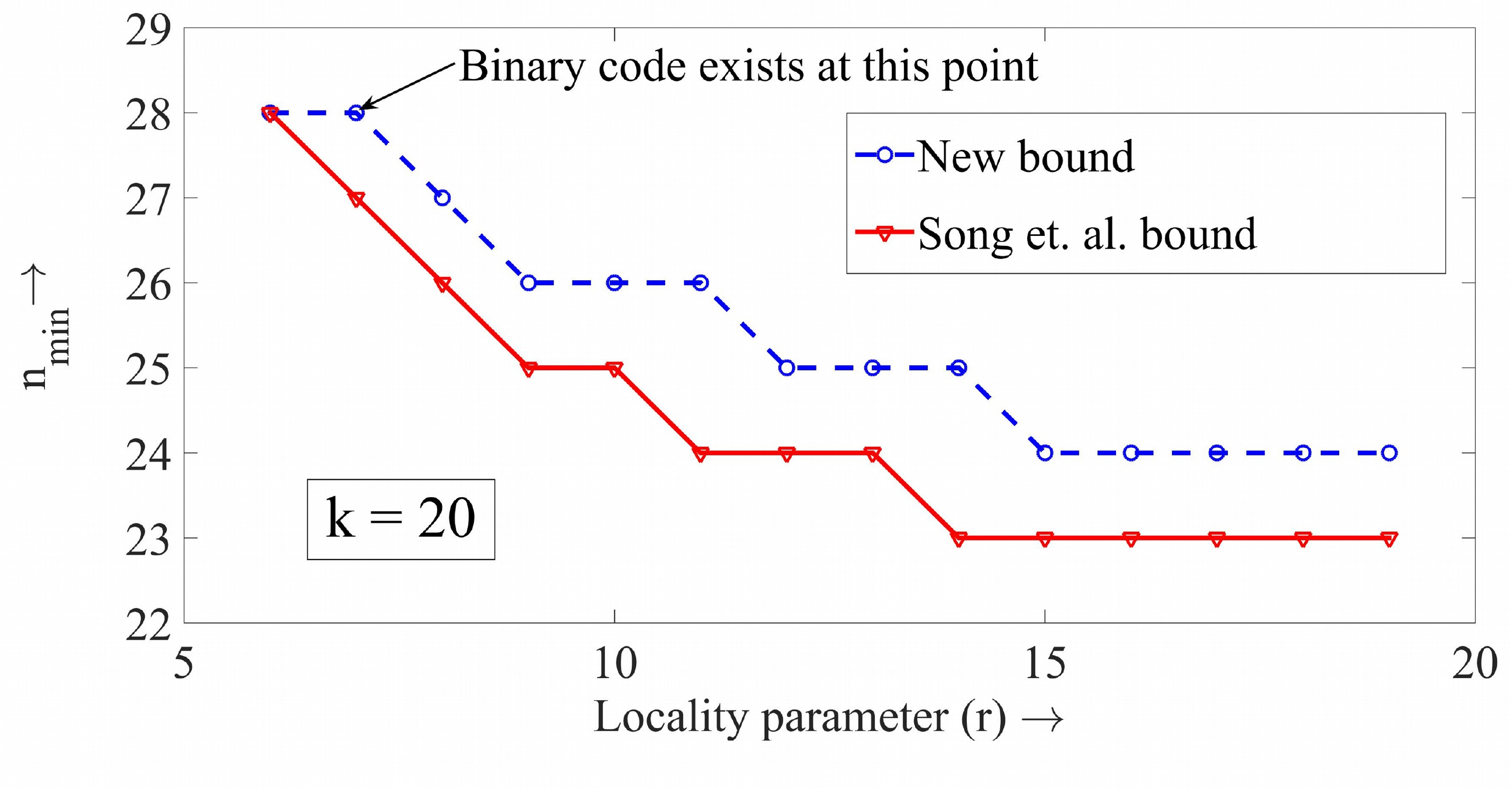}
		\caption{Comparing the new lower bound on block-length $n$ of a binary $(n,k,r,3)$ S-LR code given in \eqref{eq:min_len} with the bound in \eqref{eq:song_3} (by Song et. al. \cite{SonYue}) for $k=20$. Here, $\text{n}_\text{min}$ denotes the lower bound on block length obtained from the respective bounds.}
		\label{fig:min_len_3_comparison}
	\end{figure}
	
	Two example S-LR codes with $t=3$ which achieve the lower bound in \eqref{eq:min_len} are provided below.   The construction in Examples \ref{eg:3_erasure_1} is based on the proof of Theorem~\ref{thm:min_len3} whereas the construction in Example \ref{eg:3_erasure_2}, is based on the hyper graph-based construction which appeared in \cite{BalPraKum}. In \cite{BalPraKum}, authors showed that the construction of S-LR codes with $t=3$ based on hyper graph given in \cite{BalPraKum}, has block length differing from our lower bound on block length \eqref{eq:min_len} by atmost 2 for an infinite set of parameters $k,r$. This shows that the lower bound presented here is tight in a significant number of cases.    Table \ref{table:3_erasure_eg} compares the block-lengths of the codes in the two examples with the lower bounds on block-length appearing in \eqref{eq:song_3} (due to \cite{SonYue} Song et al) and \eqref{eq:min_len} (the new lower bound presented here).
	%and \ref{eg:3_erasure_3} 
	\begin{table}[h!] 
		\centering
		
		\begin{tabular}{||c|c|c|c|c|c|| }
			\hline
			$ $ & $k$ & $r$ & Bound~\eqref{eq:song_3} (Song et. al. \cite{SonYue}) & Bound~\eqref{eq:min_len} (new bound) & $n$\\
			\hline \hline
			Example~\ref{eg:3_erasure_1} & 5 & 3 & 9 & 10 & 10\\
			\hline
			Example~\ref{eg:3_erasure_2} & 8 & 4 & 13 & 14 & 14\\
			\hline
			%	Example~\ref{eg:3_erasure_3} & 20 & 7 & 27 & 28 & 28\\
			%	\hline
		\end{tabular}
		\caption{Comparing the block-length $n$ of the codes in examples \ref{eg:3_erasure_1}, \ref{eg:3_erasure_2}  against the lower bound on block-length obtained from \eqref{eq:song_3} and \eqref{eq:min_len}.}
		\label{table:3_erasure_eg}
	\end{table}
	%and \ref{eg:3_erasure_3}
	
	\begin{example} \label{eg:3_erasure_1}
		$n=10,k=5,r=3,t=3$ : \eqref{eq:song_3} gives $n \geq 9$ for $k=5,r=3$.  The lower bound \eqref{eq:min_len} derived here gives $n \geq 10$ for $k=5,r=3$. It can be seen that the binary code associated with the parity-check matrix given below in \eqref{eq:H1_ch3} is a binary S-LR code with $n=10,k=5,r=3,t=3$ and based on \eqref{eq:min_len} has the least possible block length $n$ for $k=5,r=3,t=3$.
		\bea
		H & = & 
		\left[ \begin{array}{cccccccccc} 
			1 & 0 & 0 & 0 & 1 & 0 & 0 & 0 & 1 & 1 \\
			0 & 1 & 0 & 0 & 0 & 1 & 0 & 0 & 1 & 0 \\
			0 & 0 & 1 & 0 & 0 & 0 & 1 & 0 & 1 & 1 \\
			0 & 0 & 0 & 1 & 0 & 0 & 0 & 1 & 0 & 1\\
			0 & 0 & 0 & 0 & 1 & 1 & 1 & 1 & 0 & 0\\
		\end{array} \right]. \label{eq:H1_ch3}
		\eea
	\end{example}
	
	\begin{example} \label{eg:3_erasure_2}
		$n=14,k=8,r=4,t=3$: the lower bound \eqref{eq:min_len} derived here in gives $n \geq 14$ whereas \eqref{eq:song_3} gives $n \geq 13$. It can be seen that the binary code associated with the parity-check matrix given below in \eqref{eq:n_eq_14_code} is a binary S-LR code with $n=14,k=8,r=4,t=3$ and based on \eqref{eq:min_len} has the least possible block length $n$ for $k=8,r=4,t=3$. 
		\bea
		\scalemath{1}{
			H  =  \left[  \begin{array}{cccccccccccccc}
				1 & 0 & 0 & 0 & 0 & 0 & 1 & 1 & 1 & 1 & 0 & 0 & 0 & 0 \\ 
				0 & 1 & 0 & 0 & 0 & 0 & 0 & 0 & 0 & 0 & 1 & 1 & 1 & 1 \\ 
				0 & 0 & 1 & 0 & 0 & 0 & 1 & 1 & 0 & 0 & 1 & 1 & 0 & 0 \\ 
				0 & 0 & 0 & 1 & 0 & 0 & 0 & 0 & 1 & 1 & 0 & 0 & 1 & 1 \\ 
				0 & 0 & 0 & 0 & 1 & 0 & 1 & 0 & 1 & 0 & 1 & 0 & 1 & 0 \\ 
				0 & 0 & 0 & 0 & 0 & 1 & 0 & 1 & 0 & 1 & 0 & 1 & 0 & 1
			\end{array} \right] .
			\label{eq:n_eq_14_code}
		} 
		\eea
	\end{example}
	
	%	\begin{example} \label{eg:3_erasure_3}
	%		$n=28,r=7,k=20,t=3$ : \eqref{eq:song_3} gives $n \geq 27$ for $k=20,r=7$.  The bound \eqref{eq:min_len} derived here gives $n \geq 28$ for $k=20,r=7$. Hence the binary code associated with the parity-check matrix given below has the least possible block length $n$ for a binary code with $k=20,r=7,t=3$.
	%		\bean
	%		\scalemath{0.93}{
	%			H = 
	%			\left[ \begin{array}{cccccccccccccccccccccccccccc} 
	%				1 & 0 & 0 & 0 & 0 & 0 & 0 & 1 & 0 & 0 & 0 & 0 & 0 & 0 & 0 & 1 & 0 & 1 & 0 & 0 & 1 & 1 & 0 & 0 & 0 & 0 & 1 & 1 \\
	%				0 & 1 & 0 & 0 & 0 & 0 & 0 & 0 & 1 & 0 & 0 & 0 & 0 & 0 & 1 & 0 & 1 & 0 & 0 & 1 & 0 & 1 & 1 & 0 & 0 & 0 & 0 & 1 \\
	%				0 & 0 & 1 & 0 & 0 & 0 & 0 & 0 & 0 & 1 & 0 & 0 & 0 & 0 & 0 & 1 & 0 & 0 & 1 & 0 & 1 & 1 & 1 & 1 & 0 & 0 & 0 & 0 \\
	%				0 & 0 & 0 & 1 & 0 & 0 & 0 & 0 & 0 & 0 & 1 & 0 & 0 & 0 & 1 & 0 & 0 & 1 & 0 & 1 & 0 & 0 & 1 & 1 & 1 & 0 & 0 & 0 \\
	%				0 & 0 & 0 & 0 & 1 & 0 & 0 & 0 & 0 & 0 & 0 & 1 & 0 & 0 & 0 & 0 & 1 & 0 & 1 & 0 & 1 & 0 & 0 & 1 & 1 & 1 & 0 & 0 \\
	%				0 & 0 & 0 & 0 & 0 & 1 & 0 & 0 & 0 & 0 & 0 & 0 & 1 & 0 & 0 & 1 & 0 & 1 & 0 & 1 & 0 & 0 & 0 & 0 & 1 & 1 & 1 & 0 \\
	%				0 & 0 & 0 & 0 & 0 & 0 & 1 & 0 & 0 & 0 & 0 & 0 & 0 & 1 & 1 & 0 & 1 & 0 & 1 & 0 & 0 & 0 & 0 & 0 & 0 & 1 & 1 & 1 \\
	%				0 & 0 & 0 & 0 & 0 & 0 & 0 & 1 & 1 & 1 & 1 & 1 & 1 & 1 & 0 & 0 & 0 & 0 & 0 & 0 & 0 & 0 & 0 & 0 & 0 & 0 & 0 & 0 \\
	%			\end{array}  \right].}
	%		\eean
	%	\end{example}

	\section{Contributions to S-LR codes with any value of $t$} \label{sec:t_erasures}
	
	Contributions include a tight upper bound on rate of S-LR codes and a construction achieving the upper bound for any value of $r,t$ with $r \geq 3$. The construction achieving the upper bound is presented in the chapter following, Chapter~\ref{ch:SeqConstr}. Only the upper bound is presented in this chapter.   
	
	\subsection{A Tight Upper Bound on Rate of S-LR codes} \label{sec:tight_bound}
	In this section we provide an upper bound on rate of an $(n,k,r,t)$ S-LR code for any $r \geq 3$ and any $t$. The cases of even $t$ and odd $t$ are considered separately.  The proof proceeds by deducing the structure of parity-check matrix of an S-LR code. Constructions achieving our upper bound on rate for any $(r,t)$ with $r \geq 3$ are provided in the next chapter, Chapter~\ref{ch:SeqConstr} which establishes that the upper bound on rate derived here gives the exact expression for maximum possible rate of an S-LR code with parameters $(r,t)$ with $r \geq 3$. The matching constructions also make use of the structure of parity-check matrix deduced here. The upper bound on rate presented here also proves the conjecture by Song et al given in \ref{Conjecture}. 
	\begin{thm} \label{rate_both} \textbf{Rate Bound}: 
		Let $\mathcal{C}$ denote an $(n,k,r,t)$ S-LR code over a finite field $\mathbb{F}_q$. Let $r \geq 3$. Then 
		%\vspace{-0.1cm}		
		\begin{align}
			\frac{k}{n} & \leq  \frac{r^{s+1}}{r^{s+1} + 2 \sum_{i=0}^{s} r^i}\hspace{0.5cm}&\text{for even $t$,} \label{Thm1}\\
			\frac{k}{n} & \leq  \frac{r^{s+1}}{r^{s+1} + 2 \sum_{i=1}^{s} r^i + 1}\hspace{0.5cm}&\text{for odd $t$,}\label{Thm2}
		\end{align}
		where $s = \lfloor \frac{t-1}{2} \rfloor$.
	\end{thm}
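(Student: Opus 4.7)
The plan is to lift the approach used to prove Theorem~\ref{thm:rate_bound} for $t=2$ to general $t$ via a Moore-bound-style analysis of the bipartite Tanner graph of a suitably chosen parity-check matrix. Let $\mathcal{B}_0 \subseteq \mathcal{C}^{\perp}$ be the span of codewords of $\mathcal{C}^{\perp}$ of Hamming weight at most $r+1$; as in Definition~\ref{def:Can_dualcode_ch2}, the S-LR property applied to a single erasure already forces every coordinate to lie in the support of some such codeword, so $\mathcal{B}_0$ is a canonical dual code. Set $m = \dim \mathcal{B}_0 \leq n-k$; it therefore suffices to prove the stated bounds with $n-k$ replaced by $m$. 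Fix an $m \times n$ parity-check matrix $H_0$ whose rows form a basis of $\mathcal{B}_0$ of weight $\leq r+1$, and form the bipartite Tanner graph $G$ with check side $R$ of size $m$ (each vertex of degree $\leq r+1$) and variable side $V$ of size $n$ (each vertex of degree $\geq 1$).

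The core combinatorial step is to show that sequential recovery from $\leq t$ erasures forces a girth-type condition on $G$: for any set $S \subseteq V$ with $|S| \leq t$, the neighbourhood of $S$ in $R$ cannot consist solely of checks that meet $S$ in $\geq 2$ positions. Equivalently, in the bipartite setting, I would aim to establish $\mathrm{girth}(G) \geq 2(t+1)$. The reason is that a forbidden configuration on $S$, taken as the erasure pattern, would prevent any coordinate of $S$ from being the first to be recovered: any first-stage recovery codeword has weight $\leq r+1$ and so lies in $\mathcal{B}_0$ by construction, hence corresponds to a row of $H_0$, and yet every such row either misses $S$ (useless) or meets $S$ in $\geq 2$ positions (still has an unresolved erasure).

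Granted this structural condition, I would run a Moore-type breadth-first expansion of $G$ from a root vertex, taking the root to be a check vertex when $t$ is even and a variable vertex when $t$ is odd, continuing to depth $s+1$ with $s = \lfloor (t-1)/2 \rfloor$. The girth bound guarantees that all vertices encountered are distinct, while the degree bound on the check side supplies a branching factor of $r$ after the first step (one of each check's $r+1$ neighbours being its parent in the BFS tree). A careful tally of the variable-side vertices produced level by level, with a boundary correction at the last level whose form differs between the check-rooted and variable-rooted cases, should yield
\[
\frac{n}{m} \;\leq\; 1 + \frac{r^{s+1}}{2\sum_{i=0}^{s} r^i} \quad (\text{$t$ even}), \qquad \frac{n}{m} \;\leq\; 1 + \frac{r^{s+1}}{2\sum_{i=1}^{s} r^i + 1} \quad (\text{$t$ odd}),
\]
which, combined with $m \leq n-k$, rearranges to \eqref{Thm1} and \eqref{Thm2} respectively.

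The chief obstacle is the second step: translating sequential recovery into the correct forbidden-configuration / girth statement. A naive ``short cycle prevents recovery'' argument does not suffice, because the recovery procedure is a priori allowed to invoke dual codewords outside the chosen basis; the reduction to $\mathcal{B}_0$ is precisely what closes this loophole, since any weight-$\leq (r+1)$ recovery equation is forced into $\mathcal{B}_0$ and therefore into the span of the rows of $H_0$. A secondary technical issue is the presence of variable vertices of small degree (the analogues of the pure parity symbols in the Tur\'an-graph construction), which must be absorbed as leaves in the BFS count rather than as interior branching nodes; the even/odd asymmetry in the denominator ultimately reflects where these leaf terms enter the count. The hypothesis $r \geq 3$ should enter here as well, ensuring that the geometric series $\sum r^i$ dominates the lower-order leaf corrections uniformly in $t$.
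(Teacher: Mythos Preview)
Your plan founders at the girth step, and the gap is not a technicality. The assertion that a first-stage recovery codeword ``corresponds to a row of $H_0$'' is false: such a codeword lies in $\mathcal{B}_0$, the row \emph{space} of $H_0$, and is therefore only a linear combination of rows. Concretely, over $\mathbb{F}_5$ take the $[5,3]$ code with parity-check rows $r_1=(1,1,1,1,0)$ and $r_2=(1,2,3,0,1)$; the five weight-$4$ dual codewords (up to scalar) have as supports the five $4$-subsets of $[5]$, so this is a $(5,3,3,2)$ S-LR code, yet for $S=\{1,2\}$ both basis rows meet $S$ in two positions and the Tanner girth is $4<2(t+1)=6$. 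Recovery of $S$ uses the non-basis codeword $4r_1+r_2$ of support $\{2,3,4,5\}$, and one checks that \emph{no} weight-$4$ basis of $\mathcal{B}_0$ avoids this phenomenon for every $2$-subset. So neither your forbidden-configuration statement nor the girth bound holds for general S-LR codes, and the ``equivalently'' is wrong in both directions. A separate obstruction: even granting large girth, the BFS cannot propagate through degree-$1$ variable vertices (weight-$1$ columns of $H_0$), and without a variable-degree lower bound the Moore count yields nothing beyond $n\leq m(r+1)$.

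The paper's argument does recover a tree-like expansion, but at the level of the matrix rather than its Tanner graph. One permutes $H_1$ into a staircase block form (equations~\eqref{Hform} and~\eqref{Hform2}) that stratifies columns by Hamming weight: a block $D_0$ of weight-$1$ columns, successive weight-$2$ blocks $[A_i;D_i]$, a terminal weight-$2$ block $C$, and a residual block $D$ of weight-$\geq 3$ columns. The structural Lemmas~\ref{claim1} and~\ref{claim2}, proved by induction directly from $d_{\min}\geq t+1$, show that every column of $A_i$ and every row and column of $D_i$ has weight exactly $1$; this is where the branching factor $r$ actually emerges, and the proof explicitly tracks linear combinations of columns, which is precisely what your Tanner-graph formulation cannot see. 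From the block form one reads off a chain of linear inequalities among the block sizes $a_i$, $m$, and $n$ (for instance $a_i\leq ra_{i-1}$, together with a global row-versus-column weight count that handles the weight-$\geq 3$ columns), and the bounds~\eqref{Thm1} and~\eqref{Thm2} follow after a recursive manipulation through an auxiliary sequence $\delta_j$. The hypothesis $r\geq 3$ enters only at the final substitution, to guarantee $\delta_{t/2-2}>1$.
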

	\begin{proof} 
		We begin by setting $\mathcal{B}_0 = \text{span}(\{\underline{c} \in \mathcal{C}^{\perp} : w_H(\underline{c}) \leq r+1 \})$ where $w_H(\underline{c})$ denotes the Hamming weight of the vector $\underline{c}$.
		Let $m$ be the dimension of $\mathcal{B}_0$. Let ${\underline{c}_1,...,\underline{c}_m}$ be a basis of $\mathcal{B}_0$ such that $w_H(\underline{c}_i) \leq r+1$, $\forall i \in [m]$. For all $i \in [m]$, We represent $\underline{c}_i$ as a $(1 \times n)$ vector where $j$th component of the $(1 \times n)$ vector is the $j$th codesymbol of the codeword $\underline{c}_i$, $\forall j \in [n]$.
		Let $H_1={[{\underline{c}_1}^T,...,{\underline{c}_m}^T]}^T$.
		It follows that $H_1$ is a parity check matrix of an $(n,n-m,r,t)$ S-LR code as its row space contains every codeword of Hamming weight at most $r+1$ which is present in $\mathcal{C}^\perp$. Also, 
		\bean
		\frac{k}{n} \leq 1 - \frac{m}{n}.
		\eean
		The idea behind the next few arguments in the proof is the following. S-LR codes with high rate will tend to have a larger value of $n$ for a fixed $m$.  On the other hand, the Hamming weight of the matrix $H_1$ (i.e., the number of non-zero entries in the matrix) is bounded above by $m(r+1)$.  	 It follows that to make $n$ large, one would like the columns of $H_1$ to have as small a weight as possible. It is therefore quite natural to start building $H_1$ by picking many columns of weight $1$, then columns of weight $2$ and so on.  As one proceeds by following this approach, it turns out that the matrix $H_1$ is forced to have a certain sparse, block-diagonal, staircase form and an understanding of this structure is used to derive the upper bound on code rate.\\
		The cases of $t$ being an even integer and an odd integer are considered separately. We form linear inequalities which arise from the structure of the matrix $H_1$ and derive the required upper bounds on the rate from these linear inequalities. See Appendix \ref{Appendix_rate_bounds} for detailed proofs for both the cases.
	\end{proof}
	%Note that all the derivations of rate bounds given in  \cite{BurKriLitMil,BenLit,IceSam,Fro} when exactly repeated for a finite minimum distance for an arbitrary finite field, the derivations are not exact (there is always a step which is loose), hence our rate bound given in Theorem \ref{rate_both} cannot be derived from the results in \cite{BurKriLitMil,BenLit,IceSam,Fro}.}
	
	\paragraph{Conditions for equality in \eqref{Thm1}: $t$ even case:}
	Note that for achieving the upper bound on rate given in \eqref{Thm1}, an S-LR code must have a parity check matrix $H_1$ (upto a permutation of columns) of the form given in \eqref{Hform} with parameters such that the inequalities given in \eqref{Ineq1},\eqref{Ineq2},\eqref{Ineq3},\eqref{Ineq_star1} (in Appendix) become equalities with $p=0$ and $D$ must be an empty matrix i.e., no columns of weight $\geq 3$ (because once all these inequalities become equalities, the sub matrix of $H_1$ obtained by restricting $H_1$ to the columns with weights 1,2 will have each row of weight exactly $r+1$ and hence no non-zero entry can occur outside the columns having weights 1,2 for achieving the upper bound on rate). Hence it can be seen that an $(n,k,r,t)$ S-LR code achieving the upper bound on rate \eqref{Thm1} must have a parity check matrix (upto a permutation of columns) of the form given in \eqref{eq:Hmatrixteven_ch3}.
	
	\bea
	H_1 = \left[
	\begin{array}{c|c|c|c|c|c|c|c}
		D_0 & A_1 & 0 & 0 & \hdots & 0 & 0 & 0  \\
		\cline{1-8}
		0 & D_1 & A_2 & 0 & \hdots & 0 & 0 & 0 \\
		\cline{1-8}
		0 & 0 & D_2 & A_3 & \hdots & 0 & 0 & 0  \\
		\cline{1-8}
		0 & 0 & 0 & D_3 & \hdots & 0 & 0 & 0 \\
		\cline{1-8}
		\vdots & \vdots & \vdots & \vdots & \hdots & \vdots & \vdots & \vdots \\
		\cline{1-8}
		0 & 0 & 0 & 0 & \hdots & A_{s-1} & 0 & 0  \\
		\cline{1-8}
		0 & 0 & 0 & 0 & \hdots & D_{s-1} & A_{s} & 0 \\
		\cline{1-8}
		0 & 0 & 0 & 0 & \hdots & 0 & D_{s} & C \\
	\end{array} 
	\right],  \label{eq:Hmatrixteven_ch3}
	\eea
	
	\paragraph{Conditions for equality in \eqref{Thm2}: $t$ odd case:}
	Note that for achieving the upper bound on rate given in \eqref{Thm2}, an S-LR code must have a parity check matrix $H_1$ (upto a permutation of columns) of the form given in \eqref{Hform2} with parameters such that the inequalities given in \eqref{Ineq1O},\eqref{Ineq2O},\eqref{Ineq3O},\eqref{OddSpe},\eqref{Ineq_star2} (in Appendix) become equalities with $p=0$ and $D$ must be an empty matrix i.e., no columns of weight $\geq 3$ (because once all these inequalities become equalities, the sub matrix of $H_1$ obtained by restricting $H_1$ to the columns with weights 1,2 will have each row of weight exactly $r+1$ and hence no non-zero entry can occur outside the columns having weights 1,2 for achieving the upper bound on rate). Note that for achieving the upper bound on rate, $C$ must also be an empty matrix. This is because inequality \eqref{OddSpe} must become an equality which implies that $D_s$ is a matrix with each row of weight $r+1$ and we also saw that $p=0$. Hence $C$ must be a zero matrix which implies $C$ is an empty matrix. Hence it can be seen that an $(n,k,r,t)$ S-LR code achieving the upper bound on rate \eqref{Thm2} must have a parity check matrix (upto a permutation of columns) of the form given in \eqref{eq:Hmatrixtodd_ch3}.  
	
	\bea
	H_1 & = & \left[
	\begin{array}{c|c|c|c|c|c|c}
		D_0 & A_1 & 0 & 0 & \hdots & 0 & 0   \\
		\cline{1-7}
		0 & D_1 & A_2 & 0 & \hdots & 0 & 0  \\
		\cline{1-7}
		0 & 0 & D_2 & A_3 & \hdots & 0 & 0   \\
		\cline{1-7}
		0 & 0 & 0 & D_3 & \hdots & 0 & 0   \\
		\cline{1-7}
		\vdots & \vdots & \vdots & \vdots & \ddots & \vdots & \vdots  \\
		\cline{1-7}
		0 & 0 & 0 & 0 & \hdots & A_{s-1} & 0   \\
		\cline{1-7}
		0 & 0 & 0 & 0 & \hdots & D_{s-1} & A_{s}   \\
		\cline{1-7}
		0 & 0 & 0 & 0 & \hdots & 0 & D_{s} 
	\end{array} \right] \label{eq:Hmatrixtodd_ch3},
	\eea

	It may be noted here that our upper bound on rate given in Theorem \ref{rate_both}, for the special cases of $t=2,3$, matches with the upper bound given in \cite{PraLalKum},\cite{SongCaiYue_L} respectively. For $t \geq 4$, our upper bound on rate given in Theorem \ref{rate_both} is new and hence a new contribution to S-LR codes. In the remainder of the thesis, S-LR codes with parameters $(r,t)$ achieving the upper bounds \eqref{Thm1} or \eqref{Thm2} depending on $t$ will be referred to as ``rate-optimal codes".
	
	%			
	%			
	%			Note that our bound \eqref{Thm2}, for the special case of $t=3$, matches with the rate bound given in \cite{song2016sequential}.
	\begin{note} \label{blk_length_rem}
		We now make a remark on blocklength of the rate-optimal codes. It can be seen that, for even $t$, with upper bound on rate given in \eqref{Thm1} being achieved with equality and for $k$ to be an integer with $r \geq 3$, $2n$ needs to be an integer multiple of $r^{s+1} + 2 \sum_{i=0}^{s} r^i$. Similarly for odd $t$, for $k$, to be an integer and with upper bound on rate given in \eqref{Thm2} being achieved with equality, $n$ needs to be an integer multiple of $r^{s+1}+2\sum_{i = 1}^{s}r^i+1$.
	\end{note}
	
	Note that the upper bound on rate of an S-LR code derived here is applicable in a much larger setting. The upper bound on rate derived in Theorem \ref{rate_both}, also yields an upper bound on the rate of a code with minimum distance $\geq t+1$ and has a parity check matrix $H$ with each row of weight $\leq r+1$. This is true as at no point in the derivation of Theorem \ref{rate_both}, did we use the sequential-recovery property apart from the property that the minimum distance $\geq t+1$ for a code with parity check matrix $H_1$. The problem of deriving an upper bound on rate in this general setting without the sequential-recovery property is addressed in the literature in \cite{BurKriLitMil,BenLit,IceSam,Fro}. But these papers are interested in deriving an upper bound on the rate of a code with parity check matrix with each row of weight $\leq r+1$ with a relative minimum distance $\delta$. But here in this work, we are not interested in having a non-zero relative minimum distance. It is enough if the minimum distance $\geq t+1$ irrespective of the relative minimum distance. Hence we obtain a different bound in Theorem \ref{rate_both} whereas the expressions for upper bound on rate in \cite{BurKriLitMil,BenLit,IceSam,Fro} are in terms of the expression for binary entropy.  Apart from the above, we prove in next chapter that the upper bound on rate in Theorem \ref{rate_both} is achievable and hence is tight whereas the tightness of the bounds in \cite{BurKriLitMil,BenLit,IceSam,Fro} is unknown. 
	
	\section{Summary} \label{sec:seq_conclusions} 
	
	This chapter dealt with LR codes for the multiple erasures case.  Following an overview of this topic, the chapter focused on the case of      codes with sequential recovery from multiple erasures (S-LR codes) as some of the principal contributions of the thesis  relate to S-LR codes.   These results include (a) new rate-optimal and block-length-optimal constructions of S-LR codes for the $t=2$ case, (b) a characterization of rate-optimal S-LR codes for $t=2$, (c) a new upper bound on dimension for S-LR codes and constructions of S-LR codes achieving this bound for the $t=2$ case, (d) a lower bound on the block length of S-LR codes for $t=3$ case, which is tighter than the existing bounds in the literature for $r \leq k \leq r^{1.8}-1$, $2 \leq r \leq 200$. We then present our main result, namely, an upper bound on the rate of an S-LR code for any $(r,t)$ with $r \geq 3$.  Constructions of binary codes achieving the upper bound on rate derived here for $r \geq 3$ are deferred to the next chapter. 
	%
	%       \begin{figure}[ht]
	%       	\centering
	%       	\includegraphics[width=3in]{figures/ISIT_2017_Seq_rate_Bound_Comp}
	%       	\caption{Comparison of rate bounds on codes with availability and codes with sequential recovery for an example case $t=10$.}
	%       	\label{fig:Comparison_Tamo}
	%       \end{figure}
	%       \begin{note}
	%       	In general, one would expect an $(n,k,r,t)_{seq}$ code to achieve a higher rate than a counterpart code having $t$-availability. The achievable upper bound on rate of a code with sequential recovery is significantly larger than the upper bound (not known to be tight) on the rate of codes having availability given in \cite{TamBarFro} as shown in Fig.~\ref{fig:Comparison_Tamo}.
	%       \end{note}
	%\lipsum[1]       
	\begin{appendices}
		\chapter{Proof of Theorem \ref{thm:structure_2erasure}} \label{app:structure_2erasure}
		\begin{proof} Throughout the proof: the term code symbols of $\mathcal{C}$ with indices in a set $S=\{j_1,...,j_{\ell}\}$ refers to code symbols $c_{j_1},...,c_{j_{\ell}}$. Whenever we say a row of $A$ denoted by $R$, $R$ is used to refer to either the vector formed by the row of the matrix $A$ or the index of the row in matrix $A$ depending on context. The term columns $C_1,...C_{\ell}$ of $A$ refers to the $C_1^{th}$ column,$C_2^{th}$ column,...,$C_{\ell}^{th}$ column of the matrix $A$.
			
			From the description of general form of parity check matrix of rate-optimal code given in Theorem \ref{thm:pc_mx_form}, equation \eqref{eq:H_form_rate_optimal}, there exists (after possible permutation of code symbols of $\mathcal{C}$) a parity-check matrix $H$ for the rate-optimal code $\mathcal{C}$ of the form (Note that this is true over any finite field $\mathbb{F}_q$):
			\beqn
			H=[I|H'],
			\eeqn
			where $I$ denote the $((n-k) \times (n-k))$ identity matrix and $H'$ is a $(n-k)\times k$ matrix with each columns having weight $2$ and each row having weight $r$.
			Consider $2$ rows of $H$ denoted by $R_1$ and $R_2$. Let $supp(R_1) \cap supp(R_2) = \{C_1,C_2,...,C_s\}$. Since all columns in $H'$ has Hamming weight exactly $2$, the columns $C_1,C_2,...,C_s$ of $H$ will have non zero entries in rows $R_1$ and $R_2$ only. Let $A$ denote the $2\times s$ sub matrix obtained by restricting $H$ to the rows $R_1$ and $R_2$ and the columns $C_1,C_2,...,C_s$. In order to recover from any instance of $2$ erasures in the code symbols of $\mathcal{C}$ with indices in $\{C_1,C_2...C_s\}$, any two columns of $A$ must be linearly independent. Thus the $2\times s$ sub matrix $A$ is a generator matrix of an $MDS$ code of block length $s$ and dimension 2. This also says that any vector, obtained by a non-zero linear combination of the two rows of $A$ will have a Hamming weight at least $s-1$.\\
			Let us consider two extreme cases:\\
			\emph{Case 1: $s=r$:}
			In this case, the $r+2$ code symbols of $\mathcal{C}$ with indices in $supp(R_1) \cup supp(R_2)$ form an $[r+2,r]$ MDS code. The sub matrix of $H$ obtained by restricting $H$ to the columns $supp(R_1) \cup supp(R_2)$ and all the rows apart from the rows $R_1,R_2$ is a zero matrix.\\
			\emph{Case 2: $s\leq1$:} $|supp(R_1) \cap supp(R_2)| \leq 1$.\\
			If these are the only two cases that can occur for any pair of rows of $H$ denoted by say $R$ and $R'$ i.e.,$|supp(R) \cap supp(R')| \in \{0,1,r\}$ for any 2 distinct rows of $H$ denoted by $R,R'$, then the set of code symbols can be partitioned into two sets, one set of symbols forming a regular-graph code over $\mathbb{F}_q$ and the other set of symbols forming direct product of $[r+2,r]$ MDS codes, with no linear constraint involving code symbols from both the sets i.e., the code $\mathcal{C}$ will be a direct product $\mathcal{C}_R \times \mathcal{C}_M$ (after possible permutation of code symbols of $\mathcal{C}$) where $\mathcal{C}_R$ is a regular-graph code over $\mathbb{F}_q$ and $\mathcal{C}_M$ is a direct product of $[r+2,r]$ MDS codes. Hence the code $\mathcal{C}$ has the form stated in the theorem.
			
			Now, we will prove that  for any 2 distinct rows of $H$ denoted by $R,R'$, $1<s<r$ is not possible, where $s= |supp(R) \cap supp(R')|$.
			
			Wolog assume that $1<s<r$ for the pair of rows $R_1$ and $R_2$ i.e., $1 < |supp(R_1) \cap supp(R_2)|=s < r$. As denoted before, $C_i,C_j$ are some two elements of $supp(R_1) \cap supp(R_2)$ for $1 \leq i \neq j \leq s$. Assume that the code symbols of $\mathcal{C}$ with indices in $\{C_i, C_j\}$ are erased. In order for $\mathcal{C}$ to be an S-LR code with $t=2$, a linear combination of $R_1$ and $R_2$ and some of the remaining rows of $H$ must result in a vector $\mathbf{v}$ with the following properties.
			\ben
			\item Hamming weight of $\mathbf{v}$ is less than or equal to $(r+1)$.
			\item $\mathbf{v}$ has the value zero in the $C_i^{th}$ coordinate and has a non zero value in the $C_j^{th}$ coordinate, or vice versa.
			\een
			Assume that a linear combination of $\ell$ rows of $H$ denoted by $R_1,R_2,R_3...R_{\ell}$ results in $\mathbf{v}$ where each coefficient in the linear combination is non-zero. Let $s_{ij} = |supp(R_i)\cap supp(R_j)|$, $\forall 1 \leq i \neq j \leq \ell$. Clearly, $s_{12}=s$. If $s_{ij}>0$, we have already shown that the $2\times s_{ij}$ sub matrix of $H$, obtained by restricting $H$ to the rows $R_i$ and $R_j$ and the columns in $supp(R_i)\cap supp(R_j)$ is a generator matrix of an MDS code of block length $s_{ij}$ and dimension 2 and any non-zero linear combination of the two rows of the sub matrix gives a vector of Hamming weight at least $s_{ij}-1$. Thus the Hamming weight of $\mathbf{v}$ is at least $\ell + \sum_{\{(i,j) : 1 \leq i<j\leq \ell,s_{ij}>0\}}^{}(s_{ij}-1) + f$, where the first term $\ell$ comes from the identity part of $H$ (i.e., columns $1$ to $n-k$ of $H$) and the term $f$ comes from the single weight columns in the sub matrix $L$ of $H$ obtained by restricting $H$ to the rows $\{R_1,R_2,R_3,...,R_{\ell}\}$ and the columns $n-k+1$ to $n$ of $H$. Let $S = \{(i,j) : 1 \leq i<j\leq \ell,s_{ij}>0\}$. Since the Hamming weight of $\mathbf{v}$, must be upper bounded by $r+1$,
			\begin{align}
				\ell + \sum_{(i,j) \in S}^{}(s_{ij}-1) + f &\leq r+1, \notag\\ 
				\ell + \sum_{(i,j) \in S}^{}s_{ij} - {\ell \choose 2} + f &\leq r+1, \notag\\ 
				\ell + 2 \sum_{(i,j) \in S}^{} s_{ij} - {\ell \choose 2} + f &\leq r+1 + \sum_{(i,j) \in S}^{} s_{ij}. \label{eq:wt1}
			\end{align}
			Also, by counting the non zero entries in $L$ column wise and row wise
			\begin{align}
				f+ 2\sum_{(i,j) \in S}^{}s_{ij} &= \ell r, \label{eq:wt2}\\
				\sum_{(i,j) \in S}^{}s_{ij} &\leq \frac{\ell r}{2}. \label{eq:wt3}
			\end{align}
			
			Substituting \eqref{eq:wt2} and \eqref{eq:wt3} in \eqref{eq:wt1} gives: 
			\bea
			\ell + \ell r-{\ell \choose 2} \leq r+1 + \frac{\ell r}{2}, \notag \\
			\ell-r-1  \leq   \frac{\ell (\ell- 1-r)}{2} \label{ineq_rate_char_ch3}
			\eea
			Now assuming $\ell < r+1$, \eqref{ineq_rate_char_ch3} gives $\ell \leq 2$.
			Hence assuming $\ell > 2$, we get
			\bean
			r+1 \leq \ell.
			\eean
			Hence we get $\ell \geq r+1$, when $\ell > 2$. But when $\ell \geq r+1$, the first $n-k$  co-ordinates of $\mathbf{v}$ will have atleast $r+1$ non-zero values  (because the columns $1$ to $n-k$ of $H$ form an identity matrix), making the Hamming weight of $\mathbf{v}$  strictly greater than $r+1$ as $\mathbf{v}$ must also have a non zero $C_j^{th}$ or $C_i^{th}$ coordinate as $n-k+1 \leq C_i,C_j \leq n$. This is a contradiction as Hamming weight of $\mathbf{v}$ is $\leq r+1$. Hence, if $1<s<r$, $\ell>2$ is not possible. 
			Now, assume $\ell=2$ i.e., a linear combination of $R_1$ and $R_2$ should give $\mathbf{v}$. As seen before, the Hamming weight of a linear combination of $R_1$ and $R_2$ is at least $(s-1)+2(r+1-s)$ (weight $s-1$ comes from the coordinates indexed by $supp(R_1)\cap supp(R_2)$, and weight $2(r+1-s)$ comes from the remaining coordinates indexed by $(supp(R_1)\cup supp(R_2))-(supp(R_1)\cap supp(R_2))$). Since weight of $\mathbf{v}$ is $\leq r+1$, We need,
			\begin{align*}
				(s-1)+2(r+1-s) &\leq r+1,\\
				s &\geq r,
			\end{align*}
			which is not possible as  $1<s<r$. Hence $\ell=2$ is also not possible. Hence putting together we have $\ell \geq 2$ not possible for $1<s<r$, but we need to linearly combine $\ell \geq 2$ rows to get $\mathbf{v}$. Hence $1<s<r$ is not possible.
		\end{proof}
		\chapter{Proof of Theorem \ref{thm:DimUpperBound_ch3}} \label{app:upperbounddim}
		\begin{proof}
			Let $\{\underline{c}_1,...,\underline{c}_m\}$ be a basis of $\mathcal{B}_{0}$ with $\text{w}_\text{H}(c_i) \leq r+1$, where $\text{w}_\text{H}(\underline{c}_i)$ denotes the Hamming weight of $\underline{c}_i$, $\forall 1 \leq i \leq m$. Let
			\bean
			H & = & 
			\left[ \begin{array}{c} 
				\underline{c}_1  \\
				\vdots \\
				\underline{c}_m   \end{array} \right].
			\eean
			Let $1 \leq L \leq m$. Let $s_i$ denote the total number of columns of $H$ having Hamming weight $i$ for $1 \leq i \leq L$. Then by simple counting of non-zero entries of $H$ row wise and column wise, we get:
			\bea
			\sum_{i=1}^{L} i s_i + (L+1)(n-\sum_{i=1}^{L} s_i) \leq m(r+1), \notag \\
			(L+1)n \leq m(r+1)+ \sum_{i=1}^{L} (L+1-i) s_i. \label{eq:blocklength_1}
			\eea
			Since $H$ is a parity-check matrix of an $(n,n-m,r,2)$ S-LR code over $\mathbb{F}_2$, all the columns of $H$ must be distinct. Hence $s_i \leq {m \choose i}$ for $1 \leq i \leq L$. Also we know that $n \geq k+m$. Hence substituting $s_i \leq {m \choose i}$ and $n \geq k+m$ in \eqref{eq:blocklength_1}:
			\bea
			(L+1)(k+m) \leq m(r+1)+ \sum_{i=1}^{L} (L+1-i) {m \choose i}, \notag \\
			k \leq \frac{m(r-L)+ \sum_{i=1}^{L} (L+1-i) {m \choose i}}{L+1}. \label{eq:blocklength_2}
			\eea
			\eqref{eq:blocklength_2} gives an upper bound on $k$ which is applicable for every $1\leq L \leq m$. Taking minium over $1\leq L \leq m$ gives:
			\bea
			k \leq \min_{1 \leq L\leq m} \frac{m(r-L)+ \sum_{i=1}^{L} (L+1-i) {m \choose i}}{L+1} \label{eq:blocklength_3}
			\eea
		\end{proof}
		\chapter{Proof of Lemma~\ref{lem:min_len3}} \label{app:lem_min_len3}
		\begin{proof}
			Each column of the $(M \times N)$ binary matrix $H$ has Hamming weight equal to $2$. As a first step, do the following. Permute the rows of $H$ such that the $2$ non-zero coordinates of the first column appear in the first two rows. Let the resulting matrix be $H'$. Let the Hamming weights of first two rows of $H'$ be $w_1$ and $w_2$.

			Now, keeping the first column fixed, permute the columns of $H'$ such that the $w_1$ non zero coordinates of the first row appear in the first $w_1$ columns and the $w_2-1$ non-zero coordinates (except the first coordinate) of the second row appear in columns $w_1+1$ through $w_1+w_2-1$. 
			
			Note that except the first column, there is no column with non zero entries in both of the first $2$ rows. This follows from the fact that $d_\text{min}\geq 4$.
			
			If a column has non-zero elements in the first row and $g^\text{th}$ row, $g \geq 3$, then the remaining columns cannot have non-zero entries in the second row and the $g^\text{th}$ row. This is due to the fact that these two columns (one column with non-zero entry in the first row and $g^\text{th}$ row and the other column with non-zero entry in the second row and $g^\text{th}$ row) along with the first column of $H'$ forms a set of $3$ linearly dependent vectors, which would imply $d_\text{min} < 4$.
			
			Hence, if a column has non-zero elements in the first row and $g^\text{th} (g>2)$ row, then a column with non-zero entries in the second and $g^\text{th}$ row cannot occur and if a column (apart from the first column) has a one in the first row, there are only $M-2$ possible coordinates where it can have the second non-zero entry. Similarly if a column (apart from the first column) has a one in the second row, there are only $M-2$ possible coordinates where it can have the second non-zero entry. Hence, we have
			\beq
			w_1+w_2-2 \leq M-2. \label{eq:dmin3_4_ineq}
			\eeq

			Remove the first $w_1+w_2-1$ columns of $H'$ and then remove the first $2$ rows of $H'$. Let the resulting matrix be $H_{M-2}$. In $H'$, since none of columns other then the first $w_1+w_2-1$ columns of $H'$ has ones in any of the first 2 rows, the new matrix $H_{M-2}$ is also a parity-check matrix of a code with $d_\text{min} \geq 4$ and each column of weight $2$ with $M-2$ rows. As a second step, repeat the same arguments as first step on $H_{M-2}$. Let the resulting matrix after permutation of rows and columns of $H_{M-2}$ in the second step be $H'_{M-2}$. Let $w_3$ and $w_4$ denote the Hamming-weights of the row one and row two of $H'_{M-2}$ respectively. Using similar arguments as in the case of \eqref{eq:dmin3_4_ineq}, we get
			\beqn
			w_3+w_4-2 \leq M-4.
			\eeqn
			
			Now remove the first $w_3+w_4-1$ columns of $H'_{M-2}$ and then remove the first two rows of $H'_{M-2}$ as before and let the resulting matrix be $H_{M-4}$. Now as a third step, repeat the same arguments on $H_{M-4}$. Repeat the procedure until all columns are exhausted i.e., all the columns are removed in the process. Let $e$ denote the total number of steps in the process (we are removing atleast one column in each step. So the process will terminate.). 
			
			As we progress, we are removing $w_1+w_2-1$ columns in the first step and $w_3+w_4-1$ columns in the second step and so on. Hence, the number of columns of $H$ is given by:
			\begin{equation*}
				N=(w_1+w_2-1)+(w_3+w_4-1)+\dots+(w_{2e-1}+w_{2e}-1).
			\end{equation*}
			In the $i^\text{th}$ step we get:
			\beq
			w_{2i-1}+ w_{2i}-2 \leq M-2i, \hspace{0.5cm} 1 \leq i \leq e. \label{eq:min_len3_lem_1}
			\eeq
			Since $2$ rows are removed in each step, the number of steps is upper bounded as follows:
			\beq
			e \leq \left\lfloor \frac{M+1}{2} \right\rfloor. \label{eq:min_len3_lem_2}
			\eeq
			
			From \eqref{eq:min_len3_lem_1} and \eqref{eq:min_len3_lem_2}, we have
			\bean
			(w_1+w_2-1)+(w_3+w_4-1)+\dots+(w_{2e-1}+w_{2e}-1) \\
			\leq \left( \sum_{\ell=1}^{\lfloor \frac{M+1}{2} \rfloor}  (M-(2 \ell-1)) \right) + 1.
			\eean
			Hence,
			\begin{align*}
				N &\leq \left( \sum_{\ell=1}^{\lfloor \frac{M+1}{2} \rfloor}  (M-(2 \ell-1)) \right) + 1 \\
				&= M \left\lfloor \frac{M+1}{2} \right\rfloor -\left\lfloor \frac{M+1}{2} \right\rfloor\left(\left\lfloor \frac{M+1}{2} \right\rfloor+1\right)+\left\lfloor \frac{M+1}{2} \right\rfloor + 1 \\
				&\leq M \left(\frac{M+1}{2}\right) -\left(\frac{M+1}{2}-1\right)\left(\frac{M+1}{2}\right)+\frac{M+1}{2}+ 1 \\
				&= (M +2) \left(\frac{M+1}{2}\right) -\frac{(M+1)(M+1)}{4}+ 1 \\
				&= \left(M+\frac{3}{2}\right) \left(\frac{M+1}{2}\right) -\frac{M(M+1)}{4}+ 1 \\
				&= \frac{(M+3)(M+1)}{4}+1.
			\end{align*}
		\end{proof}
		\chapter{Proof of Theorem \ref{thm:min_len3}} \label{app:min_len3}
		\begin{proof}
			Let, 
			\bean
			\mathcal{B}_{0} & = & \text{span}\left(\{\underline{c} \in \mathcal{C}^{\perp} : |\text{supp}(\underline{c})| \leq r+1  \}\right).
			\eean
			Let $\{\underline{c}_1,...,\underline{c}_m\}$ be a basis of $\mathcal{B}_0$ with $|\text{supp}(\underline{c}_i)| \leq r+1$, $\forall i \in [m]$. Let
			\bean
			H & = & 
			\left[ \begin{array}{c} 
				\underline{c}_1  \\
				\vdots \\
				\underline{c}_m   \end{array} \right].
			\eean
			Let $m'=n-k$.
			Extend the basis $\{\underline{c}_1, \cdots ,\underline{c}_m\}$ of $\mathcal{B}_0$ to a basis of $\cal{C^{\perp}}$ and form a $(m' \times n)$ parity-check matrix $H'$ of $\cal{C}$, with this basis of $\mathcal{C}^{\perp}$ as its rows. Hence 
			\bean
			H' & = & 
			\left[ \begin{array}{c} 
				H  \\
				H_1   \end{array} \right],
			\eean
			where the rows of $H_1$ are precisely those vectors added to the basis of $\mathcal{B}_0$ while extending it to a basis of $\cal{C^{\perp}}$.
			\bea
			n=k+m' \geq k+m. \label{eq:dmin3_ineq_1}
			\eea
			
			$H$ is a parity-check matrix of an $(n,n-m,r,3)$ S-LR code as the row space of $H$ contains all the codewords of $\cal{C^{\perp}}$ which has weight $\leq r+1$.
			We will consider the $(n,n-m,r,3)$ S-LR code defined by the $(m \times n)$ parity-check matrix $H$ and derive a lower bound on $m$ as a function of $k$ and $r$. Using \eqref{eq:dmin3_ineq_1} and the derived lower bound on $m$, we obtain a lower bound on $n$.
			
			Let $s_1,s_2$ be the number of columns of $H$ having weights $1$ and $2$ respectively. Then by counting non zero entries of $H$ row wise and column wise, we get:
			\bea
			s_1+2 s_2 +3(n-s_1-s_2) & \leq & m(r+1) \notag \\
			3n-m(r+1)-2s_1 & \leq & s_2. \label{eq:bound_s2}
			\eea
			Permute the columns and rows of $H$ such that:
			\bean
			H & = & 
			\left[ \begin{array}{cc} 
				D_{s_1} & A  \\
				0 & B   \end{array} \right],
			\eean	
			where $D_{s_1}$ is a $(s_1 \times s_1)$ diagonal matrix with non zero diagonal entries and the set of first $s_1$ columns are the set of all columns of $H$ having weight one. Note that in the first $s_1$ columns having weight one there can't be 2 columns with non zero entry at exactly the same row as this would imply $d_\text{min} \leq 2$ (where $d_\text{min}$ is the minimum distance of the code defined by the parity-check matrix $H$). Now the $s_2$ columns having weight $2$ are to the right of $D_{s_1}$. In these $s_2$ columns, we cannot have a column with two non zero entries in the first $s_1$ rows, as this would imply $d_\text{min} \leq 3$, which is a contradiction since the code defined by parity-check matrix $H$ is also an $(n,n-m,r,3)$ S-LR code which requires $d_\text{min} \geq 4$.
			
			Let $f_1$ denote number of columns of weight $2$ with exactly one non zero entry in the first $s_1$ rows and $f_2$ denote the number of columns of weight $2$ with both non zero entries in the last $m-s_1$ rows.
			\begin{align}
				s_2 &=f_1+f_2 \label{s2:1}\\
				f_1 &\leq s_1(m-s_1)  \\
				f_2 &\leq N(m-s_1,2,4), 
			\end{align}
			where $N(m-s_1,2,4)$ is the maximum number of columns in a parity-check matrix (of a code with minimum distance $\geq 4$) with $m-s_1$ rows and each column having a Hamming weight 2. Here, $f_2 \leq N(m-s_1,2,4)$ because the sub matrix of $H$, obtained by restricting $H$ to these $f_2$ columns and the last $m-s_1$ rows is a parity-check matrix (of a code with minimum distance $\geq 4$) with $m-s_1$ rows and each column having a Hamming weight of 2.
			Restricting to binary codes, it is straightforward to see that,
			$N(m-s_1,2,4) \leq {m-s_1 \choose 2}$.
			Using Lemma~\ref{lem:min_len3}, this can be tightened to:
			\begin{align*}
				N(m-s_1,2,4) &\leq \frac{(m-s_1+3)(m-s_1+1)}{4}+1.
			\end{align*}
			Hence from \eqref{s2:1},
			\begin{align}
				s_2  &\leq s_1(m-s_1) + {m-s_1 \choose 2} \label{eq:bin_1}\\
				s_2  &\leq s_1(m-s_1) + \frac{(m-s_1+3)(m-s_1+1)}{4}+1. \label{eq:bin_2}
			\end{align}
			
			Inequalities \eqref{eq:binary_1} and \eqref{eq:binary_2} given below are obtained by substituting the above upper bounds \eqref{eq:bin_1} and \eqref{eq:bin_2} on $s_2$ in \eqref{eq:bound_s2}.
			\begin{align}
				3n-m(r+1)-2s_1  &\leq s_1(m-s_1) + {m-s_1 \choose 2} \label{eq:binary_1}\\
				3n-m(r+1)-2s_1  &\leq s_1(m-s_1) + \notag \\ &\frac{(m-s_1+3)(m-s_1+1)}{4}+1. \label{eq:binary_2}
			\end{align}
			
			\eqref{eq:binary_1} (On using $n \geq k+m$) leads to:
			\bean
			m^2 + m(2r-5)-(6k+s_1^2-5s_1) \geq 0,
			\eean
			which along with the fact $m \geq 0$ and the fact that the derivative of $m^2 + m(2r-5)-(6k+s_1^2-5s_1)$ at the negative root of $m^2 + m(2r-5)-(6k+s_1^2-5s_1)$ is $\leq 0$, shows that $m$ is atleast the positive root of $m^2+m(2r-5)-(6k+s_1^2-5s_1)$. Hence,
			\begin{align*}
				m &\geq f_1(s_1) = \left\lceil \frac{-(2r-5)+\sqrt{(2r-5)^2 +4(6k+s_1^2-5s_1)}}{2} \right\rceil 
			\end{align*}
			which when added with $k$ gives a required lower bound on the block-length $n$.
			
			\eqref{eq:binary_2} (On using $n \geq k+m$) leads to:
			\bean
			m^2 + m(4r-4+2s_1)-(12k+3s_1^2-4s_1-7) \geq 0, 
			\eean
			which along with the fact $m \geq 0$ and the fact that the derivative of $m^2 + m(4r-4+2s_1)-(12k+3s_1^2-4s_1-7)$ at the negative root of $m^2 + m(4r-4+2s_1)-(12k+3s_1^2-4s_1-7)$ is $\leq 0$,  shows that $m$ is atleast the positive root of  $m^2 + m(4r-4+2s_1)-(12k+3s_1^2-4s_1-7)$. Hence,
			\begin{align*}
				m &\geq f_2(s_1) = \left\lceil \frac{-(4r-4+2s_1)+\sqrt{(4r-4+2s_1)^2 +4(12k+3s_1^2-4s_1-7)}}{2} \right\rceil
			\end{align*}
			Hence using $m \geq f_1(s_1)$, $m \geq f_2(s_1)$ and $m \geq s_1$ we get
			\bean
			m \geq \min_{s_1 \in \mathbb{N} \cup \{0\}} \max(f_1(s_1),f_2(s_1),s_1).
			\eean
			Using $n \geq k+m$, we get 
			\begin{align*}
				n \geq k+\min_{s_1 \in \mathbb{N} \cup \{0\}} \max(f_1(s_1),f_2(s_1),s_1),
			\end{align*}
			where $\mathbb{N}$ denote the set of natural numbers. 
		\end{proof}
		\chapter{Proof of theorem \ref{rate_both}}\label{Appendix_rate_bounds}
		\begin{proof}
			Throughout this proof: we denote any set of the form $\{x_1,....,x_{\ell}\}$ by $\{x_j\}$ for any $\ell$. The term $w$-weight column of $A$ refers to a column of the matrix $A$ with Hamming weight $w$. The term sum of row weights of a matrix $A$, refers to the sum of weight of each row of the matrix $A$. Similar definition applies for the term sum of column weights. For an $x \times y$ matrix $A$ and for some $E_1=\{m_1,..,m_{|E_1|}\} \subseteq [x]$,$E_2=\{n_1,...,n_{|E_2|}\} \subseteq [y]$ with $m_{\ell} < m_{\ell+1}$, $\forall \ell \in [|E_1|-1]$ and $n_{\ell} < n_{\ell+1}$, $\forall \ell \in [|E_2|-1]$ we denote by $A|_{E_1,E_2}$, the $(|E_1| \times |E_2|)$ submatrix of $A$ with $(i,j)$th entry of $A|_{E_1,E_2}$ equal to $(m_i,n_j)$th entry of $A$. The term empty matrix refers to a $0 \times L$ or $L \times 0$ or $0 \times 0$ matrix, for some $L$. If a matrix is not an empty matrix, it will be termed as a non-empty matrix. 
			\section{case i: $t$ an even integer}
			Recall that $H_1={[{\underline{c}_1}^T,...,{\underline{c}_m}^T]}^T$ where ${\underline{c}_1,...,\underline{c}_m}$ are $m$ linearly independent codewords with $w_H(\underline{c}_i) \leq r+1$, $\forall i \in [m]$. It can be seen that the matrix $H_1$ after permutation of rows and columns can be written in the form given in \eqref{Hform}.
			\bea
			H_1 = \left[
			\scalemath{1}{
				\begin{array}{c|c|c|c|c|c|c|c|c}
					D_0 & A_1 & 0 & 0 & \hdots & 0 & 0 & 0 & \\
					\cline{1-8}
					0 & D_1 & A_2 & 0 & \hdots & 0 & 0 & 0 & \\
					\cline{1-8}
					0 & 0 & D_2 & A_3 & \hdots & 0 & 0 & 0 & \\
					\cline{1-8}
					0 & 0 & 0 & D_3 & \hdots & 0 & 0 & 0 &  \\
					\cline{1-8}
					\vdots & \vdots & \vdots & \vdots & \hdots & \vdots & \vdots & \vdots & D\\
					\cline{1-8}
					0 & 0 & 0 & 0 & \hdots & A_{\frac{t}{2}-2} & 0 & 0 & \\
					\cline{1-8}
					0 & 0 & 0 & 0 & \hdots & D_{\frac{t}{2}-2} & A_{\frac{t}{2}-1} & 0 & \\
					\cline{1-8}
					0 & 0 & 0 & 0 & \hdots & 0 & D_{\frac{t}{2}-1} &  & \\
					\cline{1-7}
					0 & 0 & 0 & 0 & \hdots & 0 & 0 & C & \\
				\end{array} 
			}
			\right], \ \ \label{Hform}
			\eea
			%				
			%				where 
			%				\begin{itemize}
			%					\item the rows of $H_1$ are labeled ${1,2,...,m}$ and columns are labeled ${1,2,...n}$,
			%					\item the matrices $\{A_i, 1\leq i \leq \frac{t}{2}-1\}$ have dimensions $\rho_{i-1} \times a_i$, the matrices $\{D_i, 0 \leq i \leq \frac{t}{2}-1\}$ have dimensions $\rho_{i} \times a_i$,
			%					\item $D_0$ is a $\rho_0 \times a_0$ matrix with each column having Hamming weight 1 and each row having Hamming weight at least 1,
			%					\item  $\{A_i\}$,$\{D_i\}$,$\{B_i:= \left[\frac{A_i}{D_i}\right]\}$ are matrices such that each column of $B_i$ has Hamming weight 2, each column of $A_i$ has Hamming weight at least 1 and each row of $D_i$ has Hamming weight at least 1,
			%					\item $C$ is a matrix with each column having Hamming weight 2,
			%					\item $D$ is a matrix with each column having Hamming weight at least 3.
			%				\end{itemize}
			where 
			\begin{enumerate}
				\item Rows are labeled by the integers $1,...,m$  with top most row of $H_1$ labelled $1$ and next row labeled $2$ and so on. Similarly columns are labeled by the integers $1,...,n$ with leftmost column of $H_1$ labeled $1$ and next column labeled $2$ and so on, 
				\item $A_i$ is a $\rho_{i-1} \times a_i$ matrix for $1\leq i \leq \frac{t}{2}-1$, $D_i$ is a $\rho_{i} \times a_i$ matrix for $0 \leq i \leq \frac{t}{2}-1$ for some $\{\rho_j\},\{a_j\}$. Note that any of $a_i$, $\rho_i$ is allowed to take the value 0 for any $i$,
				\item $D_0$ is a matrix with each column having weight 1 and each row having weight at least 1. The first $a_0$ columns of $H_1$ contains the columns of $D_0$. The set of first $a_0$ columns of $H_1$ is equal to the set of all those columns of $H_1$ which has weight $1$,
				\item  $\{A_j\}$,$\{D_j\}$,$\{B_j= \left[\frac{A_j}{D_j}\right]\}$ are  such that for $1\leq i \leq \frac{t}{2}-1$, each column of $B_i$ has weight 2, each column of $A_i$ has weight at least 1 and each row of $D_i$ has weight at least 1 and each column of $D_i$ has weight at most 1,
				\item $C$ is a matrix with each column having weight 2. The set of columns of the matrix $D$ is equal to the set of all those columns of $H_1$ which has weight $\geq 3$.
			\end{enumerate}
			Let $J = \min{\{\{j: 1 \leq j \leq \frac{t}{2}-1, A_j,D_j \text{ are empty matrices }\} \cup \{\frac{t}{2} \} \}}$. We set $J=0$, if $D_0$ is an empty matrix. We now redefine $H_1$ based on the value of $J$.
			We set $A_i$, $D_i$ to be empty matrices and set $a_i=0, \rho_i=0$, $\forall J \leq i \leq \frac{t}{2}-1$. 
			Let $E_2  \subseteq \{\sum_{j=0}^{J-1}a_j+1,...,n\}$ such that $E_2$ is the set of labels of all the 2-weight columns of $H_1$ apart from those 2-weight columns of $H_1$ containing the columns of $B_1,...,B_{J-1}$. Let $E_1 = \{\sum_{j=0}^{J-1}{\rho_j}+1,...,m\}$. If $J=\frac{t}{2}$ then $H_1$ is defined by \eqref{Hform}. If $J < \frac{t}{2}$, we redefine $C=H_1|_{E_1,E_2}$.   If $J < \frac{t}{2}$, the matrix $H_1$ can be written in the form given in \eqref{Hform_111} and hence defined by \eqref{Hform_111}.
			\bea
			H_1 = \left[
			\scalemath{1}{
				\begin{array}{c|c|c|c|c|c|c|c|c}
					D_0 & A_1 & 0 & 0 & \hdots & 0 & 0 & 0 & \\
					\cline{1-8}
					0 & D_1 & A_2 & 0 & \hdots & 0 & 0 & 0 & \\
					\cline{1-8}
					0 & 0 & D_2 & A_3 & \hdots & 0 & 0 & 0 & \\
					\cline{1-8}
					0 & 0 & 0 & D_3 & \hdots & 0 & 0 & 0 &  \\
					\cline{1-8}
					\vdots & \vdots & \vdots & \vdots & \hdots & \vdots & \vdots & \vdots & D\\
					\cline{1-8}
					0 & 0 & 0 & 0 & \hdots & A_{J-2} & 0 & 0 & \\
					\cline{1-8}
					0 & 0 & 0 & 0 & \hdots & D_{J-2} & A_{J-1} & 0 & \\
					\cline{1-8}
					0 & 0 & 0 & 0 & \hdots & 0 & D_{J-1} & 0 & \\
					\cline{1-8}
					0 & 0 & 0 & 0 & \hdots & 0 & 0 & C & \\
				\end{array} 
			}
			\right]. \ \ \label{Hform_111}
			\eea
			%				where 
			%				\begin{itemize}
			%%					\item the rows of $H_1$ are labeled ${1,2,...,m}$ and columns are labeled ${1,2,...n}$,
			%%					\item the matrix $A_i$ has dimension $\rho_{i-1} \times a_i$ for $1\leq i \leq \frac{t}{2}-1$, the matrix $D_i$ has dimension $\rho_{i} \times a_i$ for  $0 \leq i \leq \frac{t}{2}-1$,
			%%					\item $D_0$ is a $\rho_0 \times a_0$ matrix with each column having Hamming weight 1 and each row having Hamming weight at least 1,
			%%					\item  $\{A_i\}$,$\{D_i\}$,$\{B_i:= \left[\frac{A_i}{D_i}\right]\}$ are matrices such that each column of $B_i$ has Hamming weight 2, each column of $A_i$ has Hamming weight at least 1 and each row of $D_i$ has Hamming weight at least 1,
			%%					\item $C$ is a matrix with each column having Hamming weight 2,
			%%					\item $D$ is a matrix with each column having Hamming weight at least 3,	
			%					\item If $J$ is the first index such that $A_J, D_J$ ($J=0$ if $D_0$ is an empty matrix) are empty matrices ($0 \times L$, $L \times 0$, $0 \times 0$ matrix) then we set $A_i,D_i$ to be empty matrices for all $\frac{t}{2}-1 \geq i \geq J$ and set $a_i=0, \rho_i=0, \frac{t}{2}-1 \geq i \geq J$ and place all of the 2-weight columns apart from the 2-weight columns corresponding to $B_1$ to $B_{J-1}$ in $C$. Let the number of columns in $C$ be $a_{\frac{t}{2}}$. If $C$ has no columns then we set $a_{\frac{t}{2}}=0$.
			%				\end{itemize}
			Irrespective of the value of $J$, let the number of columns in $C$ be denoted as $a_{\frac{t}{2}}$. If $C$ is an empty matrix then we can clearly set $a_{\frac{t}{2}}=0$.
			The entire derivation of upper bound on rate is correct and all the inequalities in the derivation will hold with $a_i=0$, $\rho_i=0$, $\forall J \leq i \leq \frac{t}{2}-1$. Let $S=\{j: 1 \leq j \leq J-1, D_{j} \text{ is an empty matrix}, A_{j} \text{ is a non-empty matrix} \} $. Let $\ell = \min{\{ S \cup \{J\} \}}$. If $\ell < J$, then the proof of the following Lemma \ref{claim1} (since the proof of Lemma \ref{claim1} proceeds by induction starting with the proof of the lemma for $D_0$ first and then proceeding to $A_1$ and then $D_1$ and so on and hence we prove the lemma for $A_{\ell}$ first and then proceed to $D_{\ell}$ and since $D_0,A_j,D_j$ must be non-empty matrices $\forall 1 \leq j \leq \ell-1$) will imply that each column of $A_{\ell}$ has weight 1 which will imply that $D_{\ell}$ cannot be an empty matrix. Hence the case $A_{\ell}$, a non-empty matrix and $D_{\ell}$, an empty matrix cannot occur.
			Although we have to prove the following Lemma \ref{claim1} for $A_i,D_i$, $\forall 1 \leq i \leq J-1$, $D_0$, we assume all $D_0,$ $A_i,D_i$, $\forall 1 \leq i \leq \frac{t}{2}-1$ to be non-empty matrices and prove the lemma. Since the proof of the lemma is by induction, the induction can be made to stop after proving the lemma for $A_{J-1},D_{J-1}$ (induction starts by proving the lemma for $D_0$ as mentioned before) and the proof is unaffected by it.
			\begin{lem}\label{claim1}
				For $1 \leq i \leq \frac{t}{2}-1$, $A_i$ is a matrix with each column having weight 1.
				For $0 \leq j \leq \frac{t}{2}-1$, $D_j$ is a matrix with each row having weight 1 and each column having weight 1. 
			\end{lem}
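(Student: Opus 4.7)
The plan is to prove Lemma \ref{claim1} by strong induction on the block index, using the sequential recovery property to forbid any "non-minimal" column or row weights via explicit erasure configurations that cannot be sequentially recovered. The common device is: if some block contains an "extra" $1$, I construct a chain erasure pattern along the staircase such that every row of $H_1$ incident to any erased position is incident to at least two of them, which rules out the first step of any sequential recovery.

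For the base case, I would show each row of $D_0$ has weight exactly $1$. Suppose for contradiction some row of $D_0$ has weight $\geq 2$, with ones supported on two weight-$1$ columns $j_1,j_2$ of $H_1$. Then that row is the only row of $H_1$ with a nonzero entry at $j_1$ or $j_2$. Erasing $c_{j_1}$ and $c_{j_2}$ yields a $2$-erasure configuration in which no row of $H_1$ meets the erased set in a single position, so sequential recovery fails (for any $t \geq 2$), a contradiction.

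For the inductive step, assume the claim holds for blocks $0,1,\ldots,i-1$; in particular $D_{i-1}$ has row and column weight $1$ and each column of $A_j$ has weight $1$ for $j<i$. I would first prove that every column of $A_i$ has weight $1$. If some column $c$ of $A_i$ had weight $\geq 2$, I would select two distinct rows $r_1,r_2$ of the $D_{i-1}$ block where $c$ is nonzero and trace two chains down the staircase: from each $r_\ell$, the unique $D_{i-1}$-entry in that row picks out a column of $B_{i-1}$, whose unique $A_{i-1}$-entry (column weight $1$ by induction) selects a row of the $D_{i-2}$ block, and so on, eventually terminating at a weight-$1$ column of $D_0$. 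The resulting erasure set, consisting of $c$ together with the two chains, has size at most $2i+1$, and every row of $H_1$ meeting it meets it in at least two positions: $c$ is pinned to its two chain-heads by rows of the $D_{i-1}$ block, every intermediate chain column is pinned to its two neighbours by rows in its block, and each terminal $D_0$-column is pinned to the $A_1$-entry above it. Hence no symbol can be recovered first, contradicting sequential recovery since $2i+1 \leq 2s+1 \leq t$. Once each column of $A_i$ has weight $1$, the weight-$2$ constraint on columns of $B_i$ forces each column of $D_i$ to have weight $1$; and the row-weight-$1$ claim for $D_i$ is proved by the same template, with two columns $\mu_1,\mu_2$ of $B_i$ sharing a row of $D_i$ playing the role of $c$ and producing a $2i+2 \leq t$ erasure pattern that similarly stalls.

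The main obstacle will be handling the possibility that the two chains \emph{merge}: since $A_{i-1}$ is only guaranteed to have column weight $1$, two distinct columns of $B_{i-1}$ may place their $A_{i-1}$-entries in the same row of the next block up. I would handle this by observing that a merge at any intermediate level $j$ simply yields an even smaller erasure pattern in which the merged row is pinned simultaneously by both chain fragments on either side, so the "every incident row meets the erased set in $\geq 2$ positions" property still holds and sequential recovery still fails. Bookkeeping the cases $i \leq J-1$ (invoking the definition of $J$ to ensure no earlier block is empty, so the chains can always be continued down to $D_0$), checking that the constructed patterns contain pairwise distinct columns, and verifying the erasure counts $2i+1$ and $2i+2$ remain bounded by $t$ via $i \leq s = \lfloor (t-1)/2 \rfloor$ (and the assumption $r \geq 3$), completes the argument.
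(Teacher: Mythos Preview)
Your chain construction is correct and is essentially a concrete unpacking of the paper's approach: where the paper phrases things via an inductive ``Property $P_i$'' (any vector of weight at most $2$ supported in block-$i$ rows is a linear combination of at most $2(i+1)$ columns from blocks $0,\ldots,i$), you build the two explicit chains that witness this combination. Both arguments produce the same small set of columns and derive the same contradiction.

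There is, however, a genuine gap in your stated justification. You argue that ``every row of $H_1$ incident to an erased position is incident to at least two of them, which rules out the first step of any sequential recovery.'' This inference is not valid: sequential recovery may use \emph{any} dual codeword of weight $\le r+1$, i.e., any element of the row space of $H_1$, not just the rows themselves. A set on which every row of $H_1$ has $\ge 2$ incidences (a stopping set) can still admit a linear combination of rows with exactly one incidence, and then a first recovery step exists. What actually makes your erasure pattern uncorrectable is that your chain columns are \emph{linearly dependent} (the chain is precisely a telescoping linear combination summing to zero), so the pattern supports a nonzero codeword of weight at most $2i+1$ (respectively $2i+2$), contradicting $d_{\min}\ge t+1$. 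This is exactly how the paper argues: it never invokes ``no row sees a single erasure'' but instead uses the minimum-distance consequence of the S-LR property directly. Once you replace the stopping-set reasoning by the linear-dependence observation (which your chains already encode), your proof goes through and matches the paper's.
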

			%				\begin{claim}\label{claim1}
			%					
			%					$\{A_i\}$ are matrices with each column having Hamming weight 1 and
			%					$\{D_i\}$ are matrices with each row and each column having Hamming weight 1. 
			%					
			%					
			%				\end{claim}
			\begin{proof}
				Let $d_{\min}$ be the minimum distance of the code with parity check matrix $H_1$. We use the fact that $d_{\min} \geq t+1$ to prove the lemma.\\
				It is enough to show that:
				
				\begin{itemize}
					\item For $1 \leq i \leq \frac{t}{2}-1$, $A_i$ is a matrix with each column having  weight 1.
					\item For $0 \leq i \leq \frac{t}{2}-1$, $D_i$ is a matrix with each row having weight 1. 
				\end{itemize}
				This is because the property that $A_i$ is a matrix with each column having weight 1 combined with the fact that each column of $B_i=\left[\frac{A_i}{D_i}\right]$ has weight 2  implies that $D_i$ is a matrix with each column having weight 1 and $D_0$ by definition is a matrix with each column having weight 1. \\
				Let us denote the column of the matrix $H_1$ with label $j$ by $\underline{h}_j$, $1 \leq j\leq n$. Let us show the lemma by induction as follows:\\
				%            	(Note that in the arguments that follow the notations $\underline{h}_{p_1},...,\underline{h}_{p_\psi}$, $\underline{h}_{y_1},...,\underline{h}_{y_\phi}$,  $\underline{h}_{z_1},...,\underline{h}_{z_\theta}$ and $\underline{h}_{w_1},...,\underline{h}_{w_\gamma}$ will be repeatedly used each time denoting some set of column vectors not necessarily related to a different instance of the same notation.)\\
				
				Induction Hypothesis: 
				\begin{itemize}
					\item We induct on a variable denoted by $i$.
					\item  Property $P_i$: any $m \times 1$ vector having weight at most 2 with support contained in $\{\sum_{\ell=0}^{i-1}\rho_{\ell}+1,..,\sum_{\ell=0}^{i-1}\rho_{\ell}+\rho_i\}$  can be written as some linear combination of vectors  $\underline{h}_{p_1},...,\underline{h}_{p_\psi}$ for some $\{p_1,...,p_\psi\} \subseteq \{1,...,\sum_{\ell=0}^{i}a_{\ell}\}$ and for some $0 < \psi \leq 2(i+1)$.
					\item Let us assume as induction hypothesis that the property $P_i$ is true and the Lemma \ref{claim1} is true for $A_1,...A_i$, $D_0,...D_i$.
				\end{itemize}
				Initial step $i=0$ and $i=1$:
				\begin{itemize}	
					\item We show that each row of $D_0$ has Hamming weight exactly 1.\\
					Suppose there exists a row of $D_0$ with Hamming weight more than 1; let the support set of the row be $\{i_1,i_2,...\}$. Then the columns $\underline{h}_{i_1},\underline{h}_{i_2}$ of $H_1$ can be linearly combined to give a zero column. This contradicts the fact that $d_{\min} \geq t+1, t > 0$ and $t$ is even.
					Hence, all rows of $D_0$ have Hamming weight exactly 1.
					\item If $t = 2$, then the lemma is already proved. So let $t \geq 4$.		
					\item We show that each column of $A_1$ has Hamming weight exactly 1.\\
					Suppose $j^{th}$ column of $A_1$ $\text{for some } 1 \leq j \leq a_1$ has Hamming weight 2; let the support of the column be $\{j_1,j_2\}$ in $A_1$. Then the column $\underline{h}_{a_0 + j}$ in $H_1$ along with the 2 column vectors of $H_1$ say $\underline{h}_{p_1},\underline{h}_{p_2}$ where $\{p_1,p_2\} \subseteq \{1,...,a_0\}$ where $\underline{h}_{p_1}$ has exactly one non-zero entry in  $j_1^{th}$ co-ordinate and $\underline{h}_{p_2}$ has exactly one non-zero entry $j_2^{th}$ co-ordinate, can be linearly combined to give a zero column again leading to a contradiction on minimum distance. Such columns with one column having only one non-zero entry exactly in $j_1^{th}$ co-ordinate and another column having only one non-zero entry exactly in $j_2^{th}$ co-ordinate with column labels in $\{1,...,a_0\}$ exist due to the 1-weight columns in the matrix $D_0$.
					\item The above argument also shows that any $m \times 1$ vector having Hamming weight at most 2 with support contained in $\{1,..,\rho_0\}$ can be written as some linear combination of at most 2 column vectors of $H_1$ say $\underline{h}_{p_1},...,\underline{h}_{p_\psi}$ for some $\{p_1,...,p_\psi\} \subseteq \{1,...,a_0\}$ ($\psi = 1 \text{ or }2$). Hence Property $P_0$ is true.		
					\item We now show that each row of $D_1$ has Hamming weight exactly 1.
					Suppose $j^{th}$ row of $D_1$ has Hamming weight more than 1; let the support set of the row be $\{l_1,l_2,...,l_z\}$ in $D_1$. Now there is some linear combination of columns $\underline{h}_{a_0+l_1}$ and $\underline{h}_{a_0+l_2}$ in $H_1$ that gives a zero in $(\rho_0+j)^{th}$ coordinate and thus this linear combination has support contained in $\{1,...,\rho_0\}$ with Hamming weight at most 2. Now applying Property $P_0$ on this linear combination implies that there is a non-empty set of at most $4$ linearly dependent columns in $H_1$ leading to a contradiction on minimum distance.		
					\item Now we show that Property $P_{1}$ is true: We have to prove that any $m \times 1$ vector with Hamming weight at most $2$ with support contained in $\{\rho_0+1,..,\rho_0+\rho_1\}$ can be written as linear combination of at most $2(1+1)=4$ column vectors of $H_1$ say $\underline{h}_{p_1},...,\underline{h}_{p_\psi}$ for some $\{p_1,...,p_\psi\} \subseteq \{1,...,\sum_{l=0}^{1}a_l\}$ and $0 < \psi \leq 4$. This can be easily seen using arguments similar to ones presented before.
					Let an $m \times 1$ vector have non-zero entries exactly in coordinates $\rho_0+j_1,\rho_0+j_2$ or $\rho_0+j_1$. Then this vector can be linearly combined with at most 2 column vectors in $H_1$ say $\underline{h}_{y_1},...,\underline{h}_{y_\phi}$ where $\{y_1,...,y_\phi\} \subseteq \{ a_0+1,...,a_0+a_{1} \}$ ($\phi = 1 \text{ or }2$) (2 columns $\underline{h}_{y_1},\underline{h}_{y_2}$ with first and second column having a non-zero entry in co-ordinates $\rho_0+j_1,\rho_0+j_2$ respectively or a column $\underline{h}_{y_1}$ with a non-zero entry in $(\rho_0+j_1)^{th}$ co-ordinate. These columns $\underline{h}_{y_1},\underline{h}_{y_2}$ or $\underline{h}_{y_1}$ exist due to $D_1$. Note that $D_1$ is a matrix with each row and column  having weight exactly 1.) to form a $m \times 1$ vector with Hamming weight at most 2 with support contained in $\{ 1,...,\rho_0 \}$ which in turn can be written as linear combination of at most $2$ column vectors in $H_1$ say $\underline{h}_{z_1},...,\underline{h}_{z_\theta}$ for some $\{z_1,...,z_\theta\} \subseteq \{1,...,a_0\}$ ($\theta = 1$ or $2$) by property $P_0$. Hence the given $m \times 1$ vector is written as linear combination of at most $2(1+1)=4$ column vectors in $H_1$ say $\underline{h}_{w_1},...,\underline{h}_{w_\gamma}$ for some $\{w_1,...,w_\gamma\} \subseteq \{1,...,\sum_{l=0}^{1}a_l\}$ and $0 < \gamma \leq 4$.
					
				\end{itemize}
				Induction step :
				\begin{itemize}
					\item Let us assume by induction hypothesis that Property $P_i$ is true and the Lemma \ref{claim1} is true for $A_1,...A_i$, $D_0,...D_i$ for some $i \leq \frac{t}{2}-2$ and prove the induction hypothesis for $i+1$. For $t=4$, the initial step of induction completes the proof of the Lemma \ref{claim1}. Hence assume $t>4$.
					\item Now we show that each column of $A_{i+1}$ has Hamming weight exactly 1: suppose $j^{th}$ column of $A_{i+1}$ for some $1 \leq j \leq a_{i+1}$ has Hamming weight 2; let the support of the column be $j_1,j_2$ in $A_{i+1}$. It is clear that the corresponding column vector  $\underline{h}_{\sum\limits_{l=0}^{i}a_l+j}$ in $H_1$ is a vector with support contained in $\{\sum_{l=0}^{i-1}\rho_l+1,..,\sum_{l=0}^{i-1}\rho_l+\rho_i\}$ and Hamming weight 2. Now applying Property $P_i$ on this column vector $\underline{h}_{\sum\limits_{l=0}^{i}a_l+j}$ implies that there is a non-empty set of at most $2(i+1)+1$ columns in $H_1$ which are linearly dependent; hence contradicts the minimum distance as $2(i+1)+1 \leq t-1$. Hence each column of $A_{i+1}$ has Hamming weight exactly 1.
					\item Now we show that each row of $D_{i+1}$ has Hamming weight exactly 1: suppose $j^{th}$ row of $D_{i+1}$ has Hamming weight more than 1; let the support set of the row be $\{l_1,...,l_z\}$ in $D_{i+1}$. Now some linear combination of columns $\underline{h}_{\sum\limits_{j=0}^{i}a_j+l_1}$ and $\underline{h}_{\sum\limits_{j=0}^{i}a_j+l_2}$ in $H_1$ will make the resulting vector have a $0$ in $(\sum\limits_{l=0}^{i}\rho_l+j)^{th}$ coordinate and the resulting vector also has Hamming weight at most 2 with support contained in $\{\sum_{l=0}^{i-1}\rho_l+1,..,\sum_{l=0}^{i-1}\rho_l+\rho_i\}$ and hence applying Property $P_i$ on this resulting vector implies that there is a non-empty set of at most $2(i+1)+2$ columns in $H_1$ which are linearly dependent; hence contradicts the minimum distance as $2(i+1)+2 \leq t$; thus proving that each row of $D_{i+1}$ has Hamming weight exactly 1.
					\item Now we show that Property $P_{i+1}$ is true: We have to prove that any $m \times 1$ vector with Hamming weight at most $2$ with support contained in $\{\sum_{l=0}^{i}\rho_l+1,..,\sum_{l=0}^{i}\rho_l+\rho_{i+1}\}$ can be written as linear combination of at most $2(i+2)$ column vectors of $H_1$ say $\underline{h}_{p_1},...,\underline{h}_{p_\psi}$ for some $\{p_1,...,p_\psi\} \subseteq \{1,...,\sum_{l=0}^{i+1}a_l\}$ and $0 < \psi \leq 2(i+2)$. This can be easily seen using arguments similar to ones presented before.
					Let an $m \times 1$ vector have non-zero entries in coordinates $\sum_{l=0}^{i}\rho_l+j_1,\sum_{l=0}^{i}\rho_l+j_2$ or $\sum_{l=0}^{i}\rho_l+j_1$. Then this vector can be linearly combined with at most 2 column vectors in $H_1$ say $\underline{h}_{y_1},...,\underline{h}_{y_\phi}$ where $\{y_1,...,y_\phi\} \subseteq \{ \sum_{l=0}^{i}a_l+1,...,\sum_{l=0}^{i}a_l+a_{i+1} \}$ with $\phi = 1$ or $2$,  (2 columns $\underline{h}_{y_1},\underline{h}_{y_2}$ with first column and second column having a non-zero entry in co-ordinates $\sum_{l=0}^{i}\rho_l+j_1,\sum_{l=0}^{i}\rho_l+j_2$ respectively or a column $\underline{h}_{y_1}$ with a non-zero entry in $(\sum_{l=0}^{i}\rho_l+j_1)^{th}$ co-ordinate. These columns $\underline{h}_{y_1},\underline{h}_{y_2}$ or $\underline{h}_{y_1}$ exist due to $D_{i+1}$. Note that $D_{i+1}$ is a matrix with each row and column having weight exactly $1$.) to form a $m \times 1$ vector with Hamming weight at most 2 with support contained in $\{ \sum_{l=0}^{i-1}\rho_l+1,\sum_{l=0}^{i-1}\rho_l +\rho_i \}$ which in turn can be written as linear combination of at most $2(i+1)$ column vectors in $H_1$ say $\underline{h}_{z_1},...,\underline{h}_{z_\theta}$ for some $\{z_1,...,z_\theta\} \subseteq \{1,...,\sum_{l=0}^{i}a_l\}$ and $0 < \theta \leq 2(i+1)$ by property $P_i$. Hence the given $m \times 1$ vector is written as linear combination of at most $2(i+2)$ column vectors in $H_1$ say $\underline{h}_{w_1},...,\underline{h}_{w_\gamma}$ for some $\{w_1,...,w_\gamma\} \subseteq \{1,...,\sum_{l=0}^{i+1}a_l\}$ and $0 < \gamma \leq 2(i+2)$.
				\end{itemize}
			\end{proof}
			
			By Lemma \ref{claim1}, after permutation of columns of $H_1$ (in \eqref{Hform} or \eqref{Hform_111} depending on $J$) within the columns labeled by the set $\{\sum_{l=0}^{j-1}a_l+1,...\sum_{l=0}^{j-1}a_l+a_{j}\}$ for $0 \leq j \leq J-1$, the matrix $D_j,0 \leq j \leq J-1$ can be assumed to be a diagonal matrix with non-zero entries along the diagonal and hence $\rho_{i}=a_{i}$, $\forall 0 \leq i \leq \frac{t}{2}-1$. 
			
			Since the sum of the column weights of $A_i, 1 \leq i \leq \frac{t}{2}-1$ must equal the sum of the row weights and since each row of $A_i$ for $i \leq J-1$ can have weight atmost $r$ and not $r+1$ due to 1-weight rows in $D_{i-1}$, and since for $\frac{t}{2}-1 \geq i \geq J$, $A_i$ is an empty matrix and we have set $a_i=0$, we obtain:
			\bea
			\text{For } 1 \leq i \leq \frac{t}{2}-1 : & & \notag \\
			\rho_{i-1} r & \geq & a_i, \notag \\
			a_{i-1} r & \geq & a_i. \label{Ineq1}
			\eea
			For some $p \geq 0$,
			\bea
			\sum_{i=0}^{\frac{t}{2}-1} \rho_i + p = \sum_{i=0}^{\frac{t}{2}-1} a_i + p =m.  \label{Ineq2}
			\eea 
			By equating sum of row weights of $C$, with sum of column weights of $C$, we obtain:
			\bea
			2 a_{\frac{t}{2}} & \leq & (a_{\frac{t}{2}-1}+p) (r+1)-a_{\frac{t}{2}-1}. \label{Ineq3}	    
			\eea
			Note that if $C$ is an empty matrix then also the inequality \eqref{Ineq3} is true as we would have set $a_{\frac{t}{2}}=0$. If $J < \frac{t}{2}$ and $C$ a non-empty matrix then the number of rows in $C$ is $p$ with each column of $C$ having weight 2, hence the inequality \eqref{Ineq3} is still true.\\
			Substituting \eqref{Ineq2} in \eqref{Ineq3} we get:
			\bea
			2 a_{\frac{t}{2}} & \leq & (m-\sum_{i=0}^{\frac{t}{2}-2} a_i) (r+1)-(m-\sum_{i=0}^{\frac{t}{2}-2} a_i-p), \notag \\
			2 a_{\frac{t}{2}} & \leq & (m-\sum_{i=0}^{\frac{t}{2}-2} a_i) r+p. \label{Ineq4}
			\eea
			
			By equating sum of row weights of $H_1$, with sum of column weights of $H_1$, we obtain:
			\bea
			m(r+1) & \geq & a_0 + 2(\sum_{i=1}^{\frac{t}{2}} a_i) + 3 (n-\sum_{i=0}^{\frac{t}{2}} a_i), \label{Ineq_star1} 
			\eea
			If $J < \frac{t}{2}$ then $a_i=0$, $\forall J \leq i \leq \frac{t}{2}-1$. If $C$ is an empty matrix then $a_{\frac{t}{2}}=0$. Hence the inequality \eqref{Ineq_star1} is true irrespective of whether $J=\frac{t}{2}$ or $J< \frac{t}{2}$ (even if $C$ is an empty matrix).
			\bea
			\text{From \eqref{Ineq_star1} :} & &  \notag \\
			m(r+1) & \geq & 3n-2a_0 -(\sum_{i=1}^{\frac{t}{2}} a_i). \label{Ineq5} 
			\eea
			
			Our basic inequalities are \eqref{Ineq1},\eqref{Ineq2},\eqref{Ineq3},\eqref{Ineq_star1}. We manipulate these 4 inequalities to derive the bound on rate.
			
			Substituting \eqref{Ineq2} in \eqref{Ineq5} we get:
			\bea
			m(r+1) & \geq & 3n-a_0 -a_{\frac{t}{2}}-(m-p), \notag \\
			m(r+2) & \geq & 3n+p-a_0 -a_{\frac{t}{2}}. \label{Ineq6} 
			\eea
			Substituting \eqref{Ineq4} in \eqref{Ineq6}, we get:
			\bea
			m(r+2) & \geq & 3n+p-a_0 - \left( \frac{(m-\sum_{i=0}^{\frac{t}{2}-2} a_i) r+p}{2} \right), \notag \\
			m(r+2+\frac{r}{2}) & \geq & 3n+\frac{p}{2}-a_0 + \left(\sum_{i=0}^{\frac{t}{2}-2} a_i \right) \frac{r}{2}. \label{Ineq7}
			\eea
			
			From \eqref{Ineq2}, for any $0 \leq j \leq \frac{t}{2}-2$:
			\bea
			a_{\frac{t}{2}-2-j} = m-\sum_{i=0}^{\frac{t}{2}-2-j-1} a_i - \sum_{i=\frac{t}{2}-2-j+1}^{\frac{t}{2}-1} a_i - p. \label{Ineq10}
			\eea
			Subtituting \eqref{Ineq1} for $\frac{t}{2}-2-j+1 \leq i \leq \frac{t}{2}-1$ in \eqref{Ineq10}, we get: 
			\bea
			a_{\frac{t}{2}-2-j} & \geq & m-\sum_{i=0}^{\frac{t}{2}-2-j-1} a_i - \sum_{i=\frac{t}{2}-2-j+1}^{\frac{t}{2}-1} a_{\frac{t}{2}-2-j} r^{i-(\frac{t}{2}-2-j)} - p, \notag
			\eea
			\bea
			a_{\frac{t}{2}-2-j} & \geq & m-\sum_{i=0}^{\frac{t}{2}-2-j-1} a_i - \sum_{i=1}^{j+1} a_{\frac{t}{2}-2-j} r^{i} - p, \notag
			\eea
			\bea
			a_{\frac{t}{2}-2-j} & \geq & \frac{m-\sum_{i=0}^{\frac{t}{2}-2-j-1} a_i-p}{ 1+ \sum_{i=1}^{j+1} r^{i} }. \label{Ineq11}
			\eea
			Let,
			\bea
			\delta_0 & = & \frac{r}{2}, \label{Ineq12} \\
			\text{For $j_1 \geq 0$: }\delta_{j_1+1} & = & \delta_{j_1} - \frac{\delta_{j_1}}{1+\sum_{i=1}^{j_1+1}r^i}. \label{Ineq13}
			\eea
			Let us prove the following inequality by induction for $0 \leq J_1 \leq \frac{t}{2}-2$,
			\bea
			m(r+2+\delta_{J_1}) & \geq & 3n + p\left( \frac{1}{2}+\delta_{J_1} - \frac{r}{2}\right)-a_0  + \left(\sum_{i=0}^{\frac{t}{2}-2-J_1} a_i \right) \delta_{J_1}. \label{Ineq9}
			\eea
			\eqref{Ineq9} is true for $J_1=0$ by \eqref{Ineq7}. Hence \eqref{Ineq9} is proved for $t=4$ and the range of $J_1$ is vacuous for $t=2$. Hence assume $t>4$. Hence let us assume \eqref{Ineq9} is true for $ J_1$ such that $ \frac{t}{2}-3 \geq J_1 \geq 0$ and prove it for $J_1+1$. 
			Substituting \eqref{Ineq11} for $j=J_1$ in \eqref{Ineq9}, we get:
			\bea
			m(r+2+\delta_{J_1}) & \geq & 3n + p\left( \frac{1}{2}+\delta_{J_1} - \frac{r}{2}\right)-a_0 \\
			& +& \left(\sum_{i=0}^{\frac{t}{2}-2-J_1-1} a_i \right) \delta_{J_1} + \left( \frac{m-\sum_{i=0}^{\frac{t}{2}-2-J_1-1} a_i-p}{ 1+ \sum_{i=1}^{J_1+1} r^{i} } \right) \delta_{J_1}, \notag \\
			m(r+2+\delta_{J_1}-\frac{\delta_{J_1}}{1+ \sum_{i=1}^{J_1+1} r^{i} }) & \geq & 
			3n + p\left( \frac{1}{2}+\delta_{J_1} -\frac{\delta_{J_1}}{1+ \sum_{i=1}^{J_1+1} r^{i} }- \frac{r}{2}\right)-a_0  + \notag \\ 
			& & \left(\sum_{i=0}^{\frac{t}{2}-2-J_1-1} a_i \right) \left(\delta_{J_1} - \frac{\delta_{J_1}}{1+ \sum_{i=1}^{J_1+1} r^{i} }\right). \label{Ineq141}
			\eea
			Substituing \eqref{Ineq13} in \eqref{Ineq141}, we obtain
			\bea
			m(r+2+\delta_{J_1+1}) & \geq & 
			3n + p\left( \frac{1}{2}+\delta_{J_1+1}- \frac{r}{2}\right)-a_0  + \left(\sum_{i=0}^{\frac{t}{2}-2-J_1-1} a_i \right) \delta_{J_1+1}. \label{Ineq14}
			\eea
			Hence \eqref{Ineq9} is proved  for any $0 \leq J_1 \leq \frac{t}{2}-2$ for $t \geq 4$. Hence writing \eqref{Ineq9} for $J_1 = \frac{t}{2}-2$ for $t\geq 4$, we obtain:
			\bea
			m(r+2+\delta_{\frac{t}{2}-2}) & \geq & 3n + p\left( \frac{1}{2}+\delta_{\frac{t}{2}-2} - \frac{r}{2}\right)-a_0  + (a_0) \delta_{\frac{t}{2}-2}, \notag \\
			m(r+2+\delta_{\frac{t}{2}-2}) & \geq & 3n + p\left( \frac{1}{2}+\delta_{\frac{t}{2}-2} - \frac{r}{2}\right)+ a_0 (\delta_{\frac{t}{2}-2}-1). \label{Ineq15} 
			\eea
			It can be seen that $\delta_{j_1}$ for $r \geq 2$ has a product form as:
			\bea
			\delta_{j_1} = \frac{r}{2}\left(\frac{r^{j_1+1}-r^{j_1}}{r^{j_1+1}-1}\right). \label{Ineq16}
			\eea
			Hence for $r \geq 3$, $t \geq 4$:
			\bean
			\delta_{\frac{t}{2}-2}=\frac{r}{2}\left(\frac{r^{\frac{t}{2}-1}-r^{\frac{t}{2}-2}}{r^{\frac{t}{2}-1}-1}\right) > 1.
			\eean
			Hence we can substitute \eqref{Ineq11} for $j=\frac{t}{2}-2$ in \eqref{Ineq15} :
			\bea
			m(r+2+\delta_{\frac{t}{2}-2}) & \geq & 3n + p\left( \frac{1}{2}+\delta_{\frac{t}{2}-2} - \frac{r}{2}\right) \\ 
			&+& \left( \frac{m-p}{1+\sum_{i=1}^{\frac{t}{2}-1}r^i} \right) (\delta_{\frac{t}{2}-2}-1), \notag \\
			\scalemath{0.7}{m(r+2+\delta_{\frac{t}{2}-2}-\frac{\delta_{\frac{t}{2}-2}}{1+\sum_{i=1}^{\frac{t}{2}-1}r^i}+\frac{1}{1+\sum_{i=1}^{\frac{t}{2}-1}r^i}) }
			& \geq & \scalemath{0.7}{3n +  p\left( \frac{1}{2}+\delta_{\frac{t}{2}-2}-\frac{\delta_{\frac{t}{2}-2}}{1+\sum_{i=1}^{\frac{t}{2}-1}r^i} + \frac{1}{1+\sum_{i=1}^{\frac{t}{2}-1}r^i}- \frac{r}{2}\right).} \label{Ineq17}
			\eea
			Substituting \eqref{Ineq13} in \eqref{Ineq17}, we obtain:
			\bea
			m\left(r+2+\delta_{\frac{t}{2}-1}+ \frac{1}{1+\sum_{i=1}^{\frac{t}{2}-1}r^i}\right) & \geq & 3n + p\left( \frac{1}{2}+\delta_{\frac{t}{2}-1}+ \frac{1}{1+\sum_{i=1}^{\frac{t}{2}-1}r^i}- \frac{r}{2}\right). \label{Ineq18}
			\eea
			Using \eqref{Ineq16}, we obtain:
			\bean
			\left( \frac{1}{2}+\delta_{\frac{t}{2}-1}+ \frac{1}{1+\sum_{i=1}^{\frac{t}{2}-1}r^i}- \frac{r}{2}\right) > 0.
			\eean
			Hence \eqref{Ineq18} implies:
			\bea
			m\left(r+2+\delta_{\frac{t}{2}-1}+ \frac{1}{1+\sum_{i=1}^{\frac{t}{2}-1}r^i}\right) & \geq & 3n.  \label{Ineq19}
			\eea
			\eqref{Ineq19} after some algebraic manipulations gives the required upper bound on $1-\frac{m}{n}$ and hence gives the required upper bound on $\frac{k}{n}$ as stated in the theorem. Note that although the derivation is valid for $r\geq 3$, $t \geq 4$, the final bound given in the theorem is correct and tight for $t =2$. The upper bound on rate for $t=2$ can be derived specifically by substituting $a_0 \leq m$ in \eqref{Ineq7} and noting that $p \geq 0$.\\
			
			\section{case ii: $t$ an odd integer}
			Again it can be seen that the matrix $H_1$ after permutation of rows and columns can be written in the form given in \eqref{Hform2}.
			\bea
			H_1 & = & \left[
			\begin{array}{c|c|c|c|c|c|c|c|c}
				D_0 & A_1 & 0 & 0 & \hdots & 0 & 0 & 0 & \\
				\cline{1-8}
				0 & D_1 & A_2 & 0 & \hdots & 0 & 0 & 0 & \\
				\cline{1-8}
				0 & 0 & D_2 & A_3 & \hdots & 0 & 0 & 0 & \\
				\cline{1-8}
				0 & 0 & 0 & D_3 & \hdots & 0 & 0 & 0 &  \\
				\cline{1-8}
				\vdots & \vdots & \vdots & \vdots & \ddots & \vdots & \vdots & \vdots & D\\
				\cline{1-8}
				0 & 0 & 0 & 0 & \hdots & A_{s-1} & 0 & 0 & \\
				\cline{1-8}
				0 & 0 & 0 & 0 & \hdots & D_{s-1} & A_{s} & 0 & \\
				\cline{1-8}
				0 & 0 & 0 & 0 & \hdots & 0 & D_{s} &  & \\
				\cline{1-7}
				0 & 0 & 0 & 0 & \hdots & 0 & 0 & C & \\
			\end{array} \right] \label{Hform2},
			\eea
			%				where 
			%				\begin{enumerate}
			%					\item the rows of $H_1$ are labeled ${1,2,...,m}$ and columns are labeled ${1,2,...n}$,
			%					\item $A_i$ is a $\rho_{i-1} \times a_i$ matrix for $1\leq i \leq \frac{t}{2}-1$, $D_i$ is a $\rho_{i} \times a_i$ matrix for $0 \leq i \leq \frac{t}{2}-1$ for some $\{\rho_i\},\{a_i\}$,
			%					\item $D_0$ is a matrix with each column having weight(Hamming weight) 1 and each row having weight at least 1,
			%					\item  $\{A_i\}$,$\{D_i\}$,$\{B_i= \left[\frac{A_i}{D_i}\right]\}$ are  such that for $1\leq i \leq \frac{t}{2}-1$ each column of $B_i$ has weight 2, each column of $A_i$ has weight at least 1 and each row of $D_i$ has weight at least 1 and each column of $D_i$ has weight at most 1,
			%					\item $C$ is a matrix with each column having weight 2, $D$ is a matrix which exactly contains all the columns of $H_1$ which have weight $\geq 3$,
			%					\item If $J$ is the first index such that $A_J, D_J$ ($J=0$ if $D_0$ is an empty matrix) are empty matrices ($0 \times L$, $L \times 0$, $0 \times 0$ matrix) then we set $A_i,D_i$ to be empty matrices for all $J \leq i \leq \frac{t}{2}-1$ and set $a_i=0, \rho_i=0, J \leq i \leq \frac{t}{2}-1$ and place all of the 2-weight columns apart from the 2-weight columns corresponding to $B_1$ to $B_{J-1}$ in $C$. Let the number of columns in $C$ be $a_{\frac{t}{2}}$. If $C$ has no columns then we set $a_{\frac{t}{2}}=0$.
			%				\end{enumerate}
			where 
			\begin{enumerate}
				\item Rows are labeled by the integers $1,...,m$  with top most row of $H_1$ labelled $1$ and next row labeled $2$ and so on. Similarly columns are labeled by the integers $1,...,n$ with leftmost column of $H_1$ labeled $1$ and next column labeled $2$ and so on, 
				\item $A_i$ is a $\rho_{i-1} \times a_i$ matrix for $1\leq i \leq s$, $D_i$ is a $\rho_{i} \times a_i$ matrix for $0 \leq i \leq s$ for some $\{\rho_j\},\{a_j\}$. Note that any of $a_i$, $\rho_i$ is allowed to take the value 0 for any $i$,
				\item $D_0$ is a matrix with each column having weight 1 and each row having weight at least 1. The first $a_0$ columns of $H_1$ contains the columns of $D_0$. The set of first $a_0$ columns of $H_1$ is equal to the set of all those columns of $H_1$ which has weight $1$,
				\item  $\{A_j\}$,$\{D_j\}$,$\{B_j= \left[\frac{A_j}{D_j}\right]\}$ are  such that for $1\leq i \leq s$, each column of $B_i$ has weight 2, each column of $A_i$ has weight at least 1 and each row of $D_i$ has weight at least 1 and each column of $D_i$ has weight at most 1,
				\item $C$ is a matrix with each column having weight 2. The set of columns of the matrix $D$ is equal to the set of all those columns of $H_1$ which has weight $\geq 3$.
			\end{enumerate}
			Let $J = \min{\{\{j: 1 \leq j \leq s, A_j,D_j \text{ are empty matrices }\} \cup \{ s+1 \} \}}$. We set $J=0$, if $D_0$ is an empty matrix. We now redefine $H_1$ based on the value of $J$.
			We set $A_i$, $D_i$ to be empty matrices and set $a_i=0, \rho_i=0$, $\forall J \leq i \leq s$. 
			Let $E_2  \subseteq \{\sum_{j=0}^{J-1}a_j+1,...,n\}$ such that $E_2$ is the set of labels of all the 2-weight columns of $H_1$ apart from those 2-weight columns of $H_1$ containing the columns of $B_1,...,B_{J-1}$. Let $E_1 = \{\sum_{j=0}^{J-1}{\rho_j}+1,...,m\}$. If $J=s+1$ then $H_1$ is defined by \eqref{Hform2}. If $J < s+1$, we redefine $C=H_1|_{E_1,E_2}$.   If $J < s+1$, the matrix $H_1$ can be written in the form given in \eqref{Hform_11} and hence defined by \eqref{Hform_11}.
			\bea
			H_1 = \left[
			\scalemath{1}{
				\begin{array}{c|c|c|c|c|c|c|c|c}
					D_0 & A_1 & 0 & 0 & \hdots & 0 & 0 & 0 & \\
					\cline{1-8}
					0 & D_1 & A_2 & 0 & \hdots & 0 & 0 & 0 & \\
					\cline{1-8}
					0 & 0 & D_2 & A_3 & \hdots & 0 & 0 & 0 & \\
					\cline{1-8}
					0 & 0 & 0 & D_3 & \hdots & 0 & 0 & 0 &  \\
					\cline{1-8}
					\vdots & \vdots & \vdots & \vdots & \hdots & \vdots & \vdots & \vdots & D\\
					\cline{1-8}
					0 & 0 & 0 & 0 & \hdots & A_{J-2} & 0 & 0 & \\
					\cline{1-8}
					0 & 0 & 0 & 0 & \hdots & D_{J-2} & A_{J-1} & 0 & \\
					\cline{1-8}
					0 & 0 & 0 & 0 & \hdots & 0 & D_{J-1} & 0 & \\
					\cline{1-8}
					0 & 0 & 0 & 0 & \hdots & 0 & 0 & C & \\
				\end{array} 
			}
			\right], \ \ \label{Hform_11}
			\eea
			Irrespective of the value of $J$, let the number of columns in $C$ be denoted as $a_{s+1}$. If $C$ is an empty matrix then we can clearly set $a_{s+1}=0$.
			The entire derivation of upper bound on rate is correct and all the inequalities in the derivation will hold with $a_i=0$, $\rho_i=0$, $\forall J \leq i \leq s$. Let $S=\{j: 1 \leq j \leq J-1, D_{j} \text{ is an empty matrix}, A_{j} \text{ is a non-empty matrix} \} $. Let $\ell = \min{\{ S \cup \{J\} \}}$. If $\ell < J$, then the proof of the following Lemma \ref{claim2} (since the proof of Lemma \ref{claim2} proceeds by induction starting with the proof of the lemma for $D_0$ first and then proceeding to $A_1$ and then $D_1$ and so on and hence we prove the lemma for $A_{\ell}$ first and then proceed to $D_{\ell}$ and since $D_0,A_j,D_j$ must be non-empty matrices $\forall 1 \leq j \leq \ell-1$) will imply that each column of $A_{\ell}$ has weight 1 which will imply that $D_{\ell}$ cannot be an empty matrix. Hence the case $A_{\ell}$, a non-empty matrix and $D_{\ell}$, an empty matrix cannot occur.
			Although we have to prove the following Lemma \ref{claim2} for $A_i,D_i$, $\forall 1 \leq i \leq J-1$, $D_0$, we assume all $D_0,$ $A_i,D_i$, $\forall 1 \leq i \leq s$ to be non-empty matrices and prove the lemma. Since the proof of the lemma is by induction, the induction can be made to stop after proving the lemma for $A_{J-1},D_{J-1}$ (induction starts by proving the lemma for $D_0$ as mentioned before) and the proof is unaffected by it.
			%				\\\\
			%				The following claim is vacuously true for empty matrices and the entire derivation is true with $a_i=0, \rho_i=0, s-1 \geq i \geq J$.
			%				Although we have to prove the following claim for $A_i,D_i$, $1 \leq i \leq J-1$, $D_0$, we assume that all $D_0,$$A_i,D_i$,$1 \leq i \leq s-1$ to be non-empty matrices and prove the claim. Since the proof is by induction, the induction can be made to stop at $J-1$ and the proof is unaffected by it.
			%				\begin{claim}\label{claim1}
			%					For $1 \leq i \leq \frac{t}{2}-1$, $A_i$ is a matrix with each column having weight 1 and
			%					for $0 \leq j \leq \frac{t}{2}-1$, $D_j$ is a matrix with each row and each column having weight 1. 
			%				\end{claim}
			\begin{lem}\label{claim2}
				
				For $1\leq i \leq s$, $A_i$ is a matrix with each column having weight 1.
				For $0 \leq i \leq s-1$, $D_i$ is a matrix with each row and each column having weight 1. $D_{s}$ is a matrix with each column having weight 1.

			\end{lem}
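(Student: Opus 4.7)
The plan is to mirror the inductive argument used for Lemma \ref{claim1}, with the index ranges adjusted to reflect that $t = 2s+1$ is odd. I will employ the same induction variable $i$ together with the same auxiliary property $P_i$, namely: any $m \times 1$ vector of Hamming weight at most $2$ whose support is contained in the coordinate block $\{\sum_{\ell=0}^{i-1}\rho_\ell + 1, \ldots, \sum_{\ell=0}^{i-1}\rho_\ell + \rho_i\}$ can be expressed as a non-trivial linear combination of at most $2(i+1)$ columns drawn from the first $\sum_{\ell=0}^{i} a_\ell$ columns of $H_1$. The base cases establishing that each row of $D_0$ has weight $1$, that each column of $A_1$ has weight $1$, that each row of $D_1$ has weight $1$, and that $P_0$ and $P_1$ hold, go through verbatim using only $d_{\min} \geq t+1$ together with $t \geq 3$ (and $t \geq 5$ when $s \geq 2$, i.e., when the $D_1$ step is not vacuous).

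The arithmetic in the inductive step is where the parity of $t$ matters. Assuming inductively that $P_i$ holds together with the lemma for $A_1,\ldots,A_i$ and $D_0,\ldots,D_i$, a weight-$2$ column in $A_{i+1}$ combined with an application of $P_i$ would produce a non-empty dependent set of at most $2(i+1)+1 = 2i+3$ columns of $H_1$; this contradicts $d_{\min} \geq t+1 = 2s+2$ precisely when $2i+3 \leq t = 2s+1$, i.e., when $i \leq s-1$. Hence the column-weight-$1$ conclusion for $A_{i+1}$ extends all the way up to $i+1 = s$, matching the stated range. Similarly, a row of weight $\geq 2$ in $D_{i+1}$ allows two columns of $A_{i+1}$ to be combined into a weight-$\leq 2$ vector supported in the preceding block, which by $P_i$ gives a dependent set of at most $2(i+1)+2 = 2i+4$ columns; this contradicts $d_{\min} \geq 2s+2$ exactly when $2i+4 \leq 2s+1$, i.e., when $i+1 \leq s-1$. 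This is precisely why row-weight-$1$ conclusions for $D_{i+1}$ can be pushed only up to $i+1 = s-1$ in the odd case, one step short of what the even case permits. Property $P_{i+1}$ is then derived in exactly the same manner as in Lemma \ref{claim1}.

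The remaining claim, that each column of $D_s$ has weight $1$, does not require any further induction: it follows immediately from the already-established fact that each column of $A_s$ has weight exactly $1$, combined with the structural identity that every column of $B_s = \left[\tfrac{A_s}{D_s}\right]$ has weight exactly $2$ by the definition of $H_1$. The main (minor) obstacle is just keeping the off-by-one shift straight when porting the even-case bookkeeping over to the odd-case bounds $2i+3 \leq 2s+1$ and $2i+4 \leq 2s+1$; once this is handled, no new ideas are needed and the proof proceeds mechanically along the lines of Lemma \ref{claim1}, including the subsequent observation that after permuting columns within each block $\{\sum_{\ell=0}^{j-1} a_\ell + 1, \ldots, \sum_{\ell=0}^{j-1} a_\ell + a_j\}$, each $D_j$ for $0 \leq j \leq s-1$ becomes diagonal with non-zero diagonal entries and hence $\rho_j = a_j$ in that range.
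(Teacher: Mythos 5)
The proposal is correct and takes essentially the same approach the paper intends: the paper declares the proof "exactly similar to the proof of Lemma \ref{claim1} and proceeds by induction" and skips it, and you have correctly filled in that induction, including the parity-sensitive arithmetic ($2i+3 \leq 2s+1$ for $A_{i+1}$, $2i+4 \leq 2s+1$ for $D_{i+1}$) that caps the row-weight conclusion for $D_i$ at $i \leq s-1$ in the odd case and leaves $D_s$ with only the column-weight conclusion, obtained directly from $A_s$ together with the weight-2 structure of $B_s$.
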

			\begin{proof}
				Proof is exactly similar to the proof of Lemma \ref{claim1} and proceeds by induction. So we skip the proof.
			\end{proof}
			
			By Lemma \ref{claim2}, after permutation of columns of $H_1$ (in \eqref{Hform2} or \eqref{Hform_11} depending on $J$)  within the columns labeled by the set $\{\sum_{l=0}^{j-1}a_l+1,...\sum_{l=0}^{j-1}a_l+a_{j}\}$ for $0 \leq j \leq \min(J-1,s-1)$, the matrix $D_j,0 \leq j \leq \min(J-1,s-1)$ can be assumed to be a diagonal matrix with non-zero entries along the diagonal and hence $\rho_{i}=a_{i}$, for $0 \leq i \leq s-1$. 
			
			Since the sum of the column weights of $A_i, 1 \leq i \leq s$ must equal the sum of the row weights and since each row of $A_i$ for $i \leq J-1$ can have weight atmost $r$ and not $r+1$ due to weight one rows in $D_{i-1}$, and since for $J \leq i \leq s$, $A_i$ is an empty matrix and we have set $a_i=0$, we obtain:
			\bea
			\text{For } 1 \leq i \leq s : & & \notag \\
			\rho_{i-1} r & \geq & a_i, \notag \\
			a_{i-1} r & \geq & a_i. \label{Ineq1O}
			\eea
			For some $p \geq 0$,
			\bea
			\rho_{s}+\sum_{i=0}^{s-1} \rho_i + p = \rho_{s}+\sum_{i=0}^{s-1} a_i + p =m.  \label{Ineq2O}
			\eea 
			By equating sum of row weights of $[\frac{D_{s}}{0} | C]$, with sum of column weights of $[\frac{D_{s}}{0} | C]$, we obtain:
			\bea
			2 a_{s+1}+a_{s} & \leq & (\rho_{s}+p) (r+1). \label{Ineq3O}	    
			\eea
			If $J \leq s$ then the number of rows in $C$ is $p$ with each column of $C$ having weight 2 and $a_{s}=\rho_{s}=0$ (and $a_{s+1}=0$ if $C$ is also an empty matrix), hence the inequality \eqref{Ineq3O} is true. If $J=s+1$ and $C$ is an empty matrix then also the inequality \eqref{Ineq3O} is true as we would have set $a_{s+1}=0$.
			
			Substituting \eqref{Ineq2O} in \eqref{Ineq3O}:
			\bea
			2 a_{s+1} & \leq & (m-\sum_{i=0}^{s-1} a_i) (r+1)-a_{s}. \label{Ineq4O}
			\eea
			By equating sum of row weights of $D_{s}$, with sum of column weights of $D_{s}$, we obtain (Note that if $D_{s}$ is an empty matrix then also the following inequality is true as we would have set $a_{s}=0$):
			\bea
			a_{s} \leq \rho_{s}(r+1). \label{OddSpe}
			\eea
			By equating sum of row weights of $H_1$, with sum of column weights of $H_1$, we obtain
			\bea
			m(r+1) & \geq & a_0 + 2(\sum_{i=1}^{s+1} a_i) + 3 (n-\sum_{i=0}^{s+1} a_i), \label{Ineq_star2}
			\eea 
			If $J \leq s$ then $a_i=0$ $\forall J \leq i \leq s$. If $C$ is an empty matrix then $a_{s+1}=0$. Hence the inequality \eqref{Ineq_star2} is true irrespective of whether $J=s+1$ or $J \leq s$ (even if $C$ is an empty matrix).
			\bea
			m(r+1) & \geq & 3n-2a_0 -(\sum_{i=1}^{s+1} a_i). \label{Ineq5O} 
			\eea
			
			Our basic inequalities are \eqref{Ineq1O},\eqref{Ineq2O},\eqref{Ineq3O},\eqref{OddSpe},\eqref{Ineq_star2}. We manipulate these 5 inequalities to derive the bound on rate.
			
			Substituting \eqref{Ineq4O} in \eqref{Ineq5O}:
			\bea
			m(r+1) & \geq & 3n-2a_0 -(\sum_{i=1}^{s} a_i) - \left(\frac{(m-\sum_{i=0}^{s-1} a_i) (r+1)-a_{s}}{2} \right). \label{Ineq6O}
			\eea
			For $s=0$, \eqref{Ineq6O} becomes:
			\bea
			m(r+1) & \geq & 3n-2a_0 - \left(\frac{m (r+1)-a_{0}}{2} \right), \notag \\        m\frac{3(r+1)}{2} & \geq & 3n-\frac{3}{2}a_0. \label{Ineq7O} 
			\eea
			Substituting  \eqref{OddSpe} in \eqref{Ineq7O}:
			\bea
			m\frac{3(r+1)}{2} & \geq & 3n-\frac{3}{2}\rho_0 (r+1), \label{Ineq8O} 
			\eea
			Substituting  \eqref{Ineq2O} in \eqref{Ineq8O}:
			\bea
			m\frac{3(r+1)}{2} & \geq & 3n-\frac{3}{2}(m-p) (r+1), \notag \\
			\text{Since $p \geq 0$, } 3m(r+1) & \geq & 3n. \label{Ineq9O}
			\eea
			\eqref{Ineq9O} implies,
			\bea
			\frac{k}{n} \leq \frac{r}{r+1}. \label{Ineq10O}
			\eea
			\eqref{Ineq10O} proves the bound \eqref{Thm2} for $s=0$. Hence from now on we assume $s \geq 1$.\\
			For $s \geq 1$, \eqref{Ineq6O} implies:
			\bea
			m \frac{3(r+1)}{2} & \geq & 3n+ a_0 \left( \frac{r+1}{2}-2 \right) +(\sum_{i=1}^{s-1} a_i) \left( \frac{r+1}{2}-1 \right) - \frac{a_{s}}{2}. \label{Ineq11O} 
			\eea
			Substituting  \eqref{Ineq1O} in \eqref{Ineq11O} and since $r \geq 3$:    
			\bea
			m \frac{3(r+1)}{2} & \geq & 3n+ \frac{a_{s}}{r^{s}} \left( \frac{r+1}{2}-2 \right) +(\sum_{i=1}^{s-1} \frac{a_{s}}{r^{s-i}}) \left( \frac{r+1}{2}-1 \right) - \frac{a_{s}}{2},  \notag \\
			m \frac{3(r+1)}{2} & \geq & 3n+ a_{s} \left( \left(\sum_{i=1}^{s} \frac{1}{r^i} \right) \left(\frac{r+1}{2}-1\right) -\frac{1}{r^{s}}-\frac{1}{2} \right),  \notag \\
			m \frac{3(r+1)}{2} & \geq & 3n- a_{s} \left(\frac{3}{2 r^{s}}\right). \label{Ineq12O}
			\eea
			Rewriting \eqref{Ineq2O}:
			\bea
			\rho_{s}+\sum_{i=0}^{s-1} a_i + p =m.  \label{Ineq13O}
			\eea
			Substituting \eqref{OddSpe},\eqref{Ineq1O} in \eqref{Ineq13O}:
			\bea
			\rho_{s}+\sum_{i=0}^{s-1} a_i + p &=& m, \notag \\
			\frac{a_{s}}{r+1}+\sum_{i=0}^{s-1} \frac{a_{s}}{r^{s-i}} & \leq & m-p, \notag \\
			a_{s} \leq \frac{m-p}{\frac{1}{r+1}+\sum_{i=1}^{s} \frac{1}{r^i}}, \notag \\
			a_{s} \leq \frac{(m-p)(r+1)}{1+\frac{(r^{s}-1)(r+1)}{(r^{s})(r-1)} }. \label{Ineq14O}
			\eea
			Substituting \eqref{Ineq14O} in \eqref{Ineq12O}:
			\bea
			m \frac{3(r+1)}{2} & \geq & 3n- \frac{(m-p)(r+1)}{1+\frac{(r^{s}-1)(r+1)}{(r^{s})(r-1)} } \left(\frac{3}{2 r^{s}}\right), \notag \\
			\text{Since $p \geq 0$, } m \frac{3(r+1)}{2} \left( 1+\frac{1}{r^{s}+\frac{(r^{s}-1)(r+1)}{(r-1)}} \right) & \geq & 3n. \label{Ineq15O} 
			\eea
			\eqref{Ineq15O} after some algebraic manipulations gives the required upper bound on $1-\frac{m}{n}$ and hence gives the required upper bound on $\frac{k}{n}$ as stated in the theorem. \\
		\end{proof}

	\end{appendices}

\chapter{Optimal Constructions of Sequential LR Codes} \label{ch:SeqConstr}
%===============pvk
In this chapter, a construction of rate-optimal S-LR codes is presented for any $r,t$ with $r \geq 3$.  The construction presented here will establish the tightness of the upper bound on rate derived in Chapter~\ref{ch:SeqBound}.    
The starting point of the construction is the staircase form of the parity-check (p-c) matrix $H$ of a rate-optimal S-LR code derived in Chapter \ref{ch:SeqBound}, equations  \eqref{eq:Hmatrixteven_ch3}, \eqref{eq:Hmatrixtodd_ch3}.   It will be shown that this forces the code to possess a certain, simple, graphical representation ${\cal G}$.  In ${\cal G}$, the edges represent code symbols and nodes, the parity checks.  While the constraints imposed by the staircase structure of the parity-check matrix $H$ are necessary for a rate-optimal S-LR code, they are not sufficient to guarantee (sequential) recovery from $t$ erasures.  It turns out that however, that adding the further constraint that ${\cal G}$ have girth $\geq (t+1)$, leads to a necessary and sufficient condition.   Section~\ref{sec:graphical_rep} presents the graphical representation ${\cal G}$ dictated by the staircase form of the p-c matrix. The section following, Section~\ref{sec:complete_code} shows how to construct this graph ${\cal G}$ that satisfies in addition, the girth requirement, and this completes the construction of rate-optimal S-LR codes since the code follows from the graph.  Section~\ref{sec:Moore_ch4} discusses the construction of S-LR codes that are not only rate-optimal, but which are also optimal in terms of having the shortest possible block length. This leads to a class of graphs known as Moore graphs. The final section present a summary. 

%%%%%%%%%%% Balaji
Throughout this chapter whenever we refer to a code, it refers to an S-LR code with parameters $(r,t)$ defined in Chapter \ref{ch:SeqBound}. We use the term nodes or vertices to indicate the vertices of a graph. We use the term vertex set to denote the set of all vertices in a graph and the term edge set to denote the set of all edges in a graph. The vertex set of a graph $G$ is denoted by $V(G)$. Throughout this thesis, for a code $\mathcal{C}$, a parity check refers to a codeword $\underline{c}$ in the dual code $\mathcal{C}^{\perp}$ or to the equation $ G\underline{c}^T = \underline{0}$ for a generator matrix $G$ of $\mathcal{C}$.
%%%%%%%%%%%%%%%%%%% pvk
\section{A Graphical Representation for the Rate-Optimal Code}  \label{sec:graphical_rep} 

We show in this section, that the staircase form of the parity-check (p-c) matrix forced on a rate-optimal code (Chapter \ref{ch:SeqBound}, equations  \eqref{eq:Hmatrixteven_ch3}, \eqref{eq:Hmatrixtodd_ch3}) leads to a tree-like graphical representation of the code.
The structure is slightly different for $t$ odd and $t$ even.  We begin with the $t$-even case. 
\subsection{$t$ Even case} 

In the case $t$ even, the p-c matrix of a rate-optimal code can be put into the form (Chapter \ref{ch:SeqBound}, equation  \eqref{eq:Hmatrixteven_ch3}): 
\bea
\heven = \left[
\begin{array}{c|c|c|c|c|c|c|c}
	D_0 & A_1 & 0 & 0 & \hdots & 0 & 0 & 0  \\
	\cline{1-8}
	0 & D_1 & A_2 & 0 & \hdots & 0 & 0 & 0 \\
	\cline{1-8}
	0 & 0 & D_2 & A_3 & \hdots & 0 & 0 & 0  \\
	\cline{1-8}
	0 & 0 & 0 & D_3 & \hdots & 0 & 0 & 0 \\
	\cline{1-8}
	\vdots & \vdots & \vdots & \vdots & \hdots & \vdots & \vdots & \vdots \\
	\cline{1-8}
	0 & 0 & 0 & 0 & \hdots & A_{s-1} & 0 & 0  \\
	\cline{1-8}
	0 & 0 & 0 & 0 & \hdots & D_{s-1} & A_{s} & 0 \\
	\cline{1-8}
	0 & 0 & 0 & 0 & \hdots & 0 & D_{s} & C \\
\end{array} 
\right],  \label{eq:H_Eq_ch4}
\eea
where $s= \lfloor \frac{t-1}{2} \rfloor$, or equivalently, $t=2s+2$. 
Our goal is to show that \heven\ forces the code to have a certain graphical representation.  We note first that, each column in \heven, apart from the columns associated to diagonal sub-matrix $D_0$ have (Hamming) weight $2$.  To bring in symmetry, we add an additional row to \heven\ at the top, which has all $1$s in the columns associated to $D_0$ and zeros elsewhere to obtain the matrix \hevena, shown in \eqref{eq:stair_H_even_aug_ch4}.  
%We will use $\text{Row}_i$ to refer to the rows of $H_2$ associated to diagonal matrix $D_i$ and similarly use $\text{Col}_j$ to refer to columns of $H_2$ associated to $D_j$.  We will refer to the row at the very top as the row at infinity and use the notation $\text{Row}_{\infty}$. 		
\bea
\hevena & = & 
%\begin{array}{cccccccc} 
%E_0 & E_1 & \cdots & \cdots & \cdots & \cdots & E_{\frac{t}{2}-1} & E_{\frac{t}{2}} \\ \hline 
%\end{array} 
\begin{array}{c} \ \\  V_{\infty} \\ V_0 \\ V_1 \\ V_2 \\ V_3 \\ \vdots \\ V_{s-2} \\ V_{s-1} \\ V_{s} \end{array} 
\left[
\begin{array}{c|c|c|c|c|c|c|c}
	E_0 & E_1 & E_2 & E_3 & \cdots & E_{s-1} & E_{s} & E_{s+1} \\ \hline \hline 
	%& & & & & & &  \\ \hline \hline 
	\underline{1}^t & \underline{0}^t & \underline{0}^t & \hdots  & \hdots & \hdots & \underline{0}^t & \underline{0}^t  \\
	\cline{1-8}
	D_0 & A_1 & 0 & 0 & \hdots & 0 & 0 & 0  \\
	\cline{1-8}
	0 & D_1 & A_2 & 0 & \hdots & 0 & 0 & 0 \\
	\cline{1-8}
	0 & 0 & D_2 & A_3 & \hdots & 0 & 0 & 0  \\
	\cline{1-8}
	0 & 0 & 0 & D_3 & \hdots & 0 & 0 & 0 \\
	\cline{1-8}
	\vdots & \vdots & \vdots & \vdots & \ddots & \vdots & \vdots & \vdots \\
	\cline{1-8}
	0 & 0 & 0 & 0 & \hdots & A_{s-1} & 0 & 0  \\
	\cline{1-8}
	0 & 0 & 0 & 0 & \hdots & D_{s-1} & A_{s} & 0 \\
	\cline{1-8}
	0 & 0 & 0 & 0 & \hdots & 0 & D_{s} & C \\
\end{array} 
\right].  \label{eq:stair_H_even_aug_ch4}
\eea
Since each column of \hevena\ has weight $2$, the matrix has a natural interpretation as the node-edge incidence matrix of a graph where the incidence matrix of the graph is obtained by replacing every non-zero entry of \hevena\ with $1$. Hence the vertices of the graph are in one to one correspondence with the rows of the matrix \hevena\ and edges of the graph are in one to one correspondence with the columns of the matrix \hevena\ where an edge corresponding to a column with non-zero entries at rows $i,j$ connects the nodes corresponding to the rows $i,j$ of \hevena.\ The nodes of the graph corresponding to those rows of \hevena\ containing the rows of $D_i$ will be denoted by \vi\ and 
similarly, the edges corresponding to those columns of \hevena\ containing the columns of $D_j$ will be denoted by $E_j$. The edges associated with those columns of \hevena\ containing the columns of $C$ will be denoted by $E_{s+1}$.  We use $V_{\infty}$ to denote the node associated with the row at the very top of \hevena\ (see \eqref{eq:stair_H_even_aug_ch4}). Each node except the node $V_{\infty}$ has degree $(r+1)$.

\subsubsection{Unravelling the Graph}

We can unravel the structure of this graph as follows. Fig.~\ref{fig:tree_graph_t_even} shows the graph for the case $(r+1)=4, t=6,s=2$ with $|V_0|=3$.  Since each row of \hevena, apart from the top row, has weight $(r+1)$, it follows that in the resultant graph, every node except the node $V_{\infty}$ has degree $(r+1)$.  Node \vinfty\ has degree $a_0= |V_0|$, since $D_0$ is a diagonal matrix.  
The $a_0$ edges originating from \vinfty\ are terminated in the $a_0$ nodes making up \vzero. We will use $E_{0}$ to denote this collection of edges. There are $r$ other edges that emanate from each node in \vzero, each of these edges is terminated at a distinct node in \vone. We use \eone\ to denote this collection of edges.  Each of the other $r$ edges that emanate from each node in \vone, terminate in a distinct node in \vtwo.  We use \etwo\ to denote this collection of edges.  We continue in this fashion, until we reach the nodes in \vs\ via edge-set \es.  Here, the $r$ other edges outgoing from each node in \vs\ are terminated among themselves.  We use \espone\ to denote this last collection of edges.  As can be seen, the graph has a tree-like structure, except for the edges (corresponding to edge-set \espone) linking the leaf nodes $V_s$ at the very bottom. 

We use \gzero\ to denote the overall graph and use \gi\ to denote the restriction of \gzero\ to node-set $V_s \cup V_{s-1}  \cdots \cup V_i$ i.e., \gi\ denotes the subgraph of \gzero\ induced by the nodes $V_s \cup V_{s-1}  \cdots \cup V_i$ for $1 \leq i \leq s$. Thus the graphs are nested:
\bean
\gs \subseteq {\cal G}_{s-1} \subseteq \cdots \subseteq \gtwo\ \subseteq \gone\ \subseteq \gzero.
\eean
Fig.~\ref{fig:tree_graph_t_even} identifies the graphs $\gtwo\ \subseteq \gone\ \subseteq \gzero$ for the case $t=6$. 

\subsubsection{Connecting the Graph to the Code} 

Each node in the graph \gzero\ is associated to a row of the p-c matrix \heven\ of the code and hence to a p-c. The one exception is the fictitious node $V_{\infty}$ which does not correspond to a p-c. When the p-c matrix \heven\ is over $\mathbb{F}_2$, the top row of \hevena\ is the binary sum of all other rows of \hevena.\ Hence in case when \heven\ is over $\mathbb{F}_2$, \hevena\ also represent a p-c matrix of the code and the node $V_{\infty}$ is also associated to a p-c. Also, each edge in the graph \gzero\ is associated to a unique code symbol as the edges are associated with columns of the matrix \hevena.\
The structure of \gzero\ is mostly determined once we specify $a_0$ as \gzero\ with edges in \espone\ removed is just a tree with $V_{\infty}$ as the root node. Hence the only freedom lies in selecting the pairs of nodes in node-set \vs\ that are linked by the edges in edge-set \espone.  The p-c matrix requires however, that these edges be selected such that each node in \vs\ is of degree $(r+1)$. 

\begin{figure}[ht]
	\centering
	\includegraphics[scale = 0.5]{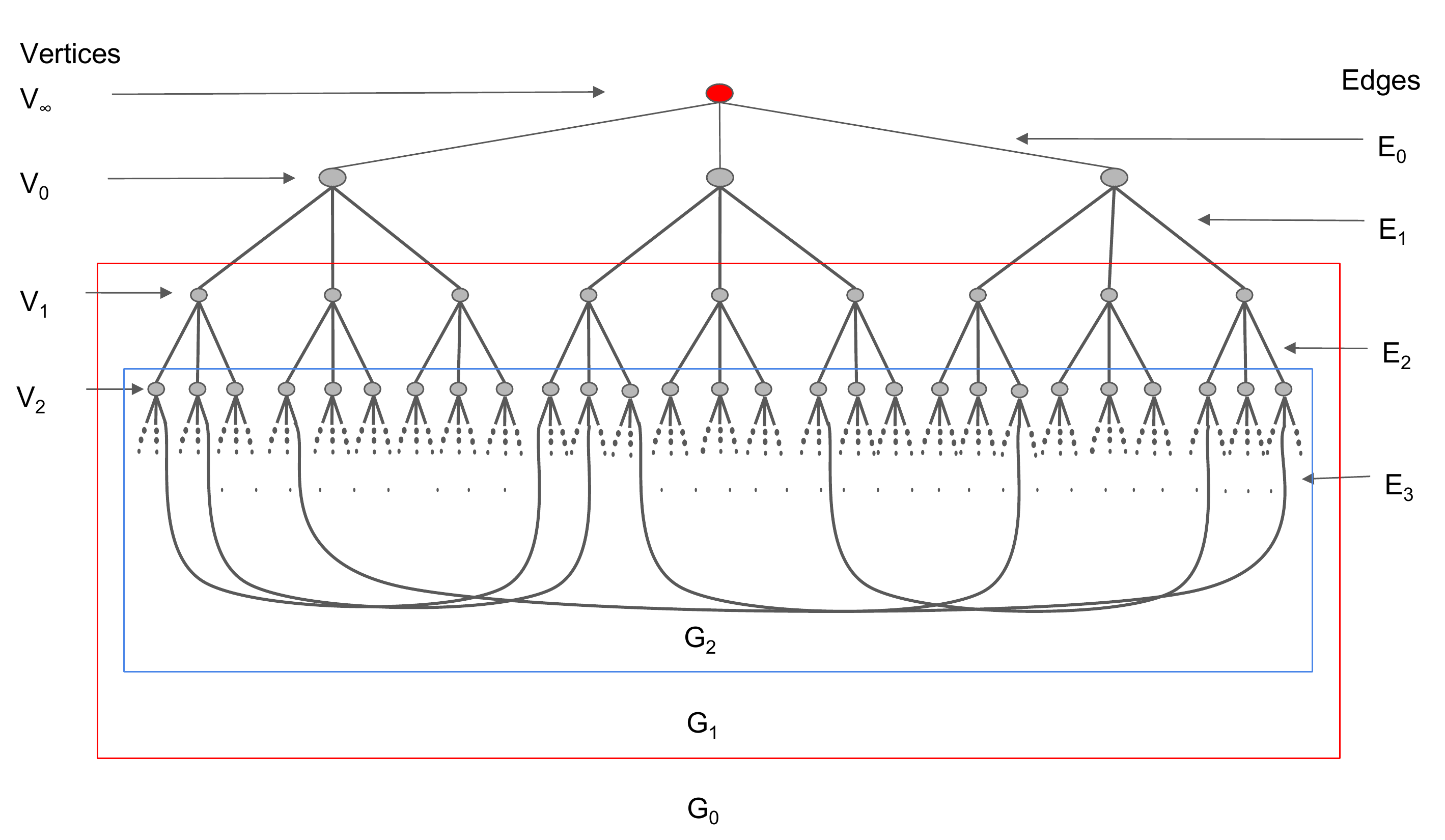}
	\caption{Graphical representation induced by the staircase p-c matrix \eqref{eq:stair_H_even_aug_ch4} for the case $(r+1)=4, \ t=6, \ s=2$ with $|V_0|=3$.}
	\label{fig:tree_graph_t_even}
\end{figure}

\subsection{$t$ Odd Case} 

In the case $t$ odd, the p-c matrix of a rate-optimal code can be put into the form (Chapter \ref{ch:SeqBound}, equation \eqref{eq:Hmatrixtodd_ch3}): 
\bea
\hodd & = & \left[
\begin{array}{c|c|c|c|c|c|c}
	D_0 & A_1 & 0 & 0 & \hdots & 0 & 0   \\
	\cline{1-7}
	0 & D_1 & A_2 & 0 & \hdots & 0 & 0  \\
	\cline{1-7}
	0 & 0 & D_2 & A_3 & \hdots & 0 & 0   \\
	\cline{1-7}
	0 & 0 & 0 & D_3 & \hdots & 0 & 0   \\
	\cline{1-7}
	\vdots & \vdots & \vdots & \vdots & \ddots & \vdots & \vdots  \\
	\cline{1-7}
	0 & 0 & 0 & 0 & \hdots & A_{s-1} & 0   \\
	\cline{1-7}
	0 & 0 & 0 & 0 & \hdots & D_{s-1} & A_{s}   \\
	\cline{1-7}
	0 & 0 & 0 & 0 & \hdots & 0 & D_{s} 
\end{array} \right] \label{eq:stair_H_odd_ch4},
\eea
where $s= \lfloor \frac{t-1}{2} \rfloor$, or equivalently, $t=2s+1$. 
Our goal once again, is to show that \hodd\ forces the code to have a certain graphical representation.  We add here as well, an additional row to \hodd\ at the top, which has all $1$s in the columns associated to $D_0$ and zeros elsewhere to obtain the matrix \hodda, shown in \eqref{eq:stair_H_odd_aug_ch4}.  
\bea
\hodda = 
%\begin{array}{cccccccc} 
%E_0 & E_1 & \cdots & \cdots & \cdots & \cdots & E_{\frac{t}{2}-1} & E_{\frac{t}{2}} \\ \hline 
%\end{array} 
\begin{array}{c} \ \\  V_{\infty} \\ V_0 \\ V_1 \\ V_2 \\ V_3 \\ \vdots \\ V_{s-2} \\ V_{s-1} \\ V_{s} \end{array} 
\left[
\begin{array}{c|c|c|c|c|c|c}
	E_0 & E_1 & E_2 & E_3 & \cdots & E_{s-1} & E_{s}  \\ \hline \hline 
	%& & & & & & &  \\ \hline \hline 
	\underline{1}^t & \underline{0}^t & \underline{0}^t & \hdots  & \hdots & \hdots & \underline{0}^t  \\
	\cline{1-7}
	D_0 & A_1 & 0 & 0 & \hdots & 0 & 0  \\
	\cline{1-7}
	0 & D_1 & A_2 & 0 & \hdots & 0 & 0  \\
	\cline{1-7}
	0 & 0 & D_2 & A_3 & \hdots & 0 & 0   \\
	\cline{1-7}
	0 & 0 & 0 & D_3 & \hdots & 0 & 0  \\
	\cline{1-7}
	\vdots & \vdots & \vdots & \vdots & \ddots & \vdots & \vdots  \\
	\cline{1-7}
	0 & 0 & 0 & 0 & \hdots & A_{s-1} & 0   \\
	\cline{1-7}
	0 & 0 & 0 & 0 & \hdots & D_{s-1} & A_{s}  \\
	\cline{1-7}
	0 & 0 & 0 & 0 & \hdots & 0 & D_{s}  \\
\end{array} 
\right].  \label{eq:stair_H_odd_aug_ch4}
\eea
Since each column of \hodda\ also has weight $2$, the matrix again has an interpretation as the node-edge incidence matrix of a graph  where the incidence matrix of the graph is obtained by replacing every non-zero entry of \hodda\ with $1$. We retain the earlier notation with regard to node sets \vi\ and node \vinfty\ and edge sets \ej\ but now w.r.t the matrix \hodda\ (see \eqref{eq:stair_H_odd_aug_ch4}). Here also each node except the node $V_{\infty}$ has degree $(r+1)$.

%he nodes associated to $D_i$ will be denoted by \vi\ and 
%similarly, the edges associated with $D_j$ will be denoted by $E_j$.  We use $V_{\infty}$ to denote the node associated with the row at the very top of \hodda\ (see \eqref{eq:stair_H_even_aug}). 

\subsubsection{Unravelling the Graph}

We can unravel the structure of this graph exactly as in the case of $t$ even. Fig.~\ref{fig:tree_graph_t_odd} shows the graph for the case $(r+1)=4, t=7,s=3$ with $|V_0|=4$.  Differences compared to the case $t$ even, appear only when we reach the nodes \vsmone\ via edge-set \esmone.  Here, the $r$ other edges outgoing from each node in \vsmone\ are terminated in the node set \vs.  We use \es\ to denote this last collection of edges.  As can be seen, the graph has a tree-like structure, except for the edges (corresponding to edge-set \es) linking nodes in \vsmone\ and \vs.\  The restriction of the overall graph to $\vsmone\ \cup \ \vs$ i.e., the subgraph induced by the nodes $\vsmone\ \cup \vs$ can be seen to be a bipartite graph \gsmone\ where each node in \vsmone\ has degree $r$  while each node in \vs\ has degree $r+1$. 

We use \gzero\ to denote the overall graph and use \gi\ to denote the restriction of \gzero\ to node-set $V_s \cup V_{s-1}  \cdots \cup V_i$ i.e., \gi\ denotes the subgraph of \gzero\ induced by the nodes $V_s \cup V_{s-1}  \cdots \cup V_i$ for $1 \leq i \leq s-1$. Thus the graphs are nested:
\bean
{\cal G}_{s-1} \subseteq {\cal G}_{s-2} \subseteq \cdots \subseteq \gtwo\ \subseteq \gone\ \subseteq \gzero.
\eean
Fig.~\ref{fig:tree_graph_t_odd} identifies the graphs $\gtwo\ \subseteq \gone\ \subseteq \gzero$ for the case $t=7$. 

\begin{note} \label{note:bipartite_count} 
	We note here that the bipartite graph \gsmone\ must be bi-regular with nodes of degree $r$ above, corresponding to nodes in $V_{s-1}$ and nodes of degree $(r+1)$ corresponding to nodes below (i..e, nodes in $V_s$). From the tree-like structure of the graph \gzero\ it follows that the number of nodes in $V_{s-1}$ and $V_s$ are respectively given by 
	\bea
	|V_{s-1}| & = & a_0 r^{s-1} \label{eq:Vsminus1_ch4} \\
	|V_s| & = & a_0 r^{s-1}\frac{r}{r+1}. \label{eq:Vs_ch4}
	\eea
	Since $r,r+1$ are co-prime, this forces $a_0$ to be a multiple of $(r+1)$.  
\end{note}
\begin{thm}\label{thm:t_odd_azero}
	For $t$ odd, $a_0$ must be a multiple of $(r+1)$. 
\end{thm}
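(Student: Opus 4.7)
The plan is to make Remark \ref{note:bipartite_count} into a self-contained proof by carefully verifying the two counting identities and then invoking coprimality. The key observation is already highlighted in the remark: in the odd-$t$ case, the bipartite subgraph $\mathcal{G}_{s-1}$ induced on $V_{s-1} \cup V_s$ is bi-regular (degree $r$ on the $V_{s-1}$ side, degree $r+1$ on the $V_s$ side), so a double count of its edges pins down $|V_s|$ as a rational multiple of $|V_{s-1}|$ whose denominator is $r+1$.

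First I would establish $|V_{s-1}| = a_0 r^{s-1}$ by induction on the level. The tree-like portion of $\mathcal{G}_0$ is rooted at $V_\infty$, which has exactly $a_0$ outgoing edges (one per column of $D_0$) terminating in the $a_0$ distinct nodes of $V_0$. Each node of $V_i$ for $0 \leq i \leq s-2$ has degree $r+1$: one edge comes from the parent in $V_{i-1}$ (or from $V_\infty$ when $i=0$), and the remaining $r$ edges go to distinct children in $V_{i+1}$, exactly as dictated by the block structure of $\hodda$ where the blocks $A_i$ and $D_i$ have one nonzero per column. Iterating, $|V_i| = a_0 r^i$ for $0 \leq i \leq s-1$.

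Next I would count edges in $\mathcal{G}_{s-1}$ in two ways. Each node in $V_{s-1}$ has degree $r+1$ in $\mathcal{G}_0$, one edge going up to its parent in $V_{s-2}$ and the remaining $r$ going down into $V_s$ (these are exactly the edges in $E_s$, corresponding to the columns of $A_s$ together with the one-per-column structure of $D_{s-1}$). Each node in $V_s$ has degree $r+1$ in $\mathcal{G}_0$, and all of its incident edges lie in $E_s$ (there is no $E_{s+1}$ in the odd-$t$ picture; the tree terminates into a bipartite graph). Hence
\begin{equation*}
|V_{s-1}| \cdot r \;=\; |E_s| \;=\; |V_s| \cdot (r+1),
\end{equation*}
which yields $|V_s| = a_0 r^{s-1} \cdot \frac{r}{r+1}$, confirming \eqref{eq:Vs_ch4}.

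Finally, since $|V_s|$ is a cardinality it must be a non-negative integer, so $(r+1) \mid a_0 r^s$. Because $\gcd(r, r+1)=1$ implies $\gcd(r^s, r+1)=1$, we conclude $(r+1) \mid a_0$. The only potentially subtle step is the edge count on the $V_{s-1}$ side, where one needs to confirm from the staircase form \eqref{eq:stair_H_odd_aug_ch4} that every row of $D_{s-1}$ has weight exactly one and every row of $A_{s-1}$ together with the preceding $D_{s-2}$ exhausts the $r+1$ incidences at each $V_{s-1}$-node; this is immediate from the lemma in Chapter \ref{ch:SeqBound} characterizing the block structure of a rate-optimal parity-check matrix, so no new work is required.
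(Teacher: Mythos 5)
Your proposal is correct and matches the paper's argument: the paper proves this theorem in Remark~\ref{note:bipartite_count} via exactly the same bipartite degree count on $\mathcal{G}_{s-1}$ (each $V_{s-1}$-node contributing $r$ edges to $E_s$, each $V_s$-node contributing $r+1$), combined with $|V_{s-1}| = a_0 r^{s-1}$ from the tree structure and the coprimality of $r^s$ and $r+1$. You merely spell out the induction on levels and the double count more explicitly than the paper's terse remark does, but the key idea and every step are identical.
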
 

%
%
%Since each row of \hodda, apart from the top row, has weight $(r+1)$, it follows that in the resultant graph, every node has degree $(r+1)$, with the exception fo Node \vinfty\ which has degree $a_0$.  
%The $a_0$ edges originating from \vinfty\ are terminated in the $a_0$ nodes making up \vzero. We will use $E_{0}$ to denote this collection of edges. There are $r$ other edges that emanate from each node in \vzero, each of these edges is terminated at a node in \vone. We use \eone\ to denote this collection of edges.  Each of the other $r$ edges that emanate from each node in \vone, terminate in a node in \vtwo.  We use \etwo\ to denote this collection of edges.  We continue in this fashion, until we reach the nodes in \vs\ via edge-set \es.  Here, the $r$ other edges outgoing from each node in \vs\ are terminated among themselves.  We use \espone\ to denote this last collection of edges.  As can be seen, the graph has a tree-like structure, except for the edges (corresponding to edge-set \espone) linking the leaf nodes at the very bottom. 

\subsubsection{Connecting the Graph to the Code} \label{g0specification}

Each node in the graph \gzero\ is associated to a row of the p-c matrix \hodd\ of the code and hence to a p-c.  The one exception is the fictitious node $V_{\infty}$ which does not correspond to a p-c.  When the p-c matrix \hodd\ is over $\mathbb{F}_2$, the top row of \hodda\ is the binary sum of all other rows of \hodda.\ Hence in case when \hodd\ is over $\mathbb{F}_2$,  \hodda\ also represent a p-c matrix of the code and the node $V_{\infty}$ is also associated to a p-c. Also, each edge in the graph \gzero\ is associated to a unique code symbol as the edges are associated with columns of the matrix \hodda.\
The structure of \gzero\ is mostly determined once we specify $a_0$ as \gzero\ with edges in \es\ removed is just a tree with $V_{\infty}$ as the root node. Hence the only freedom lies in selecting the edges that make up the bipartite graph \gsmone.  

\begin{figure}[ht]
	\centering
	\includegraphics[scale = 0.5]{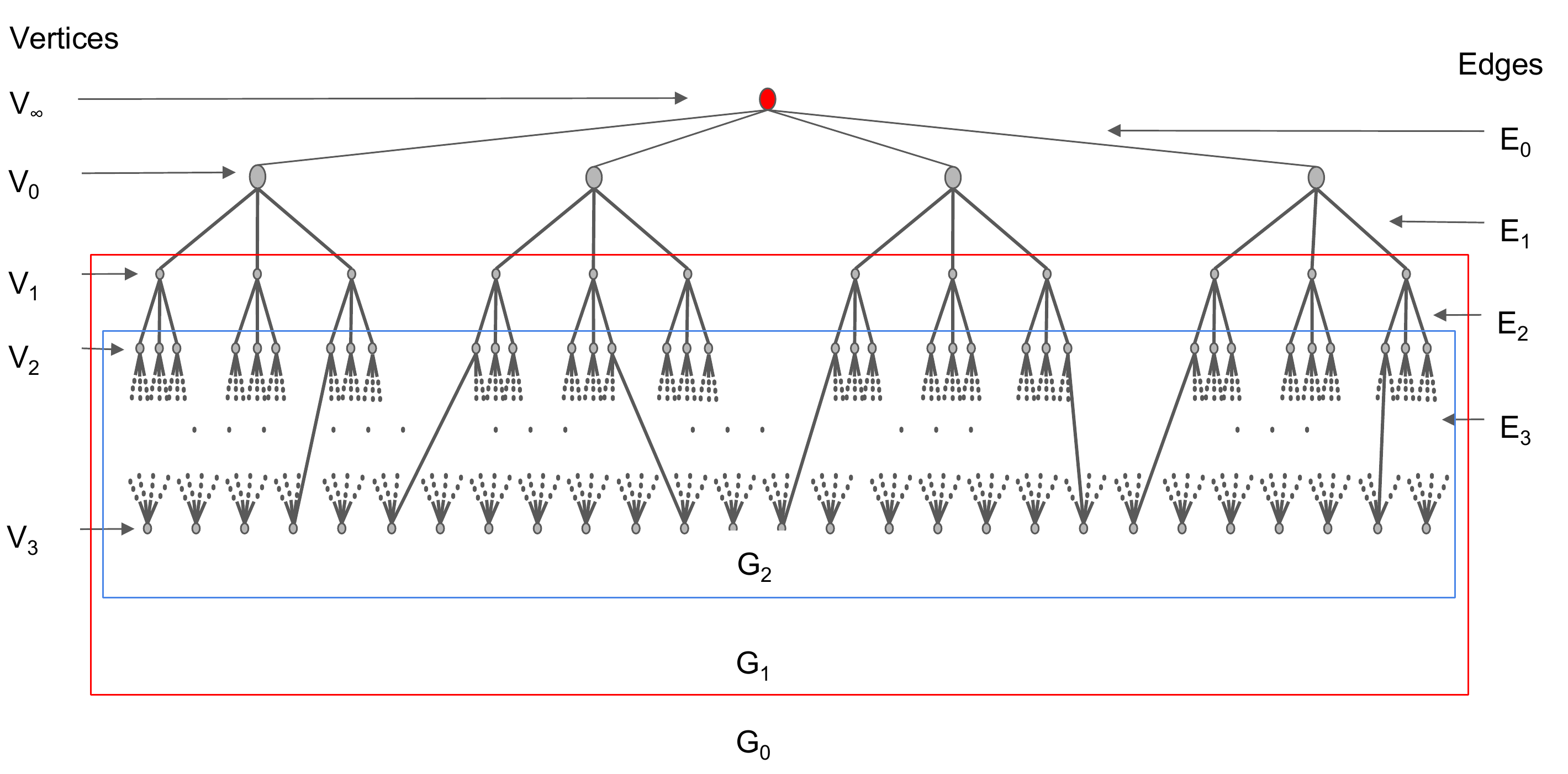}
	\caption{Graphical representation induced by the staircase p-c matrix \eqref{eq:stair_H_odd_aug_ch4} for the case  $(r+1)=4, \  t=7, \ s=3$ with $|V_0|=4$.}
	\label{fig:tree_graph_t_odd}
\end{figure}

\section{Girth Requirement} \label{sec:girth_req} 

\begin{thm} \label{thm:girthreq}
	For both $t$ even and $t$ odd,
	\ben	
	\item The code associated to graph \gzero\ with each node representing a p-c over  $\mathbb{F}_2$ can recover sequentially from $t$ erasures iff \gzero\ has girth $\geq (t+1)$.  
	\item The code associated to graph \gzero\ with each node representing a p-c over $\mathbb{F}_q$ with $q>2$, can recover sequentially from $t$ erasures if \gzero\ has girth $\geq (t+1)$.  
	\een
\end{thm}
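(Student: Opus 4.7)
The plan is to prove both directions by exploiting the graphical interpretation of Section \ref{sec:graphical_rep}: edges of $\mathcal{G}_0$ correspond to code symbols, and non-$V_\infty$ vertices correspond to parity checks of weight exactly $r+1$. Sufficiency will follow from a peeling argument valid over any field whose entries of $H$ are nonzero, while necessity over $\mathbb{F}_2$ will exploit the fact that every weight-$\leq r+1$ codeword of $\mathcal{C}^\perp$ corresponds over $\mathbb{F}_2$ to an edge-cut in $\mathcal{G}_0$, together with the elementary fact that every cut meets every cycle in an even number of edges.

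For the sufficiency direction, I would let $E$ be an arbitrary set of at most $t$ erased symbols, viewed as a set of edges of $\mathcal{G}_0$, and first observe that the subgraph induced by $E$ is a forest: any cycle would also be a cycle of $\mathcal{G}_0$ of length $\leq |E| \leq t$, contradicting the girth hypothesis. I would then describe the peeling order. At each stage the current subgraph of uncured erasures is a nonempty forest, and I would argue that one can always select a leaf edge whose degree-one endpoint is not $V_\infty$: if the component of the forest containing $V_\infty$ consists of a single edge, its other endpoint is a non-$V_\infty$ leaf; otherwise this component has at least two leaves, of which at least one is not $V_\infty$; components not containing $V_\infty$ trivially supply non-$V_\infty$ leaves. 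Peeling such an edge $e$ amounts to invoking the parity check at the chosen leaf $v$ (which has weight exactly $r+1$ and all coefficients nonzero by the construction of $H$) to express $c_e$ as a linear combination of the $r$ other symbols incident to $v$, all of which are either unerased or have been recovered in earlier stages. Induction on $|E|$ closes this direction, proving item (ii) and the ``if'' implication of item (i).

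For the necessity in item (i), suppose $\mathcal{G}_0$ has girth $\ell \leq t$ and let $C$ be a cycle of length $\ell$ in $\mathcal{G}_0$. Declare the $\ell$ edges of $C$ erased. At the first step of any sequential recovery we would require a codeword $P_1 \in \mathcal{C}^\perp$ with $w_H(P_1) \leq r+1$ and $|\mathrm{supp}(P_1) \cap C| = 1$. Because $w_H(P_1) \leq r+1$, the definition of $\mathcal{B}_0$ forces $P_1 \in \mathcal{B}_0$, and over $\mathbb{F}_2$ every element of $\mathcal{B}_0$ is a sum of rows of the augmented matrix ($H^{(a)}_{2s+2}$ for $t$ even, $H^{(a)}_{2s+1}$ for $t$ odd), which is exactly the edge boundary of some vertex subset $U \subseteq V(\mathcal{G}_0)$. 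A standard parity argument, traversing $C$ cyclically and counting entries and exits of $U$, yields $|\mathrm{supp}(P_1) \cap C| \equiv 0 \pmod 2$, contradicting $|\mathrm{supp}(P_1) \cap C| = 1$ and ruling out sequential recovery.

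The main obstacle in the sufficiency step is the handling of the fictitious vertex $V_\infty$, which over $\mathbb{F}_q$ with $q > 2$ is not an actual parity check; this is circumvented by the selection rule for leaves given above, which always succeeds because any tree on at least two vertices has at least two leaves. The necessity step genuinely relies on the $\mathbb{F}_2$ hypothesis: over $\mathbb{F}_q$ with $q > 2$ a non-binary linear combination of rows of $H$ can produce a codeword whose support cancels all but one edge of a given cycle, so cuts no longer exhaust the low-weight codewords of $\mathcal{C}^\perp$ and the parity obstruction evaporates, consistent with the theorem asserting only the forward implication in that case.
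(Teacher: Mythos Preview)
Your proof is correct and close in spirit to the paper's, but the two directions are packaged somewhat differently.

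For sufficiency, the paper argues by contradiction: if an erasure set $E$ with $|E|\le t$ cannot be sequentially repaired, then in the subgraph on edge set $E$ every non-$V_\infty$ vertex has degree $\ge 2$, and a walk from any edge (or from $V_\infty$, if present) must revisit a vertex, producing a cycle of length $\le t$. Your forest-peeling argument is simply the direct (constructive) contrapositive of this; the content is identical, and your careful case analysis ensuring a non-$V_\infty$ leaf always exists matches the paper's implicit handling of $V_\infty$.

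For necessity over $\mathbb{F}_2$, the paper gives a one-line argument: the columns of the incidence matrix $H^{(a)}$ indexed by a cycle $C$ of length $\le t$ sum to zero, so those columns of $H$ are linearly dependent and the erasure pattern $C$ is information-theoretically uncorrectable (not merely sequentially unrecoverable). Your route via cut--cycle parity is also valid, and in fact uses the same orthogonality of the cycle space and the cut space, just read in the other direction: instead of saying the cycle indicator lies in $\ker H$, you say every element of the row space is a cut and hence meets $C$ evenly. This is a legitimate alternative, though slightly more elaborate; the step through $\mathcal{B}_0$ is unnecessary since $P_1\in\mathcal{C}^\perp$ already places it in the row space of $H^{(a)}$. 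The payoff of the paper's version is that it shows the stronger conclusion that the cycle pattern is not decodable at all, which immediately subsumes failure of sequential recovery.
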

\begin{proof}
	Let us assume that there is an erasure pattern involving $\ell \leq t$ erased code symbols and that it is not possible to recover from this erasure pattern sequentially.  These $\ell$ erasures correspond to $\ell$ distinct edges $\{e_i\}_{i=1}^{\ell}$ of the graph \gzero.  Let $J=\{e_i \mid 1 \leq i \leq \ell\}$ and let us restrict our attention to the subgraph \gsub\ of \gzero\ with edge set of \gsub\ exactly the set $J$ with vertex set of \gsub\ exactly the set ${\cal U}$ of nodes that the edges in $J$ are incident upon. We note that in graph \gsub\, every node in ${\cal U} \setminus \{\vinfty\}$ must have degree $\geq 2$ because in \gsub\ a node in ${\cal U} \setminus \{\vinfty\}$ of degree one would imply that the code word in the row of p-c matrix corresponding to the degree one node can be used to recover the erased code symbol corresponding to the edge incident on it in \gsub.
	\ben
	\item First let us assume that $\vinfty \not \in {\cal U}$. We start with edge $e_1$, this must be linked to a p-c node $U_1 \in {\cal U}$ which is linked to a second erased symbol $e_2$ and so on, as degree of each node in \gsub\ is $\geq 2$.  In this way, we can create a path in \gsub\ with distinct edges.  But since there are only a finite number of nodes, this must eventually force us to revisit a previous node, thereby establishing that the graph \gzero\ has girth $\leq \ell \leq t$. 
	\item Next, for the case when \vinfty\ is a node in \gsub, we start at an erased edge incident upon node \vinfty\ and move to the node at the other end of the edge.  Since that node has degree $\geq 2$, there must be an edge corresponding to a second erased symbol that it is connected to the node and so on.  Again the finiteness of the graph will force us to revisit either \vinfty\ or else, a previously-visited node proving once again that an unrecoverable erasure pattern indicates a cycle and hence the graph \gzero\ has girth $\leq \ell \leq t$. 
	\een
	We have thus established that having a girth $\geq (t+1)$ will guarantee recovery from $\leq t$ erasures.  For $q=2$, it is easy to see that a girth of $t+1$ is necessary since if the girth is $\leq t$, then the set of erasures lying on a cycle of length $\leq t$ is uncorrectable regardless of whether or not the nodes associated with this cycle includes \vinfty. 
\end{proof}

\begin{thm} \label{LowerBoundOna0}
	For the graph \gzero\ to have girth $(t+1)$, the degree $a_0$ of \vinfty\ or equivalently, the number $a_0$ of nodes in $V_0$, has the lower bound, $a_0 \geq r+1$.
\end{thm}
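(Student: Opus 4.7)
The graph $\mathcal{G}_0$ decomposes as a rooted tree from $V_\infty$ down through $V_0, V_1, \ldots, V_{s-1}$, together with additional edges appearing only at the deepest level: within $V_s$ when $t$ is even, or between $V_{s-1}$ and $V_s$ when $t$ is odd. My plan is to locate, inside the closed neighbourhood of a single node $v \in V_s$, a set of $r+1$ nodes in $V_{s-1}$, and then use the girth hypothesis to show that these $r+1$ nodes must descend from $r+1$ distinct nodes of $V_0$. Since each node of $V_{s-1}$ has a unique ancestor in $V_0$ (because the sub-structure of $\mathcal{G}_0$ above $V_s$ is a tree), this immediately yields $a_0 = |V_0| \geq r+1$.

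The first step is to pick any $v \in V_s$ and extract a set $P(v) = \{u_0, u_1, \ldots, u_r\} \subseteq V_{s-1}$ of cardinality $r+1$. When $t$ is odd, all $r+1$ neighbours of $v$ already lie in $V_{s-1}$, so I take $P(v)$ to be this neighbour set. When $t$ is even, $v$ has a single tree-parent $u_0 \in V_{s-1}$ together with $r$ neighbours $w_1, \ldots, w_r$ inside $V_s$; I then take $u_i$ to be the parent of $w_i$ in $V_{s-1}$ for $1 \leq i \leq r$. The key subclaim is that the $u_i$ are pairwise distinct: any coincidence $u_i = u_j$ would complete a cycle of length $2$, $3$ or $4$ through $v$, which is ruled out by girth $\geq t+1$ as soon as $s \geq 1$.

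For the main step I would take any two distinct elements $u_i, u_j \in P(v)$ and let $\ell$ denote the level of their lowest common ancestor in the tree portion of $\mathcal{G}_0$, taking $V_\infty$ at level $0$ and $V_{s-1}$ at level $s$. Tracing along the unique tree paths from $u_i$ and $u_j$ up to their common ancestor and back down, and closing them through the short detour through $v$ at the bottom, produces a genuine cycle in $\mathcal{G}_0$ of length exactly $2(s-\ell)+2$ in the $t$-odd case and $2(s-\ell)+3$ in the $t$-even case. The girth lower bound of $t+1$ (which equals $2s+2$ and $2s+3$ respectively) then forces $\ell \leq 0$, so the LCA must be $V_\infty$ itself. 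Consequently the $r+1$ nodes of $P(v)$ lie in $r+1$ pairwise distinct branches of the tree, each rooted at a different node of $V_0$, and the desired inequality $a_0 \geq r+1$ follows.

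The one place I expect to need care is the degenerate regime $s = 0$ (i.e.\ $t \leq 2$), where $V_{s-1}$ does not exist and the argument above formally breaks down. This case should be dispatched by a direct degree count: for $t = 2$ every node of $V_0$ already has $r$ neighbours inside $V_0$ via the edge set $E_{s+1}=E_1$, so $|V_0| \geq r+1$ at once. The other point requiring some bookkeeping is in the LCA argument, where one has to verify that the closed walks exhibited are genuine simple cycles — that the two ancestral paths meet only at the LCA and that the leaf-level detour through $v$ uses edges distinct from those paths — but this is immediate from the definition of LCA together with the pairwise distinctness already established.
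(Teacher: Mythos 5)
Your proof is correct. One small arithmetic slip worth fixing: in the $t$-even case, a pair $u_i,u_j\in P(v)$ with both indices $\geq 1$ has bottom detour $u_i\to w_i\to v\to w_j\to u_j$ of length $4$, so the resulting cycle has length $2(s-\ell)+4$ rather than the stated $2(s-\ell)+3$ (the shorter length $2(s-\ell)+3$ holds only for pairs containing the tree-parent $u_0$). The conclusion is unaffected, since $2(s-\ell)+4 \geq t+1 = 2s+3$ still forces $\ell\leq\tfrac12$, hence $\ell=0$ for integer $\ell$.

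Your route is a genuine variant of the paper's. For $t$ odd the paper does not invoke the girth at all: it cites Theorem~\ref{thm:t_odd_azero}, which obtains the \emph{stronger} conclusion $(r+1)\mid a_0$ purely from the tree-counting identity $|V_{s-1}|\,r = |V_s|\,(r+1)$ together with $\gcd(r,r+1)=1$. For $t$ even the paper anchors at a node $v\in V_0$ and considers the set $N_v$ of its $V_s$-descendants, using the girth to show that each $x\in N_v$ sends its $r$ edges of $E_{s+1}$ into $r$ pairwise distinct sets $N_{w_i}$ with $w_i\neq v$ — the same accounting you perform, read from the root down rather than from a leaf up. Your LCA formulation trades the sharper divisibility available in the odd case for a treatment that handles both parities uniformly, which is a reasonable and arguably cleaner trade.
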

\begin{proof}
	{\bf Case $t$ odd:} \ As shown in Theorem~\ref{thm:t_odd_azero}, $a_0$ must be in fact be a multiple of $(r+1)$.  
	
	{\bf Case $t$ even:}   Let the set containing all those nodes of $V_s$ which are at distance atmost $s$ from a particular node $v \in V_0$ be denoted by $N_v$ where distance between vertex $w_1,w_2$ is measured by the number of edges in the shortest path between $w_1,w_2$. Note that $N_v \cap N_w = \emptyset$, $v \neq w$. Since every node in $N_v$ has a path of length (measured in terms of number of edges) $s$ edges leading to $v$ and hence these paths does not involve any edge from \espone, no two nodes in $N_v$ can be connected by an edge in \espone\ in \gzero.  For, if two nodes in $N_v$ were connected by an edge in \espone\ then there would be a cycle of length at most $2s+1 < t+1$ present as $s:=\lfloor \frac{t-1}{2} \rfloor$.  Further, a node in $N_v$ cannot connect to two nodes in $N_w$ via edges in \espone, $\forall v \neq w, w \in V_0$, for the same reason.  It would once again imply the presence of a cycle of length $2s+2 < t+1$. Hence each node in $N_v$ must connects to $r$ nodes with $i$th node belonging to $N_{w_i}$ , $1 \leq i \leq r$ respectively for some set of $r$ distinct nodes $\{w_1,...,w_r\} \subseteq V_0-\{v\}$.   It follows that there must be at least $r+1$ distinct nodes in $V_0$.  Thus $a_0 \geq r+1$. 
	
	%{\bf Case: $t$ odd} \ 
	%Since each node in $V_{s-1}$ is of degree $r$ and each node in $V_{s}$ is of degree $r+1$, we have that by edge counting, $r|V_{s-1}| = (r+1)|V_{s}|$. Since $r$,$r+1$ are co-prime, $(r+1) \ \ | \ \ |V_{s-1}|$ but $|V_{s-1}| = a_0 r^{s-1}$.  Hence $a_0$ is a multiple of $r+1$. 
\end{proof}

\section{Code Construction by Meeting Girth Requirement} \label{sec:complete_code}

%  \subsection{Principle Behind Construction} 

As noted in Section~\ref{sec:girth_req}, to complete the construction of rate-optimal code over $\mathbb{F}_2$, we need to ensure that the graph \gzero\ has girth $\geq t+1$ for both $t$ even and $t$ odd. Since the code is over $\mathbb{F}_2$, the node-edge incidence matrix of the graph \gzero\ with girth $\geq t+1$ will directly give the required p-c matrix for a rate-optimal code. Hence in the following we focus only in designing \gzero\ with girth $\geq t+1$. The steps followed are outlined below.

\subsection{Step $1$ : Construction and Coloring of the Base Graph} 

\ben
\item We begin by deleting the edges in the graph \gzero\ connecting $V_{\infty}$ to the nodes in $V_0$.  One is then left with a graph where the nodes in $V_0$ have degree $r$ and all the remaining nodes have degree $(r+1)$.  In particular, every node has degree $\leq (r+1)$.  We shall call the resultant graph the {\em base graph} \ \gbase.  Note that if \gbase\ has girth $\geq t+1$, the construction ends here but we do not need \gbase\ to have girth $\geq t+1$, as we will modify \gbase\ to construct another graph $\gexp_0$ having same incidence matrix as  \eqref{eq:stair_H_even_aug_ch4} for $t$ even, \eqref{eq:stair_H_odd_aug_ch4} for $t$ odd which will have girth $\geq t+1$.
\item By Vizing's theorem \cite{Viz}, the edges of \gbase\ can be colored using $\ell (\leq (r+2))$ colors in such a way that each edge is colored with some color from the set of $\ell$ colors and for every $v \in V(\gbase)$, all the edges incident on $v$ are of a different colors.  However it is possible to color the edges of \gbase\ using $\ell = (r+1)$ colors:  
\bit
\item {\bf Case $t$ even:} \ In this case, by selecting $a_0=4$ for $s \geq 2$, i.e., $t \geq 6$, and through careful selection of the edges connecting nodes in $V_s$, one can ensure that the base graph can be colored using $(r+1)$ colors (See Appendix \ref{Gbasecolouring_ch4}); thus the base graph will in this case have 
\bea
\nbase\ \ = \ a_0(1+r+\cdots +r^s) & = & 4\left( \frac{r^{s+1}-1}{r-1} \right) \label{eq:tevenblklength}
\eea
nodes. 
For $s=0$, $t=2$, we can choose \gbase\ to be a complete graph of degree $r+1$ and the complete construction of rate-optimal code ends here as the graph \gbase\ has girth $t+1=3$. For $s=1$, $t=4$, we can construct \gbase\ with $a_0=2r$ with girth $6 > t+1=5$ and the complete construction of rate-optimal code ends here. We skip the description for $t=4$ case here.

\item {\bf Case $t$ odd:} \ In this case, by selecting $a_0=(r+1)$ and through careful selection of the edges connecting nodes in $V_{s-1}$ with nodes in $V_s$, one can ensure that the base graph can be colored using $(r+1)$ colors (See Appendix \ref{Gbasecolouring_ch4}); thus the base graph will have in this case, a total of 
\bean
\nbase\ \ = \ a_0(1+r+\cdots +r^{s-1}+r^{s-1} \frac{r}{r+1}) & = & (r+1)\left( \frac{r^{s}-1}{r-1}\right) +r^s \label{eq:toddblklength}
\eean
nodes. 

\eit
The coloring is illustrated in Fig.~\ref{fig:base_graph_coloring} for the case $t=5$ and $r=3$. 
\een
\begin{figure}[ht]
	\centering
	\includegraphics[scale = 0.35]{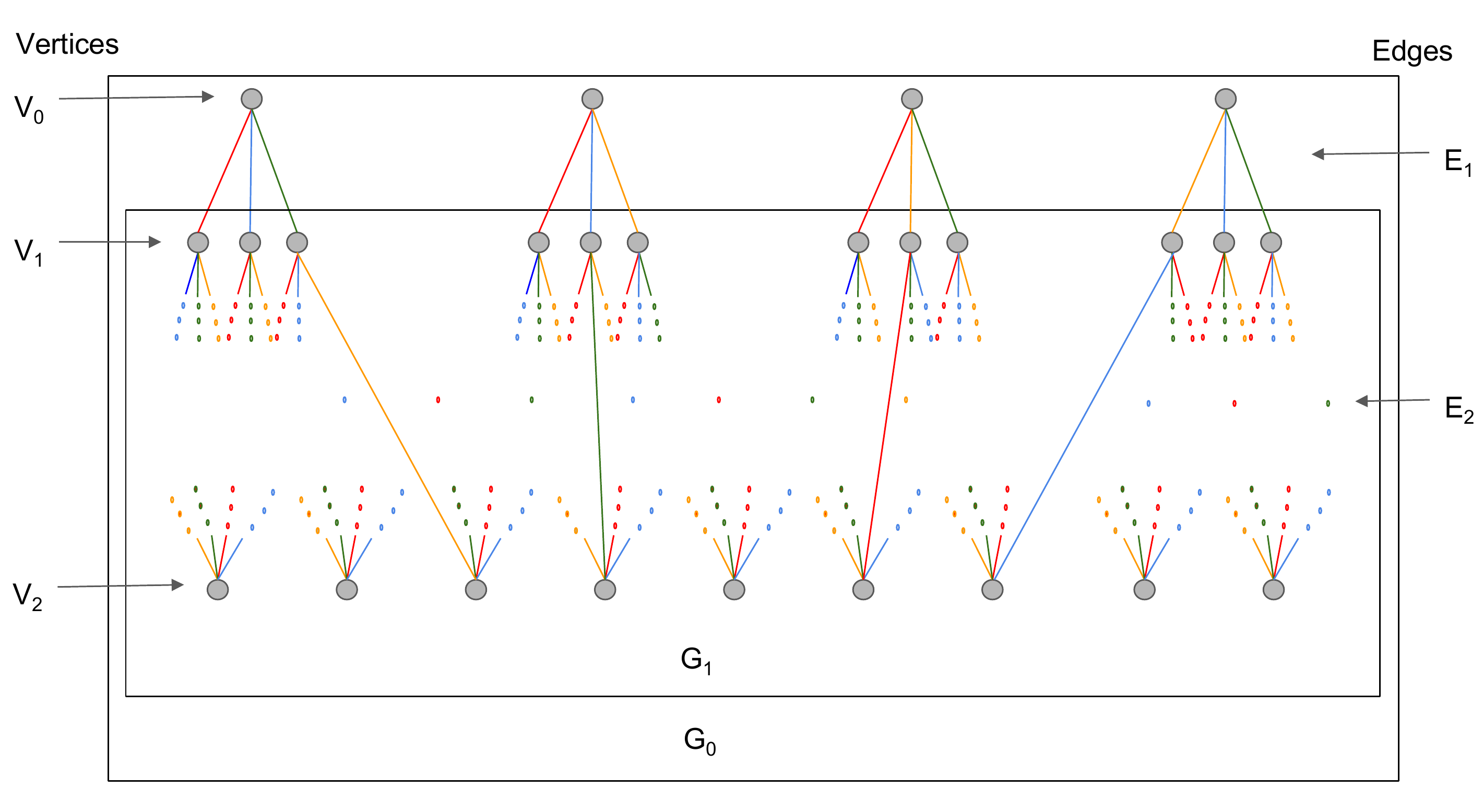}
	\caption{An example base graph with associated coloring of the edges using $(r+1)$ colors.  Here $t=5$, $r=3$, $a_0=r+1=4$ so the base graph can be constructed such that we can color the edges with $r+1=4$ colors.}
	\label{fig:base_graph_coloring}
\end{figure}
In summary, the base graph is graph \gzero\ with node $V_{\infty}$ removed and has all nodes of degree $ \leq (r+1)$ and can be colored using $(r+1)$ colors; these are the only properties of the base graph that are carried forward to the next steps of the construction; we will number the colors $1$ through $r+1$ and speak of color $i$ as the $i$th color.   The steps that follow, are the same for both $t$ even and $t$ odd.  

\subsection{Step $2$ : Construction and Coloring of the Auxiliary Graph} 

Next, we begin by identifying a second graph which we shall call the {\em auxiliary graph} \aux.  The following properties are required of the auxiliary graph \aux:
\ben
\item \aux\ has an even number $\naux=2m$ of nodes and every node in \aux\ has degree $\geq (r+1)$, 
\item \aux\ is of girth $\geq t+1$, 
\item \aux\ should permit a coloring of a subset of the edges of \aux\ using the same $(r+1)$ colors used to color the base graph in such a way that 
\bit
\item Every vertex of \aux\ contains an edge incident upon it of color $i$, for any $i$ such that $1 \leq i \leq r+1$; 
\item For $1 \leq i \leq r+1$, if \auxi\ denotes the subgraph of \aux,\ with edge set of \auxi\ exactly equal to the set of edges in \aux\ of color $i$, with $V(\auxi) = V(\aux)$, then \auxi\ is a bipartite graph corresponding to a perfect matching of \aux\, i.e., there are $|\aux|/2$ nodes on either side of the bipartite graph and every vertex of \auxi\ is of degree $1$. 
\eit 
This is illustrated in Fig.~\ref{fig:aux_graph_coloring}.
\begin{figure}[ht]
	\centering
	\includegraphics[angle=270,scale = 0.65]{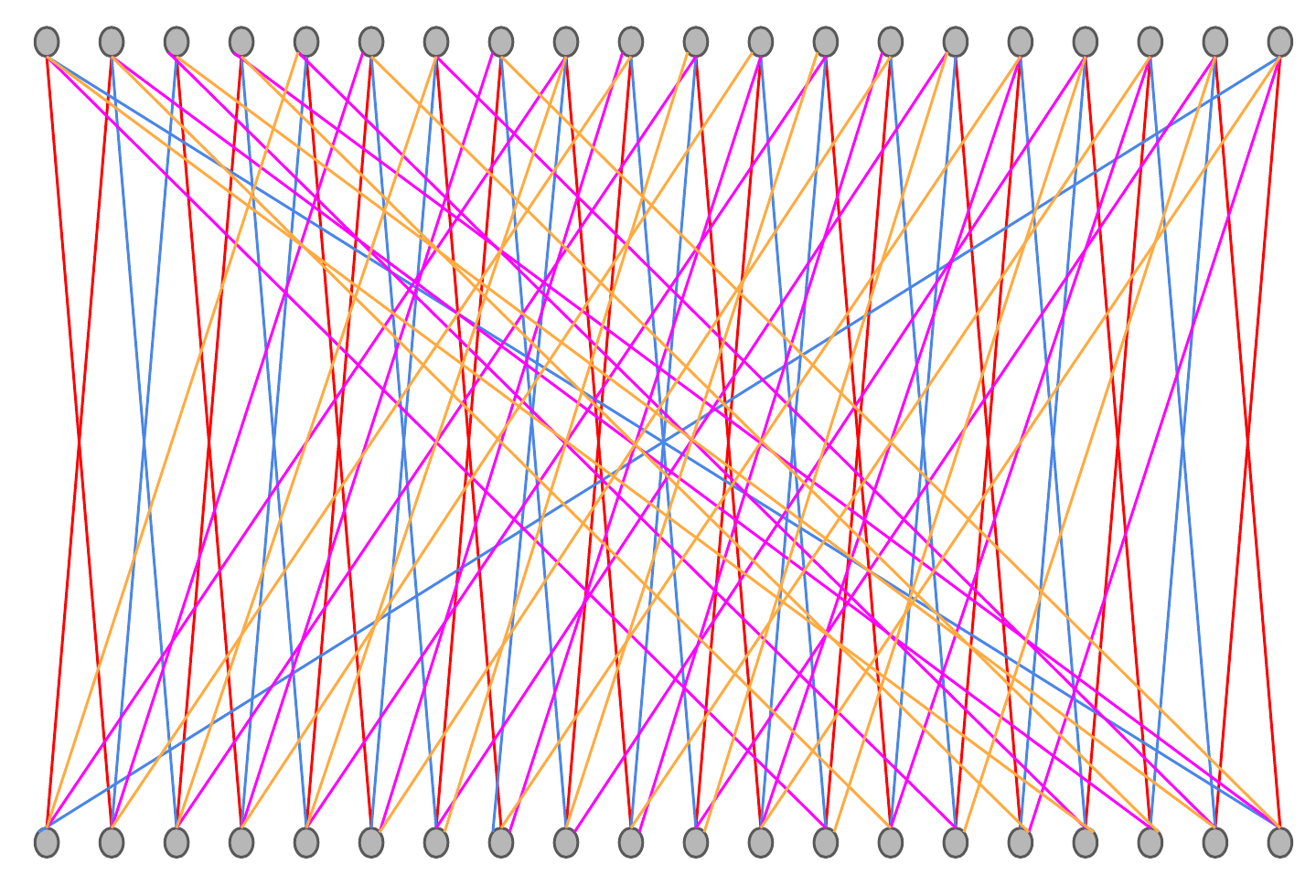}
	\caption{An example auxiliary graph \aux\ with associated coloring of the edges using $(r+1)$ colors.  Here $r=3$, so there are $r+1=4$ colors.  This graph is a regular bipartite graph of degree $r+1=4$ with $\naux=40$ vertices with girth $\geq 6$.  }
	\label{fig:aux_graph_coloring}
\end{figure}
\een
It follows from Hall's Theorem (1935) \cite{dieRich} that an $(r+1)$-regular bipartite graph $G$ of girth $\geq t+1$ can be colored with $r+1$ colors in such a way that $G$ satisfies the conditions necessary for an auxiliary graph \aux.\ It can also be shown that an $(r+1)$-regular graph with girth $\geq t+1$ with $N_R$ nodes can be converted to an $(r+1)$-regular bipartite graph with girth $\geq t+1$ with $2N_R$ nodes. We skip this description.
Since $(r+1)$-regular graph with girth $\geq t+1$ can be constructed with small number of nodes (close to smallest possible) \cite{Lubotzky1988,X_Dahan,Mor,DavSarVal,LazUstWol}, we can construct the auxiliary graph \aux\ with necessary conditions with small number of nodes.
\subsection{Step $3$ : Using the Auxiliary Graph to Expand the Base Graph} 
In this step, we use the graph \aux\ to expand the graph \gbase, creating in the process, a new graph \gexp\ as follows:
\ben
\item We start with \gbase\ and replace each vertex $v \in V(\gbase)$ with \naux\ vertices $W_{v} = \{(v,u) \mid u \in \aux \}$, each corresponding to a vertex in \aux\ i.e.,  $V(\gexp) = V(\gbase) \times V(\aux) = \{(v,u) \mid u \in \aux , v \in \gbase \}$. The resultant graph will be termed as the expanded graph \gexp.\ We will now define the edge set of \gexp.\ The edge set of \gexp\ is defined in such a way that every edge in \gexp\ is colored with some color $j \in [r+1]$.
\item For every $v,x \in V(\gbase)$ and for every $u,w \in V(\aux)$:

We add the edges $((v,u),(x,w)),((v,w),(x,u))$ in \gexp\ with both edges of color $i$ iff 
\ben
\item in \gbase, vertex $v$ and vertex $x$ were connected by an edge of color $i$ and
\item $(u,w) \in \auxi$ i.e., the nodes $u$ and $w$ are connected by an edge of color $i$ in \aux.
\een
%i.e., the edge set of \gexp\ is equal to $\{ ((v,u),(x,w)) : (u,w) \in \mathcal{A}_{c(v,x)}, (v,x) \in E(\gbase) \}$ where $E(\gbase)$ is the edge set of \gbase\ and $c(v,x) \in [r+1]$ is the color of edge $(v,x) \in E(\gbase)$; 
\een

\begin{thm}
	\gexp\ has girth $\geq t+1$.
\end{thm}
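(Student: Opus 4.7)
The plan is to argue by contradiction: suppose $\mathcal{G}_{\text{exp}}$ contains a cycle $C$ of length $\ell \leq t$ and then project $C$ onto $\mathcal{A}$, where the girth hypothesis on $\mathcal{A}$ will provide the contradiction. The construction of $\mathcal{G}_{\text{exp}}$ is set up precisely so that this projection is well-defined and inherits the colour sequence of the edges of $C$.

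First I would observe that $\mathcal{G}_{\text{exp}}$ is properly edge-coloured by the $r+1$ colours. At a vertex $(v,u)$, for any colour $i$ the only candidate neighbour via colour $i$ is $(x,w)$, where $x$ is the unique colour-$i$ neighbour of $v$ in $\mathcal{G}_{\text{base}}$ (unique because $\mathcal{G}_{\text{base}}$ is properly coloured) and $w$ is the unique colour-$i$ neighbour of $u$ in $\mathcal{A}$ (unique because $\mathcal{A}_i$ is a perfect matching). Hence at $(v,u)$ there is at most one edge of each colour, and any two consecutive edges of $C$ meeting at a common vertex must carry distinct colours; otherwise the two edges would coincide, forcing $\ell = 2$, which is impossible for a cycle in a simple graph.

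Next, write $C = ((v_1,u_1),(v_2,u_2),\ldots,(v_\ell,u_\ell),(v_1,u_1))$ and let $i_j$ denote the colour of the $j$-th edge, with indices taken modulo $\ell$. Directly from the construction, $(u_j,u_{j+1}) \in \mathcal{A}_{i_j}$ for every $j$, so the sequence $W_{\mathcal{A}} := (u_1,u_2,\ldots,u_\ell,u_1)$ is a closed walk of length $\ell$ in $\mathcal{A}$ whose consecutive edges carry the colours $i_1,\ldots,i_\ell$ with $i_j \neq i_{j+1}$ for all $j$ by the previous paragraph.

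To close the argument, I would invoke the elementary fact that in any simple graph, every closed walk of length at least $2$ either contains a simple cycle of length at least three, or has two consecutive edges that coincide; this is seen by selecting a shortest closed sub-walk $u_j,u_{j+1},\ldots,u_k$ with $u_j = u_k$ and observing that minimality forces the intermediate vertices to be pairwise distinct. Applied to $W_{\mathcal{A}}$, the first alternative produces a cycle in $\mathcal{A}$ of length at most $\ell \leq t$, contradicting the hypothesis that $\mathcal{A}$ has girth $\geq t+1$; the second alternative produces two consecutive edges of $W_{\mathcal{A}}$ which coincide in $\mathcal{A}$ and therefore share a colour, contradicting $i_j \neq i_{j+1}$. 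I expect the main obstacle to be purely conceptual rather than computational: namely, pinning down that the perfect-matching structure of each $\mathcal{A}_i$ is exactly what allows the colour of an edge of $\mathcal{G}_{\text{exp}}$ to be transported faithfully to $\mathcal{A}$, so that the two standard alternatives for closed walks neatly dispatch the two cases.
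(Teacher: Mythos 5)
Your proof is correct and follows essentially the same route as the paper: project a hypothetical short cycle in $\mathcal{G}_{\text{exp}}$ onto $\mathcal{A}$ via the second coordinate, use the proper edge-coloring inherited from the base graph to rule out backtracking in the projected closed walk, and extract a short cycle in $\mathcal{A}$ to contradict the girth hypothesis. The paper's proof is only a terse sketch of this argument; you have simply filled in the two supporting observations (that $\mathcal{G}_{\text{exp}}$ is properly edge-coloured because $\mathcal{G}_{\text{base}}$ is properly coloured and each $\mathcal{A}_i$ is a perfect matching, and that a non-backtracking closed walk of length $\ell$ yields a cycle of length at most $\ell$) that the paper leaves implicit.
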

\begin{proof}
	This follows simply because corresponding to every path traversed through \gexp with successive edges in the path with some sequence of colors, there is a corresponding path in \aux\ with successive edges in the path corresponding to the same sequence of colors. The edge coloring of the base graph ensures that we never retrace our steps in the auxiliary graph i.e., any two successive edges in the path in auxiliary graph are not the same.  It follows that since \aux\ has girth $\geq t+1$, the same must hold for \gexp. 
\end{proof} 
Let $V'_i = V_i \times V(\aux)$. We add a node $V'_{\infty}$ to \gexp\ and connect it to each node in $V'_0 \subseteq V(\gexp)$ through an edge. Call the resulting graph $\gexp_0$.
It is clear from the construction that the graph $\gexp_0$ is graph \gzero\ with girth $\geq t+1$ but now with $V'_i$ replacing $V_i$ in the description of \gzero\ because: 
\ben
\item \gbase\ is \gzero\ with node $V_{\infty}$ removed,
\item $\gexp_0$ is just an expanded version of \gbase\ i.e, a node $v \in V(\gbase)$ is expanded into the set of nodes $W_{v} \subseteq V(\gexp_0)$,
\item Edges are defined in $\gexp_0$ preserving the tree-like structure of \gbase\ as the sub graph of $\gexp_0$ induced by $W_{v} \cup W_{x}$ is isomorphic to \auxi\ with all nodes having degree one for very edge $(v,x)$ in \gbase\ of color $i$. 
\een
Hence $\gexp_0$ has the same node-edge incidence matrix as  \eqref{eq:stair_H_even_aug_ch4} over $\mathbb{F}_2$  for $t$ even, \eqref{eq:stair_H_odd_aug_ch4} over $\mathbb{F}_2$ for $t$ odd with $V'_i$ replacing $V_i$ and also has girth $\geq t+1$. This node-edge incidence matrix gives the required p-c matrix for a rate-optimal code over $\mathbb{F}_2$ with parameters $(r,t)$. Hence we have constructed the required p-c matrix for a rate-optimal code over $\mathbb{F}_2$ for both $t$ even and $t$ odd, for any $r \geq 3$.
Hence we conclude our construction of rate-optimal code for both $t$ even and $t$ odd.

\section{S-LR Codes that are Optimal with Respect to Both Rate and Block Length} \label{sec:Moore_ch4}

In this section, we begin by presenting a construction for S-LR codes given in \cite{RawMazVis}. We improve upon the lower bound  for the rate of this construction provided in \cite{RawMazVis} and show that for a given $(r,t)$, there is a unique block length for which this improved lower bound on rate is equal to the right hand side of the upper bound on rate derived in Chapter \ref{ch:SeqBound} in Theorem \ref{rate_both}. For a given $(r,t)$, codes based on this construction with the unique block length for which the improved lower bound is equal to the maximum possible rate turn out to correspond to codes based on a graph known as Moore graph with degree $r+1$ and girth $t+1$. Unfortunately, for $r \geq 2$ (see \cite{DynCageSur}), Moore graphs are known to  exist only when $t \in \{2,3,4,5,7,11\}$. 

%In this section, by a parity check we refer to the parity check or the code word in dual corresponding to the parity check.

\begin{const} \label{con:Raw_bipartite} (\cite{RawMazVis})
	Consider an $(r+1)$-regular bipartite graph $G$ having girth $\geq t+1$. Let the number of nodes in $G$ be $N$. Let $H$ be the $(N \times \frac{N(r+1)}{2})$ node-edge incidence matrix of the graph $G$ with each row representing a distinct node and each column representing a distinct edge. The code $\mathcal{C}$ with p-c matrix $H$ thus defined is an S-LR code with parameters $(n = \frac{N(r+1)}{2},k \geq n-N,r,t)$ over $\mathbb{F}_2$. The constrution takes $G$ as input and constructs code $\mathcal{C}$ as output.
\end{const}
\begin{proof}
	(sketch of proof)  
	Let $\mathcal{C}$ be the code obtained as the output of the construction \ref{con:Raw_bipartite} with an $(r+1)$-regular bipartite graph $G$ with girth $\geq t+1$ as input. The code $\mathcal{C}$ is an S-LR code with parameters $(r,t)$, simply because a set of erased symbols with least cardinality which cannot be recovered through sequential recovery must correspond to a set of linearly dependent columns in $H$ with least cardinality and hence corresponds to a set of edges forming a cycle in $G$.  Since $G$ has girth $\geq t+1$, the number of edges in this cycle must be $>t$ and hence the number of erased symbols is $>t$.  The code parameters follow from a simple calculation. 
\end{proof}
The graph $G$ described in Construction \ref{con:Raw_bipartite} need not have the tree-like structure of \gzero.\ 
Let $\mathcal{C}$ be the code obtained as the output of the construction \ref{con:Raw_bipartite} with an $(r+1)$-regular bipartite graph $G$ with girth $\geq t+1$ as input.
Since the graph $G$ need not have the tree-like structure of \gzero, it may be impossible for the code $\mathcal{C}$ to have a p-c matrix similar to \eqref{eq:stair_H_even_aug_ch4} for $t$ even, \eqref{eq:stair_H_odd_aug_ch4} for $t$ odd. We will now see that it is possible to write a p-c matrix for $\mathcal{C}$ similar to \eqref{eq:stair_H_even_aug_ch4} for $t$ even, \eqref{eq:stair_H_odd_aug_ch4} for $t$ odd iff $G$ is a Moore graph. It follows from Construction \ref{con:Raw_bipartite}, as was observed in \cite{RawMazVis}, that the rate of the code $\mathcal{C}$ is $\geq \frac{r-1}{r+1}$.   But we will shortly provide a precise value for the rate of this code. 
%This is arrived at by noting that the code $\mathcal{C}'$ is the direct product of codes corresponding to connected components of $G$.
\begin{defn} (\textbf{Connected Component})
	Let $G$ be a graph. Then a connected component of $G$ is a subgraph $G_1$ such that $G_1$ is connected as a graph and moreover, there is no edge in $G$, connecting a vertex in $V(G_1)$ to a vertex in $V(G)\setminus V(G_1)$. 
\end{defn}
Of course if $G$ is a connected graph then there is just a single connected component, namely the graph $G$ itself. 
% Because of this and by Lemma \ref{thm:rate}, it can be further seen that the rate of the code corresponding to a bipartite graph is upper bounded by rate of the code corresponding to the connected compoenent of the bipartite graph with smallest number of vertices.
% We will see in Lemma \ref{thm:rate}, that each connected component has one exactly linearly dependent parity check because sum of all parity checks in a connected component is 0 as each code symbol is covered exactly twice. This one linearly dependent parity check per connected component gives a slight in increase in rate. Based on the preceding observation that $\mathcal{C}'$ is the direct product of the codes corresponding to connected components of $G'$, we will from now on assume that $G'$ is connected.
\begin{thm} \label{thm:rate}
	Let $G$ be an $(r+1)$-regular bipartite graph with girth $\geq t+1$. Let the graph $G$ be connected with exactly $N$ nodes. The code $\mathcal{C}$ obtained as the output of the construction \ref{con:Raw_bipartite} with the graph $G$ as input
	is an S-LR code with parameters $(n = \frac{N(r+1)}{2},k = n-N+1,r,t)$ over $\mathbb{F}_2$ and hence having rate given by:
	\bea
	\frac{r-1}{r+1}+\frac{1}{n}. \label{RateofRawatConst}
	\eea
\end{thm}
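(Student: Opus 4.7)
The plan is to establish the theorem in three stages: first verify the S-LR property with parameters $(r,t)$ (already inherited from Construction~\ref{con:Raw_bipartite}), then precisely determine the dimension $k$ by computing the $\mathbb{F}_2$-rank of the node-edge incidence matrix $H$, and finally simplify the rate $k/n$ into the claimed closed form.

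The crux of the argument is the rank computation. Since $H$ is an $(N \times n)$ matrix in which every column has Hamming weight exactly $2$ (one non-zero entry at each endpoint of the corresponding edge), any $\mathbb{F}_2$-linear dependence among the rows $\{\underline{r}_v : v \in V(G)\}$ corresponds to a subset $S \subseteq V(G)$ such that $\sum_{v \in S} \underline{r}_v = \underline{0}$. This sum vanishes in the column indexed by an edge $(u,w)$ iff exactly zero or two of $\{u,w\}$ lie in $S$, i.e., iff every edge has both endpoints in $S$ or both endpoints outside $S$. Equivalently, $S$ is a union of connected components of $G$. Because $G$ is bipartite with parts $L,R$, the partition $(L,R)$ itself gives a non-trivial dependence $\sum_{v \in L} \underline{r}_v = \sum_{v \in R} \underline{r}_v$ (both equal the all-ones vector over the edges), and I would use the connectedness of $G$ to argue that the only connected components are $G$ itself and $\emptyset$, so the space of dependencies is $1$-dimensional (spanned by $\sum_{v \in V(G)} \underline{r}_v = \underline{0}$). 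This yields $\mathrm{rank}_{\mathbb{F}_2}(H) = N-1$, hence $k = n - (N-1) = n - N + 1$.

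With $k$ in hand, the rate follows by direct substitution: since $n = N(r+1)/2$ we have $N/n = 2/(r+1)$, so
\[
\frac{k}{n} \;=\; 1 - \frac{N-1}{n} \;=\; 1 - \frac{2}{r+1} + \frac{1}{n} \;=\; \frac{r-1}{r+1} + \frac{1}{n},
\]
which is exactly \eqref{RateofRawatConst}. The S-LR property with parameters $(r,t)$ over $\mathbb{F}_2$ is already recorded in the proof sketch of Construction~\ref{con:Raw_bipartite}: the girth $\geq t+1$ condition on $G$ translates into the absence of $\leq t$ linearly dependent columns in $H$ (a minimal dependent set of columns would correspond to a cycle in $G$), and the locality parameter is $r$ because each row of $H$ has weight $r+1$.

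The main (though modest) obstacle is the connected-component characterization of the dependencies of $H$; once that is pinned down cleanly, the rest is a short calculation. A minor subtlety worth double-checking in the writeup is that the argument about dependencies relies on working over $\mathbb{F}_2$ so that ``column vanishes'' is equivalent to ``an even number of endpoints lie in $S$''; the same statement would fail over a field of odd characteristic, but in our setting this is exactly what is needed.
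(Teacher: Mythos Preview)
Your proposal is correct and follows essentially the same approach as the paper: both arguments establish $\mathrm{rank}_{\mathbb{F}_2}(H)=N-1$ by showing that a subset $S\subseteq V(G)$ of rows sums to zero iff no edge crosses between $S$ and $S^c$, i.e., $S$ is a union of connected components, and then invoke connectedness of $G$; the paper phrases this via a minimal-dependency argument while you give the full characterization directly, but the content is the same. One small remark: your aside about the bipartition $(L,R)$ is unnecessary, since $\sum_{v\in L}\underline{r}_v$ is the all-ones vector rather than zero and the resulting dependence $\sum_{v\in L}\underline{r}_v+\sum_{v\in R}\underline{r}_v=\underline{0}$ is just the full-vertex-set dependence you already have---indeed the paper's argument (and its subsequent Construction~\ref{con:modified}) uses only connectedness, not bipartiteness.
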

\begin{proof}
	Let $H$ be the node-edge incidence matrix of the graph $G$.
	From the description of Construction \ref{con:Raw_bipartite}, the matrix $H$ is a p-c matrix of the code $\mathcal{C}$.
	The p-c matrix $H$, has each row of Hamming weight $(r+1)$ and each column of weight $2$.  It follows that the sum of all the rows of $H$ is the all-zero vector.  Thus the rank of $H$ is $\leq N-1$.  
	
	Next, let $\ell$ be the smallest integer such that a set of $\ell$ rows of $H$ add up to the all-zero vector. Let $M$ be the set of nodes in $G$ corresponding to a set of $\ell$ rows $\underline{r}_1,...,\underline{r}_{\ell}$ in $H$ such that $\sum_{i=1}^{\ell} \underline{r}_i=\underline{0}$.  We note that any edge $(u,v)$ in $G$ with $u \in M$ will be such that $v \in M$ and similarly if $v \in M$ then $u \in M$. Let $S=\cup_{i=1}^{\ell}supp(\underline{r}_i)$, it follows that the subgraph of $G$ with vertex set equal to $M$ and the edge set equal to the edges associated to columns of $H$ indexed by $S$ form a connected component of the graph $G$.  But since $G$ is connected, $\ell=|M|=N$ and hence $S=[n]$.  It follows that any set of $N-1$ rows of $H$ is linearly independent.  Hence the rank of $H$ equals $N-1$.   The parameters of $\mathcal{C}$ are thus given by:
	\bean
	\text{block length} \ n & = & \frac{N(r+1)}{2} \\
	\text{dimension} \ k & = & \frac{N(r+1)}{2} - (N-1) \\
	\text{rate} \ R & = & 1 - \frac{2(N-1)}{N(r+1)} =  1 - \frac{2}{r+1} + \frac{2}{N(r+1)}  =  \frac{r-1}{r+1}+\frac{1}{n}.
	\eean
\end{proof}
We note here that while the Construction~\ref{con:Raw_bipartite} made use of regular bipartite graphs, the bipartite requirement is not a requirement as in the argument above, we only used the fact that the graph $G$ is regular.   We collect together the above observations concerning rate and sufficiency of the regular-graph requirement into a (slightly) modified construction. 

\begin{const} \label{con:modified} (modified version of the construction in \cite{RawMazVis})
	Let $G$ be a connected, regular graph of degree $(r+1)$ and of girth $\geq t+1$ having exactly $N$ vertices. Let $H$ be the $(N \times \frac{N(r+1)}{2})$ node-edge incidence matrix of the graph $G$ with each row representing a distinct node and each column representing a distinct edge. The code $\mathcal{C}$ with p-c matrix $H$ is an S-LR code having parameters $(n = \frac{N(r+1)}{2},k=n-(N-1),r,t)$ over $\mathbb{F}_2$. The constrution takes $G$ as input and constructs code $\mathcal{C}$ as output.
\end{const}

For the rest of this section: Let $r,t$ be arbitrary positive integers. Let $G$ be a connected, regular graph of degree $(r+1)$ and of girth $\geq t+1$ having exactly $N$ vertices. Let $\mathcal{C}$ be the S-LR code having parameters $(n = \frac{N(r+1)}{2},k=n-(N-1),r,t)$ over $\mathbb{F}_2$ obtained as the output of the construction \ref{con:modified} with the graph $G$ as input.

Clearly, the rate of the code $\mathcal{C}$ is maximized by minimizing the block length $n=\frac{N(r+1)}{2}$ of the code, or equivalently, by minimizing the number of vertices $N$ in $G$. Thus there is interest in regular graphs of degree $r+1$, having girth $\geq t+1$ with the least possible number of vertices.   This leads us to the Moore bound and Moore graphs. 
%
%
%We will now present a lower bound on the number of vertices of $G'$. This lower bound is called Moore Bound. This lower bound gives a lower bound on block length of $\mathcal{C}'$. By Lemma \ref{thm:rate}, this will give an upper bound on rate of  $\mathcal{C}'$. By algebraic manipulations, it can be seen that this upper bound matches with the upper bound given in Theorem \ref{rate_both}. This upper bound is achieved exactly when the number vertices in $G'$ matches the Moore Bound. These graphs are called Moore graphs. Hence the construction \ref{con:Raw_bipartite} gives rate optimal codes iff $G'$ is a Moore graph. 

\begin{thm} \label{thm:Moore} \textbf{(Moore Bound)} (\cite{DynCageSur})
	The number of vertices $N'$ in a regular graph of degree $r+1$ and girth $\geq t+1$ satisfies the lower bound :
	\bean
	N' \ \geq N_{r,t} & := &  1+\sum_{i=0}^{s}(r+1)r^i,  \ \  \ \ \text{ for $t=2s+2$ even }, \label{eq:Moore:Even} \\
	N' \ \geq N_{r,t} & := & 2\sum_{i=0}^{s}r^i,  \ \ \ \ \ \ \ \ \ \ \ \ \ \ \  \text{ for $t=2s+1$ odd }.  \label{eq:Moore:Odd}
	\eean
\end{thm}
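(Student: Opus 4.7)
The plan is to prove the Moore bound by the classical breadth-first counting argument, treating the two parities of $t$ separately since the girth hypothesis $t+1$ is odd when $t$ is even and even when $t$ is odd. In both cases I fix a ``root'' (a single vertex in one case, an edge in the other), enumerate the vertices reachable from the root within a specified depth in a $(r+1)$-regular graph, and invoke the girth condition to rule out any coincidence among those vertices; the count then becomes a genuine lower bound on $N'$.

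For even $t = 2s+2$, so that the girth is at least $2s+3$, I would fix any vertex $v_0 \in V(G)$ and let $L_d$ denote the set of vertices at graph distance exactly $d$ from $v_0$, for $0 \leq d \leq s+1$. The core claim is that two distinct shortest paths of length at most $s+1$ from $v_0$ cannot share a terminal vertex: concatenating one with the reverse of the other would produce a closed walk of length at most $2(s+1) = 2s+2$, strictly smaller than the girth $2s+3$, from which a cycle of length below the girth can be extracted, a contradiction. Consequently $|L_0| = 1$, and for $1 \leq d \leq s+1$ each vertex of $L_{d-1}$ contributes exactly $r$ fresh vertices to $L_d$ (its single back-neighbour being the unique edge toward $L_{d-2}$), so $|L_d| = (r+1)r^{d-1}$. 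Summing yields
\begin{equation*}
N' \;\geq\; 1 + \sum_{i=0}^{s}(r+1)r^i.
\end{equation*}

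For odd $t = 2s+1$, so that the girth is at least $2s+2$, I would instead fix any edge $(u_0,v_0)$ of $G$ and define $L^u_d$ (respectively $L^v_d$) as the set of vertices reachable from $u_0$ (respectively $v_0$) by a shortest walk of length $d$ whose first edge is not $(u_0,v_0)$. Thus $|L^u_0| = 1$, $|L^u_1| = r$, and inductively $|L^u_d| = r^d$ for $0 \leq d \leq s$, and symmetrically for the $v$-side. Two kinds of collisions must be ruled out: a same-side collision in $L^u_a$ and $L^u_b$, $a,b \leq s$, would produce a cycle through $u_0$ of length at most $2s < 2s+2$; a cross-side collision $w \in L^u_a \cap L^v_b$ with $a+b \leq 2s$ would combine with the edge $(u_0,v_0)$ to give a closed walk of length at most $2s+1$, from which a cycle of length strictly less than $2s+2$ can be extracted. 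Either is impossible by the girth bound, so the two BFS trees contribute disjointly and
\begin{equation*}
N' \;\geq\; 2\sum_{i=0}^{s} r^i.
\end{equation*}

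The one subtlety worth flagging is the extraction of a \emph{genuine cycle} from the closed walks used in the collision arguments (particularly in the cross-side case for odd $t$): one must verify that the walk is non-degenerate and that any shortening to a simple cycle still has length strictly less than the girth, rather than accidentally collapsing. Once this bookkeeping is in place the remainder is a routine BFS count, so I do not anticipate any real obstacle.
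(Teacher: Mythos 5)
The paper does not prove this theorem; it cites the survey \cite{DynCageSur} and uses the bound as a black box. So there is no internal proof to compare against, and your task is simply to supply the classical argument. Your BFS (distance-ball) counting proof is indeed the standard one, and the overall structure is correct: root at a vertex for odd girth (even $t$), root at an edge for even girth (odd $t$), use the girth hypothesis to show the explored levels are pairwise disjoint, and sum.

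There is one bookkeeping slip in the even-$t$ case that you should fix. You write that for every $1 \le d \le s+1$, each vertex of $L_{d-1}$ contributes exactly $r$ fresh vertices to $L_d$, because ``its single back-neighbour'' is the parent edge. This is correct only for $d \ge 2$. For $d=1$ the root $v_0$ has no parent, so it contributes all $r+1$ of its neighbours to $L_1$, giving $|L_1| = r+1$, not $r$. Your stated formula $|L_d| = (r+1)r^{d-1}$ is the right one and does incorporate the $r+1$, but as written the justifying sentence would instead give $|L_d| = r^d$. You should split out the base case $|L_1| = r+1$ explicitly and then apply the ``$r$ fresh children'' recursion only for $d \ge 2$. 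Similarly, in the odd-$t$ case your same-side/cross-side collision analysis and the counts $|L^u_d| = |L^v_d| = r^d$ are correct; the subtlety you flag about extracting a genuine simple cycle from the closed walks is real but routine — one only needs that the walk is non-trivial and visits some vertex twice, which the ``distinct paths'' hypothesis guarantees, and any simple cycle contained in a closed walk of length $\ell$ has length at most $\ell$. With the $d=1$ base case fixed, the proof is complete.
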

\begin{defn} \label{def:Mooregraph} \textbf{(Moore graphs)}
	A regular graph with degree $r+1$ with girth atleast $t+1$ with number of vertices $N'$ satisfying $N' = N_{r,t}$ is called a Moore graph.
\end{defn}

\begin{lem} \label{MooreOptimality}
	%	Let $R(r,t)$ be the upper bound on rate of an S-LR code with parameters $r,t$ constructed using Construction~\ref{con:modified}. Then
	The rate $R$ of the code $\mathcal{C}$ with block length $n=\frac{N(r+1)}{2}$ satisfies:
	\bea
	R = \frac{r-1}{r+1}+\frac{2}{N(r+1)} & \leq & \frac{r-1}{r+1}+\frac{2}{N_{r,t}(r+1)} \label{eq:rate_bound_Moore}
	\eea
	
	The inequality \eqref{eq:rate_bound_Moore} will become equality iff $G$ is a Moore graph.
\end{lem}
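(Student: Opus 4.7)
The plan is to observe that this lemma is almost an immediate consequence of two facts already stated in the excerpt: the exact rate formula from Theorem~\ref{thm:rate} (equivalently, from Construction~\ref{con:modified}), and the Moore bound from Theorem~\ref{thm:Moore}. There will be no new combinatorial work required; the proof is essentially algebraic bookkeeping plus an appeal to the definition of a Moore graph.

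First, I would establish the equality $R = \frac{r-1}{r+1} + \frac{2}{N(r+1)}$. By Construction~\ref{con:modified}, the code $\mathcal{C}$ has block length $n = \frac{N(r+1)}{2}$ and dimension $k = n - (N-1)$, so
\begin{eqnarray*}
R \ = \ \frac{k}{n} & = & 1 - \frac{N-1}{n} \ = \ 1 - \frac{2(N-1)}{N(r+1)} \\
& = & \frac{(r+1)N - 2(N-1)}{N(r+1)} \ = \ \frac{(r-1)N + 2}{N(r+1)} \ = \ \frac{r-1}{r+1} + \frac{2}{N(r+1)}.
\end{eqnarray*}
This matches the expression in the statement; note that $\frac{2}{N(r+1)} = \frac{1}{n}$, which is consistent with the rate formula \eqref{RateofRawatConst} of Theorem~\ref{thm:rate}.

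Second, I would invoke the Moore bound (Theorem~\ref{thm:Moore}). Since $G$ is a regular graph of degree $r+1$ and girth $\geq t+1$ with $N$ vertices, we have $N \geq N_{r,t}$. As $r+1 > 0$, this yields $\frac{2}{N(r+1)} \leq \frac{2}{N_{r,t}(r+1)}$, and adding $\frac{r-1}{r+1}$ to both sides gives the desired upper bound on $R$.

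Finally, for the equality condition, I would argue that equality in \eqref{eq:rate_bound_Moore} is equivalent to $\frac{2}{N(r+1)} = \frac{2}{N_{r,t}(r+1)}$, i.e.\ $N = N_{r,t}$. By Definition~\ref{def:Mooregraph}, this is precisely the condition that $G$ be a Moore graph of degree $r+1$ and girth $\geq t+1$, completing the proof. The main (and essentially only) subtlety is to confirm that there is no slack introduced between the lower bound on $N$ and the definition of a Moore graph; since both refer to the same quantity $N_{r,t}$, there is nothing further to check, and no step in this plan presents a real obstacle.
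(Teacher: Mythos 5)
Your proposal is correct and follows exactly the same route as the paper's own (very brief) proof, which simply invokes Theorem~\ref{thm:rate} for the rate formula and the Moore bound of Theorem~\ref{thm:Moore} for the inequality. You have just supplied the straightforward algebraic expansion and the equality case explicitly.
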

\begin{proof}
	The Lemma follows from Theorem \ref{thm:rate} and the Moore bound given in Theorem \eqref{thm:Moore}.
\end{proof}
It turns out interestingly, that the upper bound on rate of the code $\mathcal{C}$ given by the expression $\frac{r-1}{r+1}+\frac{2}{N_{r,t}(r+1)}$ (Lemma \ref{MooreOptimality}) is numerically, precisely equal to the right hand side of inequality \eqref{Thm1} (for $t$ even), \eqref{Thm2} (for $t$ odd). Note that the inequality \eqref{Thm1} (for $t$ even), \eqref{Thm2} (for $t$ odd) directly gives an upper bound on rate of an S-LR code. As a result, we have the following Corollary.
\begin{cor}
	The S-LR code $\mathcal{C}$ is a rate-optimal code iff $G$ is a Moore graph.
	%	Let $\mathcal{C}$ be the code constructed using Construction~\ref{con:modified}. Then $\mathcal{C}$ is rate optimal as an S-LR code iff the regular graph appearing in the construction, $G$ is a Moore graph. 
\end{cor}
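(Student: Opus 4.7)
My plan is to deduce the corollary almost immediately from Lemma~\ref{MooreOptimality} combined with the numerical identity asserted in the remark preceding the corollary, namely that
\[
\frac{r-1}{r+1}+\frac{2}{N_{r,t}(r+1)} \;=\; \begin{cases} \dfrac{r^{s+1}}{r^{s+1}+2\sum_{i=0}^{s}r^{i}}, & t=2s+2 \text{ even},\\[4pt] \dfrac{r^{s+1}}{r^{s+1}+2\sum_{i=1}^{s}r^{i}+1}, & t=2s+1 \text{ odd},\end{cases}
\]
which are exactly the right-hand sides of \eqref{Thm1} and \eqref{Thm2}. Given this identity, both directions are one-line consequences.

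For the reverse direction ($G$ Moore $\Rightarrow \mathcal{C}$ rate-optimal), I would use the equality clause of Lemma~\ref{MooreOptimality}: a Moore graph has $N=N_{r,t}$, and so $\mathcal{C}$ attains rate $\frac{r-1}{r+1}+\frac{2}{N_{r,t}(r+1)}$, which by the identity equals the universal upper bound on the rate of any S-LR code given by Theorem~\ref{rate_both}. Hence $\mathcal{C}$ is rate-optimal. For the forward direction ($\mathcal{C}$ rate-optimal $\Rightarrow$ $G$ Moore), I would argue contrapositively (or directly) that if $G$ were not a Moore graph, then by Definition~\ref{def:Mooregraph} and Theorem~\ref{thm:Moore} we would have $N>N_{r,t}$, so that the rate $\frac{r-1}{r+1}+\frac{2}{N(r+1)}$ would be strictly less than $\frac{r-1}{r+1}+\frac{2}{N_{r,t}(r+1)}$, which equals the maximum rate of an S-LR code; hence $\mathcal{C}$ could not be rate-optimal, a contradiction.

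The only step that is not entirely cosmetic is verifying the numerical identity. I would handle this by routine algebra, using $\sum_{i=0}^{s} r^{i}=\frac{r^{s+1}-1}{r-1}$ to rewrite $N_{r,t}$ in closed form: for $t=2s+2$ even, $N_{r,t}=\frac{r^{s+2}+r^{s+1}-2}{r-1}$, whence
\[
\frac{r-1}{r+1}+\frac{2}{N_{r,t}(r+1)}=\frac{(r-1)(r^{s+2}+r^{s+1})}{(r+1)(r^{s+2}+r^{s+1}-2)}=\frac{r^{s+1}(r-1)}{r^{s+1}(r-1)+2(r^{s+1}-1)},
\]
which equals the RHS of \eqref{Thm1} after dividing numerator and denominator by $r-1$. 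The odd case is analogous, with $N_{r,t}=\frac{2(r^{s+1}-1)}{r-1}$. This is the only routine computation involved, and I do not foresee any genuine obstacle; the corollary is essentially a bookkeeping consequence of Lemma~\ref{MooreOptimality}, Theorem~\ref{rate_both}, and the Moore bound.
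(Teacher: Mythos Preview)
Your proposal is correct and follows exactly the approach implicit in the paper: combine Lemma~\ref{MooreOptimality} with the numerical identity (stated just before the corollary) equating $\tfrac{r-1}{r+1}+\tfrac{2}{N_{r,t}(r+1)}$ to the right-hand sides of \eqref{Thm1}--\eqref{Thm2}, so that $\mathcal{C}$ attains the universal rate bound of Theorem~\ref{rate_both} iff $N=N_{r,t}$, i.e., iff $G$ is a Moore graph. Your verification of the identity via the geometric-series closed forms for $N_{r,t}$ is fine and is the only nontrivial step.
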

\begin{cor}
	Let $(r,t)$ be such that a Moore graph exists. If $G$ is the Moore graph then the code $\mathcal{C}$ is not only rate optimal, it also has the smallest block length possible for a binary rate-optimal code for the given parameters $r,t$. 
	%	Let $(r,t)$ be such that a Moore graph exists.  Let $\mathcal{C}$ be the code constructed using Construction~\ref{con:modified} with $G$ as a Moore graph. Then the resultant code $\mathcal{C}$ is not only rate optimal, it also has the smallest block length possible of a rate-optimal code. 
\end{cor}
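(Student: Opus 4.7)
The plan is to establish a matching lower bound on the block length $n$ of any binary rate-optimal S-LR code with parameters $(r,t)$ and observe that this lower bound coincides with the block length $N_{r,t}(r+1)/2$ of the code obtained from Construction~\ref{con:modified} when $G$ is the Moore graph. Thus achievability is immediate from the preceding corollary, and what remains is the lower bound.

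I would begin by invoking the structural result from Chapter~\ref{ch:SeqBound}: any binary rate-optimal S-LR code has a parity-check matrix in the staircase form \eqref{eq:Hmatrixteven_ch3} (even $t$) or \eqref{eq:Hmatrixtodd_ch3} (odd $t$). As developed in Section~\ref{sec:graphical_rep}, this matrix — after appending the all-ones top row to form $H^{(a)}$ — is the node-edge incidence matrix of the tree-like graph $\mathcal{G}_0$ in which $V_\infty$ has degree $a_0$, every other node has degree $r+1$, and the sizes of the $V_i$ are determined by $a_0$ alone: $|V_i|=a_0 r^i$ for $i \leq s$ in the even case, and $|V_i|=a_0 r^i$ for $i\leq s-1$ together with $|V_s|=a_0 r^s/(r+1)$ in the odd case.

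Next, since $n$ equals the number of edges of $\mathcal{G}_0$, a handshake count gives $n = a_0(r^{s+1}-1)/(r-1) + a_0 r^{s+1}/2$ when $t=2s+2$ is even, and $n = a_0(r^{s+1}-1)/(r-1)$ when $t=2s+1$ is odd; both expressions are strictly increasing in $a_0$. The lower bound $a_0 \geq r+1$ is supplied by Theorem~\ref{LowerBoundOna0} (and in the odd case is reinforced by Theorem~\ref{thm:t_odd_azero}). Substituting $a_0 = r+1$ into either expression and simplifying — using $N_{r,t} = 1 + (r+1)(r^{s+1}-1)/(r-1)$ in the even case and $N_{r,t} = 2(r^{s+1}-1)/(r-1)$ in the odd case — yields $n \geq N_{r,t}(r+1)/2$ in both parities, matching the block length produced by Construction~\ref{con:modified} from a Moore graph.

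The step I expect to require the most care is checking that the bound $a_0 \geq r+1$ can in fact be attained, i.e., that the resulting lower bound is tight under the hypothesis that a Moore graph exists. Setting $a_0=r+1$ forces $\mathcal{G}_0$ to be $(r+1)$-regular with girth $\geq t+1$ (by Theorem~\ref{thm:girthreq}) and, by the tree-like vertex count, with exactly $N_{r,t}$ vertices, so the Moore bound (Theorem~\ref{thm:Moore}) pins down $\mathcal{G}_0$ as a Moore graph. Hence achievability is equivalent to the existence of a Moore graph of degree $r+1$ and girth $t+1$, which is precisely the hypothesis of the corollary, and the code of Construction~\ref{con:modified} built from that Moore graph realizes the minimum $N_{r,t}(r+1)/2$.
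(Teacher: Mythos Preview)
Your proposal is correct and follows essentially the same approach as the paper: both argue that a binary rate-optimal code must admit the graph $\mathcal{G}_0$, invoke Theorem~\ref{thm:girthreq} to force girth $\geq t+1$, apply Theorem~\ref{LowerBoundOna0} to obtain $a_0\geq r+1$, use strict monotonicity of the edge count in $a_0$, and match the resulting minimum block length $N_{r,t}(r+1)/2$ against the Moore-graph construction. Your version is somewhat more explicit in writing out and verifying the edge-count formulas and in observing that achievability of $a_0=r+1$ is precisely equivalent to the Moore-graph hypothesis, but the logical skeleton is identical.
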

\begin{proof}
	From the discussions in Section \ref{sec:graphical_rep}, a rate-optimal code must have the graphical representation \gzero.\ From Theorem \ref{thm:girthreq}, \gzero\ must have girth $\geq t+1$, if the rate-optimal code is over $\mathbb{F}_2$. Hence from Theorem \ref{LowerBoundOna0}, the number $a_0$ of vertices in $V_0$ satisfies the lower bound 
	\bea
	a_0 & \geq & r+1. \label{eq:a0}
	\eea
	Since the value of $a_0,r$ through numerical computation  gives the number of edges in \gzero, it can be seen that \gzero\ with girth $\geq t+1$ and $a_0=(r+1)$ leads to a binary rate-optimal code having block length equal to $\frac{N_{r,t} (r+1)}{2}$. The code $\mathcal{C}$ also has block length equal to $\frac{N_{r,t} (r+1)}{2}$ when $G$ is a Moore graph. Hence the corollary follows from \eqref{eq:a0} and noting that the number of edges in the graph \gzero\ grows strictly monotonically with $a_0$ and hence $a_0=r+1$ correspond to least possible block length.
\end{proof}

%\begin{note}
%Since in code $\mathcal{C}'$, there is only one linearly dependent parity check (cooreponding to a node say $v$) and removing it will create $r+1$ degree one nodes in Tanner graph, the code $\mathcal{C}'$ is rate optimal only when the graph $G'$ has structure of \gzero\ with $a_0=r+1$ i.e., all nodes in $G'$ must be present exactly once when expanding that neighbour hood structure of the node $v$ in a tree-like fashion upto a depth of $s+1$ starting from depth 0. Moore graphs has exactly this tree-like structure and hence rate-optimal. A graph which is not a Moore graph, will have more nodes even after expanding the neighbourhood structure of the node $v$ up to depth of $s+1$ and hence does not have the strcuture of \gzero\ and hence not rate-optimal.
%\end{note}
%
%\begin{note}
%	From Lemma \ref{MooreOptimality}, the code $\mathcal{C}'$ is rate-optimal iff $G'$ is a Moore graph.
%	It can also be shown that these rate-optimal codes arising from Moore graph has the smallest possible block length (This is because $a_0 \geq r+1$ by Theorem \ref{LowerBoundOna0} and $a_0 = r+1$ in case of code $\mathcal{C}'$ based on Moore graph.). But Moore graphs (with degree = $r+1$, girth = $t+1$) are shown to not exist for any $t \notin \{2,3,4,5,7,11\}$ for any $r \geq 2$ (see \cite{DynCageSur}). Thus for most cases of $r$ and $t$ the codes constructed in construction  \ref{con:Raw_bipartite} are rate-sub-optimal.
%\end{note}
In the following we fix $t=4,r=6$. An example Moore graph with $t=4,r=6$, known as the Hoffman-Singleton graph (with the removal of a single vertex corresponding to eliminating an unneeded, linearly dependent parity check) is shown in the Fig.~\ref{fig:Moore}. If  $G$ is this Moore graph then the code \calc is a rate-optimal code with least possible block-length with $t=4,r=6,n=175$.

\begin{figure}[ht]
	\centering
	\includegraphics[scale = 0.4]{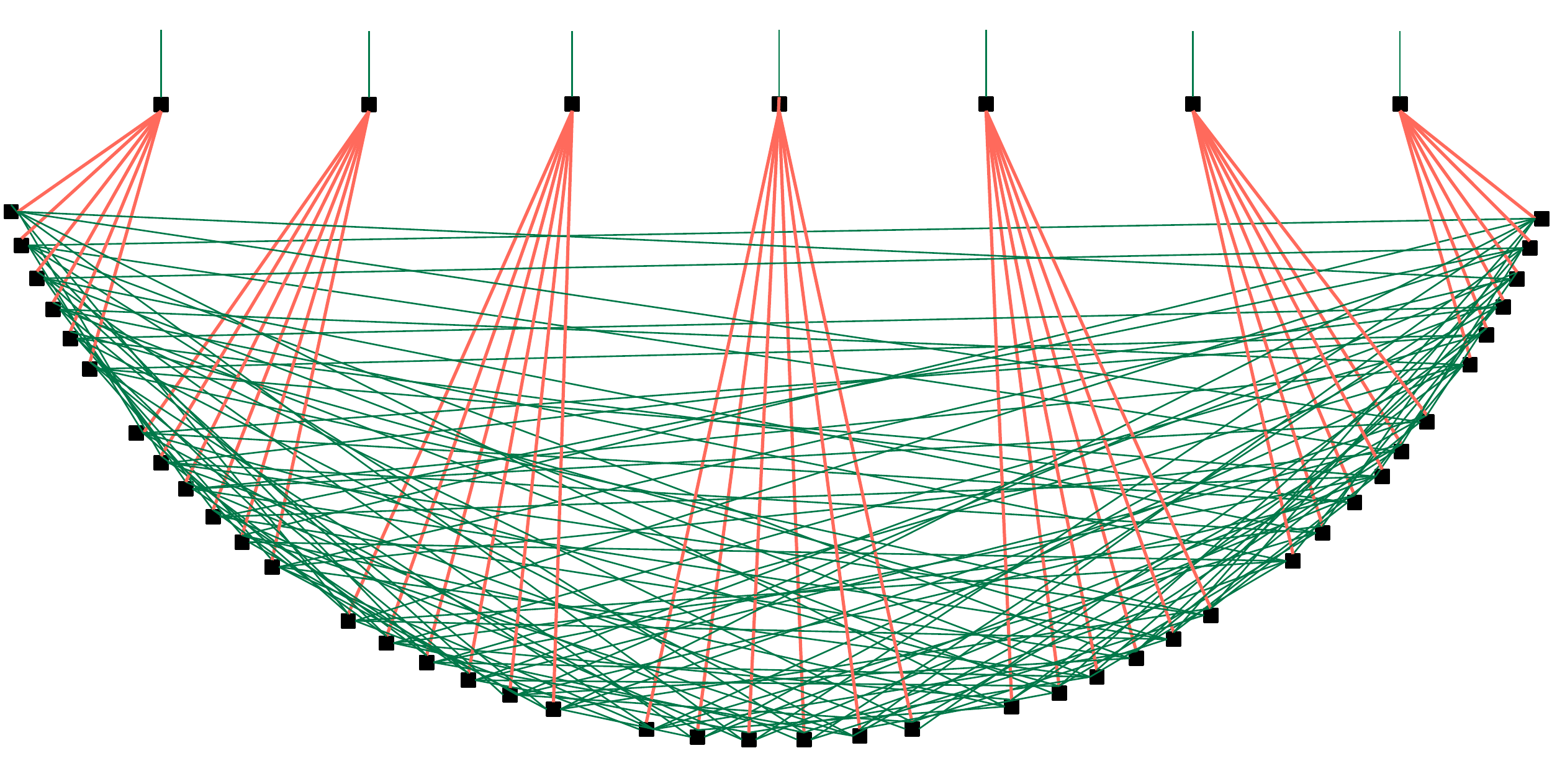}
	\caption{An example Moore graph for $t=4,r=6$ called the Hoffman- Singleton graph is shown in the figure. If  $G$ is this Moore graph then the code \calc is a rate-optimal code with least possible block-length with $t=4,r=6,n=175$.}
	\label{fig:Moore}
\end{figure}

\begin{appendices}
	
	\chapter{Coloring \gbase\ with $(r+1)$ colours} \label{Gbasecolouring_ch4}
	
	In this section of the appendix, we show how to construct the base graph \gbase\ in such a way that its edges can be colored using $(r+1)$ colors.    In our construction, for $t$ odd, we set $a_0=(r+1)$ which is the smallest possible value fo $a_0$ by Theorem \ref{thm:t_odd_azero}.  For the case of $t$ even, we set $a_0=4$.   We begin with a key ingredient that we make use of in the construction, namely, that the edges of a regular bipartite graph of degree $d$ can be colored with $d$ colors. 
	
	\begin{thm} \label{bipartiteColouring}
		The edges of a regular bipartite graph $G$ of degree $d$ can be coloured with exactly $d$ colours such that each edge is associated with a colour and adjacent edges of the graph does not have the same colour. 
	\end{thm}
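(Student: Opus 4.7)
The plan is to prove this classical result (known as König's edge coloring theorem for bipartite graphs) by induction on the degree $d$. The base case $d=1$ is immediate: a $1$-regular bipartite graph is itself a perfect matching, and its edges may all be assigned the single available color since no two edges share a vertex. For the inductive step, assume the result holds for $(d-1)$-regular bipartite graphs. Given a $d$-regular bipartite graph $G$, the plan is to extract a perfect matching $M \subseteq E(G)$, assign color $d$ to every edge in $M$, remove these edges, and observe that $G \setminus M$ is then a $(d-1)$-regular bipartite graph; by the induction hypothesis its edges can be colored with colors $1, \ldots, d-1$ such that no two adjacent edges receive the same color. Combining the two colorings yields the required $d$-coloring of $E(G)$.

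The one nontrivial ingredient is the existence of a perfect matching in any $d$-regular bipartite graph $G$ with bipartition $(X, Y)$. I would establish this via Hall's marriage theorem, by verifying that for every $S \subseteq X$, the neighborhood $N(S) \subseteq Y$ satisfies $|N(S)| \geq |S|$. The argument is a simple double count: the number of edges incident to vertices in $S$ equals $d|S|$, and since every such edge has its other endpoint in $N(S)$, the same number of edges is at most $d|N(S)|$; hence $|N(S)| \geq |S|$. Note also that $d$-regularity forces $|X| = |Y|$ (both counts of $|E(G)|$ divided by $d$), so Hall's condition is exactly what is needed to guarantee a perfect matching from $X$ to $Y$.

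The main ``obstacle'' is really just the clean invocation of Hall's theorem, which is standard. Once the matching $M$ is in hand, the induction proceeds cleanly because deleting a perfect matching from a $d$-regular bipartite graph removes exactly one edge at each vertex, yielding a $(d-1)$-regular bipartite graph that remains bipartite with the same bipartition. Therefore the induction carries through, and exactly $d$ colors suffice (and are clearly necessary, since any vertex is incident to $d$ pairwise-adjacent edges that must receive distinct colors).
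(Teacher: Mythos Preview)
Your proposal is correct and follows essentially the same approach as the paper: the paper also invokes Hall's theorem to extract a perfect matching from the $d$-regular bipartite graph, removes it to obtain a $(d-1)$-regular bipartite graph, and iterates, assigning one color to each of the $d$ resulting matchings. Your write-up is in fact more detailed than the paper's, since you explicitly verify Hall's condition via the double-counting argument and note that $|X|=|Y|$, whereas the paper simply cites Hall's theorem as guaranteeing the matching.
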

	\begin{proof}
		By Hall's Theorem (1935) \cite{dieRich}, there exists a perfect matching in the $d$-regular bipartite graph $G$. We first identify this perfect matching and then remove the edges corresponding to this matching from $G$ to form a $(d-1)$-regular bipartite graph to which we can once again apply Hall's Theorem and so on until we have partitioned the edges of the bipartite graph $G$ into the disjoint union of $d$ perfect matchings.  
		To conclude the proof, we simply choose $d$ different colors say $\{1,2,...,d\}$ and color each edge in $i$th perfect matching with color $i$, $\forall 1 \leq i \leq d$. 
		%	 and apply Hall's Marriage Theorem again to find a perfect matching in this new graph. By such repeated application of Hall's Marriage Theorem, we can find $d$ . Now colouring the edges of $i$th  perfect matching by colour $i$, we can colour the $d$ regular bipartite graph with $d$ colours as stated in the theorem.
	\end{proof}
	
	\section{The Construction of \gbase\ for $t$ Odd}
	
	The aim here is to show that we can construct the base graph \gbase\ and color the edges of it using exactly $(r+1)$ colors such that adjacent edges does not have the same color.  
	
	In the case of $t$ odd, we set $a_0=r+1$.  From Remark~\ref{note:bipartite_count}, it follows that if we set $a_0=(r+1)$, then 
	\bean
	|V_{s-1}|r & = & |V_{s}|(r+1),
	\eean
	and hence from \eqref{eq:Vsminus1_ch4}, \eqref{eq:Vs_ch4}, it is possible to connect nodes in $V_{s-1}$ to nodes in $V_s$ so that \gsmone\ is a bipartite graph with the two sets of nodes in \gsmone\  being equal to $V_{s-1},V_s$ respectively where each node in $V_{s-1}$ is of degree $r$ and each node in $V_{s}$ is of degree $(r+1)$. Since the graph \gbase\ is completely specified once \gsmone\ and $a_0$ are specified as mentioned in Section \ref{g0specification}, \gbase\ can be constructed with $a_0=r+1$. Let us recall that if we add a node $V_{\infty}$ to \gbase\ and connect $V_{\infty}$ to all the nodes in $V_0$, we will recover the graph \gzero.  It is easily seen that \gzero\ is an $(r+1)$-regular graph.  By grouping together nodes in alternate layers in \gzero\ i.e., by letting $U_1 = V_{\infty} \cup V_1 \cup V_3 ...$ and $U_2 = V_{0} \cup V_2 \cup V_4 ...$, it can be verified that \gzero\ is in fact, an $(r+1)$-regular {\em bipartite} graph with node-set $U_1$ on the left and node-set $U_2$ to the right.  Hence by Theorem \ref{bipartiteColouring}, the edges of the graph \gzero\ and hence the edges of the graph \gbase\ can be colored with $r+1$ colors.
	
	\section{The Construction of \gbase\ for $t$ Even}
	
	Let $T_v$ be the tree (a subgraph of \gbase) with root node $v \in V_0$ formed by all the paths of length at most $s$ starting from $v$ i.e., $T_v$ is the subgraph of \gbase\ induced by the vertices which are at distance atmost $s$ from $v$ where we remove in this induced subgraph all the edges which are in $E_{s+1}$. 
	%Take a graph $G=\gbase$ with no edges between nodes in $V_s$ with edges between $V_i$ and $V_{i+1}$ retained as such $\forall 0 \leq i \leq s-1$. It can be seen that $G$ is a union of trees with each tree rooted at distinct nodes in $V_0$. Let the tree rooted at $v \in V_0$ in $G$ be $T_v$.
	%Let the $r$ edges incident on $v$ in $T_v$ be coloured with colours $[r]$. This colouring of edges attached to $v$ will force the colour of the rest of the edges of $T_v$ if we restrict ourselves to $r+1$ colours $[r+1]$.
	Let $V(T_v)$ be the vertex set of $T_v$. The nodes $V(T_v) \cap V_i$ are at depth $i$ in $T_v$ with the root node $v$ at depth 0.  We now color the edges of the tree $T_v$ with the $r+1$ colors $\{1,...,r+1\}$. It is clear that such a coloring of edges of $T_v$ can be done.  There are $r$ edges $\{e_1,...,e_r\}$ incident on $v$. Let the color of $e_i$ be $i$, $\forall i \in [r]$. Hence there is no edge of color $r+1$ incident on $v$. 
	\ben
	\item Let $j$ be one of the $r$ colors, $1 \leq j \leq r$. Let $X^j_i$ be the largest subset of $V(T_v) \cap V_i$ (nodes at depth $i$) where each node in $X^j_i$ is connected by an edge of color $j$ to a distinct node at depth $(i-1)$, i.e., a node in $V(T_v) \cap V_{i-1}$. Let $|X^j_i| = x^j_i$. Let $Y_i$ be the largest subset of $V(T_v) \cap V_i$ (nodes at depth $i$)   where each node in $Y_i$ is connected by an edge of color $(r+1)$ to a distinct node in $V(T_v) \cap V_{i-1}$ (nodes at depth $i-1$). Let $|Y_i|=y_i$. It is clear that $x^1_i = x^2_i = ... = x^r_i$. We set $x^j_i = x_i$. It can be verified that:
	\bean
	y_i & = & \left\{ \begin{array}{rl} x_i-1 & \text{if $i$ is odd}, \\  x_i+1 & \text{if $i$ is even} \end{array} \right. .
	\eean
	Hence $|V(T_v) \cap V_i| = r^i = x_i r + y_i$. It follows that 
	\bean
	x_i & = & \left\{ \begin{array}{rl} \frac{r^i+1}{r+1} & \text{if $i$ is odd}, \\  \frac{r^i-1}{r+1} & \text{if $i$ is even} \end{array} \right. .
	\eean
	Since the set $X^j_s$ depends on $v$, from now on we denote it as $X^j_{s,v}$. Since the set $Y_s$ depends on $v$, from now on we denote it as $Y_{s,v}$.    
	%Hence for $i$ odd, $x_i =\frac{r^i+1}{r+1}$ and for $i$ even, $x_i=\frac{r^i-1}{r+1}$. 
	\item Next, set $a_0=4$.  Let $V_0=\{v_1,v_2,v_3,v_4\}$. Hence there are $4$ such trees $T_{v_1},...,T_{v_4}$. Color the edges of the tree $T_{v_{\ell}}$ with $r+1$ colors as before such that there is no edge of color $r+1$ incident on $v_{\ell}$, $\forall \ell \in [4]$. An edge in $E_{s+1}$ which is incident on any node in $X^j_{s,v_{\ell}} (\subseteq V_s)$ cannot be colored with color $j$ but every color from $[r+1] \setminus \{ j\}$ can be used to color it, $\forall \ell \in [4]$, $\forall j \in [r]$.  Similarly an edge in $E_{s+1}$ which is incident on any node in $Y_{s,v_{\ell}} (\subseteq V_s)$ cannot be colored with color $(r+1)$, $\forall \ell \in [4]$. We can connect the set of nodes $\cup_{\ell=1}^{4}X^j_{s,v_{\ell}}$ of size $4x_s$ (or the set of nodes $\cup_{\ell=1}^{4}Y_{s,v_{\ell}}$ nodes of size $4y_s$) to form bipartite graph of degree $r$.  This is possible because we can first construct a regular graph of degree $r$ with $2x_s$ or $2y_s$ nodes since $\min\{2x_s,2y_s\} \geq (r+1)$, and then create two copies of the vertex set of this regular graph to create the left and right nodes of a bipartite graph $G$.  The edges of the bipartite graph can then be formed by connecting vertices in accordance with the regular graph, i.e., if nodes $w_1$ and $w_2$ were connected in the regular graph, then node $w_1$ on the left is connected to node $w_2$ on the right in the bipartite graph and vice versa. Hence a bipartite graph $G$ of degree $r$ with $4x_s$ or $4y_s$ nodes is constructed. Now connect the $4x_s$ nodes in $\cup_{\ell=1}^{4}X^j_{s,v_{\ell}}$ in accordance with this bipartite graph $G$ which also has $4x_s$ nodes and color the edges by colors from $[r+1]\setminus \{j\}$, $\forall j \in [r]$. This is possible by Theorem \ref{bipartiteColouring}. Similarly connect the $4y_s$ nodes in $\cup_{\ell=1}^{4}Y_{s,v_{\ell}}$ in accordance with this bipartite graph $G$ which also has $4y_s$ nodes and color the edges of this graph by colors from $[r+1]\setminus \{r+1\}$. This is again possible by Theorem \ref{bipartiteColouring}. The edges of these $(r+1)$ bipartite graphs form the edge set $E_{s+1}$. The construction is complete as we have constructed \gbase\ and connected the nodes in $V_s = (\cup_{j=1}^{r} \cup_{\ell=1}^{4}X^j_{s,v_{\ell}}) \cup (\cup_{\ell=1}^{4}Y_{s,v_{\ell}})$ according to an edge set $E_{s+1}$ and colored the edges of \gbase\ using $(r+1)$ colors.      
	\een
\end{appendices}

\chapter{Codes with Availability} \label{ch:avail}

Codes with availability are codes which are such that a desired code symbol that is unavailable, can be reconstructed in $t$ different ways from the remaining code symbols, i.e., reconstructed by employing $t$ disjoint recovery sets, each recovery set being of size $\leq r$.   The desired symbol here will belong to either a collection of code symbols that form an information set, or else, the entire set of code symbols. The corresponding codes will be termed as {\em information-symbol availability} or {\em all-symbol availability} codes respectively.  The motivation here is that the unavailable symbol could in practice represent `hot' data, i.e., data such as a newly-released music file, that is in high demand.   

This requirement can be phrased in terms of the  parity check (p-c) matrix of the code.  Corresponding to any desired symbol $c_i$, the rowspace of the p-c matrix will contain $t$ rows whose support is of the form $\{ \{i\} \cup A_i \} \subseteq [n]$, $|A_i| \leq r$, $A_i \cap A_j = \phi, i \neq j$.   Important problems on this topic are determination of the rate $R$ and minimum distance $d_{\min}$ of such a code for given $(r,t)$, as well as providing constructions of codes that are efficient with respect to these two metrics.  It turns out that some of the best-performing codes satisfy a more stringent criterion, which we term here as {\em strict availability}.  A strict availability code is an availability code possessing p-c matrix $H$ with a sub matrix $H_a$ formed by a subset of rows of $H$ with $H_a$ such that each row has Hamming weight equal to $(r+1)$ and each column has Hamming weight $t$. The rate and minimum distance of codes with strict availability are also explored. 

Codes with availability are similar to codes with single-step, majority-logic decoding.  The only difference is that in the case of an availability code, each recover set is restricted to have weight $\leq r$ whereas under majority logic decoding, no such constraint is present.  

\paragraph{Organization of the Chapter} The chapter is organized as follows.  We begin by formally defining codes with availability and strict availability (SA) in Section~\ref{sec:avail_defns}.  Section \ref{sec:constructions} examines constructions for availability codes in the literature.  Sections~\ref{sec:avail_alph_dep_ch5} and \ref{sec:avail_alph_ind_ch5} examine alphabet-size dependent and alphabet-size independent bounds respectively. Section \ref{sec:Asym_ch5} examines optimal tradeoff between rate and fractional minimum distance for a special class of availability codes. A lower bound on block length of codes with strict availability is given in Section~\ref{sec:strict_avail_ch5}.  The final section, Section~\ref{sec:avail_summary} presents a summary.  

\paragraph{Contributions}  Contributions of the thesis on the topic of availability codes include: 
\ben
\item Improved field-size dependent upper bounds (Section \ref{sec:NewAlphBounddmin_ch5}) on the minimum distance $d_{\min}$ (for given $(n,k,r,t)$) and dimension (for given $(n,d_{\min},r,t)$) of an availability code.
\item Improved field-size independent upper bound (Section  \ref{sec:Newdminbound_ch5}) on the minimum distance of an availability code and improved upper bound on rate (Section \ref{sec:SubSecTranspose_ch5}) of an SA code.
\item Exact expression for maximum possible fractional minimum distance (Section \ref{sec:Asym_ch5}) for a given rate for a special class of availability codes as $\ell \rightarrow \infty$ where each code in this special class is a subcode or subspace of direct product of $\ell$ copies of an availability code with parameters $r,t$ for some $\ell$.
\item A lower bound on block length of an SA code (Section \ref{sec:strict_avail_ch5}) and showed that SA codes achieving the lower bound on block length exists iff a Balanced Incomplete Block Design (BIBD) with certain parameters exist. We then present two SA codes based on two well-known BIBDs. The codes based on the two BIBDs have maximum possible rate (for the minimum possible block length) assuming the validity of a well-known conjecture in the literature of BIBDs.  
\een

\section{Definitions} \label{sec:avail_defns} 

In this chapter, we will restrict ourselves to linear codes.   Throughout the chapter, a codeword in an $[n,k]$ linear code will be represented by $\underline{c}=[c_1,c_2,\cdots, c_{n-1},c_n ]$ where $c_i$ denotes the $i$th code symbol. Throughout this chapter, we will use the term weight to refer to Hamming Weight.	 

\begin{defn}
	An $(n,k,r,t)$ \noindent all-symbol (AS) {\em availability} code over a finite field $\fq$, is an $[n,k]$ linear code $\mathcal{C}$ over $\fq$ having the property that in the event of a single but arbitrary erased code symbol $c_i$, (i) there exist $t$ recovery sets $\{R^i_j\}_{j=1}^t$ which are pair-wise disjoint and of size $|R^i_j| \leq r$ with $R^i_j \subseteq [n]-\{i\}$ such that (ii) for each $j, 1 \leq j \leq t$, $c_i$ can be expressed in the form:
	\bea
	c_i = \sum\limits_{\ell \in R^i_j} a_{\ell} c_{\ell}, \  \text{$a_{\ell}$ depends on $(i,R^i_j)$ and $a_{\ell} \in \fq$ }. \label{eq:Av:property_ch5}
	\eea
\end{defn}
If in the definition above, instead of $c_i$ being an arbitrary erased code symbol, we were to restrict $c_i$ to be an erased code symbol belonging to an {\em information set} $\{c_j : j \in I \}$ with $I \subseteq [n], \text{dim}(\mathcal{C}|_{I})=k$ of the code, then the code $\mathcal{C}$ would be called an information symbol (IS) availability code.   This chapter and the literature on the subject of availability codes is concerned for the most part, with AS availability codes.  For this reason throughout the remainder of this chapter, when we use the term an availability code, we will mean an AS-availability code. The parameter $r$ is termed as the locality parameter.

For an availability code, let the set of codewords in the dual code associated with the recovery sets $R^i_j$, $\forall i \in [n], j \in [t]$ form the rows of a matrix $D$. The parity check matrix of an availability code can be written in the form  $H=[H_a^T \ H_b^T]^T$ where the rows of $H_a$ are all the distinct codewords in the rows of the matrix $D$ and where the matrix $H_b$ is a full rank matrix containing the remaining codewords in the dual code such that $\text{rank}(H)=n-k$. Hence $H_a$ does not necessarily have linearly independent rows unless otherwise specified. Clearly the Hamming  weight of each row of $H_a$ is $\leq (r+1)$ and each column of $H_a$, has weight $\geq t$.

\begin{defn}
	Codes with Strict Availability (SA codes) are simply the subclass of availability codes with parity check matrix $H=[H_a^T \ H^T_b]^T$ as described above such that each row of $H_a$ has weight equal to $(r+1)$ and each column of $H_a$ has weight equal to $t$.  Thus the number $m$ of rows of $H_a$ must satisfy $m(r+1)=nt$.  Further, it is clear that if the support sets of the rows in $H_a$ having a non-zero entry in the $i^\text{th}$ column are given respectively by $S^{(i)}_j, j=1,2,\cdots t$, then we must have by the disjointness of the recovery sets, that $S^{(i)}_j \cap S^{(i)}_l =   \{i \} , \forall\ 1 \leq j \neq l \leq t$. Each code symbol $c_i$ in an SA code is thus protected by a collection of $t$ `orthogonal' parity checks, each of weight $(r+1)$. An SA code with parameters $n,k,r,t$ will also be called as an $(n,k,r,t)$ SA code.
\end{defn}

\section{Constructions of Availability Codes} \label{sec:constructions}

In this section, we provide an overview of some efficient constructions of availability codes to be found in the literature. 

%We choose to describe only the constructions of availability code which are non-trivial and have good rate. At the end, we summarize the constructions of availability code which have large minimum distance.    

\paragraph{The Product Code}  Consider the $[(r+1)^t, r^t]$ product code in $t$ dimensions.  Clearly this is an $(n=(r+1)^t,k=r^t,r,t)$ availability code, having rate $R=(\frac{r}{r+1})^t$. 

\paragraph{The  Wang et al. Construction}  For any given parameter pair $(r,t)$,  Wang et al. \cite{WanZhaLiu} provide  a construction for an $(n,k,r,t)$ availability code which is defined through its parity-check matrix $H=H_a$. Let $S$ be a set of $\ell=(r+t)$ elements.  Then in the construction, rows of $H$  are indexed by distinct subsets of $S$ of cardinality  $(t-1)$ and columns are indexed by distinct subsets of $S$ of cardinality $t$.  We set $h_{ij}=1$ if the $i$-th subset of $S$ of cardinality $(t-1)$ belongs to the $j$-th subset of $S$ of cardinality $t$ and zero otherwise.  Thus $H$ is of size ${\ell \choose t-1} \times {\ell \choose t}$. It is easy to verify that each row of $H$ has constant row weight $(r+1)$ and each column of $H$ has constant weight $t$.  It turns out that the rank of $H$ is given by ${\ell-1 \choose t-1}$ and that $H$ defines an $(n,k,r,t)$ availability code, having parameters: $n= {\ell \choose t}, k= {\ell \choose t} - {\ell-1 \choose t-1}$ and rate $R= \frac{r}{r+t}$. 
Thus this code provides improved rate in comparison with the product code.  Since $ {r+t \choose t} \ <  \ (r+1)^t$,  the code has smaller block length as well. 

\paragraph{Direct-Sum Construction}  It is shown in \cite{KadCal} that the direct sum of $m$ copies of the $[7,3]$ Simplex code yields an SA code with parameters $(7m,3m,2,3)$ having maximum possible rate for a binary SA code for $n=7m,r=2,t=3$.

\paragraph{Availability Codes Derived from Steiner Triple Systems and Cyclic Availability Codes}  We will see that the incidence matrix of Steiner Triple System (STS) yields parity check matrix for an availability code for $t=3$ and specific $r$ of the form, $2^s-2$. These results were obtained by \cite{WanZhaLin1} in parallel with work in \cite{BalPraKum} carried out as part of the present thesis, dealing with code construction based on STS and described later in this chapter, see Section~\ref{sec:strict_avail_ch5}. 

It was also shown in \cite{WanZhaLin1} that one can puncture the incidence matrix of an STS and derive availability codes for multiple, different values of $r$ for $t=3$.   Also present in \cite{WanZhaLin1} is a construction of a cyclic availability code having parameters: $[n=q^m-1, k=q^{m-1}-1, d_{\min}=\max(q+1,t+1)]$ with $r=m-1$ and $t=em$ over $\mathbb{F}_p$ where $q=p^e$ for some prime number $p$ and positive number $e$. 
%If $f(x) = \sum_{i=0}^{m-1}x^{q^i}$, then the parity check polynomial $h(x)$ of the code is $\frac{f(x)}{x}$.

\paragraph{Availability Codes Based on Majority Logic Decodable Codes}
In \cite{HuaYaaUchSie}, authors point out constructions for availability codes based on majority logic decodable codes. One of these codes based on projective geometry is presented in this chapter in Section~\ref{sec:strict_avail_ch5} where we present this construction as an example of a minimum block length construction of SA codes based on BIBD.

\paragraph{Anticode based Construction}
In \cite{SilZeh}, the authors present constructions which are optimal w.r.t the alphabet-dependent bounds given in \eqref{eq:CadMaz1_ch5} and the Griesmer bound for linear codes given in Theorem 1 of \cite{BelGueMenSal} for $r \in \{2,3\}$ and large $t$. The authors take the Simplex code and generate various codes from it by puncturing the generator matrix of the code at positions corresponding to the columns of the generator matrix of an anticode \cite{Far}. By selecting multiple anticodes, they generate various availability codes from the Simplex code. They consider both the binary and the $q$-ary Simplex code. They show that their constructions for some parameters are optimal either w.r.t the alphabet-dependent bounds given in \eqref{eq:CadMaz1_ch5} or else, the Griesmer bound  of coding theory.

\paragraph{Constructions of Availability Codes having Large Minimum Distance}

By an availability code with large minimum distance we mean here an availability code where the minimum distance is a fraction of the block length. Constructions with minimum distance $\geq n-k-\lceil \frac{2k}{r-1} \rceil + 1$ for $t=2$ appear  in \cite{TamBar_LRC}. In \cite{TamBar_LRC}, the authors also provide a general method of constructing codes with parameters $(r,t)$ with explicit constructions provided for $t=2$. A special case of this general construction given in \cite{TamBar_LRC}, was provided in \cite{BhaTha} having parameters $(n=(r+1)^t,k=r+1,r,t,d_{\min}=n-(r+1))$ which meets the bound \eqref{eq:TamoBargDmin_ch5} given in \cite{TamBarFro}. Note that this construction is of low rate.   For other constructions where the minimum distance being a fraction of block length, please refer to \cite{TamBarFro,BarTamVla,KruFro}. 

\section{Alphabet-Size Dependent Bounds on Minimum Distance and Dimension} \label{sec:avail_alph_dep_ch5}

The bounds that are discussed in this section, are bounds that are alphabet-size dependent.  {\em For brevity, throughout this section, we will just say upper bound on minimum distance when we mean a alphabet-size dependent bound on minimum distance etc. }  

We begin with a survey of existing bounds on the minimum distance and dimension of an availability code. All the existing results follow an approach based on shortening.  We next present a new improved upper bound on the minimum distance and dimension of an availability code. This new bound makes full use of the shortening-based approach and results in the tightest-known upper bounds on the minimum distance and dimension of an availability code. 

\subsection{Known Bounds on Minimum Distance and Dimension}

In this section, we present a survey of upper bounds on minimum distance and dimension of an availability code existing in the literature. All the bounds presented here are based on code-shortening. Under this approach, the shortening is carried out in such a way that the necessary constraints on an LR code have either disappeared or are present in weaker form.  Bounds on classical codes are then applied to this shortened code, which can be translated into bounds on the parent, availability code.

The theorem below which appeared in \cite{CadMaz} provides an upper bound on the dimension of an availability code with $t=1$ that is valid even for nonlinear codes.  The `dimension' of a nonlinear code \calc\ over an alphabet $\mathbb{Q}$ of size $q=|\mathbb{Q}|$ is defined to be the quantity $k=\log_q(| \calc |)$. 
\ben
\item Let $d^{(q)}(n,k)$ be the maximum possible minimum distance of a classical $[n,k]$ block code (not necessarily an LR code) over $\mathbb{F}_q$,
\item Let $k^{(q)}(n, d)$ be the largest possible dimension of a code (not necessarily an LR code) over $\mathbb{F}_q$ having block length $n$ and minimum distance $d$.	 
\een
\begin{thm} \cite{CadMaz} \label{thm:CadMaz1_ch5}
	For any $(n,k,d_{\min})$ code  \calc\ that is an availability code with $t=1$ and with locality parameter $r$ over an alphabet $\mathbb{Q}$ of size $q=|\mathbb{Q}|$, 
	\bea
	k \  \leq \  \min_{\ell \in \mathbb{Z}_{+}} [ \ell r+ k^{(q)}(n-\ell (r+1),d_{\min}) ],  \label{eq:CadMaz1_ch5}
	\eea
	%	 	where $k_{\text{opt}}^{(q)}(n-\ell (r+1),d_{\min})$ is the largest possible dimension of a code (no locality necessarily) over $\mathbb{Q}$ having block length $(n-\ell (r+1))$ and minimum distance $d_{\min}$.
\end{thm}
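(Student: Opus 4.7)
The plan is to observe that an availability code with $t=1$ is exactly an LR code with all-symbol locality: condition \eqref{eq:Av:property_ch5} with $t=1$ gives, for every $i$, a single recovery set $R^i_1$ of size $\le r$, which is the same as requiring a codeword of $\calc^\perp$ of support contained in $\{i\}\cup R^i_1$ (of weight $\le r+1$) in the linear case. Consequently the statement is identical to Theorem~\ref{thm:CadMaz} of Chapter~\ref{ch:LRSingleErasure}, and the proof is the shortening argument already sketched there. I will re-cast that argument in the notation of this chapter and flag the nonlinear extension as the one subtle piece.

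For the linear case, first I would build disjoint ``local chunks'' greedily. Pick any $i_1 \in [n]$ and let $T_1 = \{i_1\} \cup R^{i_1}_1$, so $|T_1| \le r+1$. If $T_1 \cup \cdots \cup T_j \subsetneq [n]$, pick $i_{j+1}$ outside this union and put $T_{j+1} = \{i_{j+1}\}\cup R^{i_{j+1}}_1$. After $\ell$ iterations set $\mathcal{I} = T_1 \cup \cdots \cup T_\ell$, with $|\mathcal{I}| \le \ell(r+1)$. Each $c_{i_a}$ is a fixed $\mathbb{F}_q$-linear combination of $\{c_j : j \in R^{i_a}_1\} \subseteq \{c_j : j \in \mathcal{I}\setminus\{i_a\}\}$, so $\dim(\calc|_{\mathcal{I}}) \le |\mathcal{I}| - \ell \le \ell r$. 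Permuting coordinates so that those in $\mathcal{I}$ appear first, $\calc$ has a generator matrix of the block form
$$
G \ = \ \begin{bmatrix} A & B \\ 0 & D \end{bmatrix},
$$
where $A$ has $s \le \ell r$ rows. The row space of $D$ is a classical $[\,n-\ell(r+1),\ k-s,\ \ge d_{\min}\,]$ code (the minimum distance is preserved because rows of $D$ arise as restrictions to $\mathcal{I}^c$ of codewords of $\calc$ that vanish on $\mathcal{I}$). Hence $k - \ell r \le k - s \le k^{(q)}(n - \ell(r+1), d_{\min})$, and taking the minimum over $\ell \in \mathbb{Z}_+$ gives the claimed inequality.

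The main obstacle is the nonlinear case, where ``dimension'' means $\log_q|\calc|$ and generator-matrix shortening is unavailable. Here I would use an entropy argument: letting $X = (c_1,\ldots,c_n)$ be uniform on $\calc$, the availability property forces $H(c_{i_a} \mid c_j,\ j \in R^{i_a}_1) = 0$, so by the chain rule $H(X|_{\mathcal{I}}) \le |\mathcal{I}| - \ell \le \ell r$ (measured in units of $\log_q$). A pigeonhole step then produces a value $v \in \mathbb{Q}^{|\mathcal{I}|}$ whose preimage in $\calc$ has at least $|\calc|/q^{\ell r}$ elements; restricting that preimage to $\mathcal{I}^c$ yields a code of block length $n-\ell(r+1)$, size $\ge q^{k - \ell r}$, and minimum distance $\ge d_{\min}$, to which the definition of $k^{(q)}$ applies. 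This is the only place where real care is required; the rest is a direct replay of the linear argument, and the bound \eqref{eq:CadMaz1_ch5} follows.
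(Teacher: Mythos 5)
Your proposal is correct and follows the same shortening argument as the paper's own sketch (and as Theorem~\ref{thm:CadMaz} in Chapter~\ref{ch:LRSingleErasure}, of which this is an instance since $t=1$ availability is precisely all-symbol locality); the paper leaves the nonlinear case as a ``nontrivial extension,'' and your entropy/pigeonhole outline is the standard way to carry it out. One small care point: to land on block length exactly $n-\ell(r+1)$ you should pad $\mathcal{I}$ with arbitrary extra indices up to $|\mathcal{I}|=\min\{\ell(r+1),n\}$ before shortening --- the estimate $\dim(\mathcal{C}|_{\mathcal{I}})\le|\mathcal{I}|-\ell\le\ell r$ survives the padding (each added coordinate raises the projection's dimension by at most one), whereas without padding the row space of $D$ has block length $n-|\mathcal{I}|$, which may exceed $n-\ell(r+1)$ and so does not directly plug into $k^{(q)}(n-\ell(r+1),d_{\min})$.
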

\bpf (Sketch of proof) 
The bound holds for linear as well as nonlinear codes. In the linear case, with $\mathbb{Q}=\mathbb{F}_q$, the derivation proceeds as follows.  Let $G$ be a $(k \times n)$ generator matrix of the availability code ${\cal C}$.  Then it can be shown that for any integer $\ell > 0$,  there exists an index set ${\cal I}$ such that $|{\cal I}|=\min(\ell(r+1),n)$ and $\text{rank} \left( G \mid_{\cal I} \right) \ = \ s \leq \ell r$ where $ G \mid_{\cal I}$ refers to the set of columns of $G$ indexed by ${\cal I}$. This implies that \calc\ has a generator matrix of the form (after permutation of columns): 
\bean
G & = & 
\left[ \begin{array}{cc} 
	\underbrace{A}_{(s \times |{\cal I}|)} & B \\ 
	\left[ 0 \right]  & D
\end{array} \right] . 
\eean
In turn, this implies that the rowspace of $D$ defines an $[n-\ell(r+1),k-s \geq k-\ell r,\geq d_{\min}]$ code over $\mathbb{F}_q$, if $k-\ell r > 0$.  
It follows that $k \leq \ell r + k^{(q)}(n-\ell (r+1),d_{\min})$ and the result follows.  Note that the row space of $D$ corresponds to a shortening $\calc^{S}$ of \calc\ with respect to the coordinates ${\cal I} \subseteq [n]$.  The proof in the general case is a (nontrivial) extension to the nonlinear setting.  \epf 
\paragraph{Key Ingredient of the Shortening-Based Bounds and Generalizations}

Note that a key ingredient in the above proof is showing that there exists a set of $\ell (r+1)$ code symbols in the code, corresponding to which there exist $\ell$ linearly independent codewords of weight $\leq r+1$ in the dual code, each of whose support is wholly contained in the support of these $\ell (r+1)$ code symbols.  The same ingredient is also present in the bounds derived in \cite{HuaYaaUchSie,KruFro} which we will next present.   

As a contribution of the present thesis on this topic, we will generalize this idea in Section \ref{sec:NewAlphBounddmin_ch5} and provide the best-known estimate of the number of linearly independent codewords of weight $\leq r+1$ in the dual code, each having support contained in the support of a carefully chosen set of (say) $x$ code symbols. This estimate will yield the tightest-known bound todate.

In \cite{HuaYaaUchSie}, the authors provide  an upper bound on the minimum distance and dimension of a $[n,k,d_{\min}]$ code with IS availability (the bound thus also applies to AS  availability codes as well) over an alphabet of size $q$:
\begin{thm} 
	\cite{HuaYaaUchSie} For any $[n,k,d_{\min}]$ code  \calc\ that is an information-symbol availability code with parameters $(r,t)$ over $\mathbb{F}_q$:
	\bea
	d_{\min} \leq \min_{\{(x,y) : 1 \leq x \leq \lceil \frac{k-1}{(r-1)t+1} \rceil, x \in \mathbb{Z}^{+} , y \in [t]^x, A(r,x,y) < k \}} d^{(q)}(n-B(r,x,y),k-A(r,x,y)), \label{eq:dminAlp:ch5} \\
	k \leq \min_{\{(x,y): 1 \leq x \leq \lceil \frac{k-1}{(r-1)t+1} \rceil, x \in \mathbb{Z}^{+} , y \in [t]^x, A(r,x,y) < k \}} (A(r,x,y)+k^{(q)}(n-B(r,x,y),d_{\min})), \label{eq:dimenAlp:ch5} 
	\eea
	where
	\bean
	A(r,x,y) &=& \sum_{j=1}^{x} (r-1)y_j + x, \\
	B(r,x,y) &=& \sum_{j=1}^{x} r y_j + x.
	\eean
	%	 	where $d^{(q)}(n,k)$ is the maximum minimum distance of a classical (i.e., no locality necessary) $[n,k]$ block code over $\mathbb{F}_q$ and  $k_{\text{opt}}^{(q)}(n, d_{\min})$ is the largest possible dimension of a code (i.e., no locality necessary) over $\mathbb{F}_q$ having block length $n$ and minimum distance $d_{\min}$.
\end{thm}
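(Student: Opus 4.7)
The plan is to extend the shortening-based proof sketched for Theorem \ref{thm:CadMaz1_ch5} by exploiting the existence of up to $t$ pairwise disjoint recovery sets at every information symbol. First, I would construct an index set $\mathcal{I} \subseteq [n]$ by an iterative greedy procedure: starting with $\mathcal{I}_0 = \emptyset$, at the $j$-th step pick an information coordinate $i_j \notin \mathcal{I}_{j-1}$, select $y_j$ of its $t$ pairwise disjoint recovery sets $R^{i_j}_1, \ldots, R^{i_j}_{y_j}$ (each of size at most $r$), and set $\mathcal{I}_j = \mathcal{I}_{j-1} \cup \{i_j\} \cup \bigcup_{s=1}^{y_j} R^{i_j}_s$. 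The $y_j$ parity checks coming from these recovery sets are codewords $\underline{c}_{j,1}, \ldots, \underline{c}_{j,y_j} \in \mathcal{C}^\perp$, each of Hamming weight at most $r+1$ and supported inside $\mathcal{I}_j$, with $i_j$ lying in every one of their supports.

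Next, I would verify that the full collection $\{\underline{c}_{j,s}\}_{j,s}$ is linearly independent. Within a single step $j$ this follows from the disjointness of the recovery sets $R^{i_j}_s$; across steps, the choice $i_j \notin \mathcal{I}_{j-1}$ means every step-$j$ codeword contains the coordinate $i_j$, which is absent from the supports of all previously chosen codewords. Hence $M := \sum_{j=1}^x y_j$ linearly independent dual codewords sit inside $\mathcal{I} := \mathcal{I}_x$, and a direct count gives $|\mathcal{I}| \leq x + r\sum_j y_j = B(r,x,y)$. When the inequality is strict, one pads $\mathcal{I}$ with arbitrary additional coordinates up to size $B(r,x,y)$; each padded coordinate raises $\dim(\mathcal{C}|_\mathcal{I})$ by at most $1$, so $\dim(\mathcal{C}|_\mathcal{I}) \leq B(r,x,y) - M = A(r,x,y)$ in the end.

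Then I would invoke the standard shortening estimate. The shortened code $\mathcal{C}^\mathcal{I}$ is a classical $\mathbb{F}_q$-linear code of block length $n - B(r,x,y)$, dimension at least $k - A(r,x,y)$, and minimum distance at least $d_{\min}$. Comparing with the best possible unrestricted code over $\mathbb{F}_q$ of that block length yields
\[
k - A(r,x,y) \leq k^{(q)}\bigl(n - B(r,x,y),\, d_{\min}\bigr), \qquad d_{\min} \leq d^{(q)}\bigl(n - B(r,x,y),\, k - A(r,x,y)\bigr),
\]
and taking the minimum over all admissible $(x,y)$ then delivers \eqref{eq:dimenAlp:ch5} and \eqref{eq:dminAlp:ch5}.

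The main obstacle, and the reason for the explicit constraints $A(r,x,y) < k$ and $x \leq \lceil (k-1)/((r-1)t+1)\rceil$ appearing in the minimization, is demonstrating that the greedy selection remains feasible for all $x$ iterations. At each step one needs a fresh information coordinate outside $\mathcal{I}_{j-1}$; for any information set $I$, the projection of $\mathcal{C}$ onto $I \cap \mathcal{I}_{j-1}$ has full dimension $|I \cap \mathcal{I}_{j-1}|$, so $|I \cap \mathcal{I}_{j-1}| \leq \dim(\mathcal{C}|_{\mathcal{I}_{j-1}}) \leq A(r,j-1,y')$ for the truncation $y' = (y_1,\ldots,y_{j-1})$, and feasibility follows whenever this quantity remains strictly below $k$. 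Verifying this condition uniformly over the chosen range of $(x,y)$ is the delicate step that ties the coordinate count $B$, the dual-dimension count $A$, and the availability parameter $t$ together to pin down the precise domain of minimization in the statement.
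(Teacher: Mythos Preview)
Your greedy construction and shortening argument follow exactly the approach the paper sketches in its one-sentence summary of this cited result. However, the linear-independence step has a real gap. Your across-step argument (``$i_j\notin\mathcal{I}_{j-1}$ lies in every step-$j$ codeword and in no earlier one'') only shows that no \emph{single} step-$j$ codeword lies in the span of the earlier ones; it does not rule out a nontrivial combination of step-$j$ codewords doing so, since such a combination can cancel at $i_j$ and have residual support entirely inside $\mathcal{I}_{j-1}$. Over $\mathbb{F}_2$ with $r=t=2$, take $i_1$ with recovery sets $\{a,b\},\{c,d\}$ and then $i_2\notin\mathcal{I}_1$ with recovery sets $\{a,c\},\{b,d\}\subseteq\mathcal{I}_1$: the four parity checks $e_{i_1}{+}e_a{+}e_b$, $e_{i_1}{+}e_c{+}e_d$, $e_{i_2}{+}e_a{+}e_c$, $e_{i_2}{+}e_b{+}e_d$ sum to zero, so only three are independent. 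Here $|\mathcal{I}_2|=6$ while $B=10$, and after padding by four arbitrary coordinates the rank can reach $7>A=6$, so your shortening conclusion need not hold for this $(x,y)$.

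The rank bound $\mathrm{rank}(G|_{\mathcal{I}_x})\le A(r,x,y)$ can be salvaged without the independence claim: at step $j$ each subspace $W_s=\mathrm{span}\{G_\ell:\ell\in R^{i_j}_s\}$ contains $G_{i_j}$, so passing to the quotient by $\langle G_{i_j}\rangle$ gives $\dim\bigl(\sum_s W_s\bigr)\le 1+(r-1)y_j$, and the rank grows by at most $1+(r-1)y_j$ per step. What is still missing is simultaneous control of the \emph{size}: you must ensure that the coordinate deficit $B-|\mathcal{I}_x|$ never exceeds the rank slack $A-\mathrm{rank}(G|_{\mathcal{I}_x})$, or else pad only with coordinates whose columns are already in the span of $G|_{\mathcal{I}_x}$. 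The thesis itself does not supply these details (the result is quoted from \cite{HuaYaaUchSie}), so you would need to consult the original source for the precise mechanism that guarantees a set of size exactly $B(r,x,y)$ supporting $B-A$ independent dual codewords.
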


The above bound is derived by finding a set of $B(r,x,y)$ code symbols such that there are at least $B(r,x,y)-A(r,x,y)$ linearly independent codewords in the dual each of whose support is completely contained in the support of these $B(r,x,y)$ code symbols. This is the key ingredient, we had alluded to above.

In \cite{BalKumAvail}, as part of the contribution of the thesis on this topic, we provide the bound below on the minimum distance and dimension of a $[n,k,d_{\min}]$ availability code over an alphabet of size $q$:
\begin{thm} 
	\cite{BalKumAvail} For any $(n,k,r,t)$ availability code \calc  over $\mathbb{F}_q$ with minimum distance $d_{\min}$:
	\bea
	d_{\min} & \leq & \min_{\{i: 
		\ 1+(r-1) i < k\}} d^{(q)}(n-1-i r,k-1-(r-1)i),  \label{eq:dminKF_ch5} \\
	k & \leq & \min_{\{i: \ i r+1 \leq n-d_{\min}\}} (1+(r-1)i +k^{(q)}(n-1-i r,d_{\min})). \label{eq:dimenKF_ch5} 
	\eea
	%     	where
	%     	 $d^{(q)}(n,k)$ is the maximum minimum distance of a classical (i.e., no locality necessary) $[n,k]$ block code over $\mathbb{F}_q$ and  $k_{\text{opt}}^{(q)}(n, d_{\min})$ is the largest possible dimension of a code (i.e., no locality necessary) over $\mathbb{F}_q$ having block length $n$ and minimum distance $d_{\min}$.
\end{thm}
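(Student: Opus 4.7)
The plan is to establish both inequalities through a single shortening-based argument modelled on the proof sketch of Theorem \ref{thm:CadMaz1_ch5}, but strengthened so as to exploit the full availability structure. For each admissible $i$, I aim to produce (a) a set $S \subseteq [n]$ of size at most $1+ir$, and (b) a list of $i$ linearly independent codewords $\underline{c}_1,\ldots,\underline{c}_i \in \calc^{\perp}$ such that $\mathrm{supp}(\underline{c}_j) \subseteq S$ for every $j$. Once this is in hand, enlarge $S$ if necessary so that $|S| = 1+ir$ exactly, and shorten $\calc$ at $S$. The resulting shortened code has block length $n-1-ir$, dimension at least $k-(1+ir)+i = k-1-(r-1)i$, and minimum distance at least $d_{\min}$. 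Applying the definition of $d^{(q)}(\cdot,\cdot)$ to this shortened code yields \eqref{eq:dminKF_ch5}, and applying $k^{(q)}(\cdot,\cdot)$ to it yields \eqref{eq:dimenKF_ch5}, in each case after minimising over the admissible range of $i$.

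The set $S$ and the codewords $\underline{c}_1,\ldots,\underline{c}_i$ will be constructed greedily. Fix any coordinate $j_0 \in [n]$ and set $S_0 = \{j_0\}$. At step $j \in \{1,\ldots,i\}$, choose a coordinate $\ell \in S_{j-1}$ that still possesses an availability recovery set $R$ not already used and satisfying $R \not\subseteq S_{j-1}$; let $\underline{c}_j \in \calc^{\perp}$ be the associated dual codeword supported on $\{\ell\} \cup R$, and put $S_j = S_{j-1} \cup R$. Then $|S_j| \leq |S_{j-1}| + r$, and by induction $|S_i| \leq 1 + ir$. Linear independence of $\underline{c}_j$ from $\underline{c}_1,\ldots,\underline{c}_{j-1}$ is immediate: the condition $R \not\subseteq S_{j-1}$ ensures that $\underline{c}_j$ has a non-zero entry in some coordinate in $S_j \setminus S_{j-1}$, whereas every previous dual codeword is supported inside $S_{j-1}$.

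The main obstacle is feasibility of this greedy process, i.e.\ ensuring that such a pair $(\ell, R)$ can always be found for every $j$ up to $i$. For $j \leq t$ this is straightforward: apply the $t$ pair-wise disjoint recovery sets $R^{j_0}_1,\ldots,R^{j_0}_t$ of the single coordinate $j_0$ one at a time. Each is disjoint from the others outside $\{j_0\}$, and so introduces at least one coordinate not yet in $S_{j-1}$, giving the required $\underline{c}_j$. For $j > t$ one must switch to some $\ell \in S_{j-1} \setminus \{j_0\}$; the key observation is that if, at some intermediate step, no such $(\ell, R)$ existed, then every availability recovery set of every coordinate in $S_{j-1}$ would already be contained in $S_{j-1}$, forcing $\calc$ shortened at $S_{j-1}$ to have dimension zero. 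This contradicts the standing hypothesis $1+(r-1)i < k$ used in \eqref{eq:dminKF_ch5} (respectively $ir+1 \leq n - d_{\min}$ in \eqref{eq:dimenKF_ch5}), so the greedy process may indeed be iterated all the way to $j = i$. Making this feasibility argument fully rigorous, together with pinning down the precise relationship between the admissible range of $i$ and the guaranteed existence of a valid next step, is where the bulk of the technical work lies; once it is settled, the shortening step of the first paragraph closes the proof.
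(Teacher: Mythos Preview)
Your overall strategy—greedily build a set $S$ of size at most $1+ir$ that supports $i$ linearly independent dual codewords, then shorten—matches the key ingredient the paper cites in its one-sentence justification of this theorem, and your shortening computation is correct.

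The gap is exactly where you locate it, but your proposed resolution does not work. The claim that getting stuck ``forces $\calc$ shortened at $S_{j-1}$ to have dimension zero'' is false: if every coordinate of $S_{j-1}$ has all of its recovery sets contained in $S_{j-1}$, the only conclusion is that the \emph{punctured} code $\calc|_{S_{j-1}}$ is itself an $(|S_{j-1}|,\cdot,r,t)$ availability code; nothing forces the shortened code on the complementary coordinates to vanish. Even granting your claim, the resulting inequality $k\le |S_{j-1}|\le 1+(j-1)r$ does not contradict $1+(r-1)i<k$ unless $i>r$. Concretely, for $t=1$ with disjoint local groups of size $r+1$ the greedy halts at $j=2$ regardless of $k$, and one can check that $d_i^\perp\le 1+ir$ genuinely fails there for $i\ge 2$; so no argument of this shape can succeed without using $t\ge 2$ in an essential way, and the link you draw to the hypothesis $1+(r-1)i<k$ is not the right one.

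The paper does not spell out the feasibility step either, but the surrounding machinery supplies it for $t\ge 2$: Theorem~\ref{thm:msw_sequence:ch5} together with the second part of Corollary~\ref{cor:CorollaryOnei_ch5} gives $d_i^\perp\le e_i\le ir+1$ for all $i\ge 1$ when $r,t\ge 2$, via the MSW-sequence recursion rather than a direct greedy construction. Once that inequality is available, the existence of the required set $S$ and the $i$ independent dual codewords is immediate from the definition of $d_i^\perp$, and your shortening paragraph then completes the proof.
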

\footnote{       The above bounds \eqref{eq:dminKF_ch5},\eqref{eq:dimenKF_ch5} were first derived by us in \cite{BalKumAvail} (see version 1). Subsequently, the same bounds were derived in \cite{KruFro}.} 
The above bounds \eqref{eq:dminKF_ch5},\eqref{eq:dimenKF_ch5} are derived by finding a set of $1+\ell r$ code symbols such that there are at least $\ell$ linearly independent codewords in the dual each of whose support is completely contained in the support of these $1+\ell r$ code symbols.  Once again, this is the key ingredient, we had alluded to above.

We will now generalize these ideas in this following and provide the best-known estimate of the number of linearly independent codewords of weight $\leq r+1$ in the dual code, each having support contained in the support of a carefully chosen set of (say) $x$ code symbols. This estimate will give the tightest known bound and is based on this shortening approach. This is estimated as Minimum Support Weight  (MSW) Sequence defined in the following which gives an upper bound $e_i$ on the size of a set of code symbols that can be chosen so that we are guaranteed that there exists $i$ linearly independent codewords of weight $\leq r+1$ in the dual code  each of whose support is contained in the support of this set of code symbols. Before we look at MSW sequence, we define Generalized Hamming Weight (GHW) of a code for which MSW Sequence acts as an upper bound.
As explained before, our new bound makes use of the technique of code shortening and the GHWs of a code provide valuable information about shortened codes. 
%
%
%play a role in based on shortening of an LR code at a set of code symbols of size corresponding to MSWS.
%
%
%define a sequence of numbers called Minimum Support Weight Sequence (MSWS) which act as upper bounds on GHW of a code which has a set of linearly independent codewords having Hamming weight $\leq r+1$ whose support sets cover $[n]$
%
%and of In the following we will start giving definitions which we will use to give a new alphabet dependent bound on minimum distance and dimension. First we will define Generalized Hamming Weight (GHW) and define a sequence of numbers called Minimum Support Weight Sequence (MSWS) which act as upper bounds on GHW of a code which has a set of linearly independent codewords having Hamming weight $\leq r+1$ whose support sets cover $[n]$. We then use this sequence to give our 
\subsubsection{GHW and the Minimum Support Weight Sequence} 
We will first define the Generalized Hamming Weights of a code, introduced in \cite{Wei}, and also known as  Minimum Support Weights (MSW) (see \cite{HelKloLevYtr}) of a code. In this thesis we will use the term Minimum Support Weight (MSW).
\begin{defn}
	The $i^{th}$ Minimum Support Weight (MSW) $d_i$  (equivalently, the $i$th Generalized Hamming Weight) of an $[n,k]$ code $\calc$ is the minimum possible value of the cardinality of the support of an $i$-dimensional subcode of C, i.e.,
	\bea
	d_i(\calc) = d_i = \min_{\{\mathcal{D} : \ \mathcal{D} < \calc, \ \dim(\mathcal{D})=i \} }|\text{supp}(\mathcal{D})|, \label{eq:MSW1_ch5}
	\eea
	where the notation $\mathcal{D} < \calc$ denotes a subcode $\mathcal{D}$ of $\calc$ and where $\text{supp}(\mathcal{D})=\cup_{\underline{c} \in \mathcal{D}} \text{supp}(\underline{c})$ (called the support of the code $\mathcal{D}$).
\end{defn}
Although the MSW definition applies to any code, the interest in this thesis, is on its application to a restricted class of codes that we re-introduce (it is already introduced in Chapter \ref{ch:LRSingleErasure}) here. 
\begin{defn} [Canonical Dual Code] \label{def:Can_dualcode_ch5}
	By a canonical dual code, we will mean an $[n,m_{cd}]$ linear code \calc satisfying the following: \calc contains a set $\{\underline{c}_i\}_{i=1}^b$ of $b \leq m_{cd}$ linearly independent codewords of Hamming weight $\leq r+1$, such that the sets $S_i = supp(\underline{c}_i)$, $1 \leq i \leq b$ cover $[n]$, i.e., 
	\bean
	[n] & \subseteq & \cup_{i=1}^b S_i .
	\eean
\end{defn}
As it turns out, the dual code of an availability code is an example of a canonical dual code and this is the reason for our interest in the MSWs of this class of codes. 

\begin{thm} \cite{PraLalKum} \label{thm:msw_sequence:ch5}
	Let \calc\ be a canonical dual code with parameters $(n,m_{cd},r,b)$ and support sets $\{S_i \mid 1 \leq i \leq b\}$ as defined in Definition \ref{def:Can_dualcode_ch5}. Let $d_i$ denote the $i$th $1 \leq i \leq m_{cd}$ MSW of \calc. Let $f_i$ be the minimum possible value of cardinality of the union of any $i$ distinct support sets $\{S_{j} :j \in T \}$, $|T| = i$ , $1 \leq i \leq b$ i.e., 
	\bean
	f_i = \min_{\{T: \ T \subseteq [b], |T|=i\}} |\cup_{j \in T} S_j|.
	\eean
	Let $b_1 \leq b$. Let the integers $\{e_i\}_{i=1}^{b_1}$ be recursively defined as follows: 
	\bea
	e_{b_1} & = & n, \\
	e_{i-1} & = &  \min \{e_i, e_i-\left \lceil \frac{2e_i}{i} \right \rceil + r+1\}, \ \  2 \leq i \leq b_1.  \label{eq:Sequence1_ch5}
	\eea
	Then 
	\bean
	d_i \ \leq \ f_i & \leq & e_i, \text{ for all } \ \ 1 \leq i \leq b_1. 
	\eean
\end{thm}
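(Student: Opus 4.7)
\bigskip

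\noindent\textbf{Proof proposal for Theorem~\ref{thm:msw_sequence:ch5}.}
The inequality $d_i\le f_i$ is the easy half: by definition of a canonical dual code, any choice of $i$ distinct indices $\{j_1,\ldots,j_i\}\subseteq[b]$ yields $i$ linearly independent codewords $\underline{c}_{j_1},\ldots,\underline{c}_{j_i}$ whose span is an $i$-dimensional subcode of $\mathcal{C}$ with support exactly $\cup_{\ell=1}^{i} S_{j_\ell}$. Taking the minimum of $|\cup S_{j_\ell}|$ over all such choices gives a subcode realizing a support of size $f_i$, so $d_i\le f_i$. The substance of the theorem is therefore the combinatorial bound $f_i\le e_i$, which involves only the support sets and not the field.

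The plan for $f_i\le e_i$ is downward induction on $i$, starting from $i=b_1$ and descending to $i=1$. The base case $f_{b_1}\le n=e_{b_1}$ is immediate since $\cup_{\ell=1}^{b_1} S_{j_\ell}\subseteq[n]$ for any choice. For the inductive step, suppose $f_i\le e_i$ and fix support sets $S_{j_1},\ldots,S_{j_i}$ achieving $|\cup_\ell S_{j_\ell}|=f_i=:U$. I would obtain $f_{i-1}$ by deleting one of these $i$ sets, i.e.\ by proving that some deletion yields a union of small enough size. Trivially $f_{i-1}\le U\le e_i$ (just drop any set). The nontrivial part is the second option inside the $\min$, namely $f_{i-1}\le U-\lceil 2U/i\rceil + r+1$, which I plan to establish by a double-counting argument: let $U_\ell$ denote the number of elements unique to $S_{j_\ell}$ among the chosen sets, and let $N_1=\sum_\ell U_\ell$ be the number of elements of multiplicity one. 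Counting incidences, $\sum_\ell |S_{j_\ell}| = N_1 + (\text{contributions from multiplicity}\ge 2)\ge N_1 + 2(U-N_1)$, which rearranges to $N_1\ge 2U-\sum_\ell|S_{j_\ell}|\ge 2U-i(r+1)$. By pigeonhole some $\ell^{*}$ has $U_{\ell^{*}}\ge \lceil N_1/i\rceil \ge \lceil 2U/i\rceil - (r+1)$, so dropping $S_{j_{\ell^{*}}}$ leaves $i-1$ sets whose union has size $U-U_{\ell^{*}}\le U-\lceil 2U/i\rceil+r+1$.

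To close the induction I need to replace $U$ by $e_i$ in the right-hand side, so I would verify that the function $g(x):=x-\lceil 2x/i\rceil$ is monotone non-decreasing in $x$ (which follows because $\lceil 2x/i\rceil$ increases by $0$ or $1$ when $x$ increases by $1$, for every $i\ge 2$). Combined with $U\le e_i$ this gives $f_{i-1}\le e_i-\lceil 2e_i/i\rceil + r+1$, and together with the trivial bound $f_{i-1}\le e_i$ we obtain $f_{i-1}\le e_{i-1}$.

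I expect the main subtlety to be precisely this monotonicity check together with the edge case $i=2$, where $\lceil 2U/2\rceil=U$ so the nontrivial branch gives only $f_1\le r+1$, matching the obvious fact that any single $S_j$ has size at most $r+1$. A second minor point worth being careful about is that the argument uses only $|S_j|\le r+1$ and the existence of at least $b_1$ support sets covering $[n]$ after restriction, so the generalization from the $|S_j|=r+1$, $b_1=b$ version in \cite{PraLalKum} noted in the chapter is automatic — the inequalities $\sum_\ell|S_{j_\ell}|\le i(r+1)$ and $|\cup_{\ell=1}^{b_1}S_{j_\ell}|\le n$ are all that is used.
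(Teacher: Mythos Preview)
Your proposal is correct and follows the same approach as the original proof in \cite{PraLalKum}: the paper itself does not reprove the theorem but simply cites that reference and remarks that the argument there (exactly the downward-induction plus double-counting/pigeonhole on the multiplicity-one elements that you outline) only uses $|S_j|\le r+1$ and the existence of $b_1$ support sets, hence carries over verbatim to general $b_1\le b$ and $|S_j|\le r+1$. Your handling of the monotonicity of $x\mapsto x-\lceil 2x/i\rceil$ for $i\ge 2$ and of the ceiling arithmetic $\lceil (2U-i(r+1))/i\rceil=\lceil 2U/i\rceil-(r+1)$ is exactly what is needed to close the induction, and your observations about the edge case $i=2$ and about why the generalization is automatic are precisely the points the paper alludes to in its one-line remark after the theorem statement.
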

Note that in \cite{PraLalKum}, the Theorem \ref{thm:msw_sequence:ch5} is proved for the case $b_1=b$, $|S_j|=r+1$, $\forall j \in [b]$ but we observe from the proof of Theorem \ref{thm:msw_sequence:ch5} (proof for $f_i \leq e_i$) given in \cite{PraLalKum} that the Theorem \ref{thm:msw_sequence:ch5} is also true for any $b_1 \leq b$, $|S_j| \leq r+1$, $\forall j \in [b]$.
We will refer to the sequence $\{e_i\}_{i=1}^{b_1}$ appearing in the Theorem \ref{thm:msw_sequence:ch5} above as the {\em Minimum Support Weight (MSW) Sequence} associated to parameter set $(n,b_1,r)$.  

% \paragraph{Minimum Support Weight Sequence}
% Set:
% 	\bea
% R'(r,t)&=&\begin{cases}
% 	\frac{r}{r+1} & \text{ if } t=1\\
% 	\frac{r}{r+2} & \text{ if } t=2\\
% 	\frac{r^2}{(r+1)^2} & \text{ if } t=3 \\
% 	\frac{1}{\prod_{j=1}^{t}(1+\frac{1}{jr})} & \text{ if } t>3	,				
% \end{cases} \label{Rate_Bound} \\
% b  & = & \left \lceil n(1-R'(r,t)) \right \rceil, \label{paritychecks1} \\
% e_b & = & n. \label{e1}
% \eea 
% For $2 \leq i \leq b$, set 
% \bea
% e_{i-1} = \min \{e_i, e_i-\left \lceil \frac{2e_i}{i} \right \rceil + r+1\}. \label{Sequence}
% \eea
% Then $\{e_i : 1 \leq i \leq b\}$ is called Minimum Support Weight Sequence (MSWS) and $f_i \leq e_i$. For proof please refer to \cite{PraLalKum}.

Hence this MSW sequence gives an upper bound $e_i$ on the size of a set of code symbols that can be chosen so that we are guaranteed that there exists $i$ linearly independent codewords of weight $\leq r+1$ in the dual code  each of whose support is contained in the support of this set of code symbols.
In the subsection below, we derive a new alphabet-size dependent bound on minimum distance and dimension, that are expressed in terms of the MSW sequence.

\subsection{New Alphabet-Size Dependent Bound on Minimum Distance and Dimension Based on MSW} \label{sec:NewAlphBounddmin_ch5}
In this subsection, we present a new field-size dependent bound on the minimum distance and dimension of an availability code \calc with parameters $(n,k,r,t)$.  The bound is derived in terms of the MSW sequence associated with the dual code $\calc^{\perp}$ of \calc.  The basic idea is to shorten the availability code to a code with shortened block length  with size of the set of code symbols set to zero equal to the $i$th term $e_i$ in the MSW sequence for some $i$.  Theorem~\ref{thm:msw_sequence:ch5} provides a lower bound on the dimension of this shortened code. Classical bounds on the parameters of this shortened code are shown to yield bounds on the parameters of the parent availability code. 

\begin{thm} \label{thm:dmin1:ch5}
	Let 
	\bean
	\rho(r,t)&=&\begin{cases}
		\frac{r}{r+1} & \text{ if } t=1\\
		\frac{r}{r+2} & \text{ if } t=2\\
		\frac{r^2}{(r+1)^2} & \text{ if } t=3 \\
		\frac{1}{\prod_{j=1}^{t}(1+\frac{1}{jr})} & \text{ if } t>3	,				
	\end{cases}
	\eean
	Let $\mathcal{C}$ be an $(n,k,r,t)$ availability code over a field $\mathbb{F}_q$ with minimum distance $d_{\min}$. Let $d^q_{\text{min}}(n,k,r,t)$ be the maximum possible minimum distance of an $(n,k,r,t)$ availability code over the field $\mathbb{F}_q$. Then:
	\bea
	d_{\min} \leq  \min_{i \in S} \ d^q_{\text{min}}(n-e_i,k+i-e_i,r,t) \leq \min_{i \in S} d^{(q)}(n-e_i,k+i-e_i), \ \label{eq:Mindist_Bound1:ch5}
	\eea
	\bea
	k \  \leq \  \min_{\{i: \  e_i < (n-d_{\min}+1) \}} [ e_i-i + k^{(q)}(n-e_i,d_{\min}) ] . \label{eq:GHW_dimension_Bound:ch5}
	\eea
	where
	\ben
	\item[(i)]   $b_1   = \lceil n(1-\rho(r,t)) \rceil$,
	\item[(ii)]  $S=\{ i: e_i-i < k, 1 \leq i \leq b_1 \}$,
	\item[(iii)] $d^{\perp}_i = d_i(\calc^{\perp}) \leq e_i$, $\forall 1 \leq i \leq b_1$.
	\een
\end{thm}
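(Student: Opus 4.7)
The plan is to lift the proof of Theorem \ref{thm:dmin1} from the single-erasure LR setting to the availability setting by strengthening the lower bound on the parameter $b$ in the canonical-dual-code description of $\mathcal{C}^{\perp}$. Once $b \geq b_1 := \lceil n(1-\rho(r,t))\rceil$ is established, the rest of the argument — applying Theorem \ref{thm:msw_sequence:ch5}, shortening $\mathcal{C}$, and invoking classical bounds on the shortened code — goes through essentially verbatim.

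First, I would set $\mathcal{B}_0 = \text{span}(\{\underline{c}\in\mathcal{C}^{\perp}: w_H(\underline{c}) \leq r+1\})$ and observe that $\mathcal{B}_0^{\perp}$ inherits availability from $\mathcal{C}$: its parity-check matrix contains by construction all weight-$\leq(r+1)$ codewords of $\mathcal{C}^{\perp}$ and in particular the $t$ disjoint recovery-set codewords for every coordinate, so $\mathcal{B}_0^{\perp}$ is itself an $(n, n-\dim(\mathcal{B}_0), r, t)$ availability code. Invoking the known upper bounds on the rate of an availability code — the Singleton-type bound for $t=1$, the Prakash--Lalitha--Kumar bound $r/(r+2)$ for $t=2$, the bound $r^2/(r+1)^2$ for $t=3$, and the bound $1/\prod_{j=1}^t (1+\tfrac{1}{jr})$ for $t>3$, all of which are collected into the single expression $\rho(r,t)$ in the statement — one obtains $\dim(\mathcal{B}_0^{\perp}) \leq n\rho(r,t)$, and hence $\dim(\mathcal{B}_0) \geq \lceil n(1-\rho(r,t))\rceil$. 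A short argument shows that any basis of $\mathcal{B}_0$ drawn from weight-$\leq(r+1)$ codewords must have supports that jointly cover $[n]$: each coordinate $j$ lies in the support of some weight-$\leq(r+1)$ codeword (by availability), and when this codeword is expanded in the chosen basis, at least one basis element must have a nonzero entry at $j$. Thus $\mathcal{C}^{\perp}$ qualifies as a canonical dual code in the sense of Definition \ref{def:Can_dualcode_ch5} with $b \geq b_1$, and Theorem \ref{thm:msw_sequence:ch5} yields $d_i(\mathcal{C}^{\perp}) \leq e_i$ for $1 \leq i \leq b_1$.

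Second, for each $i \in S$, I would let $S' \subseteq [n]$ be the support of an $i$-dimensional subcode of $\mathcal{C}^{\perp}$ with $|S'| = d_i(\mathcal{C}^{\perp}) \leq e_i$, extend $S'$ arbitrarily to a set $S$ of size exactly $e_i$, and shorten $\mathcal{C}$ on $S$ to obtain $\mathcal{C}^S = \{\underline{c}|_{S^c} : \underline{c} \in \mathcal{C},\ \underline{c}|_S = \underline{0}\}$. An identical rank-counting argument as in the proof of Theorem \ref{thm:dmin1} shows that $\mathcal{C}^S$ has block length $n-e_i$, dimension $\geq k+i-e_i$, and minimum distance $\geq d_{\min}$; moreover, for any coordinate in $S^c$ its $t$ disjoint recovery sets, when restricted to $S^c$, remain $t$ disjoint sets of size $\leq r$, so $\mathcal{C}^S$ is itself an $(n-e_i, k+i-e_i, r, t)$ availability code. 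Bounding $d_{\min}(\mathcal{C}^S)$ by $d^q_{\min}(n-e_i, k+i-e_i, r, t)$ and then by the classical $d^{(q)}(n-e_i, k+i-e_i)$ yields \eqref{eq:Mindist_Bound1:ch5}, and the dual inequality $k+i-e_i \leq k^{(q)}(n-e_i, d_{\min})$ yields \eqref{eq:GHW_dimension_Bound:ch5}.

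The main obstacle is the rate-bound step used to conclude $\dim(\mathcal{B}_0^{\perp}) \leq n\rho(r,t)$: this requires that the known rate upper bound $\rho(r,t)$ applies to $\mathcal{B}_0^{\perp}$ rather than only to $\mathcal{C}$ itself. Since by construction $\mathcal{B}_0^{\perp} \supseteq \mathcal{C}$ and the recovery-set codewords of $\mathcal{C}^{\perp}$ are present in the parity-check matrix of $\mathcal{B}_0^{\perp}$, the bound applies to $\mathcal{B}_0^{\perp}$ directly. For $t=1,2$ this is immediate; for $t=3$ and $t>3$ the bounds are more delicate and I would need to carefully track the known Tamo--Barg--Frolov-style derivations to ensure they collapse to the product form $\rho(r,t)$ stated above. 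Once this is in place, the rest of the proof is a routine carry-over from Theorem \ref{thm:dmin1}.
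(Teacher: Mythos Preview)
Your proposal is correct and follows essentially the same approach as the paper: establish that $\mathcal{C}^{\perp}$ is a canonical dual code with $b \geq b_1 = \lceil n(1-\rho(r,t))\rceil$ by applying the cited rate bounds to the availability code $\mathcal{B}_0^{\perp}$, invoke Theorem~\ref{thm:msw_sequence:ch5} to get $d_i(\mathcal{C}^{\perp}) \leq e_i$, then shorten at a set of size $e_i$ containing a minimum-support $i$-dimensional dual subcode and observe that the shortened code retains availability. Your concern in the last paragraph is not a genuine obstacle: the paper simply cites the relevant rate bounds (equations \eqref{Thm1}, \eqref{Thm2}, \eqref{eq:TamoBargRate_ch5}) and applies them directly to $\mathcal{B}_0^{\perp}$, which is itself an availability code, so no additional tracking of the derivations is needed.
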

\begin{proof}
	Since $\calc$ is an availability code with parameters $(r,t)$, we have that $\calc^{\perp}$ is an $[n,m_{cd} = n-k]$ canonical dual code with locality parameter $r$ and $b \geq \lceil n(1-\rho(r,t)) \rceil $. $b \geq \lceil n(1-\rho(r,t)) \rceil $ because $\rho(r,t)$ is an upper bound on maximum possible rate of an availability code with parameters $(r,t)$ (due to equations \eqref{Thm1}, \eqref{Thm2}, \eqref{eq:TamoBargRate_ch5}). 
	We now set $b_1 = \lceil n(1-\rho(r,t)) \rceil $. Hence from Theorem \ref{thm:msw_sequence:ch5}, $d_j(\calc^{\perp}) \leq e_j$, $\forall 1 \leq j \leq b_1$. For simplicity, let us write $d_j^{\perp} = d_j(\calc^{\perp})$, $\forall 1 \leq j \leq b_1$. 
	Next, fix $i$ with $1 \leq i \leq b_1$. Let $S'=\{s_1,...,s_{d_i^{\perp}} \}$ be the support of an $i$ dimensional subspace or subcode of $\mathcal{C}^{\perp}$ with the support having cardinality exactly $d_i^{\perp}$ in $\mathcal{C}^{\perp}$. Add $e_i-d_i^{\perp}$ arbitrary extra indices to $S'$ and let the resulting set be $S$. Hence $S' \subseteq S \subseteq [n]$ and $|S|=e_i$. Now shorten the code $\mathcal{C}$ in the co-ordinates indexed by $S$ i.e., take $\mathcal{C}^S=\{\underline{c}|_{S^c} : \underline{c} \in \mathcal{C}, \underline{c}|_S = \underline{0}\}$ where $S^c$ is the compliment of $S$ and $\underline{c}|_{A} = [c_{j_1},...,c_{j_{|A|}}]$ for a set $A=\{j_1,...,j_{|A|}\}$ with $j_{\ell} < j_{\ell+1}$,$\forall \ell \in [|A|-1]$.  The resulting code $\mathcal{C}^S$ has block length $n-e_i$, dimension $\geq n-e_i-(n-k-i)=k+i-e_i$ and minimum distance $\geq d_{\min}$ (if $k+i-e_i > 0$) and the resulting code $\mathcal{C}^S$ is also an availability code with parameters $(r,t)$. Hence:
	\bean
	d_{\min} \leq  \min_{i \in S} \ d^q_{\text{min}}(n-e_i,k+i-e_i,r,t) \leq \min_{i \in S} d^{(q)}(n-e_i,k+i-e_i).
	\eean 
	The proof of \eqref{eq:GHW_dimension_Bound:ch5} follows from the fact that $\text{dim}(\mathcal{C}^S) \geq k+i-e_i$ and the fact
	\bea
	k+i-e_i \leq  \text{dim}(\mathcal{C}^S)  \leq k^{(q)}(n-e_i,d_{\min}).
	\eea
\end{proof}
In the following, we will analyze MSW sequence and form an upper bound on MSW sequence. Based on this upper bound, we will conclude that our upper bound on minimum distance and dimension given in \eqref{eq:Mindist_Bound1:ch5},\eqref{eq:GHW_dimension_Bound:ch5} are the tightest known bounds.

\paragraph{Analysis of MSW sequence}

\begin{lem}  \label{lem:LemmaOnei_ch5}
	Let $\{e_i: 1\leq i \leq b_1 \}$ be as defined in Theorem \ref{thm:msw_sequence:ch5}. Let $e_{b_1} = n \leq (b_1-\ell_1)r$ for some $\ell_1$. Then for $1 \leq i \leq b_1$,
	\bea
	e_{i} &\leq& ir+\frac{(b_1-i)i}{(b_1-1)}-\frac{(i-1)i}{b_1(b_1-1)} \ell_1 r. \label{eq:Recur1_ch5} 
	\eea
\end{lem}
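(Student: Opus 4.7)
The plan is to prove this by downward induction on $i$, starting from $i = b_1$ and descending to $i = 1$, using the recursive definition of the sequence $\{e_i\}$ from Theorem~\ref{thm:msw_sequence:ch5}. For the base case at $i = b_1$, the right-hand side of \eqref{eq:Recur1_ch5} simplifies to $b_1 r - \ell_1 r = (b_1 - \ell_1)r$, and the hypothesis $e_{b_1} = n \leq (b_1 - \ell_1)r$ gives the bound immediately.

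For the inductive step, I would first relax the ceiling in the recursion: since $\lceil 2e_i/i \rceil \geq 2e_i/i$, we have
\begin{eqnarray*}
e_{i-1} \ \leq \ e_i - \left\lceil \tfrac{2e_i}{i} \right\rceil + r + 1 \ \leq \ \tfrac{i-2}{i}\, e_i + r + 1.
\end{eqnarray*}
Since the coefficient $(i-2)/i \geq 0$ for $i \geq 2$, I can substitute the inductive hypothesis for $e_i$ to obtain
\begin{eqnarray*}
e_{i-1} \ \leq \ \tfrac{i-2}{i}\left[ ir + \tfrac{(b_1-i)i}{b_1-1} - \tfrac{(i-1)i}{b_1(b_1-1)}\ell_1 r \right] + r + 1.
\end{eqnarray*}
It then remains to verify algebraically that the right-hand side is at most the target bound at index $i-1$, namely $(i-1)r + \frac{(b_1-i+1)(i-1)}{b_1-1} - \frac{(i-2)(i-1)}{b_1(b_1-1)}\ell_1 r$.

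The main obstacle is purely this algebraic verification, and I expect the key identity to be
\begin{eqnarray*}
(b_1-i+1)(i-1) - (b_1-i)(i-2) & = & b_1 - 1,
\end{eqnarray*}
together with the parallel identity $(i-1)(i-2) - (i-2)(i-1) = 0$ for the coefficient of $\ell_1 r$. These reduce the target inequality to the identity $\hat{g}_{i-1} = \hat{g}_i(i-2)/i + 1$, where $\hat{g}_i$ denotes the bound on $e_i - ir$. Once this identity is in hand, the induction closes exactly, and the contribution of the constant $r$-term reduces to $(i-1)r \leq (i-2)r + r$, which holds with equality. No further estimation is needed, since we have already absorbed the ceiling and the constant $+1$ into the tight algebraic identity. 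I would present the substitution and this single algebraic check as the body of the proof.
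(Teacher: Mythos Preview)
Your proposal is correct and follows essentially the same approach as the paper: downward induction from $i=b_1$, dropping the ceiling to obtain $e_{i-1}\le \tfrac{i-2}{i}e_i + r + 1$, substituting the induction hypothesis, and verifying the resulting algebraic identity. The paper indexes the induction via $b_1-i$ rather than $i$, but the computation is otherwise the same, and your key identity $(b_1-i+1)(i-1)-(b_1-i)(i-2)=b_1-1$ is exactly what makes the step close with equality.
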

\begin{proof}
	
	We will prove the upper bound on $e_i$ inductively.
	Let by induction hypothesis,
	\bea
	e_{b_1-i} &\leq& (b_1-i)r+\frac{(b_1-i)i}{(b_1-1)}-\frac{(b_1-i-1)(b_1-i)}{b_1(b_1-1)} \ell_1 r. \label{eq:Recur} 
	\eea
	For $i=0$, the inequality \eqref{eq:Recur} becomes:
	\bea
	e_{b_1} &\leq & b_1 r- \ell_1 r.
	\eea
	which is true. Hence we proved the intial condition of induction.
	
	Substituting the above expression \eqref{eq:Recur} for $e_{b_1-i}$ in the recursion given in \eqref{eq:Sequence1_ch5}  without ceiling:
	\bea
	e_{b_1-i-1}&\leq&\frac{b_1-i-2}{b_1-i} e_{b_1-i} + r+1, \notag \\
	e_{b_1-i-1}&\leq&\frac{b_1-i-2}{b_1-i} ((b_1-i)r+\frac{(b_1-i)i}{(b_1-1)}-\frac{(b_1-i-1)(b_1-i)}{b_1 (b_1-1)} \ell_1 r) + r+1, \notag 
	\eea
	\bea
	e_{b_1-i-1}&\leq& ((b_1-i-1)r+\frac{(b_1-i-2)i}{(b_1-1)}+1-\frac{(b_1-i-2)(b_1-i-1)}{b_1(b_1-1)} \ell_1 r),  \notag \\
	e_{b_1-i-1}&\leq& ((b_1-i-1)r+\frac{(b_1-i-1)(i+1)}{(b_1-1)}-\frac{(b_1-i-2)(b_1-i-1)}{b_1(b_1-1)} \ell_1 r).  \notag
	\eea
	
	Hence the inequality \eqref{eq:Recur} is proved by induction.
	
	Hence we proved that:
	\bea
	e_{i} &\leq& ir+\frac{(b_1-i)i}{(b_1-1)}-\frac{(i-1)i}{b_1 (b_1-1)} \ell_1 r. 
	\eea
	
\end{proof}

\begin{cor} \label{cor:CorollaryOnei_ch5} Let $\{e_i: 1\leq i \leq b_1 \}$ be as defined in Theorem \ref{thm:msw_sequence:ch5} and let $t \geq 2$ and $n \leq (b_1-\ell_1)r$ for some $\ell_1$. Then the following inequalities hold:
	
	\ben
	\item $e_i < ir+1$ whenever $i \geq \frac{b_1 + x}{x+1}$ where $x=\frac{\ell_1 r }{b_1}$ or whenever $i \geq \frac{b_1 +r-\frac{1}{1-\rho(r,t)}}{r-\frac{1}{1-\rho(r,t)}+1}$. 
	\item $e_i \leq ir+1$, $\forall i \geq 1,$ $r\geq 2$. 
	\een
	where $b_1, \rho(r,t)$ are as defined in the Theorem \ref{thm:dmin1:ch5}.
\end{cor}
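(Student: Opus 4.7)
\noindent\textbf{Proof proposal for Corollary~\ref{cor:CorollaryOnei_ch5}.}

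For Part 1, the plan is to apply Lemma~\ref{lem:LemmaOnei_ch5} directly and exploit the fact that each $e_i$ is an integer (since the recursion in Theorem~\ref{thm:msw_sequence:ch5} is integer-valued), which turns the strict inequality $e_i < ir+1$ into $e_i \leq ir$. It then suffices to force the Lemma's upper bound on $e_i$ to be at most $ir$, i.e.,
\[
\frac{(b_1-i)\,i}{b_1-1} \;\leq\; \frac{(i-1)\,i}{b_1(b_1-1)}\,\ell_1 r.
\]
Clearing denominators and collecting the terms in $i$ yields $i(b_1+\ell_1 r) \geq b_1^{2} + \ell_1 r$, equivalently $i \geq (b_1+x)/(1+x)$ with $x = \ell_1 r/b_1$; this is the first stated sufficient condition. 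For the second condition I would instantiate the same argument with $x = r - 1/(1-\rho(r,t))$ (equivalently $\ell_1 = b_1\bigl(r - 1/(1-\rho(r,t))\bigr)/r$) and verify admissibility for the Lemma: the hypothesis $n \leq (b_1-\ell_1)r$ becomes $n \leq b_1/(1-\rho(r,t))$, which holds because $b_1 = \lceil n(1-\rho(r,t))\rceil \geq n(1-\rho(r,t))$.

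For Part 2, the plan is to prove $e_i \leq ir+1$ for every $1 \leq i \leq b_1$ by downward induction on $i$ using the recursion $e_{i-1} = \min\{e_i,\; e_i - \lceil 2e_i/i \rceil + r+1\}$ from Theorem~\ref{thm:msw_sequence:ch5}. For the base case I would show $e_{b_1} = n \leq b_1 r + 1$: since $t \geq 2$ forces $\rho(r,t) \leq \rho(r,2) = r/(r+2)$, we have $b_1 \geq 2n/(r+2)$, and a one-line algebraic check (rearranging $2nr/(r+2) \geq n-1$ to $n(r-2) \geq -(r+2)$) gives the claim for $r \geq 2$. For the inductive step, writing $e_i = ir + 1 - \delta$ with integer $\delta \geq 0$, the desired bound $e_{i-1} \leq (i-1)r+1$ follows from the recursion provided $\lceil 2e_i/i \rceil \geq e_i - (i-2)r$, which simplifies to $\lceil (2-2\delta)/i \rceil \geq 1-\delta$; for $i \geq 2$ this is settled by a brief split on the three subcases $\delta = 0$, $\delta = 1$, and $\delta \geq 2$, each handled using only $\lceil y \rceil \geq y$ together with $2/i \leq 1$.

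The algebra in Part~1 is routine; the only care required is in tracking strict versus non-strict inequalities and verifying admissibility of the chosen $\ell_1$ for the Lemma. I expect the main technical obstacle to be the ceiling bookkeeping in the inductive step of Part~2, and specifically the subcase $\delta \geq 2$ where $(2-2\delta)/i$ turns negative and one must invoke $i \geq 2$ to conclude $\lfloor 2(\delta-1)/i \rfloor \leq \delta-1$ (equivalently $\lceil -2(\delta-1)/i \rceil \geq 1-\delta$). Once this subcase is settled, the induction chains cleanly from the base inequality $e_{b_1} \leq b_1 r + 1$ down to $e_1 \leq r + 1$, completing Part~2.
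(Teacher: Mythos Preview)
Your proposal is correct and follows essentially the same approach as the paper: Part~1 via the bound of Lemma~\ref{lem:LemmaOnei_ch5} with the specific admissible choice $x=r-\tfrac{1}{1-\rho(r,t)}$, and Part~2 via downward induction on the recursion~\eqref{eq:Sequence1_ch5}. You simply supply the details that the paper leaves implicit, namely the base-case verification $e_{b_1}=n\le b_1 r+1$ (using $\rho(r,t)\le r/(r+2)$ for $t\ge 2$) and the $\delta$-case split in the ceiling computation; one minor remark is that the integrality of $e_i$ you invoke in Part~1 is not actually needed, since the threshold $i\ge (b_1+x)/(x+1)$ already forces the Lemma's upper bound to be $\le ir$.
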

\begin{proof}
	From the inequality \eqref{eq:Recur1_ch5}, it can seen that $e_i < ir+1$ whenever $i \geq \frac{b_1 +x}{x+1}$ where $x=\frac{\ell_1 r }{b_1}$. It can also be seen from the inequality $n \leq (b_1-\ell_1)r$ that $x \leq r-\frac{n}{b_1}$. Choosing $x = r-\frac{1}{1-\rho(r,t)}$, we get  $e_i < ir+1$ whenever $i \geq \frac{b_1 +r-\frac{1}{1-\rho(r,t)}}{r-\frac{1}{1-\rho(r,t)}+1}$.
	Further it can seen by using the recursion \eqref{eq:Sequence1_ch5} that $e_i \leq ir+1$, $\forall i \geq 1,t \geq 2,$ $r\geq 2$. This is because if you substitute $e_i \leq ir+1$ in the recursion \eqref{eq:Sequence1_ch5} (with ceiling), you will get $e_{i-1} \leq (i-1)r+1$.
\end{proof}

\paragraph{Tightness of the Bounds Given in Theorem \ref{thm:dmin1:ch5}}
In the following the word bound refers to an upper bound. From the Corollary \ref{cor:CorollaryOnei_ch5}, it can be seen that the bound \eqref{eq:Mindist_Bound1:ch5} is tighter than the bounds \eqref{eq:dminAlp:ch5},\eqref{eq:dminKF_ch5}.  For the same reason, the bound \eqref{eq:GHW_dimension_Bound:ch5} is tighter than the bounds \eqref{eq:CadMaz1_ch5}, \eqref{eq:dimenAlp:ch5}, \eqref{eq:dimenKF_ch5} .
Hence our bounds \eqref{eq:Mindist_Bound1:ch5},\eqref{eq:GHW_dimension_Bound:ch5} are the tightest known bounds on minimum distance and dimension of an availability code over $\fq$.
We here note that the bounds \eqref{eq:Mindist_Bound1:ch5}, \eqref{eq:GHW_dimension_Bound:ch5} apply even if we replace $e_i$ with any other upper bound on $i^{th}$ MSW. Hence the bounds \eqref{eq:Mindist_Bound1:ch5}, \eqref{eq:GHW_dimension_Bound:ch5} are general in that sense. We give better upper bounds on $i^{th}$ MSW than $e_i$ in \cite{BalKum} under restricted conditions. We do not include this result here to keep things simple.

\section{Bounds for Unconstrained Alphabet Size} \label{sec:avail_alph_ind_ch5}

In this section, we present our results on upper bound on rate of an SA code and upper bound on minimum distance of an availability code. Our upper bound on rate of an SA code is the tightest known bound as $r$ increases for a fixed $t$. Our upper bound on minimum distance of an availability code is the tightest known bound on minimum distance of an availability code for all parameters. 

\subsection{Upper Bounds on Rate for unconstrained alphabet size}

In this section, we first give a survey of upper bounds on rate of an availability code from the existing literature. We will then present a new upper bound on rate of an SA code. Our upper bound on rate of an SA code is the tightest known bound as $r$ increases for a fixed $t$. This upper bound on rate of an SA code is based on the idea that transpose of the matrix $H_a$ in the parity check matrix of an SA code with parameters $(r,t)$ is also a parity check matrix of an SA code with parameters $r'= t-1$ ,$t'=r+1$ ($r'$ is the locality parameter and $t'$ is the number of disjoint recovery sets available for a code symbol). 

\subsubsection{Known Bounds on Rate} \label{sec:KnownRateBounds:ch5}

The following upper bound on the rate of a code with availability was given in \cite{TamBarFro}. The bound is derived based on a graphical approach.
\begin{thm} \cite{TamBarFro} If \calc\ is an $(n,k,r,t)$ availability code over a finite field \fq,\ then its rate $R$ must satisfy: 
	\bea
	R \ = \ \frac{k}{n} \leq \frac{1}{\prod_{j = 1}^{t}(1+\frac{1}{jr})}. \label{eq:TamoBargRate_ch5}
	\eea
\end{thm}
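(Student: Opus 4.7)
The plan is to prove the rate bound via an entropy argument combined with averaging over a uniformly random ordering of the coordinates. Let $X = (X_1, \ldots, X_n)$ be uniformly distributed on $\mathcal{C}$, so that $H(X) = k \log q$. For any permutation $\pi$ of $[n]$, the chain rule gives $H(X) = \sum_{i=1}^n H(X_{\pi(i)} \mid X_{\pi(1)}, \ldots, X_{\pi(i-1)})$. I would average this identity over a uniformly random $\pi$; letting $P_i(\pi)$ denote the set of coordinates appearing before $i$ in $\pi$, linearity of expectation gives $k \log q = \sum_{i=1}^n \mathbb{E}_\pi\!\left[ H(X_i \mid X_{P_i(\pi)}) \right]$.

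The crucial observation is that by the availability property, $H(X_i \mid X_{P_i(\pi)}) = 0$ whenever at least one recovery set $R_s^i$ is wholly contained in $P_i(\pi)$, and $H(X_i \mid X_{P_i(\pi)}) \leq \log q$ always. Hence, writing $A_s^i$ for the event ``all elements of $R_s^i$ precede $i$ in $\pi$'', one has
\begin{equation*}
\mathbb{E}_\pi\!\left[H(X_i \mid X_{P_i(\pi)})\right] \;\leq\; (\log q) \cdot \Pr_\pi\!\Bigl[\,\bigcap_{s=1}^t \overline{A_s^i}\,\Bigr].
\end{equation*}
Combining the two displays reduces the task to upper-bounding, for each $i$, the probability that none of the $t$ disjoint recovery sets $R_1^i, \ldots, R_t^i$ lies entirely before $i$. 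By inclusion-exclusion and the disjointness of the $R_s^i$, the event $\bigcap_{s \in T} A_s^i$ is simply the event that $i$ is the last among $\{i\} \cup \bigcup_{s \in T} R_s^i$, and this has probability $\frac{1}{1+\sum_{s \in T}|R_s^i|}$. Thus
\begin{equation*}
\Pr\Bigl[\,\bigcap_{s=1}^t \overline{A_s^i}\,\Bigr] \;=\; \sum_{T \subseteq [t]} \frac{(-1)^{|T|}}{1 + \sum_{s \in T}|R_s^i|}.
\end{equation*}

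The main obstacle I anticipate is showing that this expression is largest when every recovery set is of the worst-case size $|R_s^i| = r$. I would handle this via the integral representation $\frac{1}{x} = \int_0^\infty e^{-xy}\,dy$, which rewrites the alternating sum as the manifestly monotone-in-each-$|R_s^i|$ integral $\int_0^\infty e^{-y}\prod_{s=1}^t(1 - e^{-|R_s^i|y})\,dy$. Setting $|R_s^i|=r$ for all $s$ and substituting $u = e^{-y}$ reduces the integral to $\int_0^1 (1-u^r)^t\,du$, which a standard Beta-function manipulation identifies with $\tfrac{1}{\prod_{j=1}^t (1+1/(jr))}$.

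Putting everything together, $k\log q \leq \sum_{i=1}^n (\log q) \cdot \tfrac{1}{\prod_{j=1}^t (1+1/(jr))}$, and dividing by $n\log q$ yields the claimed rate bound. The two steps that require the most care are (i) justifying the monotonicity of the inclusion-exclusion expression in the sizes $|R_s^i|$ so that the worst case is indeed $|R_s^i|=r$ (handled cleanly by the integral representation), and (ii) the Beta-function evaluation that matches the integral to the stated product form; the rest is a clean averaging and chain-rule computation.
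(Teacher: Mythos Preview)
The thesis does not supply its own proof of this theorem; it is quoted from \cite{TamBarFro} with only the remark that ``the bound is derived based on a graphical approach.'' Your argument is correct and is, in fact, essentially the proof in the original Tamo--Barg--Frolov paper: average the chain rule over a uniformly random ordering of the coordinates, bound each conditional entropy by $\log q$ times the probability that none of the $t$ disjoint recovery sets lies entirely before the coordinate, compute that probability by inclusion--exclusion using disjointness, and evaluate the worst case $|R_s^i|=r$. The monotonicity step via the representation $\tfrac{1}{x}=\int_0^\infty e^{-xy}\,dy$ and the Beta-integral identification $\int_0^1(1-u^r)^t\,du=\prod_{j=1}^t(1+1/(jr))^{-1}$ are both valid, so nothing is missing.
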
 

\bean
\text{Let }R(r,t) & = &  sup_{\{(k,n) : (n,k,r,t) \text{ SA code exists} \}} \frac{k}{n}. 
\eean
After our results were published in \cite{BalKum} (applicable for any field size), the following upper bound on rate of an $(n,k,r,t=3)$ SA code over $\mathbb{F}_2$ was published in \cite{KadCal}:

\bea
R(r,3) \leq \frac{r-2}{r+2}+ \frac{3}{r+1} H_2 \left(\frac{1}{r+2} \right). \label{eq:Kadhe_ch5}
\eea
Comparison of our bound \eqref{eq:Rate_2_ch5} with the above bound \eqref{eq:Kadhe_ch5} appears later.

For $r=2$ and $t\geq 74$, the tightest known upper bound on rate of an $(n,k,r=2,t)$ SA code over $\mathbb{F}_2$ is given in \cite{KadCal} :
\bea
R(2,t) \leq H_2 \left( \frac{1}{t+1} \right). \label{eq:SwaCald1_ch5}
\eea
Our bound in \eqref{eq:Rate_2_ch5} is the tightest known bound as $r$ increases for a fixed $t$, it does not give good bounds for small $r$ like $r=2$ for large $t$.

\subsubsection{A Simple New Upper Bound on Rate of an $(n,k,r,t)$ SA Code:}\label{sec:SubSecTranspose_ch5}
Here we derive a new upper bound on rate of an $(n,k,r,t)$ SA code using a very simple transpose trick. This upper bound on rate of an SA code is the tightest known bound as $r$ increases for a fixed $t$.
\begin{thm}
	
	\bea
	R(r,t) &=& 1-\frac{t}{r+1}+\frac{t}{r+1} R(t-1,r+1), \label{eq:rate_11_ch5} \\
	R(r,t) &\leq& 1-\frac{t}{r+1}+\frac{t}{r+1} \frac{1}{\prod_{j=1}^{r+1}(1+\frac{1}{j(t-1)})}. \label{eq:Rate_2_ch5}
	\eea
	
\end{thm}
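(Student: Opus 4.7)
The plan is to exploit the ``transpose trick'' foreshadowed in the statement: if $H_a$ is the distinguished sub-matrix of the parity-check matrix of an $(n,k,r,t)$ SA code, then $H_a^T$ should itself be a legitimate $H_a$-block for an SA code with the roles of locality and availability swapped, namely with parameters $(r',t')=(t-1,\,r+1)$. First I would observe that in computing the supremum $R(r,t)$ one may restrict to SA codes whose full parity-check matrix equals $H_a$ (i.e.\ $H_b$ is empty), since $k = n - \text{rank}(H)$ and $\text{rank}(H) \geq \text{rank}(H_a)$, so appending further dual codewords via $H_b$ can only decrease $k/n$. Thus any achievable $k/n$ is already attained by some SA code in which $k = n - \text{rank}(H_a)$.

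Next, setting $m = nt/(r+1)$ for the number of rows of $H_a$, the matrix $H_a^T$ is $n \times m$ and has every row of weight $t = r'+1$ and every column of weight $r+1 = t'$. The non-trivial step is to verify the SA ``orthogonality'' condition for $H_a^T$: for each row index $i$ of $H_a^T$ the $r+1$ columns incident to $i$ should have pairwise support intersection exactly $\{i\}$. Transposing back, this is precisely the assertion that any two distinct columns of $H_a$ which share a non-zero entry in a common row $j$ can share no other common row. This follows immediately from the disjoint-recovery-set property defining an SA code: if two code symbols $c_a, c_b$ appeared jointly in two distinct parity checks (rows) of $H_a$, then $c_b$ would belong to two supposedly disjoint recovery sets of $c_a$, a contradiction. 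Hence $H_a^T$ is the $H_a'$-block of some SA code with parameters $(t-1,\,r+1)$, block length $m$, and dimension $m - \text{rank}(H_a^T) = m - (n-k)$. Consequently the rate of this transposed code equals $1 - \frac{(r+1)(n-k)}{nt}$ and is bounded above by $R(t-1,r+1)$. Rearranging for $k/n$ and taking the supremum over all $(n,k,r,t)$ SA codes yields $R(r,t) \leq 1 - \frac{t}{r+1} + \frac{t}{r+1}\,R(t-1,r+1)$.

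For the equality \eqref{eq:rate_11_ch5} I would run the identical argument in the opposite direction, starting from an SA code with parameters $(t-1,r+1)$ and applying the (involutive) transposition to land back in the class $(r,t)$; this yields the symmetric inequality $R(t-1,r+1) \leq 1 - \frac{r+1}{t} + \frac{r+1}{t}\,R(r,t)$, and the two inequalities together force both to be equalities. Finally, \eqref{eq:Rate_2_ch5} is immediate upon substituting the Tamo-Barg-Frolov bound \eqref{eq:TamoBargRate_ch5} (applied with locality $t-1$ and availability $r+1$) for $R(t-1,r+1)$ in \eqref{eq:rate_11_ch5}. The main---and essentially only---obstacle is the bookkeeping in the middle paragraph: one must keep the row/column indices straight under transposition, and invoke the disjoint-recovery-set axiom at the \emph{correct} code symbol in order to deduce the pairwise intersection property for $H_a^T$. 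All other steps---the reduction to $H = H_a$, the rank-and-dimension identity, the symmetric application, and the final substitution---are routine algebra once the transpose correspondence has been set up.
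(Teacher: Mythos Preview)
Your proposal is correct and follows essentially the same transpose-trick approach as the paper: reduce to $H=H_a$, observe that $H_a^T$ is the $H_a$-block of an SA code with parameters $(t-1,r+1)$, compare ranks to obtain one inequality, reverse the roles for the opposite inequality, and substitute \eqref{eq:TamoBargRate_ch5}. There is a minor index slip where you write ``for each row index $i$ of $H_a^T$ the $r+1$ columns incident to $i$'' (rows of $H_a^T$ have weight $t$, not $r+1$; you mean column index), but your ``transposing back'' sentence states the condition correctly and your verification via the disjoint-recovery-set axiom is in fact more explicit than the paper, which simply asserts that the null space of $H^T$ is an $(m,k_2,t-1,r+1)$ SA code.
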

\begin{proof}
	Let $R^{(n,q)}(r,t)$ be the maximum achievable rate of an SA code for a fixed $n,r,t$ over the field $\mathbb{F}_q$. If $(n,k,r,t)$ SA code doesn't exist for any $k > 0$ for fixed $n,r,t,q$ then we define $R^{(n,q)}(r,t)$ as follows. $R^{(n,q)}(r,t)= 0$ if there exists a parity check matrix $H=H_a$ satisfying the conditions of an SA code but with $\text{rank}(H_a) = n$ and we set  $R^{(n,q)}(r,t)= -\infty$ if there is no parity check matrix $H=H_a$ satisfying the conditions of an SA code for any value of $\text{rank}(H_a)$. Let us choose $n,r,t,q$ such that $R^{(n,q)}(r,t) \geq 0$. Let $\mathcal{C}$ be an $(n,k,r,t)$ SA code over the field $\mathbb{F}_q$ with rate $R^{(n,q)}(r,t)$. By definition $\mathcal{C}$ is the null space of an $m \times n$ matrix $H=H_a$ with each column having weight $t$ and each row having weight $r+1$ ($H_b$ is an empty matrix as $\mathcal{C}$ is a code with maximum possible rate for the given $n,r,t,q$). This matrix $H$ contains all $t$ orthogonal parity checks protecting any given symbol. Now the null space of $H^T$ (transpose of $H$) corresponds to an $(m,k_2,t-1,r+1)$ SA code over the field $\mathbb{F}_q$ for some $k_2$. Hence we have the following inequality:
	\bean
	rank(H) &=& n(1-R^{(n,q)}(r,t)), \\
	rank(H) &=& rank(H^T) \geq m(1-R^{(m,q)}(t-1,r+1)).
	\eean
	Hence we have (Let $S= \{ (\ell,q) : \ell \geq 0, \ell \in \mathbb{Z}_+,  q=p^w , \text{$p$ is a prime and } w \in \mathbb{Z}_+ \}$):
	\bean
	m(1-R^{(m,q)}(t-1,r+1)) & \leq & n(1-R^{(n,q)}(r,t)). \\
	\text{Using } m(r+1)=nt : \\
	\frac{t}{r+1}(1-R^{(m,q)}(t-1,r+1)) & \leq & (1-R^{(n,q)}(r,t)), \\
	R^{(n,q)}(r,t) & \leq & 1-\frac{t}{r+1}+\frac{t}{r+1}R^{(m,q)} (t-1,r+1), \\
	R^{(n,q)}(r,t) & \leq & 1-\frac{t}{r+1} + \frac{t}{r+1} (sup_{\{(m,q) \in S\}} \ \ R^{(m,q)}(t-1,r+1)), \\
	sup_{\{(n,q) \in S\}}\ \ R^{(n,q)}(r,t) & \leq & 1-\frac{t}{r+1} +\frac{t}{r+1} (sup_{\{(m,q) \in S\}} \ \  R^{(m,q)}(t-1,r+1)), \\
	R(r,t) & \leq & 1-\frac{t}{r+1}+\frac{t}{r+1} R(t-1,r+1).
	\eean
	Now swapping the roles of $H$ and $H^T$ in the above derivation i.e., we take an $(m,k_1,t-1,r+1)$ SA code $\mathcal{C}$ over the field $\mathbb{F}_q$ with rate $R^{(m,q)}(t-1,r+1)$ (Note that $R^{(m,q)}(t-1,r+1) \geq 0$) and repeat the above argument in exactly the same way. By doing so we get :
	\bean
	R(r,t) \geq 1-\frac{t}{r+1}+\frac{t}{r+1} R(t-1,r+1).
	\eean
	Hence we get:
	\bea
	R(r,t) = 1-\frac{t}{r+1}+\frac{t}{r+1} R(t-1,r+1). \label{eq:rate_111_ch5}
	\eea
	Now substituting the rate bound  $R(t-1,r+1) \leq \frac{1}{\prod_{j=1}^{r+1}(1+\frac{1}{j(t-1)})}$  given in \eqref{eq:TamoBargRate_ch5} into \eqref{eq:rate_111_ch5}, we get:
	\bea
	R(r,t) \leq 1-\frac{t}{r+1}+\frac{t}{r+1} \frac{1}{\prod_{j=1}^{r+1}(1+\frac{1}{j(t-1)})}.
	\eea
\end{proof}
\begin{figure}[h!]
	\centering
	\includegraphics[width=5in]{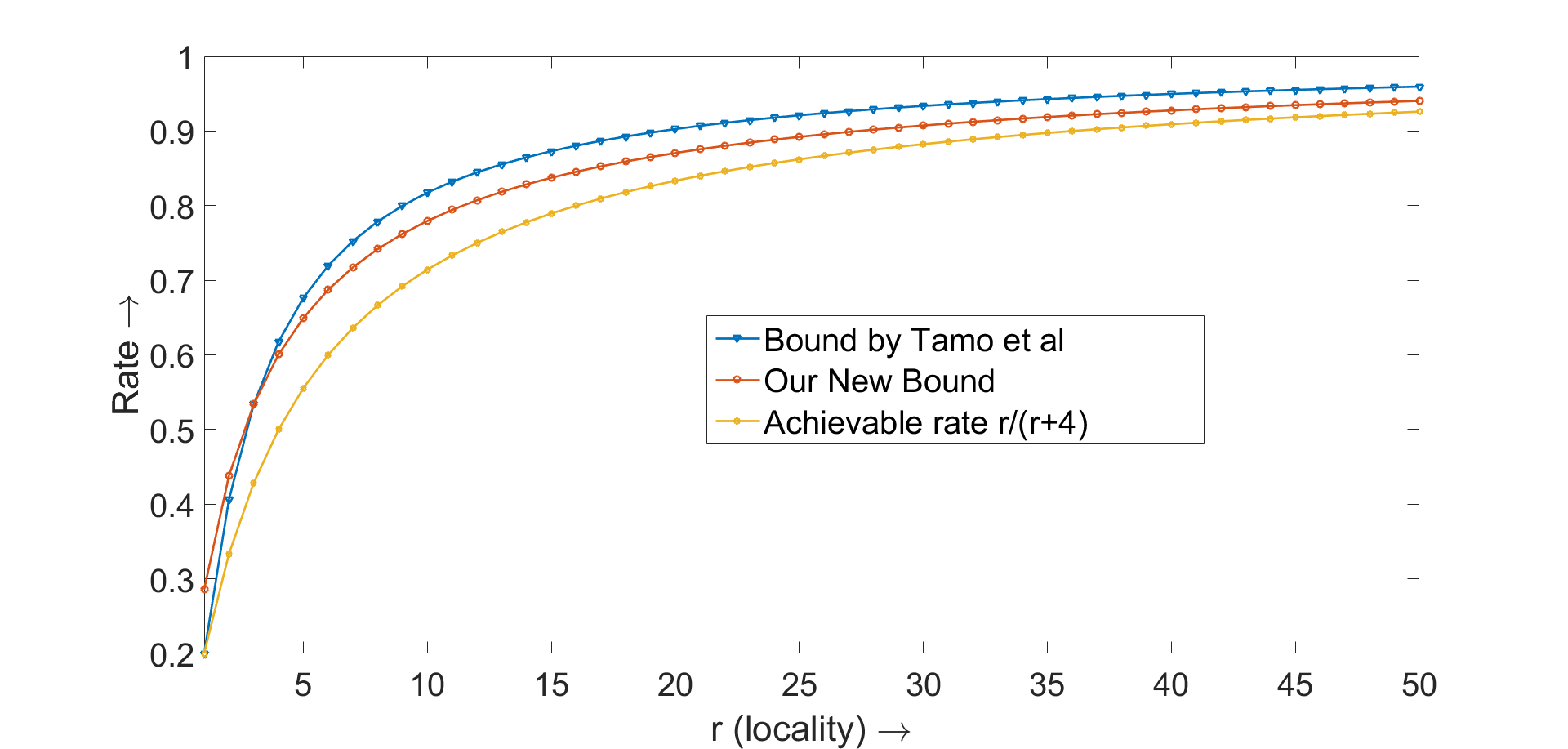}
	\caption[Example plot of locality parameter $r$ vs rate]{Plotting locality parameter $r$ vs rate for $t=4$. Here we compare our new bound \eqref{eq:Rate_2_ch5} with the bound given in \eqref{eq:TamoBargRate_ch5} (Bound by Tamo et al) and the achievable rate $\frac{r}{r+4}$ given by the Wang et al. Construction explained in Section \ref{sec:constructions}.  Note that our bound \eqref{eq:Rate_2_ch5} is tighter than the bound given in \eqref{eq:TamoBargRate_ch5}.}
	\label{Rate_4}
\end{figure}
\begin{note} $\bf{Tightness \ of \ the \ bound }$:\\
	In the following the word bound refers to an upper bound and the word rate bound refers to an upper bound on rate.
	The bound on $R(r,t)$ given in \eqref{eq:Rate_2_ch5} becomes tighter than the bound $R(r,t) \leq \frac{1}{\prod_{j=1}^{t}(1+\frac{1}{jr})}$ given in  \eqref{eq:TamoBargRate_ch5} as $r$ increases for a fixed $t$ (We skip a formal proof of this. This can be seen by approximating \eqref{eq:TamoBargRate_ch5} by $\frac{r}{r+\log(t)}$ for large $r$ for fixed $t$ and by the fact that the bound \eqref{eq:TamoBargRate_ch5} follows the inequality $ \frac{1}{\prod_{j=1}^{t}(1+\frac{1}{jr})} \leq \frac{r}{r+\log(t)}$. Simulation shows that our bound given in \eqref{eq:Rate_2_ch5} is tighter than the bound $R(r,t) \leq \frac{1}{\prod_{j=1}^{t}(1+\frac{1}{jr})}$ given in  \eqref{eq:TamoBargRate_ch5} for $r \geq c t$ for $c=1.1$ and $1\leq t\leq 10000$). As an example lets calculate $R(r,2)$. From \eqref{eq:Rate_2_ch5}:
	\bean
	R(r,2) & \leq & 1-\frac{2}{r+1}+\frac{2}{r+1} \frac{1}{\prod_{j=1}^{r+1}(1+\frac{1}{j})}, \\
	R(r,2) & \leq & 1-\frac{2}{r+1}+\frac{2}{r+1} \frac{1}{r+2}, \\
	R(r,2) & \leq & \frac{r}{r+2}.
	\eean
	The above abound on $R(r,2)$ obtained from  \eqref{eq:rate_11_ch5} and  \eqref{eq:Rate_2_ch5} is a tight bound as it is known that rate $\frac{r}{r+2}$ is achievable for $t=2$ and any $r$ using a complete graph code (\cite{PraLalKum}) and hence clearly tighter than $R(r,2) \leq \frac{r^2}{(r+1)(r+\frac{1}{2})}$ given in \eqref{eq:TamoBargRate_ch5} . 
	We show a plot of the bound given in \eqref{eq:Rate_2_ch5} for $t=4$ in  Fig.~\ref{Rate_4}. The plot shows that the bound given by \eqref{eq:Rate_2_ch5} is tighter than the bound given in  \eqref{eq:TamoBargRate_ch5} for $t=4, r>2$. \\
	Even though our bound given in \eqref{eq:Rate_2_ch5} becomes tighter as $r$ increases for a fixed $t$ than the bound in  \eqref{eq:TamoBargRate_ch5}, the bound given in  \eqref{eq:TamoBargRate_ch5}  is for $(n,k,r,t)$ availability codes but the bound in \eqref{eq:rate_11_ch5} and  \eqref{eq:Rate_2_ch5} is applicable only for $(n,k,r,t)$ SA codes. But we also would like to draw attention to the fact that most of the high rate constructions known in literature for $(n,k,r,t)$ availability codes are also $(n,k,r,t)$ SA codes. In fact the most general high rate construction for $(n,k,r,t)$ availability code is given in \cite{WanZhaLiu} (the Wang et al. Construction explained in Section \ref{sec:constructions}) and it is also an $(n,k,r,t)$ SA code. Hence there is a very good reason to think that when it comes to rate-optimality $(n,k,r,t)$ SA codes will give good candidate codes.
	
	\paragraph{Rate Bound comparison for $t=3$:}
	
	Note that the bound \eqref{eq:Kadhe_ch5} is derived in \cite{KadCal} using our equation \eqref{eq:rate_11_ch5} by substituting the upper bound on rate for $(n,k,r=2,t)$ SA codes given in \eqref{eq:SwaCald1_ch5}. Note that the bounds \eqref{eq:Kadhe_ch5},\eqref{eq:SwaCald1_ch5} are specific to binary codes whereas our bounds in \eqref{eq:rate_11_ch5} and  \eqref{eq:Rate_2_ch5} and the bound we derived in \cite{BalKum} for $t=3$ are applicable over any finite field. But nonetheless, we compare our bounds with \eqref{eq:Kadhe_ch5} for $t=3$. 
	
	When $n \geq {r+3 \choose 3}$, the following conclusions are true for SA codes.
	\ben 
	\item Rate bound we derived in \cite{BalKum} for $t=3$ (we are skipping this result from this thesis as it involves relatively complicated analysis. For interested reader please see \cite{BalKum}.) is tighter than \eqref{eq:Kadhe_ch5} for $2 \leq r \leq 31$.
	\item Rate bound in \eqref{eq:Rate_2_ch5} is tighter than \eqref{eq:Kadhe_ch5} for $2 \leq r \leq 72$.
	\item Rate bound we derived in \cite{BalKum} for $t=3$ is tighter than \eqref{eq:Rate_2_ch5} for $2 \leq r \leq 14$.
	\item All these 3 bounds \eqref{eq:Rate_2_ch5}, bound in \cite{BalKum} for $t=3$, \eqref{eq:Kadhe_ch5}, are tighter than \eqref{eq:TamoBargRate_ch5}.
	\item Hence the conclusion is that for $2 \leq r \leq 14$, rate bound we derived in \cite{BalKum} for $t=3$ is the tightest known bound and for  $15 \leq r \leq 72$, \eqref{eq:Rate_2_ch5} is the tightest known bound and for $r>72$ , \eqref{eq:Kadhe_ch5} (applicable only for binary codes) is the tightest known bound. On plotting it can be seen that even for $r>72$, the difference between bound in  \eqref{eq:Kadhe_ch5} and bound in \eqref{eq:Rate_2_ch5} is very small around $10^{-3}$.
	\item For $q>2$ (non-binary codes), the conclusion is that for $2 \leq r \leq 14$, rate bound we derived in \cite{BalKum} for $t=3$ is the tightest known bound and for  $r>14$, \eqref{eq:Rate_2_ch5} is the tightest known bound. 
	\een
	
\end{note}

\subsection{Upper Bounds on Minimum Distance of an Availability Code for Unconstrained Alphabet Size}	
\subsubsection{Known Bounds on Minimum Distance} \label{sec:KnowndminBounds:ch5}
Let $d_{\text{min}}(n,k,r,t) $ be the largest possible minimum distance of an $(n,k,r,t)$ availability code.
In \cite{WanZha}, the following upper bound on the minimum distance of an information symbol availability code with parameters $(n,k,r,t)$ (and hence the upper bound is also applicable to the case of all-symbol availability codes as well) was presented:
\bea
d_{\text{min}}(n,k,r,t) \leq n-k+2 - \left \lceil \frac{t(k-1)+1}{t(r-1)+1} \right \rceil .   \label{eq:WangDmin_ch5}
\eea
This bound was derived by adopting the approach employed in Gopalan et.al \cite{GopHuaSimYek} to bound the minimum distance of an LR code.   An improved upper bound on minimum-distance for (All Symbol) availability codes appears in \cite{TamBarFro}:
\bea
d_{\text{min}}(n,k,r,t) \leq n-\sum_{i=0}^{t} \left \lfloor \frac{k-1}{r^i} \right \rfloor. \label{eq:TamoBargDmin_ch5}
\eea
The following upper bound on minimum distance of an availability code appeared in \cite{KruFro}.
\bea
d_{\text{min}}(n,k,r,t) \leq n-k+1 - \left \lfloor \frac{k-2}{r-1} \right \rfloor. \label{eq:KruFro_ch5}
\eea

\subsubsection{A New Upper Bound on Minimum Distance of an availability code with Unconstrained Alphabet Size} \label{sec:Newdminbound_ch5}
Here we derive an upper bound on minimum distance of an availability code (not necessarily an SA code.). This bound is the tightest known bound on minimum distance of an availability code for all parameters.
This bound is derived by taking supremum w.r.t field size on both  sides (first on right hand side and then on left hand side) of \eqref{eq:Mindist_Bound1:ch5}. 
\begin{thm} \label{thm:ThmMinDist}
	The field-size dependency of the bound \eqref{eq:Mindist_Bound1:ch5} can be removed and written as:
	\bea
	d_{\min}(n,k,r,t) & \leq &  \min_{i \in S}  \  d_{\min}(n-e_i,k+i-e_i,r,t), \notag \\
	& \leq & \min_{i \in S}  \  \ \   n-k-i+1-\sum_{j=1}^{t} \left \lfloor \frac{k+i-e_i-1}{r^j} \right \rfloor, \ \   \ \  \ \label{eq:Mindist_Bound2}
	\eea
	where $S= \{ i: e_i-i < k, 1 \leq i \leq b_1 \}$ and $e_i,1 \leq i \leq b_1 $ are the MSW sequence defined in Theorem \ref{thm:msw_sequence:ch5} and $b_1$ is same as that defined in Theorem \ref{thm:dmin1:ch5}. 
\end{thm}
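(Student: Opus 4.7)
The plan is to obtain the bound \eqref{eq:Mindist_Bound2} by removing the field-size dependence in \eqref{eq:Mindist_Bound1:ch5} and then specializing the classical factor via the Tamo--Barg--Frolov bound \eqref{eq:TamoBargDmin_ch5}. The key observation is that the MSW sequence $\{e_i\}_{i=1}^{b_1}$ of Theorem \ref{thm:msw_sequence:ch5} is defined by the purely combinatorial recursion \eqref{eq:Sequence1_ch5}, and the index set $S=\{i:e_i-i<k,\ 1\le i\le b_1\}$ is likewise a function of $(n,k,r,t)$ alone. Thus both $e_i$ and $S$ are field-independent quantities, so it is legitimate to take suprema over $q$ inside and outside the min.

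First I would recall Theorem \ref{thm:dmin1:ch5}: for any finite field $\mathbb{F}_q$ and any $(n,k,r,t)$ availability code $\mathcal{C}$ over $\mathbb{F}_q$ with minimum distance $d$, the shortening argument (deleting $e_i$ carefully chosen coordinates using an $i$-dimensional subcode of $\mathcal{C}^\perp$ whose support has size $d^\perp_i\le e_i$) produces a shortened $(n-e_i,\,\ge k+i-e_i,\,r,t)$ availability code $\mathcal{C}^{S}$ over the \emph{same} $\mathbb{F}_q$ with minimum distance $\ge d$. By the monotonicity of $d^q_{\min}(\cdot,\cdot,r,t)$ in the dimension argument, this yields
\[
d \;\le\; \min_{i\in S}\, d^q_{\min}(n-e_i,\,k+i-e_i,\,r,t).
\]
This is exactly the first inequality of \eqref{eq:Mindist_Bound1:ch5}.

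Next I would pass to field-independent quantities. Fix $i\in S$. For every finite field $\mathbb{F}_q$, the above inequality gives $d_{\min}(\mathcal{C})\le d^q_{\min}(n-e_i,k+i-e_i,r,t)\le \sup_{q'} d^{q'}_{\min}(n-e_i,k+i-e_i,r,t) = d_{\min}(n-e_i,k+i-e_i,r,t)$, where the last equality uses the definition of $d_{\min}(\cdot)$ (without a field superscript) as the supremum over all finite fields. Taking the minimum over $i\in S$ on the right and then the supremum over all $(\mathcal{C},q)$ on the left yields
\[
d_{\min}(n,k,r,t) \;\le\; \min_{i\in S}\, d_{\min}(n-e_i,\,k+i-e_i,\,r,t),
\]
which is the first inequality of \eqref{eq:Mindist_Bound2}.

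Finally I would apply the Tamo--Barg--Frolov bound \eqref{eq:TamoBargDmin_ch5} to the $(n-e_i,k+i-e_i,r,t)$ availability code appearing on the right for each $i\in S$. Writing out the $j=0$ term separately gives
\[
d_{\min}(n-e_i,k+i-e_i,r,t) \;\le\; (n-e_i) - (k+i-e_i-1) - \sum_{j=1}^{t}\left\lfloor\frac{k+i-e_i-1}{r^j}\right\rfloor,
\]
which simplifies to $n-k-i+1-\sum_{j=1}^{t}\lfloor (k+i-e_i-1)/r^j\rfloor$. Taking the minimum over $i\in S$ completes the proof. The only delicate point is the sup-over-fields exchange in the second step, but this is a one-line application of the standard $\sup\min\le\min\sup$ inequality combined with field-independence of $e_i$ and $S$; no other obstacle is anticipated, since the rest is bookkeeping that reuses the already-established Theorems \ref{thm:msw_sequence:ch5}, \ref{thm:dmin1:ch5}, and the bound \eqref{eq:TamoBargDmin_ch5}.
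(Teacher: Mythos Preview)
Your proposal is correct and follows essentially the same approach as the paper: start from the field-dependent bound of Theorem~\ref{thm:dmin1:ch5}, exploit the fact that the MSW sequence $\{e_i\}$ and the index set $S$ are defined purely combinatorially (hence field-independent) to push $\sup_q$ inside the $\min_{i\in S}$, and then invoke the Tamo--Barg--Frolov bound \eqref{eq:TamoBargDmin_ch5} for the second inequality. The paper's proof is just a terser version of exactly these three steps.
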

\begin{proof}
	Let $F=\{q: q = p^w , \text{$p$ is a prime number}, w \in \mathbb{Z}_{+} \}$.
	From \eqref{eq:Mindist_Bound1:ch5} :
	\bean
	d^q_{\min}(n,k,r,t) & \leq &  \min_{i \in S} \ d^q_{\text{min}}(n-e_i,k+i-e_i,r,t), \\
	d^q_{\min}(n,k,r,t) & \leq &  \min_{i \in S} \ \sup_{q \in F} \  d^q_{\text{min}}(n-e_i,k+i-e_i,r,t), \\
	\sup_{q \in F} \  d^q_{\min}(n,k,r,t) & \leq &  \min_{i \in S} \ \sup_{q \in F} \  d^q_{\text{min}}(n-e_i,k+i-e_i,r,t), \\
	d_{\min}(n,k,r,t) & \leq &  \min_{i \in S} \ d_{\text{min}}(n-e_i,k+i-e_i,r,t).
	\eean
	The second inequality in \eqref{eq:Mindist_Bound2} follows from \eqref{eq:TamoBargDmin_ch5}.
\end{proof}
\begin{note}\textbf{Tightness of the Bound}:
	From the Corollary \ref{cor:CorollaryOnei_ch5},  $e_i < ir+1$ whenever $i \geq \frac{b_1 +r-\frac{1}{1-\rho(r,t)}}{r-\frac{1}{1-\rho(r,t)}+1}$ and $e_i \leq ir+1$, $\forall i \geq 1,t \geq 2,$ $r\geq 2$. Hence it can be the seen that \eqref{eq:Mindist_Bound2} is the tightest known upper bound on minimum distance of an availability code for all parameters. This is because:
	\ben
	\item For $(r-1) \  | \  (k-1)$, the right hand side of the bounds given in \eqref{eq:WangDmin_ch5},\eqref{eq:TamoBargDmin_ch5},\eqref{eq:KruFro_ch5} are lower bound by the following expression:
	
	\bean
	n-k+1- \left \lfloor \frac{k-2}{r-1} \right \rfloor. \label{eq:dminBoundSimplified2_ch5}
	\eean 
	
	\item For all values of $r,k$, the right hand side of the bounds given in \eqref{eq:WangDmin_ch5},\eqref{eq:TamoBargDmin_ch5},\eqref{eq:KruFro_ch5} are lower bound by the following expression:
	
	\bean
	n-k+1- \left \lfloor \frac{k-1}{r-1} \right \rfloor. \label{eq:dminBoundSimplified1_ch5}
	\eean
	
	\een
	We plot our bound for $t=3$ in Fig.~\ref{fig:dmin_t3}. It can be seen from the plot in Fig.~\ref{fig:dmin_t3} that our upper bound given in \eqref{eq:Mindist_Bound2} is tigher than the upper bounds \eqref{eq:WangDmin_ch5},\eqref{eq:TamoBargDmin_ch5},\eqref{eq:KruFro_ch5}.
\end{note}

We derived another upper bound on minimum distance of an availability code in \cite{BalKum} based on our own calculation of an upper bound on $i$th MSW which is better than $e_i$. This bound uses the rank of $H_a$ and an upper bound on weight of any column of $H_a$. This bound is tighter than the bounds \eqref{eq:WangDmin_ch5},\eqref{eq:TamoBargDmin_ch5},\eqref{eq:KruFro_ch5} without any constraint i.e., even after maximizing over all possible values of rank of $H_a$ and all possible values of  upper bound on weight of any column of $H_a$. But it is tighter than the bound \eqref{eq:Mindist_Bound2} only when we do calculation of our upper bound on $i$th MSW given in \cite{BalKum} knowing specifically the rank of $H_a$ and the upper bound on weight of any column of the matrix $H_a$. We do not mention this bound to keep things simple. Interested reader please refer to \cite{BalKum}.
\begin{figure}[h!]
	\centering
	\includegraphics[width=5in]{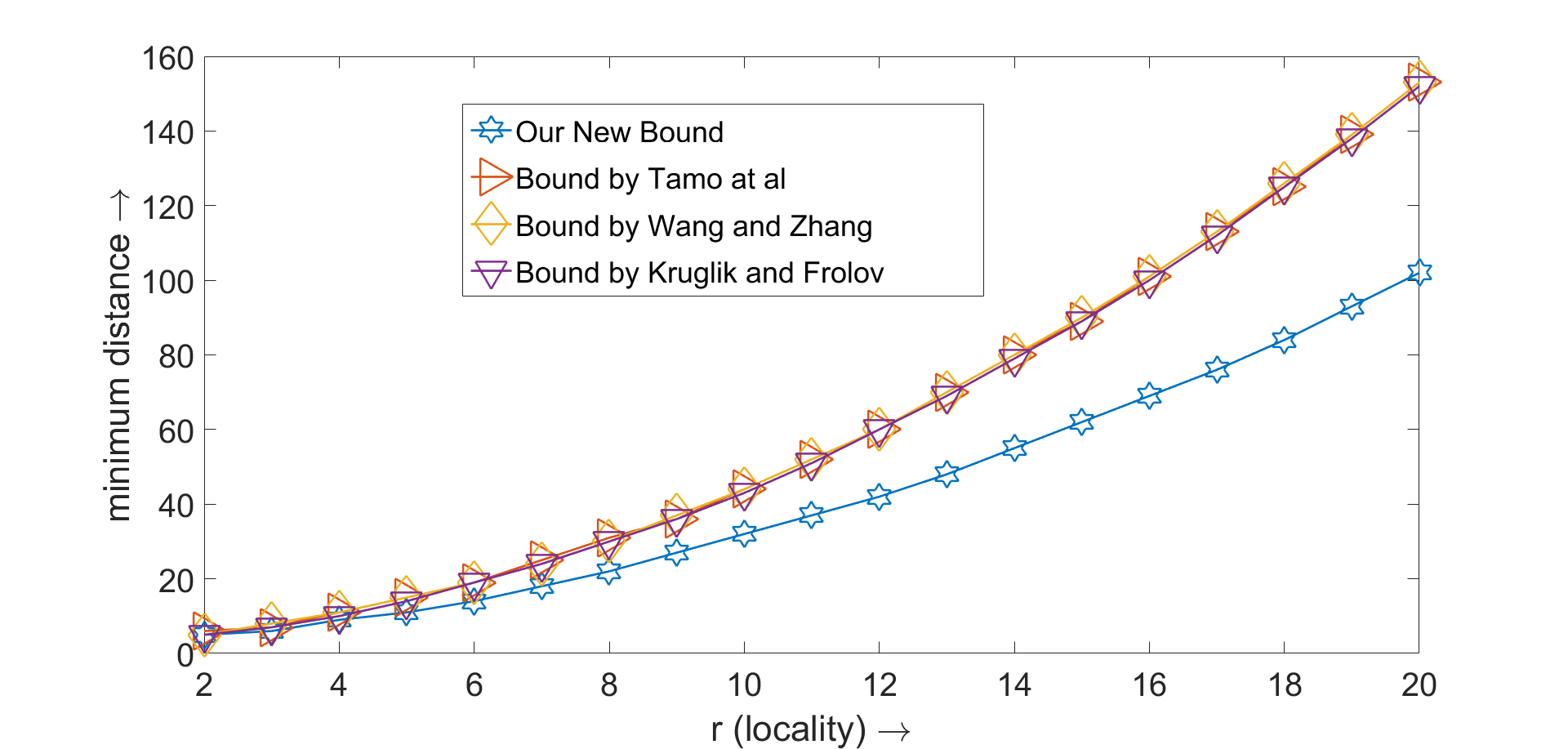}
	\caption[Example plot of locality parameter vs minimum distance]{Plotting locality parameter $r$ vs minimum distance for $t=3$ with $n= {r+3 \choose 3},k=\frac{nr}{r+3}$. Here we are comparing our new bound \eqref{eq:Mindist_Bound2} with the bounds \eqref{eq:TamoBargDmin_ch5} (Bound by Tamo et al),\eqref{eq:WangDmin_ch5} (Bound by Wang and Zhang), \eqref{eq:KruFro_ch5} (Bound by Kruglik and Frolov). In the plot, bounds \eqref{eq:TamoBargDmin_ch5},\eqref{eq:WangDmin_ch5},\eqref{eq:KruFro_ch5} are very close to each other. Note that our bound \eqref{eq:Mindist_Bound2} is tighter than the bounds \eqref{eq:TamoBargDmin_ch5},\eqref{eq:WangDmin_ch5},\eqref{eq:KruFro_ch5}.}
	\label{fig:dmin_t3}
\end{figure}

\section{A Tight Asymptotic Upper Bound on Fractional Minimum Distance for a Special class of Availability Codes} \label{sec:Asym_ch5}
Let $\mathcal{C}_{\ell}$ be an availability code with parameters $(n,k,r,t)$ over a finite field $\mathbb{F}_{q_{\ell}}$ with parity check matrix $H$ as described in the following where $\ell$ will also be defined in the following. We have already seen that parity check matrix $H$ of $\mathcal{C}_{\ell}$ can be written in the form $H=[H^T_a \  H^T_b]^T$. In this section, we assume $H_a$ (writing only linearly independent rows) can be written as follows:
\bea
H_a & = & \left[
\begin{array}{c|c|c|c|c}
	H_c & 0 & \hdots & 0 & 0 \\
	\hline
	0 & H_c  & \hdots & 0 & 0 \\
	\hline
	\vdots & \vdots & \vdots & \vdots & \vdots \\
	\hline
	0 & 0  & \hdots & 0 & H_c 
\end{array} \right] \label{eq:DirectProduct_ch5},
\eea
where $H_a$ is an $(\ell (n_c-k_c) \times \ell n_c)$ matrix and $H_c$ is an $(n_c-k_c) \times n_c$ matrix and $H_c$ is a parity check matrix of an availability code with parameters $(n_c,k_c,r,t)$ with rate $R_c$ over $\mathbb{F}_{q_c}$ ($\mathbb{F}_{q_{\ell}}$ is an extension field of $\mathbb{F}_{q_c}$).
Let $H_b$ be a $(s \times \ell n_c)$ matrix with rank $s$ with entries from $\mathbb{F}_{q_{\ell}}$ where $n-k = \ell (n_c-k_c) + s$. Hence the parity check matrix $H$ is an $((n-k) \times n)$ matrix with $n-k = \ell (n_c-k_c) + s$ and $n = \ell n_c$. Let $R_{max} = \sup_{\{(k_1,n_1):(n_1,k_1,r,t) \text{ availability code exists }\}}\frac{k_1}{n_1}$. Note that $R_{max}$ depends on $(r,t)$. We assume throughout this section that $R_c > 0$.

\begin{thm} \label{thm:dminupperbound_ch5}
	The code $\mathcal{C}_{\ell}$ over $\mathbb{F}_{q_{\ell}}$ with $((n-k) \times n)$ parity check matrix $H=[H_a^T \ \ H_b^T]^T$ with $H_a$ defined by \eqref{eq:DirectProduct_ch5} as described above is an availability code with parameters $(n,k,r,t)$ with minimum distance $d_{\min}$ satisfying:
	\bea
	d_{\min} & \leq &    n  - \frac{k}{R_c} + n_c \frac{1-R_c}{R_c} +1. \label{eq:DminUpperBound_ch5}
	\eea
\end{thm}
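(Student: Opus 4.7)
The plan is to exploit the block-diagonal structure of $H_a$ by shortening $\mathcal{C}_\ell$ on a carefully chosen collection of complete blocks, and then applying the Singleton bound to the resulting shortened code. Since $H_a$ consists of $\ell$ copies of $H_c$ along the diagonal, zeroing out all coordinates of some subset of blocks renders the corresponding local parity checks automatically satisfied; the only constraints left on the surviving coordinates then come from $H_c$ applied to each surviving block and from the restriction of $H_b$ to those coordinates.

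Concretely, set $m' = \lfloor (s+n_c)/k_c \rfloor$ and (assuming $m' \leq \ell$) shorten $\mathcal{C}_\ell$ on the coordinates of the last $\ell - m'$ blocks. Let $\mathcal{C}_\ell^{(S)}$ denote the resulting shortened code. By the observation above,
\begin{equation*}
\mathcal{C}_\ell^{(S)} \;=\; \mathcal{C}_c^{\, m'} \,\cap\, \mathrm{null}\!\left(H_b^{(1)}\right),
\end{equation*}
where $\mathcal{C}_c^{\, m'}$ is the $m'$-fold direct product of $\mathcal{C}_c$ (i.e., the null space of $m'$ copies of $H_c$) and $H_b^{(1)}$ is the submatrix formed by the first $m' n_c$ columns of $H_b$. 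Hence $\mathcal{C}_\ell^{(S)}$ has block length $m' n_c$, dimension at least $m' k_c - s$, and minimum distance at least $d_{\min}(\mathcal{C}_\ell)$. A short check using $n_c > k_c$ (otherwise the theorem is trivial) confirms that $m' k_c - s \geq 1$, so $\mathcal{C}_\ell^{(S)}$ is nontrivial.

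The Singleton bound applied to $\mathcal{C}_\ell^{(S)}$ then gives
\begin{equation*}
d_{\min}(\mathcal{C}_\ell) \;\leq\; m' n_c - (m' k_c - s) + 1 \;=\; m'(n_c - k_c) + s + 1,
\end{equation*}
and substituting $m' \leq (s + n_c)/k_c$ yields
\begin{equation*}
d_{\min}(\mathcal{C}_\ell) \;\leq\; \frac{(s+n_c)(n_c - k_c)}{k_c} + s + 1 \;=\; \frac{n_c(s + n_c - k_c)}{k_c} + 1.
\end{equation*}
Rewriting this in terms of $n = \ell n_c$, $k = \ell k_c - s$, and $R_c = k_c/n_c$ recovers the stated bound $n - k/R_c + n_c(1-R_c)/R_c + 1$. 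In the degenerate regime $m' > \ell$ (very small $\ell$), Singleton applied directly to $\mathcal{C}_\ell$ already gives $d_{\min} \leq n-k+1 = \ell(n_c-k_c) + s + 1$, which is at most the stated RHS in that range.

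The main obstacle I anticipate is book-keeping around boundary cases: verifying the dimension lower bound $m' k_c - s \geq 1$ in every nontrivial regime (especially when $s$ is close to a multiple of $k_c$), handling the non-integrality of $(s+n_c)/k_c$ cleanly via the floor function, and confirming that the algebraic simplification from $m'(n_c-k_c)+s+1$ to the $R_c$-form of the bound holds without hidden assumptions on $\ell$. None of these steps are deep, but they require a short, careful case analysis before the bound can be declared to hold in full generality.
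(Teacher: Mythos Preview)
Your argument is correct, but the paper takes a more direct route that avoids the floor function and the case split. Instead of shortening, the paper uses a rank argument on the columns of $H$: since any $d_{\min}-1$ columns of $H$ must be linearly independent, it picks the \emph{first} $d_{\min}-1$ columns and upper-bounds their rank using the block-diagonal structure of $H_a$. These columns hit at most $\lceil (d_{\min}-1)/n_c\rceil$ diagonal blocks, so their rank is at most $\frac{(n_c-k_c)(d_{\min}-1)}{n_c}+(n_c-k_c)+s$; setting this $\geq d_{\min}-1$ and solving the resulting linear inequality in $d_{\min}$ gives the bound immediately.

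The two arguments are dual to each other: you bound the \emph{dimension} of a shortened code restricted to $m'$ whole blocks and then apply Singleton, whereas the paper bounds the \emph{rank} of the parity-check submatrix on the first $d_{\min}-1$ coordinates. Your approach has the advantage of using a familiar primitive (shortening $+$ Singleton) and makes the ``choose the right number of blocks'' idea explicit. The paper's approach is slicker in that no auxiliary integer $m'$ needs to be chosen, no floor appears, and there is no degenerate case $m'>\ell$ to handle separately; the inequality is linear in $d_{\min}$ and solves in one line. Either way the same block-structure observation is doing the work.
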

\begin{proof}
	Recall that $n-k = \ell (n_c-k_c) + s$. Since the code $\mathcal{C}_{\ell}$ has minimum distance $d_{\min}$, any set of $d_{\min}-1$ columns of $H$ must be linearly independent. If we take the first $d_{\min}-1$ columns of $H$ then its rank is upper bounded by $\frac{(n_c - k_c) (d_{\min}-1)}{n_c}+n_c-k_c + s $. Hence:
	\bean
	d_{\min}-1 & \leq &  \frac{(n_c - k_c) (d_{\min}-1)}{n_c}+n_c-k_c + s, \\
	d_{\min}-1 & \leq &  \frac{n_c}{k_c} (n_c-k_c + s), \\
	d_{\min}-1 & \leq &  \frac{n_c}{k_c} (n_c-k_c + n \frac{k_c}{n_c} - k), \\
	d_{\min}-1 & \leq &    n  - k\frac{n_c}{k_c} + (n_c-k_c)\frac{n_c}{k_c},  \\
	d_{\min} & \leq &    n  - \frac{k}{R_c} + n_c \frac{1-R_c}{R_c} +1.
	\eean
\end{proof}

\begin{cor} \label{cor:AsymUpperBound_ch5}
	Let $0 < R \leq R_{max}$. Let $\{\mathcal{C}_{\ell} : \ell \geq 1 \}$ be such that $R=\lim_{\ell \rightarrow \infty}$ rate($\mathcal{C}_{\ell}$) with each code $\mathcal{C}_{\ell}$ as defined before defined by $((n-k) \times n)$ parity check matrix $H=[H_a^T \ \ H_b^T]^T$ over $\mathbb{F}_{q_{\ell}}$ with $H_a$ defined by \eqref{eq:DirectProduct_ch5} where $n,k,H_a,H_b,n_c,k_c,q_c,H_c$ depends on $\ell$. Let $\delta = \limsup_{\ell \rightarrow \infty}\frac{d^{\ell}_{\min}}{n}$ where $d^{\ell}_{\min}$ is the minimum distance of $\mathcal{C}_{\ell}$ and $n=\ell n_c$. Then,
	\bea
	\delta \leq 1  - \frac{R}{R_{max}}. \label{eq:UpperBound_ch5}
	\eea
	
\end{cor}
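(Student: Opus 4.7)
The plan is to apply Theorem~\ref{thm:dminupperbound_ch5} pointwise to each code $\mathcal{C}_\ell$ in the sequence and then pass to the asymptotic limit. Writing $d_{\min}^{(\ell)}$ for the minimum distance of $\mathcal{C}_\ell$ and noting that $n=\ell n_c$, dividing the bound in \eqref{eq:DminUpperBound_ch5} through by $n$ gives, for each $\ell$,
\begin{equation*}
\frac{d_{\min}^{(\ell)}}{n} \ \leq \ 1 \ - \ \frac{k/n}{R_c} \ + \ \frac{1}{\ell}\cdot\frac{1-R_c}{R_c} \ + \ \frac{1}{n}.
\end{equation*}
This is the single inequality that drives everything; the remainder of the argument is an asymptotic analysis of its four terms (all of which implicitly depend on $\ell$ through $n,k,n_c,k_c,R_c$).

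Next I would replace $R_c$ in the denominator of the dominant term by $R_{\max}$. Since by definition $R_c\le R_{\max}$, we have $(k/n)/R_c \ge (k/n)/R_{\max}$, so the displayed bound only weakens to
\begin{equation*}
\frac{d_{\min}^{(\ell)}}{n} \ \leq \ 1 \ - \ \frac{k/n}{R_{\max}} \ + \ \frac{1}{\ell}\cdot\frac{1-R_c}{R_c} \ + \ \frac{1}{n}.
\end{equation*}
Taking $\limsup_{\ell\to\infty}$ on both sides and using $k/n\to R$ makes the first two terms tend to $1-R/R_{\max}$, which is exactly the right-hand side of \eqref{eq:UpperBound_ch5}. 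It then suffices to show the last two terms vanish in the limit. The $1/n$ term is trivial since $n=\ell n_c\ge\ell\to\infty$.

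The only non-routine point, and the main obstacle to be addressed, is controlling the middle term: one must rule out the possibility that $R_c\to 0$ along a subsequence, which would make $(1-R_c)/R_c$ blow up and potentially defeat the $1/\ell$ factor. Here I would exploit the structural identity $\mathrm{rate}(\mathcal{C}_\ell)=R_c-s/(\ell n_c)\le R_c$, which follows directly from $n-k=\ell(n_c-k_c)+s$. Since by hypothesis $\mathrm{rate}(\mathcal{C}_\ell)\to R>0$, the rates are bounded below by $R/2$ for all sufficiently large $\ell$, hence $R_c\ge R/2$ eventually, giving $(1-R_c)/R_c\le 2/R-1$, a constant. Multiplying by $1/\ell$ sends this term to $0$, and the $\limsup$ of the right-hand side yields exactly $1-R/R_{\max}$, completing the argument.
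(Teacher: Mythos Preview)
Your proof is correct and follows exactly the route the paper intends: the paper's own proof is the single sentence ``Proof follows from Theorem~\ref{thm:dminupperbound_ch5},'' and you have supplied the details of that passage to the limit. In particular, your observation that $\mathrm{rate}(\mathcal{C}_\ell)=R_c-s/(\ell n_c)\le R_c$ forces $R_c$ to be eventually bounded below by a positive constant (so that the $\frac{1}{\ell}\cdot\frac{1-R_c}{R_c}$ term genuinely vanishes) is precisely the point the paper suppresses, and your handling of it is clean.
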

\begin{proof}
	Proof follows from Theorem \ref{thm:dminupperbound_ch5}.
\end{proof}

\begin{thm} \label{thm:dminlowerbound_ch5}
	For $q_{\ell} > {n \choose k}$, there exists a choice of $H_b$ such that the code $\mathcal{C}_{\ell}$ over $\mathbb{F}_{q_{\ell}}$ as defined before with $((n-k) \times n)$ parity check matrix $H=[H_a^T \ \ H_b^T]^T$ with $H_a$ defined by \eqref{eq:DirectProduct_ch5} is an availability code with parameters $(n,k,r,t)$ with minimum distance $d_{\min}$ satisfying:
	\bea
	d_{\min} \geq   n \frac{R_c}{R_{max}} - \frac{k}{R_{max}} +1.  \label{eq:DminLowerBound_ch5}
	\eea
\end{thm}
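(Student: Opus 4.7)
The plan is to construct $H_b$ so that the intersection $\mathcal{C}_\ell = \mathcal{C}_{\ell,a} \cap \ker H_b$, where $\mathcal{C}_{\ell,a} := \ker H_a = \mathcal{C}_c^{\ell}$, contains no non-zero codeword of Hamming weight below $D := (nR_c - k)/R_{max} + 1 = s/R_{max} + 1$. Note that the $(r,t)$-availability property of $\mathcal{C}_\ell$ is automatic for any choice of $H_b$: the $t$ recovery relations for each code symbol are rows of $H_a$, and adding extra parity checks only enlarges the dual code. Only the minimum-distance lower bound, together with the rank condition $\text{rank}(H) = n-k$ (needed so that $\dim \mathcal{C}_\ell = k$), requires effort.

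The first step is a shortening lemma: for every $T \subseteq [n]$, the subspace $V_T := \{c \in \mathcal{C}_{\ell,a} : \text{supp}(c) \subseteq T\}$ satisfies $\dim V_T \leq R_{max} |T|$. Decomposing $T = \bigsqcup_{i=1}^{\ell} T_i$ along the block structure of $H_a$, one has $V_T = \prod_{i} V^{(i)}_{T_i}$ with $V^{(i)}_{T_i} := \{c \in \mathcal{C}_c : \text{supp}(c) \subseteq T_i\}$. Each availability relation $c_j = \sum_{\ell \in R^j_k} a_{\ell} c_{\ell}$ of $\mathcal{C}_c$ restricts cleanly to $T_i$ when the codeword vanishes outside $T_i$, so $V^{(i)}_{T_i}$, viewed on its effective support, is itself an availability code with parameters $(r,t)$, hence has rate at most $R_{max}$ by definition of $R_{max}$. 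Summing over $i$ gives the stated bound.

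The second step reduces the target minimum distance to an injectivity property: $d_{\min}(\mathcal{C}_\ell) \geq D$ is equivalent to $\ker(H_b|_T) \cap V_T = \{0\}$ for every $T$ with $|T| \leq D-1$. By the shortening lemma, $\dim V_T \leq R_{max}(D-1) = s$, so the linear map $H_b|_{V_T} : V_T \to \mathbb{F}_{q_\ell}^s$ can be made injective, and for a uniformly random $H_b \in \mathbb{F}_{q_\ell}^{s \times n}$ the probability of non-injectivity on a fixed $V_T$ is at most $1/(q_\ell - 1)$ by a standard determinantal calculation. Combining a union bound over the relevant $T$'s with the hypothesis $q_\ell > \binom{n}{k}$ should then produce an $H_b$ meeting all these constraints as well as the generic rank condition $\text{rank}(H) = n-k$.

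The main obstacle is the final counting step. A naive union bound over all $T$ with $|T| \leq D-1$ requires $q_\ell$ to exceed roughly $\sum_{w=0}^{D-1}\binom{n}{w}$, which is not a priori comparable to $\binom{n}{k}$. The delicate point is to argue that the essentially distinct bad events are indexed not by all such $T$ but by the at most $\binom{n}{k}$ coordinate sets corresponding to supports of would-be low-weight codewords (equivalently, by the $k$-dimensional linear forms on $\mathcal{C}_{\ell,a}$ that $\mathcal{C}_\ell$ must avoid), at which point the hypothesis $q_\ell > \binom{n}{k}$ suffices. Alternatively, one could bypass the union bound by giving an explicit Vandermonde/Reed--Solomon choice of $H_b$ over $\mathbb{F}_{q_\ell}$; in either route, the availability structure of $\mathcal{C}_c$ — encoded in the inequality $\dim V_T \leq R_{max}|T|$ — is exactly what makes each bad event a proper (positive-codimension) condition on $H_b$, so that a sufficiently large field resolves them all simultaneously.
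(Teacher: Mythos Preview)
Your core strategy matches the paper's: both hinge on the same rank inequality $\text{rank}(H_a|_E) \geq (1-R_{\max})|E|$, which is exactly the dual formulation of your shortening lemma $\dim V_T \leq R_{\max}|T|$. Your proof of this lemma via the block decomposition and the observation that shortened availability codes remain availability codes is correct and actually more explicit than the paper, which simply asserts the rank bound without justification.

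The substantive difference is in how $H_b$ is produced. The paper does not use a random matrix plus union bound; instead it treats the entries of $H_b$ as formal variables and, for each $(D-1)$-subset $E$, writes down a single $L\times L$ minor of $H|_E$ (with $L=D-1$) whose first $\lceil(1-R_{\max})(D-1)\rceil$ rows are chosen to be linearly independent rows of $H_a|_E$ and whose remaining rows come from $H_b|_E$. This determinant is a nonzero polynomial of degree at most $1$ in each entry of $H_b$, and the paper applies the Combinatorial Nullstellensatz to the product of all these determinants. The relevant quantity is then the \emph{degree in each variable} --- at most $\binom{n-1}{D-2}$, the number of $(D-1)$-subsets containing a fixed column --- rather than the total number of bad events. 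Together with $D-1 \leq n-k$ this is what lets the hypothesis $q_\ell > \binom{n}{k}$ suffice, sidestepping precisely the union-bound obstacle you flagged. Your alternative suggestion of an explicit Reed--Solomon choice for $H_b$ would not obviously succeed, since what is needed is injectivity on the a~priori unknown subspaces $V_T$, not a stand-alone MDS property of $H_b$. Replacing your probabilistic step with the Nullstellensatz argument closes the gap cleanly; your per-event estimate $1/(q_\ell-1)$ is in any case slightly optimistic when $\dim V_T = s$.
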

\begin{proof}
	For an $(x \times y)$ matrix  $B$ and $E \subseteq [y]$, let $B|_{E}$ refer to the $(x \times |E|)$ sub-matrix of $B$ containing exactly the columns indexed by $E$.
	Since the code $\mathcal{C}_{\ell}$ has minimum distance $d_{\min}$, any set of $d_{\min}-1$ columns of $H$ must be linearly independent. Let $w = \lceil (1-R_{max}) (d_{\min}-1) \rceil$. Recall that $H_b$ is an $(s \times n)$ matrix. If we take any $E \subseteq [n]$, $|E|=d_{\min}-1$, then $rank(H|_{E}) \geq L=\min( w + s,d_{\min}-1)$. This is because $rank(H_a|_{E}) \geq w$. Now choose the entries of $H_b$ to be variables taking values from $\mathbb{F}_{q_{\ell}}$. We will now choose the values of these variables so that $rank(H|_{E}) \geq L$. Let $A$  be a $(L \times (d_{\min}-1))$ matrix with first $w$ rows of $A$ containing some $w$ linearly independent rows from $(H_a)|_{E}$ and the last $L-w$ rows of $A$ containing the first $L-w$ rows of $(H_b)|_{E}$. The rank of the matrix $H|_{E}$ can be made at least $L$ by making sure that $rank(A) \geq L$ by choosing the values of variables in $H_b$ appropriately. Note that $A$ is a submatrix of $H|_{E}$ restricted to these $L$ rows. This condition can be formulated as a condition requiring the determinant of a $(L \times L)$ square sub-matrix of $A$ to be non-zero where the first $w$ rows of the sub-matrix are chosen to be linearly independent. This property of the first $w$ rows of the sub-matrix is possible since the first $w$ rows of $A$ are linearly independent. Each such condition corresponds to making a non-zero polynomial in variables introduced in $H_b$ to take a non-zero value. We can multiply all these non-zero polynomials corresponding to each such subset $E$ and choose values for variables in $H_b$ so that this product of all these polynomials takes a non-zero value by choosing $q_{\ell} > {n \choose k}$ by the Combinatorial Nullstellensatz Theorem \cite{Nog}. Hence: $d_{\min}-1 \geq L=\min(w + s,d_{\min}-1)$ and since by above argument we can make any set of $d-1$ columns to have rank at least $\min(\lceil (1-R_{max}) (d-1) \rceil +s,d-1)$, we can choose largest $d$ such that $d_1-1 \leq \lceil (1-R_{max}) (d_1-1) \rceil + s$ for $d> d_1$ and $(d-1) > \lceil (1-R_{max}) (d-1) \rceil + s$. Hence we can choose $d_{\min}-1 \geq d-2 \geq \lceil (1-R_{max}) (d-2) \rceil+ s$. Hence:
	\bean
	d-2 & \geq & (1-R_{max}) (d-2)+ s, \\
	d-2 & \geq & (1-R_{max}) (d-2)+  n \frac{k_c}{n_c} - k, \\
	d_{\min} & \geq & d-1 \geq  n \frac{R_c}{R_{max}} - \frac{k}{R_{max}} +1.
	\eean
\end{proof}

\begin{cor} \label{cor:dminlower_ch5}
	Let $\epsilon >0$. For sufficiently large $n_c,q_c$ there exists a choice of $H_c$ with $R_c \geq R_{max} - \epsilon$ and for $q_{\ell} > {n \choose k}$ with $\mathbb{F}_{q_{\ell}}$ being an extension field of $\mathbb{F}_{q_c}$ there exists a choice of $H_b$ such that the code $\mathcal{C}_{\ell}$ over $\mathbb{F}_{q_{\ell}}$ as defined before with $((n-k) \times n)$ parity check matrix $H=[H_a^T \ \ H_b^T]^T$  with $H_a$ defined by \eqref{eq:DirectProduct_ch5} is an availability code with parameters $(n,k,r,t)$ with minimum distance $d_{\min}$ satisfying:
	\bea
	d_{\min} \geq   n \frac{R_{max}-\epsilon}{R_{max}} - \frac{k}{R_{max}} +1.  \label{eq:DminLowerBound_asympt_ch5}
	\eea
\end{cor}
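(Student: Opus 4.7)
The plan is to derive this corollary as a direct consequence of Theorem \ref{thm:dminlowerbound_ch5}, using only the definition of $R_{\max}$ as a supremum over achievable rates of availability codes with parameters $(r,t)$.

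First, I would invoke the definition
$$R_{\max} \ = \ \sup_{\{(k_1,n_1):(n_1,k_1,r,t)\text{ availability code exists}\}} \frac{k_1}{n_1}.$$
By the defining property of a supremum, for the given $\epsilon > 0$ there must exist a pair $(n_c, k_c)$ (and an associated field $\mathbb{F}_{q_c}$) for which an $(n_c, k_c, r, t)$ availability code with parity check matrix $H_c$ exists whose rate satisfies $R_c = k_c/n_c \geq R_{\max} - \epsilon$. This supplies the inner block $H_c$ used in the block-diagonal construction \eqref{eq:DirectProduct_ch5}. Note that as $\epsilon$ shrinks, one may be forced to enlarge $n_c$ and $q_c$, which is exactly the content of the ``sufficiently large $n_c, q_c$'' clause.

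Next I would plug this $H_c$ into Theorem \ref{thm:dminlowerbound_ch5}. Choosing $\mathbb{F}_{q_\ell}$ to be an extension of $\mathbb{F}_{q_c}$ with $q_\ell > \binom{n}{k}$ (so that the Combinatorial Nullstellensatz argument in the proof of Theorem \ref{thm:dminlowerbound_ch5} goes through unchanged over $\mathbb{F}_{q_\ell}$), the theorem guarantees a choice of $H_b$ making the resulting code $\mathcal{C}_\ell$ an $(n,k,r,t)$ availability code with
$$d_{\min} \ \geq \ n\,\frac{R_c}{R_{\max}} \ - \ \frac{k}{R_{\max}} \ + \ 1.$$
Substituting $R_c \geq R_{\max}-\epsilon$ on the right-hand side (the coefficient of $R_c$, namely $n/R_{\max}$, is nonnegative, so the inequality is preserved) yields
$$d_{\min} \ \geq \ n\,\frac{R_{\max}-\epsilon}{R_{\max}} \ - \ \frac{k}{R_{\max}} \ + \ 1,$$
which is precisely \eqref{eq:DminLowerBound_asympt_ch5}.

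There is no substantive obstacle here; all the real work was already done in Theorem \ref{thm:dminlowerbound_ch5}. The only point deserving care is bookkeeping around the field sizes: one must choose $q_c$ large enough (and $n_c$ large enough) for the near-optimal inner code $H_c$ to exist, and then select $q_\ell$ as an extension of $\mathbb{F}_{q_c}$ satisfying $q_\ell > \binom{n}{k}$ so that the non-vanishing argument from Theorem \ref{thm:dminlowerbound_ch5} can be applied verbatim. Combined with Corollary \ref{cor:AsymUpperBound_ch5}, this corollary will then imply that the upper bound \eqref{eq:UpperBound_ch5} on the asymptotic fractional minimum distance is tight for this special class of availability codes.
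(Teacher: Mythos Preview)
Your proposal is correct and follows exactly the same approach as the paper: the paper's proof is simply the two-sentence observation that one applies Theorem~\ref{thm:dminlowerbound_ch5} with an $H_c$ satisfying $R_c \geq R_{\max}-\epsilon$, whose existence is guaranteed by the definition of $R_{\max}$ as a supremum. Your write-up is in fact more detailed than the paper's (you spell out the monotonicity step and the field-size bookkeeping), but the underlying argument is identical.
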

\begin{proof}
	The proof follows from applying Theorem \ref{thm:dminlowerbound_ch5} with $H_c$ such that $R_c \geq R_{max}-\epsilon$. This is always possible by definition of $R_{max}$ and by choosing $n_c,q_c$ appropriately.
\end{proof}

\begin{cor} \label{cor:Existence_ch5}
	Let $0 < R \leq R_{max}$. Then there exists a sequence of codes $\{\mathcal{C}_{\ell} : \ell \geq 1 \}$ with each code $\mathcal{C}_{\ell}$ as defined before defined by $((n-k) \times n)$ parity check matrix $H=[H_a^T \ \ H_b^T]^T$ over $\mathbb{F}_{q_{\ell}}$ with $H_a$ defined by \eqref{eq:DirectProduct_ch5} where $n,k,H_a,H_b,n_c,k_c,q_c,H_c$ depends on $\ell$ such that $R=\lim_{\ell \rightarrow \infty}$ rate($\mathcal{C}_{\ell}$) and 
	\bea
	\delta = 1  - \frac{R}{R_{max}}. \label{eq:Bound}
	\eea
	where $\delta = \limsup_{\ell \rightarrow \infty}\frac{d^{\ell}_{\min}}{n}$ where $d^{\ell}_{\min}$ is the minimum distance of $\mathcal{C}_{\ell}$ and $n = \ell n_c$.
\end{cor}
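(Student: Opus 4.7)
The plan is to derive the corollary by matching the upper bound already supplied by Corollary \ref{cor:AsymUpperBound_ch5} with the existence statement of Corollary \ref{cor:dminlower_ch5}. Since Corollary \ref{cor:AsymUpperBound_ch5} yields $\delta \leq 1 - R/R_{max}$ for \emph{any} admissible sequence of limiting rate $R$, the entire task is to exhibit one admissible sequence $\{\mathcal{C}_\ell\}_{\ell \geq 1}$ of the prescribed direct-product form whose $\delta$ meets this upper bound.

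To build the sequence I would fix an auxiliary null sequence $\epsilon_\ell \downarrow 0$ (for concreteness $\epsilon_\ell = 1/\ell$) and, for each $\ell$, invoke Corollary \ref{cor:dminlower_ch5} to obtain an inner code with parity-check matrix $H_c$ of parameters $(n_c, k_c, r, t)$ over $\mathbb{F}_{q_c}$ whose rate satisfies $R_c \geq R_{max} - \epsilon_\ell$, taking $n_c = n_c(\ell)$ and $q_c = q_c(\ell)$ as large as that corollary requires. Stack $\ell$ copies of $H_c$ block-diagonally as in \eqref{eq:DirectProduct_ch5} to form $H_a$, so $n = \ell n_c$. To steer the overall rate $k/n = R_c - s/(\ell n_c)$ toward $R$, I would set
\begin{equation*}
s = s_\ell \;:=\; \lfloor \ell\, n_c \max(R_c - R,\, 0) \rfloor,
\end{equation*}
which handles both cases uniformly: for $R < R_{max}$ one has $s_\ell \asymp \ell n_c (R_c - R)$ and $k/n \to R$, while for $R = R_{max}$ eventually $s_\ell = 0$ and $k/n = R_c \to R_{max} = R$. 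Now pick any prime power $q_\ell$ with $\mathbb{F}_{q_c} \subseteq \mathbb{F}_{q_\ell}$ and $q_\ell > \binom{n}{k}$; Corollary \ref{cor:dminlower_ch5} then produces an $H_b$ over $\mathbb{F}_{q_\ell}$ such that the resulting $\mathcal{C}_\ell$ is an $(n, k, r, t)$ availability code satisfying
\begin{equation*}
d^{\ell}_{\min} \;\geq\; n \,\frac{R_{max} - \epsilon_\ell}{R_{max}} \;-\; \frac{k}{R_{max}} \;+\; 1.
\end{equation*}
Dividing by $n$ and passing to the limit, using $k/n \to R$ and $\epsilon_\ell \to 0$, gives $\limsup_{\ell \to \infty} d^{\ell}_{\min}/n \geq 1 - R/R_{max}$; combined with Corollary \ref{cor:AsymUpperBound_ch5} this forces equality.

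The only real obstacle is order-of-quantifiers bookkeeping: Corollary \ref{cor:dminlower_ch5} demands $n_c, q_c$ large \emph{depending on} $\epsilon_\ell$, and subsequently $q_\ell > \binom{n}{k}$ where $n, k$ themselves depend on $n_c$ and $s_\ell$. Choosing parameters in the order $(\epsilon_\ell, n_c, q_c, H_c, R_c, s_\ell, q_\ell, H_b)$ eliminates the circularity, since at each stage the sufficient-largeness conditions depend only on quantities already fixed. No genuinely new argument is needed beyond the two corollaries already in hand.
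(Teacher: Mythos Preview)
Your proposal is correct and follows essentially the same approach as the paper: sandwich $\delta$ between the upper bound of Corollary~\ref{cor:AsymUpperBound_ch5} and the lower bound obtained by applying Corollary~\ref{cor:dminlower_ch5} with a vanishing sequence $\epsilon_\ell \to 0$. Your version is in fact more explicit than the paper's, which simply asserts that ``it is trivial to make sure $R = \lim_{\ell \to \infty} \text{rate}(\mathcal{C}_\ell)$'' without spelling out the choice of $s_\ell$ or the order in which parameters must be fixed.
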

\begin{proof}
	We choose different $H_c,H_b$ for each code $\mathcal{C}_{\ell}$. Note that $n_c,q_c$ depends on $\ell$ here.  For each $\ell$, we choose $n_c,q_c$ to be sufficiently large such that there exists $H_c$ with $R_c \geq R_{max} - \epsilon_{\ell}$ for $\epsilon_{\ell} > 0$ for some $\{\epsilon_{\ell} : \ell \geq 1 \}$ such that $\lim_{\ell \rightarrow \infty} \epsilon_{\ell} = 0$. 
	Hence for each $\ell$, by choosing $n_c,q_c$ to be sufficiently large and $q_{\ell} > {n \choose k}$ with $\mathbb{F}_{q_{\ell}}$ being an extension field of $\mathbb{F}_{q_c}$, by Corollary \ref{cor:dminlower_ch5}, we can choose $\mathcal{C}_{\ell}$ such that $R_c \geq R_{max} - \epsilon_{\ell}$ and $d^{\ell}_{\min}$ satisfies \eqref{eq:DminLowerBound_asympt_ch5} with $\epsilon=\epsilon_{\ell}$. Since $\lim_{\ell \rightarrow \infty} \epsilon_{\ell} = 0$, it is trivial to make sure $R=\lim_{\ell \rightarrow \infty}$ rate($\mathcal{C}_{\ell}$).
	By Corollary \ref{cor:AsymUpperBound_ch5} and Corollary \ref{cor:dminlower_ch5}, for the chosen sequence of codes $\{\mathcal{C}_{\ell} : \ell \geq 1 \}$:
	\bean
	1  - \frac{R}{R_{max}} \geq \delta \geq  \frac{R_{max} - \epsilon_{\ell}}{R_{max}} - \frac{R}{R_{max}} 
	\eean	
	Since $\lim_{\ell \rightarrow \infty} \epsilon_{\ell} = 0$ the equation \eqref{eq:Bound} follows.
\end{proof}

\begin{note}
	Note that Corollary \ref{cor:AsymUpperBound_ch5} and Corollary \ref{cor:Existence_ch5} completely characterizes the optimal tradeoff between rate and fractional minimum distance for the class of codes defined by $\mathcal{C}_{\ell}$ as $\ell \rightarrow \infty$.
\end{note}

\section{Minimum Block-Length Codes with Strict Availability}  \label{sec:strict_avail_ch5} 

In this section, we present a new lower bound on block length of an SA code. We also show that the lower bound on block length is achieved if and only if BIBD (Balanced Incomplete Block Design) \cite{ColDin} with certain parameters exist. We then present two well-known BIBD. Based on our observation, the SA codes based on these two BIBD has least possible block length and maximum possible rate (for the given block length) based on a famous conjecture.

From definition, parity check matrix of an SA code \calc over a finite field $\mathbb{F}_q$ can be written in the form $H=[H^T_a  \ \ H^T_b]^T$ where $H_a$  is an $(m \times n)$ matrix with each row of weight $r+1$ and each column of weight $t$ which includes the $t$ orthogonal parities ($t$ parity checks corresponding to $t$ disjoint recovery sets $\{R_j^i : 1 \leq j \leq t \}$) for each code symbol $c_i$. The matrix $H_a$ may also be viewed as parity check matrix of a $(d_v=t, d_c=(r+1))$-regular LDPC code.  The corresponding Tanner graph of the code must necessarily have no cycles of length $4$. Let $(i,j)^{th}$ entry of $H_a$ be $h^a_{ij}$.
Let $A$ be an $m \times n$ matrix over $\Re$ with entries from $\{0,1\}$  given by $a_{ij} = 1$ when $h^a_{ij} \neq 0$ and $0$ else.   

\begin{thm}
	Let \calc\ be an $(n,k,r,t)$ SA code. Then:
	\bea
	n \geq  (r+1)^2 - \frac{(r+1)r}{t}.  \label{eq:lb_n_ch5} 
	\eea
\end{thm}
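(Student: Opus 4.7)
The plan is to exploit the disjoint-recovery-set property of an SA code to obtain a no-$4$-cycle (equivalently, ``any two rows of $H_a$ intersect in at most one column'') condition on the matrix $H_a$, and then do a Fisher-type double-count of (row-pair, common column) incidences.

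First I would verify the no-$4$-cycle property of $H_a$: if two distinct rows $R, R'$ of $H_a$ shared two columns $i$ and $j$, then for the code symbol $c_i$, both $R - \{i\}$ and $R' - \{i\}$ would be recovery sets of $c_i$ containing $j$, contradicting the SA requirement that the $t$ recovery sets of each code symbol be pairwise disjoint. Hence any two distinct rows of $H_a$ share at most one column. Next, I would record the two global weight identities: since each row of $H_a$ has weight $r+1$ and each column has weight $t$, double-counting the $1$'s in $H_a$ gives $m(r+1) = nt$, so $m = nt/(r+1)$.

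Now I would count the number $N$ of triples $(i, \{R, R'\})$ where $i$ is a column and $R \neq R'$ are two rows both containing $i$. Counting by the column $i$, each of the $n$ columns lies in exactly $t$ rows, contributing $\binom{t}{2}$ unordered row-pairs, so $N = n \binom{t}{2}$. Counting by the row-pair $\{R,R'\}$, the no-$4$-cycle property gives $|R \cap R'| \le 1$, so $N \le \binom{m}{2}$. Combining:
\begin{equation*}
n \binom{t}{2} \;\le\; \binom{m}{2}, \qquad \text{i.e.,} \qquad nt(t-1) \;\le\; m(m-1).
\end{equation*}

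Finally I would substitute $m = nt/(r+1)$ and simplify. This yields
\begin{equation*}
nt(t-1) \;\le\; \frac{nt}{r+1}\left(\frac{nt}{r+1} - 1\right),
\end{equation*}
and dividing by $nt/(r+1) > 0$ gives $(t-1)(r+1) + 1 \le nt/(r+1)$, i.e.\ $nt \ge (r+1)[(t-1)(r+1)+1] = (r+1)(tr + t - r)$. Dividing by $t$ and rearranging yields $n \ge (r+1)(tr+t-r)/t = (r+1)^2 - (r+1)r/t$, as desired. There is no real obstacle here beyond carefully justifying the no-$4$-cycle property from strict-availability; both the counting and the algebra are routine.
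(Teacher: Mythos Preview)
Your proof is correct and follows essentially the same approach as the paper: both arguments use the disjoint-recovery-set property of an SA code to deduce that any two rows of $H_a$ share at most one common column, then double-count column/row-pair incidences to obtain $n\binom{t}{2} \le \binom{m}{2}$, and finally substitute $m(r+1)=nt$ to derive the bound. The paper's version is slightly terser (writing the count directly as $\binom{m}{2} \ge \sum_{j>i}\sum_l a_{il}a_{jl} = n\binom{t}{2}$), but the content is identical.
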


\begin{proof}
	Let $H_a$ and $A$ be $m \times n$ matrices as defined above in this section. From the requirements on an SA code, we must have: 
	\bean
	\scalemath{0.95}{{m \choose 2} \ \geq \ \sum_{j>i} \left( \sum_{l=1}^na_{i,l}a_{j,l} \right) 
		=  \sum_{l=1}^n \left(\sum_{j>i} a_{il}a_{jl} \right)  =  n {t \choose 2} .}
	\eean
	Using the relation $nt=m(r+1)$, we obtain 
	\bean
	\scalemath{0.95}{
		m  \geq  (t-1)(r+1) + 1, \ \ \ 
		n \geq  (r+1)^2 - \frac{(r+1)r}{t}. } 
	\eean
\end{proof}
Our interest is in the minimum-block-length case, where \eqref{eq:lb_n_ch5} holds with equality and for which a necessary condition is that $t \mid r(r+1)$. 

\begin{cor}
	For $1 \leq j \leq n$, let us define the sets $B_j \subseteq [m]$ by 
	\bean
	i \in B_j & \text{ iff } &  a_{i,j}=1 \text{ or equivalently, $h^a_{ij} \neq 0$}.
	\eean
	where $a_{i,j}$ and $h^a_{ij}$ are as defined before in this section. When equality holds in \eqref{eq:lb_n_ch5}, the sets $\{ B_j \}_{j=1}^n$ form a $(v,b,\hat{r},\hat{k}, \lambda)$ balanced incomplete block design (BIBD) having parameters 
	\bea
	v \ = \ m \ ,b \ = \ n, \ \hat{r} = (r+1), \ \hat{k}=t, \ \ \lambda=1. \label{eq:BIBD_ch5}
	\eea
	Conversely a BIBD with the parameter values as in \eqref{eq:BIBD_ch5} will yield an SA code \calc with parameters $(r,t)$ having block length satisfying \eqref{eq:lb_n_ch5} with equality where the parity check matrix of the code \calc is given by the incidence matrix of the BIBD.  The rate $R$ of \calc clearly satisfies $R \geq 1 -\frac{t}{(r+1)}$.  
\end{cor}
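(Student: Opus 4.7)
The plan is to prove both directions by tracing the block-length derivation carefully. For the forward direction, I will observe that equality in \eqref{eq:lb_n_ch5} forces equality in every step of the preceding derivation. The inequality ${m \choose 2} \geq \sum_{j>i}\sum_{l} a_{il}a_{jl}$ rests on the bound $\sum_l a_{il}a_{jl} \leq 1$ for every pair of rows $i \neq j$, which is the disjointness condition on recovery sets: two rows of $H_a$ that both have a non-zero entry in column $l$ correspond to two of the $t$ orthogonal parity checks protecting $c_l$, and by strict availability these two supports intersect exactly in $\{l\}$. Equality in \eqref{eq:lb_n_ch5} therefore forces $\sum_l a_{il}a_{jl} = 1$ for every pair $(i,j)$, which is precisely the BIBD condition $\lambda = 1$. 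The remaining BIBD parameters then read off directly from $H_a$: every column of $H_a$ has weight exactly $t$, giving blocks of size $\hat{k}=t$; every row of $H_a$ has weight exactly $r+1$, so each point lies in $\hat{r}=r+1$ blocks; and $v = m$, $b = n$ by construction. This establishes the $(m,n,r+1,t,1)$-BIBD structure.

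For the converse, given a BIBD with the parameters in \eqref{eq:BIBD_ch5}, I will take its $(m \times n)$ incidence matrix and use it as $H_a$ (over $\mathbb{F}_2$, or over any larger $\mathbb{F}_q$ by replacing the ones with arbitrary non-zero entries), setting $H_b$ to be empty and defining $\mathcal{C}$ as the null space of $H_a$. Verification of the SA property is then immediate: for any code symbol $c_i$, the $\hat{r}=t$ blocks containing the point $i$ correspond to $t$ rows of $H_a$ each of weight $r+1$ whose supports meet pairwise only at $\{i\}$ by $\lambda=1$; deleting $i$ from each support yields $t$ pairwise disjoint recovery sets, each of size $r$. The block length is $b=n$, and plugging the BIBD parameters into the standard identity $b\hat{k}=v\hat{r}$ gives $nt = m(r+1)$, so substituting $m=(r+1)(t-1)+1$ (which follows from $\lambda(v-1)=\hat{r}(\hat{k}-1)$) shows equality in \eqref{eq:lb_n_ch5}. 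Finally, since $H_a$ has only $m$ rows, the dimension satisfies $k \geq n-m$, giving $R \geq 1 - m/n = 1 - t/(r+1)$ via the same identity.

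I do not foresee a serious obstacle: the argument is essentially combinatorial bookkeeping. The only point requiring care is confirming in the forward direction that equality in the aggregated inequality, combined with the pairwise upper bound $\leq 1$, forces equality in every pairwise bound, so that $\lambda$ is truly constant and equal to $1$ (rather than merely averaging to $1$). A second minor point is that in the converse the field choice is essentially free, since the SA structural conditions on $H_a$ are conditions on the zero/non-zero pattern only; this also explains why the rate can only be lower-bounded rather than pinned down exactly, as adding additional parity checks in $H_b$ (not required by the SA property) could in principle reduce $k$.
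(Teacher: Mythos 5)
Your proposal is correct and takes the same essential route as the paper's proof, which is the one-line observation that equality in \eqref{eq:lb_n_ch5} holds iff every pair of distinct rows of $A$ has inner product exactly $1$; you have simply unpacked the resulting bookkeeping in both directions, correctly handling the "averaging forces equality" subtlety, the recovery of the BIBD parameters from the row/column weights, and the use of the identities $b\hat{k}=v\hat{r}$ and $\lambda(v-1)=\hat{r}(\hat{k}-1)$ for the converse.
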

\begin{proof}
	Proof follows from the fact that equality holds in \eqref{eq:lb_n_ch5} iff the inner product of every pair of distinct rows of $A$ is exactly equal to $1$.
\end{proof}
%\bean
%R & \geq & 1 -\frac{t}{(r+1)}.
%\eean
% 
%
\begin{example}\label{eg:ProjPlane_ch5}
	Let $Q=2^s, s \geq 2$ and let $PG(2,Q)$ denote the projective plane over $\mathbb{F}_Q$ having $Q^2+Q+1$ points and $Q^2+Q+1$ lines. Let the lines and points be indexed by $1,...,Q^2+Q+1$ in some order. Each line contains $Q+1$ points and there are $Q+1$ lines passing through a point. Set $n=Q^2+Q+1$. Let $H= H_a$ be the $(n \times n)$ parity-check matrix of a binary code \calc given by $h_{ij}=1$ if the $i^\text{th}$ point lies on the $j^\text{th}$ line and $h_{ij}=0$ otherwise. Then it is known that $H$ has rank $3^s+1$ over $\mathbb{F}_2$ (\cite{Smith}), and that \calc has minimum distance $d_{\min}=Q+2$, thus \calc is a binary $(n,k=n-(3^s+1),r=Q,t=Q+1)$ SA code. A plot comparing the rate of this code $\calc$ with the upper bound on rate in \eqref{eq:TamoBargRate_ch5} as a function of the parameter $s$, is shown in Fig.~\ref{fig:bibd_TB_bound}.  
\end{example}
While this code is well-known in the literature on LDPC codes, the intention here is to draw attention to the fact that this code is an SA having minimum block length. The parameters of a minimum-block-length code obtained by a similar construction involving lines in the affine plane and $(r,t)=(Q,Q)$ are given by $[n,k,d_{\min}]=[Q^2+Q,Q^2+Q-3^s,\geq Q+1]$, where $Q=2^s, s \geq 2$. 

\begin{conj}[Hamada-Sachar Conjecture \cite{AssKey}] 
	Every Projective plane of order $p^s$, p a prime, has p rank at least ${p+1 \choose 2}^s + 1$ with equality if and only if its desarguesian.
\end{conj}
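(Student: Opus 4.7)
The Hamada--Sachar conjecture is a notoriously difficult open problem in finite geometry, and any proof proposal must be understood as a \emph{plan of attack} rather than a short derivation. The natural setting is the $p$-ary code $\mathcal{C}_\pi = \mathrm{rowspace}_{\mathbb{F}_p}(A_\pi)$, where $A_\pi$ is the point--line incidence matrix of a projective plane $\pi$ of order $q=p^s$; the $p$-rank in the statement is $\dim_{\mathbb{F}_p} \mathcal{C}_\pi$, and Hamada's formula already establishes $\dim \mathcal{C}_{PG(2,q)} = \binom{p+1}{2}^s + 1$. The plan therefore has two logically separate parts: (i) a \emph{universal lower bound} $\dim \mathcal{C}_\pi \geq \binom{p+1}{2}^s + 1$ for every plane $\pi$ of order $q=p^s$; and (ii) a \emph{rigidity statement} that equality forces $\pi \cong PG(2,q)$.

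For part (i), my first approach would be to study the hull $\mathcal{H}_\pi = \mathcal{C}_\pi \cap \mathcal{C}_\pi^\perp$ and the dual code, since known structural inequalities for $\dim \mathcal{C}_\pi$ can be phrased as lower bounds on the weights and dimensions of certain subcodes. Concretely: the all-ones vector $\underline{j}$ lies in $\mathcal{C}_\pi$ (sum of lines through a point equals $(q+1)\underline{j}$), and the minimum weight of $\mathcal{C}_\pi$ is known to be $q+1$ with only scalar multiples of lines achieving it when $\pi = PG(2,q)$. I would try to extract an $\mathbb{F}_p$-algebra structure on $\mathcal{C}_\pi$ coming from the Bose--Mesner algebra of the point graph, decompose it into eigenspaces of the incidence operator $A_\pi A_\pi^T = qI + J$, and use the fact that over $\mathbb{F}_p$ the operator $qI + J$ has kernel dimensions governed by $p$-adic valuations of $q$. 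An iteration/tensor argument in $s$ (the exponent in $q=p^s$) is what produces the factor $\binom{p+1}{2}^s$ in the desarguesian case, so the hard part is showing that such a tensor-type lower bound survives the loss of coordinate structure in an arbitrary plane.

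For part (ii), the rigidity, I would attempt a characterization along the lines of ``codes meeting the Hamada bound must support a spread-like configuration of minimum-weight vectors which in turn forces coordinatization by a field.'' The template is the classical proof that a plane of order $q$ whose line code has minimum weight codewords only of the form scalar $\times$ line is desarguesian; one would try to deduce such a uniqueness-of-minimum-weight property purely from the dimension hitting the conjectured bound. This step is where I expect the principal obstacle: no one has succeeded in showing that dimension alone forces the geometric rigidity, and the conjecture is known only in small cases (e.g.\ $s=1$, and partial results for $p=2$).

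The main obstacle is therefore part (i): producing a universal, purely combinatorial lower bound of the form $\binom{p+1}{2}^s + 1$ without assuming coordinates. All existing partial results proceed either by (a) classifying planes of small order by computer, or (b) showing that certain substructures (Baer subplanes, unitals, ovals) in $\pi$ force extra codewords and hence extra dimensions. A realistic intermediate target I would aim at first is the weaker inequality $\dim \mathcal{C}_\pi \geq \binom{p+1}{2}^s + 1 - f(p,s)$ for an explicit error $f$, obtained via a Smith-normal-form analysis of $A_\pi$ modulo $p$ and an inductive peeling argument on subplanes; even this weaker result, if new, would be a meaningful contribution toward the conjecture.
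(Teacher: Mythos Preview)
The paper does not prove this statement: it is explicitly labeled a \emph{conjecture} (the Hamada--Sachar Conjecture, cited to Assmus--Key), and immediately after stating it the paper writes ``The above conjecture is as yet unproven, but if true, would show the projective-plane code described in Example~\ref{eg:ProjPlane_ch5} to have minimum block length and maximum rate\ldots''. So there is no ``paper's own proof'' to compare your proposal against; the paper merely invokes the conjecture conditionally.

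You correctly identify the statement as a well-known open problem and you frame your write-up as a plan of attack rather than a proof, which is the honest and appropriate response. Your outline (Hamada's formula for the desarguesian case, hull/eigenspace analysis of $A_\pi A_\pi^T = qI + J$, a hoped-for tensor/induction-in-$s$ lower bound, and a rigidity step via minimum-weight codewords) touches the standard ingredients in the literature, and you are candid that part~(i) is where every known approach stalls. That assessment is accurate: no general combinatorial argument is known that produces the $\binom{p+1}{2}^s+1$ lower bound without coordinatization, and your ``weaker inequality with explicit error $f(p,s)$'' target is a reasonable intermediate goal but is itself not settled. In short, nothing in your proposal is wrong, but nothing in it closes the gap either---which matches the status the paper reports.
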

The above conjecture is as yet unproven, but if true, would show the projective-plane code described in example \ref{eg:ProjPlane_ch5} to have minimum block length and maximum rate among all SA binary codes with $n=Q^2+Q+1, r=Q$ and $t=Q+1$ where  $Q=2^s$. 

\begin{example}\label{eg:sts}
	Consider the code $\calc$ obtained by making use of the Steiner Triple System (STS) associated to the point-line incidence matrix of $(s-1)$ dimensional projective space $PG(s-1,2)$ over $\mathbb{F}_2$ where this incidence matrix defines the parity check matrix $H$ of the code $\calc$. Hence once again the rows of $H$ correspond to points in the projective space and the columns to lines (2 dimensional subspace). Let $m=2^s-1$. From \cite{Doyen}, $\text{rank}(H)=m-s$. The code $\calc$ is a binary SA code having parameters $n=\frac{m(m-1)}{6},k=\frac{m(m-1)}{6}-m+s,d_{\min}=4$, $r=\frac{2^s-2}{2^2-2}-1 = 2^{s-1}-2,t=3$.   The corresponding comparison with the rate bound in \eqref{eq:TamoBargRate_ch5} appears in Fig.~\ref{fig:bibd_TB_bound}.    
\end{example} 
\begin{conj}[Hamada's Conjecture \cite{ColDin}]
	The p-rank of any design $D$ with parameters of a geometric design $G$ in PG(n,q) or AG(n,q) $(q=p^m)$ is at least the p-rank of $G$ with equality if and only if $D$ is isomoprhic to $G$.
\end{conj}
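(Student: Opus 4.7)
The plan is to approach Hamada's conjecture by analyzing the $\mathbb{F}_p$-linear code $C_D$ spanned by the characteristic vectors of the blocks of $D$, and comparing it with the analogous code $C_G$ for the geometric design $G$. Since the $p$-rank of a design equals $\dim_{\mathbb{F}_p} C_D$, the goal becomes showing $\dim C_D \geq \dim C_G$, with equality forcing an isomorphism of designs.

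First I would compute $\dim C_G$ in closed form. For $G$ the point–hyperplane (or more generally point–$d$-flat) design of $PG(n,q)$ or $AG(n,q)$ with $q=p^m$, this is classical: one decomposes the $\mathbb{F}_p$-permutation module of $\mathrm{GL}_{n+1}(\mathbb{F}_q)$ (resp.\ $\mathrm{AGL}_n(\mathbb{F}_q)$) on points into its composition factors, and reads off the dimension of the hyperplane code via the theory of Steinberg modules / Hamada's formula. This part is well understood and yields the target lower bound. Next, I would try to establish a minimality statement: among all designs with the given parameters $(v,b,\hat r,\hat k,\lambda)$, the geometric design $G$ realizes the smallest possible $p$-rank. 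A natural attack is via Smith normal form of the incidence matrix $N_D$ of $D$: use the identity $N_D N_D^{\mathsf{T}} = (\hat r - \lambda) I + \lambda J$ to bound the $p$-adic elementary divisors of $N_D$ from below by those of $N_G$, and then translate this into a bound on $\mathrm{rank}_p(N_D) = \dim C_D$. Finally, for the equality clause, I would show that attaining $\dim C_D = \dim C_G$ forces $C_D$ to be invariant under a large enough subgroup of the symmetric group on points that its support structure reconstructs the incidence of a projective/affine geometry, giving $D \cong G$.

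The main obstacle, and the reason this statement is labelled a conjecture rather than a theorem, is precisely step two: there is no known combinatorial or module-theoretic mechanism that forces an \emph{arbitrary} $(v,b,\hat r,\hat k,\lambda)$-design to have $p$-rank at least that of the geometric one. The $N N^{\mathsf T}$ identity controls the rational rank and the $\ell$-rank for $\ell \nmid (\hat r - \lambda)$, but it provides only weak information at the critical prime $p \mid (\hat r - \lambda)$, which is exactly where the bound must be proven. Moreover, counterexamples to natural strengthenings of the conjecture are known, and only partial results are available: the conjecture has been established for small parameters, for certain infinite families (e.g.\ small $n$ or $q=2$), and for designs possessing large automorphism groups.

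Realistically, the plan I would pursue is therefore a restricted one: rather than attacking the full conjecture, I would verify it for the specific parameter families arising in Examples~\ref{eg:ProjPlane_ch5} and \ref{eg:sts}, where $G$ is the point–line design of $PG(2,2^s)$ or $PG(s-1,2)$, by combining (i) Hamada's explicit formula for the $2$-rank of these geometric designs with (ii) the Bose--Connor / Assmus--Mattson style inequalities on the $2$-rank of $2$-designs with those parameters. This would suffice for the claim in the thesis that the SA codes of those two examples attain the minimum possible block length with maximum possible rate, without requiring a resolution of Hamada's conjecture in general.
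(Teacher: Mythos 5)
You have correctly recognized that this is a genuinely open conjecture and not something provable by the techniques you outline: the paper does not prove it either, but simply states it as Conjecture~\ref{conj:} (citing \cite{ColDin}) and then invokes, separately, the \emph{known} verified instance from \cite{Doyen} for the point--line design of $PG(s-1,2)$ in Example~\ref{eg:sts}, and the analogous Hamada--Sachar conjecture for Example~\ref{eg:ProjPlane_ch5}. Your diagnosis of where the difficulty lies is accurate: the $N_D N_D^{\mathsf T}=(\hat r-\lambda)I+\lambda J$ identity controls behaviour away from the critical prime $p\mid(\hat r-\lambda)$, and no general mechanism is known that forces an arbitrary design with the same parameters to have $p$-rank at least that of the geometric design. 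Your fallback plan --- verify the conjecture only for the two parameter families actually used in the thesis --- is essentially what the paper does, with one gap worth flagging: the claim that the point--line design of $PG(2,2^s)$ minimizes $2$-rank is itself still a conjecture (Hamada--Sachar), so for Example~\ref{eg:ProjPlane_ch5} the optimality assertion in the thesis remains conditional rather than unconditionally verifiable by the route you sketch; only the STS case (Example~\ref{eg:sts}) has been settled via \cite{Doyen}. If you pursue the restricted plan, be explicit that the $PG(2,2^s)$ case gives optimality only modulo Hamada--Sachar, matching what the paper itself states.
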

In \cite{Doyen}, this conjecture has been shown to hold true for the Steiner Triple system appearing in Example \ref{eg:sts}. Thus the code in the example has the minimum block length and maximum rate among all binary SA codes with $n=\frac{m(m-1)}{6},r=\frac{m-1}{2}-1,t=3$ where $m=2^s-1$.

\begin{figure}[ht!]
	\centering
	\begin{minipage}[c]{0.48\textwidth}
		\centering
		\includegraphics[width=3in]{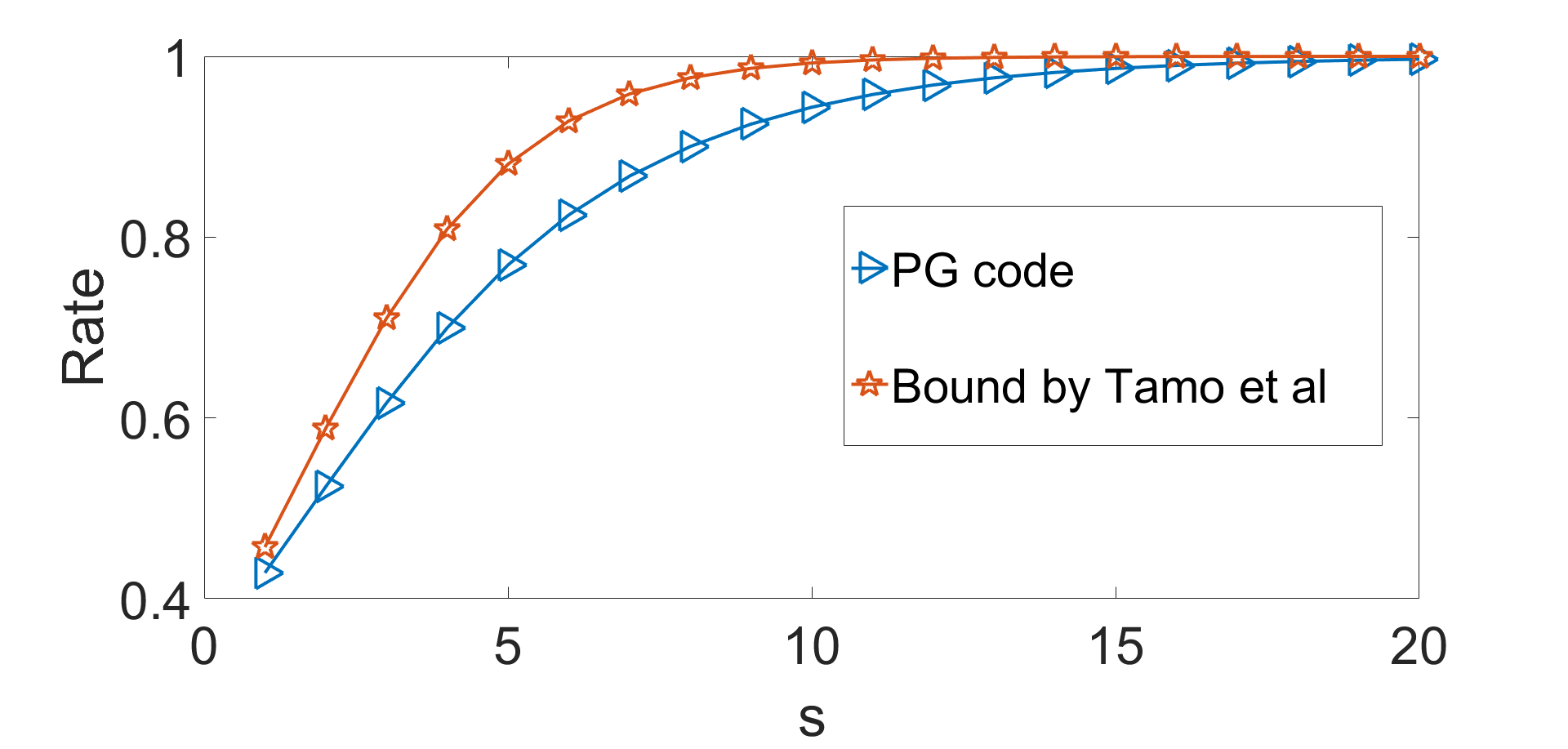}
	\end{minipage}
	\hspace{0.02\textwidth}
	\begin{minipage}[c]{0.48\textwidth}
		\includegraphics[width=3in]{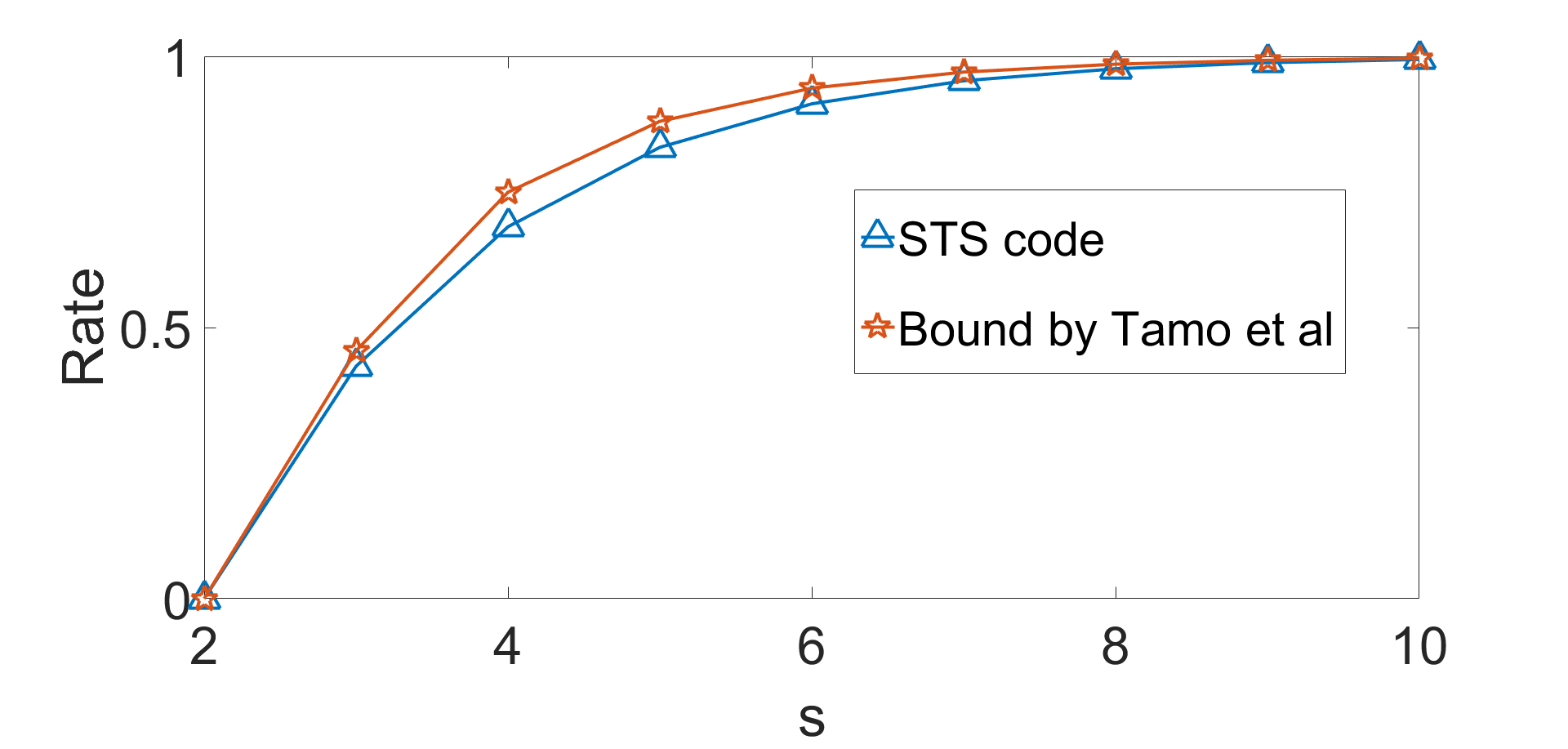}
	\end{minipage}
	\caption{Comparing the rates of the projective plane (PG) based codes (Example \ref{eg:ProjPlane_ch5}) and Steiner-triple-system (STS) based codes (Example \ref{eg:sts}) with the bound \eqref{eq:TamoBargRate_ch5} (Bound by Tamo et al) in \cite{TamBarFro}.}
	\label{fig:bibd_TB_bound}
\end{figure}

\section{Summary and Contributions} \label{sec:avail_summary}

In this chapter, we presented new field size dependent and field size independent upper bounds on minimum distance and dimension of an availability code based on shortening approach and MSW sequence. We also presented a new upper bound on rate of an SA code. Before presenting new results we presented the appropriate results from literature for comparison. We showed evidence that the new bounds presented in this chapter are the tightest known bounds. We then completely characterized the optimal tradeoff between rate and fractional minimum distance for a special class of availability codes defined by $\mathcal{C}_{\ell}$ as $\ell \rightarrow \infty$. We finally presented a lower bound on block length of an SA code and showed that codes achieving the lower bound on block length exists iff BIBD with certain parameters exists. We then presented two well known BIBD and SA codes based on them. These codes based on the two BIBD have maximum possible rate (for the minimum possible block length) based on a famous conjecture.

\chapter{Tight Bounds on the Sub-Packetization Level of MSR and Vector-MDS Codes} \label{ch:Subpkt}
In this chapter we will begin by introducing a class of codes called Regenerating (RG) codes. These codes are vector codes with block length $n$ and dimension $B$ where each vector code symbol have $\alpha$ components. The variable $\alpha$ is also known as sub-packetization. Each vector code symbol is stored in a distinct node or a disk. A special property of RG codes is that it can replace the contents of a failed node or a vector code symbol by downloading $\beta$ symbols from each node from a set of $d$ other nodes. This property is called repair property. These $\beta$ symbols could be any function of $\alpha$ symbols stored in the node. The RG codes are such that this value $d \beta$ is the smallest possible value. Another property of RG codes is that it can recover the $B$ message symbols by using the contents of any $k$ vector code symbols or contents of any $k$ nodes. This property is called data collection property. A special case of RG codes corresponding to the smallest possible value of $\alpha$ for a given value of $n,k,d,B$ is called Minimum Storage Regenerating (MSR) codes. It turns out that MSR codes are also vector MDS codes. In an MSR code, repair property holds for any failed node but if we restrict the repair property to any node belonging to a fixed subset of $w$ nodes, we call it MDS codes with optimal repair of $w$ nodes. Another special case of RG codes corresponding to the smallest possible value of $\beta$ for a given value of $n,k,d,B$ is called Minimum Bandwidth Regenerating (MBR) codes. It turns out that the value of $\alpha$ cannot be too small for the design of an MSR code or MDS code with optimal repair of $w$ nodes. Hence it is an interesting problem to find a lower bound on $\alpha$ as $\alpha$ plays an important role in the encoding and decoding complexity as well as in complexity in implementing the repair property. In this chapter, we will give a tight lower bound on $\alpha$ for an MSR or MDS code with optimal repair of $w$ nodes for the case when repair property is such that we directly use $\beta$ out of $\alpha$ symbols from each of $d$ nodes and carry out repair property. This repair property is called optimal access repair or repair by help-by-transfer. The fact that our bound is tight is shown by pointing out constructions from literature which achieves our lower bound. We also present a theorem on structure of these codes when $\alpha$ is exactly equal to our lower bound. We point out through an example that the structure we deduced is present in existing codes. We also very briefly describe a construction of MSR code with minimum possible value of $\alpha$ and field size of $O(n)$ and point out to our paper for further details. Below is a summary of our results in this chapter.

\paragraph{Organization of the Chapter} The chapter is organized as follows.  We begin by formally defining RG codes and MSR codes in Section~\ref{sec:RG_MSR_ch6}. Section \ref{sec:LowerBoundMSR_ch6} presents new lower bounds on $\alpha$ for MSR codes. Section \ref{sec:MDScodes_ch6} presents new lower bounds on $\alpha$ for MDS codes with optimal access repair of $w$ nodes. Section \ref{sec:struct_ch6} presents a theorem on structure of MDS codes with optimal access repair of $w$ nodes. In Section \ref{sec:MSRconstr_ch6}  we briefly gave an intuitive outline of a new construction of an optimal access MSR code with optimal sub-packetization.

\paragraph{Contributions}  Contributions of the thesis on the topic of MSR codes and MDS codes with optimal access repair of a set of $w$ nodes: 
\ben
\item A tabular summary of the new lower bounds on sub-packetization-level $\alpha$ derived here appears in Table~\ref{tab:results} and amounts to a summary of the main results presented in this chapter.
\item We prove new lower bounds on sub-packetization-level $\alpha$ of MSR codes and vector MDS codes with optimal repair of $w$ nodes with optimal access repair property. We first derive the lower bounds for the case of MSR codes for $d=n-1$ (Sections \ref{sec:dnGMSR_ch6},\ref{sec:dnMSR_ch6} ) and extend it to general $d$ (\ref{sec:arbdMSR_ch6}). We then derive the lower bounds for the case of MDS codes with optimal access repair of $w$ nodes for $d=n-1$ (Section  \ref{sec:dnMDS_ch6} ) and extend it to general $d$ (Section \ref{sec:arbdMDS_ch6}). We show that our lower bounds on $\alpha$ are tight by comparing with existing code constructions. 
\item We study the structure of a vector MDS code (Section \ref{sec:struct_ch6}) with optimal access repair of a set of $w$ nodes, and which achieve our lower bound on sub-packetization level $\alpha$. It turns out interestingly, that such a code must necessarily have a coupled-layer structure, similar to that of the Ye-Barg code \cite{YeBar_2}. 
\item  Finally we give a very brief intuitive overview of a new construction of optimal access MSR codes (Section \ref{sec:MSRconstr_ch6} ) with $O(n)$ field size with optimal sub-packetization for some specific parameters. This construction of ours appeared in \cite{VajBalKum}.
\een

Our approach to lower bounds on $\alpha$ is along the lines of that adopted in \cite{TamWanBru_access_tit} but with the difference that here we consider non-constant repair subspaces and consider all-node repair and also consider MDS codes with optimal access repair of $w$ nodes.

\section{Regenerating Codes} \label{sec:RG_MSR_ch6}

Data is stored in distributed storage by distributing it across disks or nodes. So one of the important problems in distributed storage is to repair a node on its failure. The coding theory community has come up with two types of coding techniques for this. They are called Regenerating (RG)  codes and Locally Recoverable (LR) codes. The focus in a Regenerating (RG) code is on minimizing the amount of data download needed to repair a failed node, termed the {\em repair bandwidth} while LR codes seek to minimize the number of helper nodes contacted for node repair, termed the {\em repair degree}. In a different direction, coding theorists have also re-examined the problem of node repair in RS codes and have come up with new and more efficient repair techniques. 
There are two principal classes of RG codes, namely Minimum Bandwidth Regenerating (MBR) and Minimum Storage Regeneration (MSR). These two classes of codes are two extreme ends of a tradeoff known as the storage-repair bandwidth (S-RB) tradeoff. MSR codes tries to minimize the storage overhead whereas MBR codes tries to minimize the repair bandwidth. There are also codes that correspond to the interior points of this tradeoff. The theory of regenerating codes has also been extended in several directions.

\begin{figure}[ht!]
	\centering
	\begin{minipage}[c]{0.43\textwidth}
		\begin{center}
			Parameters: $(\ (n,k,d), \ (\alpha, \beta), \ B, \ \mathbb{F}_q \ )$
		\end{center}
		\vspace*{-0.1in} 
		\begin{center}
			\begin{minipage}{1.4in}
				\begin{center}
					\includegraphics[trim= 0in  0in 0in 0in, width=1.4in]{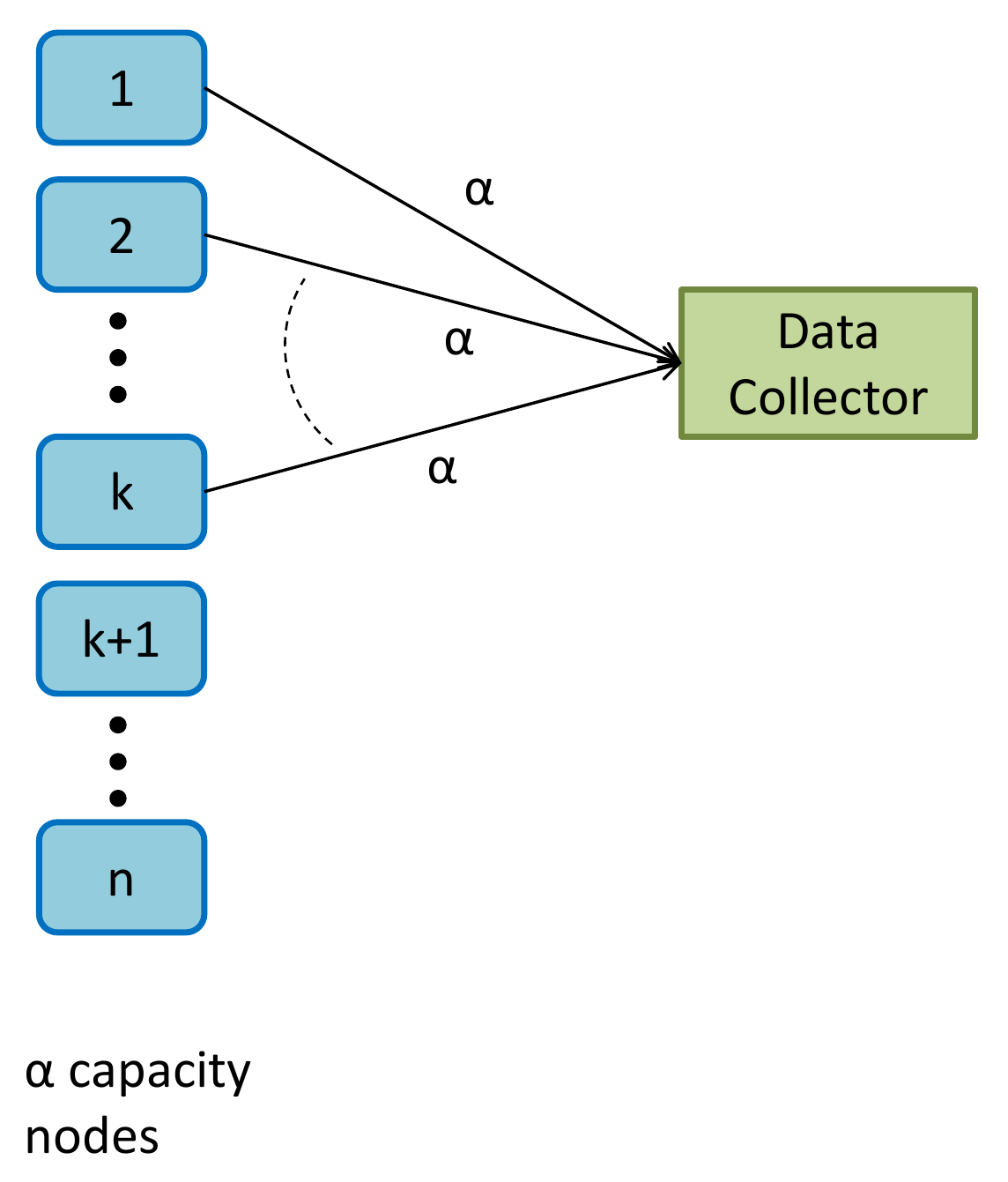} %trim l b r t
				\end{center}
				
			\end{minipage} \hspace*{0.2in} 
			\begin{minipage}{1.in}
				\begin{center}
					\includegraphics[trim= 0in  0in 0in 0in, width=1.1in]{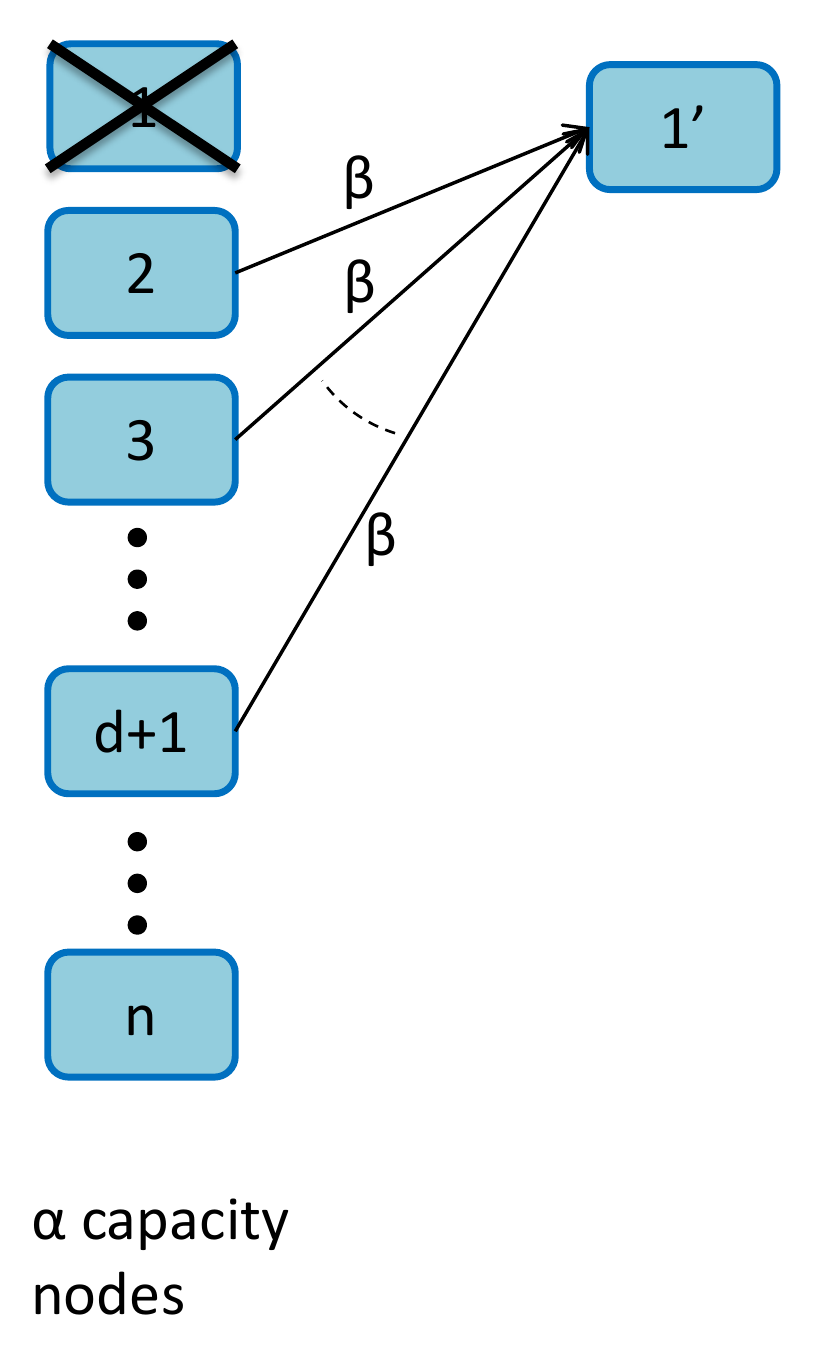} %trim l b r t
				\end{center}
			\end{minipage}
		\end{center}
		\caption{An illustration of the data collection and node repair properties of a regenerating code.} 
	\end{minipage}
	\hspace{0.02\textwidth}
	\begin{minipage}[c]{0.4\textwidth}
		\begin{center}
			\includegraphics[trim= 0in  0in 0in 0in, width=3in]{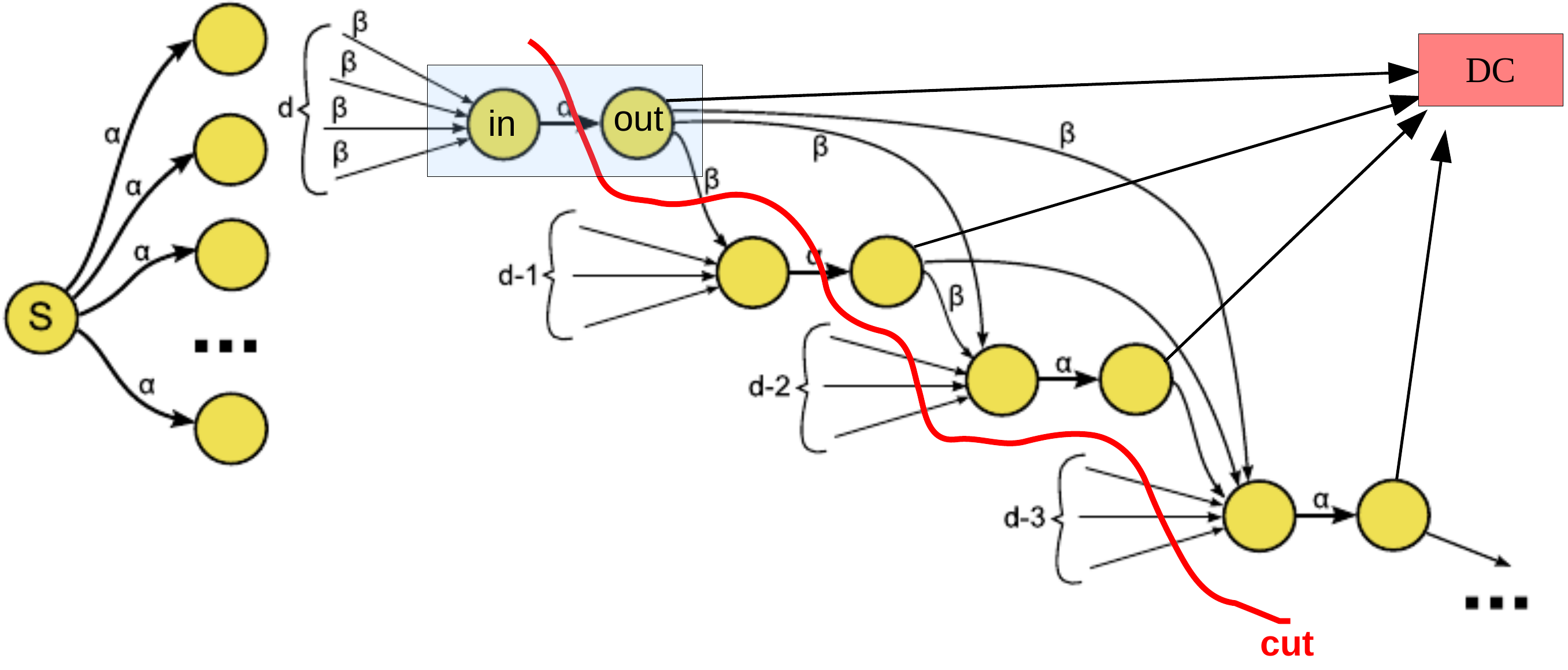} %trim l b r t
		\end{center}
		\caption{The graph behind the cut-set file size bound.\label{fig:cutsetbound}}
	\end{minipage}
\end{figure}

\begin{defn}[\cite{DimGodWuWaiRam}]
	Let \fq\ denote a finite field of size $q$. Then a regenerating (RG) code  \calc\ over \fq\ having integer parameter set $((n,k,d), (\alpha, \beta), B)$ where $1\le k \le n-1$, $k \le d \le n-1$, $\beta \le \alpha$, maps a file $\uu \in \fqB$ on to a collection $\{\underline{c}_i\}_{i=1}^n$ where $\underline{c}_i \in \fqalpha$ using an encoding map
	\bean
	E(\underline{u}) = [\underline{c}_1^T, \underline{c}_2^T, \cdots, \underline{c}_n^T]^T 
	\eean
	with the $\alpha$ components of $\underline{c}_i$ stored on the $i$-th node in such a way that the following two properties are satisfied:
	{\em Data Collection: }The message \uu\ can be uniquely recovered from the contents $\{c_{i_j}\}_{j=1}^k$ of any $k$ nodes.\\
	{\em Node Repair: } If the $f$-th node storing $\underline{c}_f$ fails, then a replacement node (the node which replaces the failed node $f$) can 
	\ben
	\item contact any subset $D \subseteq [n] \setminus \{f\}$ of the remaining $(n-1)$ nodes of size $|D| = d$,
	\item Each node $h \in D$ (called a helper node)  maps the $\alpha$ symbols $\underline{c}_{h}$ stored in it on to a collection of $\beta$ repair symbols $\underline{a}_{h,f}^D \in \fqbeta$,
	\item pool together the $d\beta$ repair symbols $\{\underline{a}_{h,f}^D : h \in D \}$ thus computed to use them to create a replacement vector $\hat{\underline{c}}_f \in \fqalpha$ whose $\alpha$ components are stored in the replacement node, in a such a way that the contents of the resultant nodes, 	with the replacement node replacing the failed node, once again forms a regenerating code. 
	\een
\end{defn}

A regenerating code is said to be {\em exact-repair} (ER) regenerating code if the contents of the replacement node are exactly same as that of the failed node, ie., $\hat{\underline{c}}_f = \underline{c}_f$. Else the code is said to be {\em functional-repair} (FR) regenerating code. A regenerating code is said to be linear if
\ben
\item $E(\uu_1 + \theta \uu_2) = E(\uu_1) + \theta E(\uu_2)$, $\uu_1, \uu_2 \in \fqB, \theta \in \fq$ and
\item the map mapping the contents $\underline{c}_h$ of the $h$-th helper node on to the corresponding $\beta$ repair symbols $\underline{a}_{h,f}^D$ is linear over \fq.
\een

Thus a regenerating code is a code over a vector alphabet \fqalpha\ and the quantity $\alpha$ is termed the {\em sub-packetization level} of the regenerating code. The total number $d\beta$ of \fq\ symbols $\{\underline{a}_{h,f}^D : h \in D \}$ transferred for repair of failed node $f$ is called the {\em repair bandwidth} of the regenerating code. Note that this number $d \beta$ is same for any failed node $f \in [n]$. The rate of the regenerating code is given by $R = \frac{B}{n\alpha}$. Its reciprocal $\frac{n\alpha}{B}$ is the {\em storage overhead}.

\subsection{Cut-Set Bound}
%\begin{wrapfigure}{R}{3.5in}%[ht!]
%	\begin{center}
%		\includegraphics[trim= 0in  0in 0in 0in, width=3.5in]{cut_set_bound.pdf} %trim l b r t
%	\end{center}
%	\caption{The graph behind the cut-set file size bound.\label{fig:cutsetbound}} 
%\end{wrapfigure}

Let us assume that \calc\ is a functional-repair regenerating code having parameter set: $((n,\ k, \ d) \ ,(\alpha, \ \beta), \ B)$. Since an exact-repair regenerating code is also a functional-repair code, this subsumes the case when \calc\ is an exact-repair regenerating code. Over time, nodes will undergo failures and every failed node will be replaced by a replacement node. Let us assume to begin with, that we are only interested in the behaviour of the regenerating code over a finite-but-large number $N >> n$ of node repairs. For simplicity, we assume that repair is carried out instantaneously. Then at any given time instant $t$, there are $n$ functioning nodes whose contents taken together comprise a regenerating code. At this time instant a data collector could connect to $k$ nodes, download all of their contents and decode to recover underlying message vector $\uu$. 
%\textbf{Let $\underline{{cw}_0}$ be the initial codeword of the regenerating code and let $\underline{{cw}_1},...\underline{{cw}_i},...,\underline{{cw}_N}$ be the codewords after each of the N repairs. Let $i \mod n$ be the node repaired at $i^{th}$ instant. Then since every codeword $\underline{{cw}_i}$ is a codeword of a regenerating code, data collection property must hold for every codeword $\underline{{cw}_i}, \forall 1 \leq i \leq N$}. 
Thus in all, there are at most $N{n \choose k}$ distinct data collectors which are distinguished based on the particular set of $k$ nodes to which the data collector connects. 

% 	\textbf{Let us form nodes corresponding to vector code symbols of $\underline{{cw}_i}$, $\forall 0 \leq i \leq N $ in a set of $N$ layers where $i^{th}$ layer has nodes corresponding to code symbols of $\underline{{cw}_i}$.  Nodes corresponding to  $\underline{{cw}_0}$ are the initial set of $n$ nodes in the first layer.}
Next, we create a source node $S$ that possesses the $B$ message symbols $\{u_i\}_{i=1}^B$, and draw edges connecting the source to the initial set of $n$ nodes. 
%\textbf{The edge between $S$ and each node representing a code symbol of $\underline{{cw}_0}$ represent the transfer of the $\alpha$ symbols corresponding to the vector code symbol from $S$ to that particular node} . 
We also draw edges between the $d$ helper nodes that assist a replacement node and the replacement node itself as well as edges connecting each data collector with the corresponding set of $k$ nodes from which the data collector downloads data. All edges are directed in the direction of information flow. We associate a capacity $\beta$ with edges emanating from a helper node to a replacement node and an $\infty$ capacity with all other edges. Each node can only store $\alpha$ symbols over $\fq$. We take this constraint into account using a standard graph-theory construct, in which a node is replaced by $2$ nodes separated by a directed edge (leading towards a data collector) of capacity $\alpha$. We have in this way, arrived at a graph (see Fig.\ref{fig:cutsetbound}) in which there is one source $S$ and at most $N{n \choose k}$ sinks $\{T_i\}$. 

Each sink $T_i$ would like to be able to reconstruct all the $B$ source symbols $\{u_i\}$ from the symbols it receives. This is precisely the multicast setting of network coding. A principal result in network coding tells us that in a multicast setting,  one can transmit messages along the edges of the graph in such a way that each sink $T_i$ is able to reconstruct the source data, provided that the minimum capacity of a cut separating $S$ from $T_i$ is $\ge B$.

A cut separating $S$ from $T_i$ is simply a partition of the nodes of the network into $2$ sets: $A_i$ containing $S$ and $A_i^c$ containing $T_i$. The capacity of the cut is the sum of capacities of the edges leading from a node in $A_i$ to a node in $A_i^c$. A careful examination of the graph will reveal that the minimum capacity $Q$ of a cut separating a sink $T_i$ from source $S$ is given by $
Q = \sum\limits_{i=0}^{k-1} \min \{\alpha, (d-i)\beta\}
$
(see Fig.\ref{fig:cutsetbound} for an example of a cut separating source from sink). This leads to the following upper bound on file size \cite{DimGodWuWaiRam}:
\bea
\label{eq:cutsetbound}
B &\le& \sum\limits_{i=0}^{k-1} \min \{\alpha, (d-i)\beta\}.
\eea
\paragraph{Interpretaion of Bound on File Size:} The above bound could be interpreted as satisfying data collection property by collecting the information on a set of $k$ nodes. First we collect $\min \{\alpha,d\beta \}$ symbols for recovering the information of node one. Since we have information on node one the set of symbols needed to recover the information in another node is at most $\min \{ \alpha,(d-1)\beta \}$. Continuing this way we can collect the information on node $i$ by using the information we already have on node $1$ to $i-1$ with excess information of at most $\min \{ \alpha,(d-(i-1))\beta \}$. The above bound hence could be interpreted as saying that the actual file size is upper bound by the number of symbols with which we can recover it.

Network coding also tells us that when only a finite number of regenerations take place, this bound is achievable and furthermore achievable using linear network coding, i.e., using only linear operations at each node in the network when the size $q$ of the finite field \fq\ is sufficiently large. 
%{\color{red} this line seems same as the following one. Even for the infinite regeneration case, the right hand side of \eqref{eq:cutsetbound} continues to represent the minimum capacity of a cut separating source $S$ from sink $T_i$}.
In a subsequent result \cite{Wu}, Wu established using the specific structure of the graph, that even in the case when the number of sinks is infinite, the upper bound in \eqref{eq:cutsetbound} continues to be achievable using linear network coding.

In summary, by drawing upon network coding, we have been able to characterize the maximum file size of a regenerating code given parameters $\{k, d, \alpha, \beta\}$ for the case of functional repair when there is constraint placed on the size $q$ of the finite field $\fq$. Note interestingly, that the upper bound on file size is independent of $n$. Quite possibly, the role played by $n$ is that of determining the smallest value of field size $q$ for which a linear network code can be found having file size $B$ satisfying \eqref{eq:cutsetbound}.
A functional regenerating code having parameters: $(n, \ k, \ d) \ (\alpha, \ \beta, \ B)$ is said to be optimal (a) the file size $B$ achieves the bound in \eqref{eq:cutsetbound} with equality and (b) reducing either $\alpha$ or $\beta$ will cause the bound in \eqref{eq:cutsetbound} to be violated.

\subsection{Storage-Repair Bandwidth Tradeoff}

\begin{wrapfigure}{R}{2.4in}
	\bc
	\includegraphics[width=2.4in]{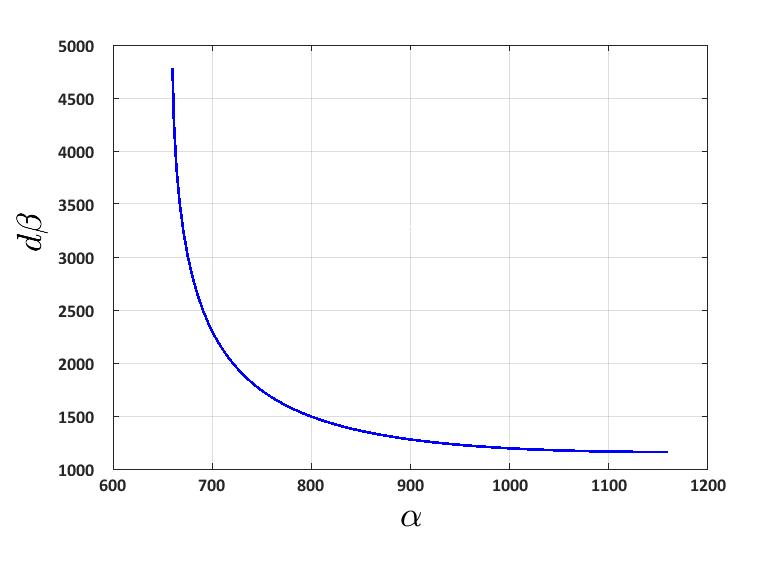}
	\caption{Storage-repair bandwidth tradeoff. Here, $(n = 60, k = 51, d=58, B = 33660)$.\label{fig:srb_tradeoff}}
	\ec
\end{wrapfigure}

We have thus far, specified code parameters $(k,d) (\alpha, \beta)$ and asked what is the largest possible value of file size $B$. If however, we fix parameters $(n,k,d,B)$ and ask instead what is the smallest values of $(\alpha, \beta)$ for which one can hope to achieve \eqref{eq:cutsetbound}, it turns out as might be evident from the form of the summands on the RHS of \eqref{eq:cutsetbound}, that there are several pairs $(\alpha, \beta)$ for which equality holds in \eqref{eq:cutsetbound}.  In other words, there are different flavors of optimality. 

For a given file size $B$, the storage overhead and normalized repair bandwidth are given respectively by $\frac{n\alpha}{B}$ and $\frac{d\beta}{B}$. Thus $\alpha$ reflects the amount of storage overhead while $\beta$ determines the normalized repair bandwidth. For fixed $(n,k,d,B)$ there are several pairs $(\alpha, \beta)$ for which equality holds in \eqref{eq:cutsetbound}. These pairs represent a tradeoff between storage overhead on the one hand and normalized repair bandwidth on the other as can be seen from the example plot in Fig:~\ref{fig:srb_tradeoff}.  Clearly, the smallest value of $\alpha$ for which the equality can hold in \eqref{eq:cutsetbound} is given by $\alpha = \frac{B}{k}$. Given $\alpha = \frac{B}{k}$, the smallest permissible value of $\beta$ is given by $\beta = \frac{\alpha}{d-k+1}$. This represents the minimum storage regeneration point and codes achieving \eqref{eq:cutsetbound} with $\alpha = \frac{B}{k}$ and $\beta = \frac{\alpha}{d-k+1}$ are known as minimum storage regenerating (MSR) codes.  At the other end of the tradeoff, we have the minimum bandwidth regenerating (MBR) code whose associated $(\alpha, \beta)$ values are given by $\beta = \frac{B}{dk-{k \choose 2}}$, $\alpha = d\beta$.
\begin{note}
	Since a regenerating code can tolerate $(n-k)$ erasures by the data collection property, it follows that the minimum Hamming weight $d_{\min}$ of a regenerating code must satisfy $d_{\min} \ge (n-k+1)$.  By the Singleton bound, the largest size $M$ of a code of block length $N$ and minimum distance $d_{\min}$ is given by $M \le Q^{n - d_{\min} + 1} \le Q^k$, where $Q$ is the size of alphabet of the code. Thus $Q = q^{\alpha}$ in the case of regenerating code and it follows therefore that size $M$ of a regenerating code must satisfy $M \le q^{k\alpha}$, or equivalently $q^B \le q^{k\alpha}$ or $B \le k\alpha$. But $B=k\alpha$ in the case of MSR code and it follows that an MSR code is an MDS code over a vector alphabet. Such codes also go by the name MDS array code.
\end{note}

From a practical perspective, exact-repair regenerating codes are easier to implement as the contents of the $n$ nodes in operation do not change with time. Partly for this reason and partly for reasons of tractability, with few exceptions, most constructions of regenerating codes belong to the class of exact-repair regenerating codes. Examples of functional-repair regenerating code include the $d=(k+1)$ construction in \cite{ShaRasKumRam_ia} as well as the construction in \cite{HuChenLee_NCcloud}.

%\paragraph{Exact-Repair Regenerating Codes} 
Early constructions of regenerating codes focused on the two extreme points of the storage-repair bandwidth (S-RB) tradeoff, namely the MSR, MBR points. The storage industry places a premium on low storage overhead. This is not too surprising, given the vast amount of data, running into the exa-bytes, stored in today's data centers. In this connection, we note that the maximum rate of an MBR code is given by:
\bean
R_{\text{MBR}} = \frac{B}{n\alpha} = \frac{(dk - {k \choose 2})\beta}{n d\beta} = \frac{dk-{k \choose 2}}{nd},
\eean
which can be shown to be upper bounded by $R_{\text{MBR}} \le \frac{1}{2}$ which is achieved when $k=d=(n-1)$. This makes MSR codes of greater practical interest when minimization of storage overhead  is of primary interest. 

\subsection{MSR Codes} 
An $[n,k]$ MSR code is an $[n,k]$ MDS codes over the vector alphabet $\fq^{\alpha}$ satisfying the additional constraint that a failed node can be repaired by contacting $d$ helper nodes, while downloading $\beta$ symbol over \fq\ from each helper node.  In this case, we have $B=k \alpha$ of message symbls over \fq\ encoded by an MSR code.    Thus, MSR codes are characterized by the parameter set 
\bean
\left\{ (n,k,d),\ (\alpha,\beta),\ B, \ \fq) \right\}, 
\eean
where 
\bit 
\item \fq\ is the underlying finite field, 
\item $n$ is the number of code symbols $\{\underline{c}_i\}_{i=1}^n$ each stored on a distinct node or storage unit and 
\item each code symbol $\underline{c}_i$ is an element of  $\fq^{\alpha}$. 
\eit

As we have already seen, the number $\beta$ of symbols downloaded from each helper node in an MSR code is given by 
\bean
\beta = \frac{\alpha}{d-k+1}. 
\eean

As we have seen, each code symbol \uc{i} is typically stored on a distinct node. Thus the index $i$ of a code symbol is synonymous with the index of the node upon which that code symbol is stored. Throughout this chapter, we will focus on a linear MSR code i.e., the encoding is done by:  $[\uc{1}^T,\hdots,\uc{n}^T]^T = G\underline{m}$ where $G$ is an $(n\alpha \times k\alpha)$ generator matrix over \fq\ and $\underline{m}$ is a $(k\alpha \times 1)$ message vector over \fq\ comprising of $B$ message symbols encoded by the MSR code.

\subsubsection{Linear Repair}  

Throughout this chapter, we will assume linear repair of the failed node.   By linear repair, we mean that the $\beta$ symbols passed on from a helper node $i$ to the replacement of a failed node $j$ are obtained through a linear transformation:
\bean
\uc{i} \rightarrow S_{(i,j)} \uc{i} ,
\eean
where $S_{(i,j)}$ is an $(\beta \times \alpha)$ matrix over \fq. The matrix $S_{(i,j)}$ is called a repair matrix.  While the matrix $S_{(i,j)}$ could potentially be a function of which nodes are participating in the repair of failed node $j$, this chapter is mostly concerned with the case $d=(n-1)$, in which case, all the remaining $(n-1)$ nodes participate as helper nodes. 

As a result, when $d=n-1$, the input to the replacement of failed node $j$ is the set:
\bean
\left\{   S_{(i,j)} \uc{i}\mid    i \in [n], \ i \neq j             \right\} .
\eean 
By linear repair of a node $j$ we also mean: the code symbol $\uc{j} = f_j \left( \left\{   S_{(i,j)} \uc{i}\mid    i \in [n], \ i \neq j  \right\} \right)$ where $f_j:\fq^{d\beta} \rightarrow \fq^{\alpha}$ is a deterministic linear function.
We refer to an MSR code as an {\em optimal access} MSR code if the matrix $S_{(i,j)}$ has rows picked from the standard basis $\{e_1,\hdots,e_{\alpha}\}$ for $\fq^{\alpha}$., i.e., the symbols over \fq\ downloaded for the repair of a failed node from node $i$ are simply a subset of size $\beta$, of the $\alpha$ components of the vector $\underline{c}_i$. This property of repair is also termed as as help-by-transfer repair. 

When $S_{(i,j)}=S_j$ $,\forall i \neq j \in [n]$, we say repair matrices are independent of helper-node index $i$ (or constant repair matrix case).
Now lets consider the MSR code for any $d(\leq n-1)$. Let, $S^D_{(i,j)}$ be $(\beta \times \alpha)$ the repair matrix where $S^D_{(i,j)} \uc{i}$ is downloaded for the repair of the node $j$ when the helper nodes ($d$ nodes from which data is downloaded for the repair of node $j$) belong to the set $D$. 
As a result, the input to the replacement of failed node $j$ is the set:
\bean
\left\{   S^D_{(i,j)} \uc{i}\mid    i \in D \right\},
\eean 
for helper nodes in the set $D \subseteq [n]-\{j\}$ such that $|D|=d$. Here also: the code symbol $\uc{j} = f_{D,j} \left( \left\{   S^D_{(i,j)} \uc{i}\mid    i \in D, \ i \neq j  \right\} \right)$ where $f_{D,j}:\fq^{d\beta} \rightarrow \fq^{\alpha}$ is a deterministic linear function.
When $S^D_{(i,j)}=S_{(i,j)}$, we say that repair matrices are independent of identity of remaining helper nodes. Similar to $d=n-1$ case, the term {\em optimal access} MSR code or repair by help-by-transfer for any $d$ means that the rows of $S^D_{(i,j)}$ are picked from standard basis $\{e_1,\hdots,e_{\alpha}\}$ of $\fq^{\alpha}$.  We drop the superscript $D$ in  $S^D_{(i,j)}$ when $d=n-1$.

An open problem in the literature on regenerating codes is that of determining the smallest value of sub-packetization level $\alpha$ of an optimal-access (equivalently, help-by-transfer) MSR code, given the parameters $\{(n,k,d=(n-1)\}$. 
This question is addressed in \cite{TamWanBru_access_tit}, where a lower bound on $\alpha$ is given for the case of a regenerating code that is MDS and where only the systematic nodes are repaired in help-by-transfer fashion with minimum repair bandwidth.   In the literature these codes are often referred to as optimal access MSR codes with systematic node repair. The authors of  \cite{TamWanBru_access_tit} establish that:
\bean
\alpha \geq r^{\frac{k-1}{r}},
\eean
in the case of an optimal access MSR code with systematic node repair. 

In a slightly different direction, lower bounds are established in \cite{GopTamCal} on the value of $\alpha$ in a general MSR code that does not necessarily possess the help-by-transfer repair property. In \cite{GopTamCal} it is established that:
\bean
k \leq 2 \log_2(\alpha) (\lfloor \log_{\frac{r}{r-1}} (\alpha) \rfloor + 1),
\eean
while more recently, in \cite{HuaparXia} the authors prove that:
\bean
k \leq 2 \log_r(\alpha) (\lfloor \log_{\frac{r}{r-1}} (\alpha) \rfloor + 1). 
\eean

\subsubsection{Desirable Properties of MSR Code}
Desirable properties of MSR codes are high rate, low field size, low sub-packetization, repair property for all nodes (All node repair property), help by transfer repair of failed node and optimal update (optimal update is not defined here. Please see \cite{TamWanBru_access_tit}.). The properties mentioned above such as high rate, low field size, low sub-packetization, repair property for all nodes are clearly desirable. Help-by-Transfer repair is useful because a node may not have computational power on its own and hence it is desirable that repair only requires nodes to access its data and transmit it. In this chapter, we present new results on lower bounds on sub-packetization of MSR codes with help-by-transfer repair property.

\subsubsection{Overview of Existing Constructions of MSR Code}
The existing constructions of MSR codes are summarized in the Table \ref{Const_MSR_Codes}.
\begin{table}[ht!]
	
	\bean
	\scalebox{0.7}{$
		\begin{array}{||l|c|c|c|c|c||}
		\hline \hline
		\text{MSR Code} &  \text{Parameters} & \alpha & \text{Field Size} & \text{All Node} & \text{Optimal} \\
		&&&& \text{Repair}& \text{ Access} \\ \hline
		\text{Low Rate Constructions} \\
		\hline
		\text{\cite{ShaRasKumRam_ia} Shah et.al (using Interference Alignment)} & (n,k,d=n-1\ge 2k-1) & r & 2r & \text{No} & \text{Yes} \\ \hline
		\text{\cite{SuhRam} Suh et.al (using Interference Alignment)} & (n, k, d \ge 2k-1) & s & 2r & \text{Yes} & \text{No} \\
		& (n, k \le 3, d) & & & & \\ \hline
		\text{\cite{RasShaKum_pm} Rashmi et.al (Product matrix Framework)}  & (n \ge 2k-1, k, d) & r & n & \text{Yes} & \text{No} \\  \hline
		\text{High Rate Constructions} \\
		\hline
		\text{\cite{PapDimCad} Papailiopoulos et.al } & (n,k,d=n-1)& r^k & O(r) &\text{No} &  \text{No} \\ \hline 
		\text{\cite{TamWanBru} Tamo et.al (Zig Zag Codes)}  & (n,k,d=n-1)  & r^{k+1} & \leq 4 \text{ when } r \leq 4, & \text{Yes} & \text{Yes} \\
		\text{\cite{WangTamoBruck} Wang et.al} & & &\text{else non-explicit} & & \\
		\hline
		\text{\cite{YeBar_1} Ye et.al (for any $d$)} & (n,k,d) & s^n & sn & \text{Yes} & \text{Yes} \\ \hline
		\text{\cite{YeBar_2} Ye et.al} & & & & & \\ 
		\text{\cite{SasVajKum_arxiv} Sasidharan et.al (Clay Code)} & (n, k, d=n-1)& r^{\left\lceil\frac{n}{r}\right\rceil} & n & \text{Yes} & \text{Yes} \\
		\text{\cite{LiTangTian} Li et.al} & &  & & & \\ \hline
		\text{\cite{VajBalKum} Vajha et.al} & (n,k,d) &  & & &  \\
		& d \in \{k+1, k+2, k+3\} & s^{\left\lceil \frac{n}{s} \right\rceil} & O(n) & \text{Yes} &\text{Yes}\\ \hline
		\hline \hline
		\end{array}	
		$}
	\eean 
	\caption{A list of MSR constructions and the parameters. In the table $r=n-k$, $s=d-k+1$ and when All Node Repair is No, the constructions are systematic MSR. By `non-explicit' field-size, we mean that the order of the size of the field from which coefficients are picked is not given explicitly. } \label{Const_MSR_Codes}
\end{table}
There are several known MSR constructions. In \cite{ShaRasKumRam_ia,SuhRam} authors provide low rate constructions with small sub-packetization and small field size. The product matrix construction presented in \cite{RasShaKum_pm} is for any $2k-2 \le d \le n-1$ and low sub-packetization and low field size. In \cite{PapDimCad}, the authors provide a high-rate MSR construction using Hadamard designs for systematic node repair. In \cite{TamWanBru}, high-rate systematic node repair MSR codes called Zigzag codes were constructed for $d=n-1$ with help-by-transfer repair property. These codes however had large field size and sub-packetization that is exponential in in $k$ but provided low field size for $r \leq 4$. This construction was extended in \cite{WangTamoBruck} to enable the repair of all nodes. The existence of MSR codes for any value of $(n,k,d)$  as $\alpha$ tends to infinity is shown in \cite{CadJafMalRamSuh}.  In \cite{YeBar_1} authors provided high rate construction for any $d$ but with large sub-packetization and small field size with help-by-transfer repair property. Construction of MSR codes with small sub-packetization with small field size with help-by-transfer repair property is presented in \cite{YeBar_2}, \cite{SasVajKum_arxiv}, \cite{LiTangTian}. As we have seen already, in \cite{TamWanBru_access_tit},\cite{GopTamCal},\cite{HuaparXia} lower bounds for sub-packetization($\alpha$) were presented. In \cite{TamWanBru_access_tit} a lower bound $\alpha \ge r^{\frac{k-1}{r}}$ for the special case of an optimal-access MSR code with systematic node repair was provided. This bound was recently improved by Balaji et al. in \cite{BalKum_subpkt} (These lower bounds on sub-packetization are presented in this chapter. A summary of lower bounds presented in this chapter is in Table~\ref{tab:results}.). Results in Table~\ref{tab:results}, proves sub-packetization-optimality of the explicit codes provided in \cite{YeBar_2}, \cite{SasVajKum_arxiv}, \cite{LiTangTian} for $d=n-1$ and the non-explicit code in \cite{RawKoyVis_msr} for $d < n-1$.

Though the literature contains prior optimal access constructions for $d < n-1$, the resultant codes were either non-explicit \cite{RawKoyVis_msr}, or else have large sub-packetization \cite{YeBar_1}, or are of high field size \cite{GopFazVar}. In the present chapter, we also give a brief overview of our optimal-access MSR codes that have optimal sub-packetization for any $(n,k,d=k+1, k+2, k+3)$ and which can be constructed using a field of size $q = O(n)$.

\begin{table}[h!] \caption{A tabular summary of the new lower bounds on sub-packetization-level $\alpha$ contained in the present chapter.   In the table, the number of nodes repaired is a reference to the number of nodes repaired with minimum possible repair bandwidth $d\beta$ and moreover, in help-by-transfer fashion.  An * in the first column indicates that the bound is tight, i.e., that there is a matching construction in the literature.  With respect to the  first entry in the top row, we note that all MSR codes are MDS codes.   For the definition of an MDS code with optimal access repair for any node belonging to a set of $w$ nodes please see Section \ref{sec:MDScodes_ch6}.\label{tab:results}}
	\begin{center}
		\resizebox{\columnwidth}{!}{%
			\begin{tabular}{ | P{4em} | P{1cm}| P{3cm} |  P{4cm}| P{7cm} |  P{3cm}| P{3cm} |} 
				\hline
				\shortstack{MSR \\ or MDS \\  Code ?} & $d$ & \shortstack{No. of Nodes \\ Repaired} & \shortstack{Assumption on \\ Repair Matrices \\ $S_{(i,j)}$ } & \shortstack{Lower Bound on $\alpha$ \\ and Constructions achieving our \\ lower bound on $\alpha$} & \shortstack{Reference \\ in this chapter} & \shortstack{Previous \\ Known Bound} \\ 
				\hline
				\shortstack{MSR* } & $n-1$ & $n$ & none & \shortstack{\ \ \ \\$\alpha \geq  \min \{r^{\lceil \frac{n-1}{r} \rceil},r^{k-1}\}$ \\ Constructions (when $r \nmid (n-1)$): \cite{YeBar_2,SasVajKum_arxiv}} & Theorem \ref{thm:main} & $\alpha \geq r^{\frac{k-1}{r}}$ \\ 
				\hline
				\shortstack{MSR*}  & $n-1$ & $n$ & \shortstack{ independent of \\ helper-node index $i$} & \shortstack{\ \ \ \\ $\alpha \geq  \min \{ r^{\lceil \frac{n}{r} \rceil},r^{k-1}\}$ \\ Constructions: \cite{YeBar_2,SasVajKum_arxiv}} & Corollary \ref{sb_5} & $\alpha \geq r^{\frac{k}{r}}$ \\
				\hline        
				\shortstack{MSR*}  & any $d$ & $n$ & \shortstack{\ \ \ \\ independent \\ of identity of \\ remaining \\ helper nodes } &  \shortstack{ $s=d-k+1$ \\ $\alpha \geq  \min \{ s^{\lceil \frac{n-1}{s} \rceil},s^{k-1} \}$ \\ Construction: \cite{RawKoyVis_msr}} & Corollary \ref{sb_6} & none\\
				\hline       
				MDS* & $n-1$ & $w$ ($\leq n-1$) & none & \shortstack{$\alpha \geq  \left\{ \begin{array}{rl} \min \{ r^{\lceil \frac{w}{r} \rceil},r^{k-1} \}, \  & w > (k-1)  \\
					r^{\lceil \frac{w}{r} \rceil}, \ & w \leq (k-1) . \end{array} \right. $ \\ Construction: \cite{LiTangTian}} & Corollary \ref{cor:node_subset} & \shortstack{for $w=k$ \\ $\alpha \geq r^{\frac{k-1}{r}}$}  \\
				\hline
				MDS & any $d$ & $w$ ($\leq d$) & none &  \shortstack{ \ \ \ \\ $s=d-k+1$ \\ $\alpha \geq  \left\{ \begin{array}{rl} \min \{ s^{\lceil \frac{w}{s} \rceil},s^{k-1}\}, \  & w > (k-1) \\
					s^{\lceil \frac{w}{s}\rceil}, \ & w \leq (k-1) . \end{array} \right. $} & Corollary \ref{sb_4} & none \\ 
				\hline
			\end{tabular}
		}
	\end{center}
\end{table}

\paragraph{Comparison of our Lower Bound on $\alpha$ with Exisitng Code Constructions:}
\ben
\item When $r \nmid n-1$, our bound on $\alpha$ (Theorem \ref{thm:main}) for optimal access MSR code with $d=n-1$ becomes:
\bean
\alpha \geq  \min \{r^{\lceil \frac{n}{r} \rceil},r^{k-1}\} 
\eean
For $k \geq 5$, this reduces to $\alpha \geq  r^{\lceil \frac{n}{r} \rceil}$. 
The latter lower bound on $\alpha$ is achievable by the constructions in \cite{YeBar_2,SasVajKum_arxiv}. Hence our lower bound on $\alpha$ is tight.
Although our bound on $\alpha$ is shown to be tight only for $(n-1)$ not a multiple of $(n-k)$, the lower bound is valid for all parameters (with $d=n-1$) and when $r$ divides $(n-1)$.
\item Our bound on $\alpha$ (Corollary \ref{cor:node_subset} ) for MDS code with optimal access repair (repair with minimum repair bandwidth and help-by-transfer repair) for a failed node when it belongs to a fixed set of $w (\leq n-1)$ nodes (see Section \ref{sec:MDScodes_ch6}) is:
\bean
\alpha \geq  \left\{ \begin{array}{rl} \min \{ r^{\lceil \frac{w}{r} \rceil},r^{k-1} \}, \  & w > (k-1) \\
	r^{\lceil \frac{w}{r} \rceil}, \ & w \leq (k-1) . \end{array} \right. 
\eean
The above bound for $k \geq 5$ becomes $\alpha \geq r^{\lceil \frac{w}{r}\rceil}$.
This lower bound is achieved by the construction given in \cite{LiTangTian}. Hence our lower bound on $\alpha$ for the repair of $w$ nodes is tight.
\item The constructions in \cite{YeBar_2,SasVajKum_arxiv} have repair matrices that are independent of the helper node index $i$ i.e., $S_{(i,j)}=S_j$ and has sub-packetization $\alpha = r^{\lceil \frac{n}{r} \rceil}$ which achieves our lower bound on $\alpha$ (Corollary \ref{sb_5}) under the assumption that $S_{(i,j)}=S_j$ for an optimal access MSR code with $d=n-1$. 
\item Our bound given in Corollary \ref{sb_6} is achieved by construction in \cite{RawKoyVis_msr} when $(d-k+1)$ does not divide $n-1$ under the assumption $S^D_{i,j}=S_{i,j}$.
\een
Hence our lower bounds on $\alpha$ are tight for four cases.

\section{Bounds on Sub-Packetization Level of an Optimal Access MSR Code} \label{sec:LowerBoundMSR_ch6}
In this section, we derive three lower bounds on sub-packetization of optimal access MSR codes for three different cases. First, we derive a lower bound on sub-packetization of optimal access MSR codes for $d=n-1$ case with no assumptions. Second, we derive a more tighter lower bound on sub-packetization of optimal access  MSR codes for $d=n-1$ case with assumptions on repair matrices. Third, we derive a lower bound on sub-packetization of MSR codes for $d<n-1$ case with assumptions on repair matrices. Tightness of each of these bounds is discussed already. The proof of the last two bounds follows as a corollary of the first bound. The first bound is proved by analysing the intersection of row space of repair matrices.
\subsection{Notation}
We adopt the following notation throughout the chapter. 
\ben
\item Given a matrix $A$, we use $<A>$ to refer to the row space of the matrix $A$, 
\item Given a subspace $V$ and a matrix $A$, by $VA$ we will mean the subspace 
$\{ \underline{v} A \mid \underline{v} \in V\}$ obtained through transformation of $V$ by $A$.
\bit
\item Thus for example, $(\bigcap_i <S_i>) A$ will indicate the subspace obtained by transforming the intersection subspace $(\bigcap_i <S_i>)$ through right multiplication by $A$.
\eit 
\een

\subsection{Improved Lower Bound on Sub-packetization of an Optimal Access MSR Code with $d=n-1$} \label{sec:dnGMSR_ch6}
\begin{thm} \label{thm:main} \textbf{(Sub-packetization Bound)}: 
	Let $\mathcal{C}$ be a linear optimal access MSR code having parameter set 
	\bean
	\left\{ (n,k,d),\ (\alpha,\beta),\ B, \ \fq) \right\},
	\eean
	with $d=(n-1)$ and linear repair for all $n$ nodes. Let $r=n-k$. Then we must have:
	\bean
	\alpha \geq  \min \{r^{\lceil \frac{n-1}{r} \rceil},r^{k-1}\} \label{eq:Main_Bound}.
	\eean
\end{thm}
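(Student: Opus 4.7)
The plan is to exploit the fact that under the optimal-access hypothesis, every repair matrix $S_{(i,j)}$ has rows drawn from the standard basis, so the row space $V_{(i,j)} := \langle S_{(i,j)}\rangle \subseteq \mathbb{F}_q^{\alpha}$ is a coordinate subspace of exact dimension $\beta = \alpha/r$. First I would record the two structural constraints these subspaces must satisfy: (i) for any fixed failed node $j$, the $n-1$ subspaces $\{V_{(i,j)} : i \neq j\}$ must jointly permit linear recovery of $\underline{c}_j$, which by dimension counting together with the equality $d\beta = \alpha$ forces them to be in a strong sense "complementary"; and (ii) the MDS property, applied to every $k$-subset of nodes, ties repair subspaces for different failed nodes to each other through invertibility conditions on appropriate block matrices of the code's parity checks.

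The main step is then an iterative intersection argument. Fix a distinguished node (say node $n$) and consider the collection $\{V_{(n,j)} : j \in [n-1]\}$ of the coordinate $\beta$-subspaces of $\mathbb{F}_q^\alpha$ that describe what node $n$ transmits when each other node fails. Build a greedy chain $W_0 = \mathbb{F}_q^\alpha \supseteq W_1 \supseteq W_2 \supseteq \cdots$ by selecting a sequence of failed-node indices $j_1, j_2, \ldots, j_m$ and setting $W_\ell := W_{\ell-1} \cap V_{(n,j_\ell)}$. The core claim, to be proved using constraints (i) and (ii) above, is that one can always choose $j_\ell$ so that $\dim W_\ell \le (1/r)\dim W_{\ell-1}$, as long as $W_{\ell-1}$ is nontrivial and as long as fewer than $\min\{\lceil (n-1)/r\rceil,\ k-1\}$ steps have been taken. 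Iterating and using $\dim W_m \ge 1$ throughout the productive regime, this yields $\alpha = \dim W_0 \ge r^{\min\{\lceil (n-1)/r\rceil,\ k-1\}}$, which is exactly the claimed bound.

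The main obstacle will be the inductive step establishing that the dimension drop is genuinely by the full factor $r$. In the simpler setting where repair matrices depend only on the failed node, i.e.\ $S_{(i,j)}=S_j$ (treated separately in Corollary~\ref{sb_5}), all helpers agree on a single coordinate partition and the drop is almost immediate; this is essentially the situation analysed in \cite{TamWanBru_access_tit}. In the present theorem, however, $V_{(n,j_\ell)}$ depends on the helper index $n$ as well, so the drop has to be propagated across helpers by invoking the MDS constraint on several $k$-subsets simultaneously in order to deduce that the coordinate subspaces $V_{(n,j_1)},\ldots,V_{(n,j_m)}$ sit in sufficiently general coordinate position inside $\mathbb{F}_q^\alpha$. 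A related subtlety is bookkeeping: we must ensure that enough distinct indices $j_\ell$ remain available to sustain $\lceil (n-1)/r\rceil$ strict-drop steps, which seems to call for grouping the $n-1$ choices of $j$ into blocks of size at most $r$ that share a common effective coordinate behaviour and then picking one representative from each block. Once the chain is produced, converting it into $\alpha \ge \min\{r^{\lceil (n-1)/r\rceil},\ r^{k-1}\}$ is a direct dimension computation.
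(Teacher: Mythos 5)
Your outline correctly identifies the two structural inputs (the optimal-access/coordinate-subspace structure and the MDS/interference-alignment constraints) and the factor-$r$ intersection drop, which is exactly what the paper establishes in Lemma~\ref{lem:rsi}: for any nested pair $U \subseteq U \cup \{u_\ell\}$ of failed-node sets, $\dim\bigl(\bigcap_{u \in U\cup\{u_\ell\}}\langle S_{(p,u)}\rangle\bigr) \le \tfrac{1}{r}\dim\bigl(\bigcap_{u\in U}\langle S_{(p,u)}\rangle\bigr)$, independent of $p$. That part of your plan matches the paper and would go through. Incidentally, the drop holds for \emph{every} choice of the added node, not just a cleverly chosen one, so the step you flagged as ``the main obstacle'' is actually the routine step.

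The genuine gap is in the passage from the factor-$r$ drop to the exponent $\lceil (n-1)/r \rceil$. Your greedy chain $W_0 \supseteq W_1 \supseteq \cdots$ only certifies $\alpha \ge r^m$ where $m$ is the length of a chain with $W_m \neq 0$; the factor-$r$ drop says each step at least halves-by-$r$, but it gives no reason whatsoever that you can find $\lceil(n-1)/r\rceil$ consecutive nonempty intersections. Indeed two coordinate $\beta$-subspaces can be disjoint, so a careless greedy choice can kill the chain after a single step. What makes the chain of length $\lceil(n-1)/r\rceil$ exist is a double-counting/pigeonhole argument, and that argument is essential, not bookkeeping: form the bipartite graph with left nodes the standard basis vectors $e_1,\dots,e_\alpha$ and right nodes the repair subspaces $\langle S_{(n,1)}\rangle,\dots,\langle S_{(n,n-1)}\rangle$, where $e_t$ is joined to $\langle S_{(n,j)}\rangle$ iff $e_t$ is a row of $S_{(n,j)}$. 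Each right node has degree exactly $\alpha/r$, so the total edge count is $(n-1)\alpha/r$; each left node has degree at most $\lfloor \log_r \alpha\rfloor$ by the intersection bound (under the working hypothesis $\alpha < r^{k-1}$, which lets you extend the bound to subsets $F$ of size up to $n-1$). Comparing the two counts forces $\lfloor\log_r\alpha\rfloor \ge \lceil (n-1)/r\rceil$, hence $\alpha \ge r^{\lceil(n-1)/r\rceil}$. Your ``group $[n-1]$ into blocks of size $\le r$ and pick one representative per block'' heuristic does not recover this: there is no reason representatives drawn one per block have a common coordinate. The set of repair subspaces whose intersection is nontrivial is found all at once by pigeonhole (they are the neighbours of a high-degree $e_t$), not by any stepwise greedy selection. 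Without that counting argument the proposal does not prove the theorem.
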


\vspace*{0.3in}

The proof of the theorem will make use of Lemma~\ref{lem:rsi} below.   We begin with some helpful notations.  We will use the indices (two disjoint subsets of $[n]$):
\bean
\{ u_1,u_2,\cdots, u_k \}, \ \{p_1,p_2,\cdots,p_r\}
\eean
to denote the $n$ nodes in the network over which the code symbols are stored (Note that the code symbol \uc{i} is stored in node $i$.).  Let $\ell, \ 2 \leq \ell \leq (k-1)$, be an integer and set
\bean
U & = & \{ u_1,u_2,\cdots, u_{\ell} \}, \\
V & = & \{ u_{\ell+1},u_{\ell+2},\cdots, u_k \}, \\
P & = & \{p_1,p_2,\cdots,p_r\} .
\eean
Note that our choice of $\ell$ ensures that neither $U$ nor $V$ is empty. We refer to the row space of $S_{(i,j)}$ as a repair subspace.

\begin{lem} \label{lem:rsi} \textbf{(Repair Subspace Intersection)}: 
	For the code $\mathcal{C}$ which is a linear optimal access MSR code with $d=(n-1)$ and linear repair for all $n$ nodes. We must have:
	\bea \label{eq:lemma_2}
	\sum_{i=1}^{r} \dim \left( \bigcap_{u \in U} < S_{(p_i,u)} > \right)  \ \leq \ 
	\dim \left( \bigcap_{u \in U - \{u_{\ell}\}} <S_{(p,u)}> \right), 
	\eea
	
	where $U,P,u_{\ell},p_i$ are as defined above in this section and $p$ is an arbitrary node in  $P$.  Furthermore, $\dim \left( \bigcap_{u \in U} <S_{(p,u)} > \right)$ is the same for all $p \in P$. 
\end{lem}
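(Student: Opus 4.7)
The plan is to first extract from the MSR repair equations a set of interference-alignment identities, then use them to simultaneously prove the constant-dimension claim across $P$ and the main sum inequality via an injective linear map. Write the code in systematic form so that each parity node satisfies $\underline{c}_{p_i} = \sum_{j=1}^{k} \Phi_{p_i,u_j} \underline{c}_{u_j}$ for invertible $\alpha \times \alpha$ matrices $\Phi_{p_i,u_j}$ (invertibility by the MDS property). Substituting this into the repair equation
\[
\underline{c}_{u} \;=\; \sum_{i=1}^{r} L^{(u)}_{p_i} S_{(p_i,u)} \underline{c}_{p_i} + \sum_{v \neq u} L^{(u)}_{v} S_{(v,u)} \underline{c}_{v}
\]
for each $u \in U$ and equating coefficients of the independent message blocks yields two families of constraints: the self-recovery identity $\sum_{i} L^{(u)}_{p_i} S_{(p_i,u)} \Phi_{p_i,u} = I_\alpha$, and, for every systematic node $w \neq u$, the interference-alignment relation $\sum_{i} L^{(u)}_{p_i} S_{(p_i,u)} \Phi_{p_i,w} = -L^{(u)}_{w} S_{(w,u)}$, whose row space lies inside $\langle S_{(w,u)} \rangle$.

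The self-recovery identity implies that the $r$ row spaces $\langle S_{(p_i,u)} \Phi_{p_i,u} \rangle$, each of dimension $\beta = \alpha/r$, sum to the whole of $\mathbb{F}_q^\alpha$ and therefore form a direct-sum decomposition of it. I would leverage this to establish the constant-dimension claim: for any two parities $p, p' \in P$, the map induced by the $\Phi$-matrices (using $\Phi_{p,u}^{-1}$ to transport $\bigcap_{u\in U}\langle S_{(p,u)}\rangle$ into the message space, then pushing back via $\Phi_{p',u}$ through the interference-alignment conditions for each $u \in U$) gives an explicit linear isomorphism between $\bigcap_{u\in U}\langle S_{(p,u)}\rangle$ and $\bigcap_{u\in U}\langle S_{(p',u)}\rangle$, forcing their dimensions to agree.

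For the main inequality, with an arbitrary fixed $p \in P$, I would construct an injective linear map
\[
\phi \;\colon\; \bigoplus_{i=1}^{r} \Bigl(\bigcap_{u \in U} \langle S_{(p_i, u)} \rangle \Bigr) \;\longrightarrow\; \bigcap_{u \in U - \{u_\ell\}} \langle S_{(p, u)} \rangle,
\]
defined by sending $(\underline{x}_1, \ldots, \underline{x}_r)$ to $\sum_{i} \underline{x}_i \cdot M_i$, where each $M_i$ is built from $\Phi_{p_i,u_\ell}$ together with the $\Phi$-matrices and repair-matrix data that arise in the alignment constraint for recovering $u_\ell$. The point of this choice is that each $\underline{x}_i \in \langle S_{(p_i,u)}\rangle$ for every $u \in U - \{u_\ell\}$ (since $\underline{x}_i \in \bigcap_{u\in U}\langle S_{(p_i,u)}\rangle$), and the alignment identity applied with failed node $u_\ell$ guarantees that $\sum_{i} \underline{x}_i M_i$ lies in $\langle S_{(u,u_\ell)}\rangle$ after a further transport by an invertible $\Phi$-block into the coordinate frame of node $p$, where it then lies in $\langle S_{(p,u)}\rangle$; repeating this verification for each $u \in U - \{u_\ell\}$ shows that the image is in the claimed intersection. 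Injectivity of $\phi$ follows from the direct-sum property $\bigoplus_i \langle S_{(p_i,u_\ell)} \Phi_{p_i,u_\ell}\rangle = \mathbb{F}_q^\alpha$, which prevents any nontrivial cancellation among the summands.

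The hardest step will be the careful construction of $\phi$ and the verification that its image lies inside $\langle S_{(p,u)}\rangle$ \emph{simultaneously} for every $u \in U - \{u_\ell\}$; this requires a delicate combination of the interference-alignment constraints for several different failed nodes, exploiting the invertibility of the $\Phi$-matrices to move between node-coordinate frames without losing track of which standard-basis coordinates are accessed in the optimal-access setting. Once the lemma is established, the theorem's bound follows by iterating the inequality $r \cdot \dim(\bigcap_{u \in U}\langle S_{(p,u)}\rangle) \le \dim(\bigcap_{u \in U - \{u_\ell\}}\langle S_{(p,u)}\rangle)$ down from $|U|=k-1$ to $|U|=1$ (where the intersection has dimension at most $\beta = \alpha/r$), yielding $\alpha \ge r^{\lceil (n-1)/r \rceil}$ in the non-degenerate regime.
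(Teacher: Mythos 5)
Your proposal follows essentially the same route as the paper's proof: interference-alignment identities to transport repair subspaces between parity-node coordinate frames, the full-rank repair condition to obtain the direct-sum decomposition $\bigoplus_{i=1}^{r}\langle S_{(p_i,u_\ell)}\Phi_{p_i,u_\ell}\rangle=\mathbb{F}_q^{\alpha}$, and a combination of the two to bound the sum of intersection dimensions. Your injective map $\phi$ is a clean repackaging of the paper's chain of subspace inclusions, and the choice $M_i=\Phi_{p_i,u_\ell}\Phi_{p,u_\ell}^{-1}$ does work; note however that the alignment identity you actually invoke at this step should be the one for \emph{failed} node $u\in U\setminus\{u_\ell\}$ with \emph{interfering} node $u_\ell$, which gives $\langle S_{(p_i,u)}\rangle\Phi_{p_i,u_\ell}=\langle S_{(u_\ell,u)}\rangle$ independently of $i$ --- not the alignment for failed node $u_\ell$, and not the subspace $\langle S_{(u,u_\ell)}\rangle$ (you have the two indices swapped).

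The genuine gap is in your constant-dimension argument. You propose to transport $\bigcap_{u\in U}\langle S_{(p,u)}\rangle$ via $\Phi_{p,u}^{-1}$ and back via $\Phi_{p',u}$ ``for each $u\in U$,'' but the alignment identity $\langle S_{(p,u')}\rangle\Phi_{p,w}=\langle S_{(p',u')}\rangle\Phi_{p',w}$ holds only when the transport index $w$ differs from the failed node $u'$. If you take $w\in U$, the identity fails for the factor $\langle S_{(p,w)}\rangle$ of the intersection: by the full-rank condition, $\langle S_{(p,w)}\rangle\Phi_{p,w}$ and $\langle S_{(p',w)}\rangle\Phi_{p',w}$ sit in a direct sum, so the transported copy of $\bigcap_{u\in U}\langle S_{(p,u)}\rangle$ is \emph{disjoint} (up to $\{\underline{0}\}$) from $\bigcap_{u\in U}\langle S_{(p',u)}\rangle$ rather than equal to it, and no isomorphism is produced. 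The fix, as in the paper, is to use a \emph{single} transport node $v\notin U$ --- the paper takes $v=u_{\ell+1}\in V$, which exists because $|U|=\ell\le k-1$ --- so that the alignment identity applies simultaneously to every factor $u\in U$, and then $\Phi_{p,v}\Phi_{p',v}^{-1}$ carries $\bigcap_{u\in U}\langle S_{(p,u)}\rangle$ onto $\bigcap_{u\in U}\langle S_{(p',u)}\rangle$.
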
	
\begin{proof} \ 
	
	\paragraph{Invariance of $\ell$-fold Intersection of Repair Subspaces Contributed by a Parity Node}  Let us consider the nodes in $U \cup V$ as systematic nodes and nodes in $P$ as parity nodes. Note that the sets $U,V,P$ are pairwise disjoint and are arbitrary subsets of $[n]$, under the size restrictions $2 \leq |U|=\ell \leq k-1,|P|=r$ and $U \cup V \cup P = [n]$. First we prove that $\dim \left(\bigcap_{u \in U} <S_{(p,u)}> \right)$ is the same for all $p \in P$. Note that $< S_{(p,u)} >$ is the row space of the repair matrix carrying repair information from helper (parity) node $p$ to the replacement of the failed node $u$.  Thus we are seeking to prove that the $\ell$-fold intersection of the $\ell$ subspaces $\{ <S_{(p,u)}> \}_{u \in U}$ obtained by varying the failed node $u \in U$ is the same, regardless of the parity node $p \in P$ from which the helper data originates.   
	
	To show this, consider a generator matrix $G$ for the code $\mathcal{C}$ in which the nodes of $P$ are the parity nodes and the nodes in $U \cup V \ = \ \{u_1,u_2,\cdots,u_k\}$ are the systematic nodes.  Then $G$ will take on the form: 	
	\bea
	G = \left[
	\scalemath{1}{
		\begin{array}{cccc}
			I_{\alpha} & 0 & \hdots & 0 \\
			0 & I_{\alpha} & \hdots & 0 \\
			\vdots & \vdots & \vdots & \vdots \\
			0 & 0 & \hdots & I_{\alpha} \\
			A_{p_1,u_1} & A_{p_1,u_2} & \hdots & A_{p_1,u_k} \\
			A_{p_2,u_1} & A_{p_2,u_2} & \hdots & A_{p_2,u_k} \\
			\vdots & \vdots & \vdots & \vdots \\
			A_{p_r,u_1} & A_{p_r,u_2} & \hdots & A_{p_r,u_k} \\
		\end{array} 
	}
	\right]. \ \ \label{Gform_3}
	\eea
	Wolog the generator matrix assumes an ordering of the nodes in which the first $k$ nodes in $G$ (node $i$ in $G$ correspond to rows $[(i-1)\alpha+1,i\alpha]$) correspond respectively to $\{u_i \mid 1 \leq i \leq k\}$ and the remaining $r$ nodes in $G$ to $\{p_1,\cdots,p_r\}$ in the same order. 
	A codeword is formed by $G\underline{m}$ where $\underline{m}$ is the vector of $k \alpha$ message symbols over \fq\ encoded by the MSR code and the index of code symbols in the codeword is according to the nodes in $U,V,P$ i.e., the codeword will be of the form $[\uc{u_1}^T,\hdots,\uc{u_k}^T,\uc{p_1}^T,\hdots,\uc{p_r}^T]^T = G \underline{m}$.
	
	\begin{center}
		\begin{figure}
			\begin{center}
				\includegraphics[width=2.5in]{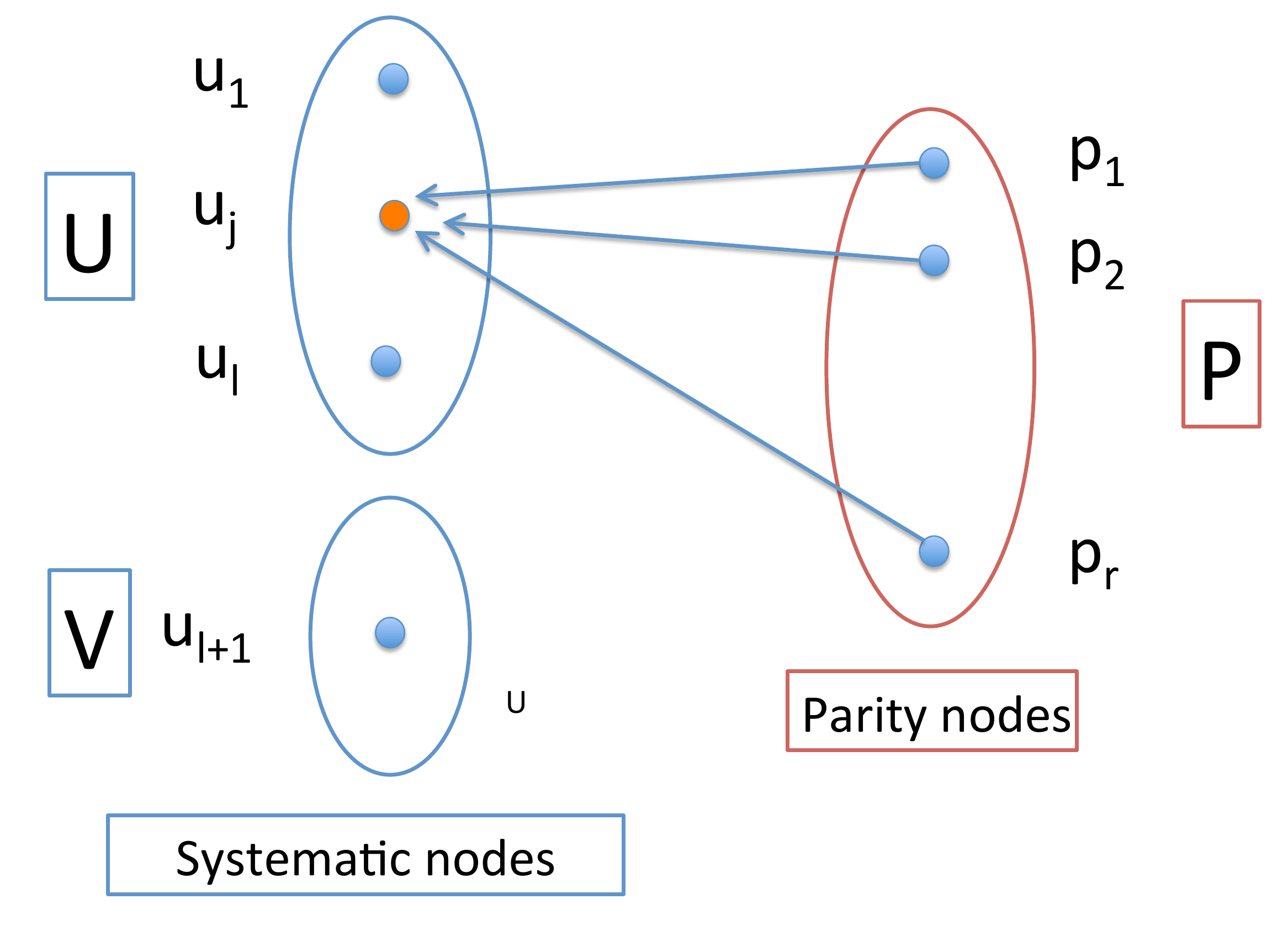}
			\end{center} 
			\caption{The general setting considered here where helper data flows from the parity-nodes $\{p_i\}_{i=1}^r$ forming set $P$ to a failed node $u_j \in U$. \label{fig:setting}}
		\end{figure}
	\end{center}

	By the interference-alignment conditions~\cite{ShaRasKumRam_ia},\cite{TamWanBru_access_tit} applied to the repair of a systematic node $u_j \in U$ (Fig \ref{fig:setting}), we obtain 	(see Lemma \ref{lem:rank_cond_p_node} in the Appendix, for a more complete discussion on interference alignment equations given below): 
	\bea
	<S_{(p_i,u_j)} A_{p_i,u_{\ell+1}}>  & = &  <S_{(p_m,u_j)} A_{p_m,u_{\ell+1}}>, \ \text{ for every pair } p_i, p_m \in P. \label{eq:IA_P_1}
	\eea	 
	
	Equation \eqref{eq:IA_P_1} and Lemma \ref{Row_Spaces} (in Appendix) implies (as $A_{i,j}$ are invertible for all $i,j$) that for every pair $p_i, p_m \in P$ :
	\bea
	\left ( \bigcap_{j=1}^{\ell} <S_{(p_i,u_j)}> \right )  A_{p_i,u_{\ell+1}} = \left ( \bigcap_{j=1}^{\ell} <S_{(p_m,u_j)}> \right)  A_{p_m,u_{\ell+1}}. \label{eq:l_fold}
	\eea
	
	It follows then from the non-singularity of the matrices $A_{i,j}$ and equation \eqref{eq:l_fold},  that $\text{dim}(\bigcap_{u \in U} <S_{(p,u)}>)$ is the same for all $p \in P$.  It remains to prove the main inequality \eqref{eq:lemma_2}.    		
	
	\paragraph{$(\ell-1)$-fold Intersection of Repair Subspaces} 
	We proceed similarly in the case of an $(\ell-1)$-fold intersection, replacing $\ell$ by $\ell-1$ in \eqref{eq:l_fold}.
	We will then obtain: 
	%interference-alignment conditions applied to the repair of a systematic node $u_j \in U \setminus \{u_{\ell}\}$, we obtain: 
	%			\bea
	%			S_{(p_i,u_j)} A_{p_i,\ell}  \tilde= S_{(p_m,u_j)} A_{p_m,\ell}, \ \text{ for every pair } p_i, p_m \in P .
	%			\eea	
	%This implies once again as the $A_{i,j}$ are invertible, that:
	\bea
	\left ( \bigcap_{j=1}^{\ell-1}<S_{(p_i,u_j)}> \right )  A_{p_i,u_\ell} = \left ( \bigcap_{j=1}^{\ell-1} <S_{(p_m,u_j)}> \right )  A_{p_m,u_\ell},\text{ for every pair } p_i, p_m \in P . \label{eq:l_minus_one}
	\eea
	
	\paragraph{Relating $\ell$-fold and $(\ell-1)$-fold intersections}
	Next consider the repair of the node $u_{\ell}$. Then from the full-rank condition of node repair, (see Lemma \ref{lem:rank_cond_p_node} in the Appendix), we must have that 			
	\bea
	\text{rank} \left( \left[
	\scalemath{1}{
		\begin{array}{c}
			S_{(p_1,u_{\ell})} A_{p_1,u_\ell} \\
			S_{(p_2,u_{\ell})} A_{p_2,u_\ell} \\
			S_{(p_3,u_{\ell})} A_{p_3,u_\ell} \\
			\vdots \\
			S_{(p_r,u_{\ell})} A_{p_r,u_\ell} \\
		\end{array} 
	}
	\right]\right) & = & \alpha. \ \ \label{Cond_1}
	\eea
	It follows as a consequence, that 
	\bea
	\bigoplus_{i=1}^r < S_{(p_i,u_{\ell})} A_{p_i,u_\ell} > & = & \mathbb{F}_q^{\alpha},  \label{eq:stratify} 
	\eea 
	and hence, for every $j \in \ [r]$, we must have that 
	\bea
	< S_{(p_j,u_{\ell})} A_{p_j,u_\ell} >  \bigcap \ 		\bigoplus_{i \in [r], i \neq j  }  <S_{(p_i,u_{\ell})} A_{p_i,u_\ell} > & = & \{ \underline{0} \} .  \label{eq:sum_disjoint} 
	\eea 			
	
	It follows from \eqref{eq:l_minus_one} that for any $i_0 \in [r]$ and all $p_m \in P$:
	\bea \nonumber 
	\left ( \bigcap_{u \in U} <S_{(p_m,u)}> \right ) A_{p_m,u_\ell} 	& = & 
	<S_{(p_m,u_{\ell})} A_{p_m,u_\ell}> \bigcap \left ( \left ( \bigcap_{j=1}^{\ell-1} <S_{(p_{m},u_j)}> \right )  A_{p_m,u_\ell} \right )\\ 
	& \subseteq & \left ( \bigcap_{j=1}^{\ell-1} <S_{(p_{i_0},u_j)}> \right )  A_{p_{i_0},u_\ell}.  			\label{eq:l_to_lminus1}
	\eea
	As a consequence of \eqref{eq:sum_disjoint} and \eqref{eq:l_to_lminus1} we can make the stronger assertion:
	\bea
	\bigoplus_{m=1}^{r} \left ( \bigcap_{u \in U} <S_{(p_m,u)}> \right ) A_{p_m,u_\ell}
	& \subseteq & \left (  \bigcap_{j=1}^{\ell-1} <S_{(p_{i_0},u_j)}> \right ) A_{p_{i_0},u_\ell} .
	\eea			
	Since the $A_{i,j}$ are nonsingular, this allows us to conclude that: 
	\bea
	\sum_{p \in P} \dim \left ( \bigcap_{u \in U} <S_{(p,u)}> \right ) \ \leq \ 
	\dim \left ( \bigcap_{u \in U-\{u_{\ell}\}} <S_{(p_{i_0},u)}> \right ). \label{eq:lemma22}
	\eea
	However, since $p_{i_0}$ is an arbitrary node in $P$, this can be rewritten in the form: 
	\bea
	\sum_{i=1}^r \dim \left ( \bigcap_{u \in U} <S_{(p_i,u)}> \right ) \ \leq \ 
	\dim \left ( \bigcap_{u \in U - \{u_{\ell}\}} <S_{(p,u)}> \right ), \label{eq:lemma2_final}
	\eea
	for any $p \in P$, which is precisely the desired equation \eqref{eq:lemma_2}. 
\end{proof} 

\begin{proof} (Proof of Theorem~\ref{thm:main}) \ 
	
	\begin{enumerate}
		\item {\em Invariance of Repair Matrices to Choice of Generator Matrix} We first observe that the repair matrices can be kept constant, even if the generator matrix of the code changes.  This is because the repair matrices only depend upon relationships that hold among code symbols of any codeword in the code and are independent of the particular generator matrix used in encoding.  In particular, the repair matrices are insensitive to the characterization of a particular node as being either a systematic or parity-check node.
		\item {\em Implications for the Dimension of the Repair Subspace} 
		From Lemma~\ref{lem:rsi}, we have that 
		\bea 
		\sum_{i=1}^{r} \dim \left ( \bigcap_{u \in U} <S_{(p_i,u)}> \right )  \ \leq \ 
		\dim \left ( \bigcap_{u \in U - \{u_{\ell}\}} <S_{(p,u)}> \right ), 
		\eea
		and moreover that $\dim \left (\bigcap_{u \in U} <S_{(p,u)}> \right )$ is the same for all $p \in P=\{p_1,...,p_r\}$. 
		It follows that 
		\bea 
		r \times \dim \left (\bigcap_{u \in U} <S_{(p,u)}> \right ) \leq \dim \left ( \bigcap_{u \in U-\{u_{\ell}\}} <S_{(p,u)}> \right ) \label{Cond_3},
		\eea
		i.e., 
		\bea \nonumber 
		\dim \left ( \bigcap_{u \in U} <S_{(p,u)}> \right ) & \leq & \frac{\dim \left (\bigcap_{u \in U-\{u_{\ell}\}} <S_{(p,u)}> \right )}{r}  \\ \nonumber 
		&  \leq & \frac{\dim \left ( \bigcap_{u \in U-\{u_{\ell},u_{\ell-1}\}} <S_{(p,u)}> \right )}{r^2}  \\ \nonumber 
		& \leq & \frac{\dim \left (<S_{(p,u_1)}> \right )}{r^{\ell-1}} \\ 
		& = & \frac{\alpha}{r^{\ell}} \ = \  \frac{\alpha}{r^{|U|}}.
		\label{eq:Cond_4}
		\eea	   
		Lemma \ref{lem:rsi} and its proof holds true for any set $U \subseteq [n]$ of size $2 \leq |U| \le (k-1)$.    As a result, equation \eqref{eq:Cond_4}, also holds for any set $U \subseteq [n]$ of size $2 \leq |U| \le (k-1)$.   
		
		We would like to extend the above inequality to hold even for the case when $U$ is of size $k \leq |U| \leq (n-1)$.  We get around the restriction on $U$ as follows.   It will be convenient in the argument, to assume that $U$ does not contain the $n$th node, i.e., $U \subseteq [n-1]$ and $ n \in P$.   Let us next suppose that  $\alpha < r^{k-1}$ and that $U$ is of size $(k-1)$. We would then have: 
		\bea \label{eq:intersection}
		\dim \left ( \bigcap_{u \in U} <S_{(n,u)}> \right ) & \leq  & \frac{\alpha}{r^{k-1}} \ < \ 1,
		\eea
		which is possible iff
		\bean
		\dim \left(  \bigcap_{u \in U} <S_{(n,u)}> \right) & = & 0.
		\eean
		But this would imply that 
		\bean
		\dim \left(  \bigcap_{u \in F} <S_{(n,u)}> \right) & = & 0.
		\eean
		for any subset $F\subseteq [n-1]$ of nodes of size $|F|$ satisfying $(k-1) \leq |F| \leq (n-1)$.  We are therefore justified in extending the inequality in \eqref{eq:Cond_4} to the case when $U$ is replaced by a subset $F$ whose size now ranges from $2$ to $(n-1)$, i.e., we are justified in writing: 
		\bea \label{eq:size_bd} 
		\dim \left ( \bigcap_{u \in F} <S_{(n,u)}> \right ) &  \leq  & \frac{\alpha}{r^{|F|}} 
		\eea	   
		for any $F \subseteq [n-1]$, of size $2 \leq |F| \leq (n-1)$.  A consequence of the inequality \eqref{eq:size_bd} is that 
		\bean
		\dim \left ( \bigcap_{u \in F} <S_{(n,u)}> \right ) & \geq 1 ,
		\eean
		implies that $|F| \leq \lfloor \log_{r} (\alpha) \rfloor $.    In other words, a given non-zero vector can belong to at most $\lfloor \log_{r} (\alpha) \rfloor$ repair subspaces among the repair subspaces: $<S_{(n,1)}>,\hdots,<S_{(n,n-1)}>$.
		
		\item {\em Counting in a Bipartite Graph} The remainder of the proof then follows the steps outlined in Tamo et. al \cite{TamWanBru_access_tit}.   We form a bipartite graph with $\{e_1,...,e_{\alpha}\}$ (standard basis) as left nodes and $S_{(n,1)},...,S_{(n,n-1)}$ as right nodes as shown in Fig.~\ref{fig:biparite}. We declare that edge $(e_i,S_{(n,j)})$ belongs to the edge set of this bipartite graph iff $(e_i \in <S_{(n,j)}>)$. Now since the MSR code is an optimal access code, the rows of each repair matrix $S_{(n,j)}$ must all be drawn from the set $\{e_1,...,e_{\alpha}\}$. 
		
		\begin{figure}
			\begin{center}
				\includegraphics[width=2in]{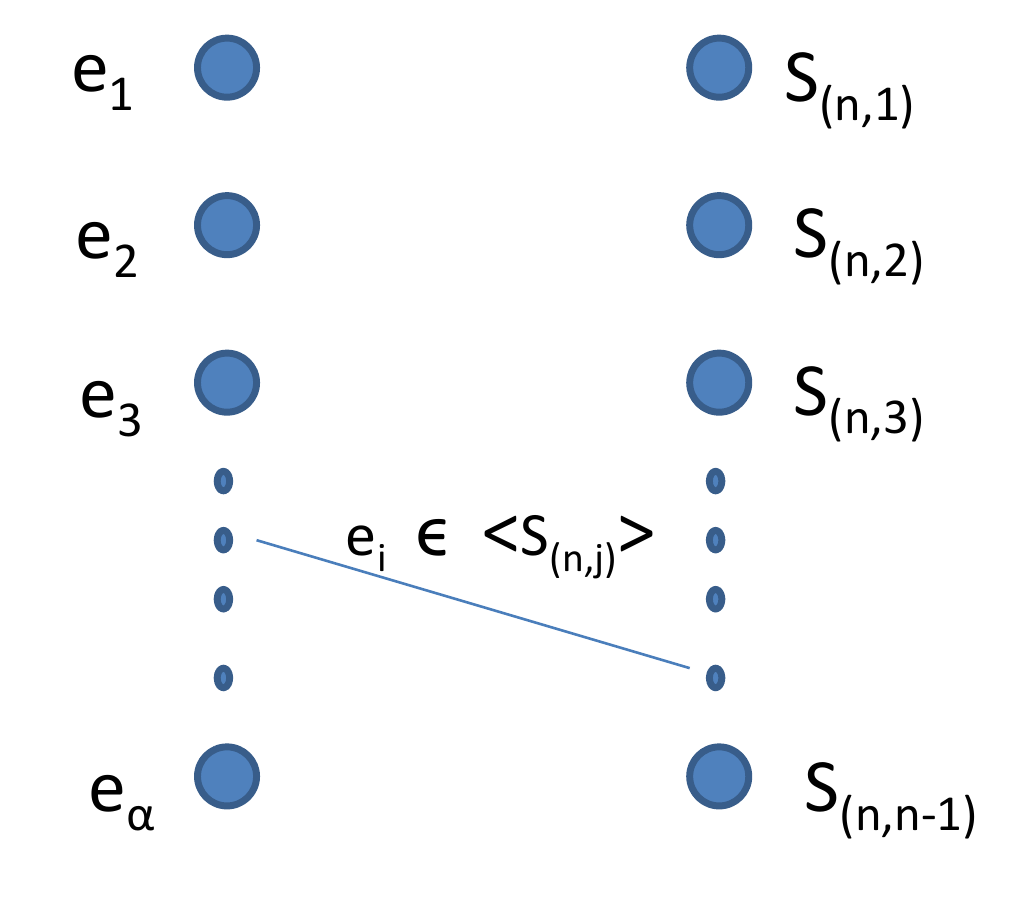}
			\end{center} 
			\caption{The above figure shows the bipartite graph appearing in the counting argument used to provie Theorem~\ref{thm:main}.  Each node on the left corresponds to an element of the standard basis $\{e_1,...,e_{\alpha}\}$.  The nodes to the right are associated to the repair matrices $S_{(n,1)},...,S_{(n,n-1)}$.  \label{fig:biparite} }
		\end{figure}
		
		Counting the number of edges of this bipartite graph in terms of node degrees on the left and the right, we obtain:
		\bean
		\alpha \lfloor \log_r(\alpha) \rfloor \geq (n-1)\frac{\alpha}{r},  \\
		\log_r(\alpha) \geq \lfloor \log_r(\alpha) \rfloor \geq \lceil \frac{(n-1)}{r} \rceil,  \\
		\log_r(\alpha) \geq \lceil \frac{(n-1)}{r} \rceil,   \\
		\alpha \geq r^{\lceil \frac{n-1}{r} \rceil} 	.	  
		\eean
		Thus we have shown that if $\alpha < r^{k-1}$, we must have $\alpha \geq r^{\lceil \frac{n-1}{r} \rceil}$.  It follows that 
		\bean
		\alpha \geq \min \{r^{\lceil \frac{n-1}{r} \rceil},r^{k-1}\}.
		\eean		   
	\end{enumerate}    
\end{proof}	
In the following, we will derive lower bounds on sub-packetization of optimal access MSR code for $d=n-1$ and for arbitrary $d$ under specific assumptions. These bounds are derived based on the proof of Theorem~\ref{thm:main} and hence we state them as corollaries.

\subsection{Sub-packetization Bound for $d=n-1$ and Constant Repair Subspaces}	\label{sec:dnMSR_ch6}
In the following, we will derive a lower bound on sub-packetization of an optimal access MSR code for $d=n-1$ under the assumption that the repair matrix $S_{(i,j)}$ is independent of $i$.

\begin{cor} \label{sb_5} 
	Given a linear optimal access $\{(n,k,d),(\alpha,\beta)\}$ MSR code $\mathcal{C}$ with  $d =n-1$ and linear repair for all $n$ nodes with $S_{(i,j)}=S_j$,$\forall i,j \in [n],i \neq j$ (thus the repair matrix $S_{(i,j)}$ is independent of $i$), we must have:
	\bean
	\alpha \geq  \min \{ r^{\lceil \frac{n}{r} \rceil},r^{k-1}\}.
	\eean
\end{cor}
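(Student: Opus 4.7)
(Proposal)
The plan is to mimic the proof of Theorem~\ref{thm:main}, exploiting the constant-repair assumption to count one extra repair subspace in the final bipartite-graph argument, which yields $\lceil n/r \rceil$ in place of $\lceil (n-1)/r \rceil$ in the exponent.

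First, under the hypothesis $S_{(i,j)}=S_j$, the repair subspace $\langle S_{(i,j)}\rangle$ depends only on the failed node $j$, not on the helper node $i$. I will re-examine Lemma~\ref{lem:rsi} in this setting. The lemma was proved for arbitrary disjoint subsets $U,V,P\subseteq[n]$ with $|U|=\ell$, $2\le\ell\le k-1$, $|V|=k-\ell$, $|P|=r$; writing $S_{(p,u)}=S_u$ for every $p\in P$, its conclusion becomes
\[
r\cdot \dim\!\Bigl(\bigcap_{u\in U}\langle S_u\rangle\Bigr)\ \le\ \dim\!\Bigl(\bigcap_{u\in U-\{u_\ell\}}\langle S_u\rangle\Bigr).
\]
Crucially, since the assumption eliminates the helper-node index, there is no need to single out a distinguished parity node (as was done in the proof of Theorem~\ref{thm:main} by fixing $n\in P$); the inequality above holds for every $U\subseteq[n]$ with $2\le|U|\le k-1$.

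Next, iterating the inequality exactly as in equation \eqref{eq:Cond_4} of the proof of Theorem~\ref{thm:main}, I obtain
\[
\dim\!\Bigl(\bigcap_{u\in F}\langle S_u\rangle\Bigr)\ \le\ \frac{\alpha}{r^{|F|}}\qquad \text{for every } F\subseteq[n],\ 2\le|F|\le k-1.
\]
Assuming $\alpha<r^{k-1}$, the right-hand side is less than $1$ when $|F|=k-1$, forcing the intersection to be $\{\underline 0\}$; this extends the bound to all $F\subseteq[n]$ of size $2\le|F|\le n$ in the sense that a nonzero vector can lie in at most $\lfloor\log_r\alpha\rfloor$ of the $n$ subspaces $\langle S_1\rangle,\dots,\langle S_n\rangle$. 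This is where the improvement enters: whereas in Theorem~\ref{thm:main} the analogous statement concerned only $n-1$ subspaces (those associated to the fixed parity helper), here we may include all $n$ nodes because the repair subspace attached to each node is canonically defined.

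Finally, form a bipartite graph with left vertices $e_1,\dots,e_\alpha$ (the standard basis of $\mathbb{F}_q^\alpha$) and right vertices $\langle S_1\rangle,\dots,\langle S_n\rangle$, placing an edge $(e_i,\langle S_j\rangle)$ whenever $e_i\in\langle S_j\rangle$. Each right vertex has degree exactly $\beta=\alpha/r$ (optimal access), and each left vertex has degree at most $\lfloor\log_r\alpha\rfloor$ by the preceding step. Double counting edges gives $n\alpha/r\le\alpha\lfloor\log_r\alpha\rfloor$, i.e.\ $\log_r\alpha\ge\lceil n/r\rceil$, so $\alpha\ge r^{\lceil n/r\rceil}$. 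Combined with the contrapositive assumption, this yields $\alpha\ge\min\{r^{\lceil n/r\rceil},r^{k-1}\}$.

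The main obstacle I anticipate is the careful verification that the conclusions of Lemma~\ref{lem:rsi} and the dimension inequality \eqref{eq:Cond_4} go through for subsets $U\subseteq[n]$ not avoiding any particular node. This should be straightforward under the constant-repair hypothesis, but care must be taken to ensure that the choice of the auxiliary parity set $P$ (of size $r$) used inside the proof of Lemma~\ref{lem:rsi} remains available for every such $U$; this just requires $|U|\le n-r-1=k-1$, which is exactly the range we use.
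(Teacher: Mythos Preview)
Your proposal is correct and follows essentially the same approach as the paper's proof: both exploit the constant-repair hypothesis to drop the requirement that a fixed helper node $n$ lie in $P$, so that the intersection inequality from Lemma~\ref{lem:rsi} holds for every $U\subseteq[n]$ with $2\le|U|\le k-1$, and then run the bipartite-graph counting with all $n$ repair subspaces $\langle S_1\rangle,\dots,\langle S_n\rangle$ on the right (rather than $n-1$), yielding the extra unit in the exponent. Your closing remark about the availability of a disjoint parity set $P$ of size $r$ (requiring $|U|\le k-1$) is exactly the check the paper implicitly relies on.
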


\begin{proof}
	For a given subset $U \subseteq [n]$ of nodes, of size $2 \leq |U| \leq k-1$, let $P \subseteq [n]$ be a second subset disjoint from $U$ (i.e., $ U \bigcap P = \emptyset$), of size $|P|=r$.    In this setting, the proof of Lemma \ref{lem:rsi} will go through for the pair of subsets $U,P$ and we will obtain that for any $p \in P$:
	\bean
	\dim \left( \bigcap_{u \in U} <S_{u}> \right) = \dim \left( \bigcap_{u \in U} <S_{(p,u)}> \right) \leq \frac{\alpha}{r^{|U|}}
	\eean
	As before, we next extend the validity of the above inequality for any $U \subseteq [n]$ by assuming that $\alpha < r^{k-1}$ and following the same steps as in the proof of Theorem \ref{thm:main}.    Following this, we repeat the bipartite-graph construction and subsequent counting argument as in the proof of Theorem \ref{thm:main} with one important difference.   In the bipartitie graph constructed here, there are $n$ nodes on the right (as opposed to $(n-1)$), with $\{S_{1},\hdots,S_{n}\}$ as nodes in the right. The result then follows.
\end{proof}

\subsection{Sub-packetization Bound for Arbitrary $d$ and Repair Subspaces that are Independent of the Choice of Helper Nodes}	\label{sec:arbdMSR_ch6}
In the following, we will derive a lower bound on sub-packetization of an optimal access MSR code for any $d<n-1$ under the assumption that the repair matrix $S^D_{(i,j)}$ is independent of the choice of the remaining $(d-1)$ helper nodes in $D$.
\begin{cor} \label{sb_6} 
	Let  $\mathcal{C}$ be a linear optimal-access $\{(n,k,d),(\alpha,\beta)\}$ MSR code for some $d$, $k \leq d \leq n-1$, and linear repair for all $n$ nodes.  We assume in addition, that every node $j \in [n]$ can be repaired by contacting a subset $D \subseteq [n]-\{j\}$, $|D|=d$ of helper nodes in such a way that the repair matrix $S^D_{(i,j)}$ is independent of the choice of the remaining $(d-1)$ helper nodes, i.e., $S^D_{(i,j)}=S_{(i,j)}$,$\forall j \in [n]$ and $i \in D$, $\forall D \subseteq [n]-\{j\}, |D|=d$.  Then we must have:
	\bean
	\alpha \geq  \min \{ (d-k+1)^{\lceil \frac{n-1}{d-k+1} \rceil},(d-k+1)^{k-1} \}.
	\eean
	
\end{cor}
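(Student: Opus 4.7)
The plan is to parallel the proof of Theorem \ref{thm:main} (and Corollary \ref{sb_5}) with $s = d-k+1$ substituted for $r$ throughout, exploiting the hypothesis $S^D_{(i,j)} = S_{(i,j)}$ in an essential way. That hypothesis is what makes the adaptation possible: the repair matrices are attached to the ordered pair $(i,j)$ alone, so we are free to vary the helper set $D$ when deriving interference-alignment and full-rank conditions, yet every subspace intersection we form (and every subsequent bipartite-graph count) refers to the same matrices.

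First I would establish the analogue of Lemma \ref{lem:rsi}. Fix disjoint subsets $U, V, P \subseteq [n]$ with $|U| = \ell \in [2, k-1]$, $|U \cup V| = k$, $|P| = r$, and play $U \cup V$ as systematic, $P$ as parity. For each $u_j \in U$ and every $s$-subset $P' \subseteq P$, the set $(U \cup V \setminus \{u_j\}) \cup P'$ is a valid helper set; the rank condition for the repair of $u_j$ therefore yields
\[
\bigoplus_{p \in P'} \langle S_{(p,u_j)} A_{p,u_j}\rangle \ = \ \mathbb{F}_q^{\alpha},
\]
and the interference-alignment equality $\langle S_{(p, u_j)} A_{p, u_{\ell+1}} \rangle = \langle S_{(p',u_j)} A_{p', u_{\ell+1}} \rangle$ holds for every $p, p' \in P'$. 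Since $P'$ ranges over all $s$-subsets of $P$, the alignment equality in fact holds for every pair $p, p' \in P$, and the direct-sum identity holds for every $s$-subset. Substituting an $s$-fold direct-sum decomposition for the $r$-fold one in the algebra of Lemma \ref{lem:rsi} gives
\[
s \cdot \dim\!\Big(\bigcap_{u \in U} \langle S_{(p,u)}\rangle\Big) \ \leq \ \dim\!\Big(\bigcap_{u \in U - \{u_\ell\}} \langle S_{(p,u)}\rangle\Big),
\]
and iterating yields $\dim(\bigcap_{u \in U} \langle S_{(p,u)}\rangle) \leq \alpha / s^{|U|}$ for any $p \in P$ and $2 \leq |U| \leq k-1$ disjoint from $\{p\}$.

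Next I would repeat the extension and counting steps from Theorem \ref{thm:main} verbatim. Fixing the role of ``$p = n$'' and assuming $\alpha < s^{k-1}$, the dimension bound forces $\bigcap_{u \in F} \langle S_{(n,u)}\rangle = \{0\}$ whenever $|F| = k-1$, hence for every larger $F \subseteq [n-1]$; equivalently, any nonzero vector lies in at most $\lfloor \log_s \alpha \rfloor$ of the subspaces $\langle S_{(n,1)}\rangle, \ldots, \langle S_{(n,n-1)}\rangle$. Build the bipartite graph with the standard basis $\{e_1, \ldots, e_\alpha\}$ on the left and $\{\langle S_{(n,j)}\rangle : j \in [n-1]\}$ on the right, edges given by membership. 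Optimal access forces every right-node to have degree at least $\beta = \alpha/s$, and the previous paragraph bounds every left-node degree by $\lfloor \log_s \alpha \rfloor$. Counting edges gives $\alpha \lfloor \log_s \alpha \rfloor \geq (n-1)\alpha/s$, hence $\alpha \geq s^{\lceil (n-1)/s \rceil}$, and combined with the alternative $\alpha \geq s^{k-1}$ we obtain the stated minimum.

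The main obstacle, and essentially the only place any new content enters, is the derivation of the $s$-term direct-sum decomposition and the pairwise alignment equalities across \emph{all} parity nodes in $P$ rather than the specific $s$ used in a single helper set $D$. A naive reading of the rank condition for one $D$ only decomposes $\mathbb{F}_q^\alpha$ into the particular $s$ summands indexed by $D \cap P$. The assumption $S^D_{(i,j)} = S_{(i,j)}$ is exactly what lets us vary $D$ while keeping the $S_{(p,u_j)}$ matrices fixed, so every $s$-subset of $P$ inherits the decomposition and every pair of parity nodes inherits the alignment equality; writing out this step carefully (mirroring Lemma \ref{lem:rank_cond_p_node} in the Appendix) is the one piece of real work, after which the rest of the proof is a mechanical substitution of $s$ for $r$ in the arguments already given for Theorem \ref{thm:main}.
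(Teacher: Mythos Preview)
Your argument is correct, but it takes a somewhat different route from the paper. The paper's proof is more economical: it fixes $U \subseteq [n-1]$ of size at most $k-1$, chooses a parity set $P$ of size exactly $s = d-k+1$ containing node $n$, extends $U$ by a set $V$ so that $|U \cup V \cup P| = d+1$, and observes that the punctured code on these $d+1$ nodes is itself an optimal-access MSR code with parameters $(n',k',d') = (d+1,\, k,\, d)$, hence with $d' = n'-1$ and $r' = n'-k' = s$. The proof of Lemma~\ref{lem:rsi} then applies verbatim to this punctured code and immediately yields $\dim\bigl(\bigcap_{u \in U} \langle S_{(n,u)} \rangle\bigr) \leq \alpha/s^{|U|}$, after which the extension and bipartite count proceed exactly as in Theorem~\ref{thm:main}. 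Your approach instead keeps the full parity set $P$ of size $r = n-k$ and varies over $s$-subsets $P' \subseteq P$ of parity helpers; this forces you to re-derive the interference-alignment equalities and the $s$-fold direct-sum rank condition from scratch, but it makes more transparent exactly where and how the hypothesis $S^D_{(i,j)} = S_{(i,j)}$ is used (namely, to hold the matrices $S_{(p,u_j)}$ fixed while $P'$ varies). Both routes arrive at the same intersection bound and the same bipartite count with $(n-1)$ right nodes; the paper's puncturing trick simply packages the reduction to the $d = n-1$ case more cleanly, while yours is more self-contained.
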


\begin{proof}
	Given a set $U, U \subseteq [n-1]$, of nodes of size $|U|$, $2 \leq |U| \leq k-1$, let us form a set $P$ of size $|P|=d-k+1$, such that $P \subseteq [n]$,  with $n \in P$, $ U \bigcap P = \emptyset$.   Let $V$ be any subset of $[n]\setminus \{U \cup P\}$ such that $|U \cup V \cup P|=d+1$.   Next, consider  the punctured code obtained by restricting attention to the node subset $\{U \cup V \cup P\}$.  The proof of Lemma \ref{lem:rsi} applied to the subset $\{U \cup V \cup P\}$ of nodes will then go through and we will obtain:
	\bean
	\dim \left( \bigcap_{u \in U} <S_{(n,u)}> \right) \leq \frac{\alpha}{(d-k+1)^{|U|}}.
	\eean
	We then repeat the process of extending the above inequality for any $U \subseteq [n-1]$ by assuming $\alpha < (d-k+1)^{k-1}$ and following the proof of Theorem \ref{thm:main}. Following this, we construct the bipartite graph as always and repeat the counting argument employed in the proof of Theorem \ref{thm:main} with one difference.   On the right side of the bipartite graph, we now have the $(n-1)$ repair matrices $S_{(n,1)},\hdots,S_{(n,n-1)}$ as the right nodes of the bipartite graph.   This will give us the desired result.   We omit the details.
\end{proof}

\section{Vector MDS Codes with Optimal Access Repair of $w$ Nodes:} \label{sec:MDScodes_ch6}
In this section, we define optimal access MDS codes with repair of a subset of $w$ nodes with each node repaired in help-by-transfer method with optimal repair bandwidth. We then derive two lower bounds on sub-packetization of such codes. One lower bound for $d=n-1$ and another for $d < n-1$. These two lower bounds on sub-packetization are derived again based on the proof of Theorem \ref{thm:main} and hence stated as corollaries. 

\subsection{Optimal-Access MDS Codes} In this section, we derive results which correspond to an $[n,k]$ MDS code over the vector alphabet $\mathbb{F}_q^{\alpha}$, having the property that it can repair the failure of any node in a subset of $w (\leq n-1)$ nodes where repair of each particular node on failure, can be carried out using linear operations, by uniformly downloading $\beta$ symbols from a collection of $d$ helper nodes. Linear repair of any node among the $w$ nodes is defined as in MSR code using repair matrices. It can be shown that even here, the minimum amount $d\beta$ of data download needed for repair of a single failed node, is given by $d\beta=d\frac{\alpha}{(d-k+1)}$ (minimum repair bandwidth). Our objective here as well, is on lower bounds on the sub-packetization level $\alpha$ of an MDS code that can carry out repair of any node in a subset of $w$ nodes, $1 \leq w \leq (n-1)$ where each node is repaired (linear repair) by help-by-transfer with minimum repair bandwidth (Such codes will be referred to as optimal access MDS codes). We prove a lower bound on $\alpha$ of such codes for the case of $d=(n-1)$ in Corollary \ref{cor:node_subset}.  This bound holds for any $w (\leq n-1)$ and is already shown to be tight in the beginning, by comparing with a recent code constructions \cite{LiTangTian}. Also provided, are bounds for the case $d<(n-1)$. The $w=n$ case correspond to the MSR code case which is described before. 
\subsection{Bounds on Sub-Packetization Level of a Vector MDS Code  with Optimal Access Repair of $w$ Nodes}
In this section, we will derive lower bounds on sub-packetization of vector MDS codes which can repair any node in a subset of $w$ nodes by help-transfer method with optimal repair bandwidth both for $d=n-1$ case and any $d$ case.
\subsubsection{Sub-packetization Bound for Optimal Access MDS codes with $d=n-1$} \label{sec:dnMDS_ch6}

In this section, we will derive a lower bound on sub-packetization of optimal accees MDS codes for $d=n-1$.

\begin{cor} \label{cor:node_subset}  
	Let $\mathcal{C}$ be a linear $[n,k]$ MDS code over the vector alphabet $\fq^{\alpha}$ containing a distinguished set $W$ of $|W|=w \leq (n-1)$ nodes.  Each node in $W$ can be repaired, through linear repair, by accessing and downloading, precisely $\beta=\frac{\alpha}{d-k+1}$ symbols over \fq \  from each of the remaining $d=(n-1)$ nodes.
	In other words, the repair of each node in $W$ can be carried out through help-by-transfer with minimum repair bandwidth with only linear operations.   Then we must have 
	\bean
	\alpha \geq  \left\{ \begin{array}{rl} \min \{ r^{\lceil \frac{w}{r} \rceil},r^{k-1} \}, \  & w > (k-1) \\
		r^{\lceil \frac{w}{r} \rceil}, \ & w \leq (k-1) . \end{array} \right. 
	\eean
\end{cor}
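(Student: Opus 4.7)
The plan is to mimic the proof of Theorem~\ref{thm:main} almost verbatim, restricting every set $U$ of ``failed'' nodes to lie inside $W$. First, I would re-examine Lemma~\ref{lem:rsi} and check that its proof uses only the existence of repair matrices $S_{(p_i,u_j)}$ for the failed nodes $u_j$, together with the interference-alignment and full-rank identities \eqref{eq:IA_P_1} and \eqref{Cond_1} applied to those particular repairs. Because every node being repaired lies in $U$, the lemma and its iterated consequence
\[
\dim\!\left(\bigcap_{u \in U}\langle S_{(p,u)}\rangle\right)\;\le\;\frac{\alpha}{r^{|U|}}
\]
remain valid for every $U \subseteq W$ with $2 \le |U| \le k-1$ and every $p \in [n]\setminus U$; the disjoint set $P$ of $r$ ``parity'' indices used inside the lemma can always be chosen because $|U|+r \le (k-1)+(n-k) < n$.

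Next, I would fix any $p \in [n]\setminus W$ (which exists since $w \le n-1$) and form the bipartite graph whose left vertices are the standard basis $e_1,\ldots,e_\alpha$ of $\mathbb{F}_q^{\alpha}$, whose right vertices are $\{S_{(p,j)} : j \in W\}$, and with an edge $(e_i,S_{(p,j)})$ iff $e_i \in \langle S_{(p,j)}\rangle$. The optimal-access hypothesis forces each right vertex to have degree exactly $\beta=\alpha/r$, so the total edge count equals $w\alpha/r$. I would then split on $w$. If $w \le k-1$, every $F \subseteq W$ already has $|F|\le k-1$, so the dimension bound applies directly and any $e_i$ can lie in at most $\lfloor\log_r\alpha\rfloor$ right vertices; counting edges gives $\alpha\lfloor\log_r\alpha\rfloor \ge w\alpha/r$, hence $\alpha \ge r^{\lceil w/r\rceil}$. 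If $w > k-1$, I would reuse the trick from Theorem~\ref{thm:main}: assume $\alpha < r^{k-1}$, so that for $|F|=k-1$ the dimension bound forces the intersection to be trivial, extend this to all larger $F\subseteq W$ by monotonicity, and then conclude $\alpha \ge r^{\lceil w/r\rceil}$ by the same edge count. Combining the two alternatives yields $\alpha \ge \min\{r^{\lceil w/r\rceil},r^{k-1}\}$.

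The main obstacle I anticipate is the bookkeeping in the very first step: one must verify that no identity inside the proof of Lemma~\ref{lem:rsi} silently invokes the repairability of a node outside $W$, and that the $r$ parity indices forming $P$ can always be selected disjoint from $U$ while still permitting the ``punctured code'' view needed to invoke the interference-alignment machinery with $d=n-1$. Once that verification is in place, the bipartite-counting finale is a direct specialization of the argument already carried out for Theorem~\ref{thm:main}, and the dichotomy between $w \le k-1$ and $w > k-1$ is precisely what produces the piecewise form of the stated lower bound.
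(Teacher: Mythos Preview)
Your proposal is correct and mirrors the paper's proof almost exactly: restrict $U\subseteq W$, observe that Lemma~\ref{lem:rsi} only invokes repairability of nodes in $U$ so it still applies, fix a helper index outside $W$ (the paper takes $n\notin W$), and then run the bipartite edge count with $w$ right vertices, splitting on $w\le k-1$ versus $w>k-1$ just as you describe. The obstacle you flag is precisely the one the paper dispatches in one sentence, so your plan is complete.
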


\begin{proof}  We remark that even in this setting, it is known that $d\beta=d\frac{\alpha}{r}$ is the minimum repair bandwidth needed to repair the nodes in $W$, hence the nodes in $W$ are those for which the repair is optimal.  To prove the corollary, consider a subset $U \subseteq W$ and $|U| \leq k-1$ and repeat the steps used to prove Lemma \ref{lem:rsi} with this set $U$ and by choosing any $P$ disjoint from $U$ with (wolog we assume $n \notin W$) $n \in P$ and $|P|=r$. Since in the proof of Lemma \ref{lem:rsi}, we only consider equations regarding repair of nodes in $U$, the proof of Lemma \ref{lem:rsi} will go through. We will arrive at the following analogue of \eqref{eq:Cond_4}: 
	\bea
	\dim \left( \bigcap_{u \in U} <S_{(n,u)}> \right) \leq \frac{\alpha}{r^{|U|}}. \label{eq:Degree_ch6}
	\eea
	For the case when $|W| > (k-1)$, we again extend the range of validity of this inequality to the case when $U$ is any subset of $W$, by first assuming that $\alpha < r^{k-1}$ and proceeding as in the proof of Theorem~\ref{thm:main} above.  For the case when $|W| \leq(k-1)$, no such extension is needed.  We then repeat the bipartite-graph-counting argument used in the proof of Theorem \ref{thm:main}, with the difference that the number of nodes on the right equals $w$ with $\{ S_{(n,j)}, j \in W\}$ as nodes in the right.   This will then give us the desired result.
\end{proof}

\subsubsection{Sub-packetization Bound for Optimal Access MDS codes for an Arbitrary Number $d$ of Helper Nodes} \label{sec:arbdMDS_ch6}
In this section, we will derive a lower bound on sub-packetization of optimal accees MDS codes for any $d<n-1$.
\begin{cor} \label{sb_4} 
	Let $\mathcal{C}$ be a linear $[n,k]$ MDS code over the vector alphabet $\fq^{\alpha}$ containing a distinguished set $W$ of $|W|=w \leq d$ nodes.  Each node $j$ in $W$ can be repaired, through linear repair, by accessing and downloading, precisely $\beta=\frac{\alpha}{d-k+1}$ symbols over \fq \  from each of the $d$ helper nodes where the $d$ helper nodes are any set of $d$ nodes apart from the failed node $j$.
	In other words, the repair of each node in $W$ can be carried through help by transfer with minimum repair bandwidth with only linear operations.   Then we must have 
	\bean
	\alpha \geq  \left\{ \begin{array}{rl} \min \{ (d-k+1)^{\lceil \frac{w}{d-k+1} \rceil},(d-k+1)^{k-1}\}, \  & w > (k-1) \\
		(d-k+1)^{\lceil \frac{w}{d-k+1}\rceil}, \ & w \leq (k-1) . \end{array} \right. 
	\eean

\end{cor}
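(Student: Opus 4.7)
(Proof plan.)
The plan is to reduce the arbitrary-$d$ setting here to the $d=n-1$ setting of Corollary~\ref{cor:node_subset} by a puncturing argument analogous to the one used in Corollary~\ref{sb_6}, and then to recycle the counting step of Theorem~\ref{thm:main}. Without loss of generality I will assume $n \notin W$ (by relabelling nodes), so that the index $n$ will play the role of the distinguished ``parity'' node from which I extract repair-subspace information.

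First I would fix a subset $U \subseteq W$ with $2 \leq |U| \leq \min\{w,k-1\}$. I would then choose a set $P \subseteq [n]$ disjoint from $U$ with $|P| = d-k+1$ and $n \in P$, and a set $V$ disjoint from $U \cup P$ with $|U \cup V \cup P| = d+1$ (so $|V| = k - |U| \geq 1$). By the assumption of the corollary, for every $u \in U$ the code permits a linear help-by-transfer repair with minimum bandwidth from any $d$ helper nodes; in particular I may pick the $d$ helpers to be exactly $(U \cup V \cup P)\setminus\{u\}$. Restricting the code $\mathcal{C}$ to the coordinates in $U \cup V \cup P$ gives a punctured code $\mathcal{C}'$ of length $d+1$ which is MDS of dimension $k$ (so $r' = d-k+1$), and the repair data for each $u \in U$ in $\mathcal{C}$ is already supported on $\mathcal{C}'$. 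Hence $\mathcal{C}'$ satisfies the hypotheses of Lemma~\ref{lem:rsi} for repair of the nodes in $U$, with locality parameter $r' = d-k+1$ in place of $r$, and the proof of that lemma applies verbatim. Iterating the resulting inequality exactly as in \eqref{eq:Cond_4} yields
\[
\dim\!\Bigl(\bigcap_{u \in U} \langle S_{(n,u)} \rangle\Bigr) \;\leq\; \frac{\alpha}{(d-k+1)^{|U|}}.
\]

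Next I would extend this estimate to subsets $U \subseteq W$ of arbitrary size, mimicking the extension step in the proof of Theorem~\ref{thm:main}. If $w \leq k-1$ no extension is necessary and I may take $|U| = w$ directly. If $w > k-1$, I assume for contradiction that $\alpha < (d-k+1)^{k-1}$; then for $|U| = k-1$ the displayed bound forces the intersection to have dimension $0$, whence the same vacuous intersection property holds for every $U \subseteq W$ with $|U| \geq k-1$. In either regime, I conclude that a single standard basis vector $e_i$ can belong to at most $\lfloor \log_{d-k+1}(\alpha)\rfloor$ of the $w$ repair subspaces $\{\langle S_{(n,j)}\rangle : j \in W\}$.

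Finally I would run the bipartite-graph counting argument from the end of the proof of Theorem~\ref{thm:main}, but with only $w$ right-nodes indexed by $W$. Each such right-node has degree exactly $\beta = \alpha/(d-k+1)$ (because the code is optimal access), and each left-node $e_i$ has degree at most $\lfloor \log_{d-k+1}(\alpha)\rfloor$. Double-counting edges gives
\[
w \cdot \frac{\alpha}{d-k+1} \;\leq\; \alpha \lfloor \log_{d-k+1}(\alpha)\rfloor,
\]
hence $\alpha \geq (d-k+1)^{\lceil w/(d-k+1)\rceil}$. Combining the two cases of the extension step yields the stated bound.

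The main obstacle I anticipate is the verification that the punctured code $\mathcal{C}'$ genuinely inherits enough structure for Lemma~\ref{lem:rsi} to apply, in particular that the interference-alignment and full-rank equations used in that lemma (which were phrased in terms of a systematic generator matrix with parity nodes $P$) continue to hold after puncturing, and that we are entitled to fix the helper set to be $(U \cup V \cup P)\setminus\{u\}$ uniformly for all $u \in U$. Here the key point is that the corollary grants repairability with respect to every choice of $d$ helpers, so a single consistent choice of repair matrices $\{S_{(i,u)} : i \in (U \cup V \cup P)\setminus\{u\}\}$ for each $u \in U$ can be pinned down, after which the algebra of Lemma~\ref{lem:rsi} carries over with $r$ replaced by $d-k+1$.
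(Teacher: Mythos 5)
Your proof is correct and takes essentially the same route as the paper: the paper's proof simply punctures $\mathcal{C}$ to a code of length $n'=d+1$ containing $W$ (so that $d=n'-1$ for the punctured code) and then invokes Corollary~\ref{cor:node_subset}. You spell out the same reduction and then re-derive the steps that constitute Corollary~\ref{cor:node_subset} (applying Lemma~\ref{lem:rsi} to the punctured code, extending the intersection bound, then the bipartite counting), rather than citing it directly, but the mathematical content is identical.
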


\begin{proof}
	To prove this, we simply restrict our attention to the (punctured) code obtained by selecting a subset of nodes of size $n'=(d+1)$ that includes the subset $W$ for which optimal repair is possible.   Applying the results of Corollary~\ref{cor:node_subset} then gives us the desired result.
\end{proof}

An optimal access MDS code is said to have optimal sub-packetization if the code has $\alpha$ equal to the lower bound we derived.

\section{Structure of a Vector MDS Code with Optimal Access Repair of $w$ Nodes with Optimal Sub-packetization} \label{sec:struct_ch6}
In this section, we deduce the structure of optimal access MDS code with optimal sub-packetization with repair matrices of the form $S^D_{i,j}=S_j$ and show that the structure we deduced is also present in existing constructions.
\subsection{Deducing the Structure of Optimal access MDS Code with Optimal Sub-packetization with Repair matrices of the form $S^D_{i,j}=S_j$}

\begin{thm} \label{thm:struct} \textbf{(Structure of Optimal access MDS Code with Optimal Sub-packetization with Repair Matrices of the form $S^D_{i,j}=S_j$)}: 
	Let $\mathcal{C}$ be a linear $[n,k]$ MDS code over the vector alphabet $\fq^{\alpha}$. Let the nodes $U=\{u_1,...,u_k\} \subseteq [n]$ be systematic and $\{p_1,...,p_r\} \subseteq [n]$ be parity nodes with $r=n-k$. Let $W=\{u_1,...,u_{w}\}$ be a set of $w$ nodes such that any node $j$ in $W$ can be repaired, through linear repair, by accessing and downloading, precisely $\beta=\frac{\alpha}{d-k+1}$ symbols over \fq \  from each of the set of $d$ helper nodes where the $d$ helper nodes are any set of $d$ nodes apart from the failed node $j$. In other words, the repair of each node in $W$ can be carried out through help-by-transfer with minimum repair bandwidth with only linear operations.  We assume in addition, that every node $j \in W$ can be repaired by contacting a subset $D \subseteq [n]-\{j\}$, $|D|=d$ of helper nodes in such a way that the repair matrix $S^D_{(i,j)}$ is dependent only on the node $j$, i.e., $S^D_{(i,j)}=S_{j}$,$\forall i \in D$, $\forall D \subseteq [n]-\{j\}, |D|=d$. Let $d \geq k+1$ and $k \geq 3$.
	Let the generator matrix be of the form,
	\bea
	G = \left[
	\scalemath{1}{
		\begin{array}{cccc}
			I_{\alpha} & 0 & \hdots & 0 \\
			0 & I_{\alpha} & \hdots & 0 \\
			\vdots & \vdots & \vdots & \vdots \\
			0 & 0 & \hdots & I_{\alpha} \\
			A_{p_1,u_1} & A_{p_1,u_2} & \hdots & A_{p_1,u_k} \\
			A_{p_2,u_1} & A_{p_2,u_2} & \hdots & A_{p_2,u_k} \\
			\vdots & \vdots & \vdots & \vdots \\
			A_{p_r,u_1} & A_{p_r,u_2} & \hdots & A_{p_r,u_k} \\
		\end{array} 
	}
	\right]. \ \ \label{Gform_struct}
	\eea
	
	\ben
	\item Case 1: $w=k$ (repair with optimal bandwidth is possible for all systematic nodes) and $(d-k+1)$ divides $k$:\\
	By Corollary \ref{sb_4}, $\alpha \geq (d-k+1)^{\frac{k}{d-k+1}}$. We assume optimal subpacketization i.e., $\alpha = (d-k+1)^{\frac{k}{d-k+1}}$ (achieving the lower bound).
	Under the above conditions, wlog we must have: \\
	For $1 \leq j \leq w$,$1 \leq i \leq r$ and assuming $<S_{u_j}> = <e_1,...,e_{\beta}>$:
	\bea
	A_{p_i,u_j}=\left[
	\scalemath{1}{
		\begin{array}{c}
			v_{i,j,1} \\
			\vdots \\
			v_{i,j,\beta} \\
			M_{i,j} 
		\end{array}
	}
	\right]. \ \ \label{Gform_struct11}
	\eea
	where 
	\ben
	\item If $<S_{u_j}> = <e_{j_1},...,e_{j_\beta}>$ then the row vectors $\{v_{i,j,1},...,v_{i,j,\beta}\}$ will be in rows $\{j_1,...,j_{\beta}\}$ of $A_{p_i,u_j}$ respectively and the rest of $\alpha-\beta$ rows of $A_{p_i,u_j}$ will be termed as $M_{i,j}$,
	\item Disjointness of support of vectors with uniform cardinality of support: $v_{i,j,t}$ is a $1 \times \alpha$ vector such that $|\text{Support}(v_{i,j,t})|=d-k+1$,$\forall 1 \leq t \leq \beta$ and 
	$\text{Support}(v_{i,j,t_1}) \cap \text{Support}(v_{i,j,t_2})=\emptyset$, $\forall 1\leq t_1 \neq t_2 \leq \beta$,
	\item 	Same Support independent of $i$: any given $j,t$, $\text{Support}(v_{i_1,j,t}) = \text{Support}(v_{i_2,j,t})$, $\forall 1 \leq i_1 \neq i_2 \leq r$,
	\item For any given $j,t$, $\{ v^T_{i,j,t} : 1 \leq i \leq r\}$ are a set of $r$ vectors such that they form the columns of generator matrix of an $[r,d-k+1]$ MDS code over $\mathbb{F}_q$,
	\item $M_{i,j}$ is a $(\alpha-\beta) \times \alpha$ matrix such that each distinct row is a distinct standard basis vector from the set $\{e_1,...,e_{\alpha}\}-\text{rows of } S_{u_j}$ after getting scaled by some element from $\fq$.
	\een
	
	\item Case 2: $w=d-k+1$ and $w \leq k$:\\
	By Corollary \ref{sb_4}, $\alpha \geq w$. We assume optimal subpacketization i.e., $\alpha=w$ (achieving the lower bound). Hence $\beta=1$.
	Under the above conditions, wlog we must have: \\
	For $1 \leq j \leq w$,$1 \leq i \leq r$:
	\bea
	A_{p_i,u_j}=\left[
	\scalemath{1}{
		\begin{array}{c}
			P_{i,j,1} \\
			v_{i,j} \\
			P_{i,j,2} 
		\end{array} 
	}
	\right]. \ \ \label{Gform_struct1}
	\eea
	where $P_{i,j,1}$ is a $j-1 \times w$ matrix and $P_{i,j,2}$ is a $w-j \times w$ matrix such that 
	\bea
	P_{i,j,1} &=&[D_{(i,j,1)} \ \ \  0_{j-1 \times w-j+1}],\\
	P_{i,j,2} &=& [0_{w-j \times j} \ \ \  D_{(i,j,2)}],
	\eea 
	where $D_{(i,j,1)}$ is a $j-1 \times j-1$ diagonal matrix and  $D_{(i,j,2)}$ is a $w-j \times w-j$ diagonal matrix 
	and $\{ v^T_{i,j} : 1\leq i \leq r \}$ are a set of $r$ vectors forming the  columns of the generator matrix of an $[r,w]$ MDS code over $\fq$.\\
	For $w+1 \leq j \leq k$,$1 \leq i \leq r$:
	\bea
	A_{p_i,u_j}=
	\scalemath{1}{
		\begin{array}{c}
			P'_{i,j-w}
		\end{array} 
	},
	\ \ \label{Gform_struct2}
	\eea
	where $P'_{i,j-w}$ is a $w \times w$ diagonal matrix such that for a fixed $1 \leq \ell \leq w$, $[P'^{\ell,\ell}_{i,j} : 1\leq i \leq r, 1 \leq j \leq k-w]$ is a $r \times (k-w)$ matrix such that every square submatrix is invertible, where $P'^{\ell,\ell}_{i,j}$ is the $(\ell,\ell)^{th}$ entry of the matrix $P'_{i,j}$.
	\een
\end{thm}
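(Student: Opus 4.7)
The plan is to show that under optimal sub-packetization, every inequality used in the derivation of the lower bound in Corollary \ref{sb_4} must hold with equality, and then translate each such equality into a rigid structural constraint on the generator-matrix blocks $A_{p_i,u_j}$. First I would set $s = d-k+1$ and, by the puncturing trick of Corollary \ref{sb_4}, reduce to a $(d+1)$-node sub-code where the role of $r$ in Lemma \ref{lem:rsi} is played by $s$. Tightness of the bipartite-graph counting argument in Theorem \ref{thm:main} then forces the standard basis vectors $e_1,\ldots,e_\alpha$ to be distributed uniformly across the $w$ coordinate subspaces $\langle S_{u_1}\rangle,\ldots,\langle S_{u_w}\rangle$: each $e_i$ lies in exactly $k/s$ of them in Case~1 and in exactly one of them in Case~2. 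After a relabelling of coordinates, this yields a clean combinatorial tiling of $[\alpha]$ by the supports $\{\mathrm{supp}(S_{u_j})\}$.

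Next I would exploit the interference-alignment equations \eqref{eq:IA_P_1} together with the direct-sum decomposition \eqref{eq:stratify} (with $r$ replaced by $s$). Equality throughout the chain \eqref{eq:Cond_4} forces $\bigcap_{u\in U}\langle S_{(p_i,u)}\rangle$ to have dimension exactly $\alpha/s^{|U|}$ for every $U$. Combined with the disjoint-sum relation \eqref{eq:sum_disjoint}, this forces the image of each row $e_t$ of $S_{u_j}$ under $A_{p_i,u_j}$ to have support of size exactly $s$; moreover the $s$ such supports (across the $s$ parity indices playing the role of $P$) must partition the complementary coordinate set. This directly yields the form \eqref{Gform_struct11}: the vectors $v_{i,j,t}$ have supports of size $s$, disjoint across $t$ by \eqref{eq:sum_disjoint}, and invariant in $i$ by \eqref{eq:l_fold}. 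The MDS condition on $\{v_{i,j,t}^T : 1\le i\le r\}$ forming the columns of an $[r,s]$ MDS code follows by restricting the global $[n,k]$-MDS condition on $G$ to the $s$ coordinates $\mathrm{supp}(v_{i,j,t})$ and the $r$ parity rows.

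The third step is to establish the diagonal/standard-basis structure of the residual block $M_{i,j}$ (Case~1) and of $P'_{i,j-w}$ (Case~2). Here I would apply interference alignment simultaneously for the repair of the other systematic nodes $u_{j'}$, $j'\ne j$: for any coordinate $t'\in\mathrm{supp}(S_{u_{j'}})$ sitting outside $\mathrm{supp}(S_{u_j})$, the corresponding row of $A_{p_i,u_j}$ must contribute data that aligns with $\langle S_{u_{j'}}\rangle$ during the repair of $u_{j'}$. The tiling from the first step guarantees that this alignment constraint is imposed simultaneously by exactly $k/s$ (Case~1) or $1$ (Case~2) distinct nodes $u_{j'}$, and these constraints, together with the MDS non-singularity requirement, pin each such row down to a scalar multiple of a single standard basis vector $e_{t'}$. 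In Case~2, combining this with the rank conditions on any $k$ columns of $G$ gives the ``every square submatrix invertible'' property of $[P'^{\ell,\ell}_{i,j}]$.

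The main obstacle will be the third step. The sparse-support form of the $v_{i,j,t}$ rows follows from a direct equality analysis of \eqref{eq:Cond_4} and \eqref{eq:stratify}, but showing that every row of $M_{i,j}$ is a scalar multiple of a standard basis vector requires simultaneously exploiting repair constraints for all $w$ systematic nodes and ruling out any ``mixed-support'' row on pain of violating either the interference-alignment rigidity or the MDS property. I expect the hypotheses $k\ge 3$ and $d\ge k+1$ to enter precisely here, by ensuring that the resulting system of alignment conditions is sufficiently over-determined to force the claimed rigidity of $M_{i,j}$ and $P'_{i,j-w}$.
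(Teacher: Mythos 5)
Your proposal follows essentially the same route as the paper's proof: (i) tightness of the bipartite-graph counting from Corollary~\ref{sb_4} forces the uniform tiling of $\{e_1,\dots,e_\alpha\}$ by the sets $\mathrm{supp}(S_{u_j})$ (each $e_t$ in exactly $k/(d-k+1)$ of them in Case~1, exactly one in Case~2); (ii) the equality $\bigcap_{b\in L_t}\langle S_b\rangle=\langle e_t\rangle$ combined with the interference-alignment relations $\langle S_{u_b}A_{p_i,u_j}\rangle=\langle S_{u_b}\rangle$ for $b\ne j$ pins each row of $A_{p_i,u_j}$ outside $\mathrm{supp}(S_{u_j})$ to a scalar multiple of a standard basis vector, giving the $M_{i,j}$ and $P',P_{i,j,\cdot}$ structure; and (iii) full-rank repair conditions give the $v_{i,j,t}$ structure. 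Two minor corrections worth noting: first, the $[r,d-k+1]$ MDS property of the vectors $\{v^T_{i,j,t}\}$ in the paper is \emph{not} obtained by restricting the global vector-MDS condition of $G$ to a sub-block (which does not cleanly descend to the scalar level), but rather from the full-rank condition of repair in Lemma~\ref{lem:rank_cond_p_node} applied to the punctured code, which requires $\bigoplus_{p_i\in D'}\langle v_{i,j,t}\rangle = \fq^{\alpha}$ for every subset $D'$ of $d-k+1$ parities and hence any $d-k+1$ of the $r$ vectors $v_{i,j,t}$ to be independent; second, the hypotheses $k\ge 3$ and $d\ge k+1$ enter in step (i), not step (iii), as they are exactly what guarantees $\alpha < (d-k+1)^{k-1}$ so that the dimension inequality $\dim\bigcap_{u\in U}\langle S_u\rangle\le\alpha/(d-k+1)^{|U|}$ extends past $|U|=k-1$ and the bipartite degree count yields exact values rather than one-sided bounds.
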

\begin{proof}
	Proof is given in the Appendix  \ref{sec:proofstructthm_ch6} .
\end{proof}

\subsection{A Coupled Layer Interpretation of Theorem \ref{thm:struct}}
In this subsection we show evidence that the structure we deduced in Theorem \ref{thm:struct} is present in existing codes.
We restrict our attention to the case when $w=d-k+1$ and $w \leq k$. From Theorem \ref{thm:struct}, we get the general form of 
$\{A_{p_i,u_j}\}$. Since the generator matrix is systematic, we can directly write down the parity-check matrix by inspection.  In the following, we show that this parity-check matrix can be written such that the construction can be viewed to have a coupled-layer structure, that is present in the Ye-Barg~\cite{YeBar_2} construction (see \cite{SasVajKum_arxiv} for details). We do note however, that we are only considering the case where $w$ nodes are repaired, not all the nodes as in Ye-Barg~\cite{YeBar_2}. 
For simplicity we illustrate the connection via an example.   Let $w=3,r=4,k=4$, so that $\alpha=w=3$. Let $U=[1,k]$ and $\{p_1,...,p_r\}=[k+1,n]$.

From Theorem \ref{thm:struct}, we obtain that:
\bea
A_{1,1}=\left[
\scalemath{0.8}{
	\begin{array}{ccc}
		v_{1,1}(1) & v_{1,1}(2) & v_{1,1}(3) \\
		0 & a^2_{1,1} & 0 \\
		0 & 0 & a^3_{1,1} \\
	\end{array} 
}
\right]. \ \ \label{P9}
\eea
$a^m_{i,j}$ is a non zero scalar in $\fq$. The other matrices can similarly be written down leading to the expression below for the parity-check matrix :
\bea
H = \left[
\scalemath{0.8}{
	\begin{array}{cccc|cccc}
		A_{1,1} & A_{1,2} & A_{1,3} & A_{1,4} & I_3 & & & \\
		A_{2,1} & A_{2,2} & A_{2,3} & A_{2,4} & & I_{3} & &  \\
		A_{3,1} & A_{3,2} & A_{3,3} & A_{3,4} & & & I_3 & \\
		A_{4,1} & A_{4,2} & A_{4,3} & A_{4,4} & & & & I_3 \\
	\end{array} 
}
\right]. \ \ \label{P10}
\eea
Upon substituting for the $A_{i,j}$, we obtain:
\bean
H = \left[
\scalemath{0.6}{
	\begin{array}{ccc|ccc|ccc|ccc|c}
		v_{1,1}(1) & v_{1,1}(2) & v_{1,1}(3)& a^1_{1,2} & 0 & 0 & a^1_{1,3} & 0 & 0 & a^1_{1,4} & 0 & 0 & \\
		0 & a^2_{1,1} & 0 & v_{1,2}(1) & v_{1,2}(2) & v_{1,2}(3) & 0 &  a^2_{1,3} & 0 & 0 & a^2_{1,4} & 0 & \\
		0 & 0 & a^3_{1,1} & 0 & 0 & a^3_{1,2} & v_{1,3}(1) & v_{1,3}(2) & v_{1,3}(3) & 0 & 0 & a^3_{1,4} & \\
		\cline{1-12}
		v_{2,1}(1) & v_{2,1}(2) & v_{2,1}(3)& a^1_{2,2} & 0 & 0 & a^1_{2,3} & 0 & 0 & a^1_{2,4} & 0 & 0 & \\
		0 & a^2_{2,1} & 0 & v_{2,2}(1) & v_{2,2}(2) & v_{2,2}(3) & 0 &  a^2_{2,3} & 0 & 0 & a^2_{2,4} & 0 & \\
		0 & 0 & a^3_{2,1} & 0 & 0 & a^3_{2,2} & v_{2,3}(1) & v_{2,3}(2) & v_{2,3}(3) & 0 & 0 & a^3_{2,4} & I_{12}\\
		\cline{1-12}
		v_{3,1}(1) & v_{3,1}(2) & v_{3,1}(3)& a^1_{3,2} & 0 & 0 & a^1_{3,3} & 0 & 0 & a^1_{3,4} & 0 & 0 & \\
		0 & a^2_{3,1} & 0 & v_{3,2}(1) & v_{3,2}(2) & v_{3,2}(3) & 0 &  a^2_{3,3} & 0 & 0 & a^2_{3,4} & 0 & \\
		0 & 0 & a^3_{3,1} & 0 & 0 & a^3_{3,2} & v_{3,3}(1) & v_{3,3}(2) & v_{3,3}(3) & 0 & 0 & a^3_{3,4} & \\
		\cline{1-12}
		v_{4,1}(1) & v_{4,1}(2) & v_{4,1}(3)& a^1_{4,2} & 0 & 0 & a^1_{4,3} & 0 & 0 & a^1_{4,4} & 0 & 0 & \\
		0 & a^2_{4,1} & 0 & v_{4,2}(1) & v_{4,2}(2) & v_{4,2}(3) & 0 &  a^2_{4,3} & 0 & 0 & a^2_{4,4} & 0 & \\
		0 & 0 & a^3_{4,1} & 0 & 0 & a^3_{4,2} & v_{4,3}(1) & v_{4,3}(2) & v_{4,3}(3) & 0 & 0 & a^3_{4,4} & \\
	\end{array} 
}
\right],
\eean
which in turn can be written in the form:
\bean
H=\left[
\scalemath{0.6}{
	\begin{array}{ccccc|ccccc|ccccc}
		v_{1,1}(1) & a^1_{1,2} & a^1_{1,3} & a^1_{1,4} & & v_{1,1}(2) & & & & & v_{1,1}(3) & & & & \\
		v_{2,1}(1) & a^1_{2,2} & a^1_{2,3} & a^1_{2,4} & I_4 & v_{2,1}(2) & & & & & v_{2,1}(3) & & & &\\
		v_{3,1}(1) & a^1_{3,2} & a^1_{3,3} & a^1_{3,4} & & v_{3,1}(2) & & & & & v_{3,1}(3) & & & &\\
		v_{4,1}(1) & a^1_{4,2} & a^1_{4,3} & a^1_{4,4} & & v_{4,1}(2) & & & & & v_{4,1}(3) & & & &\\
		\hline
		& v_{1,2}(1) & & & &  a^2_{1,1} & v_{1,2}(2) & a^2_{1,3} & a^2_{1,4} &  & & v_{1,2}(3) & & & \\
		& v_{2,2}(1) & & & &  a^2_{2,1} & v_{2,2}(2) & a^2_{2,3} & a^2_{2,4} & I_4 & & v_{2,2}(3) & & & \\
		& v_{3,2}(1) & & & & a^2_{3,1} & v_{3,2}(2) & a^2_{3,3} & a^2_{3,4} & & & v_{3,2}(3) & & &\\
		& v_{4,2}(1) & & &	& a^2_{4,1} & v_{4,2}(2) & a^2_{4,3} & a^2_{4,4} & & & v_{4,2}(3) & & &  \\
		\hline
		& & v_{1,3}(1) & & & & & v_{1,3}(2) & & & a^3_{1,1} & a^3_{1,2} & v_{1,3}(3) & a^3_{1,4} & \\
		& & v_{2,3}(1)  & & & & & v_{2,3}(2) & & & a^3_{2,1} & a^3_{2,2}  & v_{2,3}(3) & a^3_{2,4} & I_4 \\
		& & v_{3,3}(1) & & & & & v_{3,3}(2) & & & a^3_{3,1} & a^3_{3,2} & v_{3,3}(3)  & a^3_{3,4} & \\
		& & v_{4,3}(1) & & & & & v_{4,3}(2) & & & a^3_{4,1} & a^3_{4,2} &  v_{4,3}(3) & a^3_{4,4} & \\
	\end{array} 
}
\right].
\eean
The coupled nature of the construction is now apparent: the columns in the matrix can be divided into three sections, each can be viewed as corresponding to a different plane.  While for the most part, each parity-check equations runs over elements of a single plane, there are exceptions and this corresponds to the coupling across planes.

\section{Construction of an MSR code for $d<(n-1)$} \label{sec:MSRconstr_ch6} 
The following is a collaborative work with a person named Myna Vajha. In this section we briefly describe a construction of optimal access MSR code with optimal sub-packetization with $d \in \{k+1,k+2,k+3\}$ with field size of $O(n)$. Here we only we briefly give an intuitive outline. For complete description of the construction please refer to \cite{VajBalKum}. Based on the structure theorem for vector MDS codes in the last section we can formulate a general form of parity check matrix of optimal access MSR code with optimal sub-packetization but with help-by-transfer for all nodes. The code symbols will be divided into several planes of code symbols with parity check equations given such that each parity check equation involves code symbols within a plane and some symbols out of plane. The main property to prove is MDS property of the code. Since the code is defined in terms of planes by recursive argument we can write a general form of matrices we want to be non-singular for the MDS property. An example submatrix to prove non-singular is given by :
\bean\scalebox{0.8}{$
	M_{i} = \left[\begin{array}{c|c|c}
	M_{i-1} & 0 & D\\ \hline
	0 & M_{i-1} & D \Lambda
	\end{array}\right]
	\begin{array}{l}
	\MyLBrace{2ex}{$z=0$} \\
	\MyLBrace{2ex}{$z=1$} \\
	\end{array}$},
\eean
for some matrices $M_j,\Lambda,D$ where $z$ is the variable indicating plane index. We prove that these matrices are non-singular by formulating the entries of $M_i$ as variables and finding the degree of a root of the determinant polynomial i.e., the polynomial $\det(M_i)$ and by repeating this we determine all the factors which involves a particular variable. For more details please refer to \cite{VajBalKum}.

\section{Summary and Contributions}
In this chapter, we derived lower bounds on sub-packetization of MSR codes and vector MDS codes with optimal access repair of a subset of nodes for any $d$. Some of our lower bounds were general and some of them had assumptions on repair matrices. We compared our lower bounds with existing constructions and showed the tightness of our lower bounds on sub-packetization. We then deduced the structure of vector MDS codes with optimal access repair of a subset of nodes for any $d$ with optimal sub-packetization assuming a certain form of repair matrices (this form of repair matrices that we assumed is present in a lot of existing constructions) and showed evidence that this structure is present in existing codes. We then briefly gave an intuitive outline of a new construction of an optimal access MSR code with optimal sub-packetization.

\begin{appendices}
	\chapter{Row Spaces}

	\begin{lem} \label{Row_Spaces}
		Let $A,B$ be nonsingular $(\alpha \times \alpha)$ matrices and $\{P_1,P_2,Q_1,Q_2\}$ be matrices of size $(m \times \alpha)$.  Then if 
		\bean
		<P_1A> = <P_2B>, & & <Q_1A> = <Q_2B> ,
		\eean
		we have 
		\bean
		\left(<P_1> \bigcap <Q_1> \right) A  & = & 
		\left(<P_2> \bigcap <Q_2> \right) B. 
		\eean
		
		\label{lem:row space} 
	\end{lem}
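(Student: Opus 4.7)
My plan is to reduce the claim to two elementary observations about right multiplication by a nonsingular matrix: (i) for any matrix $P$ of size $m \times \alpha$, the row space of $PA$ equals $<P>A$, and (ii) for any nonsingular $\alpha \times \alpha$ matrix $B$ and any two subspaces $U,V \subseteq \mathbb{F}_q^{\alpha}$, one has $UB \cap VB = (U \cap V)B$. Both facts are routine, but once they are in hand the lemma follows in one line from the two hypotheses.

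First, I would verify (i). A generic element of $<PA>$ is of the form $v(PA)$ for some row vector $v \in \mathbb{F}_q^m$, and $v(PA) = (vP)A$ lies in $<P>A$; conversely every element of $<P>A$ has this form, so $<PA>=<P>A$ (no nonsingularity of $A$ is needed here). Applying this to both hypotheses rewrites them as $<P_1>A = <P_2>B$ and $<Q_1>A = <Q_2>B$.

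Next I would verify (ii) for the nonsingular matrix $B$. The inclusion $(U\cap V)B \subseteq UB \cap VB$ is immediate. For the reverse inclusion, if $y \in UB \cap VB$, write $y = uB = vB$ with $u \in U$, $v \in V$; since $B$ is invertible, $u = v \in U \cap V$, hence $y \in (U \cap V)B$. The same statement of course holds for $A$ in place of $B$.

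Finally I would assemble the pieces. Using (ii) for $A$ and then the rewritten hypotheses,
\[
(<P_1> \cap <Q_1>)A \;=\; <P_1>A \,\cap\, <Q_1>A \;=\; <P_2>B \,\cap\, <Q_2>B,
\]
and then (ii) for $B$ gives $<P_2>B \cap <Q_2>B = (<P_2> \cap <Q_2>)B$, which is exactly the desired equality. There is no real obstacle here; the only thing one has to be a bit careful about is to invoke nonsingularity of $B$ (and of $A$) precisely where step (ii) is applied, since without invertibility the inclusion $UB \cap VB \subseteq (U \cap V)B$ can fail.
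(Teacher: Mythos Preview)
Your proof is correct and follows essentially the same approach as the paper's: both use that for a nonsingular matrix $A$ one has $(<P>\cap<Q>)A = <PA>\cap<QA>$, and then chain through the hypotheses. You simply spell out the two ingredients (i) $<PA>=<P>A$ and (ii) $(U\cap V)A = UA\cap VA$ more explicitly than the paper does, which states the key identity without justification.
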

	
	\begin{proof} 
		For the case when $A$ is nonsingular, we have that: 
		\bean
		\left(<P_1> \bigcap <Q_1> \right) A  & = & <P_1A>  \ \bigcap \ <Q_1 A>  .
		\eean
		The result then follows from noting that: 
		\bean
		\left(<P_1> \bigcap <Q_1> \right) A & = & <P_1A> \bigcap <Q_1A> \\
		\ = \ <P_2B> \bigcap <Q_2B> & = & \left(<P_2> \bigcap <Q_2> \right) B. 
		\eean
	\end{proof}
	%================END row space LEMMA======================	
	%================END row space LEMMA======================	
	%================END row space LEMMA======================	
	\chapter{Condition for Repair of a systematic Node} 
	
	\begin{lem} \label{lem:rank_cond_p_node}
		Let the linear $\{ (n,k,d=n-1), (\alpha,\beta), B, \mathbb{F}_q \}$ MSR code $\mathcal{C}$ be encoded in systematic form, where nodes $\{u_1, \cdots, u_k\}$ are the systematic nodes and nodes $\{p_1, \cdots, p_r\}$ are the parity nodes. Let the corresponding generator matrix $G$ be given by: 
		\bea
		G = \left[
		\scalemath{1}{
			\begin{array}{cccc}
				I_{\alpha} & 0 & \hdots & 0 \\
				0 & I_{\alpha} & \hdots & 0 \\
				\vdots & \vdots & \vdots & \vdots \\
				0 & 0 & \hdots & I_{\alpha} \\
				A_{p_1,u_1} & A_{p_1,u_2} & \hdots & A_{p_1,u_k} \\
				A_{p_2,u_1} & A_{p_2,u_2} & \hdots & A_{p_2,u_k} \\
				\vdots & \vdots & \vdots & \vdots \\
				A_{p_r,u_1} & A_{p_r,u_2} & \hdots & A_{p_r,u_k} \\
			\end{array} 
		}
		\right], \ \ \label{G_syst}
		\eea
		Let $\left\{ S_{(i,u_1)} \mid i \in [n]-\{u_1\}\right\}$ be the repair matrices	associated to the repair of the systematic node $u_1$. Then we must have that for $j \in [k]$, $j \neq 1$:
		\bea
		<S_{(u_j,u_1)}> = <S_{(p_1,u_1)} A_{p_1,u_j}>= \hdots = <S_{(p_r,u_1)} A_{p_r,u_j}> \label{eq:IA} 
		\eea
		and
		\bea
		rank \left( \left[ \begin{array}{c}  S_{(p_1,u_1)} A_{p_1,u_1}  \\ \vdots \\ S_{(p_r,u_1)} A_{p_r,u_1} \end{array} \right] \right) = \alpha \label{eq:FR} .
		\eea
	\end{lem}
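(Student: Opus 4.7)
The plan is to derive both conclusions by writing down the linear system of $d\beta = (n-1)\beta$ scalar equations available at the replacement of the failed systematic node $u_1$ and demanding that this system linearly determine $\underline{c}_{u_1}$ uniquely, irrespective of the values of the remaining systematic blocks $\underline{c}_{u_2},\ldots,\underline{c}_{u_k}$.

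First I would view each received symbol as a linear functional on the message vector $[\underline{c}_{u_1}^T,\ldots,\underline{c}_{u_k}^T]^T \in \mathbb{F}_q^{k\alpha}$: the $\beta$ symbols delivered by systematic helper $u_j$ ($j\neq 1$) correspond to a row block that is $S_{(u_j,u_1)}$ in the $u_j$-coordinate slot and zero elsewhere, while the $\beta$ symbols delivered by parity helper $p_i$ correspond to a row block whose $u_\ell$-slot ($\ell\in[k]$) is $S_{(p_i,u_1)}A_{p_i,u_\ell}$. Linearity of the repair function then translates ``recoverability of $\underline{c}_{u_1}$'' into the statement that the $\mathbb{F}_q$-span $W$ of these $(n-1)\beta$ row vectors contains the entire $u_1$-coordinate subspace $\{0\}^{(j<1)}\oplus \mathbb{F}_q^\alpha \oplus \{0\}^{(j>1)}$, i.e., that the projection of $W$ onto the $u_1$-block is surjective.

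Next I would argue the inclusion direction of the alignment condition. Since $W$ must contain, for every row of the $\underline{c}_{u_1}$-block, a vector whose $u_j$-slot ($j\neq 1$) is zero, the only way to annihilate the $u_j$-slot of a parity-helper row is by adding a suitable combination of the $u_j$-helper rows. Hence each row of $S_{(p_i,u_1)}A_{p_i,u_j}$ must lie in $\langle S_{(u_j,u_1)}\rangle$, giving $\langle S_{(p_i,u_1)}A_{p_i,u_j}\rangle\subseteq \langle S_{(u_j,u_1)}\rangle$ for every $i\in[r]$ and every $j\in[k]\setminus\{1\}$. The full-rank condition \eqref{eq:FR} then falls out automatically: after using the systematic helper rows to cancel interference from $\underline{c}_{u_2},\ldots,\underline{c}_{u_k}$ inside the parity rows, what remains is precisely the matrix stacking the $S_{(p_i,u_1)}A_{p_i,u_1}$, and this stacked matrix must act surjectively on $\underline{c}_{u_1}\in\mathbb{F}_q^\alpha$, forcing its rank to equal $\alpha$.

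Finally I would upgrade the inclusion to equality by a dimension count that exploits the fact that the repair bandwidth $(n-1)\beta = \alpha + (k-1)\beta$ is exactly the cut-set minimum for $d=n-1$, so no scalar symbol in the repair data can be linearly redundant. Specifically, $\dim W = (n-1)\beta$ (else some transmitted symbol is redundant), and $W$ decomposes after interference cancellation into an $\alpha$-dimensional piece contained in the $u_1$-block and, for each $j\neq 1$, a piece contained in the $u_j$-block of dimension at least $\dim\langle S_{(u_j,u_1)}\rangle\le \beta$. Summing dimensions forces $\dim\langle S_{(u_j,u_1)}\rangle = \beta$ and, via the same accounting applied to the parity rows, $\dim\langle S_{(p_i,u_1)}A_{p_i,u_j}\rangle = \beta$ for every $i$; combined with the prior inclusion this yields $\langle S_{(p_i,u_1)}A_{p_i,u_j}\rangle = \langle S_{(u_j,u_1)}\rangle$, which is \eqref{eq:IA}.

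The main obstacle I anticipate is making this last dimension-counting step fully rigorous without circularity: one has to show that the $(n-1)\beta$ rows assembled above are genuinely linearly independent (otherwise some of the transmitted bandwidth is wasted and the minimum-bandwidth hypothesis is violated), and that this independence translates, block by block, into the precise equalities of row spaces claimed. The standard way to do this is to invoke the cut-set bound $B\le \sum_{i=0}^{k-1}\min\{\alpha,(d-i)\beta\}$ from \eqref{eq:cutsetbound} together with the MSR equality $B=k\alpha$, which forces each summand to be tight and hence each $\beta$-bundle of repair symbols to contribute a fresh $\beta$ dimensions; this is the technical heart of the proof.
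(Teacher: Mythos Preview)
Your step 3 contains a genuine gap. The claim that each individual row of $S_{(p_i,u_1)}A_{p_i,u_j}$ must lie in $\langle S_{(u_j,u_1)}\rangle$ does not follow from the argument you give: recovery of $\underline c_{u_1}$ only requires that \emph{certain linear combinations} of all helper rows have vanishing $u_j$-slot, and nothing prevents the $u_j$-slot contributions coming from several different parity helpers $p_1,\ldots,p_r$ from cancelling one another without any of them individually lying in $\langle S_{(u_j,u_1)}\rangle$. Your step 4 then uses this unjustified per-$i$ inclusion to cancel interference, and your step-5 block-diagonal decomposition of $W$ presupposes the very alignment you are trying to prove, so the argument as written is circular.

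The paper's route avoids this by working with the explicit recovery map $T_{u_1}=[T_{(u_2,u_1)},\ldots,T_{(p_r,u_1)}]$. The $u_1$-slot identity $\sum_i T_{(p_i,u_1)}S_{(p_i,u_1)}A_{p_i,u_1}=I_\alpha$ immediately gives \eqref{eq:FR} \emph{and} shows that $[T_{(p_1,u_1)},\ldots,T_{(p_r,u_1)}]$ already has rank $\alpha$. For $j\neq 1$ the $u_j$-slot identity then says the rank-$\alpha$ matrix $[T_{(u_j,u_1)},T_{(p_1,u_1)},\ldots,T_{(p_r,u_1)}]$ annihilates the $(r{+}1)\beta\times\alpha$ stack of $S_{(u_j,u_1)}$ and the $S_{(p_i,u_1)}A_{p_i,u_j}$, forcing that stack to have rank at most $(r{+}1)\beta-\alpha=\beta$; since each $\beta$-row block has rank $\beta$ (the parity blocks via \eqref{eq:FR} and invertibility of the $A_{p_i,u_j}$, and $S_{(u_j,u_1)}$ by the standing full-row-rank assumption on repair matrices), all the row spaces coincide. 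Your row-space picture can in fact be salvaged, but the order must change: derive \eqref{eq:FR} first from the $u_1$-slot alone, use it together with full row rank of the $S_{(u_j,u_1)}$ to show the $(n-1)\beta$ repair rows are independent (so $W$ equals the $u_1$-block direct-summed with the $(k-1)$ systematic slices), and only then project the parity rows onto the $u_j$-slot to obtain \eqref{eq:IA}.
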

	\begin{proof}
		Let $[m_1,...,m_k]$ be the $k$ message symbols (each message symbol $m_{j}$ being a $(1 \times \alpha)$ row vector) encoded by the MSR code i.e., the resultant codeword is given by $G[m_1,...,m_k]^T$. 
		For the repair of systematic node $u_1$, the data collected by the replacement node is given by:
		\bean 
		\left[ \begin{array}{c} S_{(u_2,u_1)} \uc{u_2}^T  \\ \vdots \\ S_{(u_k,u_1)} \uc{u_k}^T \\ S_{(p_1,u_1)} \uc{p_1}^T \\ \vdots \\ S_{(p_r,u_1)} \uc{p_r}^T \end{array} \right] 
		& = & 
		\left[ \begin{array}{c} S_{(u_2,u_1)} m^T_{2}  \\ \vdots \\ S_{(u_k,u_1)} m^T_{k} \\ S_{(p_1,u_1)} \left( \sum_{j=1}^k A_{p_1,u_j}m^T_j  \right) \\ \vdots \\ S_{(p_r,u_1)} \left( \sum_{j=1}^k A_{p_r,u_j}m^T_j\right) \end{array} \right] .
		\eean
		Let 
		\bean
		T_{u_1} & = \left[ \begin{array}{cccccc}T_{(u_2,u_1)}, & \cdots & T_{(u_k,u_1)}, & T_{(p_1,u_1)}, & \cdots & T_{(p_r,u_1)} \end{array} \right] 
		\eean
		be the $(\alpha \times (n-1)\beta)$ matrix used to derive the contents of the replacement node $u_1$, with each $T_{(i,j)}$ being an $(\alpha \times \beta)$ submatrix.  Then we must have that 
		\bean
		\left[ \begin{array}{cccccc}T_{(u_2,u_1)}, & \cdots & T_{(u_k,u_1)}, & T_{(p_1,u_1)}, & \cdots & T_{(p_r,u_1)} \end{array} \right] 	\left[ \begin{array}{c} S_{(u_2,u_1)} m^T_{2},  \\ \vdots \\ S_{(u_k,u_1)} m^T_{k} \\ S_{(p_1,u_1)} \left( \sum_{j=1}^k A_{p_1,u_j}m^T_j \right) \\ \vdots \\ S_{(p_r,u_1)} \left( \sum_{j=1}^k A_{p_r,u_j}m^T_j \right) \end{array} \right]
		& = & m^T_1 .
		\eean
		Since this must hold for all data vectors $m^T_j$, we can equate the matrices that premultiply  $m^T_j$ on the left, on both sides.  If we carry this out for $j=1$, we will obtain that: 
		\bean
		\sum_{i=1}^r T_{(p_i,u_1)}S_{(p_i,u_1)}A_{p_i,u_1} & = & I ,
		\eean
		which implies:
		\bean
		\left[ \begin{array}{ccc} T_{(p_1,u_1)} & \cdots & T_{(p_r,u_1)} \end{array} \right] \left[ \begin{array}{c}  S_{(p_1,u_1)} A_{p_1,u_1}  \\ \vdots \\ S_{(p_r,u_1)} A_{p_r,u_1} \end{array} \right]
		& = & I , 
		\eean
		which in turn, forces:
		\bea
		\text{rank} \left( \left[ \begin{array}{c}  S_{(p_1,u_1)} A_{p_1,u_1}  \\ \vdots \\ S_{(p_r,u_1)} A_{p_r,u_1} \end{array} \right] \right) = \alpha ,
		\eea
		and
		\bea
		\text{rank} \left( \left[ \begin{array}{ccc} T_{(p_1,u_1)} & \cdots & T_{(p_r,u_1)} \end{array} \right] \right) = \alpha . \label{Temp:FR}
		\eea
		The above equation proves \eqref{eq:FR}.
		For the case $j \neq 1$:
		\bean
		T_{(u_j,u_1)}S_{(u_j,u_1)} + \sum_{i=1}^r T_{(p_i,u_1)}S_{(p_i,u_1)}A_{p_i,u_j} & = & 0, 
		\eean
		and
		\bea
		\left[ \begin{array}{cccc}T_{(u_j,u_1)}, & T_{(p_1,u_1)}, & \cdots & T_{(p_r,u_1)} \end{array} \right] \left[ \begin{array}{c} S_{(u_j,u_1)} \\ S_{(p_1,u_1)} A_{p_1,u_j}  \\ \vdots \\ S_{(p_r,u_1)} A_{p_r,u_j}, \end{array} \right]
		& = & 0. \label{eq:IA_1}
		\eea
		Equations \eqref{eq:IA_1} and \eqref{Temp:FR} imply:
		\bea
		\text{rank} \left( \left[ \begin{array}{c} S_{(u_j,u_1)} \\ S_{(p_1,u_1)} A_{p_1,u_j}  \\ \vdots \\ S_{(p_r,u_1)} A_{p_r,u_j}, \end{array} \right] \right) \leq \beta. 
		\eea
		Given that $\text{rank}(S_{(p_i,u_1)} A_{p_i,u_j})=\beta$ for all $i \in [r]$ and $\text{rank}(S_{(u_j,u_1)})= \beta$, this implies
		\bea
		<S_{(u_j,u_1)}> = <S_{(p_1,u_1)} A_{p_1,u_j}>= \hdots = <S_{(p_r,u_1)} A_{p_r,u_j}> \label{IA:1}. 
		\eea
		The above equation proves \eqref{eq:IA}.
		
	\end{proof}	
	
	\chapter{Proof of Theorem \ref{thm:struct}} \label{sec:proofstructthm_ch6}
	\begin{proof}
		\ben
		\item Case $1$: $w=k$ and $(d-k+1)$ divides $k$:\\
		Let us recall the counting argument on bipartite graph implicit in the proof of Corollary \ref{sb_4} (which is based on the counting argument on bipartite graph in the proof of Corollary~\ref{cor:node_subset}). Let $e_1,...,e_{\alpha}$ be left nodes and $S_{u_1},...,S_{u_k}$ be right nodes of the bipartite graph. Form an edge between $e_i$ and $S_{u_j}$ iff $e_i \in <S_{u_j}>$. Now count the edges on both sides. The degree of $e_i$ is atmost $\log_{(d-k+1)}(\alpha)$ which is proved implicitly in the proof of Corollary \ref{sb_4} (which is clear from the extension of inequality \eqref{eq:Degree_ch6} to more than $k-1$ nodes by assuming $\alpha < (d-k+1)^{(k-1)}$ and the counting argument in the proof of Corollary \ref{cor:node_subset} by applying them to the punctured code with $n'=d+1$ nodes containing $W$.). The degree of $S_{u_j}$ is exactly $\beta$. Hence counting and equating the number of edges from $\{e_1,...,e_{\alpha}\}$ and $\{ S_{u_1},...S_{u_k} \}$, for $\alpha < (d-k+1)^{(k-1)}$:
		\bea
		\alpha \times \log_{d-k+1}(\alpha) \geq k \beta, \\
		\alpha \geq (d-k+1)^{\frac{k}{(d-k+1)}}. \label{P1}
		\eea
		By above inequality \eqref{P1}, since $\alpha = (d-k+1)^{\frac{k}{(d-k+1)}}$ and $\alpha < (d-k+1)^{(k-1)}$ is true as $k \geq 3$ and $d \geq k+1$, we have that degree of $e_i$ is exactly equal to $L=\log_{(d-k+1)}(\alpha) = \frac{k}{(d-k+1)}$. Let $N_j$ be the nodes to which node $S_{u_j}$ is connected. Let  $L_i$ be the nodes to which node $e_i$ is connected. If nodes to which $e_i$ is connected is $\{S_{i_1},...,S_{i_L}\}$ then we define $L_i=\{i_1,...,i_L\}$.
		
		Since, $\forall T \subseteq \{u_1,...,u_k\}$, $\text{dim}(\cap_{b \in T} <S_b>) \leq \frac{\alpha}{(d-k+1)^{|T|}}$ (follows from the proof of Corollary \ref{sb_4} and Corollary \ref{cor:node_subset} by extending dimension inequality \eqref{eq:Degree_ch6} for more than $k-1$ nodes by applying it to the punctured code with $n'=d+1$ nodes containing $W$ as  $\alpha < (d-k+1)^{(k-1)}$), we have that for $1 \leq t \leq \alpha $: $\text{dim}(\cap_{b \in L_t} <S_b>) \leq \frac{\alpha}{(d-k+1)^{|L_t|}} = 1$. Hence $\cap_{b \in L_t} <S_b> = \{e_t\}$. 
		
		We must have for $1 \leq j \neq b \leq k$ and $1\leq i \neq m \leq r$, by applying interference alignment conditions (i.e., by applying equation \eqref{eq:IA} in the Appendix for the punctured code with $n'=d+1$ nodes containing $W \cup \{p_i,p_m\}$ for the repair of node $u_b$):
		\bea
		<S_{u_b} A_{p_i,u_{j}}> & =& <S_{u_b} A_{p_m,u_{j}}> = <S_{u_b}>. \label{P6}
		\eea
		Equation \eqref{P6} implies that for $1 \leq t \leq \alpha$,$1 \leq j \leq k$ such that $u_j \notin L_t$ and $1 \leq i \neq m \leq r$:
		\bea
		\cap_{b \in L_t} <S_{b} A_{p_i,u_{j}}> & = & \cap_{b \in L_t} <S_{b} A_{p_m,u_{j}}> = \cap_{b \in L_t} <S_{b}>, \notag \\
		(\cap_{b \in L_t} <S_{b}>) A_{p_i,u_{j}} & =& (\cap_{b \in L_t} <S_{b}>) A_{p_m,u_{j}} = \cap_{b \in L_t} <S_{b}>, \label{P3} \\ 
		<e_t> A_{p_i,u_{j}} &=& <e_t> A_{p_m,u_{j}} = <e_t>. \label{P2}
		\eea
		Let us fix $j$ such that  $1\leq j \leq k$. Throughout the rest of the proof, we use this $j$ and proof is applicable for any $j$ such that $1 \leq j \leq k$. For any given $1 \leq i \neq m \leq r$, equation \eqref{P2} is true for all $e_t \notin N_j$.   Since $|N_j| = \beta$ (wlog let $N_j=\{e_1,...e_{\beta}\}$), this proves the $M_{i,j}$ part of the matrix $A_{p_i,u_j}$ in equation \eqref{Gform_struct11}.\\
		For $1\leq t \leq \beta$ :
		\bea
		\text{dim}(\cap_{b \in L_t-\{u_j\}} <S_b>) \leq \frac{\alpha}{(d-k+1)^{|L_t|-1}} = d-k+1. \label{P7}
		\eea
		Hence for $1\leq t \leq \beta$, $1 \leq i \neq m \leq r$, from equation $\eqref{P3}$ with $L_t$ replaced by $L_t-\{u_j\}$ and from equation \eqref{P7}, we have :
		\bean
		(\cap_{b \in L_t-\{u_j\}} <S_{b}>) A_{p_i,u_{j}} = (\cap_{b \in L_t-\{u_j\}} <S_{b}>) A_{p_m,u_{j}} \\
		= \cap_{b \in L_t-\{u_j\}} <S_{b}> \subseteq <e_{t},e_{s_{1,j,t}}...,e_{s_{d-k,j,t}}>,
		\eean
		for some $\{s_{1,j,t},...s_{d-k,j,t}\} \subseteq \{1,...,\alpha\}-\{t\}$.
		Hence for $1\leq t \leq \beta$, $v_{i,j,t} = e_t A_{p_i,u_{j}} \in <e_{t},e_{s_{1,j,t}}...,e_{s_{d-k,j,t}}>$ independent of $i$. Hence this proves that $|\text{Support}(v_{i,j,t})|\leq d-k+1$,$\forall 1 \leq t \leq \beta$ independent of $i$. 
		
		Let $D'$ be a set such that $D' \subseteq \{p_1,...,p_r\}$ and $|D'| = d-k+1$. By applying full rank condition of repair (i.e., by applying equation \eqref{eq:FR} in the Appendix for the punctured code with $n'=d+1$ nodes containing $W \cup D'$ for the repair of node $u_j$):
		\bea
		\bigoplus_{p_i \in D'} < S_{u_j} A_{p_i,u_j} >  = \bigoplus_{\{i: p_i \in D'\}} <\{v_{i,j,t}: e_t \in N_j\}> & = & \mathbb{F}_q^{\alpha}. \label{P4}
		\eea 
		Equation \eqref{P4} implies:
		\bean
		\cup_{\{i:p_i \in D',1\leq t \leq \beta\}} \text{Support}(v_{i,j,t}) = \{1,...,\alpha\}.
		\eean
		but 
		\bean
		&\cup_{\{i:p_i \in D',1\leq t \leq \beta\}} \text{Support}(v_{i,j,t})  \subseteq  \cup_{1 \leq t \leq \beta} \{t,s_{1,j,t},...,s_{d-k,j,t}\},\\
		&\beta (d-k+1) = \alpha,  \\
		&\scalemath{0.9}{ \alpha = |\cup_{\{i:p_i \in D',1\leq t \leq \beta\}} \text{Support}(v_{i,j,t})| \leq \sum_{1 \leq t \leq \beta} |\{t,s_{1,j,t},...,s_{d-k,j,t}\}| \leq (d-k+1)\beta=\alpha }.
		\eean
		Hence for $1 \leq t_1 \neq t_2 \leq \beta$, $\{t_1,s_{1,j,t_1},...,s_{d-k,j,t_1}\} \cap \{t_2,s_{1,j,t_2},...,s_{d-k,j,t_2}\} = \emptyset $ and independent of $i$, $\text{Support}(v_{i,j,t})=\{t,s_{1,j,t},...,s_{d-k,j,t}\}$. This proves that (Since $D'$ is arbitrary) $|\text{Support}(v_{i,j,t})|=d-k+1$,$\forall 1 \leq t \leq \beta$ and $\text{Support}(v_{i,j,t_1}) \cap \text{Support}(v_{i,j,t_2})=\emptyset$, $\forall 1\leq t_1 \neq t_2 \leq \beta$. We have also proved that for fixed $t$, support of $v_{i,j,t}$ is same independent of $i$. By equation \eqref{P4}:
		\bea
		\bigoplus_{\{ i: p_i \in D' \}} <\{v_{i,j,t}: e_t \in N_j\}> & = & \mathbb{F}_q^{\alpha}, \notag \\
		\text{Hence for a fixed $t$:} & & \notag \\
		\text{dim} \left(\bigoplus_{\{ i: p_i \in D' \}} <\{v_{i,j,t}\}> \right) &=& |D'|=d-k+1. \label{P5}
		\eea
		For a fixed $t$, from equation \eqref{P5}, since $D'$ is an arbitrary subset of $\{p_1,...,p_r\}$ of cardinality $d-k+1$, any subset of $\{ v^T_{i,j,t} : 1 \leq i \leq r\}$ of cardinality $d-k+1$ must be of rank $d-k+1$ and we have already seen that for a fixed $t$, $\text{Support}(v_{i,j,t})$ is same for all $1 \leq i \leq r$ and is of cardinality $d-k+1$. Hence $\{ v^T_{i,j,t} : 1 \leq i \leq r\}$ form a set of $r$ vectors with support of all $r$ vectors equal and of cardinality $d-k+1$ such that they form the columns of generator matrix of an $[r,d-k+1]$ MDS code over $\mathbb{F}_q$. This completes the proof of case $1$.
		\item Case $2$: $w=d-k+1$ and $w \leq k$:\\
		We first note that wlog for $1 \leq i \leq w$, $S_{u_i}=e_i$. This is because as $\text{dim}(<S_i> \cap <S_j>) \leq \frac{\alpha}{w^2}<1$ and hence $\text{dim}(<S_i> \cap <S_j>)=0$ (follows from the proof of Corollary \ref{sb_4} and Corollary \ref{cor:node_subset} by applying the dimension inequality \eqref{eq:Degree_ch6} to the punctured code with $n'=d+1$ nodes containing $W$).\\
		We next look at interference alignment conditions. We must have for $1 \leq j \leq k$, $1 \leq b \leq w$, $j \neq b$ and $1 \leq i \neq m \leq r$ (by applying equation \eqref{eq:IA} in the Appendix for the punctured code with $n'=d+1$ nodes containing $W \cup \{p_i,p_m\}$ for the repair of node $u_b$):
		\bea
		<S_{u_b} A_{p_i,u_{j}}> = <S_{u_b} A_{p_m,u_{j}}> = <S_{u_b}> = <e_{b}>.
		\eea
		The above interference alignment proves the structure of $P_{i,j,1}$ and $P_{i,j,2}$ for $1 \leq j \leq w$. It also proves that $P'_{i,j-w}$ is a diagonal matrix for $w+1 \leq j \leq k$.
		
		Let $D'$ be a set such that $D' \subseteq \{p_1,...,p_r\}$ and $|D'| = w$. By applying full rank conditions (i.e., applying equation \eqref{eq:FR} in the Appendix for the punctured code with $n'=d+1$ nodes containing $W \cup D'$ for the repair of node $u_j$): for $1 \leq j \leq w$:
		\bea
		\bigoplus_{p_i \in D'} < S_{u_j} A_{p_i,u_j} >  = \bigoplus_{\{i: p_i \in D'\}} <v_{i,j}> & = & \mathbb{F}_q^{\alpha}.
		\eea 
		Hence any subset of $w$ vectors from $\{v_{i,j} : 1 \leq i \leq r\}$ must form a basis of $\fq^{\alpha}$ which implies $\{v^T_{i,j} : 1 \leq i \leq r\}$ form the columns of generator matrix of an $[r,w]$ MDS code.\\
		We finally look at MDS property of code $\mathcal{C}$ to conclude the proof.
		For MDS property of $\mathcal{C}$, we must have:\\
		For $1 \leq m \leq \text{min}(r,k)$ any matrices of the following form must be full rank.
		\bea
		\left[
		\scalemath{1}{
			\begin{array}{cccc}
				A_{p_{i_1},u_{j_1}} & A_{p_{i_1},u_{j_2}} & \hdots & A_{p_{i_1},u_{j_m}} \\
				A_{p_{i_2},u_{j_1}} & A_{p_{i_2},u_{j_2}} & \hdots & A_{p_{i_2},u_{j_m}} \\
				\vdots & \vdots & \vdots & \vdots \\
				A_{p_{i_m},u_{j_1}} & A_{p_{i_m},u_{j_2}} & \hdots & A_{p_{i_m},u_{j_m}}
			\end{array} 
		}
		\right]. \ \ \label{Gform_structT}
		\eea
		Applying the above condition for $\{j_1,...,j_m\} \subseteq [w+1,k]$ and observing that determinant of above matrix is the product of determinant of the matrices $[P'^{\ell,\ell}_{i_{a},j_b-w} : 1\leq a \leq m, 1 \leq b \leq m]$ for $1\leq \ell \leq w$, (as $P'_{i,j}$ is a diagonal matrix) we have the condition that any $m \times m$ sub matrix of the matrix $[P'^{\ell,\ell}_{i,j} : 1\leq i \leq r, 1 \leq j \leq k-w]$ must be invertible.
		\een
	\end{proof}
\end{appendices}

\chapter{Partial Maximal and Maximal Recoverable Codes} \label{ch:MR}
In this chapter, we discuss a sub-class of Locally Recoverable (LR) codes (defined in Chapter \ref{ch:LRSingleErasure}) called Maximal Recoverable (MR) codes. MR codes of block length $n$ and dimension $k$ are a class of sub-codes of a given code $\mathcal{C}_{sup}$ with block length $n$ and dimension $>k$ such that an MR code can correct all possible erasure patterns which is not precluded by the code $\mathcal{C}_{sup}$. We will refer to the parity check matrix of $\mathcal{C}_{sup}$ as $H_0$ in this chapter. If  $\mathcal{C}_{sup} = \fq^{n}$ with $q \geq n$, then MR codes coincide with MDS codes. When $\mathcal{C}_{sup}$ is an LR code then MR codes are a subclass of LR codes. In this case, MR codes are LR counterpart of MDS codes. Hence as with MDS codes, it is an important problem to construct MR codes with low field size. In this chapter, we will discuss MR codes with low field size. We also consider a relaxation of MR codes called Partial Maximal Recoverable (PMR) codes.  Below is a summary of our results in this chapter.

\paragraph{Organization of the Chapter} The chapter is organized as follows.  We begin by formally defining MR codes in Section \ref{sec:MRdef_ch7} and PMR codes in Section~\ref{sec:PMR_ch7}. Section \ref{sec:PMR_ch7} and Section \ref{sec:lower_rate_PMR_ch7} presents results on PMR codes. Section \ref{sec:MRConst_ch7} presents three constructions of MR codes with low field size. Finally Section \ref{sec:Sum_ch7}, presents a summary of this chapter.

\paragraph{Contributions}  Contributions of the thesis on the topic of PMR and MR codes include:
\ben
\item Introduction of a relaxation of MR codes called PMR codes. A general form of parity check matrix of PMR codes is given in Section \ref{sec:characerizing_PMR_ch7}. This general structure is also applicable to MR codes. In Section \ref{sec:simple_PMR_construction_ch7}, we present a simple, high-rate, low field size construction of PMR codes. Also provided is an approach for a general construction of PMR codes (Section \ref{sec:lower_rate_PMR_ch7}) for parameters with slightly lower rate.
\item Also contained in the chapter are three constructions of MR codes with improved parameters, primarily field size. 
\item The first construction of MR codes is for the special case of $r=2$ (Section \ref{sec:coset_construction_ch7}). The code is obtained by puncturing codes constructed in \cite{TamBar_LRC}. The code constructed has field size better than the existing constructions in the rate regime $0 < \text{rate} \leq 0.233$ although the construction gives codes for all rates in the rate regime $0 < \text{rate} < 0.667$.
\item We next give two explicit constructions of MR codes (Section \ref{sec:r12MR_ch7} and Section \ref{sec:rdelta2MR_ch7}) with block length $n$ with field size of $O(n)$ for the case when dimension $k=\frac{nr}{r+1}-2$.
\een

\section{Maximal Recoverable Codes} \label{sec:MRdef_ch7}

An $[n,k]$ MDS code can recover from any pattern of $(n-k)$ erasures.  Maximal Recoverable (MR) codes are codes that operate under some pre-specified linearity constraints and which can recover from any pattern of $(n-k)$ erasures that is not precluded by the pre-specified linearity constraints imposed. In the context of locality, the pre-specified constraints are of course, the constraints imposed for satisfying the definition of an LR code. 

\begin{defn}
	Let $H_0$ be a $(\rho \times n)$ matrix over a finite field \fq\ with $\text{rank}(H_0) = \rho$ whose row space has $m=q^{\rho}-1$ nonzero vectors with respective support sets $A_i \subseteq [n], i=1,2, \cdots, \ m$.  We view $H_0$ as the matrix that imposes constraints necessary to satisfy the definition of an LR code.   Let us define a subset $S \subset [n]$ to be a $k$-core with respect to $H_0$ if $|S|=k$ and $| A_i \cap S^c | \geq 1, \forall i \in [m]$.  Then with respect to $H_0$, an MR code is an $[n,k]$ code ${\cal C}$ over \fq\ possessing a $(k \times n)$ generator matrix $G$ with $k \leq n-\rho$ satisfying the property that $H_0G^T =[0]$ and for any $k$-core $S$,  
	\bea \label{eq:mrc} 
	\text{rank} \left( G \mid_S \right) & = & k. 
	\eea
\end{defn}
\begin{note}
	Let $H = \left [ \frac{H_0}{H_1} \right] $ denote the parity-check matrix of the MR code, where $H_1$ is an $((n-k-\rho)\times n)$ matrix representing the additional parity-checks imposed in such a way that the code  with parity check matrix $H$ satisfies the requirements of an MR code. It could happen that the elements of $H_0$ belong to a small base field $\mathbb{B}$ and over that field it is not possible to find a matrix $H_1$ which will result in an MR code.  It turns out that in such instances, one can always choose the elements of $H_1$ to lie in a suitable extension field \fq\ of $\mathbb{B}$, resulting in an MR code over \fq. 
	
\end{note}
\begin{note}
	The condition in \eqref{eq:mrc}  imposed on the $k$-core subsets $S$ is equivalent to the following condition: 
	Let $B \subseteq [n]$ be such that $\mid B^c \cap A_i \mid \geq 1$,   $\forall i \in [m]$. Then $G|_B$ is a generator matrix of an $[n=|B|,k]$ MDS code.  This follows since any $k$ columns of $G |_B$ are required to be linearly independent. 
\end{note}

\subsection{General Construction with Exponential Field Size} 

Saying that $S$ is a $k$-core is equivalent to saying that  $S$ is an information set since the $k$ underlying message symbols can be uniquely recovered from the $k$ code symbols $\{c_i \mid i \in S\}$. From the perspective of the parity check matrix $H$, $S$ is a $k$-core if only if $\text{rk} \left(H \mid_{S^c}\right) = (n-k)$.  This suggests a construction technique.  Setting $H = \left[ \begin{array}{c} H_0 \\ H_1 \end{array} \right]$ as earlier, we regard the symbols in the $((n-k-\rho) \times n)$ matrix $H_1$ as variables.   We need to select $H_1$ such that any $(n-k) \times (n-k)$ sub-matrix of $H$ corresponding to columns indexed by the complement $S^c$ of a $k$-core, has nonzero determinant.  Let $P(H_1)$ be the polynomial in the symbols of $H_1$ obtained by taking the product of these determinants. Note that the definition of a $k$-core ensures that each of these determinants are non-zero polynomials.  The product polynomial is a polynomial in the entries (variables) of the matrix $H_1$ and each variable appears with degree at most ${n-1 \choose n-k-1}$.  By the Combinatorial Nullstellensatz \cite{Nog}, it follows that there is a field of size $q > {n-1 \choose n-k-1}$ such that this product of determinants can be made nonzero.    Thus a MR code always exists of field size $q > {n-1 \choose n-k-1}$. The interest is of course, in explicit constructions of MR codes having low field size $q$.   It is also possible to use linearized polynomials to construct MR codes, but while this results in an explicit construction, the field size is still in general,  of exponential size. 
%@article{alon_1999, title={Combinatorial Nullstellensatz}, volume={8}, number={1-2}, journal={Combinatorics, Probability and Computing}, publisher={Cambridge University Press}, author={ALON, NOGA}, year={1999}, pages={7?29}}

\subsection{Partial MDS Codes} 

In the literature, the focus motivated by practical considerations, is on the following subclass of MR code, also sometimes termed as Partial MDS (P-MDS) codes.
\begin{defn}
	An $(r,\delta,s)$ MR code or partial MDS code over a finite field \fq\ is defined as an $[n=m(r+\delta),k=mr-s]$ code over $\mathbb{F}_q$ in which the $n$ code symbols can be arranged as an array of $(m \times (r+\delta) \ )$ code symbols in such a way that each row in the array forms a $[r+\delta,r,\delta+1]$ MDS code and upon puncturing any $\delta$ code symbols from each row of the array, the resulting code becomes an $[mr,mr-s]$ MDS code.
\end{defn}

A tabular listing of some constructions of $(r,\delta,s)$ MR codes or partial-MDS codes appears in Table \ref{tab:pmds}. 
\begin{table}[h!] 
	\caption{Constructions for partial MDS or MR codes.} 
	\label{tab:pmds} 
	\begin{center}
		\scalebox{0.9}{
			\begin{tabular}{ | m{4.5em} | m{2cm}| m{14cm} |} 
				\hline \hline 
				\multicolumn{1}{|c|}{Reference} &  \centering Parameters of MR Code &  \multicolumn{1}{|c|}{Field Size} \\
				\hline \hline 
				\multicolumn{3}{|c|}{General $r,\delta,s$} \\ \hline 
				\hline
				\cite{Cal_Koy} & $(r,\delta,s)$ &  $(q')^{mr}$ where $q'$ is a prime power $\geq r+\delta$. \\
				\hline
				\cite{Gab_Yak_Bla_Sie} & $(r,\delta,s)$ &  $\geq \max((q')^{\delta+s}m^{s-1},(q')^{s(\delta+s)})$ with $q'$ a prime power $\geq r+\delta$. \\
				\hline \hline 
				\multicolumn{3}{|c|}{$\delta=1$} \\   \hline 
				\hline 
				\cite{Bla_Haf_Het} & $(r,1,s)$ & $O(2^n)$ \\
				\hline 
				\cite{GopHuaJenYek} &  $(r,1,s)$ & $O(m^{\lceil (s-1)(1-\frac{1}{2^r}) \rceil})$ or $\geq n^{\frac{m+s}{2}}$ for $m+s$ even and  $\geq 2n^{\frac{m+s-1}{2}}$ for $m+s$ odd, when $r+1$ and $m$ are powers of $2$. \\
				\hline 
				\cite{Hu_Yek} & $(r,1,s)$ &  $\geq (q')^{\lfloor (1-\frac{1}{m})s \rfloor +m-1}$ ($q'$ is prime power $\geq n $) and for some special case, the field size of their construction is $\geq (q')^{\lfloor (1-\frac{1}{m})s \rfloor +m-2}$. For  $m=2$, $4 | s$, $\geq (q')^{\frac{s}{2}}$ where $q'\geq n$ is a power of $2$.\\
				\hline
				\cite{Gab_Yak_Bla_Sie} & $(r,1,s)$ &  $\geq {2^{\ell}}^{(1+(s-1)\lceil \log_{2^{\ell}}(m) \rceil)}$ where $\ell = \lceil \frac{s+1}{2} \rceil \lceil \log_2(r+\delta) \rceil$. \\
				\hline \hline 
				\multicolumn{3}{|c|}{$s=1$}  \\ \hline 
				\hline 
				\cite{Bla_Haf_Het} & $(r,\delta,1)$ & $O(\max(m,r+\delta))$ \\ 
				\hline 
				\cite{Che_Shu_Yu_Sun} &  $(r,\delta,1)$ & $O(r+\delta)$ \\
				\hline \hline 
				\multicolumn{3}{|c|}{$s=2$}  \\ \hline 
				\hline 
				\cite{Blaum_1} and & $(r,1,2)$ & $O(n)$ \\
				Theorem \ref{thm:2oneS_ch7} & & \\
				\hline 
				\cite{Blaum_3} &  $(r,\delta,2)$ & $\geq m((\delta+1)(r-1)+1)$ $\approx$ $\delta \times n$ \\
				\hline 
				Theorem \ref{thm:Mod_bluam_ch7}  & $(r,\delta,2)$ & $O(n)$ \\
				\hline \hline 
				\multicolumn{3}{|c|}{$s=3$}  \\ \hline 
				\hline 
				\cite{GopHuaJenYek} &  $(r,1,3)$ & $O(k^{\frac{3}{2}})$ \\
				\hline 
				\cite{Gab_Yak_Bla_Sie} &  $(r,\delta,3)$ & if $m < (r+\delta)^3$ then $O((r+\delta)^{3(
					\delta+3)})$ otherwise $O((r+\delta)^{\delta+3} m^{1.5})$ \\
				\hline \hline 
				\multicolumn{3}{|c|}{$s=4$}  \\ \hline  \hline 
				\hline 
				\cite{GopHuaJenYek} &  $(r,1,4)$ & $O(k^{\frac{7}{3}})$ \\
				\hline       \hline                  
			\end{tabular}
		}
	\end{center}
\end{table}
In \cite{LalLok}, the authors characterize the weight enumerators and higher support weights of an $(r,1,s)$ MR code. Throughout this chapter, a local parity check refers to a code word in dual code of Hamming weight $\leq r+1$.

\section{Partial Maximal Recoverability} \label{sec:PMR_ch7} 

We have already seen in Chapter \ref{ch:LRSingleErasure} that the following upper bound on the minimum distance of a code under a weaker notion called information-symbol locality was derived in \cite{GopHuaSimYek}:
\bea
d_{\min} & \leq & (n-k+1) -\left( \left \lceil \frac{k}{r} \right \rceil -1 \right). \label{eq:dmin} 
\eea
The same bound also applies to codes with all-symbol locality and is often (but not always) tight, see \cite{PraLalKum} for instance. We will be constructing codes achieving this bound \eqref{eq:dmin} along with further constraints on erasure correction capabilities. In this chapter, we will refer to an $[n,k]$ LR code with AS (All Symbol) locality (defined in Chapter \ref{ch:LRSingleErasure}) over a finite field \fq\ with locality parameter $r$ and minimum distance $d$ as an $(n,k,r,d)$ LR code over \fq.

Given that the construction of MR codes having small field size is challenging, we seek here to construct codes that satisfy a weaker condition which we will refer to in this chapter as the partial maximal recoverable (PMR) condition. Let ${\cal C}$ be an $(n,k,r,d)$ LR code with $d$ satisfying the bound in \eqref{eq:dmin} with equality. Let $\{E_i: 1 \leq i \leq m \}$ denote the support of distinct local parity checks such that $\cup_{i=1}^{m}E_i=[n]$ (Recall that the term local parity check refers to a codeword of weight $\leq r+1$ in the dual code). Wolog we asssume $E_i \setminus \left( \bigcup_{j \in [m], \  j \neq i} E_j \right) \neq \emptyset$ as otherwise we can remove $E_i$ and relabel the sets again until it satisfies this condition. In the context of PMR codes, an admissible puncturing pattern $\{e_1,e_2,\cdots, e_m\}$ is one which satisfy the condition:
\bean
e_i & \in & E_i \setminus \left( \bigcup_{j \in [m], \  j \neq i} E_j \right).
\eean
\begin{defn}
	An $(n,k,r,d)$ PMR code over a finite field \fq\ is then defined simply as an $(n,k,r,d)$ LR code ${\cal C}$ over \fq\ whose minimum distance $d$ satisfies the bound in \eqref{eq:dmin} with equality and which becomes an MDS code upon puncturing the code $\calc$ in co-ordinates corresponding to one admissible puncturing pattern $\{e_1,e_2,\cdots, e_m\}$ i.e., the code $\mathcal{C}|_{[n] \setminus \{e_1,e_2,\cdots, e_m\} }$ is an MDS code for one admissible puncturing pattern $\{e_1,e_2,\cdots, e_m\}$. 
\end{defn}
The parity-check matrix of a PMR code is characterized below. We assume w.l.o.g. in the section below, that $(e_1,e_2,\cdots,e_m)=(1,2,\cdots,m)$ (can be ensured by symbol re-ordering) is an admissible puncturing pattern. 

\subsection{Characterizing $H$ for a PMR Code} \label{sec:characerizing_PMR_ch7}

\begin{thm} \label{thm:canonical_form} 
	Let ${\cal C}$ be an $(n,k,r,d)$ PMR code over a finite field \fq\ as defined above which becomes MDS on puncturing at the co-ordinates corresponding to the admissible puncturing pattern $e=\{e_1,...,e_m\}=\{1,...,m\}$. Let $n=k_0+m,k=k_0-\Delta$ for some natural numbers $k_0,\Delta$. Then the parity check matrix $H$ of \code (upto a paermutation of columns) can be written in the form: 
	\bea
	H & = & \left[ \begin{array}{c|c} 
		I_m & \underbrace{F}_{(m \times k_0)}   \\ \hline 
		[0]& \underbrace{\hmds}_{(\Delta \times k_0)} \end{array} \right], \label{eq:Hmatrix_ch7}
	\eea
	where \hmds is a parity-check matrix of a $[k_0,k_0-\Delta]$ MDS code and $F$ is of the form: 
	\bean
	F & = & 
	\left[ \begin{array}{c} 
		\ux{1}^t  \\
		\ux{2}^t  \\
		\ddots  \\
		\ux{m}^t  \end{array} \right]
	\eean
	in which each \ux{i} is a vector of Hamming weight at most $r$.
\end{thm}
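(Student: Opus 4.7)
The plan is to build the claimed parity-check matrix $H$ row by row, from the two ingredients supplied by the PMR definition. First I will extract $m$ local parity checks, one per set $E_i$, that will form the top block $[I_m \mid F]$. Then I will produce $\Delta$ ``global'' parity checks supported outside the puncturing pattern, which will form the bottom block $[0 \mid H_{\text{MDS}}]$. Finally I will argue the $n-k$ rows obtained are linearly independent and so do form a parity-check matrix.

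For the top block: since $\mathcal{C}$ is an LR code with locality parameter $r$, for each $i \in [m]$ there is a codeword $\underline{h}_i \in \mathcal{C}^\perp$ with $\text{supp}(\underline{h}_i) = E_i$ and $|E_i| \le r+1$. The crucial observation is that the admissibility hypothesis $e_i \in E_i \setminus \bigcup_{j \ne i} E_j$ forces $e_j \notin E_i$ for every $j \ne i$, so $E_i \cap \{1,\ldots,m\} = \{i\}$ (after the WLOG relabeling $e_i = i$). In particular $E_i \setminus \{i\} \subseteq \{m+1,\ldots,n\}$. Scaling $\underline{h}_i$ to have a $1$ in position $i$, the matrix formed by stacking $\underline{h}_1,\ldots,\underline{h}_m$ has the form $[I_m \mid F]$, where the $i$-th row of $F$ has support contained in $E_i \setminus \{i\}$ and hence Hamming weight at most $r$. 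This gives exactly the top block displayed in the theorem.

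For the bottom block: the above observation also implies that the puncturing of $\mathcal{C}$ at the coordinates $\{1,\ldots,m\}$ is injective, because each erased symbol $c_i$ is recovered from the local parity $\underline{h}_i$ using only coordinates in $\{m+1,\ldots,n\}$. Hence the punctured code $\mathcal{C}' = \mathcal{C}|_{\{m+1,\ldots,n\}}$ has dimension $k$ and length $k_0 = n - m$; by the PMR hypothesis it is MDS, so $\mathcal{C}'$ admits a $(\Delta \times k_0)$ parity-check matrix $H_{\text{MDS}}$ of a $[k_0, k_0 - \Delta]$ MDS code. Each row of $H_{\text{MDS}}$ lifts to a codeword of $\mathcal{C}^\perp$ supported in $\{m+1,\ldots,n\}$, yielding $\Delta$ rows of the form $[\underline{0} \mid H_{\text{MDS, row}}]$.

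To finish, I note that the $m + \Delta = n - k$ rows assembled so far are linearly independent: any linear dependence would, on restriction to the first $m$ columns, force the coefficients of $\underline{h}_1,\ldots,\underline{h}_m$ to vanish (because those restrictions give $I_m$), and then the remaining rows are linearly independent since $H_{\text{MDS}}$ has rank $\Delta$. Since $\dim(\mathcal{C}^\perp) = n - k$, these rows constitute a parity-check matrix of $\mathcal{C}$ of precisely the claimed form. The main technical step to get right is the clean identification $E_i \cap \{1,\ldots,m\} = \{i\}$ coming from admissibility, since it simultaneously delivers the identity block in the top-left, forces the support of each $\underline{x}_i$ to have weight $\le r$, and guarantees that the puncturing is injective so that an MDS structure is available for the bottom block; once this is secured, the remainder of the proof is routine linear algebra.
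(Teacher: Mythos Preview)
Your proposal is correct and follows essentially the same approach as the paper. The paper starts with the same top block $[I_m \mid F]$ from the local parities, completes it arbitrarily to a full parity-check matrix $\left[\begin{smallmatrix} I_m & F \\ H_1 & H_2 \end{smallmatrix}\right]$, row-reduces to $\left[\begin{smallmatrix} I_m & F \\ 0 & H_3 \end{smallmatrix}\right]$, and then invokes the fact that the dual of a punctured code is the shortened dual to conclude $H_3$ is a parity-check matrix of the MDS punctured code; you instead build the bottom block directly from $H_{\text{MDS}}$ and then check linear independence, which is the same argument run in the other direction. Your version is slightly more explicit about why admissibility forces $E_i \cap \{1,\ldots,m\} = \{i\}$ (hence the $I_m$ block and the weight-$\leq r$ rows of $F$) and why the puncturing is injective, points the paper leaves implicit.
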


\begin{proof}
	Clearly, $H$ (upto a permutation of columns) can be written in the form (where the first $m$ rows are codewords corresponding to the support sets $\{E_i:1\leq i \leq m \}$)
	\bean
	H & = & \left[ \begin{array}{c|c} 
		I_m & \underbrace{F}_{(m \times k_0)}   \\ \hline 
		H_1 & \underbrace{H_2}_{(\Delta \times k_0)} \end{array} \right],
	\eean
	which can be transformed, upon row reduction to the form:
	\bean
	H & = & \left[ \begin{array}{c|c} 
		I_m & \underbrace{F}_{(m \times k_0)}   \\ \hline 
		[0] & \underbrace{H_3}_{(\Delta \times k_0)} \end{array} \right].
	\eean
	For the code \calc\ to be a PMR code it is necessary that upon puncturing the first $m$ coordinates (corresponding to columns of the identity matrix $I_m$ in the upper left), the code become an MDS code.  But since the dual code of a punctured code is the shortened code in the same coordinates, it follows that $H_3$ must be a parity-check matrix of an MDS code.  
\end{proof}
\subsection{A Simple Parity-Splitting Construction for a PMR Code when $\Delta \leq (r-1)$} \label{sec:simple_PMR_construction_ch7}
Let,
\bean
n & = & m(r+1) , \hspace*{0.3in} k_0 \ = \ mr , \\ 
k & = & k_0-\Delta \ = \  n-(m+\Delta)
\eean
for some natural numbers $m,r,\Delta$. Assume that $\Delta \leq (r-1)$.  Let $H_g$ be a $(\Delta+1 \times k_0)$ parity-check matrix of a $[k_0,k_0-(\Delta+1)]$ MDS code. Let $\uxx^t$  be the  last row of $H_g$ and $H_1$ be $H_g$ with the last row deleted, i.e.,
\bean
H_g & = & \left[ \begin{array}{c} H_1 \\ \uxx^t \end{array} \right]. 
\eean
In the construction, we will require that $H_1$ also be a parity-check matrix of an MDS code and set
\bea
\hmds=H_1. \label{const_highrate_pmr1}
\eea
For example, this is the case when $H_g$ is either a Cauchy or a Vandermonde matrix.   Let $\{y_{i}^t\}_{i=1}^m$ be the $m$ contiguous component $(1 \times r)$ vectors of $\uxx^t$ defined through  
\bea
\uxx^t \  & = & \left( \underline{y}_{1}^t \ \underline{y}_{2}^t \cdots, \underline{y}_{m}^t \right). \label{const_highrate_pmr2}
\eea
Let $F$ be given by 
\bea
F & = & 
\left[ \begin{array}{cccc} 
	\underline{y}_1^t & & & \\
	& \underline{y}_2^t & & \\
	& & \ddots & \\
	& & & \underline{y}_{m}^t  \end{array} \right]. \label{const_highrate_pmr3}
\eea

\begin{thm} [Parity-Splitting Construction] \label{thm:simple_PMR} 
	The $[n=m(r+1),k=mr-\Delta]$ code \code having parity-check matrix $H$ given by 
	\bean
	H & = & \left[ \begin{array}{c|c} 
		I_m & \underbrace{F}_{(m \times k_0)}   \\ \hline 
		[0] & \underbrace{\hmds}_{(\Delta \times k_0)} \end{array} \right],
	\eean
	with $k_0=mr$ and $\hmds,F$ as defined by equations \eqref{const_highrate_pmr1},\eqref{const_highrate_pmr3} and $\Delta \leq (r-1)$, is a $(n,k,r,d_{\min})$ PMR code and hence with minimum distance $d_{\min}$ achieving the bound 
	\bean
	d_{\min} & = & (n-k+1) \ - \ \left( \left \lceil\frac{k}{r} \right \rceil -1 \right) \\
	%& = & (m+\Delta+1) - m + \lfloor \frac{\Delta}{r} \rfloor \ + 1 \\
	& = & \Delta + 2 .
	\eean
\end{thm}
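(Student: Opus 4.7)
\emph{Proof proposal.} The plan is to verify three things: (a) $\mathcal{C}$ is an LR code with AS locality $r$; (b) puncturing $\mathcal{C}$ on the admissible pattern $\{1,\dots,m\}$ yields an MDS code; and (c) $d_{\min}(\mathcal{C}) = \Delta+2$, matching the bound \eqref{eq:dmin}. Items (a) and (b) should fall out of the block structure of $H$: because $H_g$ is an MDS parity-check matrix, all entries of $\uxx^t$ (hence of each $\underline{y}_i^t$) are non-zero, so each of the first $m$ rows of $H$ has Hamming weight exactly $r+1$, and these $m$ rows cover $[n]$, giving AS locality. For (b), I would use the standard puncture/shorten duality: a parity-check matrix of $\mathcal{C}|_{\{m+1,\dots,n\}}$ is formed by those linear combinations of rows of $H$ that vanish on $\{1,\dots,m\}$; the $I_m$ block forces any such combination to annihilate the first $m$ rows entirely, leaving exactly \hmds, which by hypothesis defines the required $[k_0,k_0-\Delta,\Delta+1]$ MDS code.

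For (c), the upper bound $d_{\min} \leq \Delta+2$ is just \eqref{eq:dmin} specialized to $n = m(r+1)$, $k = mr - \Delta$ with $1 \leq \Delta \leq r-1$, so that $\lceil k/r \rceil = m$. The matching lower bound $d_{\min} \geq \Delta+2$ is the heart of the argument, and I would prove it by writing any non-zero codeword as $c = (c^{(1)},\tilde{c}) \in \fq^{m}\times \fq^{k_0}$ and casing on whether $c^{(1)}$ vanishes. If $c^{(1)} \neq 0$ then $\tilde{c}$ must also be non-zero, for otherwise the first $m$ rows of $H$ give $c^{(1)} = -F\tilde{c}^t = 0$; the punctured MDS code from part (b) then forces $\mathrm{wt}(\tilde{c}) \geq \Delta+1$ and therefore $\mathrm{wt}(c) \geq 1 + (\Delta+1) = \Delta+2$.

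The remaining case $c^{(1)} = 0$ is where the ``parity-splitting'' design is used, and I expect this to be the single non-mechanical step. The first $m$ rows of $H$ then force $F\tilde{c}^t = 0$, i.e.\ $\underline{y}_i^t \cdot \tilde{c}_{(i)}^t = 0$ for each $i$, where $\tilde{c}_{(i)}$ denotes the restriction of $\tilde{c}$ to the $i$-th diagonal window of width $r$. Summing these $m$ scalar equations telescopes to $\uxx^t \tilde{c}^t = 0$; combining with $\hmds \tilde{c}^t = 0$ gives $H_g \tilde{c}^t = 0$. Since $H_g$ is a parity-check matrix of a $[k_0,\,k_0-\Delta-1,\,\Delta+2]$ MDS code, any non-zero $\tilde{c}$ in its kernel has weight at least $\Delta+2$, finishing the case. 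The key insight is that the summation reassembles the very row $\uxx^t$ that was split across the $\underline{y}_i^t$'s to form $F$, so the individual local parities collectively imply the ``one extra'' MDS parity of $H_g$; this is precisely why the construction insists that the $\underline{y}_i^t$'s be the contiguous slices of a single $\uxx^t$ whose adjunction to \hmds remains MDS.
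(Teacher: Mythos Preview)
Your proposal is correct and is essentially the dual phrasing of the paper's own proof. The paper argues on the column side of $H$: it shows that any $\Delta+1$ columns of $H$ are linearly independent by reducing first to the right block $\left[\begin{smallmatrix}F\\ \hmds\end{smallmatrix}\right]$ (your case $c^{(1)}\neq 0$), and then observing that the row space of $F$ contains $\uxx^t$, so it suffices that any $\Delta+1$ columns of $H_g=\left[\begin{smallmatrix}\hmds\\ \uxx^t\end{smallmatrix}\right]$ are independent (your case $c^{(1)}=0$ and the summation trick). Your codeword-side case analysis and the paper's column-rank reduction are the same argument viewed from opposite ends of the duality; the pivotal step in both is exactly the one you flagged, that summing the $m$ local parities reassembles the split row $\uxx^t$.
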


\begin{proof}
	Since on puncturing the first $m$ co-ordinates the code becomes MDS and the rows corresponding to $[I_m | F]$ gives the all symbol locality property to the code, it is enough to show that any $(\Delta+1)$ columns of $H$ are linearly independent. From the structure of $H$ and the properties of the matrix \hmds, it is not hard to see that it suffices to show that any $(\Delta+1)$ columns of 
	\bean
	H_a & = & \left[ \begin{array}{c} 
		F  \\ \hline 
		\hmds  \end{array} \right],
	\eean
	are linearly independent.  But the rowspace of $F$ contains the vector $\uxx^t$, hence it suffices to show that any $(\Delta+1)$ columns of 
	\bean
	H_b & = & \left[ \begin{array}{c} 
		\hmds \\ \hline 
		\uxx^t  \end{array} \right] \ = \ H_g
	\eean
	are linearly independent, but this is clearly the case, since $H_g$ is the parity-check matrix of an MDS code having redundancy $(\Delta+1)$. 
\end{proof}
In the above construction $\Delta$ represents the number of ``global'' parity checks imposed on top of $m$ ``local'' parity checks where the $m$ local parity checks refers to the $m$ codewords in the dual code which are in the rows of $[I_m | F]$ corresponding to $m$ support sets $\{ E_i : 1 \leq i \leq m \}$. 
\begin{note} 
	The construction gives rise to $(n=m(r+1),k=mr-\Delta,r,d=\Delta+2)$ LR code which is also a PMR code over a field size of at most $n$ (small field size) and also, high rate:
	\bean
	R & = & 1 -\frac{\Delta+m}{m(r+1)} \ \geq \ 1 - \frac{r-1+m}{m(r+1)}.
	\eean 
\end{note}

\section{A General Approach to PMR Construction} \label{sec:lower_rate_PMR_ch7}

We attempt to handle the general case 
\bean
\Delta & = & ar+b ,
\eean 
in this section and outline one approach.  At this time, we are only able to provide constructions for selected parameters with $\Delta=2r-2$ and field size that is cubic in the block length of the code and hold out hope that this construction can be generalized. 
Let,
\bean
n & = & m(r+1) , \hspace*{0.3in} k_0 \ = \ mr , \\ 
k & = & k_0-\Delta \ = \  n-(m+\Delta). \\
\Delta & = & ar+b
\eean
for some natural numbers $m,r,a,b$ with $b < r$.
Let $\mathcal{C}$ be an $(n,k,r,d)$ LR code with parity check matrix $H$ as given in equation \eqref{eq:Hmatrix_ch7} in Theorem \ref{thm:canonical_form}  and $H_{MDS}$ chosen to be a Vandermonde matrix and with Hamming weight of all the rows of $F$ being exactly $r$ (i.e., Hamming Weight of $\ux{i}^t = r$) and intersection of support of any two rows of $F$ being an empty set (i.e., $\text{Support}(\ux{i}^t) \cap \text{Support}(\ux{j}^t) = \emptyset$, $\forall i \neq j$). In the following we will derive conditions so that the code $\mathcal{C}$ becomes a PMR code.
For  $\mathcal{C}$ to be a PMR code, the desired minimum distance code  $d$ can be shown to equal in this case, 
\bean
d \ := \ \dmin & = & (n-k+1) - \left( \lceil \frac{k}{r} \rceil -1 \right) \\
& = & (m+\Delta+1) - \left( \lceil \frac{mr-\Delta}{r} \rceil -1 \right) \\
& = & \Delta + 2 + a .
\eean
It follows that even the code on the right having parity-check matrix
\bean
H_{\text{pun}} & = &  \left[ \begin{array}{c} 
	F  \\ \hline 
	\hmds  \end{array} \right],
\eean
must have the same value of \dmin and therefore, the sub matrix formed by any $(d-1)$ columns of $H_{pun}$ must have full rank.  Let $A$ be the column indices of a subset of $(d-1)$ columns of $H_{pun}$. Let $A$ have non-empty intersection with the support of  some $s$ rows $\{\underline{r}_1,...,\underline{r}_s\}$ of $F$. Let the intersection of $A$ with the support of row $\underline{r}_i$ be $A_i$ of size $\mid A_i \mid \ = \ \ell_i$. Let wolog row $\underline{r}_i$ refer to the $i$th row of $F$, $\forall 1 \leq i \leq s$. The corresponding sub matrix formed by columns of $H_{pun}$ indexed by $A$ will then take on the form:

\bea
\left[ \begin{array}{c} 
	\begin{array}{cccccccccc} 
		& a_1(\theta_{1i}) & & & & & & & &  \\
		& & & & a_2(\theta_{2i})  & & & & &  \\
		&  & & & & & \ddots  & & &  \\
		& & & & & & & & a_s(\theta_{si})&  \\ \hline 
		\cdots & 1 & \cdots & \cdots & 1 & \cdots & \cdots & \cdots & 1 & \cdots \\
		\cdots & \theta_{1i} & \cdots & \cdots & \theta_{2i} & \cdots & \cdots & \cdots & \theta_{si} & \cdots \\
		\cdots & \theta_{1i}^2 & \cdots & \cdots & \theta_{2i}^2 & \cdots & \cdots & \cdots & \theta_{si}^2 & \cdots \\
		\cdots & \cdots & \cdots & \cdots & \cdots & \cdots & \cdots & \cdots & \cdots & \cdots \\
		\cdots & \theta_{1i}^{\Delta-1} & \cdots & \cdots & \theta_{2i}^{\Delta-1}  & \cdots & \cdots & \cdots & \theta_{si}^{\Delta-1} & \cdots \\
	\end{array} 
\end{array} \right] \label{matrix_pmr}
\eea

where $a_i(x)$ is the polynomial whose evaluations provide the components of the $i$th row of $F$, $\forall 1 \leq i \leq m$ as indicated in equation \eqref{matrix_pmr}.
Since we want this matrix to have full rank $(d-1)$ it must be that the left null space of the matrix must be of dimension $(\Delta+s)-(\Delta+a+1)\ = \ s-(a+1)$. 
Computing the dimension of this null space is equivalent (by polynomial interpolation formula) to computing the number of solutions $\{s_i : 1 \leq i \leq s \}$ to the equation: 
\bea
\sum_{i=1}^s s_i \sum_{j=1}^{\ell_i}a_i (\theta_{ij}) \prod_{(k,l) \neq (i,j) } \frac{(x-\theta_{kl})}{(\theta_{ij}-\theta_{kl})} & = & f(x) \label{Nullspace_pmr}
\eea
where $f(x)$ is generic notation for a polynomial of degree $\leq (\Delta-1)$.  Let us define 
\bean
E_i(x) & = & \sum_{j=1}^{\ell_i}a_i (\theta_{ij}) \prod_{(k,l) \neq (i,j) } \frac{(x-\theta_{kl})}{(\theta_{ij}-\theta_{kl})} ,
\eean
and note that each $E_i(x)$ will in general, have degree $(\Delta+a)$. 
Consider the matrix $E$ whose rows correspond to the coefficients of $E_i(x)$.
\bean
E & = & \text{Coefficients of} \left(\left[ \begin{array}{c}
	E_1(x) \\
	\vdots \\
	E_{a+1}(x) \\ \hline 
	E_{a+2}(x) \\
	\vdots \\
	E_s(x) \end{array} \right] \right).
\eean
$E$ is a $s \times (\Delta+a)$ matrix and we want to calculate the number of solutions $\underline{y}$ such that $\underline{y}E$ has first $a+1$ compoenents equal to $0$ to calculate number of solutions to equation \eqref{Nullspace_pmr}.
It follows that the first $(a+1)$ columns of $E$ must have full rank so that the dimension of solution space of equation \eqref{Nullspace_pmr} and hence the dimension of left null space of the matrix in equation \eqref{matrix_pmr} is exactly $s-(a+1)$.

\subsection{Restriction to the Case $a=1$, i.e., $r \leq \Delta\leq 2r-1$}

We now assume that $a=1$ so that $(a+1)=2$ and we need the first $2$ columns of $E$ to have rank $=2$.  We consider the $(2 \times 2)$ sub matrix made up of the first two rows and first two columns of $E$.  The determinant of this $(2 \times 2)$ upper-left matrix of $E$ is given by
\bean
-\det \left[ \begin{array}{cc} 
	\sum_{j=1}^{\ell_1} \frac{a_1(\theta_{1j})}{P_{1j}} & \sum_{j=1}^{\ell_1} \frac{a_1(\theta_{1j})\left( \sum_{(k,l) \neq (1,j)} \theta_{kl} \right)}{P_{1j}} \\
	\sum_{j=1}^{\ell_2} \frac{a_2(\theta_{2j})}{P_{2j}} & \sum_{j=1}^{\ell_2} \frac{a_2(\theta_{2j})\left( \sum_{(k,l) \neq (2,j)} \theta_{kl} \right)}{P_{2j}} \\
\end{array} \right] \\
=   \det  \left[ \begin{array}{cc} 
	\sum_{j=1}^{\ell_1} \frac{a_1(\theta_{1j})}{P_{1j}} & \sum_{j=1}^{\ell_1} \frac{a_1(\theta_{1j}) \theta_{1j} }{P_{1j}} \\
	\sum_{j=1}^{\ell_2} \frac{a_2(\theta_{2j})}{P_{2j}} & \sum_{j=1}^{\ell_2} \frac{a_2(\theta_{2j}) \theta_{2j}}{P_{2j}} \\
\end{array} \right] 
\eean
where 
\bean
P_{ij} & = & \prod_{(k,l) \neq (i,j)} (\theta_{ij}-\theta_{kl})
\eean

This is equal to 
\bean
\sum_{j=1}^{\ell_1} \sum_{t=1}^{\ell_2} \frac{a_1(\theta_{1j})a_2(\theta_{2t})}{P_{1j}P_{2t}}(\theta_{1j}-\theta_{2t}).
\eean
Let $\Delta=2r-1$ and $a_1(\theta_{1j}) =   \theta_{1j}$, $a_2(\theta_{2t}) =  \theta_{2t}$, $\theta_{ij} = \xi + h_{ij}, \ h_{ij} \in \mathbb{F}_q$ and 
$\xi \  \in \  \mathbb{F}_{q^3} \setminus \mathbb{F}_q$ for some prime power $q$ with $q \geq n$. 
Then this becomes:
\bean
\sum_{j=1}^{\ell_1} \sum_{t=1}^{\ell_2} \frac{ \left(\xi^2 + \xi (h_{1j}+h_{2t}) +h_{1j}h_{2t} \right)} {P_{1j}(\{h_{ij}\})P_{2t}(\{h_{ij}\})}(h_{1j}-h_{2t}) \\
=  A\xi^2 + B\xi + C 
\eean
with $A,B,C \in \mathbb{F}_q$ which will be nonzero if the minimum polynomial of $\xi$ over $\mathbb{F}_q$ has degree $=3$, unless all the coefficients are equal to zero.

\paragraph{Numerical Evidence} Computer verification was carried out for the $\Delta=5,r=3$ case for $n=12$ over $F_{(2^4)^3}$ ($q=2^4$) and $n=36$ over $F_{(2^6)^3}$ ($q=2^6$) with $h_{ij}=\alpha^{(i-1)} \beta{(ij)}$ where $\alpha$ is the primitive element of $F_{2^4}$ and $F_{2^6}$ respectively for the two cases and $\beta(ij)$ is fifth and seventh root of unity respectively for the two cases (the choice of fifth and seventh roots of unity varies for each $i,j$). For both cases, it was found that the elements $A,B,C$ never simultaneously vanished for all instances.

\section{Maximal Recoverable Codes} \label{sec:MRConst_ch7} 
In this section, we give three constructions of MR codes with low field size. The first construction is a construction of $(r=2,\delta=1,s)$ MR code and the rest of the two constructions correspond to MR codes with $s=2$. The last two constructions has field size of $O(n)$. 
\subsection{A Coset-Based Construction with Locality $r=2$} \label{sec:coset_construction_ch7} 
We now give a construction of $(2,1,s)$ MR code. Since this construction is based on Construction $1$ in \cite{TamBar_LRC} of all-symbol LR codes, we briefly review the construction $1$ here. This construction 1 in \cite{TamBar_LRC} is already described in Chapter \ref{ch:LRSingleErasure}.

\begin{const} \cite{TamBar_LRC} \label{const:Tamo_Barg_ch7}
	Let $n=m(r+1)$ for some natural numbers $m,r$ and $q$ be a power of a prime such that $n \leq (q-1)$, for example, $q$ could equal $(n+1)$ and $(r+1) \  | \ (q-1)$. Let $\alpha$ be a primitive element of $\mathbb{F}_q$ and $\beta = \alpha^{\frac{q-1}{r+1}}$. Note that $\beta$ is an element of order $(r+1)$. Let $H = \{1,\beta,\beta^2,\cdots, \beta^r\}$ be the group of $(r+1)$th roots of identity element of the field $\mathbb{F}_q$. Let 
	\bean
	A_i & = & \alpha^{i-1} H , \ \ 1 \leq i \leq m.
	\eean
	Note that $\{A_i\}_{i=1}^m$ are pairwise disjoint and $| \cup_{i=1}^mA_i | = n$ and $A_i$ are cosets of the group $H$. Let $k=ar+b$. Note that the monomial $x^{r+1}$ evaluates to a constant on elements of any set $A_i$ where the constant depends only on $i$. Let the set of $k$ message symbols be $A=\{a_{ij}: 0 \leq j \leq a-1, 0 \leq i \leq r-1 \} \cup \{a_{ij}: j = a, 0 \leq i \leq b-1 \} \subseteq \fq$. Let
	\bean
	f_A(x) & = & \sum_{j=0}^{a-1} \sum_{i=0}^{r-1} a_{ij}x^{j(r+1)+i} \ + \   \sum_{j=a} \sum_{i=0}^{b-1} a_{ij}x^{j(r+1)+i} ,
	\eean
	where the second term is vacuous for $b=0$, i.e., is not present when $r \mid k$.  Let $ \bigcup_{i=1}^m A_i = \{u_1,..,u_n\}$.
	Consider the code ${\cal C}$ of block length $n$ and dimension $k$ defined by the following encoding function $E: \fq^k \rightarrow \fq^n$:
	\bean
	E(A) =[c_1,...,c_i,...,c_n] = [f_A(u_1),...,f_A(u_i),...,f_A(u_n)],
	\eean
	where $[c_1,...,c_i,...,c_n]$ indicate the code symbols forming the codeword $E(A)$. Hence the message symbols $A$ are encoded by evaluating the polynomial $f_A(x)$ at the field elements in $\bigcup_{i=1}^m A_i$. The code ${\cal C}$ is an $(n,k,r,d)$ LR code over \fq\ with $d$ satisfying \eqref{eq:dmin} with equality where the conditions of LR code are satisfied by a set of $m$ local parity checks (i.e., $m$ codewords in the dual code of weight $\leq r+1$) with support sets covering [n]. The support of the $i^{th}$ local parity check are the indices of those code symbols of the codeword $E(A)$ obtained by evaluating $f_A(x)$ at elements of $A_i$. 
\end{const}

Note that the exponents $e$ in the monomial terms forming each polynomial $f_A(x)$ satisfy 
$e \text{ mod } (r+1)  \neq  r $. 
It is this property that makes the code $\calc$ an LR code.

Our construction of $(2,1,s)$ MR code here is based on Construction \ref{const:Tamo_Barg_ch7} corresponding to the code ${\cal C}$ with parameters given by $n=q-1, r=2,k=2D+1$ so that $a=D$ and $b=1$. Thus all $m$ local parity checks have support of cardinality $3$.  Let us denote the algebraic closure of $\mathbb{F}_q$ by $\mathbb{F}$. 

\begin{thm} \label{thm:neat5_ch7}
	Given positive integers $N, D$ with $\frac{2D}{N} < \frac{2}{3}$ and $3 \  | \  N$,
	Let ${\cal C}$ be the code over \fq\ described in Construction \ref{const:Tamo_Barg_ch7} with parameters $n=q-1, r=2, k=2D+1$ $(k=ar+b \text{ with } r=2,a=D,b=1)$ for $q$, a prime power with $3 \ | \ (q-1)$ such that:
	\bea
	q & > & 2 \left( \Sigma_{j=2}^{2D} \lfloor{j g(j)} \rfloor {(\frac{N}{3}-1) \choose j} 3^{j} \right) + N-2  , \label{eq:fieldsize_ch7}
	\eea
	where 
	\bean
	g(j) & = & \left\{ \begin{array}{cc} 1 & \text{for $j$ even and $2(D-1) \geq j \geq 4$} \\
		\frac{1}{2} & \text{otherwise,} \end{array} \right.
	\eean
	Then there exists an $(r=2,\delta=1,s=N-k-\frac{N}{3})$  MR code $\mathcal{C}_{MR}$ over $\mathbb{F}_q$ with block length $N$ and dimension $k=2D+1$ that is obtained from ${\cal C}$ by puncturing the code ${\cal C}$ at the co-ordinates $S$ corresponding to code symbols obtained by evaluating $f_A(x)$ at a carefully selected set of $c \ = \ \frac{q-1}{3} - \frac{N}{3}$ cosets (evaluating positions) given by $\{A_{i_1}, A_{i_2}, \cdots, A_{i_c}\}$ i.e., $\mathcal{C}_{MR} = \mathcal{C}|_{[n]\setminus S}$ where $\{c_j: j \in S\} = \{f_A(u): u \in \cup_{j=1}^c A_{i_j}\}$. 
\end{thm}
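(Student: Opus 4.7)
The plan is to produce the required MR code by choosing $N/3$ cosets out of the $(q-1)/3$ cosets of $H$ in the Tamo–Barg code $\mathcal{C}$, and then to puncture $\mathcal{C}$ at the code symbols evaluated on the remaining $c = (q-1)/3 - N/3$ cosets. The resulting code $\mathcal{C}_{MR}$ automatically inherits $N/3$ local single-parity-check constraints (one per retained coset), so $\mathcal{C}_{MR}$ is already an $(N,2D+1,r{=}2,d)$ LR code meeting \eqref{eq:dmin}. What remains is to verify the MR property for $(r,\delta,s)=(2,1,N-k-N/3)$: after performing any admissible puncturing (one symbol per retained coset), the resulting $[\tfrac{2N}{3},2D+1]$ code must be MDS. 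This is equivalent to the statement that every $k=2D+1$ evaluation points chosen from the retained cosets, with at most two points from any single coset, yield a nonsingular evaluation matrix with respect to the basis $\{x^{3j+i} : 0 \le j \le D-1,\ i \in \{0,1\}\} \cup \{x^{3D}\}$.

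I would then carry out a greedy/sequential construction of the $N/3$ cosets: start with any single coset, and at each subsequent stage adjoin one more coset while maintaining the property that no $k$-subset of evaluation points (with at most two per coset) drawn from the already–chosen cosets is singular. The key reduction is to count, uniformly in the stage, how many candidate cosets are ``forbidden,'' i.e.\ would create a singular configuration with the previously chosen cosets. For each $j$ with $2 \le j \le 2D$, a potential singularity involving $j$ extra cosets corresponds to: (i) a choice of $j$ cosets from the at most $N/3 - 1$ others, giving the factor $\binom{N/3-1}{j}$; (ii) a choice of 1 or 2 of the three coset elements at each of these $j$ cosets, contributing the factor $3^j$; and (iii) a root of the resulting evaluation determinant, regarded as a polynomial in the primitive element $\alpha$ indexing the candidate coset. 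Two such roots come from each genuine algebraic root of the polynomial (accounting for the outer factor of $2$). The bound $q-1 > 2\,\big(\sum_{j=2}^{2D}\lfloor j g(j)\rfloor \binom{N/3-1}{j}3^j\big) + N - 2$ then guarantees that at every stage the number of forbidden cosets is strictly less than the number of unassigned cosets, so a valid new coset can always be adjoined.

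The technical heart of the proof is the degree bound on the evaluation determinant when viewed as a polynomial in $\alpha$, which is what yields the factor $\lfloor j g(j)\rfloor$. The basis exponents break naturally into pairs $(3j,3j+1)$ together with the lone $3D$-term, and on each coset the map $x \mapsto \beta x$ acts multiplicatively; exploiting this coset symmetry together with the Vandermonde–like antisymmetry of the determinant reveals that the relevant degree is $j \cdot g(j)$, with $g(j) = 1$ in the even range $4 \le j \le 2(D-1)$, where a full cancellation among the paired monomials occurs, and $g(j) = 1/2$ otherwise. The main obstacle will be carrying out this degree analysis carefully: one must track how the $x^{3D}$ basis monomial interacts with the paired monomials, handle boundary cases $j \in \{2,3,2D-1,2D\}$ separately, and isolate which coefficients of the polynomial in $\alpha$ are guaranteed nonzero under the chosen $j$-coset / $3^j$-point configurations.

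Once the degree bound and the greedy argument are in place, the proof concludes by applying the greedy construction $N/3$ times. At each step the displayed field-size hypothesis leaves a strictly positive number of cosets available, so the desired collection of $N/3$ cosets exists. The resulting $\mathcal{C}_{MR}$ is then by construction an $(r{=}2,\delta{=}1,s{=}N-k-N/3)$ MR code of block length $N$ and dimension $2D+1$ over $\mathbb{F}_q$, establishing the theorem.
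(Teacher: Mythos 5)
Your high-level skeleton (select $N/3$ cosets greedily, count forbidden cosets at each step, close by the field-size inequality) matches the paper's, but there is a genuine gap in the technical heart, and the route you suggest for filling it does not work as stated. The paper does not analyse the evaluation determinant as a polynomial in the primitive element $\alpha$. Instead it recasts the MDS-after-puncturing condition in terms of the roots $X_1,\dots,X_{3A}$ of the message polynomial $f$: the vanishing of the $x^i$-coefficients with $i\equiv 2 \pmod 3$ becomes a system of elementary-symmetric-function constraints (the conditions $R(A)$, $Q(A)$). The greedy step then throws away the cosets containing the ``completions'' $X_{2A+1},\dots,X_{3A}$ of partial root sets $X_1,\dots,X_{2A}$ drawn one per chosen coset. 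Crucially, this count is only well-defined because the paper maintains a second invariant $R_2$ — nonsingularity of two families of elementary-symmetric-function matrices $P(A)$, $P_1(A)$ — which guarantees (a) that the completion is \emph{unique}, and (b) that the determinants whose roots must be discarded are \emph{nonzero polynomials of controlled degree}. Your proposal has no analogue of $R_2$; without it, the assertion ``the number of forbidden cosets at each stage is bounded'' has no basis, because the completion set might be infinite (a degenerate determinant), or might lie unavoidably inside already-chosen cosets.

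Your explanation of where $\lfloor j\,g(j)\rfloor$ comes from is also not correct for this proof. You attribute it to a single degree bound on the evaluation determinant in $\alpha$, with $g(j)=1$ on the even middle range supposedly reflecting ``full cancellation among paired monomials.'' In the paper, $\lfloor j\,g(j)\rfloor$ is not the degree of one polynomial; it is the sum of \emph{three} separate counting terms, one for each of the conditions $R_1$ (the $R(A)$ completion constraint), $P(A)$-nonsingularity, and $P_1(A)$-nonsingularity. Written in terms of the number $j$ of involved cosets, the $R_1$ term contributes for all even $j$, the $P(A)$ term for odd $j\geq 3$, and the $P_1(A)$ term for even $j$ with $4\leq j\leq 2D-2$; when the $R_1$ and $P_1(A)$ terms both contribute (even $j$ in $[4,2D-2]$) the combined coefficient doubles, which is exactly why $g(j)=1$ there and $g(j)=\tfrac12$ elsewhere. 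Likewise, your accounting ``a choice of 1 or 2 of the three coset elements'' would give $6^j$, not $3^j$; the paper's $3^j$ comes from choosing exactly one element per coset (with a further factor-of-$3$ saving from the homogeneity $X_i\mapsto\theta X_i$). So while the greedy framework you describe is correct in spirit, the argument you propose for the quantitative bound $\lfloor j\,g(j)\rfloor\binom{N/3-1}{j}3^j$ would have to be replaced wholesale by the paper's $R_1/R_2$ bookkeeping or by an equally careful substitute.
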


\begin{proof}
	For proof please refer to the Appendix \ref{sec:MR_Proofs_ch7}. 
\end{proof}
%\begin{eg}
%	Let $k=3,n=12$. The condition in theorem \ref{thm:neat5} becomes $q > 37$ whereas, the optimized construction given in \cite{GopHuaJenYek} requires a field size of $2^8$.  The construction in \cite{Chen} requires $q > {n-1 \choose k-1} = 55$. 
%	The method employed in the proof of theorem \ref{thm:neat5} was employed to explicitly construct the code for this case and yielded:
%	\bit
%	\item Parameters: $[n=12, k=3, d=9]$.
%	\item The field: $\mathbb{F}_{43}$.
%	\item The encoding polynomial: $\Sigma_{i=0,i \neq 2}^{i=3} f_i x^i$.
%	\item Index of coordinates at which polynomial is evaluated: $((1, 36, 6)$, $(9, 23, 11)$, $(3, 22, 18)$, $(5, 8, 30))$.
%	\eit
%\end{eg} 
\begin{note}
	Although Theorem \ref{thm:neat5_ch7}, states that $(r=2,\delta=1,s=N-k-\frac{N}{3})$  MR code $\mathcal{C}_{MR}$ over $\mathbb{F}_q$ exists, one can construct such codes by following the procedure given in the proof of Theorem \ref{thm:neat5_ch7}. The proof of Theorem \ref{thm:neat5_ch7} is essentially a greedy algorithm which exploits the structure of the code ${\cal C}$. Hence can be interpreted as a construction of MR code.
\end{note}
\paragraph{Analysis of Field size:}
\begin{thm} \label{thm:improvedfieldsize_ch7}
	Theorem \ref{thm:neat5_ch7} showing the existence of $(r=2,\delta=1,s=N-k-\frac{N}{3})$ MR code $\mathcal{C}_{MR}$ over $\mathbb{F}_q$ with block length $N$ and dimension $k=2D+1$ obtained by puncturing ${\cal C}$ is true with field size $q$ such that:
	\bea
	q & \leq & \left\{ \begin{array}{cc} constant \times (m-1)4^{m-1} & \text{for $0 < y \leq 1$ } \\
		constant \times m^{2}  \left(e_1(y) \right)^m & \text{for $1 < y < 2$} \\
		N+1 & \text{for $y=2$}, \end{array} \right. \label{eq:fldupper_ch7}
	%	 q & \leq & constant \times m^{2}  \left(e_1(y) \right)^m
	\eea
	where:
	\bean
	k=2D+1 &=& (2-y)m+1, \\
	m &=&\frac{N}{3}, \\
	e_1(y) & = &  \max_{0 <  \gamma \leq 2-y} \frac{3^{\gamma}}{(\gamma)^{\gamma} \times (1-\gamma)^{(1-\gamma)}}.
	\eean
\end{thm}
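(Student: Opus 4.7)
The plan is to start from the lower bound on field size provided by Theorem \ref{thm:neat5_ch7}, namely
\[
q \;>\; 2\Bigl(\sum_{j=2}^{2D}\lfloor j\,g(j)\rfloor\binom{m-1}{j}3^{j}\Bigr)+N-2,
\]
and show that in each of the three regimes of $y$ a prime power $q$ with $3\mid(q-1)$ satisfying this inequality can be chosen of the claimed size. The existence of such a prime power within a constant factor of any target value follows from Bertrand's postulate together with a standard Dirichlet-type choice in the arithmetic progression $1\pmod{3}$; hence it suffices to upper bound the right-hand side and absorb the overhead into the ``constant''. The combinatorial core is therefore to estimate the sum
\[
S(m,D)\;:=\;\sum_{j=2}^{2D} j\,g(j)\,\binom{m-1}{j}\,3^{j}.
\]

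First I would dispatch the boundary case $y=2$. Here $k=2D+1=1$, so $D=0$ and the index set of $S(m,D)$ is empty; the bound collapses to $q>N-2$, which is met by a prime power $q\le N+1$ with $3\mid(q-1)$. Next, for $0<y\le 1$ I have $k=(2-y)m+1\ge m+1$, so $2D\ge m$ and the true upper limit of the sum is $j=m-1$ (since $\binom{m-1}{j}=0$ for $j\ge m$). Using the crude bound $j\,g(j)\le j\le m-1$ and the binomial theorem,
\[
S(m,D)\;\le\;(m-1)\sum_{j=0}^{m-1}\binom{m-1}{j}3^{j}\;=\;(m-1)\,4^{m-1},
\]
which is precisely the claimed rate in the first line of \eqref{eq:fldupper_ch7}; the lower-order term $N-2$ is absorbed into the constant.

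The interesting case is $1<y<2$, where $2D=(2-y)m<m-1$ and the summation is a genuinely truncated binomial. Writing $j=\gamma(m-1)$ with $\gamma\in(0,\,2-y]$ and invoking the standard entropy estimate
\[
\binom{m-1}{\gamma(m-1)}\,3^{\gamma(m-1)}\;\le\;\Bigl(\tfrac{3^{\gamma}}{\gamma^{\gamma}(1-\gamma)^{1-\gamma}}\Bigr)^{m-1},
\]
each summand is at most $m\cdot e_1(y)^{m-1}$ with $e_1(y)=\max_{0<\gamma\le 2-y}\tfrac{3^{\gamma}}{\gamma^{\gamma}(1-\gamma)^{1-\gamma}}$. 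Since the sum has at most $m$ terms, we obtain $S(m,D)\le m^{2}\,e_1(y)^{m}$ up to a universal constant, yielding the middle line of \eqref{eq:fldupper_ch7}.

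The main obstacle I anticipate is verifying that the entropy bound applied term-by-term is in fact maximized on the truncated range $\gamma\in(0,\,2-y]$ rather than at an interior optimum $\gamma^{\ast}$ of the unconstrained problem, and making sure the resulting maximizer is captured correctly by the $\max$ in the definition of $e_1(y)$; this is why the $\max$ is restricted to $0<\gamma\le 2-y$ and not over all of $(0,1)$. Once this is handled, the remaining work is routine: collect the $N-2$ additive term and the factor $2$ into the constant, and invoke the existence of a prime power $q\equiv 1\pmod{3}$ within a bounded multiplicative factor of the desired upper bound to conclude \eqref{eq:fldupper_ch7}.
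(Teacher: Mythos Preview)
Your proposal is correct and follows essentially the same route as the paper: both case-split on $y$, handle $0<y\le 1$ by the crude bound $j\le m-1$ and the full binomial sum $\sum\binom{m-1}{j}3^{j}=4^{m-1}$, and handle $1<y<2$ by bounding each term via the Stirling/entropy estimate for $\binom{m-1}{\gamma m}3^{\gamma m}$ and pulling out two factors of $m$ (one for the $j$ weight, one for the number of terms). The only cosmetic differences are that the paper cites Robbins' explicit Stirling bounds rather than the packaged entropy inequality, and for $y=2$ the paper argues directly that $k=1$ gives a repetition code rather than observing (as you do) that the sum is empty; neither affects the substance.
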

\begin{proof}
	Theorem \ref{thm:neat5_ch7} is clearly true with field size $q$ such that:
	\bean
	q \leq Q = constant \times \left ( \Sigma_{j=2}^{2D} \lfloor{j g(j)} \rfloor {(\frac{N}{3}-1) \choose j} 3^{j} + N-2 \right ).
	\eean
	Note that  $s=\frac{2N}{3}-k=ym-1$. Hence (the value of $constant$ in each equation in the following could be different):
	\bea
	Q & \leq & constant \times \left( \Sigma_{j=2}^{(2-y)m} j  {(m-1) \choose j} 3^{j} + 3m-2 \right). \label{eq:fieldsize1}
	%& \leq & constant \times \Sigma_{j=0}^{(2-a)m} j {(m-1) \choose j} 3^{j}
	\eea
	For $0 < y \leq 1$, from equation \eqref{eq:fieldsize1}:
	\bea
	Q & \leq & constant \times \Sigma_{j=0}^{(2-y)m} j {(m-1) \choose j} 3^{j} \leq constant \times (m-1)4^{m-1}. \label{eq:fieldsize2}
	\eea
	For  $1 < y < 2$, from equation \eqref{eq:fieldsize1} (in the following we assume that maximum over a null set is $1$): 
	\bea
	Q & \leq & constant \times \Sigma_{j=0}^{(2-y)m} j {(m-1) \choose j} 3^{j}, \notag \\
	Q & \leq & constant \times m^2 \max_{\frac{1}{m} \leq \gamma \leq \min(2-y,1-\frac{2}{m})} {(m-1) \choose \gamma m} 3^{\gamma m}, \notag \notag \\
	Q & \leq & constant \times m^2 \max_{\frac{1}{m} \leq \gamma \leq \min(2-y,1-\frac{2}{m})} \frac{(m-1)!}{(\gamma m)! \times (m(1-\gamma)-1)!} 3^{\gamma m}. \label{eq:fieldsize3}
	\eea
	
	Using the inequality (\cite{Rob}) $(2 \pi)^{0.5} m^{m+\frac{1}{2}} e^{-m} \leq m! \leq (2 \pi)^{0.5} e^{1/12} m^{m+\frac{1}{2}} e^{-m}$ and using equation \eqref{eq:fieldsize3}:
	\bean
	Q & \leq & constant \times m^2 \max_{\frac{1}{m} \leq  \gamma \leq \min(2-y,1-\frac{2}{m})} \frac{(m-1)^{m-1+\frac{1}{2}}}{(\gamma m)^{\gamma m + \frac{1}{2}} \times (m(1-\gamma)-1)^{(m(1-\gamma)-1)+\frac{1}{2}}} 3^{\gamma m}, \\
	Q & \leq & constant \times m^2 \max_{\frac{1}{m} \leq  \gamma \leq \min(2-y,1-\frac{2}{m})} \frac{(m)^{m-1+\frac{1}{2}}}{(\gamma m)^{\gamma m + \frac{1}{2}} \times (m(1-\gamma))^{(m(1-\gamma)-1)+\frac{1}{2}}} 3^{\gamma m}, \\
	Q & \leq & constant \times m^{2} \max_{\frac{1}{m} \leq  \gamma \leq \min(2-y,1-\frac{2}{m})} \left(\frac{3^{\gamma}}{(\gamma)^{\gamma} \times (1-\gamma)^{(1-\gamma)}} \right)^m, \\
	%& \leq & constant \times m^{2} \max_{0 <  \gamma \leq 2-a} \left(\frac{3^{\gamma}}{(\gamma)^{\gamma} \times (1-\gamma)^{(1-\gamma)}} \right)^m \\
	Q & \leq & constant \times m^{2}  \left( \max_{0 <  \gamma \leq 2-y} \frac{3^{\gamma}}{(\gamma)^{\gamma} \times (1-\gamma)^{(1-\gamma)}} \right)^m, \\
	Q & \leq & constant \times m^{2}  \left(e_1(y) \right)^m.
	\eean
	For $y=2$, equation \eqref{eq:fldupper_ch7} directly follows from the fact that for $k=1$, the code $\mathcal{C}$ is a repetition code and hence an MR code. Note that repetition code can be constructed over binary field but since here we constructing by puncturing $\mathcal{C}$ we give a field size of $N+1$.
\end{proof}

\begin{table}[h!] 
	\caption{Comparison of field size of the code $\mathcal{C}_{MR}$ given in Theorem \ref{thm:improvedfieldsize_ch7} with existing constructions.} 
	\label{tab:pmds_r2} 
	\begin{center}
		\scalebox{0.9}{
			\begin{tabular}{ | m{7em} | m{10cm}|} 
				\hline \hline 
				\multicolumn{2}{|c|}{$(r=2,\delta=1,s)$ MR code with blcok length $N=3m$, dimension $k=(2-y)m+1$ } \\
				\hline \hline
				\multicolumn{1}{|c|}{Reference} &  \multicolumn{1}{|c|}{Field Size $q$} \\
				\hline \hline 
				\multicolumn{2}{|c|}{ $1.3 \leq y < 2$, $0 < \text{rate} \leq 0.233$} \\ 
				\hline \hline
				This thesis Theorem \ref{thm:improvedfieldsize_ch7} &  $q \leq constant \times (e_1(y))^m m^{2}$ \\
				\hline
				\cite{GopHuaJenYek,Hu_Yek,Gab_Yak_Bla_Sie} & $q \geq m^{\frac{3my}{4}-2}$  \\
				\hline
				\cite{Cal_Koy} & $q \geq 4^m$ \\
				\hline
				\cite{Bla_Haf_Het} & $q = 8^m$ \\
				\hline
				\cite{Chen} & $q \geq {3m-1 \choose (2-y)m} \geq constant \times (e_2(y))^m m^{-\frac{1}{2}}$, \\
				& $e_2(y) = \frac{27}{(2-y)^{(2-y)} (1+y)^{(1+y)}}$ \\
				& $e_1(y) < e_2(y)$ \\
				\hline
			\end{tabular}
		}
	\end{center}
\end{table}
The following conclusions are true as $m$ increases for a fixed $y$.
Table \ref{tab:pmds_r2} shows a comparison of field size of the code $\mathcal{C}_{MR}$ given in Theorem \ref{thm:improvedfieldsize_ch7} with existing constructions for $1.3 \leq y < 2$ i.e., $0 < \text{rate} \leq 0.233$. From the Table \ref{tab:pmds_r2}, it can be seen that the code $\mathcal{C}_{MR}$ given in Theorem \ref{thm:improvedfieldsize_ch7} has the smallest field size compared to all the existing constructions for $1.3 \leq y < 2$ as it can be verfied that $e_1(y) < 4$ for $1.3 \leq y < 2$.
For the range $0 < y < 1.3$ (rate regime of $0.233 < \text{rate} < 0.667$), from \eqref{eq:fldupper_ch7}, the code $\mathcal{C}_{MR}$ has field size of at most $constant \times (m-1) 4^{m-1}$. Hence our construction is not that far off from the smallest known field size for $(r,1,s)$ MR codes in the rate range $0.233 < \text{rate} < 0.667$ also as from Table \ref{tab:pmds_r2}, it can be seen that the best known field size is $4^m$ for $0 < y < 1.3$.  Note that although the field size of $\mathcal{C}_{MR}$ is exponential, there are no constructions in literature in rate regime $0 < \text{rate} < \frac{r}{r+1}$ for any $r$, which as block length tends to infinity (for a fixed rate) gives a polynomial (in block length) field size MR code. 

\subsection{Explicit $(r,1,2)$ MR codes with field Size of $O(n)$} \label{sec:r12MR_ch7}

In this section we give an explicit construction of $(r,1,2)$ MR codes using the structure of $H$ matrix given in Theorem~\ref{thm:canonical_form} with $H_{MDS}$ chosen as Vandermonde matrix. We choose co-efficients corresponding to local parity checks from cosets of a subfield which will ensure appropriate erasure correcting capability required for MR codes.

\begin{thm} \label{thm:2oneS_ch7}  
	Let $m,r$ be positive integers and $n=m(r+1)$ and $k_0=mr$,$k=mr-2$. Let,
	\bea
	H & = & \left[ \begin{array}{c|c} 
		I_m & \underbrace{F}_{(m \times k_0)}   \\ \hline 
		[0]& \underbrace{\hmds}_{(2 \times k_0)} \end{array} \right], \label{pcmatrix_ch7}
	\eea
	\bean
	\hmds & = & 
	\left[ \begin{array}{ccc} 
		1 & \cdots & 1 \\
		\theta_1 & \cdots & \theta_{k_0}
	\end{array} \right],
	\eean
	\bean
	F & = & 
	\left[ \begin{array}{cccc} 
		\underline{x}_1^t & & & \\
		& \underline{x}_2^t & & \\
		& & \ddots & \\
		& & & \underline{x}_{m}^t  \end{array} \right],
	\eean
	\bean
	\underline{x}_i^t & = & (\theta_{(i-1)r+1}^2,\ \theta_{(i-1)r+2}^2, \cdots ,\  \theta_{(i-1)r+r}^2),
	\eean
	\bean
	\theta_{(i-1)r + j} = \alpha^{i-1} \beta^j,
	\eean
	where $\alpha$ is a primitive element of a finite field $\mathbb{F}_q$ and $\beta = \alpha^{\frac{q-1}{2^{\ell}-1}}$ is a $(2^{\ell} - 1)^{th}$ root of unity and $\ell$ is the smallest integer such that $2^{\ell} - 1 \geq r+1$ with $q$ satisfying equation \eqref{fld_2}. The code $\mathcal{C}$ with parity check matrix $H$ (described in \eqref{pcmatrix_ch7}) is an $(r,1,2)$ MR code over \fq\  with block length $n=m(r+1)$ and dimension $k=mr-2$ with $q$ satisfying:
	\bea
	q = 2^{\ell \rho } > \frac{(2^{\ell} - 1) (k+2)}{r} + 1, \label{fld_2}
	\eea
	and $\rho \geq 1$ is the smallest integer satisfying the above inequality (where $(2^{\ell} - 1)^{th}$ roots of unity form a subfield of $\mathbb{F}_q$).
	
\end{thm}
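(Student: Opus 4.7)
The plan is to verify three properties of the code $\mathcal{C}$ with parity-check matrix $H$: (i) $\dim(\mathcal{C})=k=mr-2$; (ii) $\mathcal{C}$ has all-symbol locality $r$; and (iii) $\mathcal{C}$ satisfies the $(r,1,2)$ MR property. Properties (i) and (ii) fall out of the block form of $H$: each row of $[I_m \mid F]$ is a codeword of $\mathcal{C}^{\perp}$ of Hamming weight $r+1$ with support $E_i = \{i\}\cup\{m+(i-1)r+j : j \in [r]\}$, and $\bigcup_i E_i=[n]$, giving locality; the rank of $H$ is $m+2$ (since $\hmds$ is a full-rank Vandermonde on distinct evaluation points, established below), so $\dim(\mathcal{C})=n-(m+2)=mr-2$. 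For (iii), following Theorem~\ref{thm:canonical_form}, it suffices to show that for every $T \subseteq [n]$ with $|T|=m+2$ and $T \cap E_i \neq \emptyset$ for every $i$, the $m+2$ columns of $H$ indexed by $T$ are linearly independent.

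Because $\sum_i |T \cap E_i|=m+2$ with each intersection at least $1$, the two excess elements are distributed in exactly one of two patterns: Case A, where one group $i_0$ satisfies $|T \cap E_{i_0}|=3$; or Case B, where two distinct groups $i_1,i_2$ satisfy $|T \cap E_{i_\bullet}|=2$. In either case I choose one representative from each $T \cap E_i$ (preferring the identity column $i$ when $i \in T$) and permute so that the $m$ representatives occupy the first $m$ columns of $M := H|_T$ and the two extras occupy the last two. The resulting block partition $M=\bigl(\begin{smallmatrix} D & A \\ B & C \end{smallmatrix}\bigr)$ has top-left $m\times m$ diagonal block $D$ with diagonal entries $1$ or $\theta^2$, so $D$ is non-singular and $\det M=\det D \cdot \det(C-BD^{-1}A)$. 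Direct computation in each sub-case (Case A split by whether $i_0 \in T$; Case B split by whether each of $i_1,i_2$ lies in $T$) shows the $2\times 2$ Schur-complement determinant factors as a product of terms of the form $\theta_a - \theta_b$ with $\theta_a,\theta_b$ drawn from a common local group, together with---in Case B only---one additional factor that, after substituting $\theta_{(i-1)r+j}=\alpha^{i-1}\beta^j$, rearranges into an expression of the shape
\bean
\alpha^{i_1+i_2-2}\,\beta^{b_1}\bigl(\beta^{a_2}+\beta^{b_2}\bigr) \pm \alpha^{2(i_2-1)}\,\beta^{a_2+b_2},
\eean
or a symmetric variant.

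Non-vanishing of the within-group differences is immediate because $\beta$ has order $2^\ell-1 \geq r+1$, so $\beta^{j_1}\neq \beta^{j_2}$ for distinct $j_1,j_2 \in [r]$. For the Case-B factor, one rearranges the vanishing condition to the form $\alpha^{i_1-i_2}=\lambda$ with $\lambda \in \mathbb{F}_{2^\ell}^{*}=\langle \beta \rangle$; the hypothesis $q > (2^\ell-1)m + 1$ is equivalent to $(q-1)/(2^\ell-1)>m$, so the powers $\alpha^{0},\alpha^{1},\ldots,\alpha^{m-1}$ lie in pairwise distinct cosets of $\mathbb{F}_{2^\ell}^{*}$ in $\fqstar$; in particular $\alpha^{i_1-i_2}\notin \mathbb{F}_{2^\ell}^{*}$ for distinct $i_1,i_2\in[m]$, so the equality is impossible. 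The same coset argument simultaneously shows that the $mr$ evaluation points $\theta_{(i-1)r+j}$ are pairwise distinct, justifying the Vandermonde form of $\hmds$ used in establishing (i). The principal obstacle is the sub-case bookkeeping: several Case-B sub-configurations produce slightly different Schur complements, and in each one must choose representatives (identity whenever available) so that the determinant cleanly separates into within-group factors and an across-group coset factor; once this framework is in place the algebraic verification is mechanical, and the coset distinction closes every case.
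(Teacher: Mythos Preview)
Your proposal is correct and follows essentially the same approach as the paper: the same case split (one group with three erasures versus two groups with two each), the same further sub-cases according to whether the identity column is among the erasures, and the same endgame of substituting $\theta_{(i-1)r+j}=\alpha^{i-1}\beta^j$ and invoking the coset-disjointness of $\alpha^{0},\ldots,\alpha^{m-1}$ modulo $\mathbb{F}_{2^\ell}^{*}$. The only presentational difference is that the paper first observes that a group with a single erasure is locally correctable and thereby passes directly to explicit $3\times 3$ or $4\times 4$ matrices (labelled $H_1,\ldots,H_5$), whereas you keep the full $(m+2)\times(m+2)$ submatrix and use the Schur complement $C-BD^{-1}A$ to achieve the same reduction; your displayed Case-B factor is exactly the bracket the paper obtains in its $H_4$ computation.
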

\begin{proof}
	Since we need to show that we can correct any erasure pattern with two random erasures at any positions with rest of the erasures such that there is atmost one erasure in the support of each local parity check (a local parity check refers to  one of the codewords (or rows) in the first $m$ rows of $H$), It is enough to check that any $(2+m) \times (2+j) $ submatrix of $H$ with $2+j$ columns having non zero intersection with support of exactly $j$ local parity checks ($j$ rows among the first $m$ rows of $H$), is full rank. Since a single erasure in the support of a local parity check can be corrected, its enough to consider $j \leq s=2 $.
	We choose the $(2^{\ell} - 1)^{th}$ roots of unity $\{ 1,\beta,\cdots,\beta^{2^{\ell} - 2} \}$ from $\mathbb{F}_q$ to be a subfield of $\mathbb{F}_q$. This can be done as $q = 2^{\ell \rho}$ with $\rho \geq 1$ (Just take the field $F_{2^{\ell}}$ and take $\beta$ to be its primitive element and go to an finite extension to get $\mathbb{F}_q$ satisfying the inequality $q = 2^{\ell \rho } > \frac{(2^{\ell} - 1) (k+2)}{r} + 1$).\\
	
	Note that all $\theta_{((i-1)r + j )} = \alpha^{(i-1)} \beta^j$ are distinct. If not $\alpha^{(i_1-1)} \beta^{j_1} = \alpha^{(i_2-1)} \beta^{j_2} $ but this can't be true as elements in both sides of the equation belong to different cosets of $(2^{\ell} - 1)^{th}$ roots of unity due to the condition $\frac{k+2}{r} < \frac{q-1}{2^{\ell} - 1}$.
	
	Based on the discussion above its enough to check non singularity of following matrices:\\
	$1$ local parity check with $2+1=3$ erasures in it:
	\bean
	H_1 & = & 
	\left[ \begin{array}{ccc} 
		\theta_{(i-1)r+j_1}^2 & \theta_{(i-1)r+j_2}^2 & \theta_{(i-1)r+j_3}^2\\
		1 & 1 & 1 \\
		\theta_{(i-1)r+j_1} & \theta_{(i-1)r+j_2} & \theta_{(i-1)r+j_3}
		
	\end{array} \right]
	\eean
	\bean
	H_2 & = & 
	\left[ \begin{array}{ccc} 
		1 & \theta_{(i-1)r+j_1}^2 & \theta_{(i-1)r+j_2}^2 \\
		0 & 1 & 1 \\
		0 & \theta_{(i-1)r+j_1} & \theta_{(i-1)r+j_2}
		
	\end{array} \right]
	\eean
	$H_1$ is a Vandermonde matrix (with rows permuted) and hence non singular and its clear the $H_2$ is also non singular for any local code $i$ and any erasure locations $j_1,j_2,j_3 \in [r]$.
	
	The case of $2$ local parity checks with $3$ erasure in the support of first local parity check and $1$ erasure in the support of second local parity check boils down to the above $2$ cases as the $1$ erasure in the support of second local parity check can be corrected by locality.
	
	Submatrix of $H$ corresponding to $2$ local parity checks with each having $2$ erasures in it:
	\bean
	H_3 & = & 
	\left[ \begin{array}{cccc} 
		0 & 0 & \theta_{(i_2-1)r+k_1}^2 & \theta_{(i_2-1)r+k_2}^2 \\
		\theta_{(i_1-1)r+j_1}^2 & \theta_{(i_1-1)r+j_2}^2 & 0 & 0 \\
		1 & 1 & 1 & 1 \\
		\theta_{(i_1-1)r+j_1} & \theta_{(i_1-1)r+j_2} & \theta_{(i_2-1)r+k_1} & \theta_{(i_2-1)r+k_2}
		
	\end{array} \right]
	\eean
	\bean
	H_4 & = & 
	\left[ \begin{array}{cccc} 
		0 & 0 & \theta_{(i_2-1)r+k_1}^2 & \theta_{(i_2-1)r+k_2}^2 \\
		1 & \theta_{(i_1-1)r+j_1}^2 & 0 & 0 \\
		0 & 1 & 1 & 1 \\
		0 & \theta_{(i_1-1)r+j_1} & \theta_{(i_2-1)r+k_1} & \theta_{(i_2-1)r+k_2}
		
	\end{array} \right]
	\eean
	\bean
	H_5 & = & 
	\left[ \begin{array}{cccc} 
		1 & 0 & 0 & \theta_{(i_2-1)r+k_1}^2 \\
		0 & 1 & \theta_{(i_1-1)r+j_1}^2 & 0 \\
		0 & 0 & 1 & 1 \\
		0 & 0 & \theta_{(i_1-1)r+j_1} & \theta_{(i_2-1)r+k_1}
		
	\end{array} \right]
	\eean
	
	The matrix $H_5$ is clearly non singular.
	
	\bean
	H_3 & = & 
	\left[ \begin{array}{cccc} 
		0 & 0 & \theta_{(i_2-1)r+k_1}^2 & \theta_{(i_2-1)r+k_2}^2 \\
		\theta_{(i_1-1)r+j_1}^2 & \theta_{(i_1-1)r+j_2}^2 & 0 & 0 \\
		1 & 1 & 1 & 1 \\
		\theta_{(i_1-1)r+j_1} & \theta_{(i_1-1)r+j_2} & \theta_{(i_2-1)r+k_1} & \theta_{(i_2-1)r+k_2}
	\end{array} \right]
	\eean
	
	we do permuation of rows or columns and row and column operations (as these do not change the singularity or non singularity of the matrix) on $H_3$:\\
	we substitute $\theta_{(i-1)r+j} = \alpha^{(i-1)} \beta^j$ in $H_3$:
	wlog we assume $j_2 >j_1$ and $k_2 > k_1$ and $i_2 > i_1$
	
	\bean
	H_3 & = & 
	\left[ \begin{array}{cccc} 
		0 & 0 & \alpha^{2(i_2-1)} \beta^{2k_1} & \alpha^{2(i_2-1)} \beta^{2k_2} \\
		\alpha^{2(i_1-1)} \beta^{2j_1} & \alpha^{2(i_1-1)} \beta^{2j_2} & 0 & 0 \\
		1 & 1 & 1 & 1 \\
		\alpha^{(i_1-1)} \beta^{j_1} & \alpha^{(i_1-1)} \beta^{j_2} & \alpha^{(i_2-1)} \beta^{k_1} & \alpha^{(i_2-1)} \beta^{k_2}
	\end{array} \right]
	\eean
	multiply the non zero scalars $\alpha^{-2(i_1-1)},\alpha^{-2(i_2-1)},\alpha^{-(i_1-1)}$ to rows $2,1,4$ respectively as this does not affect the non-singularity of the matrix. The matrix $H_3$ now becomes
	\bean
	\left[ \begin{array}{cccc} 
		0 & 0 & \beta^{2k_1} & \beta^{2k_2} \\
		\beta^{2j_1} & \beta^{2j_2} & 0 & 0 \\
		1 & 1 & 1 & 1 \\
		\beta^{j_1} & \beta^{j_2} & \alpha^{(i_2-i_1)} \beta^{k_1} & \alpha^{(i_2-i_1)} \beta^{k_2}
	\end{array} \right]
	\eean
	multiply row 1 with $- \alpha^{(i_2-i_1)} \beta^{-k_1}$ and add it row $4$ as this won't change the determinant. The matrix now becomes:
	\bean
	\left[ \begin{array}{cccc} 
		0 & 0 & \beta^{2k_1} & \beta^{2k_2} \\
		\beta^{2j_1} & \beta^{2j_2} & 0 & 0 \\
		1 & 1 & 1 & 1 \\
		\beta^{j_1} & \beta^{j_2} & 0 & \alpha^{(i_2-i_1)} (\beta^{k_2}-\beta^{2k_2-k_1})
	\end{array} \right]
	\eean
	the determinant of the above matrix is:
	\bean
	\alpha^{(i_2-i_1)} \beta^{k_2}(1-\beta^{k_2-k_1}) \beta^{2k_1} \beta^{2j_1} (1 - \beta^{2(j_2 - j_1)}) - \{ \beta^{\psi} & or &  0\} 
	\eean
	
	for some $\psi$ as $(2^{\ell} - 1)^{th}$ roots of unity form a subfield.\\
	$(1-\beta^{k_2-k_1}) \neq 0 $, as $0 < k_2-k_1 < 2^{\ell} - 1$ because $k_i < 2^{\ell} - 1$ \\
	$(1 - \beta^{2(j_2 - j_1)}) \neq 0 $ as $ 0 < (j_2 - j_1) < 2^{\ell} - 1$ which implies $ 0 < 2(j_2 - j_1) < 2(2^{\ell} - 1)$. Hence for $(1 - \beta^{2(j_2 - j_1)}) = 0 $, $2(j_2-j_1) = 2^{\ell} - 1$ which can't be true as $2^{\ell} - 1$ is odd.
	
	Hence for $H_3$ to have zero determinant :
	\bean
	\alpha^{(i_2-i_1)} \beta^{k_2}(1-\beta^{k_2-k_1}) \beta^{2k_1} \beta^{2j_1} (1 - \beta^{2(j_2 - j_1)}) - \{ \beta^{\psi} & or & 0 \} = 0 \\
	\alpha^{(i_2-i_1)}  = \{ \beta^{\psi_1} & or & 0\}
	\eean
	for some $0 \leq \psi_1 < 2^{\ell} - 1$ which is not true as $0 < i_2 - i_1 < \frac{q-1}{2^{\ell}- 1}$ because $0 < i_j < \frac{q-1}{2^{\ell} - 1}$ (due to condition $\frac{k+2}{r} < \frac{q-1}{2^{\ell} - 1}$ given in the theorem). In otherwords both sides of the equation belong to different cosets which can't be true.
	Hence $H_3$ is non - singular.
	
	The non singularity of $H_4$ can be checked similarly.	
\end{proof}

\subsection{Construction of $(r,\delta,2)$ MR code with field size of $O(n)$ based on \cite{Blaum_3}} \label{sec:rdelta2MR_ch7} 

In \cite{Blaum_3}, the authors provide a construction for an $(r,\delta,2)$ MR code (the code is referred to as a partial MDS code in \cite{Blaum_3}).  We present a modification of this construction here.  The modification essentially amounts to a different choice of finite-field elements in the construction of the parity check matrix given in \cite{Blaum_3} for the partial MDS code.  The modified parity-check matrix is provided below.
\begin{thm} \label{thm:Mod_bluam_ch7}
	Let $m,r,\delta$ be positive integers. Let $r'=r+\delta-1$. Let,
	
	\bea
	H  & =& 
	\begin{pmatrix}
		H_0 & 0 & \cdots & 0 \\
		0 & H_0 & \cdots & 0 \\
		\vdots & \vdots & \vdots & \vdots \\
		0 & 0 & \cdots & H_0 \\
		H_1 & H_2 & \cdots & H_m
	\end{pmatrix}, \label{Hmatrix_ch7}
	\eea
	where 
	\bea H_j  & = & 
	\begin{pmatrix}
		1 & \beta^{\delta} & \beta^{2\delta} & \cdots & \beta^{(r')\delta} \\
		\alpha^{j-1} & \alpha^{j-1} \beta^{-1} & \alpha^{j-1} \beta^{-2} & \cdots & \alpha^{j-1}\beta^{-(r')} \\
	\end{pmatrix}, \label{Hjmatrix_ch7}
	\eea
	and 
	\bea H_0  & = & 
	\begin{pmatrix}
		1 & 1 & 1 & \cdots & 1 \\
		1 & \beta^{1} & \beta^{2} & \cdots & \beta^{r'} \\
		1 & \beta^{2} & \beta^{4} & \cdots & \beta^{2r'} \\
		\vdots & \vdots & \vdots & \vdots & \vdots \\
		1 & \beta^{\delta-1} & \beta^{2(\delta-1)} & \cdots & \beta^{r'(\delta-1)} 
	\end{pmatrix}. \label{H0matrix_ch7}
	\eea
	where, $\alpha$ is a primitive element of a finite field $\mathbb{F}_q$ and $\beta = \alpha^{\frac{q-1}{\psi}}$ is a $\psi$th root of unity for any $\psi \geq r'+1$ and hence $\psi$ divides $q-1$.
	The code $\mathcal{C}$ with parity check matrix $H$ is a $(r,\delta,2)$ MR code over \fq\ with block length $n=m(r'+1)=m(r+\delta)$ and dimension $k=mr-2$ with $q$ satisfying $q-1 \geq \psi m$ $\approx$ block length.
\end{thm}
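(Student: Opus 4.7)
The plan is to verify the two defining properties of an $(r,\delta,2)$ MR code separately: (i) each row of the $m\times (r+\delta)$ array forms an $[r+\delta,r,\delta+1]$ MDS local code, and (ii) upon puncturing any $\delta$ symbols from each row the remaining $[mr,mr-2]$ code is MDS. Property (i) is immediate from the structure of $H_0$: its rows are $(1,\beta^i,\beta^{2i},\dots,\beta^{r'i})$ for $i=0,1,\dots,\delta-1$, so $H_0$ is a $\delta\times(r'+1)$ Vandermonde matrix in the distinct evaluation points $\{1,\beta,\beta^2,\dots,\beta^{r'}\}$ (which are distinct since $\psi\ge r'+1$), and hence any $\delta$ columns of $H_0$ are linearly independent. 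Property (ii) is equivalent to the assertion that every $(m\delta+2)\times(m\delta+2)$ submatrix $M$ of $H$ obtained by selecting an admissible erasure set $E$ (with $|E\cap \text{row }j|\le \delta+2$ for every $j$ and $|E|=m\delta+2$) is nonsingular.

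To handle (ii), I would split admissible patterns into two cases depending on how the two ``extra'' erasures (those exceeding the $\delta$-per-row budget) are distributed. Let $s_j=|E\cap\text{row }j|-\delta\in\{-1,0,1,2\}$ with $\sum_j s_j=2$. In rows where $s_j\le 0$ the $\delta$ (or fewer) erased positions are correctable within the block using $H_0$ alone, since any $\delta$ columns of $H_0$ are linearly independent; I would perform block row operations using these $H_0$ blocks to clear all the corresponding entries in the global parity rows. After this reduction, the determinant of $M$ factors as (a product of nonzero local Vandermonde minors) times the determinant of an effective $2\times 2$ matrix $N$ obtained by restricting the two global parity rows to the two extra erased columns, with the $H_0$-elimination applied. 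The task reduces to showing $\det N\neq 0$.

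The remaining (and main) obstacle is computing and bounding $\det N$. In Case A the two extras lie in the same row $j^\ast$, so $N$ is a $2\times 2$ submatrix of $H_{j^\ast}$ restricted to two columns indexed by $\ell_1,\ell_2\in\{0,\dots,r'\}$; a direct calculation (after factoring out $\alpha^{j^\ast-1}$) gives $\det N = \alpha^{j^\ast-1}(\beta^{\delta\ell_2-\ell_1}-\beta^{\delta\ell_1-\ell_2})$, which is nonzero because $\beta$ has order $\psi\ge r'+1$ and $\ell_1\neq \ell_2$ lie in $\{0,\dots,r'\}$. In Case B the extras lie in distinct rows $j_1\neq j_2$; the $H_0$-elimination transforms $N$ into a form whose determinant is, after simplification, $\alpha^{j_2-j_1}P(\beta)-\alpha^{j_1-j_2}Q(\beta)$ where $P,Q$ are polynomials in $\beta$ of degree $\le r'$ with coefficients in $\{1,\beta,\dots,\beta^{r'}\}$-span. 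The key inequality is $|j_1-j_2|\le m-1$, so $\alpha^{2(j_2-j_1)}$ lies in the coset $\alpha^{2(j_2-j_1)}\langle\beta\rangle$ which, by the hypothesis $q-1\ge\psi m$, is disjoint from $\langle\beta\rangle$ whenever $j_1\neq j_2$. This coset-separation argument is precisely what forces $\det N\neq 0$ and is where the field-size condition $q-1\ge\psi m$ enters; it is the step that must be executed most carefully, since in Blaum's original construction the analogous step required a larger field.

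Once both cases are dispatched, $\det M\neq 0$ for every admissible $E$, which is exactly the $(r,\delta,2)$ MR property. Combined with the local MDS property already established, this completes the proof. The field-size count $n=m(r+\delta)\le m(\psi-1+1)\le q-1$ confirms the claimed $O(n)$ bound.
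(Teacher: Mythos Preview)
Your overall strategy matches the paper's (which simply invokes the closed-form determinant of Lemma~3 in \cite{Blaum_3} and checks it is nonzero for the new choice of $\alpha,\beta$), but the execution has two real gaps.

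First, in Case~A your $2\times 2$ matrix $N$ is \emph{not} a raw submatrix of $H_{j^\ast}$: once you eliminate using the $\delta$ rows of $H_0$ in block $j^\ast$, the two global rows are modified, so the displayed formula $\alpha^{j^\ast-1}(\beta^{\delta\ell_2-\ell_1}-\beta^{\delta\ell_1-\ell_2})$ is wrong. The clean way to handle Case~A is to observe that stacking $H_0$ with both rows of $H_{j^\ast}$ and restricting to any $\delta+2$ columns yields, after multiplying column $\ell$ by $\beta^{\ell}$ and permuting one row, a genuine $(\delta+2)\times(\delta+2)$ Vandermonde matrix in the nodes $\beta^{\ell}$, hence nonsingular.

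Second, and more importantly, your Case~B coset argument is incomplete. You assert that $\det N=0$ forces $\alpha^{2(j_2-j_1)}\in\langle\beta\rangle$, but for this you need $P(\beta)/Q(\beta)$ to be a single \emph{power} of $\beta$, not merely some element of $\mathbb{F}_q$ built from~$\beta$. Unlike Theorem~\ref{thm:2oneS_ch7}, there is no hypothesis here that $\langle\beta\rangle$ spans a subfield, so ``polynomial in $\beta$'' does not imply ``lies in $\langle\beta\rangle$''. The missing computation---which is exactly what Lemma~3 of \cite{Blaum_3} supplies---is that a Laplace expansion along the two global rows gives $\det M'=\pm\big(\alpha^{j_2-1}A_1B_2-\alpha^{j_1-1}A_2B_1\big)$ with $A_i=\det\!\left[\begin{smallmatrix}H_0|_{E_i}\\ (\beta^{\delta\ell})\end{smallmatrix}\right]$ and $B_i=\det\!\left[\begin{smallmatrix}H_0|_{E_i}\\ (\beta^{-\ell})\end{smallmatrix}\right]$; both are (up to sign and a factor $\prod_{\ell\in E_i}\beta^{\pm\ell}$) the same Vandermonde determinant, so the ratio $A_2B_1/(A_1B_2)$ collapses to $\beta^{\sum_{E_2}\ell-\sum_{E_1}\ell}$, a genuine power of~$\beta$. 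Only then does $\det M'=0$ force $\alpha^{j_2-j_1}\in\langle\beta\rangle$, which is ruled out by $1\le|j_2-j_1|\le m-1<(q-1)/\psi$. (Note the exponent is $j_2-j_1$, not $2(j_2-j_1)$; your version would unnecessarily double the field size.)
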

\begin{proof}
	Using the derivation of closed-form expression for the determinant given in Lemma $3$ in \cite{Blaum_3}, it can be seen that the code $\mathcal{C}$ is an $(r,\delta,2)$ MR code.
\end{proof}
Note that the field size $q$ is dependent on $\delta$ only through block length and the code $\mathcal{C}$ is an MR code for any $q$ a prime power such that $\psi | (q-1)$ and $q-1 \geq \psi m$ $\approx$ block length  whereas the construction in \cite{Blaum_3} has field size $\geq m((\delta+1)(r-1)+1)$ $\approx$ $\delta \times$ block length. The construction given in \cite{Cal_Koy} has field size of $\geq (r+\delta)^{mr}$ and the construction given in \cite{Gab_Yak_Bla_Sie} has field size of $\geq \max((r+\delta)^{\delta+2}m,(r+\delta)^{2(2+\delta)}) \geq (r+\delta)^{\delta+1}n$.

\section{Summary and Contributions} \label{sec:Sum_ch7}
In this chapter, we introduced a class of codes called PMR codes and presented a general structure of parity check matrix of such codes and used it to give explicit low field size, high rate construction of such codes. We also presented a possible generalization of our construction of PMR codes. We then presented three constructions of MR codes with low field size. The first construction is a construction of $(2,1,s)$ MR code which has low field size compared to existing constructions in literature and the rest of the two constructions correspond to MR codes with $s=2$. The last two constructions had field size of $O(n)$.

\begin{appendices}
	\chapter{Proof of Theorem \ref{thm:neat5_ch7}} \label{sec:MR_Proofs_ch7}
	
	\begin{proof}
		The code $\mathcal{C}$ has (optimum) maximum possible minimum distance with all symbol locality $r=2$ \cite{TamBar_LRC} for the given $n=q-1,k=2D+1,r=2$. Hence it can be seen that puncturing $\mathcal{C}$ at co-ordinates corresponding to any number of cosets in $\{A_i: 1 \leq i \leq m\}$ (puncturing local codes) without changing $k$ will maintain the optimum minimum distance.
		
		Let $\mathbb{F}$ be the algebraic closure of $\mathbb{F}_q$. Throughout the proof whenever we say a pattern $e$ or just $e$, it refers to an admissible puncturing pattern for an $[N,k]$ code (the code will be clear from the context) with all symbol locality $r$. Throughout the discussion any $[N,k]$  code referred to are polynomial evaluation codes and we assume that the set of all finite field elements (evaluation positions) say $q_1,...,q_{N}$ at which we evaluate a message polynomial to form the $[N,k]$ code to be ordered. If $i \in e$, it is also used to refer to $q_i$. Hence $e$ is used to indicate both indices of the co-ordinates as well as corresponding finite field elements at which we evaluate.
		
		\paragraph{Maximal Recoverability:}
		Let $\ell=\frac{N}{3}$.\\
		We denote an encoding (message) polynomial of $\mathcal{C}$ by $f(x)$ and let $deg(f) = 3D$.
		Let $H$ denote the cyclic group of cube roots of unity in $\mathbb{F}_q$.
		Let $\alpha$ be a primitive element in $\mathbb{F}_q$.
		If $\{ X_1,...X_{3D} \} \subset \mathbb{F}$ are the roots of $f(x)$ then it must satisfy (because co-efficients of monomials $x^i$ in $f(x)$ such that $i \text{ mod } (r+1) = r \text{ mod } (r+1)=2 \text{ mod } 3$ are $0$):
		\bean
		\sigma_1(X_1,...,X_{3D}) = 0 \\
		\sigma_4(X_1,...,X_{3D}) = 0 \\
		\vdots \\
		\sigma_{1+3(D-1)}(X_1,...,X_{3D}) = 0
		\eean
		where $\sigma_i$ refers to the $i$th elementary symmetric function.
		Lets denote the above set of conditions based on elementary symmetric functions on $X_1,...,X_{3D}$ by $R(D)$.
		
		If we have a MR code with block length $N$ and dimension $k=2D+1$ based on the theorem we are proving now (assuming the theorem is true ) and let $H_1,...H_{\ell}$ be the cosets of cube roots of unity chosen for the evaluation positions for forming the codeword of the $[N,k]$ MR code and if we puncture this $[N,k]$ code by a pattern $e$ then for the resulting $[N-\ell,k]$ (assuming $k$ does not change after puncturing) code to be MDS we need $d_{min} = N-\ell-k+1 = N-2D-\ell$. Based on the degree of $f(x)$ (degree of $f(x)$ is $\leq 3D$), we know that $d_{min} \geq N-\ell-deg(f) \geq  N-\ell-3D$. Hence out of $deg(f)$ roots of $f(x)$, we want atmost $2D$ distinct roots to lie in $\cup_{i=1}^{\ell} (H_{i} \setminus \{e_{i}\})$ for any admissible $e$. In other words, its enough if we choose $\ell$ cosets of cube roots of unity $H_1,...H_{\ell}$ such that for any $\{ X_1,...,X_{3D} \} \subset \mathbb{F}$ which satisfies the condition $R(D)$, atmost only $2D$ distinct elements out of them will lie in $\cup_{i=1}^{\ell} (H_{i} \setminus \{e_{i}\})$ for any $e$. 
		
		Note that there is another set of conditions like $R(D)$, if $deg(f)$ is of the form $3D-2$. If $\{X_1,...,X_{3D-2}\} \subset \mathbb{F}$ are the roots of $f$ then:
		\bean
		\sigma_2(X_1,...,X_{3D-2}) = 0 \\
		\sigma_5(X_1,...,X_{3D-2}) = 0 \\
		\vdots \\
		\sigma_{3(D-1)-1}(X_1,...,X_{3D-2}) = 0
		\eean
		where $\sigma_i$ refers to the $i$th elementary symmetric function. We call the above set of conditions $Q(D)$.
		Similar to above, we need to make sure that for any $\{ X_1,...,X_{3D-2} \} \subset \mathbb{F}$ which satisfies the condition $Q(D)$, atmost only $2D$ distinct elements out of them will lie in $\cup_{i=1}^{\ell} (H_{i} \setminus \{e_{i}\})$ for any $e$.
		
		We repeat the following procedure parallely for the set of conditions represented by $Q(D)$ as well. For brevity we only explain choosing cosets based on satisfying the above set of conditions based on $R(D)$. At the end of it, we account for these extra set of conditions represented by $Q(D)$ by multiplying by a factor of $2$ in the resulting constraint on field size. If $deg(f) \in \{3j-2,3j\}$ then we multiply by appropriate power of $x$ to bring the $deg(f)$ to the form $3D-2,3D$ and make sure atmost $2D$ roots of $f$ lie inside $\cup_{i=1}^{\ell} (H_{i} \setminus \{e_{i}\})$ for any $e$.
		
		Note that this condition will also ensure that the dimension of a $N-\ell$ length punctured code obtained by puncturing the $[N,k]$ code by a pattern $e$ is $k$ for any $e$. If not there are 2 distinct non zero message polynomials $f_1(x),f_2(x)$ which after evaluating at $\ell$ cosets (evaluation positions) of the $[N,k]$ code yields the same codeword after puncturing by a pattern $e$ to $N-\ell$ length. This means $f_1-f_2$ is another non zero message or evaluation polynomial with $N-\ell$ zeros in the chosen $\ell$ cosets but by the condition of choosing cosets mentioned previously (as roots of $f_1-f_2$ satisfies $R(D)$ or $Q(D)$) there can be atmost $2D$ distinct zeros in the $N-\ell$ evaluation positions obtained after puncturing by $e$. This is a contradiction as $N-\ell=\frac{2N}{3}>2D$ (by the condition $\frac{2D}{N}<\frac{2}{3}$ given in the theorem). Hence if we choose $\ell$ cosets such that for any pattern $e$ and any $2D$ distinct elements $X_1,..,X_{2D}$ from the $\ell$ cosets after puncturing by $e$, none of $X_{2D+1},..,X_{3D}$ from $\mathbb{F}$ such that $X_1,..X_{3D}$ satisfies $R(D)$   which are distinct from $X_1,...,X_{2D}$ lie in the chosen $\ell$ cosets after puncturing by $e$ then we are done. There is an equivalent set of conditions based on $Q(D)$. We do the following procedures parallely for $Q(D)$ as well but we skip the description.
		
		\begin{prop}
			Let $S$ be a set of $3A$ elements from $\mathbb{F}$ satisfying $R(A)$ and $S$ contains $\alpha^i H$ for some $ i$ then $S-\alpha^i H$ (set difference of the sets $S$ and $\alpha^i H$) satisfies $R(A-1)$.
		\end{prop}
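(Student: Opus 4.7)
The plan is to exploit the fact that $\alpha^i H = \{\alpha^i, \alpha^i\beta, \alpha^i\beta^2\}$ has almost all elementary symmetric polynomials vanishing. First I would compute: since $1 + \beta + \beta^2 = 0$ and $\beta^3 = 1$, we have $\sigma_1(\alpha^i H) = 0$, $\sigma_2(\alpha^i H) = 0$, and $\sigma_3(\alpha^i H) = \alpha^{3i} \neq 0$. All higher $\sigma_k(\alpha^i H)$ for $k \geq 4$ vanish trivially since $|\alpha^i H| = 3$.

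Next I would apply the standard convolution identity for elementary symmetric polynomials on a disjoint union. Writing $T = S - \alpha^i H$, which has cardinality $3(A-1)$, we get
\begin{equation*}
\sigma_j(S) \ = \ \sum_{k=0}^{j} \sigma_k(\alpha^i H)\,\sigma_{j-k}(T) \ = \ \sigma_j(T) + \alpha^{3i}\,\sigma_{j-3}(T),
\end{equation*}
with the convention that $\sigma_m(T) = 0$ for $m < 0$. This reduces the problem to a linear triangular system in the unknowns $\{\sigma_j(T)\}$.

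I would then prove by induction on $j$ restricted to the arithmetic progression $j \in \{1, 4, 7, \ldots, 3A-2\}$ that $\sigma_j(T) = 0$. The base case $j=1$ follows from $\sigma_1(S) = \sigma_1(T) = 0$. For the inductive step at index $j = 3t+1$ with $t \geq 1$, the hypothesis $\sigma_{j-3}(T) = \sigma_{3(t-1)+1}(T) = 0$ combined with $\sigma_j(S) = 0$ forces $\sigma_j(T) = 0$. Carrying this through all $A$ indices yields $\sigma_j(T) = 0$ for every $j \in \{1, 4, \ldots, 3A-2\}$, which is strictly stronger than the $A-1$ conditions $j \in \{1,4,\ldots,3(A-1)-2\}$ required by $R(A-1)$.

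There is no real obstacle here: the argument is essentially a one-line Newton-style calculation enabled by the coset $\alpha^i H$ being a complete set of roots of $x^3 - \alpha^{3i}$, so that its generating polynomial is a pure cubic with only the constant and leading terms nonzero. The only thing to be careful about is ensuring the convolution formula is applied with the correct range of $k$ and that the indices in $R(A)$ and $R(A-1)$ line up as claimed, which is what the above induction verifies.
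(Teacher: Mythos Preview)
Your proof is correct and uses essentially the same idea as the paper: the convolution identity $\sigma_j(S)=\sigma_j(T)+\sigma_3(\alpha^iH)\,\sigma_{j-3}(T)$, obtained from $\sigma_1(\alpha^iH)=\sigma_2(\alpha^iH)=0$. The only cosmetic difference is that the paper runs the induction \emph{downward} from $j=3A-2$ (using $\sigma_{3A-2}(T)=0$ since $|T|=3A-3$, and dividing by $a=\sigma_3(\alpha^iH)\neq 0$ at each step), whereas you run it \emph{upward} from $j=1$ (using $\sigma_{-2}(T)=0$, which avoids needing $a$ to be invertible); both traversals of the same two-term recursion yield the same conclusion.
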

		
		\begin{proof}
			Since $S$ satisfies $R(A)$, this implies $\sigma_{1+3(i-1)}(S) = 0$ for $ i=1,..,A$. 
			\bean
			\sigma_{1+3(i-1)}(S) = \sigma_{3}(\alpha^i H) \times \sigma_{1+3(i-1)-3}(S- \alpha^i H) +
			\sigma_{2}(\alpha^i H) \times \sigma_{1+3(i-1)-2}(S- \alpha^i H) + \\ \sigma_{1}(\alpha^i H) \times \sigma_{1+3(i-1)-1}(S- \alpha^i H) + \sigma_{1+3(i-1)}(S- \alpha^i H).
			\eean
			\bean
			\sigma_{3}(\alpha^i H) = a, \sigma_{2}(\alpha^i H) = 0, \sigma_{1}(\alpha^i H) = 0,\ \ \text{ for some } \ \ a \neq 0.
			\eean
			Hence,
			\bea
			\sigma_{1+3(i-1)}(S) = & a \sigma_{1+3(i-1)-3}(S- \alpha^i H) +
			& \sigma_{1+3(i-1)}(S- \alpha^i H). \label{eq_red_mr}
			\eea
			For $ i=A$, \ \  $\sigma_{1+3(A-1)}(S- \alpha^i H) = 0$ as $S- \alpha^i H$ has only $3(A-1)$ elements.\\
			Hence, 
			\bean
			\sigma_{1+3(A-1)}(S) = a \sigma_{1+3(A-1)-3}(S- \alpha^i H).
			\eean
			Hence,
			\bean
			\sigma_{1+3(A-1)}(S) = 0 \implies \sigma_{1+3(A-1)-3}(S- \alpha^i H) = 0.
			\eean
			For $i=A-1$,
			\bean
			\sigma_{1+3(A-2)}(S) = & a \sigma_{1+3(A-2)-3}(S- \alpha^i H) +
			&    \sigma_{1+3(A-2)}(S- \alpha^i H).
			\eean
			Since, $\sigma_{1+3(A-2)}(S) = 0$ and $\sigma_{1+3(A-2)}(S- \alpha^i H) = 0$, this implies that $\sigma_{1+3(A-2)-3}(S- \alpha^i H) = 0$.
			
			By induction, if we assume, $\sigma_{1+3(i-1)}(S- \alpha^i H) = 0$ then by equation \eqref{eq_red_mr} since $\sigma_{1+3(i-1)}(S)=0$, we have : $\sigma_{1+3(i-1)-3}(S- \alpha^i H) = 0$ ($i=A$ is the starting condition of the induction which we already proved).\\
			Hence $S- \alpha^i H$ satisfies $R(A-1)$.
		\end{proof}

		$\it{Claim}$:\\
		Its enough to choose $\ell$ cosets  such that for any $(A \leq D)$ and any $X_1,...,X_{2A}$(contained in the chosen $\ell$ cosets) which are distinct and contains atmost one element from each coset, none of the $X_{2A+1},..,X_{3A}$ from $\mathbb{F}$ such that $X_1,...,X_{3A}$ satisfies $R(A)$, which are distinct from $X_1,...,X_{2A}$ lies in the chosen $\ell$ cosets after puncturing by $e$ for any $e$ disjoint from $X_1,...,X_{2A}$.
		
		$\it{Proof}$:\\
		This is because if $X_1,...,X_{3D}$ satisfying $R(D)$ contains at least 2 element from some coset $\alpha^i H$ for some $i$, since the polynomial $f_1(x)=(x-X_1)...(x-X_{3D})$ restricted to any coset is a degree $1$ polynomial, the third element from coset is also a root of $f_1$. Hence the entire coset is contained in $X_1,...,X_{3D}$ and by similar reasoning $X_1,...,X_{3D}$ can be written as $X_1,...,X_{3(D-j)} \cup \alpha^{i_1} H  \cup ... \alpha^{i_j} H$ for some $i_1,...,i_j$ where $X_1,...,X_{3(D-j)}$ contains at most one element from
		each coset and satisfies $R(D-j)$ by proposition $1$.
		
		Now by the property of the chosen cosets, we have that for any distinct $X_1,...,X_{2(D-j)}$ from the chosen $\ell$ cosets containing atmost one element from each coset, any of $X_{2(D-j)+1},.$ $..,X_{3(D-j)}$ which are distinct from $X_1,...,X_{2(D-j)}$ such that $X_1,...,X_{3(D-j)}$ satisfies $R(D-j)$ will not lie inside the chosen cosets after puncturing by $e$ for any $e$ such that $e \cap \{ X_1,...,X_{2(D-j)} \} = \emptyset$. Wlog this implies the chosen cosets after puncturing by any $e$ can contain atmost only (writing only distinct elements) $X_1,...,X_{2(D-j)} \cup \alpha^{i_1} H -\{e_{i_1}\} \cup ... \alpha^{i_j}H-\{e_{i_j}\}  $ among the $3D$ elements $\{X_1,..,X_{3D}\}$. Hence there can be atmost $2(D-j)+3j-j = 2D$ distinct roots out of $3D$ roots inside the chosen cosets after puncturing by any $e$. Hence we are done.
		
		From here we term a set of $\ell$ cosets satisfying the above claim, to be satisfying $R_1(\ell)$.\\
		We are going put another set of conditions $R_2(\ell)$ on a set of $\ell$ cosets. The necessity of this condition will be clear in the proof.\\
		$R_2(\ell) :$\\
		A given set of $\ell$ cosets, is said to satisfy condition $R_2(\ell)$ if, \\
		For any $1 \leq A \leq D$ and any $X_1,...,X_{2A}$(contained in the given $\ell$ cosets) which are distinct and contains atmost one element from each of the $\ell$ cosets, the matrix $P(A)$ given by
		
		%\begin{pmatrix}
		%  a_{1,1} & a_{1,2} & \cdots & a_{1,n} \\
		%  a_{2,1} & a_{2,2} & \cdots & a_{2,n} \\
		%  \vdots  & \vdots  & \ddots & \vdots  \\
		%  a_{m,1} & a_{m,2} & \cdots & a_{m,n}
		% \end{pmatrix}
		
		\bean
		P(A)=\begin{pmatrix}
			1 & 0 & \cdots & \cdots & 0 \\
			\sigma_{3}(S) & \sigma_{2}(S) & \cdots & \cdots & \sigma_{3+1-A}(S) \\
			\vdots  & \vdots & \vdots  & \ddots & \vdots  \\
			\sigma_{3(i-1)}(S) & \sigma_{3(i-1)-1}(S) & \cdots & \cdots & \sigma_{3(i-1)+1-A}(S) \\
			\vdots  & \vdots & \vdots  & \ddots & \vdots  \\
			0 & \cdots & \sigma_{2A}(S) & \sigma_{2A-1}(S) & \sigma_{2(A-1)}(S) \\
		\end{pmatrix}
		\eean
		is non-singular, where $S=\{X_1,...,X_{2A}\}$.
		
		Furthermore, for any $3 \leq A \leq D$ and any $X_1,...,X_{2A-1}$ (contained in the given $\ell$ cosets) which are distinct and contains at most one element from each of $\ell$ cosets, the matrix $P_1(A)$ given by
		
		\bean
		P_1(A)= \begin{pmatrix}
			1 & 0 & \cdots & \cdots & 0 \\
			\sigma_{3}(S) & \sigma_{2}(S) & \cdots & \cdots & \sigma_{3+1-A}(S) \\
			\vdots  & \vdots & \vdots  & \ddots & \vdots  \\
			\sigma_{3(i-1)}(S) & \sigma_{3(i-1)-1}(S) & \cdots & \cdots & \sigma_{3(i-1)+1-A}(S) \\
			\vdots  & \vdots & \vdots  & \ddots & \vdots  \\
			0 & \cdots & 0 & \sigma_{2A-1}(S) & \sigma_{2(A-1)}(S) \\
		\end{pmatrix} 
		\eean
		
		is non-singular, where $S=\{X_1,...,X_{2A-1}\}$.
		
		From here on we proceed to find a set of $\ell$ cosets satisfying $R_1(\ell)$ and $R_2(\ell)$. We proceed by choosing $1$ new coset at each step inductively until we choose the required set of $\ell$ cosets.
		
		At each step we select and add one coset to our list and throw away a collection of cosets from the cosets not chosen. The cosets are thrown away such that it is straight forward to pick a coset from the cosets not thrown away so that the collection of chosen cosets satisfy $R_1(\ell)$ and $R_2(\ell)$. Let the cosets chosen upto $i^{th}$ step (including step $i$) be $G(i)$ and the cosets thrown away upto $i^{th}$ step be $T(i)$ and let the total collection of cosets in the field $\mathbb{F}_q$ be $W=\{H,\alpha H,...\}$.
		\ben
		\item Step 1: The first coset is chosen to be any coset. Hence $G(1)$ consists of just the coset chosen. We don't throw away any cosets at this step. Hence $T(1)$ is empty. $G(1)$ satisfies $R_1(1)$ and $R_2(1)$ trivially.
		
		\item Step 2: The second coset is also chosen to be any coset from $W-G(1)$. Hence $G(2)$ consists of the $2$ chosen cosets.
		\ben
		\item $R_1(2)$:\\
		For $A=1$, and for any $2A=2$ distinct elements $X_1,X_2$, one from each coset in $G(2)$, any $X_3$ such that $\sigma_1(X_1,X_2,X_3) = 0$ cannot be distinct from $X_1,X_2$ and lie in any of the cosets in $G(2)$. If it does, wlog let $X_1$ and $X_3$ lie in same coset which is in $G(2)$ then $X_2 = -(X_1 + X_3)$ but every coset is a coset of cube roots of unity. Hence $X + X_1 + X_3 = 0$ where X is the third element from the same coset as $X_1,X_3$. Hence $X = -(X_1 + X_3)$ which implies $X = X_2$ but X is in the same coset as $X_1,X_3$ and $X_2$ is in the other coset in $G(2)$. Hence a contradiction.
		For $A \geq 2$, $2A \geq 4$, we need to pick 4 distinct elements, from distinct cosets but there are only 2 cosets in $G(2)$. Hence $R_1(2)$ is satisfied.
		\item $R_2(2)$:\\
		For $A=1$, $P_1(1)=[1],P_2(1)=[1]$, hence non-singular.\\
		For $A \geq 2$, we need to pick $2A \geq 4$ and $2A-1 \geq 3$ distinct elements from distinct cosets but there are only 2 cosets. Hence $R_2(2)$ is satisfied.
		\item $T(2)$:\\
		For every two distinct elements $X_1,X_2$ chosen one from each of the 2 cosets in $G(2)$, find the third element $X_3$ such that $\sigma_1(X_1,X_2,X_3) = 0$ and throw away the coset in $W-(G(2))$ which contains it. Since $G(2)$ satisfies $R_1(2)$, $X_3$ will either not lie in any coset in $G(2)$ or won't be distinct from $X_1,X_2$. In the first case, we throw the coset and in the latter case, we don't do anything. There are 3x3=9 possible summations $X_1+X_2$ but if $X_1+X_2+X_3 = 0$ then $\theta(X_1+X_2+X_3) = 0$ and $\theta X_3$ is in the same coset as $X_3$ for any cube root of unity $\theta$. Hence solutions for 9 possible summations $X_3 = -(X_1+X_2)$ lie in atmost 3 cosets and we throw away these 3 cosets. 
		%For every two distinct elements $X_1,X_2$ chosen one from each of the 2 cosets in $G(2)$, find the third element $X_3$ such that $det(P_1(A))$ is zero for $A=2$ and throw away the cosets containing $X_3$. Since for $A=2$, $det(P_1(A))= \sigma_2(X_1,X_2,X_3)=X_1 X_2+X_3\sigma_1(X_1+X_2)$, $det(P_1(A))$ is a non-zero linear polynomial in $X_3$. Hence there is a unique solution and we throw away the coset in $W-G(2)$ containing this unique solution.
		\een
		\item Step $i$ (Assumng that a new coset is chosen at step $i$ and $G(i)$ satifying $R_1(i),R_2(i)$ is formed, we are now showing the cosets we are throwing away into $T(i)$ at step $i$): Let $i \geq 2D$ and assume we have $G(i)$ satifying $R_1(i),R_2(i)$.
		\paragraph{$T(i)$:}
		\ben
		\item For every $A \leq D$,
		Choose $2A$ cosets (say $H_1,...,H_{2A}$) from the cosets in $G(i)$, and choose $X_1,$$...$$,X_{2A}$ one from each of these $2A$ cosets i.e., say $X_i \in H_i$, now find the set of all  $X_{2A+1},$$...$$,X_{3A}$ from $\mathbb{F}$ such that $\sigma_1(X_1,...,X_{3A}) = 0,..., \sigma_{1+3(A-1)}(X_1,...,X_{3A}) = 0$ and throw away all the cosets in which $X_{2A+1},$$...$ $,X_{3A}$ lies. Since $G(i)$ satisfies $R_1(i)$, the elements in $X_{2A+1},...,X_{3A}$ will either be not distinct from $X_1,...,X_{2A}$ or elements in $X_{2A+1},...,X_{3A}$ which are distinct from $X_1,...,X_{2A}$ will lie in cosets outside $G(i)$. In the first case we do not do anything and in the latter case, we throw away any coset containing any of $X_{2A+1},...,X_{3A}$. \\
		To find the number of solutions $X_{2A+1},...,X_{3A}$ such that \\
		$\sigma_1(X_1,...,X_{3A}) = 0,..., \sigma_{1+3(A-1)}(X_1,...,X_{3A}) = 0$, we solve for $X_{2A+1},...,X_{3A}$ given $X_{1},...,X_{2A}$.
		It can be seen that to satisfy $\sigma_1(X_1,...,X_{3A}) = 0$ ,..., $\sigma_{1+3(A-1)}(X_1,...,X_{3A}) = 0$, $\sigma_1(X_{2A+1},...,X_{3A}),...,$$\sigma_{A}(X_{2A+1},...,X_{3A})$ has to satsify a linear equation of the form 
		\bean
		P(A)[\sigma_1(X_{2A+1},...,X_{3A}),...,\sigma_{A}(X_{2A+1},...,X_{3A})]^T = \\
		-[\sigma_1(X_{1},...,X_{2A}),...,\sigma_{1+3(A-1)}(X_{1},...,X_{2A})]^T.
		\eean
		Since $G(i)$ satifies $R_2(i)$, $P(A)$ is non singular and there is a unique solution, for
		$[\sigma_1(X_{2A+1},...,X_{3A}),...,\sigma_{A}(X_{2A+1},...,X_{3A})]$ which implies a unique solution for $X_{2A+1},$... $,X_{3A}$ because $\sigma_1(X_{2A+1},...,X_{3A}),...,\sigma_{A}(X_{2A+1},...,X_{3A})$ are the co-efficients of the polynomial with roots exactly equal to $X_{2A+1},...,X_{3A}$. Hence for a given distinct $X_1,...,X_{2A}$, from distinct cosets, there is a unique solution for $X_{2A+1},...,X_{3A}$ such that $\sigma_1(X_1,...,X_{3A}) = 0,$ $...,\sigma_{1+3(A-1)}(X_1,...,X_{3A}) = 0$. Hence its enough to throw away $j \leq A$ cosets which contain $X_{2A+1},...,X_{3A}$ (unique solution).
		
		The above procedure is done for every choice of $2A$ cosets from cosets in $G(i)$ and every choice of $X_1,...,X_{2A}$ from the chosen $2A$ cosets such that one element is chosen from each coset.
		Hence the total number of cosets thrown away are atmost ${ |G(i)| \choose 2A}  3^{2A}  A$ but if $X_1,...,X_{2A}$, $X_{2A+1},...,X_{3A}$ put together satisfies $R(A)$ then $\theta X_1,$ ... $,\theta X_{2A}$, $\theta X_{2A+1},$ ... $,\theta X_{3A}$ (which doesn't change the cosets in which  $X_{2A+1},..,X_{3A}$ lie for any cube root of unity $\theta$) also satisfies $R(A)$ and this choice is unique as seen before. Hence out of $3^{2A}$ choices for $X_1,...,X_{2A}$ from a given chosen $2A$ cosets, its enough to throw away cosets for $\frac{3^{2A}}{3}$ choices of $X_1,...,X_{2A}$.
		Hence the total number of cosets thrown away are atmost ${|G(i)| \choose 2A}  3^{2A-1}  A$.
		\item For every $ 3 \leq A \leq D$,
		Choose $2A-1$ cosets (say $H_1,...,H_{2A-1}$) from the cosets in $G(i)$, and choose $X_1,...,X_{2A-1}$ one from each of these $2A-1$ cosets i.e., say $X_i \in H_i$, now find the set of all $X_{2A}$ from $\mathbb{F}$ such that $P(A)$ is singular. This $X_{2A}$ cannot be in any coset in $G(i)$ which does not contain $X_1,...,X_{2A-1}$ as $G(i)$ satisfies $R_2(i)$. If $X_{2A}$ lies in the coset which contains any of $X_1,...,X_{2A-1}$, then we do not do anything. If $X_{2A}$ lies outside $G(i)$, we throw away the coset.
		To find the number of solutions of $X_{2A}$ for a given $X_1,...,X_{2A-1}$ such that $P(A)$ is singular,
		let $S_1 = \{X_1,...,X_{2A-1} \}$ and $S= \{ X_1,...,X_{2A-1},X_{2A} \}$.
		\bean
		P(A)&=& \begin{pmatrix}
			1 & 0 & \cdots & \cdots & 0 \\
			\sigma_{3}(S) & \sigma_{2}(S) & \cdots & \cdots & \sigma_{3+1-A}(S) \\
			\vdots  & \vdots & \vdots  & \ddots & \vdots  \\
			\sigma_{3(i-1)}(S) & \sigma_{3(i-1)-1}(S) & \cdots & \cdots & \sigma_{3(i-1)+1-A}(S) \\
			\vdots  & \vdots & \vdots  & \ddots & \vdots  \\
			0 & \cdots & \sigma_{2A}(S) & \sigma_{2A-1}(S) & \sigma_{2(A-1)}(S) \\
		\end{pmatrix} 
		\eean
		\bean
		%\begin{equation}
		&=&\scalemath{0.4}{\begin{pmatrix}
				1 & 0 & \cdots & \cdots & 0 \\
				\sigma_{3}(S_1)+\sigma_{2}(S_1)X_{2A} & \sigma_{2}(S_1)+\sigma_{1}(S_1)X_{2A} & \cdots & \cdots & \sigma_{3+1-A}(S_1)+\sigma_{3+1-A-1}(S_1)X_{2A} \\
				\vdots  & \vdots & \vdots  & \ddots & \vdots  \\
				\sigma_{3(i-1)}(S_1)+\sigma_{3(i-1)-1}(S_1)X_{2A} & \sigma_{3(i-1)-1}(S_1)+ \sigma_{3(i-1)-1-1}(S_1) X_{2A} & \cdots & \cdots & \sigma_{3(i-1)+1-A}(S_1)+\sigma_{3(i-1)+1-A-1}(S_1) X_{2A} \\
				\vdots  & \vdots & \vdots  & \ddots & \vdots  \\
				0 & \cdots & \sigma_{2A-1}(S_1) X_{2A} & \sigma_{2A-1}(S_1)+\sigma_{2A-1-1}(S_1) X_{2A} & \sigma_{2(A-1)}(S_1)+\sigma_{2(A-1)-1}(S_1) X_{2A} \\
			\end{pmatrix}.}
		%\end{equation}
		\eean
		The determinant of above matrix $P(A)$ can be seen as a polynomial in $X_{2A}$ and its degree is atmost $A-1$. The constant term of this polynomial is the determinant of following matrix:
		\bean
		\begin{pmatrix}
			1 & 0 & \cdots & \cdots & 0 \\
			\sigma_{3}(S_1) & \sigma_{2}(S_1) & \cdots & \cdots & \sigma_{3+1-A}(S_1) \\
			\vdots  & \vdots & \vdots  & \ddots & \vdots  \\
			\sigma_{3(i-1)}(S_1) & \sigma_{3(i-1)-1}(S_1) & \cdots & \cdots & \sigma_{3(i-1)+1-A}(S_1) \\
			\vdots  & \vdots & \vdots  & \ddots & \vdots  \\
			0 & \cdots & 0 & \sigma_{2A-1}(S_1) & \sigma_{2(A-1)}(S_1) \\
		\end{pmatrix}
		\eean
		The above matrix is equal to $P_1(A)$ and is non-singular for $A \geq 3$ since $G(i)$ satisfies $R_2(i)$.  Hence the $det(P(A))$ as a polynomial in $X_{2A}$ is a non-zero polynomial (has a non zero constant term), and since its degree is atmost $A-1$, it can have atmost $A-1$ solutions for $X_{2A}$. Hence its enough to throw away these $j \leq A-1$ cosets containing these $A-1$ solutions.\\
		The above procedure is done for every choice of $2A-1$ cosets from cosets in $G(i)$ and every choice of $X_1,...,X_{2A-1}$ from the chosen $2A-1$ cosets such that one element is chosen from each coset.
		
		The number of cosets thrown away are atmost: $ { |G(i)| \choose 2A-1} 3^{2A-1} (A-1)$. It can be seen that $det(P(A))$ is a homogenous polynomial in $X_1,$ $...$ $,X_{2A}$ and hence as before if for $X_1,$...$,X_{2A-1}$ $,X_{2A}$, $det(P(A)) = 0$ then for $\theta(X_1,...,X_{2A-1},X_{2A})$ also $det(P(A)) = 0$. Hence its enough to throw away atmost: ${|G(i)| \choose 2A-1} 3^{2A-2} (A-1)$ cosets.
		
		For $A=1$, $P(A) = [1]$ which is trivially non- singular and we don't do anything. For $A=2$, choose 3 cosets from $G(i)$ and choose distinct $X_1,X_2,X_3$ one from each of these distinct cosets, now find the set of all $X_{4}$ such that $P(A)$ is singular. This $X_{4}$ can't be in any coset in $G(i)$ which does not contain $X_1,...,X_{3}$ as $G(i)$ satisfies $R_2(i)$. If $X_{4}$ lies in the coset which contains any of $X_1,...,X_{3}$, then we don't do anything. If $X_{4}$ lies outside $G(i)$, we throw the coset.
		To find the number of solutions of $X_{4}$ for a given $X_1,...,X_{3}$ such that $P(A)$ is singular,
		\bean
		det(P(A)) = \sigma_2(X_1,X_2,X_3,X_4) = \sigma_2(X_1,X_2,X_3) + X_4 \sigma_1(X_1,X_2,X_3) 
		\eean
		Given the chosen $X_1,X_2,X_3$, the above expression for $det(P(A))$ can be seen as a linear expression in $X_4$. if $\sigma_2(X_1,X_2,X_3)=0,\sigma_1(X_1,X_2,X_3) = 0$ then $(X-X_1)(X-X_2)(X-X_3) = X^3 - X_1 X_2 X_3 = X^3 - \gamma$. Here $X_1,X_2,X_3$ constitutes the solution set for $X^3 = \gamma$ but $X_1 H$  also constitutes 3 solutions for the equation $X^3 = \gamma$ but there can be atmost 3 solutions for the equation $X^3 = \gamma$. Hence $X_1 H = \{X_1,X_2,X_3\}$ which implies they all belong to same coset which is a contradiction. Hence either $\sigma_2(X_1,X_2,X_3) \neq 0$ or $\sigma_1(X_1,X_2,X_3) \neq 0$ which implies $det(P(A))$ is a non zero polynomial in $X_4$ with degree atmost $1$. Hence we can find the solution and throw away the coset containing it.
		
		The number of cosets thrown are atmost: ${|G(i)| \choose 3} 3^{3} $ but by similar argument as before we can see that the number of cosets thrown are atmost: ${ |G(i)| \choose 3} 3^{2} $.
		\item For every $3 \leq A \leq D$,
		Choose $2A-2$ cosets (say $H_1,...,H_{2A-2}$) from the cosets in $G(i)$, and choose $X_1,...,X_{2A-2}$ one from each of these $2A-2$ cosets i.e., say $X_i \in H_i$, now find the set of all $X_{2A-1}$ from $\mathbb{F}$ such that $P_1(A)$ is singular. This $X_{2A-1}$ cannot be in any coset in $G(i)$ which does not contain $X_1,...,X_{2A-2}$ as $G(i)$ satisfies $R_2(i)$. If $X_{2A-1}$ lies in the coset which contains any of $X_1,...,X_{2A-2}$, then we don't do anything. If $X_{2A-1}$ lies outside $G(i)$, we throw away the coset.
		
		To find the number of solutions of $X_{2A-1}$ for a given $X_1,...,X_{2A-2}$ such that $P_1(A)$ singular,\\
		let $S_1 = \{ X_1,...,X_{2A-2} \}$ and $S= \{ X_1,...,X_{2A-2},X_{2A-1} \}$.
		\bean
		P_1(A)=	\begin{pmatrix}
			1 & 0 & \cdots & \cdots & 0 \\
			\sigma_{3}(S) & \sigma_{2}(S) & \cdots & \cdots & \sigma_{3+1-A}(S) \\
			\vdots  & \vdots & \vdots  & \ddots & \vdots  \\
			\sigma_{3(i-1)}(S) & \sigma_{3(i-1)-1}(S) & \cdots & \cdots & \sigma_{3(i-1)+1-A}(S) \\
			\vdots  & \vdots & \vdots  & \ddots & \vdots  \\
			0 & \cdots & 0 & \sigma_{2A-1}(S) & \sigma_{2(A-1)}(S) \\
		\end{pmatrix}
		\eean
		\bean
		=\scalemath{0.5}{
			\begin{pmatrix}
				1 & 0 & \cdots & \cdots & 0 \\
				\sigma_{3}(S_1)+\sigma_{2}(S_1)X_{2A-1} & \sigma_{2}(S_1)+\sigma_{1}(S_1)X_{2A-1} & \cdots & \cdots & \sigma_{3+1-A}(S_1)+\sigma_{3+1-A-1}(S_1)X_{2A-1} \\
				\vdots  & \vdots & \vdots  & \ddots & \vdots  \\
				\sigma_{3(i-1)}(S_1)+\sigma_{3(i-1)-1}(S_1)X_{2A-1} & \sigma_{3(i-1)-1}(S_1)+ \sigma_{3(i-1)-1-1}(S_1) X_{2A-1} & \cdots & \cdots & \sigma_{3(i-1)+1-A}(S_1)+\sigma_{3(i-1)+1-A-1}(S_1) X_{2A-1} \\
				\vdots  & \vdots & \vdots  & \ddots & \vdots  \\
				0 & \cdots & 0 & \sigma_{2A-1-1}(S_1) X_{2A-1} & \sigma_{2(A-1)}(S_1)+\sigma_{2(A-1)-1}(S_1) X_{2A-1} \\
			\end{pmatrix}.}
		\eean
		The determinant of above matrix $P_1(A)$ can be seen as a polynomial in $X_{2A-1}$ and its degree is atmost $A-1$. The constant term of this polynomial is the determinant of following matrix:
		\bea
		\begin{pmatrix}
			1 & 0 & \cdots & \cdots & 0 \\
			\sigma_{3}(S_1) & \sigma_{2}(S_1) & \cdots & \cdots & \sigma_{3+1-A}(S_1) \\
			\vdots  & \vdots & \vdots  & \ddots & \vdots  \\
			\sigma_{3(i-1)}(S_1) & \sigma_{3(i-1)-1}(S_1) & \cdots & \cdots & \sigma_{3(i-1)+1-A}(S_1) \\
			\vdots  & \vdots & \vdots  & \ddots & \vdots  \\
			0 & \cdots & 0 & 0 & \sigma_{2(A-1)}(S_1) \\
		\end{pmatrix} \label{m_mr}
		\eea
		Now $\sigma_{2(A-1)}(S_1) \neq 0$ (because this is just the product of $2(A-1)$ non zero elements) , since the determinant of the matrix:
		\bean
		\begin{pmatrix}
			1 & 0 & \cdots & \cdots & 0 \\
			\sigma_{3}(S_1) & \sigma_{2}(S_1) & \cdots & \cdots & \sigma_{3+1-(A-1)}(S_1) \\
			\vdots  & \vdots & \vdots  & \ddots & \vdots  \\
			\sigma_{3(i-1)}(S_1) & \sigma_{3(i-1)-1}(S_1) & \cdots & \cdots & \sigma_{3(i-1)+1-(A-1)}(S_1) \\
			\vdots  & \vdots & \vdots  & \ddots & \vdots  \\
			0 & \cdots & \sigma_{2(A-1)}(S_1) & \sigma_{2(A-1)-1}(S_1) & \sigma_{2(A-2)}(S_1) \\
		\end{pmatrix}
		\eean
		is non zero as $G(i)$ satisfies $R_2(i)$ and this matrix is equal to $P(A-1)$, we have that the determinant of the matrix mentioned in equation \eqref{m_mr} corresponding to the constant term of the polynomial $det(P_1(A))$ is also non zero. Hence the $det(P_1(A))$ as a polynomial in $X_{2A-1}$ is a non-zero polynomial (has a non zero constant term), and since its degree is atmost $A-1$, it can have atmost $A-1$ solutions for $X_{2A-1}$. Hence its enough to throw away these $j \leq A-1$ cosets containing these $A-1$ solutions.\\
		The above procedure is done for every choice of $2A-2$ cosets from cosets in $G(i)$ and every choice of $X_1,...,X_{2A-2}$ from the chosen $2A-2$ cosets such that one element is chosen from each coset.
		
		The number of cosets thrown away are atmost: $ {|G(i)| \choose 2A-2} 3^{2A-2} (A-1)$. It can be seen that $det(P_1(A))$ is a homogenous polynomial in $X_1,...,X_{2A-1}$ and hence as before if for $X_1,...,X_{2A-2},X_{2A-1}$, $det(P_1(A)) = 0$ then for $\theta(X_1,...,X_{2A-2},X_{2A-1})$ also $det(P_1(A)) = 0$. Hence its enough to throw away atmost: $ { |G(i)| \choose 2A-2} 3^{2A-3} (A-1)$ cosets.
		\een
		\item Step $i+1$ (only showing how to choose a new coset such that $G(i+1)$ satisfies $R_1(i+1),R_2(i+1)$ and the cosets thrown away at this step follows the same procedure indicated in previous point step $i$): Following the previous step, we want to select one more coset to form $G(i+1)$ such that it satisfies $R_1(i+1),R_2(i+1)$ :\\
		Choose any coset (say) $H_1$ from the collection $W- (T(i) \cup G(i))$. Hence $G(i+1) = G(i) \cup \{ H_1 \}$. It can be easily shown that $G(i+1)$ satisfies $R_1(i+1),R_2(i+1)$ using the properties of $T(i)$ and $G(i)$. 
		\ben 
		\item $R_1(i+1)$:\\
		For any $A \leq D$, for any distinct $X_1,...,X_{2A}$ from distinct cosets (say wlog $C_1,...,C_{2A}$ and $X_i \in C_i$, $\forall i \in [2A]$) from $G(i+1)$, find the set of all $X_{2A+1},...,X_{3A}$ such that $X_1,...,X_{3A}$ satisfies $R(A)$. The following argument is for each such $X_1,...,X_{3A}$. If some $X_{2A+j},j > 0$ distinct from $X_1,...,X_{2A}$ lies in a coset in $G(i+1)$, and $H_1 \neq C_i$ for any $ i=1,...,2A$, then this gives a contradiction because $G(i)$ satisfies $R_1(i)$  (gives contradiction if $X_{2A+j}$ is in a coset distinct from $H_1$) and since $H_1 \in W- (T(i) \cup G(i))$, gives contradiction if $X_{2A+j} \in H_1$ as previous step implies $H_1 \in T(i)$. Hence its enough to consider the case where $H_1 = C_i$ for some $i=1,...,2A$. Wlog let $H_1 = C_{2A}$.\\
		If some  $X_{2A+j},j>0$ distinct from $X_1,...,X_{2A}$ lies in a coset in $G(i+1)$ which is different from $C_1,...,C_{2A}$, then $X_1$ ,..., $X_{2A-1}$, $X_{2A+j}$ ($2A$ elements), $X_{2A}$, $X_{2A+1}$,..., $X_{2A+j-1}$, $X_{2A+j+1}$ ,..., $X_{3A}$ form a solution such that $X_1$ ,..., $X_{2A-1}$, $X_{2A+j}$, $X_{2A}$, $X_{2A+1}$ ,..., $X_{2A+j-1}$,$X_{2A+j+1}$ ,..., $X_{3A}$ satisfies $R(A)$. But by previous step we are throwing away all cosets containing $X_{2A}$, $X_{2A+1}$, ..., $X_{2A+j-1}$, $X_{2A+j+1}$, ..., $X_{3A}$ as $X_1$, ..., $X_{2A-1}$, $X_{2A+j}$ are distinct $2A$ elements from distinct cosets in $G(i)$  which implies $H_1 \in T(i)$ (as $X_{2A} \in H_1=C_{2A}$) which is a contradiction to the fact that $H_1 \in W- (T(i) \cup G(i))$.\\
		If some $X_{2A+j},j>0$ distinct from $X_1,...,X_{2A}$ lies in a coset from $C_1,...,C_{2A}$, if $X_{2A+j} \in C_i$, $i \neq 2A$, then there exists $X_{2A+j_1} \in \{ X_{2A+1},...,X_{3A} \} - \{ X_{2A+j}\}$ and $X_{2A+j_1} \neq X_i$ such that $X_{2A+j_1} \in C_i$ (this is because $f_1(x)= (x-X_1)...(x-X_{3A})$ is a degree 1 polynomial when restricted to a coset. Hence if there are 2 distinct zeros of $f_1$ in a coset then $f_1$ restricted to the coset must be 0. Hence the entire coset is subset of roots of $f_1$). Hence $C_i \subset \{X_1,...,X_{3A}\}$. By proposition 1, $\{X_1,...,X_{3A}\}-C_i$ satisfies $R(A-1)$ and $\{X_1,...,X_{2A-1}\}-\{X_i\}$ is a set of distinct $2(A-1)$ elements from distinct cosets in $G(i)$ and hence the cosets in which $\{X_{2A},X_{2A+1},...,X_{3A}\} - \{X_{2A+j},X_{2A+j_1}\}$ lies will be thrown away in previous step and $X_{2A} \in H_1$. Hence $H_1 \in T(i)$ which is a contradiction to the fact $H_1 \in W-(T(i) \cup G(i))$.\\
		If some $X_{2A+j}$ ($j>0$) distinct from $X_1,...,X_{2A}$ lies in a coset from $C_1,...,C_{2A}$, and if $X_{2A+j} \in C_{2A}$, then there exists $X_{2A+j_1} \in \{ X_{2A+1},...,X_{3A} \} - \{ X_{2A+j}\}$ and $X_{2A+j_1} \neq X_{2A}$ such that $X_{2A+j_1} \in C_{2A}$ (for the same as explained before). Hence $C_{2A} \subset \{X_1,...,X_{3A}\}$. By proposition 1, $\{X_1,...,X_{3A}\}-C_{2A}$ satisfies $R(A-1)$. $\{X_1,...,X_{2A-2}\}$ is a set of distinct $2(A-1)$ elements from distinct cosets in $G(i)$ and $X_{2A-1}$ is in a coset other than $C_1,...,C_{2A-2}$ in $G(i)$ and $\{X_1,...,X_{3A}\} - C_{2A}$ satisfies $R(A-1)$ which is a contradiction to the fact $G(i)$ satisfies $R_1(i)$.
		\item $R_2(i+1)$:\\
		For any $A \leq D$, for any distinct $X_1,...,X_{2A-1}$ from distinct cosets (say wlog $C_1,...,C_{2A-1}$ i.e., $X_i \in C_i$,$\forall i \in [2A-1]$) from $G(i+1)$, find the set of all $X_{2A}$ such that $X_1,...,X_{2A}$ makes $P(A)$ singular. The following argument is for each such $X_1,...,X_{2A-1},X_{2A}$. If $X_{2A}$ lie in a coset in $G(i+1)$ and in a coset in $C_1,...,C_{2A-1}$, we don't care as this doesn't violate $R_2(i+1)$. If $X_{2A}$ lie in a coset in $G(i+1)$ and in a coset distinct from in $C_1,...,C_{2A-1}$, it gives a contradiction as $G(i)$ satisfies $R_2(i)$ (gives contradiction if none of $X_1,...,X_{2A}$ lie in $H_1$ as this means we have 2A distinct elements from distinct cosets from $G(i)$ making $P(A)$ singular) and since $H_1 \in W-(T(i) \cup G(i))$, gives contradiction if one of $X_1,...,X_{2A}$ lie in $H_1$ as this would imply $H_1 \in T(i)$.\\
		For any $A \leq D$, for any distinct $X_1,...,X_{2A-2}$ from distinct cosets (say wlog $C_1,...,C_{2A-2}$ i.e., $X_i \in C_i$,$\forall i \in [2A-2]$) from $G(i+1)$, find the set of all $X_{2A-1}$ such that $X_1,...,X_{2A-1}$ makes $P_1(A)$ singular. The following argument is for each such $X_1,...,X_{2A-2},$ $X_{2A-1}$. If $X_{2A-1}$ lie in a coset in $G(i+1)$ and in a coset in $C_1,...,C_{2A-2}$, we don't care as this doesn't violate $R_2(i+1)$. If $X_{2A-1}$ lie in a coset in $G(i+1)$ and in a coset distinct from in $C_1,...,C_{2A-2}$, it gives a contradiction as $G(i)$ satisfies $R_2(i)$ (gives contradiction if none of $X_1,...,X_{2A-1}$ lie in $H_1$ as this means we have $2A-1$ distinct elements from distinct cosets from $G(i)$ making $P_1(A)$ singular) and since $H_1 \in W-(T(i) \cup G(i))$, gives contradiction if one of $X_1,...,X_{2A-1}$ lie in $H_1$ as this would imply $H_1 \in T(i)$.
		\een
		\item  For $i<2D$: The argument for throwing cosets is similar to the above arguments (step $i$) except that we skip the parts where it becomes vacuous. The procedure for selecting new coset to form $G(i)$ and showing that it satisfies $R_1(i)$ and $R_2(i)$ can be done in a similar manner.
		\een
		We repeat the procedure described above in steps $i$ and step $i+1$ until we pick $\ell$ cosets. Note that the set of cosets thrown away at $i^{th}$ step contains the set of cosets thrown away at $(i-1)^{th}$ step.
		
		Hence the total number of cosets thrown until $i^{th}$ step is (from step $i$ described above):
		\bean
		|T(i)| \leq \Sigma_{j=1}^{D} { |G(i)| \choose 2j}  3^{2j-1}  j + \Sigma_{j=3}^{D} { |G(i)| \choose 2j-1} 3^{2j-2} (j-1) + \\
		{|G(i)| \choose 3} 3^{2} + \Sigma_{j=3}^{D} { |G(i)| \choose 2j-2} 3^{2j-3} (j-1)
		\eean
		we can pick $(i+1)^{th}$ coset to form $G(i+1)$ as long as $|T(i)| + |G(i)| < |W|$. Now the procedure for throwing away cosets for satisfying parallel conditions based on $Q(D)$ as mentioned in the beginning is similar to above and hence we throw away atmost $2 |T(i)|$ cosets and satisfy all the necessary conditions. Hence 	we can pick $(i+1)^{th}$ coset to form $G(i+1)$ as long as $2|T(i)| + |G(i)| < |W|$. Hence we can pick $\ell=\frac{N}{3}$ cosets (evaluating positions) to form maximally recoverable code of block length $N$ and dimension $k$ as long as $2|T(\ell-1)| + |G(\ell-1)| < |W|$. $|W| = \frac{q-1}{3}$. Hence we can form $[N,k=2D+1]$ maximally recoverable code as long as:
		\bean
		2  \Sigma_{j=1}^{D} { |G(\ell-1)| \choose 2j}  3^{2j-1}  j + 2 \Sigma_{j=3}^{D} { |G(\ell-1)| \choose 2j-1} 3^{2j-2} (j-1) + \\
		2 { |G(\ell-1)| \choose 3} 3^{2} + 2 \Sigma_{j=3}^{D} { |G(\ell-1)| \choose 2j-2} 3^{2j-3} (j-1) + |G(\ell-1)| < \frac{q-1}{3}
		\eean
		Using $|G(i)| = i$, it can be seen that the above inequality is implied by:
		\bean
		2\left( \Sigma_{j=2}^{2D} \lfloor{j g(j)} \rfloor {\ell-1 \choose j} 3^{j-1} \right) + (\ell-1)  < \frac{q-1}{3}  
		\eean
		where,
		$g(j) = 1$ for $2(D-1) \geq j \geq 4$ and $j$ even and $g(j) = \frac{1}{2}$ otherwise.
		Hence:
		\bean
		2\left( \Sigma_{j=2}^{2D} \lfloor{j g(j)} \rfloor {(\frac{N}{3}-1)\choose j} 3^{j} \right) + N-2  < q  
		\eean
		The cosets $\{A_{i_1}, A_{i_2}, \cdots, A_{i_c}\}$ given in the Theorem statement are actually the set of cosets in $W-G(\ell)$.
	\end{proof}
\end{appendices}

 \bibliographystyle{IEEEtran}
\bibliography{master}

\end{document}